\newcommand{\dom}[1]{dom\left({#1}\right)}
\newcommand{\ie}{{\rm i.e.}}
\newcommand{\emptybb}{\cdot}
\newcommand{\emptycs}{\cdot}
\newcommand{\declinst}[4]{#1 \vdash #2 \leadsto_{\mbox{\sl \scriptsize dec}} #3 \bowtie #4}
\newcommand{\bsinst}[4]{#1 ; #2 \vdash #3 \leadsto_{\mbox{\sl \scriptsize bs}} #4}
\newcommand{\csinst}[4]{#1 ; #2 \vdash #3 \leadsto_{\mbox{\sl \scriptsize cs}} #4}
\newcommand{\csinstone}[4]{#1 ; #2 \vdash #3 \leadsto^1_{\mbox{\sl \scriptsize cs}} #4}
\newcommand{\emptycb}{\cdot}
\newcommand{\fvalid}[2]{#1 \vDash #2\ \mbox{\tt valid}}
\newcommand{\fatmann}[3]{{\fatm{#1}{#2}}^{#3}}
\newcommand{\repeatchar}[2]{
  \ifnum #2>-1
    \counter=0
    \loop
      \ifnum \counter<#2
        #1
        \advance \counter+1
    \repeat
  \fi
}
\newcommand{\ltann}[2][1]{{#2}^{\repeatchar{*}{#1}}}
\newcommand{\ltannaux}[2]{{#2}^{*^{#1}}}
\newcommand{\eqann}[2][1]{{#2}^{\repeatchar{@}{#1}}}
\newcommand{\eqannaux}[2]{{#2}^{@^{#1}}}
\newcommand{\assn}[3]{#1\lbrack{#2}\mapsfrom{#3}\rbrack}
\newcommand{\typedpi}[3]{\Pi {#1}{:}{#2}.{\mkern 3mu} #3}
\newcommand{\type}{\mbox{Type}}
\newcommand{\lflam}[2]{\lambda {#1}.{\mkern 3mu} #2}
\newcommand{\imp}[2]{#1\rightarrow #2}
\newcommand{\app}{\ }
\newcommand{\emptysig}{\cdot}
\newcommand{\emptyctx}{\cdot}
\newcommand{\lfprove}[3][\Sigma]{#2 \vdash_{#1} #3}
\newcommand{\lfsynthkind}[4][\Sigma]{\lfprove[#1]{#2}{#3\Rightarrow #4}}
\newcommand{\lfsynthtype}[4][\Sigma]{\lfprove[#1]{#2}{#3\Rightarrow #4}}
\newcommand{\lfchecktype}[4][\Sigma]{\lfprove[#1]{#2}{#3\Leftarrow #4}}
\newcommand{\lfkind}[3][\Sigma]{#2 \vdash_{#1} #3\ \mbox{\tt kind}}
\newcommand{\lftype}[3][\Sigma]{#2 \vdash_{#1} #3\ \mbox{\tt type}}
\newcommand{\lfctx}[2][\Sigma]{\vdash_{#1} #2\ \mbox{\tt ctx}}
\newcommand{\lfsig}[1]{\vdash {#1}\ \mbox{\tt sig}}
\newcommand{\domain}[1]{\mbox{\sl dom}(#1)}
\newcommand{\range}[1]{\mbox{\sl rng}(#1)}
\newcommand{\context}[1]{\mbox{\sl ctx}(#1)}
\newcommand{\supportof}[1]{\mbox{\sl supp}(#1)}
\newcommand{\subst}[2]{#2[#1]}
\newcommand{\hsubst}[2]{#2\llbracket {#1} \rrbracket} 
\newcommand{\hsubstseq}[3]{#3\llbracket {#2} \rrbracket_{#1}}
\newcommand{\hsubr}[4][]{#3\llbracket {#2} \rrbracket^r_{#1} = #4}
\newcommand{\hsub}[4][]{#3\llbracket {#2} \rrbracket_{#1} = #4}
\newcommand{\comp}[2]{#1\circ #2}
\newcommand{\seqsub}[2]{\langle{#1},{#2}\rangle}
\newcommand{\ctxvarminus}[2]{#1_{#2}}
\newcommand{\restrict}[2]{#1|_{#2}}
\newcommand{\extsubst}[2]{#2[#1]}
\newcommand{\STLCGamma}{\Theta}
\newcommand{\aritysum}[2]{#1 \uplus #2}
\newcommand{\stlcder}{\vdash_{\mbox{\sl \scriptsize at}}}
\newcommand{\stlctyjudg}[3]{#1 \stlcder #2 : #3}
\newcommand{\stlcderr}{\vdash^r_{\mbox{\sl\scriptsize at}}}
\newcommand{\stlctyjudgr}[3]{#1 \stlcderr #2 : #3}
\newcommand{\kindingder}{\vdash_{\mbox{\sl \scriptsize ak}}}
\newcommand{\wftype}[2]{#1 \kindingder #2\ \mbox{\sl type}}
\newcommand{\kindingderp}{\vdash^p_{\mbox{\sl \scriptsize ak}}}
\newcommand{\akindingp}[3]{#1 \kindingderp #2 : #3}
\newcommand{\declder}{\vdash_{\mbox{\sl \scriptsize dec}}}
\newcommand{\wfdecls}[3]{#1 \declder #2 \Rightarrow #3}
\newcommand{\abstyping}[1]{\vdash #1\ \mbox{\sl blk schema}}
\newcommand{\acstyping}[1]{\vdash #1\ \mbox{\sl ctx schema}}
\newcommand{\wfctx}[3]{#1;#2 \vdash #3\ \mbox{\sl context}}
\newcommand{\noms}{\mathcal{N}} 
\newcommand{\atyarr}{\rightarrow}
\newcommand{\arr}[2]{#1\rightarrow #2}
\newcommand{\erase}[1]{{(#1)}^{-}}
\newcommand{\oty}{o}
\newcommand{\ctxty}[2]{#1\lbrack {#2}\rbrack}
\newcommand{\ctxvarty}[3]{#1\!\uparrow\!#2 : #3}
\newcommand{\supp}[1]{\text{supp}\left({#1}\right)}
\newcommand{\initctx}{\STLCGamma_0}
\newcommand{\ctxsanstype}[1]{#1^-}
\newcommand{\of}[2]{#1:#2}
\newcommand{\emptyce}{\cdot}
\newcommand{\fatm}[2]{\left\{#1 \vdash #2\right\}}
\newcommand{\fand}[2]{#1\wedge #2}
\newcommand{\for}[2]{#1\vee #2}
\newcommand{\fimp}[2]{#1\supset #2}
\newcommand{\fall}[2]{\forall #1. #2}
\newcommand{\fexists}[2]{\exists #1. #2}
\newcommand{\genericq}{\mathcal{Q}}
\newcommand{\fgeneric}[2]{\genericq #1. #2}
\newcommand{\fctx}[3]{\Pi\,#1 : #2.#3}
\newcommand{\ftrue}{\top}
\newcommand{\ffalse}{\bot}
\newcommand{\typdecomp}[4]{\mbox{\sl Decompose}\left(#1;#2; #3; #4\right)}
\newcommand{\seq}[5][\mathbb{N}]{#1;#2;#3;#4\longrightarrow #5}
\newcommand{\seqsans}[3][\mathbb{N}]{#1;\emptyset; \emptyset; #2\longrightarrow #3}
\newcommand{\setand}[2]{#1, #2}
\newcommand{\formeq}[4]{#1 \vdash #3 \equiv_{#2} #4}
\newcommand{\ctxeq}[4]{#1 \vdash #3 \equiv_{#2} #4}
\newcommand{\streq}[4]{#1 \vdash #3 \succeq_{#2} #4}
\newcommand{\ind}{\mbox{\sl ind}}
\newcommand{\cut}{\mbox{\sl cut}}
\newcommand{\id}{\mbox{\sl id}}
\newcommand{\absR}{\mbox{\sl atm-abs-R}}
\newcommand{\absL}{\mbox{\sl atm-abs-L}}
\newcommand{\appR}{\mbox{\sl atm-app-R}}
\newcommand{\appL}{\mbox{\sl atm-app-L}}
\newcommand{\topR}{\mbox{\sl $\top$-R}}
\newcommand{\botL}{\mbox{\sl $\bot$-L}}
\newcommand{\ctxR}{\mbox{\sl $\Pi$-R}}
\newcommand{\ctxL}{\mbox{\sl $\Pi$-L}}
\newcommand{\allR}{\mbox{\sl $\forall$-R}}
\newcommand{\allL}{\mbox{\sl $\forall$-L}}
\newcommand{\existsR}{\mbox{\sl $\exists$-R}}
\newcommand{\existsL}{\mbox{\sl $\exists$-L}}
\newcommand{\impR}{\mbox{\sl $\supset$-R}}
\newcommand{\impL}{\mbox{\sl $\supset$-L}}
\newcommand{\andR}{\mbox{\sl $\wedge$-R}}
\newcommand{\andL}{\mbox{\sl $\wedge$-L$_i$}}
\newcommand{\orR}{\mbox{\sl $\vee$-R$_i$}}
\newcommand{\orL}{\mbox{\sl $\vee$-L}}
\newcommand{\lfwk}{\mbox{\sl LF-wk}}
\newcommand{\lfstr}{\mbox{\sl LF-str}}
\newcommand{\lfperm}{\mbox{\sl LF-perm}}
\newcommand{\lfinst}{\mbox{\sl LF-inst}}
\newcommand{\sweak}{\mbox{\sl ctx-wk}}
\newcommand{\sstr}{\mbox{\sl ctx-str}}
\newcommand{\weakening}{\mbox{\sl weak}}
\newcommand{\contraction}{\mbox{\sl cont}}
\newcommand{\emptyseq}{\mbox{\sl nil}}
\newcommand{\consseq}[2]{ #1 :: #2}
\newcommand{\addblksans}{\mbox{\sl AddBlock}}
\newcommand{\addblk}[7]{\addblksans(#1,#2,#3,#4,#5,#6,#7)}
\newcommand{\allblks}[3]{\mbox{\sl AllBlocks}(#1,#2,#3)}
\newcommand{\implheadssans}{\mbox{\sl ImplicitHeads}}
\newcommand{\implheads}[2]{\implheadssans(#1,#2)}
\newcommand{\namessans}{\mbox{\sl NamesLsts}}
\newcommand{\names}[3]{\namessans(#1,#2,#3)}
\newcommand{\headssans}{\mbox{\sl Heads}}
\newcommand{\hds}[2]{\headssans(#1,#2)}
\newcommand{\decompseqsans}{\mbox{\sl ReduceSeq}}
\newcommand{\decompseq}[2]{\decompseqsans\left({#2}, #1\right)}
\newcommand{\makecasessans}{\mbox{\sl Cases}}
\newcommand{\makecases}[3][\fatm{G}{R:P}]{\makecasessans\left({#2}, #1, #3\right)}
\newcommand{\casessans}{\mbox{\sl AllCases}}
\newcommand{\casesfn}[2][\fatm{G}{R:P}]{\casessans \left(#2, #1\right)}
\newcommand{\unif}[3]{\left\langle{#1}; {#2}; {#3}\right\rangle}
\newcommand{\eqn}[2]{{#1}={#2}}
\newcommand{\solun}[2]{\langle{#1},{#2}\rangle}
\newcommand{\permute}[2]{{#1}.{#2}}
\newcommand{\inv}[1]{{#1}^{-1}}
\newcommand{\wfform}[3]{#1;#2 \vdash #3\ \mbox{\sl fmla}}
\newcommand{\wfctxvarty}[3]{#1 ; #2 \vdash #3 \mbox{ ctx-ty}}
\newcommand{\ctxtyinst}[5]{#1;#2; #3 \vdash #4 \leadsto_{\mbox{\sl \scriptsize csty}} #5}
\newcommand{\sigempty}{\mbox{\sl\small SIG\_EMPTY }}
\newcommand{\sigterm}{\mbox{\sl\small SIG\_TERM}}
\newcommand{\sigfam}{\mbox{\sl\small SIG\_FAM}}
\newcommand{\ctxempty}{\mbox{\sl\small CTX\_EMPTY}}
\newcommand{\ctxterm}{\mbox{\sl\small CTX\_TERM}}
\newcommand{\canonkindtype}{\mbox{\sl\small CANON\_KIND\_TYPE}}
\newcommand{\canonkindpi}{\mbox{\sl\small CANON\_KIND\_PI}}
\newcommand{\canonfamatom}{\mbox{\sl\small CANON\_FAM\_ATOM}}
\newcommand{\canonfampi}{\mbox{\sl\small CANON\_FAM\_PI}}
\newcommand{\atomfamconst}{\mbox{\sl\small ATOM\_FAM\_CONST}}
\newcommand{\atomfamapp}{\mbox{\sl\small ATOM\_FAM\_APP}}
\newcommand{\canontermatom}{\mbox{\sl\small CANON\_TERM\_ATOM}}
\newcommand{\canontermlam}{\mbox{\sl\small CANON\_TERM\_LAM}}
\newcommand{\atomtermvar}{\mbox{\sl\small ATOM\_TERM\_VAR}}
\newcommand{\atomtermconst}{\mbox{\sl\small ATOM\_TERM\_CONST}}
\newcommand{\atomtermapp}{\mbox{\sl\small ATOM\_TERM\_APP}}
\newcommand{\tintros}{\mbox{\tt intros}}
\newcommand{\tsearch}{\mbox{\tt search}}
\newcommand{\tsplit}{\mbox{\tt split}}
\newcommand{\tleft}{\mbox{\tt left}}
\newcommand{\tright}{\mbox{\tt right}}
\newcommand{\tassert}{\mbox{\tt assert}}
\newcommand{\tapply}{\mbox{\tt apply}}
\newcommand{\tind}{\mbox{\tt induction}}
\newcommand{\texists}{\mbox{\tt exists}}
\newcommand{\tcase}{\mbox{\tt case}}
\newcommand{\tweak}{\mbox{\tt weaken}}
\newcommand{\tstr}{\mbox{\tt strengthen}}
\newcommand{\tpermute}{\mbox{\tt permutectx}}
\newcommand{\tinst}{\mbox{\tt inst}}
\newcommand{\tpty}{\mbox{\sl tp}}
\newcommand{\unittm}{\mbox{\sl unit}}
\newcommand{\arrtm}{\mbox{\sl arr}}
\newcommand{\tmty}{\mbox{\sl tm}}
\newcommand{\emptytm}{\mbox{\sl empty}}
\newcommand{\lamtm}{\mbox{\sl lam}}
\newcommand{\apptm}{\mbox{\sl app}}
\newcommand{\ofty}{\mbox{\sl of}}
\newcommand{\ofemptytm}{\mbox{\sl of\_empty}}
\newcommand{\ofapptm}{\mbox{\sl of\_app}}
\newcommand{\oflamtm}{\mbox{\sl of\_lam}}
\newcommand{\eqty}{\mbox{\sl eq}}
\newcommand{\refltm}{\mbox{\sl refl}} 
\newcommand{\natty}{\mbox{\sl nat}}
\newcommand{\plusty}{\mbox{\sl plus}}
\newcommand{\ztm}{\mbox{\sl z}}
\newcommand{\stm}{\mbox{\sl s}}
\newcommand{\plusztm}{\mbox{\sl plus\_z}}
\newcommand{\plusstm}{\mbox{\sl plus\_s}}
\newcommand{\intseq}[2]{#1 \Rightarrow #2}
\newcommand{\propty}{\mbox{\sl prop}}
\newcommand{\toptm}{\mbox{\sl top}}
\newcommand{\imptm}{\mbox{\sl imp}}
\newcommand{\andtm}{\mbox{\sl and}}
\newcommand{\hypty}{\mbox{\sl hyp}}
\newcommand{\concty}{\mbox{\sl conc}}
\newcommand{\inittm}{\mbox{\sl init}}
\newcommand{\toprtm}{\mbox{\sl topR}}
\newcommand{\andltm}{\mbox{\sl andL}}
\newcommand{\andrtm}{\mbox{\sl andR}}
\newcommand{\impltm}{\mbox{\sl impL}}
\newcommand{\imprtm}{\mbox{\sl impR}}
\newcommand{\tyty}{\mbox{\sl ty}}
\newcommand{\toptytm}{\mbox{\sl top}}
\newcommand{\arrowtm}{\mbox{\sl arrow}}
\newcommand{\alltm}{\mbox{\sl all}}
\newcommand{\boundty}{\mbox{\sl bound}}
\newcommand{\subty}{\mbox{\sl sub}}
\newcommand{\stoptm}{\mbox{\sl sa-top}}
\newcommand{\srefltm}{\mbox{\sl sa-refl-tvar}}
\newcommand{\stranstm}{\mbox{\sl sa-trans-tvar}}
\newcommand{\sarrowtm}{\mbox{\sl sa-arrow}}
\newcommand{\salltm}{\mbox{\sl sa-all}}
\newcommand{\varty}{\mbox{\sl var}}
\newcommand{\bvarty}{\mbox{\sl bound\_var}}
\newcommand{\hastype}{\mbox{\sl hastype}}
\newcommand{\unique}{\mbox{\sl unique}}
\newcommand{\redty}{\mbox{\sl reduce}}
\newcommand{\redappltm}{\mbox{\sl red-app-1}}
\newcommand{\redapprtm}{\mbox{\sl red-app-2}}
\newcommand{\redlamtm}{\mbox{\sl red-lam}}
\newcommand{\redbetatm}{\mbox{\sl red-beta}}
\newcommand{\appty}{\mbox{\sl append}}
\newcommand{\listty}{\mbox{\sl list}}
\newcommand{\niltm}{\mbox{\sl nil}}
\newcommand{\constm}{\mbox{\sl cons}}
\newcommand{\case}[1]{\noindent{\bf Case: #1}\\\noindent}
\newtheorem{theorem}{Theorem}[chapter]
\newtheorem{lemma}{Lemma}[chapter]
\theoremstyle{definition}
\newtheorem{definition}{Definition}[chapter]
\begin{document}
\prelimpages

\titlepage

\copyrightpage

\setcounter{page}{1}

\acknowledgments{\hspace{\parindent}I'd like to thank my advisor, Gopalan Nadathur, without whose knowledge and support
this dissertation would not have been completed. 
I would also like to thank my committee members, Eric Van Wyk, Andrew Gacek, and 
Paul Garrett, for the time and ideas they have shared in getting me to this point.

I am grateful for the support of my family, immediate and extended, for their
invaluable material and emotional support.
I want to especially thank my Mom for always being on my side and for making 
a bed available when I needed one most.
Also my sister Anna Mary, who can get me to smile, is always up for being together, and 
keeps me curious; I hope you will be as proud of me as I am of you.

I'd also like to thank my friends, who have each in their way provided support
and understanding.
I am grateful in particular to Estelle Smith for her resilience and sound advice.
I also want to thank Shelley Kandola and Maggie Ewing for bringing their knowledge
to our study of Homotopy Type Theory and the friendship that continued after.

Roller derby has been a place that taught me how strong I can be, and gave me
a space for self-reflection and understanding when I was lost.
Though there are many whom I could name, I want to particularly acknowledge
Syd, Ace, Robin, Patrick, and Amber for their friendship 
both within and outside derby.

The theater community has also given me a needed outlet and sense of self, and I want
to especially thank Steph and Varghese Alexander who are both talented performers and 
warm friends.

Finally, I want to express my most enduring gratitude to Ted for his unending love,
support, and squishes.
He has brought comfort, confidence, inspiration, joy, and laughter each in their time, 
and is the one on whom I rely most.
I could not have picked a better person to be stuck in quarantine with.

\vfill
This dissertation is based on work supported by the National Science Foundation under Grant No. CCF-1617771. Any opinions, findings, and conclusions or recommendations expressed in this material are those of the author and do not necessarily reflect the views of the National Science Foundation.
}

\abstract{With growing reliance on software in the modern world there is also a growing
interest in ensuring that these systems behave in desired ways.
Many researchers are interested in using formal specifications to develop 
correct systems, relying on the specifications as a means for reasoning about 
properties of systems and their behaviour.
In this thesis we focus on systems whose behaviour is described in a 
syntax-directed, rule-based fashion.
Such systems are typically encoded through a description of the relevant objects 
of the system along with some relations between these objects defined through a collection
of rules.
Properties of such systems are expressed through these object relations,
relating the validity of certain relations to the validity of others.

A specification language based on the dependently typed $\lambda$-calculus,
the Edinburgh Logical Framework, or LF, is often used for specifying such systems.
The objects of interest in the system are formalized through terms in
the specification, and the dependencies permitted in types provide a natural means 
for formalizing the relationships between objects of the system.
Under such an encoding, the terms inhabiting the dependent types of LF represent
valid derivations of the relation in the system and thus reasoning about 
type inhabitation in LF will correspond to reasoning about the validity of
relations in the system.

This thesis develops a framework for formalizing reasoning
about specifications of systems written in LF, with the ultimate
goal of formalizing the informal reasoning steps one would take in an LF setting.
This formalization will center around the development of a reasoning
logic that can express the sorts of properties which arise in
reasoning about such specifications. 
In this logic, type inhabitation judgements in LF serve as atomic formulas, and 
quantification is permitted over both contexts and terms in these judgements.
The logic permits arbitrary relations over derivations of LF judgements to be 
expressed using a collection of logical connectives, in contrast to
other systems for reasoning about LF specifications. 
Defining a semantics for these formulas raises issues which we must address,
such as how to interpret both term and context quantification as well as the
relation between atomic formulas and the LF judgements they are meant to encode.

This thesis also develops a proof system which
captures informal reasoning steps as sound inference rules for the logic.
To achieve this we develop a collection of proof rules including mechanisms for
both case analysis and inductive reasoning over the derivations of judgements in LF.
The proof system also supports applying LF meta-theorems through proof
rules that enforce the requirements of the LF meta-theorem that cannot
be expressed in the logic.

We also implement a proof assistant called Adelfa that provides a
means for mechanizing the approach to reasoning about specifications
written in LF that is the subject of this thesis.
A characteristic of this proof assistant is that it uses the proof
rules that complement the logic to describe a collection of tactics
that can be used to develop proofs in goal-driven fashion.
One of the problems to be solved in this context is that of realizing
the rule in the proof system that enables the analysis of LF typing
judgements that appear as assumption formulas.
We show that a form of unification called higher-order pattern
unification can provide the basis for such a realization.
The Adelfa system is used to develop a collection of examples which demonstrate
the effectiveness of the framework and to showcase how informal reasoning about
specifications written in LF can be formalized using the logic and associated
proof system.

}

\setcounter{tocdepth}{2}
\tableofcontents

\listoffigures

\textpages

\doublespace

\chapter{Introduction}
\label{ch:intro}

With the proliferation of software in the modern world there is also a
growing interest in ensuring that these artefacts will work as
desired and, specifically, not cause harm to people or property.
These notions can be formalized through the use of specifications,
mathematical descriptions of the desired behaviour for a system.
Many researchers and developers are interested in using formal
specifications for various tasks related to the development of correct
systems.
Thus, such specifications have been used in the past as the basis for
building prototypes of relevant systems and as a means for reasoning
about deeper properties concerning the behaviour of the systems.

This thesis is motivated by the latter concern, i.e. reasoning
about computational systems through their specifications.
A key ingredient to the formal description of such systems is the
language in which the specifications are presented.
The focus in this thesis in on a particular specification
language based on a dependently typed $\lambda$-calculus
called the Edinburgh Logical Framework~\cite{harper93jacm} or LF.
In the typical situation, the terms of the specification language are
used to provide encodings of the objects that are of interest in the
system being formalized.
Dependent types, which effectively relate terms, then provide a useful
and convenient means for encoding relationships between the objects of the
system. 
This can lead to very natural encodings of rule-based systems where 
relations are captured as dependent types and the rules defining this relation 
become the constructors for expressions of this type.
Under this interpretation, 
the terms of the LF specification represent valid derivations in the 
encoded system, and reasoning about the derivability of typing judgements
in LF corresponds to reasoning about the validity of relations in the system.

In this thesis we present a framework for reasoning about
systems through specifications written in LF.
This framework comprises a logic we have developed for reasoning
about such specifications, an associated proof system for formalizing 
the construction of proofs in this logic, and an implementation of
the proof system mechanizing the construction of such proofs.

\section{Specification and Reasoning about Systems}
Our focus in this thesis is on reasoning about object systems that are
described in a syntax-directed and rule-based fashion.
In the specification of such systems, the syntactic structure of the
expressions describing the objects of interest is used to present
rules that define relations between the objects. 
The typing of terms in the simply-typed $\lambda$-calculus
would be a system of this sort.
As an illustration, we may consider a version of the calculus in which
there a single base type, $\unittm$; other types
are constructed using the function type constructor $\rightarrow$.
The lambda terms are constructed from variables and the constant
$\unittm$ using applications and abstractions. 
An important relation in this context is that between lambda terms and
types relative to an assignment of types to variables.
Below we give rules that define this relation.
Observe that these rules are driven by the syntax of lambda terms. 

\[
\begin{array}{cccc}
\infer{\Gamma \vdash \langle\rangle : \unittm}
      {}
  \quad&
\infer{\Gamma \vdash x : \tau}
      {x:\tau \in\Gamma}
  \quad&
\infer{\Gamma \vdash \lambda x:\tau. t:\tau\rightarrow \tau'}
      {\Gamma,x:\tau \vdash t : \tau'}
  \quad&
\infer{\Gamma \vdash t_1\ t_2: \tau}
      {\Gamma \vdash t_1 : \tau'\rightarrow \tau
         &
       \Gamma \vdash t_2 : \tau'}
\end{array}
\]

A first requirement in transforming such presentations into a formal
specification in LF is being able to represent relations.
Use is made in this context of the fact that LF types can depend on LF
terms.
In particular, such \emph{dependent types} are used to encode
relations. 
For example suppose that we have described representations in LF for the
types and terms in the simply typed $\lambda$-calculus.\footnote{An
  interesting aspect of such representations is the possibility of
  using the technique of higher-order abstract syntax or
  HOAS~\cite{miller87slp,pfenning88pldi} in encoding abstraction in
  the object system by abstraction in the meta-language, i.e. LF. We
  elide a discussion of this aspect because it is orthogonal to our
  present focus.}
Then, writing $\overline{e}$ to denote the LF representation of the object
language expression $e$, the typing relation between a simply typed
$\lambda$-calculus term $t$ and a type $\tau$ can be encoded by a
dependent type of the form
$(\hastype\app \overline{t}\app \overline{\tau})$; here, $\hastype$ is
a type constant in LF that takes two terms as arguments and is
designated to represent the typing relation in the object language.
An interesting aspect of the LF calculus is that the rules describing
the encoded relations may themselves be characterized by suitably
typed LF constants that provide a means for constructing objects of
the LF type encoding the conclusion relation of the rule.
For example, consider the typing rule for an application term in the
simply typed $\lambda$-calculus. 
Ignoring the typing context $\Gamma$ which is treated implicitly via
LF typing contexts in the standard encoding of $\lambda$-terms, we see
that this rule has four schematic variables---$t_1$, $t_2$, $\tau$ and
$\tau'$---and two premises.
Accordingly, the rule can be represented by a designated term-level
constant {\sl ofapp} in LF that takes six arguments whose types
correspond to those of the representations of the four schematic
variables and the two premises and that yields an object that has the
type $(\hastype\app\overline{(t_1\app t_2)}\app\overline{\tau})$.

Specifications developed in this way can be used to determine if
particular relations hold between relevant objects in the object
system.
For example, assuming an encoding of the syntax and typing rules for
the simply typed $\lambda$-calculus of the kind described above, we
might want to determine if the object language $(\lambda
x:\unittm. x)$ has the type $(\unittm\rightarrow \unittm)$.
Such a question translates into one about the \emph{inhabitation} of a
type relative to the LF specification.
Thus, in the example under consideration, this becomes a question
about whether there is an LF term of the type
$(\hastype\app\overline{(\lambda x:\unittm. x)}\app\overline{(\unittm\rightarrow\unittm)})$ 
There is in fact such a term and it can be seen that that term will
essentially encode a derivation of the typing judgement using the
rules defining such judgements.
More broadly, answering the inhabitation question mirrors a search for
a derivation in the object system.

Specifications in LF provide the basis also for stating and reasoning
about properties of an encoded system that are more general than
simply verifying if a particular relation holds.
For example, in the simply-typed $\lambda$-calculus an interesting property
is that when a term is typeable then that type is unique, i.e. that if $t: \tau$ and 
$t:\tau'$ are both derivable, then $\tau$ and $\tau'$ must
in fact be the same type.
This property can be captured relative to the LF specification by an
assertion that if the dependent types $(\hastype\app t\app\tau)$ and
$(\hastype\app t\app\tau')$ are inhabited for any choice of term $t$,
then it must be the cast that $\tau$ and $\tau'$ are the same.
It is the statement of this more general form of properties and the
process of reasoning about them that is the focus of this thesis.

\section{Existing Approaches to Reasoning about LF Specifications}
One approach to reasoning about specifications written in LF is to
encode properties of the system also as (dependent) types.
This approach is the basis of reasoning in the Twelf
system~\cite{Pfenning02guide} and it's related logic,
$\mathcal{M}_2^+$~\cite{schurmann00phd}.
Consider the property of type uniqueness; it is essentially a relation
between terms of particular dependent types, and thus can itself be
encoded as a dependent type in LF which takes these terms as
parameters.
To encode this property as a type we need first encode equality as an
LF type $\eqty$ taking two types as parameters, and encode that only
identical types are equal via a constructor for the type
$(\eqty\app\tau\app\tau)$.
Thus for type uniqueness we define a new dependent type $\unique$
which takes as parameters a term $t$, two types $\tau$ and
$\tau'$, a term of type $(\hastype\app t\app\tau)$, and a term of
type $(\hastype\app t\app\tau')$ and will return a term of type
$(\eqty\app\tau\app\tau')$.
A key to this approach to reasoning is the observation that if a
function of this type were total, then one can conclude that the
property it encodes holds of the specification, and thus the system.
A ``proof'' in the Twelf system amounts to presenting a term of the
described type and then demonstrating via an external process that
that term is in fact total; in the example, that it works, no matter
what actual term is chosen for $t$.
The logic $\mathcal{M}_2^+$ provides a means for making explicit the
reasoning steps used by the external process in Twelf.

The approach to reasoning that underlies the Twelf system has the
drawback that, at the end of the process, it does not provide a
witness in the form of a proof by which a conclusion was arrived at.
The ideas underlying Twelf are also germane to the system
Beluga~\cite{pientka10ijcar}.
While Beluga uses a richer type system based on Contextual Modal Type
Theory~\cite{nanevski08tocl} to overcome some of the issues of
expressivity with Twelf, it continues to have the basic limitation of
Twelf described above. 
The logic $\mathcal{M}_2^+$ addresses this criticism.
However, it shares with Twelf and Beluga the limitation that
the properties that can be expressed and reasoned about have a
fundamentally functional structure, that is, quantifiers in them take
the form of a prefix with a $\forall\ldots\exists\ldots$ structure. 

A second approach to reasoning about specifications written in LF is to
translate the specification into a predicate logic and use tools for
reasoning about the translation to construct a proof.
This approach was the motivation behind the the Abella-LF
system~\cite{southern2014}.
The translation approach utilizes a connection between dependencies in
types and relations between terms to map a given LF specification into
one which is written in a predicate logic.
The dependent types and terms of LF are translated into simple types and terms
via erasure of dependencies; since this is clearly a lossy mapping the erased
typing information is then encoded using predicates.
For example, with the type uniqueness of the simply typed $\lambda$-calculus, 
the type $\hastype$ would be translated into a relation in 
predicate logic that relates a translated term with a translated type.
The constructors for the dependent type are translated into clauses 
defining the relation via the same encoding.
Due to the lossy nature of the term encoding, this translation is not
one-to-one when the terms are not in a form which contains no $\beta$-redexes,
or normal form.
This is because the erasure of dependencies in the types of abstraction
terms means there is a many to one mapping of such terms, and there will not
be sufficient information to uniquely identify which of these terms is 
represented using only the translation of the LF type 
in the case of a non-canonical term structure.
Further complicating this approach, it is unclear what proofs about 
the translated specification mean in the context of LF.
Lifting of results proved in this way to LF (and hence the system of interest)
is suspect without such understanding.
A variant of this approach based on only canonical forms may address the
former issue, but would not in itself address the latter.

\section{The Approach to Reasoning Developed in this Thesis}
We approach reasoning from the perspective of understanding LF and the
derivability of judgements relative to a given LF specification.
Our goal is to provide a formalization of informal reasoning as
it is performed in the LF setting, and provide a reasoning logic which
is based on understanding LF derivability.
Our approach focuses on constructing proofs
explicitly, using reasoning steps which naturally correspond to 
the structure of informal arguments about LF, rather than relying on
external analyses.
The work in this thesis would, for example, provide a theoretical basis for the
translation approach to interpret reasoning steps done over the translation
as LF reasoning steps and thus lift the proofs in a sensible way.

The sorts of formulas we are interested in will center around the
LF typing judgements as atomic formulas, which are intended to represent the
derivation of that judgement in LF.
We will use the notation $\fatm{G}{M:A}$ to represent a derivation in LF
that the term $M$ inhabits the dependent type $A$ under a context $G$.
To express the relations over these derivations we permit formulas
to be constructed using logical connectives as well as quantification
over both the terms and contexts.
For example, for the property of type uniqueness 
we would want to construct a formula which quantified over the term $t$, the 
types $\tau_1$ and $\tau_2$, and the LF terms inhabiting the types 
$(\hastype\app t\app\tau_1)$ and $(\hastype\app t\app\tau_2)$.
Since the typing rules can also extend the context as we descend under bindings,
the formula representing this property should also include a quantification
over the LF context appearing in the atomic formulas.
The result might be something of the form
\begin{tabbing}
\qquad\quad\=\qquad\=\kill
\>$\fall{t}{\fall{\tau_1}{\fall{\tau_2}{\fall{d_1}{\fall{d_2}{\Pi\Gamma.}}}}}$\\
\>\>$\fimp{\fatm{\Gamma}{d_1:\hastype\app t\app \tau_1}}{\fimp{\fatm{\Gamma}{d_2:\hastype\app t\app \tau_2}}{\fexists{d_3}{\fatm{\Gamma}{d_3:\eqty\app \tau_1\app \tau_2}}}}.$
\end{tabbing}

The design of such a logic raises some interesting questions.
We do not intend for the quantification over $t$, for example, to mean for 
every single possible term which can be constructed.
On the other hand, attempting to restrict these terms using an LF type will
be problematic; the occurrences of a quantified variable may be under different
contexts or no context and it is unclear how such typing could be 
made sensible at the quantifier.
Further, using LF types would inhibit the meaning of atomic formulas like 
$\fatm{\Gamma}{d_1:\hastype\app t\app \tau_1}$ 
as we would already have identified $d_1$ as inhabiting a particular LF type.
Our approach addresses these issues by using simple arity types for quantifiers
which capture the functional character of the terms.

We also must make sense of the quantification over contexts and its meaning.
In our example property, the contexts which are relevant to reasoning are
those whose bindings might play a role in determining the inhabitation
of a $\hastype$ type, i.e. ones which are the encoding of the contexts
of the object system.
In this example, the system contexts consist only of bindings of the form
$x:\tau$, and so we are interested in those LF contexts which
are constructed using the bindings
$(x:\mbox{\sl term},y:\hastype\app x\app\tau)$ which encode this assumption.
Unconstrained quantification would mean contexts of any form must
be considered, which does not capture this notion.
Further, such an interpretation leads to a system where analysis based on
the derivability of typing judgements in LF would not be effective.
To address this, we use the idea of context schemas to describe a regular
structure for context variables and use these as types for the context
quantifiers.

The questions we have discussed above pertain to the structure of the
formulas in the logic and the semantics that governs them.
A separate question is that of providing a basis for mechanized
reasoning based on the semantics.
In particular, we would desire to complement the description of the
logic with a sound and effective proof system.
While soundness is a question that can be settled theoretically, the
demonstration of effectiveness requires an implementation and the
experimentation with this implementation as well.
Thus, these are also matters to be addressed in the successful
development of the approach that we have described here.

\section{The Contributions of this Thesis}
There are three main contributions of this thesis.
First is the definitions of a logic
rooted in an understanding of reasoning about LF derivability.
The atomic formulas of this logic represent LF derivations, and formulas
are constructed to capture relations about derivability in LF.
The logic consists of these formulas along with a semantics 
based on interpreting the derivability of judgements in LF.

Second is the development of a proof system for the logic which formalizes the
construction of arguments of validity based on this semantics.
Reasoning steps which are based on properties of LF derivability are
encoded as proof rules in the proof system allowing, for informal arguments
to be captured naturally as derivations.

Finally, we mechanize the construction of derivations in the proof system
through the implementation of the Adelfa theorem prover.
The effectiveness of this system is demonstrated through a collection
of examples which have all been formalized in Adelfa.

\section{Overview of the Thesis}
In Chapter~\ref{ch:lf} we present the specifics of the specification language
LF.
In this thesis we choose to work with the Canonical LF~\cite{harper07jfp}
formulation which permits only canonical form terms to be typed.
We will use the more detailed understanding of LF from this chapter to
provide an overview of the structure reasoning about such specifications
takes.

In Chapter~\ref{ch:logic} we present a logic for reasoning about LF derivability
based on an interpretation of atomic formulas as derivations in LF.
We describe the sorts of formulas of interest and provide their semantics
which relies on checking the derivability of LF judgements.

In Chapter~\ref{ch:proof-system} we propose a proof system to formalize 
reasoning based on the logic.
The proof rules in this system capture the sort of reasoning steps
which are used in informal reasoning about the validity of formulas.
The rules that we describe come in two forms: those that interpret the
meanings of the logical connectives and those that build in an
understanding of LF derivability that is embodied in atomic formulas.

In Chapter~\ref{ch:adelfa} we describe an implementation of the proof
system that we call Adelfa.
We describe how the system is used and specifics of how some of the
more complex rules in the proof system are implemented. 

In Chapter~\ref{ch:examples} we present a collection of examples which showcase 
the effectiveness of using Adelfa for reasoning about LF specifications.
These examples cover a variety of different kinds of systems which have been
specified using LF.

In Chapter~\ref{ch:comp} we contrast this work with previously developed approaches
to reasoning about LF in more detail.

We conclude in Chapter~\ref{ch:future-work} and include discussion
of future avenues of work involving the logic.

\chapter{Canonical LF and the Specification of Object Systems}
\label{ch:lf}

The methodology for modelling object systems in a specification
language depends on there being a one-to-one correspondence between
the objects to be described and the expressions that are used to
describe them.
The existence of such a correspondence is the substance of the so-called
\emph{adequacy theorems}.
When LF is used as the specification language, the adequacy theorems
typically rely on limiting attention to normal forms with respect to
the $\beta$- and $\eta$-conversion rules in the $\lambda$-calculus;
these normal forms are referred to as the \emph{canonical terms} of
the language. 
The original presentation of LF~\cite{harper93jacm} includes terms
in both canonical and non-canonical form.
Such a presentation simplifies the treatment of substitution but at
the price of complicating arguments concerning adequacy and LF
derivability.  
In light of this, an alternative treatment of LF has been proposed 
that admits only terms that are in $\beta$-normal form and that are
well-typed only if they are additionally in $\eta$-long
form~\cite{harper07jfp,watkins03tr}.
We use this presentation of LF, called \emph{canonical LF}, as the
basis for this work. 
The first section recalls this presentation and develops
notions related to it that will be used in the later parts of this
thesis.  
Towards motivating the development of a reasoning logic, we then discuss 
the use of LF in representing
object systems and in reasoning about them at an informal level.
The chapter concludes with an identification of meta-theorems related
to derivability in LF that are useful in informal arguments concerning
this relation. 

\section{Canonical LF}
\label{sec:canon-lf}

Our presentation of canonical LF, henceforth referred to simply as LF,
differs from that in ~\cite{harper07jfp} in two respects.
First, we elide the subordination relation in typing judgements since
it is orthogonal to the thrust of this thesis.
Second, we treat substitution independently of LF typing judgements
and we also extend the notion to include the simultaneous
replacement of multiple variables. 
The elaboration below builds in these ideas.

\begin{figure}[htpb]
\[
\begin{array}{r r c l}
  \mbox{\bf Kinds} & K & ::= & \type\ |\ \typedpi{x}{A}{K}\\[5pt]

  \mbox{\bf Canonical Type Families} & A,B & ::= &
           P\ |\ \typedpi{x}{A}{B}\\
  \mbox{\bf Atomic Type Families} & P & ::= & a\ |\ P\app M\\[5pt]
  \mbox{\bf Canonical Terms} & M,N & ::= & R\ |\ \lflam{x}{M}\\
  \mbox{\bf Atomic Terms} & R & ::= & c\ |\ x\ |\ R\app M\\[5pt]
  \mbox{\bf Signatures} & \Sigma & ::= &
  \emptysig\ |\ \Sigma,c:A\ |\ \Sigma,a:K\\[5pt]
  \mbox{\bf Contexts} & \Gamma & ::= & \emptyctx\ |\ \Gamma,x:A
\end{array}
\]
\caption{The Syntax of LF Expressions}
\label{fig:lf-terms}
\end{figure}

\subsection{The Syntax}\label{ssec:lf-syntax}

The syntax of LF expressions is described in
Figure~\ref{fig:lf-terms}.
The primary interest is in three categories of expressions:
kinds, types which are indexed by kinds, and terms which are indexed
by types. 
In these expressions, $\lambda$ and $\Pi$ are binding or abstraction
operators.
Relative to these operators, we assume the principle of
equivalence under renaming that is applied as needed. 
We also assume as understood the notions of free and bound
variables that are usual to expressions involving such operators.
To ensure the absence of $\beta$-redexes, terms are stratified into
\emph{canonical} and \emph{atomic} forms.  
A similar stratification is used with types that is exploited by the
formation rules to force all well-typed terms to be in $\eta$-long
form.
We use $x$ and $y$ to represent term-level variables, which are bound
by abstraction operators or in the contexts that are associated with
terms. 
Further, we use $c$ and $d$ for term-level constants, and $a$
and $b$ for type-level constants, both of which are typed in
signatures. 
The expression $\imp{A_1}{A_2}$ is used as an alternative notation
for the type family $\typedpi{x}{A_1}{A_2}$ when $x$ does
not appear free in $A_2$. 
An atomic term has the form $(h\app M_1\app \ldots\app M_n)$ where $h$
is a variable or a constant.
We refer to $h$ as the \emph{head symbol} of such a term. 

\subsection{Simultaneous Hereditary Substitution}\label{ssec:hsub}

We will need to consider substitution into LF expressions when
explicating typing and other logical notions related to these
expressions.
To preserve the form of these expressions, it is necessary to build
$\beta$-reduction into the application of such substitutions.
An important consideration in this context is that substitution
application must be a terminating operation. 
Towards ensuring this property, substitutions are indexed by types
that are eventually intended to characterize the functional structure
of expressions. 

\begin{definition}[Arity Types]\label{def:aritytypes}
The collection of expressions that are obtained from the constant
$\oty$ using the binary infix constructor $\atyarr$ constitute the
{\it arity types}.
Corresponding to each canonical type $A$, there is an arity type
called its {\it erased form} and denoted by $\erase{A}$ and given as
follows: $\erase{P} = \oty$ and $\erase{\typedpi{x}{A_1}{A_2}} =
\arr{\erase{A_1}}{\erase{A_2}}$.  

\end{definition}

\begin{definition}[Substitutions]\label{def:substitution}
A variable substitution $\theta$ is a finite set of tuples of 
the form $\{\langle x_1,M_1,\alpha_1 \rangle, \ldots, \langle x_n,
M_n, \alpha_n \rangle \}$, where, for $1 \leq i \leq n$, $x_i$ is a
distinct variable, $M_i$ is a canonical term and $\alpha_i$ is an
arity type.\footnote{Note that by a 
systematic abuse of notation, $n$ may be less than $m$ in a sequence written in
the form $s_m,\ldots,s_n$, in which case the empty sequence is
denoted. In this particular instance, a substitution can be an empty
set of triples.}
Given such a substitution, $\domain{\theta}$ denotes the set 
$\{x_1,\ldots,x_n\}$ and $\range{\theta}$ denotes the set
$\{M_1,\ldots,M_n\}$.
\end{definition}
 
Given a substitution $\theta$ and an expression $E$ that is a kind, a
type, a canonical term or a context, we wish the expression
$\hsubst{\theta}{E}$ notionally to denote the application of $\theta$
to $E$.
However, such an application is not guaranteed to exist.
We therefore use the expression $\hsub{\theta}{E}{E'}$ to indicate
when it is defined and has $E'$ as a result.
The key part of defining this relation is that of articulating
its meaning when $E$ is a canonical term.
This is done in Figure~\ref{fig:hsub} via rules for deriving this
relation. 
These rules use an auxiliary definition of substitution into an
atomic term which accounts for any normalization that is necessitated
by the replacement of a variable by a term.
The different categories of rules in this figure are distinguished by
being preceded by a box containing the judgement form they relate to.
The extension of this definition to the case where $E$ is a kind or a type
corresponds essentially to the application of the substitution to the
terms that appear within $E$. 
This idea is made explicit for types in Figure~\ref{fig:hsubtypes} and its
elaboration for kinds is similar.
A substitution is meaningfully applied to a context only when it does
not replace variables to which the context assigns types and when a
replacement does not lead to inadvertent capture.
When these conditions are satisfied, the substitution distributes to
the types that are assigned to the variables as the rules in
Figure~\ref{fig:hsubctx} make clear. 

\begin{figure}[tbhp]

\fbox{$\hsub{\theta}{M}{M'}$}

\begin{center}
\begin{tabular}{ccc}    
  \infer{\hsub{\theta}
              {R}
              {R'}}
        {\hsubr{\theta}
              {R}
              {R'}} \qquad
  & \qquad
  \infer{\hsub{\theta}
              {R}
              {M'}}
        {\hsubr{\theta}
              {R}
              {M':\alpha'}}

  & \qquad
\infer{\hsub{\theta}
            {(\lflam{x}{M})}
            {\lflam{x}{M'}}}
      {x\ \mbox{\rm not free in}\ \domain{\theta} \cup \range{\theta}
        \qquad
        \hsub{\theta}{M}{M'}}
\end{tabular}
\end{center}

\vspace{7pt}

\fbox{$\hsubr{\theta}{R}{M':\alpha'}$}

\begin{center}
\begin{tabular}{c}
\infer{\hsubr{\theta}{x}{M:\alpha}}
      {\langle x,M,\alpha \rangle \in \theta}

\\[10pt]

\infer{\hsubr{\theta}{(R\app M)}{M''':\alpha''}}
      {\hsubr{\theta}{R}{\lflam{x}{M'}:\arr{\alpha'}{\alpha''}} 
       \qquad
       \hsub{\theta}{M}{M''}
       \qquad
       \hsub{\{\langle x, M'',\alpha'\rangle \}}{M'}{M'''}}
\end{tabular}
\end{center}

\vspace{7pt}

\fbox{$\hsubr{\theta}{R}{R'}$}

\begin{center}
\begin{tabular}{ccc}
\infer{\hsubr{\theta}{c}{c}}
      { }
& \qquad
\infer{\hsubr{\theta}
            {x}
            {x}}
      {x\not\in\domain{\theta}}

& \qquad 
\infer{\hsubr{\theta}{(R\app M)}{R'\app M'}}
       {\hsubr{\theta}{R}{R'} \qquad\qquad
        \hsub{\theta}{M}{M'}}
\end{tabular}
\end{center}

\caption{Applying Substitutions to Terms}
\label{fig:hsub}
\end{figure}

\begin{figure}[tbhp]
\begin{center}
\begin{tabular}{c}
\infer{\hsub{\theta}{a}{a}}
      { }

\qquad \qquad
  
\infer{\hsub{\theta}{(P\app M)}{(P'\app M')}}
      {\hsub{\theta}{P}{P'} &
       \hsub{\theta}{M}{M'}}

\\[10pt]

\infer{\hsub{\theta}{(\typedpi{x}{A_1}{A_2})}{\typedpi{x}{A_1'}{A_2'}}}
      {x\ \mbox{\rm not free in}\ \domain{\theta} \cup \range{\theta}
        \qquad
        \hsub{\theta}{A_1}{A_1'} \qquad
       \hsub{\theta}{A_2}{A_2'}}
\end{tabular}
\end{center}

\caption{Applying Substitutions to Types}
\label{fig:hsubtypes}
\end{figure}

\begin{figure}[tbhp]
\begin{center}
\begin{tabular}{c}
  \infer
    {\hsub{\theta}
          {\emptyctx}
          {\emptyctx}}
    {}

    \qquad
    
  \infer
    {\hsub{\theta}{(\Gamma, x:A)}{\Gamma', x:A'}}
    {x\ \mbox{\rm not free in}\ \domain{\theta} \cup \range{\theta} \qquad
     \hsub{\theta}{\Gamma}{\Gamma'} \qquad
     \hsub{\theta}{A}{A'}} 
\end{tabular}
\end{center}  
\caption{Applying Substitutions to Contexts}
\label{fig:hsubctx}
\end{figure}

We define a measure on substitutions that is useful in showing that
their application terminates. 

\begin{definition}[Size]\label{def:typesubsize}
The size of an arity type is the number of occurrences of $\atyarr$ in
it. The size of a substitution is the largest of the sizes of the arity types
in each of its triples.
\end{definition}

The following theorem shows that simultaneous hereditary substitution
is terminating and the result will be unique if it exists.

\begin{theorem}[Uniqueness]\label{th:uniqueness}
For any context, kind, type or canonical term $E$ and any substitution
$\theta$, it is decidable whether there is an $E'$ such that
$\hsub{\theta}{E}{E'}$ is derivable.
Moreover, there is at most one $E'$ for which it is derivable.
Similarly, for any atomic term $R$ and substitution $\theta$, it is
decidable whether there is an $R'$ or an $M'$ and $\alpha'$ such
that $\hsubr{\theta}{R}{R'}$ or $\hsubr{\theta}{R}{M' : \alpha'}$ is
derivable.
At most one of these judgements is derivable and for at most one
$R'$, respectively, $M'$ and $\alpha'$. 
\end{theorem}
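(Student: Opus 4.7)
The plan is to prove decidability and uniqueness simultaneously by a well-founded induction whose measure is a lexicographic pair: first, the size of the substitution as given by Definition~\ref{def:typesubsize}; second, the structural size of the input expression. In every recursive invocation occurring in Figures~\ref{fig:hsub}, \ref{fig:hsubtypes}, and \ref{fig:hsubctx}, either the substitution stays the same and the expression becomes strictly smaller, or---crucially, in one case only---the substitution itself becomes strictly smaller.

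For uniqueness, I would proceed by case analysis on the form of the input. For canonical terms, an abstraction $\lflam{x}{M}$ matches exactly one rule (after applying $\alpha$-conversion to satisfy the side condition, which is always possible). An atomic term $R$ falls under one of two rules for $\hsub{\theta}{\cdot}{\cdot}$, distinguished by whether $\hsubr{\theta}{R}{\cdot}$ yields an atomic term $R'$ or a canonical term $M'$ with its arity type $\alpha'$; by the induction hypothesis applied to $R$ (same substitution, smaller expression) at most one of these two outcomes is derivable. For the atomic-term judgements $\hsubr{\theta}{R}{\cdot}$, the rules are syntax-directed on $R$: the cases $c$, $x$, and $R_1\app M$ are mutually exclusive, and the variable case further splits on membership in $\domain{\theta}$, which is decidable. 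The application case again splits based on whether the recursive reduction of $R_1$ lands in the atomic or the canonical-with-arity form, and again the induction hypothesis forces a unique choice. The extensions to types, kinds, and contexts are congruences and inherit uniqueness directly from the term level.

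For decidability, the only delicate point is the second rule for $\hsubr{\theta}{(R\app M)}{M''' : \alpha''}$. There the recursive calls $\hsubr{\theta}{R}{\lflam{x}{M'}:\arr{\alpha'}{\alpha''}}$ and $\hsub{\theta}{M}{M''}$ are on strictly smaller expressions under the same substitution, and are therefore decidable by the induction hypothesis. The third recursive call, $\hsub{\{\langle x, M'', \alpha'\rangle\}}{M'}{M'''}$, is on an expression that is not a subterm of the original input; however, the substitution used has size equal to the size of $\alpha'$, which is strictly smaller than the size of $\arr{\alpha'}{\alpha''}$, and hence strictly smaller than the size of $\theta$. Thus the lexicographic measure still decreases and the induction hypothesis applies.

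The main obstacle is exactly this last step: the beta-reduction-like recursive call increases the structural size of the operand, so the induction cannot be justified on expression size alone. The resolution---and the whole reason for indexing substitutions by arity types in Definition~\ref{def:substitution}---is that the arity type of the replacement strictly decomposes in each such step, so the substitution's size provides the dominant component of the termination measure. Once this invariant is stated precisely, the remaining cases are bookkeeping over the rule schemes in Figures~\ref{fig:hsub}--\ref{fig:hsubctx}.
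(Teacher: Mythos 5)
Your proposal is correct and follows essentially the same route as the paper, which proves the theorem by induction first on the size of the substitution and then on the structure of the expression, handling canonical and atomic terms simultaneously before extending to types, kinds, and contexts. Your identification of the $\beta$-reduction step in the application case as the one place where the substitution size (rather than expression size) must decrease, together with the auxiliary invariant that the arity type produced by $\hsubr{\theta}{R}{M:\alpha}$ has size bounded by that of $\theta$, is exactly the content the paper's terse proof leaves implicit.
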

\begin{proof}
This theorem is proved by induction first on
the size of substitutions and then on the structure of expressions.
We first prove it simultaneously for canonical and atomic terms, and then
extended to atomic types, canonical types, kinds, and finally contexts.
\end{proof}

The following theorem shows that the application of a vacuous
hereditary substitution always exists. 

\begin{theorem}[Vacuous Substitutions]\label{th:vacuoussubs}
If $E$ is a kind, a type or a canonical term none of whose free
variables is a member of $\domain{\theta}$, then $\hsub{\theta}{E}{E}$ has
as derivation. If $R$ is an atomic term none of whose free variables
is a member of $\domain{\theta}$ then $\hsubr{\theta}{R}{R}$ has a
derivation.
\end{theorem}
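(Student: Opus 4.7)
The plan is to proceed by structural induction on the expression $E$ (respectively $R$), with the cases for canonical and atomic terms proved simultaneously since the judgements for them are mutually defined. No induction on the size of $\theta$ should be needed here because we never need to recursively apply $\theta$ to the result of a substitution; in the absence of relevant free variables in $E$, all ``interesting'' rules (the ones that trigger hereditary normalization) are inapplicable.

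For atomic terms $R$ I would dispatch the three cases of the syntax. If $R = c$, the rule $\hsubr{\theta}{c}{c}$ applies immediately. If $R = x$, then by hypothesis $x \notin \domain{\theta}$, so the second form of the variable rule delivers $\hsubr{\theta}{x}{x}$; note crucially that the other variable rule, which would yield a triple $M:\alpha$, is ruled out because it requires $\langle x,M,\alpha\rangle\in\theta$. If $R = R_1 \app M$, the free variables of $R_1$ and $M$ are among those of $R$ and hence disjoint from $\domain{\theta}$, so by induction $\hsubr{\theta}{R_1}{R_1}$ and $\hsub{\theta}{M}{M}$; the rule for application in the $\hsubr{\theta}{\cdot}{\cdot}$ form then gives $\hsubr{\theta}{R_1\app M}{R_1\app M}$. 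The key observation, which I would state once and reuse, is that the alternative application rule producing a result of the form $M''':\alpha''$ cannot fire: it would demand $\hsubr{\theta}{R_1}{\lflam{x}{M'}:\arr{\alpha'}{\alpha''}}$, and by a straightforward subsidiary induction on $R_1$ such a judgement can only be derived when the head variable of $R_1$ lies in $\domain{\theta}$.

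For canonical terms and the larger syntactic categories the argument is then routine. If $M = R$ use the induction hypothesis on $R$ together with the first rule for the $\hsub{\theta}{M}{M}$ judgement. If $M = \lflam{x}{M'}$ I would first use $\alpha$-conversion to rename $x$ so that $x \notin \domain{\theta}\cup\range{\theta}$; since the free variables of $M'$ are those of $M$ together with at most $x$, none of them lie in $\domain{\theta}$, and the inductive hypothesis together with the abstraction rule gives the result. The cases for atomic types, canonical types, kinds, and contexts are handled in the same spirit: each rule in Figures~\ref{fig:hsubtypes} and~\ref{fig:hsubctx} simply propagates the substitution to sub-expressions, and an $\alpha$-renaming step is used at each binder.

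The only genuinely delicate point in this proof is the subsidiary claim about the $\hsubr{\theta}{R}{M':\alpha'}$ form, since the structural recursion in that rule mixes several judgements and it is easy to overlook the hypothesis needed to rule it out. I would state this as a separate lemma proved by induction on the derivation of $\hsubr{\theta}{R}{M':\alpha'}$: such a derivation forces the head symbol of $R$ to lie in $\domain{\theta}$. Once this is in hand, the main induction is mechanical, and the uniqueness already established in Theorem~\ref{th:uniqueness} ensures that the derivations I construct are in fact the only ones.
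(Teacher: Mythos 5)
Your proof is correct and takes essentially the same approach as the paper, which proves this theorem by a straightforward structural induction on the expression. One small remark: the subsidiary lemma ruling out the $\hsubr{\theta}{R}{M':\alpha'}$ form is not actually needed here, since the theorem only asserts that $\hsubr{\theta}{R}{R}$ \emph{has} a derivation (an existence claim, established directly by applying the $\hsubr{\theta}{R}{R'}$-form rules to the inductive hypotheses), while the fact that no other result is derivable is already the content of Theorem~\ref{th:uniqueness}.
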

\begin{proof}
The proof is by induction on the structure of the expression.
\end{proof}

Simultaneous hereditary substitution enjoys a permutation 
property that is similar to the one described in \cite{harper07jfp}
for unitary substitution. This is the content of the theorem below.
\begin{theorem}[Permutation of Substitutions]\label{th:subspermute}
Let $\theta_1$ be an arbitrary substitution of the form
$\{\langle x_1, M_1,\alpha_1 \rangle, \ldots, 
   \langle x_n,M_n,\alpha_n \rangle \}$.
Further, let
$\theta_2$ be an arbitrary substitution of he form
$\{\langle y_1, N_1,\beta_1\rangle, \ldots, 
   \langle y_m,N_m,\beta_m \rangle \}$ 
where
$y_1,\ldots,y_m$ are variables that are distinct from $x_1,\ldots,x_n$ 
and that do not appear free in $M_1,\ldots,M_n$.  
Finally, suppose that for each $i$, $1 \leq i \leq m$, there is some
term $N'_i$ such that $\hsub{\theta_1}{N_i}{N'_i}$ has a derivation and let
$\theta_3$ be the substitution defined by
$\{ \langle y_1, N'_1,\beta_1 \rangle, \ldots, 
    \langle y_m,N'_m,\beta_m \rangle \}$.
For every kind, type, or canonical term $E$, $E_1$, and
$E_2$ such that $\hsub{\theta_1}{E}{E_1}$ and $\hsub{\theta_2}{E}{E_2}$ have
derivations, there must be an $E'$ such 
that $\hsub{\theta_1}{E_2}{E'}$ and $\hsub{\theta_3}{E_1}{E'}$ have
derivations.
\end{theorem}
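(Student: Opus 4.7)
The plan is to prove this by nested induction, mirroring the style used for Uniqueness (Theorem~2.1): the outer induction is on the combined sizes of $\theta_1$ and $\theta_2$, and the inner induction is on the structure of $E$. This ordering is essential because the rule defining $\hsubr{\theta}{R}{M':\alpha'}$ invokes substitution with a singleton substitution whose arity type is a strict subformula of the head's arity type, so structural induction on $E$ alone would not suffice for the hereditary cases. The atomic-term case has to be strengthened to handle both judgement forms $\hsubr{\theta}{R}{R'}$ and $\hsubr{\theta}{R}{M':\alpha'}$ simultaneously, since the form of the result depends on the head's replacement.

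First I would handle the case where $E$ is an atomic term $R$, by cases on the head. If the head is a constant, both substitutions act as the identity and $E' = R$ works trivially. If the head is a variable $z$, the crucial observation is that $\domain{\theta_1}$ and $\domain{\theta_2}$ are disjoint (since the $y_j$ are distinct from the $x_i$), so at most one substitution replaces $z$. If $z = x_i$, then $E_1 = M_i$ and $E_2 = z$; the hypothesis that no $y_j$ appears free in $M_i$ lets us invoke Vacuous Substitutions (Theorem~2.2) to obtain $\hsub{\theta_3}{M_i}{M_i}$, matching $\hsub{\theta_1}{z}{M_i}$. Symmetrically, if $z = y_j$, then $E_1 = y_j$ and $E_2 = N_j$, and the existence of $N'_j$ with $\hsub{\theta_1}{N_j}{N'_j}$ is exactly what the theorem's hypothesis supplies together with the definition of $\theta_3$. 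If $z$ lies in neither domain, both sides produce $z$. For an application $R_1\app M$, I would split on whether $\hsubr{\theta_k}{R_1}$ yields an atomic term or a $\lambda$-abstraction, appealing to the inner IH for the atomic subcase and to the outer IH (at a strictly smaller substitution size) for the hereditary subcase.

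For canonical terms, the $\lflam{x}{M}$ case proceeds by renaming $x$ so that it lies outside the domains and ranges of $\theta_1$, $\theta_2$, and $\theta_3$ (and outside all the $y_j$), and then applying the inner IH to $M$. The type and kind cases propagate structurally from the canonical-term case via the rules in Figures~\ref{fig:hsubtypes} and the analogous kind rules, and the context case follows from the type case by a straightforward induction on context length.

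The principal obstacle will be the atomic-application subcase in which both orderings trigger hereditary reductions: here one ends up comparing a canonical term obtained by first substituting through $\theta_1$ and then $\theta_3$ against one obtained by first $\theta_2$ and then $\theta_1$, and the two intermediate lambda bodies live inside additional singleton substitutions whose triples arose from the arguments of the application. Reconciling these reorderings requires the outer induction on substitution size (to permute substitution through the freshly generated singletons), Uniqueness to collapse ambiguities in how the derivations were built, and careful use of the disjointness and freshness hypotheses on $y_1,\ldots,y_m$ to ensure no capture is introduced when $\theta_1$ is moved past $\theta_2$. Once the atomic-term case is settled with these ingredients, the remaining cases are essentially bookkeeping.
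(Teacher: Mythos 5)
Your proposal is correct and follows essentially the same route as the paper, which itself only sketches the argument (primary induction on the sum of the sizes of $\theta_1$ and $\theta_2$, secondary induction on the derivation of $\hsub{\theta_2}{E}{E_2}$, deferring details to Lemma~2.10 of \cite{harper07jfp}); your choice of structural induction on $E$ for the inner measure is an inessential variant since the hereditary subcases are discharged by the outer size induction either way. The case analysis you give --- disjointness of domains at variable heads, vacuous substitution via Theorem~\ref{th:vacuoussubs} when $z=x_i$, the hypothesis on $N'_j$ when $z=y_j$, and the outer induction to permute $\theta_1$ past the freshly generated singleton substitutions in the hereditary application case --- is exactly the content the paper elides.
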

\begin{proof}
The proof proceeds by a primary induction on the sum of the sizes
of $\theta_1$ and $\theta_2$ and a secondary induction on the
derivation of $\hsub{\theta_2}{E}{E_2}$.
We omit the details which are similar to those for Lemma 2.10 in
\cite{harper07jfp}.
\end{proof}

\begin{figure}[tbhp]
\begin{center}

  \begin{tabular}{c}

  \infer
      {\stlctyjudgr{\STLCGamma}{c}{\alpha}}
      {c: \alpha \in \STLCGamma}

  \qquad
  \infer
      {\stlctyjudgr{\STLCGamma}{x}{\alpha}}
      {x: \alpha \in \STLCGamma}

  \qquad
  \infer
      {\stlctyjudgr{\STLCGamma}{R \app M}{\alpha}}
      {\stlctyjudgr{\STLCGamma}{R}{\alpha' \atyarr \alpha} \qquad
       \stlctyjudg{\STLCGamma}{M}{\alpha'}}

  \\[10pt]
  \infer
      {\stlctyjudg{\STLCGamma}{\lflam{x}{M}}{\alpha_1 \atyarr \alpha_2}}
      {\stlctyjudg{\aritysum{\{x:\alpha_1\}}{\STLCGamma}}{M}{\alpha_2}}

  \qquad
  \infer
      {\stlctyjudg{\STLCGamma}{R}{\oty}}
      {\stlctyjudgr{\STLCGamma}{R}{\oty}}
  \end{tabular}
\end{center}
\caption{Arity Typing for Canonical Terms}\label{fig:aritytyping}
\end{figure}

While the application of a substitution to an LF expression may not
always exist, this is guaranteed to be the case when certain arity
typing constraints are satisfied as we describe below.

\begin{definition}[Arity Typing]\label{def:aritytyping}
An \emph{arity context} $\STLCGamma$ is a set of unique assignments of
arity types to (term) constants and variables; these assignments are
written as $x:\alpha$ or $c:\alpha$.
Given two arity contexts $\STLCGamma_1$ and $\STLCGamma_2$, we write
$\aritysum{\STLCGamma_1}{\STLCGamma_2}$ to denote the collection of all the 
assignments in $\STLCGamma_1$ and the assignments in $\STLCGamma_2$ to
the constants or variables not already assigned a type in $\STLCGamma_1$. 
The rules in Figure~\ref{fig:aritytyping} define the arity typing
relation denoted by $\stlctyjudg{\STLCGamma}{M}{\alpha}$ between
a term $M$ and an arity type $\alpha$ relative to an arity context
$\STLCGamma$. 
A kind or type $E$ is said to respect an arity context $\STLCGamma$
under the following conditions: if $E$ is $\type$; if $E$ is an atomic
type and for each canonical term $M$ appearing in $E$ there is an
arity type $\alpha$ such that $\stlctyjudg{\STLCGamma}{M}{\alpha}$ is 
derivable; and if $E$ has the form $\typedpi{x}{A}{E'}$ and $A$
respects $\STLCGamma$ and $E'$ respects
$\aritysum{\{x:\erase{A}\}}{\STLCGamma}$. 
A context $\Gamma$ is said to respect $\STLCGamma$ if
for every $x:A$ appearing in $\Gamma$ it is the case that $A$ respects
$\STLCGamma$.
A substitution $\theta$ is {\it arity type preserving}
with respect to $\Theta$ if for every $\langle x,M,\alpha \rangle \in
\theta$ it is  the case that $\stlctyjudg{\STLCGamma}{M}{\alpha}$ is
derivable. 
Associated with a substitution $\theta$ is the arity context $\{ x :
\alpha\ \vert\ \langle x, M, \alpha \rangle \in \theta \}$ that is
denoted by $\context{\theta}$.
\end{definition}

\begin{theorem}[Arity Type Preserving Substitution Always Defined]\label{th:aritysubs}
Let $\theta$ be a substitution that is arity type preserving with
respect to $\STLCGamma$ and let $\STLCGamma'$ denote the arity context 
$\aritysum{\context{\theta}}{\STLCGamma}$. 
\begin{enumerate}
\item If $E$ is a canonical type or kind that respects the
  arity context $\STLCGamma'$, then there must be an $E'$ that
  respects $\STLCGamma$ and that is such that $\hsub{\theta}{E}{E'}$
  is derivable. 

\item If $M$ is a canonical term such that
$\stlctyjudg{\STLCGamma'}{M}{\alpha}$ is
derivable, then there must be an $M'$ such that
$\hsub{\theta}{M}{M'}$ and $\stlctyjudg{\STLCGamma}{M'}{\alpha}$ are
derivable.

\item If $R$ is an atomic term such that
  $\stlctyjudgr{\STLCGamma'}{R}{\alpha}$
  is derivable, then either there is an atomic term $R'$ such that
  $\hsub{\theta}{R}{R'}$ and $\stlctyjudgr{\STLCGamma}{R}{\alpha}$ are
  derivable or there is a canonical term $M$ such that
  $\hsub{\theta}{R}{M : \alpha}$ and
  $\stlctyjudg{\STLCGamma}{M}{\alpha}$ are derivable. 
\end{enumerate}
\end{theorem}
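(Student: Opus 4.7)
The plan is to prove parts (1), (2), and (3) simultaneously by a primary induction on the size of $\theta$ (Definition~\ref{def:typesubsize}) and a secondary induction on the structure of the expression — on $E$ for part (1), and equivalently on the derivation of arity typing for parts (2) and (3). At each step we must establish both the existence of the asserted substitution judgement and the arity typing of its result; uniqueness from Theorem~\ref{th:uniqueness} ensures the two branches in part (3) do not interfere.

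Part (1) is largely routine. For an atomic type $P\app M$, the terms appearing in $P$ and the term $M$ have arity types with respect to $\Theta'$, so part (2) supplies the substituted terms with preserved arity types. For $\typedpi{x}{A}{B}$, we invoke alpha-conversion to ensure $x \notin \dom{\theta} \cup \range{\theta}$ (the variables in $\range{\theta}$ are controlled by $\context{\theta}$), then recurse with the secondary IH into $A$ and $B$; the latter respects $\aritysum{\{x{:}\erase{A}\}}{\Theta'}$, so the side condition threads through. Part (2) reduces to part (3) when $M$ is atomic at type $\oty$ and uses the secondary IH on the body when $M = \lflam{x}{M_1}$ at an arrow arity type.

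Part (3) carries the weight of the argument. For $R = c$ or $R = x \notin \dom{\theta}$, the substitution is trivial and the arity typing is inherited. For $R = x$ with $\langle x, M, \alpha\rangle \in \theta$, type preservation comes directly from $\theta$ being arity-preserving with respect to $\Theta$. The critical case is $R = R_1 \app M_2$: by inversion of arity typing, $R_1$ has some arity type $\alpha' \atyarr \alpha$ in $\Theta'$. Applying the secondary IH to $R_1$ yields either the easy subcase — $\hsubr{\theta}{R_1}{R_1'}$ with $R_1'$ atomic of type $\alpha' \atyarr \alpha$ — in which we combine with the secondary IH on $M_2$ to rebuild $\hsubr{\theta}{R_1 \app M_2}{R_1' \app M_2'}$; or the hereditary subcase, in which $\hsubr{\theta}{R_1}{M_1' : \alpha' \atyarr \alpha}$ with $M_1'$ canonical. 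Since the arity-typing rules force a canonical term of arrow type to have the form $\lflam{y}{N'}$, we obtain such a decomposition.

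The main obstacle is the hereditary subcase: we must compute $\hsub{\{\langle y, M_2'', \alpha'\rangle\}}{N'}{N'''}$ against a fresh singleton substitution $\theta_0$ that is not a fragment of $\theta$, so the secondary IH is unavailable. Here we invoke the primary IH: the size of $\theta_0$ equals $\size{\alpha'}$, which is strictly less than $\size{\alpha' \atyarr \alpha}$. To see that the latter is bounded by $\size{\theta}$, observe that producing a canonical (rather than atomic) result from $R_1$ requires the head of $R_1$ to be a variable in $\dom{\theta}$, and the arity type assigned to that variable in $\theta$ dominates $\alpha' \atyarr \alpha$ along the chain of applications making up $R_1$. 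We must additionally verify that $N'$ respects $\aritysum{\context{\theta_0}}{\Theta}$, which rests on an arity-typing preservation fact for hereditary substitution that we track through the same induction (each application case in parts (2) and (3) carries the matching arity type in its conclusion). With these pieces in place, the primary IH produces the required $N'''$ together with $\stlctyjudg{\Theta}{N'''}{\alpha}$, completing the induction.
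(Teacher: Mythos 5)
Your proposal is correct and follows essentially the same strategy as the paper: a primary induction on the size of the substitution and a secondary induction on term structure, proving clauses (2) and (3) simultaneously (with (3) carrying the weight in the application case) and obtaining clause (1) from clause (2). Your treatment of the hereditary subcase — bounding the size of the fresh singleton substitution by tracing the result type back to the arity type of the head variable in $\domain{\theta}$ — is exactly the decrease the paper's lexicographic induction relies on.
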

\begin{proof}
The first clause in the theorem is an easy consequence of the
second. We prove clauses (2) and (3) simultaneously by induction first
on the sizes of substitutions and then on the structure of terms.
The argument proceeds by considering the cases for the term structure,
first proving (3) and then using this in proving (2). 
\end{proof}

We will often consider expressions and substitutions that satisfy the
arity typing requirements of the theorem above, which then guarantees
that the applications of the substitutions have results.  
We introduce a notation that is convenient in this situation: we will
write $\hsubst{\theta}{E}$ to denote the unique $E'$ such that
$\hsub{\theta}{E}{E'}$ has a derivation whenever such a derivation is
known to exist.

\begin{definition}[Composition of Substitutions]\label{def:composition}
Two substitutions $\theta_1$ and $\theta_2$ are said to be \emph{arity type
compatible} relative to the arity context $\STLCGamma$ if 
$\theta_2$ is type preserving with respect to $\STLCGamma$ and
$\theta_1$ is type preserving with respect to
$\aritysum{\context{\theta_2}}{\STLCGamma}$. The composition of two
such substitutions, written as $\theta_2 \circ \theta_1$, is 
the substitution
\begin{tabbing}
  \qquad\qquad\qquad\=\qquad\qquad\qquad\qquad\qquad\=\kill
  \> $\{ \langle x,M',\alpha \rangle\ \vert\ \langle x,M,\alpha
  \rangle \in \theta_1\ \mbox{\rm and}\ \hsub{\theta_2}{M}{M'}\ \mbox{\rm has a
    derivation}\}\ \cup$\\
  \>\>$\{ \langle y,N,\beta \rangle\ \vert\ \langle y,N,\beta \rangle
  \in \theta_2\ \mbox{\rm and}\ y \not\in \domain{\theta_1} \}$
\end{tabbing}
By Theorem~\ref{th:aritysubs} there must be an $M'$ for
which $\hsub{\theta_2}{M}{M'}$ has a derivation for each $\langle
x,M,\alpha \rangle \in \theta_1$. Moreover such an $M'$ must be 
unique. Thus, the composition described herein is well-defined. Note
also that the composition must also be arity type preserving with
respect to $\STLCGamma$.
\end{definition}

\begin{theorem}[Composition]\label{th:composition}
Let $\theta_1$ and $\theta_2$ be substitutions that are arity type
compatible relative to $\STLCGamma$ and let $\STLCGamma'$ denote the
arity context $\aritysum{\context{\theta_2 \circ \theta_1}}{\STLCGamma}$.
\begin{enumerate}
\item If $E$ is a canonical kind, type or context that respects $\STLCGamma'$ and 
$E'$ and $E''$ are, respectively, canonical types or kinds such that 
$\hsub{\theta_1}{E}{E'}$ and $\hsub{\theta_2}{E'}{E''}$ have
  derivations, then $\hsub{\theta_2 \circ \theta_1}{E}{E''}$ has a
  derivation.
  
\item If $M$ is a canonical term such that, for some arity type
  $\alpha$, $\stlctyjudg{\STLCGamma'}{M}{\alpha}$ is derivable and $M'$
  and $M''$ are canonical terms such that 
$\hsub{\theta_1}{M}{M'}$ and $\hsub{\theta_2}{M'}{M''}$ have
derivations, then $\hsub{\theta_2 \circ \theta_1}{M}{M''}$ has a
derivation.

\item If $R$ is a canonical term such that, for some arity type
  $\alpha$, $\stlctyjudg{\STLCGamma'}{R}{\alpha}$ is derivable and
  \begin{enumerate}
    \item $M'$ and $M''$ are canonical terms such that
      $\hsubr{\theta_1}{R}{M' : \alpha}$ and $\hsub{\theta_2}{M'}{M''}$
      have derivations, then $\hsubr{\theta_2 \circ \theta_1}{R}{M'':\alpha}$
      has a derivation;

    \item $R'$ and $M''$ are, respectively, an atomic and a
      canonical term such that both $\hsubr{\theta_1}{R}{R'}$ and
      $\hsubr{\theta_2}{R'}{M'' : \alpha}$ have derivations then
      $\hsubr{\theta_2 \circ \theta_1}{R}{M'' : \alpha}$ has a
      derivation; 

    \item $R'$ and $R''$ are atomic terms such that
      $\hsubr{\theta_1}{R}{R'}$ and $\hsubr{\theta_2}{R'}{R''}$ have
      derivations, then $\hsubr{\theta_2 \circ \theta_2}{R}{R''}$ has
      a derivation.
  \end{enumerate}
\end{enumerate}
\end{theorem}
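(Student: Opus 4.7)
The plan is to prove clauses (2) and (3) of the theorem simultaneously by a primary induction on the sum of the sizes of $\theta_1$ and $\theta_2$ and a secondary induction on the structure of the term, following the general pattern used in the proofs of Theorem~\ref{th:uniqueness} and Theorem~\ref{th:subspermute}. Clause (1) then follows by a routine structural induction that lifts the term-level results to atomic types, canonical types, kinds, and contexts; here the only work is to observe that the substitution rules for non-term expressions simply distribute to subterms and then appeal to the term-level result, taking the usual care with renaming under $\Pi$-binders.

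I would start with clause (3) by dispatching on the form of the atomic term $R$. The constant case is immediate since $\hsubr{\theta_i}{c}{c}$ for both $i$ and $c$ is also unchanged by $\theta_2 \circ \theta_1$. The variable case splits on membership: if $x \in \domain{\theta_1}$ with triple $\langle x,M,\alpha\rangle$, we are in subcase (a), and by the definition of composition together with the hypothesized $\hsub{\theta_2}{M}{M''}$, the triple $\langle x,M'',\alpha\rangle$ sits in $\theta_2 \circ \theta_1$ and yields the required judgement; if $x \notin \domain{\theta_1}$ then $R'=x$ and the result transfers through the second component of the composition. For the application case $R = R_0 \app M_0$, I would split further on whether $\hsubr{\theta_1}{R_0}{\cdot}$ produces an atomic term or a $\beta$-reducing canonical abstraction. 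In the non-reducing branch, inductive hypotheses on the strict subterms $R_0$ and $M_0$ assemble directly into a derivation for the composition. Clause (2) is reduced to clause (3) in the atomic case $M=R$, and the abstraction case $M=\lflam{x}{M_0}$ is handled by the standard renaming convention to keep $x$ free in $\domain{\theta_1}\cup\range{\theta_1}\cup\domain{\theta_2}\cup\range{\theta_2}$ and by an appeal to the inductive hypothesis on $M_0$ under the extended arity context.

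The main obstacle, and the reason the primary induction on substitution size is needed, is the application subcase of (3) in which $\theta_1$ triggers a hereditary $\beta$-reduction at the head. Here $\hsubr{\theta_1}{R_0}{\lflam{x}{N_1}:\alpha'\atyarr\alpha''}$ is followed by the inner singleton substitution $\hsub{\{\langle x, M'_0,\alpha'\rangle\}}{N_1}{N_2}$ before $\theta_2$ acts on $N_2$. To express the final result as a single application of $\theta_2\circ\theta_1$, one must commute $\theta_2$ past the singleton substitution on $x$, which is exactly the content of Theorem~\ref{th:subspermute}. That commutation produces intermediate substitutions whose arity-type components are strict subcomponents of $\alpha'\atyarr\alpha''$, so each has size strictly smaller than $\theta_1$; this is what licenses the inductive hypothesis in the rearranged derivation. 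Theorem~\ref{th:aritysubs} is invoked throughout to guarantee that each intermediate substitution application has a derivation, and Theorem~\ref{th:uniqueness} ensures that the result produced by the two-step process agrees with the one obtained in one step through the composition.
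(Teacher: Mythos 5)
Your proposal is correct and follows essentially the same route as the paper: clauses (2) and (3) are proved together by case analysis, with the hereditary $\beta$-reduction case resolved by commuting $\theta_2$ past the inner singleton substitution via Theorem~\ref{th:subspermute} and appealing to Theorems~\ref{th:aritysubs} and~\ref{th:uniqueness}, while clause (1) follows from clause (2) by a routine structural induction. The only divergence is the induction measure---the paper inducts on the size of the derivation of the $\theta_1$-application rather than on the pair (sum of substitution sizes, term structure)---and in fact once Theorem~\ref{th:subspermute} discharges the reduction case your primary measure on substitution sizes is never actually consumed, so the secondary induction alone would carry the argument.
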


\begin{proof}
Clause 1 of the theorem follows easily from an induction on the
structure of the canonical type or kind, assuming the property stated
in clause 2.

\smallskip
We prove clauses 2 and 3 together. These clauses are premised on the 
existence of a derivation corresponding to the application of the
substitution $\theta_1$ to either $M$ or $R$.
The argument is by induction on the size of this derivation and it
proceeds by considering the cases for the last rule in the
derivation. 

\smallskip
We consider first the cases where the derivation is for
$\hsub{\theta_1}{M}{M'}$; the clause in the theorem relevant to these
cases is 2. 
An easy argument using the induction hypothesis yields the desired
conclusion when $M$ is of the form $\lflam{x}{M_1}$.
In the case that $M$ is an atomic term, there is a shorter derivation
for $\hsubr{\theta_1}{M}{M' : \alpha}$ or $\hsubr{\theta_1}{M}{M'}$.
In the first case, the induction hypothesis, specifically clause 3(a),
allows us to conclude that $\hsub{\theta_2 \circ \theta_1}{M}{M''}$
has a derivation. 
In the second case, $M'$ must be an atomic term and there must
therefore be a derivation for $\hsubr{\theta_2}{M'}{M'':\alpha}$ or
$\hsubr{\theta_2}{M'}{M''}$.
Using the induction hypothesis, specifically clause 3(b) or 3(c), we
can again conclude that there must be a derivation for $\hsub{\theta_2
  \circ \theta_1}{M}{M''}$.

\smallskip
We consider next the cases for the last rule when the derivation is for
$\hsubr{\theta_1}{R}{M' : \alpha}$.
\begin{itemize}
\item If $M$ is a variable $x$ such that $\langle x,M',\alpha \rangle
  \in \theta_1$, then it must be the case that there is some 
  $\langle x,M'',\alpha\rangle \in \theta_2 \circ \theta_1$.
Hence there must be a derivation for $\hsubr{\theta_2 \circ
  \theta_1}{M}{M'' : \alpha}$. 

\item Otherwise $M$ must be of the form $(R_1\app M_2)$ where there are 
  derivations of judgements
  $\hsub{\theta_1}{R_1}{\lflam{x}{M_3} : \alpha' \atyarr \alpha}$, 
  $\hsub{\theta_1}{M_2}{M_4}$, and
  $\hsub{\{\langle x, M_4,\alpha'\rangle \}}{M_3}{M'}$ 
  for suitable choices for $M_3$, $\alpha'$ and
  $M_4$.
  We note first that the arity context
  $\aritysum{(\context{\theta_2 \circ \theta_1})}{\STLCGamma}$
  is equal to 
  $\aritysum{\context{\theta_1}}{(\aritysum{\context{\theta_2}}{\STLCGamma})}$. 
  Then, by the assumptions of the theorem and
  Theorem~\ref{th:aritysubs}, it follows that there must be terms
  $M'_3$ and $M'_4$ such that $\hsub{\theta_2}{M_3}{M'_3}$ and
  $\hsub{\theta_2}{M_4}{M'_4}$ have derivations.
  We see by using the induction hypothesis with respect to the derivation for
  $\hsub{\theta_1}{R_1}{\lflam{x}{M_3} : \alpha' \atyarr \alpha}$ that
  there must be a derivation for $\hsub{\theta_2 \circ 
    \theta_1}{R_1}{\lflam{x}{M'_3} : \alpha' \atyarr \alpha}$. 
  Using the induction hypothesis again with respect
  to the derivation for $\hsub{\theta_1}{M_2}{M_4}$, we see that
  there must be a derivation for $\hsub{\theta_2 \circ
    \theta_1}{M_2}{M'_4}$. By Theorem~\ref{th:subspermute} it follows
  that $\hsub{\{\langle  
      x, M'_4,\alpha' \rangle \}}{M'_3}{M''}$ has a derivation and,
  hence that $\hsubr{\theta_2 \circ \theta_1}{(R_1 \app M_2)}{M'' :
    \alpha}$ has one too. 
\end{itemize}

Finally we consider the cases for the last rule when the derivation is
for $\hsubr{\theta_1}{R}{R'}$.
The argument when $R$ is a constant is trivial.
The case when $R$ is a variable follows almost as immediately using
the definition of $\theta_2\circ\theta_1$.
The only remaining case is when $R$ is of the form $(R_1 \app M_2)$
and $R'$ is $(R'_1\app M'_2)$ where $\hsubr{\theta_1}{R_1}{R'_1}$ and
$\hsub{\theta_1}{M_2}{M'_2}$ have shorter derivations for suitable
terms $R'_1$ and $M'_2$.
We then have two subcases to consider with respect to the application
of $\theta_2$ to $(R'_1 \app M'_2)$:
\begin{itemize}
  \item There is a derivation for $\hsubr{\theta_2}{(R'_1
    \app M'_2)}{(R''_1 \app M''_2)}$ where $R''_1$ and $M''_2$ are terms
    such that $\hsubr{\theta_2}{R'_1}{R''_1}$ and
    $\hsubr{\theta_2}{M'_2}{M''_2}$ have derivations; note that the
    relevant clause in this case is 3(c) and $R''$ is $(R''_1 \app
    M''_2)$. 
    The induction hypothesis lets us conclude 
    that $\hsubr{\theta_2 \circ \theta_1}{R_1}{R''_1}$ and
    $\hsubr{\theta_2 \circ \theta_1}{M_2}{M''_2}$ have derivations.
    Hence, $\hsubr{\theta_2 \circ \theta_1}{(R_1\app M_2)}{(R''_1\app
      M''_2)}$ must have a derivation.

   \item There is a derivation for $\hsubr{\theta_2}{(R'_1\app
     M'_2)}{M'' : \alpha}$.
     In this case, for suitable choices for $M_3$,
     $\alpha'$ and $M_4$, there must be derivations for
     $\hsubr{\theta_2}{R'_1}{\lflam{x}{M_3} : \alpha' \atyarr \alpha}$,
     $\hsub{\theta_2}{M'_2}{M_4}$ and $\hsub{\{\langle x, M_4,\alpha'
     \rangle \}}{M_3}{M''}$.
     The induction hypothesis now lets us conclude that there are
     derivations for judgments
     $\hsubr{\theta_2\circ\theta_1}{R_1}{\lflam{x}{M_3} : \alpha' 
       \atyarr \alpha}$ and $\hsub{\theta_2 \circ \theta_1}{M_2}{M_4}$.
     It then follows easily that there must be a derivation for
     $\hsubr{\theta_2 \circ \theta_1}{(R_1 \app M_2)}{M'' : \alpha}$.
    \end{itemize}
\end{proof}

The erased form of a type is invariant under substitution. This is the
content of the theorem below whose proof is straightforward.

\begin{theorem}[Erasure is Invariant Under Substitution]\label{th:erasure}
For any type $A$ and substitution $\theta$, if $\hsub{\theta}{A}{A'}$
has a derivation, then $\erase{A} = \erase{A'}$.
\end{theorem}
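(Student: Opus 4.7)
The plan is to proceed by a straightforward structural induction on the canonical type $A$, matching against the shape of the substitution-application rules for types shown in Figure~\ref{fig:hsubtypes}. The key observation driving the whole argument is that the rules in that figure preserve the outer shape of a type: the only way to derive $\hsub{\theta}{A}{A'}$ when $A$ is atomic produces an atomic $A'$, and the only rule applicable when $A$ is a $\Pi$-type produces a $\Pi$-type whose immediate subcomponents arise from applying $\theta$ to the subcomponents of $A$. Since $\erase{\cdot}$ only inspects this outer $\Pi$/atomic skeleton, compatibility with substitution should follow directly.

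Concretely, I would split into cases on the last rule used in the derivation of $\hsub{\theta}{A}{A'}$. If $A$ is the atomic type $a$, then $A' = a$ and both sides erase to $\oty$. If $A$ is of the form $P \app M$, the applicable rule yields $A' = P' \app M'$ with $\hsub{\theta}{P}{P'}$ and $\hsub{\theta}{M}{M'}$; again $A'$ is atomic, so both $\erase{A}$ and $\erase{A'}$ equal $\oty$. For the non-atomic case $A = \typedpi{x}{A_1}{A_2}$, the rule forces $A' = \typedpi{x}{A_1'}{A_2'}$ with $\hsub{\theta}{A_1}{A_1'}$ and $\hsub{\theta}{A_2}{A_2'}$ as premises; the inductive hypothesis applied to each premise gives $\erase{A_1} = \erase{A_1'}$ and $\erase{A_2} = \erase{A_2'}$, and then
\[
  \erase{A} = \arr{\erase{A_1}}{\erase{A_2}} = \arr{\erase{A_1'}}{\erase{A_2'}} = \erase{A'}
\]
by the definition of erasure on $\Pi$-types.

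There is no real obstacle here: substitution never promotes an atomic type to a $\Pi$-type or vice versa, and no substitution rule on types inspects or alters the structural positions that erasure sees. The only mild subtlety worth flagging is that in the $P \app M$ subcase one must observe that the conclusion is still syntactically atomic (hence erases to $\oty$) rather than attempt to compare it structurally to something like $P \app M$; this is immediate from the form of the relevant rule in Figure~\ref{fig:hsubtypes}. The induction can equivalently be phrased on the size of the derivation of $\hsub{\theta}{A}{A'}$, but structural induction on $A$ is cleaner since every type-level substitution rule is driven by the top-level form of $A$.
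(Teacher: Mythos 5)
Your proof is correct and follows essentially the same route as the paper, which simply states that the result follows by induction on the derivation of $\hsub{\theta}{A}{A'}$ using the definition of erasure. Your case analysis on the type-level substitution rules, with the observation that atomic types remain atomic and $\Pi$-types decompose componentwise, is exactly the intended elaboration of that argument.
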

\begin{proof}
The proof is by induction on the height of the derivation for 
$\hsub{\theta}{A}{A'}$ and using the definition of the erasure.
\end{proof}

\subsection{Wellformedness Judgements}\label{lf-typing}

\begin{figure}
\fbox{$\lfsig{\Sigma}$}
  
\begin{center}
\begin{tabular}{c}

\infer[\sigempty]{\lfsig{\emptysig}}{}

\\[10pt]
  
\infer[\sigterm]
      {\lfsig{\Sigma,c:A}}
      {\lfsig{\Sigma} \qquad \lftype{\emptyctx}{A} \qquad c\ \mbox{\rm does not
          appear in}\ \Sigma}

\\[10pt]

\infer[\sigfam]
      {\lfsig{\Sigma,a:K}}
      {\lfsig{\Sigma} \qquad \lfkind{\emptyctx}{K} \qquad a\ \mbox{\rm does not
          appear in}\ \Sigma}
\end{tabular}
\end{center}
      
\vspace{4pt}
\fbox{$\lfctx{\Gamma}$}

\begin{center}
\begin{tabular}{c}
    
\infer[\ctxempty]
      {\lfctx{\emptyctx}}{}

\\[10pt]
      
\infer[\ctxterm]
      {\lfctx{\Gamma,x:A}}
      {\lfctx{\Gamma} \qquad \lftype{\Gamma}{A} \qquad x\ \mbox{\rm does not
          appear free in}\ \Gamma}
\end{tabular}
\end{center}
\caption{The Formation Rules for LF Signatures and Contexts}
\label{fig:lf-judgements-a}
\end{figure}

\begin{figure}
\fbox{$\lfkind{\Gamma}{K}$}

\begin{center}
\begin{tabular}{c}
\infer[\canonkindtype]
      {\lfkind{\Gamma}{\type}}{}

\\[10pt]

\infer[\canonkindpi]
      {\lfkind{\Gamma}{\typedpi{x}{A}{K}}}
      {\lftype{\Gamma}{A} \qquad
       \lfkind{\Gamma,x:A}{K}}
\end{tabular}
\end{center}

 \fbox{$\lftype{\Gamma}{A}$}

\begin{center}
\begin{tabular}{c}       
\infer[\canonfamatom]
      {\lftype{\Gamma}{P}}
      {\lfsynthkind{\Gamma}{P}{\type}}
      
\\[10pt]

\infer[\canonfampi]
      {\lftype{\Gamma}{\typedpi{x}{A_1}{A_2}}}
      {\lftype{\Gamma}{A_1} \qquad \lftype{\Gamma, x:A_1}{A_2}}
\end{tabular}
\end{center}

\vspace{4pt}
\fbox{$\lfsynthkind{\Gamma}{P}{K}$}

\begin{center}
\begin{tabular}{c}
\infer[\atomfamconst]
      {\lfsynthkind{\Gamma}{a}{K}}
      {a:K\in\Sigma}
\\[10pt]
\infer[\atomfamapp]
      {\lfsynthkind{\Gamma}{P\app M}{K}}
      {\lfsynthkind{\Gamma}{P}{\typedpi{x}{A}{K_1}} \qquad
       \lfchecktype{\Gamma}{M}{A} \qquad
      \hsub{\{\langle x, M, \erase{A}\rangle\}}{K_1}{K}}
\end{tabular}
\end{center}

\vspace{4pt}
\fbox{$\lfchecktype{\Gamma}{M}{A}$}

\begin{center}
\begin{tabular}{cc}
\infer[\canontermatom]
      {\lfchecktype{\Gamma}{R}{P}}
      {\lfsynthtype{\Gamma}{R}{P}} &
\infer[\canontermlam]
      {\lfchecktype{\Gamma}{\lflam{x}{M}}{\typedpi{x}{A_1}{A_2}}}
      {\lfchecktype{\Gamma,x:A_1}{M}{A_2}}
\end{tabular}
\end{center}

\vspace{4pt}
\fbox{$\lfsynthtype{\Gamma}{R}{A}$}
    
\begin{center}
  \begin{tabular}{c}
\infer[\atomtermvar]
      {\lfsynthtype{\Gamma}{x}{A}}
      {x:A\in\Gamma}
\qquad
\infer[\atomtermconst]
      {\lfsynthtype{\Gamma}{c}{A}}
      {c:A\in\Sigma}
      
\\[10pt]

\infer[\atomtermapp]
      {\lfsynthtype{\Gamma}{R\app M}{A}}
      {\lfsynthtype{\Gamma}{R}{\typedpi{x}{A_1}{A_2}} \quad\ 
       \lfchecktype{\Gamma}{M}{A_1} \quad\ 
       \hsub{\{\langle x, M, \erase{A_1}\rangle\}}{A_2}{A}} 
 \end{tabular}
\end{center}

\caption{The Formation Rules for LF Kinds, Types, and Terms}
\label{fig:lf-judgements-b}
\end{figure}

Canonical LF includes seven judgements: $\lfsig{\Sigma}$ that ensures
that the constants declared in a signature are distinct and their type
or kind classifiers are well-formed; $\lfctx{\Gamma}$ that ensure that
the variables declared in a signature are distinct and their type
classifiers are well-formed in the preceding declarations and
well-formed signature $\Sigma$; $\lfkind{\Gamma}{K}$ that determines
that a kind $K$ is well-formed with respect to a well-formed signature
and context pair; $\lftype{\Gamma}{A}$ and
$\lfsynthkind{\Gamma}{P}{K}$ that check, respectively, the formation
of a canonical and atomic type relative to a well-formed 
signature, context and kind triple; and $\lfchecktype{\Gamma}{M}{A}$ and
$\lfsynthtype{\Gamma}{R}{A}$ that ensure, respectively, that a
canonical and atomic term are well-formed with respect to a
well-formed signature, context and canonical type triple. 
Figure~\ref{fig:lf-judgements-a} presents the rules for deriving
the first two of these judgements, and the remaining judgments are
presented in Figure~\ref{fig:lf-judgements-b}.
In the rules \canonkindpi\ and \canontermlam\ we assume $x$ to be a
variable that does not appear free in $\Gamma$.
The formation rule for type and term level application,
i.e. $\atomfamapp$ and $\atomtermapp$, require the substitution of a
term into a kind or a type.
Use is made towards this end of hereditary substitution. The index for
such a substitution is obtained by erasure from the type established
for the term.

The judgement forms other than $\lfsig{\Sigma}$ that are described
above are parameterized by a signature that remains unchanged in the
course of their derivation.
In the rest of this thesis we will assume a fixed signature that has in
fact been verified to be well-formed at the outset. 
The judgement forms require some of their other components to satisfy
additional restrictions. 
For example, judgements of the form $\lfchecktype{\Gamma}{M}{A}$
require that $\Sigma$, $\Gamma$ and $A$
be well-formed as an ensemble.
Judgements of the form $\lfsynthtype{\Gamma}{R}{A}$ instead
require that $\Sigma$ and $\Gamma$ be well-formed and ensure
the well-formedness of both $R$ and $A$.
To be coherent, the rules in Figure~\ref{fig:lf-judgements-b} must
ensure that in deriving a judgement that satisfies these requirements,
it is necessary only to consider the derivation of judgements that
also accord with these requirements.
The fact that they possess this property can be verified by an
inspection of their structure, using the observation that will be made
in Theorem~\ref{th:transitivity} that hereditary substitution
preserves the property of being well-formed for kinds and types. 

Arity typing judgements for terms approximate LF typing judgements as
made precise below. 

\begin{definition}[Induced Arity Context]
  The arity context induced by the signature $\Sigma$  and context
  $\Gamma$ is the collection of assignments that includes $x :
  \erase{A}$ for each $x : A \in \Gamma$ and $c : \erase{A}$ for each
  $c : A \in \Sigma$.
  When the context $\Gamma$ is irrelevant or empty, we shall refer to
  the arity context as the one induced by just $\Sigma$.
\end{definition}

\begin{theorem}[Arity Typing Approximates LF Typing]\label{th:arityapprox}
  Let $\STLCGamma$ be the arity context induced by the signature
  $\Sigma$ and context $\Gamma$.
  If $\lfctx{\Gamma}$ then $\Gamma$ respects $\Theta$.
  If $\lfkind{\Gamma}{K}$ or $\lftype{\Gamma}{A}$ then, respectively,
  $K$ or $A$ respect $\STLCGamma$.
  If $\lfchecktype{\Gamma}{M}{A}$ is derivable, then
  $\stlctyjudg{\STLCGamma}{M}{\erase{A}}$ must also be derivable.
  If $\lfsynthtype{\Gamma}{R}{A}$ is derivable, then
  $\stlctyjudg{\STLCGamma}{R}{\erase{A}}$ must also be derivable.
\end{theorem}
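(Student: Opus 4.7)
The plan is to carry out a simultaneous induction on the structures of the derivations of the various LF formation judgements, since these judgements are mutually defined. Specifically, I would establish all four clauses at once by induction on the height of the LF derivation, considering each possible last rule.

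First, I would dispatch the signature/context respect clauses. For $\lfctx{\Gamma}$, the rule $\ctxempty$ gives the trivial base case, and $\ctxterm$ reduces to showing that $A$ respects $\STLCGamma$, which follows from the clause for $\lftype{\Gamma}{A}$ applied via the induction hypothesis. For $\lfkind{\Gamma}{K}$, the two cases $\canonkindtype$ and $\canonkindpi$ are immediate from the definition of ``respects''; the $\Pi$ case uses that the arity context extends correctly with $x:\erase{A}$, which matches exactly the definition. The $\lftype{\Gamma}{A}$ case proceeds similarly, but with an interesting subcase: to show that the atomic type $P$ respects $\STLCGamma$, I would need to verify that every term appearing in $P$ has some arity type. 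This follows by appealing to the induction hypothesis on the sub-derivations of $\lfsynthkind{\Gamma}{P}{K}$, tracing through each $\atomfamapp$ step and using the term clause on each argument.

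For the term judgement clauses, the variable and constant cases ($\atomtermvar$ and $\atomtermconst$) are handled directly by the definition of the induced arity context: by construction, every $x:A \in \Gamma$ contributes $x:\erase{A}$ to $\STLCGamma$, and similarly for constants in $\Sigma$. The $\canontermatom$ case is immediate from the induction hypothesis. The $\canontermlam$ case requires observing that the arity context extension $\aritysum{\{x:\erase{A_1}\}}{\STLCGamma}$ matches exactly what is induced by the extended LF context $\Gamma, x:A_1$, so the induction hypothesis on the body yields $\stlctyjudg{\aritysum{\{x:\erase{A_1}\}}{\STLCGamma}}{M}{\erase{A_2}}$, and the arity rule for abstraction gives the required $\erase{A_1}\atyarr\erase{A_2}=\erase{\typedpi{x}{A_1}{A_2}}$.

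The main obstacle will be the application cases $\atomfamapp$ and $\atomtermapp$, because the resulting type or kind is produced by a hereditary substitution. In $\atomtermapp$, the induction hypothesis on the head gives $\stlctyjudgr{\STLCGamma}{R}{\erase{A_1}\atyarr\erase{A_2}}$ and on the argument gives $\stlctyjudg{\STLCGamma}{M}{\erase{A_1}}$, so the arity rule yields $\stlctyjudgr{\STLCGamma}{R\app M}{\erase{A_2}}$. What must be bridged is that the theorem demands the arity type $\erase{A}$ where $A$ is the result of $\hsub{\{\langle x,M,\erase{A_1}\rangle\}}{A_2}{A}$. Here I would invoke Theorem~\ref{th:erasure}, which guarantees that $\erase{A}=\erase{A_2}$, closing the gap. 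The $\atomfamapp$ case proceeds the same way, except that the conclusion is about a kind respecting the arity context; after appealing to the erasure-invariance theorem (which I would lift in a short auxiliary remark to cover kinds as well, since its proof only inspects the syntactic erasure), the ``respects'' clause is verified for each subterm.
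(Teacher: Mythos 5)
Your proposal is correct and follows essentially the same route as the paper: a mutual induction on the size of the LF derivations, with the variable/constant cases discharged by the definition of the induced arity context and the application cases bridged by the invariance of erasure under hereditary substitution (Theorem~\ref{th:erasure}). The only difference is organizational — the paper first proves the two term clauses simultaneously and then derives the context and kind/type clauses from them, whereas you run all four clauses in one simultaneous induction — and both arrangements are fine.
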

\begin{proof}
The last two parts of the theorem are proved simultaneously by
induction on the size of the derivation of
$\lfchecktype{\Gamma}{M}{A}$ and $\lfsynthtype{\Gamma}{R}{A}$. 
The first two parts follows from them, again by induction on the
derivation size.  
\end{proof}

\section{Formalizing Object Systems in LF}
\label{sec:lf-ex}

\begin{figure}
\[
\begin{array}{lcl}
  \of{\tpty}{\type}
  & \quad\quad & 
  \of{\ofemptytm}{\ofty\app\emptytm\app\unittm}

  \\
  
  \of{\unittm}{\tpty}
  & &

  \\

  \of{\arrtm}{\arr{\tpty}{\tpty}}
  & &
  \of{\ofapptm}
     {\typedpi{E_1}{\tmty}{\typedpi{E_2}{\tmty}
     {\typedpi{T_1}{\tpty}{\typedpi{T_2}{\tpty}{}}}}}
  \\
  & &
    \qquad
    \typedpi{D_1}{\ofty\app E_1\app (\arrtm\app T_1\app T_2)}{
      \typedpi{D_2}{\ofty\app E_2\app T_1}{}}

  \\

  \of{\tmty}{\type}
  & &
  \qquad \ofty\app(\apptm\app E_1\app E_2)\app T_2

  \\

  \of{\emptytm}{\tmty}
  & &
  \\

  \of{\apptm}{\arr{\tmty}{\arr{\tmty}{\tmty}}} & &
    \of{\oflamtm}
       {\typedpi{R}
       {\arr{\tmty}{\tmty}}
       {\typedpi{T_1}{\tpty}{\typedpi{T_2}{\tpty}{}}}}
  \\

  \of{\lamtm}{\arr{\tpty}{\arr{(\arr{\tmty}{\tmty})}{\tmty}}}
  &  & 
  \qquad \typedpi{D}{(\typedpi{x}{\tmty}
          {\typedpi{y}{\ofty\app x\app T_1}
          {\ofty\app(R\app x)\app T_2}})}{}
  \\
  
  & & \qquad \ofty\app(\lamtm\app T_1\app(\lflam{x}{R\app x}))
                             \app(\arrtm\app T_1\app T_2)
  \\

  \of{\ofty}{\arr{\tmty}{\arr{\tpty}{\type}}} & & \\  

  \of{\eqty}{\arr{\tpty}{\arr{\tpty}{\type}}} & &
    \of{\refltm}{\typedpi{T}{\tpty}{\eqty\app T\app T}}
\end{array}
\]
\caption{An LF Specification for the Simply-Typed Lambda Calculus}
\label{fig:stlc-term-spec}
\end{figure}

A key use of LF is in formalizing systems that are described 
through relations between objects that are specified through a
collection of inference rules. 
In the paradigmatic approach, each such relation is represented by a
dependent type whose term arguments are encodings of objects that
might be in the relation in question.
The inference rules translate in this context into term constructors
for the type representing the relation.
We illustrate these ideas through an encoding of the typing relation
for the simply-typed $\lambda$-calculus, a running example for this
thesis. 

We assume the reader to be familiar with the types and terms in the
simply typed $\lambda$-calculus  and also with the rules that define
its typing relation.
Figure~\ref{fig:stlc-term-spec} presents an LF signature that serves as
an encoding of this system. 
This encoding uses the higher-order abstract syntax approach to
treating binding. 
The specification introduces two type families, $\tpty$ and $\tmty$ to 
represent the simple types and $\lambda$-terms.
Additionally, for each expression form in the object system, it
includes a constant that produces a term of type $\tpty$ or $\tmty$;
as should be apparent from the declarations, we have assumed an object
language whose terms are constructed from a single constant of
atomic type that is represented by the LF constant \emptytm\ and whose
type is represented by the LF constant \unittm.
This signature also provides a representation of two relations over
object language expressions: typing between terms and types and
equality between types. 
Specifically, the type-level constants $\ofty$ and $\eqty$ are
included towards this end. 
The rules defining the relations of interest in the object system are
encoded by constants in the signature.
The types associated with these constants ensure that well-formed
terms of atomic type that are formed using the constants correspond to
derivations of the relation in the object language that is represented
by the type. 

One of the purposes for constructing a specification is to use them to
prove properties about the object system.
For example, we may want to show that when a type can be associated
with a term in the simply typed $\lambda$-calculus, it must be
unique.
Based on our encoding, this property can be stated as the following
about typing derivations in LF:
\begin{quotation}
\noindent For any terms $M_1,M_2,E,T_1,T_2$, if there are LF
derivations for 
$\lfchecktype{}{M_1}{\ofty\app E\app T_1}$ and
$\lfchecktype{}{M_2}{\ofty\app E\app T_2}$, then there must be a term
$M_3$ such than there is a derivation for
$\lfchecktype{}{M_3}{\eqty\app T_1\app T_2}$. 
\end{quotation}
To prove this property, we would obviously need to unpack its logical
structure.
We would also need to utilize an understanding of LF in
analyzing the hypothesized typing derivations. 
Considering the case where $E$ is an abstraction will lead us to actually
wanting to prove a more general property:
\begin{quotation}
\noindent For any terms $M_1,M_2,E,T_1,T_2$ and contexts $\Gamma$, if
there are LF derivations for the judgements
$\lfchecktype{\Gamma}{M_1}{\ofty\app E\app T_1}$ and
$\lfchecktype{\Gamma}{M_2}{\ofty\app E\app T_2}$, 
then there must be a term $M_3$ such than there is a derivation
for $\lfchecktype{}{M_3}{\eqty\app T_1\app T_2}$. 
\end{quotation}
Now, this property is not provable without some constraints on the form
of contexts.
In this example, it suffices to prove it when $\Gamma$ is restricted
to being of the form
\begin{center}
 $(x_1:\tmty,y_1:\ofty\app x_1\app{Ty}_1,\ldots,x_n:\tmty,y_n:\ofty\app x_n\app{Ty}_n)$.
\end{center}
\noindent In completing the argument, we would need to use properties of LF
derivability.
A property that would be essential in this case is the finiteness of
LF derivations, which enables us to use an inductive argument.

The objective in this thesis is to provide a formal mechanism for
carrying out such analysis.
We do this by describing a logic that is suitable for this purpose.
One of the requirements of this logic is that it should permit the
expression of the kinds of properties that arise in the process of
reasoning.
Beyond this, it should further be possible to complement the statement
of properties with inference rules that permit the encoding of
interesting and sound forms of reasoning. 

\section{Meta-Theoretic Properties of LF}
\label{sec:lf-metathm}

Our reasoning system will need to embody an understanding of
derivability in LF.  
We describe some properties related to this notion here that will be 
useful in this context. 
The first three theorems, which express structural properties
about derivations, have easy proofs.
The fourth theorem states a subsitutivity property for wellformedness
judgements.
This theorem is proved in \cite{harper07jfp}.

\begin{theorem}\label{th:weakening}
If $\mathcal{D}$ is a derivation for $\lfkind{\Gamma}{K}$,
$\lftype{\Gamma}{A}$ or $\lfchecktype{\Gamma}{M}{A}$, then, for any
variable $x$ that is fresh to the judgement and for any $A'$ such that
$\lftype{\Gamma}{A'}$ is derivable, there is a derivation,
respectively, for $\lfkind{\Gamma, x:A'}{K}$,
$\lftype{\Gamma, x : A'}{A}$ or $\lfchecktype{\Gamma, x: A'}{M}{A}$
that has the same structure as $\mathcal{D}$. 
\end{theorem}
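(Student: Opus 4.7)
The plan is to prove this by simultaneous induction on the size of $\mathcal{D}$, extended to cover all five LF wellformedness judgements of Figures~\ref{fig:lf-judgements-a} and~\ref{fig:lf-judgements-b}: in addition to the three named in the statement, $\lfsynthkind{\Gamma}{P}{K}$ and $\lfsynthtype{\Gamma}{R}{A}$ must be treated in tandem, since they are mutually defined with the others via the rules $\canonfamatom$ and $\canontermatom$. In each case I would split on the final rule of $\mathcal{D}$, apply the induction hypothesis to every premise, and reapply the same rule; the resulting derivation then has exactly the structure of $\mathcal{D}$ by construction.

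The leaf and lookup cases are immediate. For $\atomtermvar$, the hypothesis $y : A \in \Gamma$ is preserved when we append $x : A'$ since the freshness condition guarantees $y \neq x$. The constant lookups in $\atomfamconst$ and $\atomtermconst$ do not reference the context at all. In $\atomfamapp$ and $\atomtermapp$, the premises of the form $\hsub{\{\langle x, M, \erase{A_1}\rangle\}}{A_2}{A}$ are hereditary substitutions operating purely on kinds, types, and terms; they carry over verbatim, while the remaining premises are dispatched by the induction hypothesis.

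The cases that require real care are the binder rules $\canonkindpi$, $\canonfampi$, and $\canontermlam$, each of whose premise extends the context by a fresh variable $y : B$. First I would apply $\alpha$-equivalence so that $y$ is chosen distinct from $x$ and does not occur free in $A'$. To invoke the induction hypothesis on the premise at context $\Gamma, y : B$, I need the witness $\lftype{\Gamma, y : B}{A'}$. This is obtained by a smaller recursive application of the theorem to the given derivation of $\lftype{\Gamma}{A'}$, using the well-formedness of $B$ in $\Gamma$ that is guaranteed by the coherence observation remarked on just after Figure~\ref{fig:lf-judgements-b}. With the weakened premise in hand, the rule is reapplied to obtain the conclusion.

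The main obstacle is reconciling the order in which contexts are extended: after applying the induction hypothesis to the premise of a binder rule I obtain a derivation over $\Gamma, y : B, x : A'$, whereas reassembly of the binder rule demands one over $\Gamma, x : A', y : B$. I would resolve this by strengthening the inductive statement to permit insertion of $x : A'$ at an arbitrary position in the context, provided $A'$ is well-formed in the prefix of $\Gamma$ preceding the insertion point. The stronger statement is proved by the same induction, and the theorem as stated falls out as the special case in which the insertion point is the end of the context.
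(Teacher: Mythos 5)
Your argument is the standard one and is essentially what the paper has in mind (the paper offers no proof, describing the first three structural theorems as having ``easy proofs''): a simultaneous induction over all five mutually defined judgement forms, reapplying the last rule after weakening its premises, with the statement generalized to insertion of $x:A'$ at an arbitrary position so that the binder rules go through. That generalization is the right move, and the theorem as stated is indeed the instance where the insertion point is the end of the context.

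One remark on your middle paragraph: the step where you obtain $\lftype{\Gamma, y:B}{A'}$ by ``a smaller recursive application of the theorem to the given derivation of $\lftype{\Gamma}{A'}$'' is not actually justified by your induction measure --- that derivation is an external input, not a subderivation of $\mathcal{D}$, and weakening it could in turn demand weakening yet another type derivation, so well-foundedness is not evident. Fortunately this step is unnecessary once you adopt the strengthened statement from your last paragraph: in the binder case you insert $x:A'$ between $\Gamma$ and $y:B$, and the side condition only requires $\lftype{\Gamma}{A'}$, which you already hold unchanged. So the final version of your proof is correct; just make sure the recursive-weakening detour is dropped rather than retained alongside the generalization. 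The claim that the result has the same structure as $\mathcal{D}$ is immediate since every rule is reapplied in place, with the substitution premises of \atomfamapp\ and \atomtermapp\ carried over verbatim as you note.
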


\begin{theorem}\label{th:strengthening}
  If $\mathcal{D}$ is a derivation for $\lfkind{\Gamma, x:A'}{K}$,
$\lftype{\Gamma, x : A'}{A}$ or $\lfchecktype{\Gamma, x: A'}{M}{A}$
  and $x$ is a variable that does not appear free in $K$, $A$, or $M$
  and $A$ respectively, then there must be a derivation that has the
  same structure as $\mathcal{D}$ for judgment $\lfkind{\Gamma}{K}$,
  $\lftype{\Gamma}{A}$ or $\lfchecktype{\Gamma}{M}{A}$, respectively.
\end{theorem}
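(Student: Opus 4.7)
The plan is to proceed by induction on the height of $\mathcal{D}$, generalizing the statement to cover the atomic synthesis judgements as well: if $\lfsynthkind{\Gamma, x:A'}{P}{K}$ (resp. $\lfsynthtype{\Gamma, x:A'}{R}{A}$) is derivable and $x$ does not appear free in $P$ (resp. $R$), then $x$ does not appear free in the synthesized $K$ (resp. $A$), and there is a derivation of the same shape for $\lfsynthkind{\Gamma}{P}{K}$ (resp. $\lfsynthtype{\Gamma}{R}{A}$). This generalization is forced by the \canonfamatom\ and \canontermatom\ cases, and it ensures we maintain enough information about freshness to handle nested atomic applications.

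The routine cases are those where the last rule is \canonkindtype, \canonkindpi, \canonfampi, \canontermlam, \canonfamatom, or \canontermatom. In the $\Pi$- and $\lambda$-cases the bound variable can be $\alpha$-renamed away from $x$, after which the premises mention $\Gamma, x:A', y:B$ with $x$ still not free in the relevant subexpression, and the induction hypothesis applied to each premise yields the strengthened derivation; no other side condition needs to be revisited since $A'$ plays no role in the rule itself. The \atomfamconst\ and \atomtermconst\ cases are immediate since lookups in $\Sigma$ are unaffected by $\Gamma$. For \atomtermvar, the variable $y$ appearing as the head must be different from $x$ (otherwise $x$ would be free in $R$), so its type assignment sits in $\Gamma$ and the generalized claim about freshness follows from the fact that $\lfctx{\Gamma, x:A'}$ assigns $y$ a type that cannot mention $x$ by the standard scoping of contexts.

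The interesting cases are \atomfamapp\ and \atomtermapp. Consider \atomtermapp: the premises give $\lfsynthtype{\Gamma, x:A'}{R}{\typedpi{y}{A_1}{A_2}}$, $\lfchecktype{\Gamma, x:A'}{M}{A_1}$, and $\hsub{\{\langle y, M, \erase{A_1}\rangle\}}{A_2}{A}$. Since $x$ is not free in $R\app M$, it is not free in $R$, so by the generalized induction hypothesis we obtain $\lfsynthtype{\Gamma}{R}{\typedpi{y}{A_1}{A_2}}$ with $x$ not free in $\typedpi{y}{A_1}{A_2}$; in particular $x$ is not free in $A_1$, and combined with $x$ not free in $M$ we may apply the induction hypothesis to the second premise. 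The hereditary substitution judgement is a statement about expressions independent of the typing context, so it transfers unchanged. Finally, to maintain the generalized invariant we must show $x$ is not free in $A$; this follows because $A_2$ and $M$ are both free of $x$, and by inspection of Figures~\ref{fig:hsub} and~\ref{fig:hsubtypes} a variable not free in either the target or the range of a hereditary substitution cannot be introduced by the substitution.

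The main obstacle is exactly this last point: the correct generalization of the statement to synthesis judgements, together with the observation that hereditary substitution preserves non-occurrence of variables. Once that invariant is threaded through the induction, every case reduces to a mechanical appeal to the induction hypothesis and a check that $\Sigma$-lookups and hereditary substitution side conditions carry over verbatim from the $\Gamma, x:A'$ derivation to the $\Gamma$ derivation.
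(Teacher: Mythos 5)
Your proof is correct. The paper does not actually supply an argument for this theorem---it merely remarks that the first three structural theorems ``have easy proofs''---so there is nothing to compare against in detail; your development (induction on the derivation, generalized to the synthesis judgements $\lfsynthtype{\Gamma}{R}{A}$ and $\lfsynthkind{\Gamma}{P}{K}$ with the added invariant that $x$ is not free in the synthesized classifier, plus the observation that hereditary substitution cannot introduce fresh variable occurrences) is exactly the standard argument one would expect the authors to have in mind, and it correctly identifies the one place where care is needed, namely the \atomtermapp\ and \atomfamapp\ cases.
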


\begin{theorem}\label{th:exchange}
If $x$ does not appear in $A_2$ then
$\Gamma_1,y:A_2,x:A_1,\Gamma_3$ is a well-formed context with respect
to a signature $\Sigma$ whenever $\Gamma_1,x:A_1,y:A_2,\Gamma_3$ is. 
Further, if there is a derivation $\mathcal{D}$ for
$\lfkind{\Gamma, x:A_1, y:A_2,\Gamma_2}{K}$,
$\lftype{\Gamma, x:A_1, y:A_2,\Gamma_2}{A}$ or $\lfchecktype{\Gamma,
  x:A_1, y:A_2,\Gamma_2}{M}{A}$, then there must be a derivation that
has the same structure as $\mathcal{D}$ for
$\lfkind{\Gamma, y:A_2, x:A_1,\Gamma_2}{K}$,
$\lftype{\Gamma, y:A_2, x:A_1,\Gamma_2}{A}$ or $\lfchecktype{\Gamma,
  y:A_2, x:A_1,\Gamma_2}{M}{A}$, respectively.
\end{theorem}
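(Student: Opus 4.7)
The plan is to establish the two halves of the theorem in sequence. I would first prove the derivation-exchange claim (for kinds, canonical types, and canonical terms) by induction on the structure of $\mathcal{D}$, and then use it together with Theorems~\ref{th:weakening} and~\ref{th:strengthening} to obtain the context-well-formedness claim.

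For the derivation-exchange claim, I would proceed by case analysis on the last rule of $\mathcal{D}$. The rules in Figures~\ref{fig:lf-judgements-a} and~\ref{fig:lf-judgements-b} treat the ambient context polymorphically, so most cases are immediate from the induction hypothesis. Rules that extend the context with a new binding---namely \canonkindpi, \canonfampi, and \canontermlam---append that binding to the tail of $\Gamma_2$, so the induction hypothesis applies to the premise over the enlarged tail $\Gamma_2, z:A'$ and yields the desired exchanged derivation. The freshness side conditions on bound variables are preserved because permuting two bindings in the prefix does not change the set of variables declared in the context. Rules that inspect the context for a declaration, namely \atomtermvar, \atomfamconst, and \atomtermconst, go through because membership in a context is insensitive to the order in which bindings appear. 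Rules that invoke hereditary substitution, such as \atomfamapp\ and \atomtermapp, are unaffected because the substitution judgement operates only on the substituted term and the expression being substituted into, neither of which references the ambient context.

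For the context-well-formedness claim, I would induct on the length of $\Gamma_3$. In the base case, $\Gamma_3$ is empty, and the derivation of $\lfctx{\Gamma_1, x:A_1, y:A_2}$ supplies derivations of $\lftype{\Gamma_1}{A_1}$ and $\lftype{\Gamma_1, x:A_1}{A_2}$. The hypothesis that $x$ does not appear in $A_2$ allows Theorem~\ref{th:strengthening} to yield $\lftype{\Gamma_1}{A_2}$, so \ctxterm\ gives $\lfctx{\Gamma_1, y:A_2}$. Since $y$ is fresh to $\Gamma_1, x:A_1$, it is in particular fresh to $A_1$, and Theorem~\ref{th:weakening} then delivers $\lftype{\Gamma_1, y:A_2}{A_1}$; a second application of \ctxterm\ completes the base case. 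In the inductive step, where $\Gamma_3 = \Gamma_3', z:A_3$, the induction hypothesis yields $\lfctx{\Gamma_1, y:A_2, x:A_1, \Gamma_3'}$, and the already-proved derivation-exchange claim converts $\lftype{\Gamma_1, x:A_1, y:A_2, \Gamma_3'}{A_3}$ into $\lftype{\Gamma_1, y:A_2, x:A_1, \Gamma_3'}{A_3}$, so one more application of \ctxterm\ finishes the step.

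I do not anticipate any serious obstacle. The only subtlety is confirming that the freshness side conditions and the substitution side conditions in the more complex rules are insensitive to permutations within the prefix of the context; both depend only on the set of declared variables and on the terms appearing in the substitution, neither of which is affected by swapping $x:A_1$ and $y:A_2$.
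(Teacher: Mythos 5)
Your proof is correct; the paper omits the argument entirely, remarking only that this theorem (like weakening and strengthening) ``has an easy proof,'' and your structural induction on $\mathcal{D}$ for the exchange of derivations followed by an induction on the length of $\Gamma_3$ (using Theorems~\ref{th:weakening} and~\ref{th:strengthening}) is precisely the standard argument the authors have in mind. The only trivial slip is that \atomfamconst\ and \atomtermconst\ look up the signature $\Sigma$ rather than the context, which makes those cases even more immediate than you state.
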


\begin{theorem}\label{th:transitivity}
Assume that $\lfctx{\Gamma_1,x_0:A_0,\Gamma_2}$ and
$\lfchecktype{\Gamma_1}{M_0}{A_0}$ have derivations, and let $\theta$ be
the substitution $\{ \langle x_0, M_0,\erase{A_0} \rangle \}$. 
Then there is a $\Gamma'_2$ such that
$\hsub{\theta}{\Gamma_2}{\Gamma'_2}$ and $\lfctx{\Gamma_1,\Gamma'_2}$
have derivations. 
Further,
\begin{enumerate}
  \item if $\lfkind{\Gamma_1,x_0:A_0,\Gamma_2}{K}$ has a derivation,
    then there is a $K'$ such that $\hsub{\theta}{K}{K'}$ and
    $\lfkind{\Gamma_1,\Gamma'_2}{K'}$ have derivations;

  \item if $\lftype{\Gamma_1,x_0:A_0,\Gamma_2}{A}$ has a derivation,
    then there is an $A'$ such that $\hsub{\theta}{A}{A'}$ and
    $\lftype{\Gamma_1,\Gamma'_2}{A'}$ have derivations; and

  \item if $\lfchecktype{\Gamma_1,x_0:A_0,\Gamma_2}{M}{A}$ has a
    derivation (for some well-formed type $A$), there is an $A'$ and
    an $M'$ such that $\hsub{\theta}{A}{A'}$, $\hsub{\theta}{M}{M'}$,
    and $\lfchecktype{\Gamma_1,\Gamma_2'}{M'}{A'}$ have derivations.
\end{enumerate}
\end{theorem}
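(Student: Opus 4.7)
The plan is to prove the theorem by structural induction on the LF formation derivations, carried out simultaneously for the kind, type, and term clauses since the three judgment families in Figure~\ref{fig:lf-judgements-b} are mutually defined. The context part then follows by a straightforward subsidiary induction on the length of $\Gamma_2$ once the other clauses are in hand, using Theorem~\ref{th:weakening} to reintroduce bindings from $\Gamma_1$ as needed.

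Before beginning the main induction, I would first invoke Theorem~\ref{th:arityapprox} on the assumption $\lfchecktype{\Gamma_1}{M_0}{A_0}$ to conclude that $M_0$ has arity type $\erase{A_0}$ in the arity context induced by $\Sigma$ and $\Gamma_1$. This makes $\theta$ arity type preserving, and consequently every expression occurring in a judgment derivable from $\Gamma_1, x_0{:}A_0, \Gamma_2$ respects the relevant arity context, so by Theorem~\ref{th:aritysubs} the results $\hsubst{\theta}{\Gamma_2}$, $\hsubst{\theta}{K}$, $\hsubst{\theta}{A}$, and $\hsubst{\theta}{M}$ all exist. The substantive task is then to show that these substituted expressions satisfy the appropriate well-formedness judgments in the reduced context $\Gamma_1, \Gamma'_2$.

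In the main induction, most cases are routine: inspect the last rule and appeal to the induction hypothesis on its premises. The rules $\atomfamapp$ and $\atomtermapp$ introduce a hereditary substitution of a single variable into a kind or type, and here I would use the permutation Theorem~\ref{th:subspermute} to commute this inner substitution with $\theta$ so that the hypothesis can be applied and the reconstructed judgment has the expected post-substitution type. The genuinely interesting case occurs in $\lfsynthtype{\Gamma_1, x_0{:}A_0, \Gamma_2}{R}{A}$ when $R$ has head $x_0$: applying $\theta$ no longer yields an atomic term, but rather reduces through the second family of rules for $\hsubr{\theta}{R}{M':\alpha}$ to a canonical term, and the mode of the judgment must switch from synthesis to checking. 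To accommodate this I would prove an auxiliary statement saying that whenever $\lfsynthtype{\Gamma_1, x_0{:}A_0, \Gamma_2}{R}{A}$ is derivable and $\hsubr{\theta}{R}{M':\alpha}$ has a derivation, necessarily $\alpha = \erase{A}$ and $\lfchecktype{\Gamma_1, \Gamma'_2}{M'}{\hsubst{\theta}{A}}$ holds; the complementary statement for the case $\hsubr{\theta}{R}{R'}$ delivers an atomic synthesis judgment in the reduced context.

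The main obstacle will be this head-substitution case. The difficulty is twofold: one must show that the spine of partial applications built up by successive uses of $\atomtermapp$ is absorbed exactly by the $\lambda$-prefix of $M_0$, producing well-typed intermediate $\beta$-redexes whose reductions are precisely those orchestrated by the second hereditary-substitution judgment form; and one must track types so that the final canonical result is checked against $\hsubst{\theta}{A}$. The bookkeeping is handled by an induction on the structure of the atomic term $R$, using Theorem~\ref{th:subspermute} to commute the per-application substitution introduced by $\atomtermapp$ with $\theta$, and exploiting the fact that well-typed canonical terms are $\eta$-long, so the arity structure of $M_0$ is guaranteed to expose enough $\lambda$-bindings to consume every argument in the spine of $R$.
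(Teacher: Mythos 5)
The paper itself does not spell out a proof of this theorem; it defers entirely to \cite{harper07jfp}. Your overall architecture is nevertheless the right one and matches the standard argument: use Theorem~\ref{th:arityapprox} to see that $\theta$ is arity type preserving so that Theorem~\ref{th:aritysubs} guarantees all the substitution results exist, run a mutual induction over the formation judgements, derive the context clause from the type clause, and isolate the head-variable case of $\atomtermapp$, where the judgement switches from synthesis to checking because $\hsubr{\theta}{R}{M':\alpha}$ produces a canonical rather than an atomic term.

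There is, however, a genuine gap in how you propose to discharge that head case, and it is precisely the step you flag as the main obstacle. When $R = x_0\app M_1\app\cdots\app M_n$, the rule deriving $\hsubr{\theta}{(R'\app M)}{M''':\alpha''}$ has as its third premise $\hsub{\{\langle x, M'',\alpha'\rangle\}}{M'}{M'''}$: a \emph{fresh} substitution of $M''$ for the bound variable $x$ in the body $M'$ of the abstraction exposed in $M_0$. To conclude that $M'''$ is well typed in $\Gamma_1,\Gamma_2'$ you must apply the substitution theorem itself to this new substitution, whose target $M'$ is a subterm of $M_0$, not of $R$, and whose derivation is not a subderivation of the one you started from. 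Neither structural induction on the formation derivation nor your proposed inner induction on the structure of $R$ licenses that appeal; Theorems~\ref{th:subspermute} and the $\eta$-long discipline reorganize the substitutions but do not supply the typing of the reduct. The missing ingredient is the lexicographic measure used throughout this development (and in \cite{harper07jfp}): induct first on $\size{\theta}$, i.e.\ on the size of the arity type $\erase{A_0}$, and only then on the derivation. The inner substitution is indexed by $\alpha'$, a strict subexpression of $\erase{A_0}$, so the outer component of the measure decreases and the recursive appeal is justified. Without stating this outer induction your argument for the head case does not go through.
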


The reasoning system will need to build in a means for analyzing
typing derivations of the form $\lfchecktype{\Gamma}{M}{A}$.
This analysis will be driven by the structure of the type $A$.
The decomposition when $A$ is of the form $\typedpi{x_1}{A_1}{A_2}$
has an obvious form.
The development below, culminating in Theorem~\ref{th:atomictype},
provides the basis for the analysis when $A$ is an atomic type. 

\begin{lemma}\label{lem:arityrespecting}
Let $\Gamma$ be a context such that $\lfctx{\Gamma}$ has a derivation
and let $\STLCGamma$ be the arity context induced by $\Sigma$ and
$\Gamma$. 
Suppose that $\typedpi{y_1}{A_1}{\ldots\typedpi{y_n}{A_n}{A}}$ is a
type associated with a (term) constant or variable by $\Sigma$ or
$\Gamma$, or that $\typedpi{y_1}{A_1}{\ldots\typedpi{y_n}{A_n}{K}}$ is
a kind associated with a (type) constant by $\Sigma$, where the $y_i$s
are distinct variables.  
Then, for $1 \leq i \leq n$, $A_i$ and
$\typedpi{y_{i}}{A_{i}}{\ldots\typedpi{y_n}{A_n}{A}}$ or,
respectively, $\typedpi{y_{i}}{A_{i}}{\ldots\typedpi{y_n}{A_n}{K}}$  
respect the arity context $\aritysum{\{y_1 : \erase{A_1}, \ldots, y_{i-1} :
  \erase{A_{i-1}}\}}{\STLCGamma}$.
Further, $A$ or, respectively, $K$ respects the arity context 
$\aritysum{\{y_1 : \erase{A_1}, \ldots, y_{n} : \erase{A_{n}}\}}{\STLCGamma}$.
\end{lemma}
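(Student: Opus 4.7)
The plan is to extract a well-formedness derivation for the given type or kind from the assumptions on $\Sigma$ and $\Gamma$, then to apply repeated inversion of the formation rules in Figure~\ref{fig:lf-judgements-b} to obtain well-formedness judgements for each of its immediate constituents, and finally to invoke Theorem~\ref{th:arityapprox} to translate these into the stated respecting claims.

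First, I would identify a context $\Gamma_0$ that is a prefix of $\Gamma$---with $\Gamma_0 = \emptyctx$ in the two signature cases---such that the formation rule $\sigterm$, $\ctxterm$, or $\sigfam$ responsible for introducing the relevant declaration forces a derivation of $\lftype{\Gamma_0}{\typedpi{y_1}{A_1}{\ldots\typedpi{y_n}{A_n}{A}}}$, or, in the kind case, of $\lfkind{\Gamma_0}{\typedpi{y_1}{A_1}{\ldots\typedpi{y_n}{A_n}{K}}}$. Writing $\STLCGamma_0$ for the arity context induced by $\Sigma$ and $\Gamma_0$, observe that $\STLCGamma_0$ is contained in $\STLCGamma$.

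Next, peel off the outer $\Pi$-abstractions by repeated inversion of $\canonfampi$, with a final inversion of $\canonkindpi$ in the kind case. This yields, for each $1 \leq i \leq n$, derivations of $\lftype{\Gamma_0, y_1:A_1, \ldots, y_{i-1}:A_{i-1}}{A_i}$ and of the tail $\lftype{\Gamma_0, y_1:A_1, \ldots, y_{i-1}:A_{i-1}}{\typedpi{y_i}{A_i}{\ldots\typedpi{y_n}{A_n}{A}}}$, together with a derivation of $\lftype{\Gamma_0, y_1:A_1, \ldots, y_n:A_n}{A}$ or $\lfkind{\Gamma_0, y_1:A_1, \ldots, y_n:A_n}{K}$. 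Applying Theorem~\ref{th:arityapprox} to each such judgement then tells us that the corresponding type or kind respects the arity context induced by $\Sigma$ together with the corresponding LF context, which is exactly $\aritysum{\{y_1 : \erase{A_1}, \ldots, y_{j} : \erase{A_{j}}\}}{\STLCGamma_0}$ for the relevant value of $j$.

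The main step that remains is to replace $\STLCGamma_0$ by $\STLCGamma$ in these statements, and this is where I expect the only real work to lie. The idea is a small monotonicity observation: an arity typing derivation remains valid when its arity context is enlarged with assignments to previously unbound variables, and hence the respecting judgement is preserved under such extensions. The freshness of each $y_i$ with respect to $\STLCGamma$ follows from the well-formedness of $\Gamma_0, y_1:A_1, \ldots, y_n:A_n$ obtained above, while the freshness side conditions of $\sigterm$, $\sigfam$, and $\ctxterm$ rule out clashes between the additional assignments in $\STLCGamma$ beyond $\STLCGamma_0$ and the $y_i$. Once this monotonicity is in hand, extending each respecting statement from $\STLCGamma_0$ to $\STLCGamma$ is immediate, and the conclusion of the lemma follows.
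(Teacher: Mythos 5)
Your proof is correct, but it does more work than the paper's, and the difference is instructive. The paper's argument is essentially two lines: from the well-formedness of $\Sigma$ and $\Gamma$ one gets a derivation of $\lftype{\Gamma}{\typedpi{y_1}{A_1}{\ldots\typedpi{y_n}{A_n}{A}}}$ (or $\lfkind{\emptyctx}{\cdot}$ in the kind case), a single application of Theorem~\ref{th:arityapprox} then shows the \emph{whole} type or kind respects $\STLCGamma$, and all the per-component claims fall out by unfolding Definition~\ref{def:aritytyping} --- the clause for $\typedpi{x}{A}{E'}$ in the definition of ``respects'' already peels off one binder and extends the arity context by exactly $\{x:\erase{A}\}$ at each step, so the nested statements of the lemma are literally what the definition produces. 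You instead perform the peeling at the level of LF derivations, inverting $\canonfampi$/$\canonkindpi$ to get a family of formation judgements over extended LF contexts, apply Theorem~\ref{th:arityapprox} separately to each, and then patch up the arity contexts with a monotonicity/weakening observation. That monotonicity step is genuine (and the paper also relies on it silently, e.g.\ in the kind case where the formation judgement lives in $\emptyctx$ but the conclusion is stated over $\STLCGamma$), and your handling of the freshness of the $y_i$ is fine; but the detour through LF-level inversion is avoidable, since the recursive structure of ``respects'' was designed to do that decomposition for you on the arity side. Both routes are sound; the paper's buys brevity by exploiting the definition, yours makes the extraction of the intermediate formation judgements explicit, which is perhaps more transparent but requires you to re-establish at the arity level facts the definition already packages.
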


\begin{proof}
  Since $\Sigma$ and $\Gamma$ are well-formed by assumption, depending
  on the case under consideration, either
  $\lftype{\Gamma}{\typedpi{y_1}{A_1}{\ldots\typedpi{y_n}{A_n}{A}}}$  
  or $\lfkind{\emptyctx}{\typedpi{y_1}{A_1}{\ldots\typedpi{y_n}{A_n}{K}}}$
  must have a derivation.
  The desired conclusions now follow from Theorem~\ref{th:arityapprox}
  and Definition~\ref{def:aritytyping}.
\end{proof} 

\begin{lemma}\label{lem:lfcomp}
  Let $\Gamma_1$ be a context such that $\lfctx{\Gamma_1}$ has a
  derivation, let $\STLCGamma$ be the arity context induced by $\Sigma$
  and $\Gamma_1$, and let $\theta$ be a substitution that is arity type
  preserving with respect to $\STLCGamma$. Further, let $x_0$ be a
  variable that is neither bound in $\Gamma_1$ nor a member of 
  $\domain{\theta}$,  let $A_0$ and $M_0$ be such that
  $\lftype{\Gamma_1}{A_0}$ and $\lfchecktype{\Gamma_1}{M_0}{A_0}$ are
  derivable and let $\theta' = \theta \cup \{\langle
  x_0,M_0,\erase{A_0}\rangle \}$.
  \begin{enumerate}
    \item $\theta'$ is arity type preserving with respect to
      $\STLCGamma$.

    \item Let $\Gamma_2$ be a context that respects an arity context
      $\STLCGamma'$ such that 
      $\aritysum{\context{\theta'}}{\STLCGamma}\subseteq \STLCGamma'$ and let
      $\Gamma'_2$ be a context such that
      $\hsub{\theta}{\Gamma_2}{\Gamma'_2}$, and
      $\lfctx{\Gamma_1, x_0 : A_0, \Gamma'_2}$ have derivations. Then
      there is a context $\Gamma_2''$ such that the following hold:

      \begin{enumerate}
      \item $\hsub{\{\langle x_0,M_0,\erase{A_0}\rangle \}}
                  {\Gamma'_2}
                  {\Gamma''_2}$,
            $\hsub{\theta'}{\Gamma_2}{\Gamma''_2}$ and
            $\lfctx{\Gamma, \Gamma''_2}$ have derivations;

      \item if $K$ is a kind that also respects $\STLCGamma'$ and $K'$
        is a kind such that there are derivations for $\hsub{\theta}{K}{K'}$ and
        $\lfkind{\Gamma_1,x_0:A_0,\Gamma'_2}{K'}$, 
        then there is a kind $K''$ such that
        $\hsub{\{\langle x_0,M_0,\erase{A_0}\rangle \}}{K'}{K''}$,
        $\hsub{\theta'}{K}{K''}$ and
        $\lfkind{\Gamma_1, \Gamma''_2}{K''}$ are derivable; and

      \item if $A$ is a type that also respects $\STLCGamma'$ and $A'$
        is a type such that there are derivations for
        $\hsub{\theta}{A}{A'}$ and
        $\lftype{\Gamma_1,x_0:A_0,\Gamma'_2}{A'}$,
        then there is a type $A''$ such that
        $\hsub{\{\langle x_0,M_0,\erase{A_0}\rangle \}}{A'}{A''}$,
        $\hsub{\theta'}{A}{A''}$ and $\lftype{\Gamma_1,
        \Gamma''_2}{A''}$ have derivations.
      \end{enumerate}
   \end{enumerate}
\end{lemma}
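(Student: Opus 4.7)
The plan is to exhibit $\theta'$ as the composition $\theta_2 \circ \theta_1$ where $\theta_1 = \theta$ and $\theta_2 = \{\langle x_0, M_0, \erase{A_0}\rangle\}$, and then to reduce each clause to joint applications of the substitutivity theorem for wellformedness (Theorem~\ref{th:transitivity}) and the composition theorem for substitutions (Theorem~\ref{th:composition}).

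Clause 1 is immediate: the triples already in $\theta$ retain their typing by hypothesis, and $\stlctyjudg{\STLCGamma}{M_0}{\erase{A_0}}$ follows from Theorem~\ref{th:arityapprox} applied to $\lfchecktype{\Gamma_1}{M_0}{A_0}$. For Clause 2 I would first check that $\theta_2$ and $\theta_1$ are arity-type compatible relative to $\STLCGamma$: $\theta_2$ is preserving by what was just shown, and $\theta$ remains preserving once $\STLCGamma$ is extended by $\{x_0 : \erase{A_0}\}$, since arity typing is monotone in the context. The crucial identification $\theta' = \theta_2 \circ \theta_1$ then reduces, by unwinding Definition~\ref{def:composition}, to verifying $\hsub{\theta_2}{M}{M}$ for every canonical term $M \in \range{\theta}$. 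This in turn follows from Theorem~\ref{th:vacuoussubs} once we observe that $x_0$ is not free in any such $M$: preservation of $\theta$ with respect to $\STLCGamma$ forces the free variables of $M$ to lie in $\domain{\STLCGamma}$, and $\STLCGamma$ is drawn entirely from $\Sigma$ and $\Gamma_1$, neither of which mentions $x_0$. The second clause of the composition description likewise gives the new triple $\langle x_0, M_0, \erase{A_0}\rangle$ because $x_0 \notin \domain{\theta}$ by hypothesis.

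With the identification in hand, each of (a), (b), and (c) is handled by the same two-step recipe. For (a), apply Theorem~\ref{th:transitivity} to the derivation of $\lfctx{\Gamma_1, x_0{:}A_0, \Gamma_2'}$ together with $\lfchecktype{\Gamma_1}{M_0}{A_0}$ and the singleton substitution to obtain $\Gamma_2''$ witnessing both $\hsub{\{\langle x_0, M_0, \erase{A_0}\rangle\}}{\Gamma_2'}{\Gamma_2''}$ and $\lfctx{\Gamma_1, \Gamma_2''}$; then combine this derivation with the hypothesised $\hsub{\theta}{\Gamma_2}{\Gamma_2'}$ via Theorem~\ref{th:composition} (clause 1) and the identification $\theta' = \theta_2 \circ \theta_1$ to conclude $\hsub{\theta'}{\Gamma_2}{\Gamma_2''}$. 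Clauses (b) and (c) are identical in form, using the kind-level and type-level parts of Theorem~\ref{th:transitivity} to produce $K''$ and $A''$ respectively, and then closing by Theorem~\ref{th:composition}.

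The main technical obstacle is the equality $\theta' = \theta_2 \circ \theta_1$, whose proof rests on the freshness of $x_0$ in $\range{\theta}$, a fact that is not given directly but must be extracted from arity-type preservation and the specific way $\STLCGamma$ is induced by $\Sigma$ and $\Gamma_1$. Once this bookkeeping is in place, the remainder of the argument is a routine assembly of the substitutivity and composition theorems.
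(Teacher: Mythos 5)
Your proposal is correct and follows essentially the same route as the paper's proof: establish that the singleton substitution is arity type preserving via Theorem~\ref{th:arityapprox}, identify $\theta'$ with the composition $\{\langle x_0,M_0,\erase{A_0}\rangle\} \circ \theta$, and then derive the clauses of (2) from Theorems~\ref{th:composition} and~\ref{th:transitivity}. In fact your explicit justification of the identification $\theta' = \theta_2 \circ \theta_1$ via Theorem~\ref{th:vacuoussubs} and the freshness of $x_0$ in $\range{\theta}$ spells out a step the paper leaves implicit.
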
    

\begin{proof}
Since $\lfchecktype{\Gamma_1}{M_0}{A_0}$ has a derivation, it follows
from Theorem~\ref{th:arityapprox} that the substitution
$\{\langle x_0,M_0,\erase{A_0}\rangle\}$ is type preserving with respect to 
$\STLCGamma$.
It then follows from the assumptions in the lemma that $\theta'$ is in fact
$\{\langle x_0, M_0, \erase{A_0} \rangle \} \circ \theta$ and type
preserving with respect to $\STLCGamma$. The various observations in clause 2 
now follow from Theorems~\ref{th:composition} and \ref{th:transitivity}.
\end{proof}

\begin{theorem}\label{th:atomictype}
Let $\Gamma$ be a context such that $\lfctx{\Gamma}$ has a derivation.
\begin{enumerate}
\item $\lfsynthtype{\Gamma}{R}{A'}$  has a derivation 
  if
  \begin{enumerate}
  \item $R$ is of the form $(c \app M_1 \app \ldots\app M_n)$ for some
    $c:\typedpi{y_1}{A_1}{\ldots \typedpi{y_n}{A_n}{A}} \in \Sigma$ or
    of the form $(x \app M_1 \app \ldots\app M_n)$ for some
    $x:\typedpi{y_1}{A_1}{\ldots \typedpi{y_n}{A_n}{A}} \in \Gamma$,
    
  \item there is a sequence of types $A'_1,\ldots,A'_n$ such that, for
    $1\leq i\leq n$, there are derivations for both
    $\hsub{\{\langle y_1, M_1,\erase{A_1}\rangle, \ldots,
             \langle y_{i-1}, M_{i-1}, \erase{A_{i-1}} \rangle\}}
          {A_i}
          {A'_i}$
    and $\lfchecktype{\Gamma}{M_i}{A'_i}$, and
         
  \item $\hsub{\{\langle y_1, M_1,\erase{A_1}\rangle, \ldots,
                 \langle y_n, M_n, \erase{A_n} \rangle\}}
              {A}
              {A'}$
    and $\lftype{\Gamma}{A'}$ have derivations.
  \end{enumerate}
  
\item $\lfsynthtype{\Gamma}{R}{A'}$  has a derivation of height $h$
  only if
  \begin{enumerate}
  \item $R$ is of the form $(c \app M_1 \app \ldots\app M_n)$
    for some $c:\typedpi{y_1}{A_1}{\ldots \typedpi{y_n}{A_n}{A}} \in \Sigma$
    or of the form $(x \app M_1 \app \ldots\app M_n)$ for some
    $x:\typedpi{y_1}{A_1}{\ldots \typedpi{y_n}{A_n}{A}} \in \Gamma$,

  \item there is a sequence of types $A'_1,\ldots,A'_n$ such that, for
    $1 \leq i \leq n$, there is a derivation for 
    $\hsub{\{\langle y_1, M_1,\erase{A_1}\rangle, \ldots,
             \langle y_{i-1}, M_{i-1}, \erase{A_{i-1}} \rangle\}}
          {A_i}
          {A'_i}$
    and a derivation of height less than $h$ for
    $\lfchecktype{\Gamma}{M_i}{A'_i}$, and  

  \item $\hsub{\{\langle y_1, M_1,\erase{A_1}\rangle, \ldots,
                 \langle y_n, M_n, \erase{A_n} \rangle\}}
              {A}
              {A'}$
    and $\lftype{\Gamma}{A'}$ have derivations.
  \end{enumerate}      
\end{enumerate}
\end{theorem}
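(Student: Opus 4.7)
The plan is to prove the two directions by independent inductions: direction (1) by induction on $n$, the number of arguments in $R$, and direction (2) by induction on the height $h$ of the derivation of $\lfsynthtype{\Gamma}{R}{A'}$. In both cases, the technical subtlety is reconciling the simultaneous substitution $\{\langle y_1, M_1, \erase{A_1}\rangle, \ldots, \langle y_n, M_n, \erase{A_n}\rangle\}$ that appears in the statement with the single-variable substitutions built into the \atomtermapp\ rule.

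For direction (1), let $\theta_i$ denote the initial segment of length $i$ of this substitution. I would show by induction on $i$ that $\lfsynthtype{\Gamma}{h\app M_1\app \ldots\app M_i}{B_i}$ is derivable, where $B_i = \hsubst{\theta_i}{\typedpi{y_{i+1}}{A_{i+1}}{\ldots \typedpi{y_n}{A_n}{A}}}$. Lemma~\ref{lem:arityrespecting} guarantees that $\theta_i$ acts on a well-formed type expression, so the substitutions are always defined. At $i = 0$ the claim is immediate from \atomtermconst\ or \atomtermvar, since $B_0$ is simply the original classifier of $h$. For the step from $i$ to $i+1$, I would apply \atomtermapp\ with the argument $M_{i+1}$; the premise $\lfchecktype{\Gamma}{M_{i+1}}{A_{i+1}'}$ is supplied by hypothesis (b), and the rule's erased index agrees with $\erase{A_{i+1}}$ by Theorem~\ref{th:erasure}. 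Theorem~\ref{th:composition} then shows that applying $\{\langle y_{i+1}, M_{i+1}, \erase{A_{i+1}}\rangle\}$ to $B_i$ produces $B_{i+1}$. At $i = n$ we have $B_n = A'$ by uniqueness of hereditary substitution (Theorem~\ref{th:uniqueness}), completing the derivation.

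For direction (2), I would induct on the height $h$. If the last rule is \atomtermconst\ or \atomtermvar\ then $n = 0$; the conditions reduce to $A' = A$ together with $\lftype{\Gamma}{A}$, which follows from well-formedness of $\Sigma$ and $\Gamma$. If the last rule is \atomtermapp, the derivation ends with premises $\lfsynthtype{\Gamma}{R_0}{\typedpi{y}{B}{C}}$ and $\lfchecktype{\Gamma}{M_n}{B}$ of height strictly less than $h$, together with a derivation of $\hsub{\{\langle y, M_n, \erase{B}\rangle\}}{C}{A'}$. The induction hypothesis applied to $R_0$ yields the decomposition $R_0 = h\app M_1\app\ldots\app M_{n-1}$ with the required sequence of intermediate types, each $\lfchecktype{\Gamma}{M_i}{A_i'}$ with a derivation of height smaller than that of the premise on $R_0$, hence smaller than $h$. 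Extending the simultaneous substitution with the pair $\langle y_n, M_n, \erase{A_n}\rangle$ and appealing again to Theorem~\ref{th:composition} yields the promised $A'$ from the original tail $A$; well-formedness of $A'$ follows from Theorem~\ref{th:transitivity} applied to $\lftype{\Gamma, y:B}{C}$ (which is obtained by an inversion argument on the derivation of $\lftype{\Gamma}{\typedpi{y}{B}{C}}$, itself ensured by the structural observation noted after Figure~\ref{fig:lf-judgements-b}).

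The main obstacle I anticipate is the bookkeeping required to align the sequential single-variable substitutions introduced by repeated uses of \atomtermapp\ with the simultaneous substitution that appears in the statement. Theorem~\ref{th:composition} is the principal lever for this alignment, Theorem~\ref{th:erasure} ensures the erased indices match up across rule premises and statement components, and Lemma~\ref{lem:arityrespecting} provides the arity-respecting hypotheses that make each intermediate substitution application well-defined in the first place.
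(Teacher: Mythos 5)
Your proposal is correct and follows essentially the same route as the paper's proof: clause (1) by an inner induction on the argument prefix, building the synthesis judgement one application of \atomtermapp\ at a time, and clause (2) by induction on derivation height with a case split on the last rule, using erasure invariance and arity-respecting hypotheses to keep the substitutions defined. The only difference is packaging: the paper channels the reconciliation of iterated single-variable substitutions with the simultaneous one (and the accompanying well-formedness of the intermediate types) through Lemma~\ref{lem:lfcomp}, whereas you appeal directly to Theorems~\ref{th:composition} and~\ref{th:transitivity}, which is what that lemma encapsulates.
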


\begin{proof} At the outset, we should check the coherence of clauses
  1(b) and 2(b) 
  in the theorem statement by verifying that, for $1 \leq i \leq n$, it
  is the case that $\lftype{\Gamma}{A'_i}$ has a derivation.
  Towards this end, we first note that there must be a derivation for
  the type formation judgment
  $\lftype{\Gamma, y_1 : A_1, \ldots, y_{i-1} : A_{i-1}}{A_i}$ since
  $\Sigma$ and $\Gamma$ are well-formed. 
  The desired conclusion then follows from using
  Lemma~\ref{lem:lfcomp} repeatedly and observing, via
  Theorem~\ref{th:erasure}, that erasure is preserved under
  substitution.

\smallskip
\noindent We now introduce some notation that will be useful in the
arguments that follow.
We will use $\STLCGamma$ to denote the arity context induced by
$\Sigma$ and $\Gamma$.
Further, for $1 \leq i \leq n+1$, we will write $\theta_i$ for the  
substitution $\{ \langle y_1,M_1,\erase{A_1} \rangle, \ldots, \langle
y_{i-1}, M_{i-1}, \erase{A_{i-1}}\rangle \}$.
An observation that we will make use of below is that if for $1 \leq j
< i$ it is the case that $\lfchecktype{\Gamma}{M_i}{A'_i}$ has a
derivation, then $\theta_i$ is type preserving with respect to
$\STLCGamma$.
This is an easy consequence of Theorems~\ref{th:arityapprox} and
\ref{th:erasure}.

\smallskip
\noindent {\it Proof of (1).}
  We will consider explicitly only the case where $R$ is $(c\app M_1 \app
  \ldots \app M_n)$; the argument for the case when $R$ is $(x\app M_1
  \app \app \ldots \app M_n)$ is similar.
  For $1 \leq i \leq n+1$ we will show that, under the conditions
  assumed for $M_1,\ldots, M_{i-1}$, there is a type $A''_{i}$ such
  that $\hsub{\theta_{i}}
             {(\typedpi{y_{i}}{A_{i}}{\ldots\typedpi{x_n}{A_n}{A}})}
             {A''_{i}}$,
  $\lftype{\Gamma}{A''_{i}}$ and
  $\lfsynthtype{\Gamma}{(c\app M_1 \app \ldots \app M_{i-1})}{A''_{i}}$ have
  derivations. 
  The desired conclusion follows from noting that $A'$ must be
  $A''_{n+1}$ because the result of substitution application is
  unique.  

  The claim is proved by induction on $i$.
  Consider first the case when $i$ is $1$.
  Since $\theta_1 = \emptyset$, $A''_{1}$ is
  $\typedpi{y_1}{A_1}{\ldots \typedpi{y_n}{A_n}{A}}$. 
  The wellformedness of $\Sigma$ ensures that $\lftype{\Gamma}{A''_1}$
  has a derivation and we get a derivation for
  $\lfsynthtype{\Gamma}{c}{A'}$ by using an \atomtermconst\ rule.

  Let us then assume the claim for $i$ and show that it
  must also hold for $i+1$.
  By the hypothesis, there is an $A''_i$ of the form 
  $\typedpi{y_i}{A'_i}{A''}$ where $A''$ is a type such that
  $\hsub{\theta_i}{(\typedpi{y_{i+1}}{A_{i+1}}{\ldots\typedpi{x_n}{A_n}{A}})}{A''}$
  has a derivation.
  Since $\lftype{\Gamma}{A''_i}$ has a derivation, so must
  $\lftype{\Gamma, y_i : A'_i}{A''}$.
  By Lemma~\ref{lem:arityrespecting},
  $\typedpi{y_{i+1}}{A_{i+1}}{\ldots\typedpi{x_n}{A_n}{A}}$
  respects the arity context
  $\aritysum{\{y_1 : \erase{A_1},\ldots, y_i : \erase{A_i}\}}{\STLCGamma}$.
  Since there are derivations for $\lfchecktype{\Gamma}{M_j}{A'_j}$ for $1 \leq
  j < i$, $\theta_i$ is type preserving over $\STLCGamma$.
  We now invoke Lemma~\ref{lem:lfcomp} to conclude that there is a
  term $A'''$ such that there are derivations for 
  $\hsub{\{ \langle y_i, M_i,\erase{A'_i} \rangle \}}{A''}{A'''}$,
  $\hsub{\theta_{i+1}}{\typedpi{y_{i+1}}{A_{i+1}}{\ldots\typedpi{x_n}{A_n}{A}}}{A'''}$,
  and $\lftype{\Gamma}{A'''}$.
  By the hypothesis, there is a derivation for
  $\lfsynthtype{\Gamma}{c\app M_1 \app \ldots \app
    M_{i-1}}{\typedpi{y_i}{A'_i}{A''}}$. 
  Using an \atomtermapp\ rule together with this derivation and the
  ones for  
  $\lfchecktype{\Gamma}{M_i}{A'_i}$, and 
  $\hsub{\{ \langle y_i, M_i,\erase{A'_i} \rangle \}}{A''}{A'''}$, we
  get a derivation for $\lfsynthtype{\Gamma}{(c\app M_1 \app \ldots \app
    M_i)}{A'''}$.
  Letting $A''_{i+1}$ be $A'''$ we see that all the requirements are satisfied.

\smallskip
\noindent {\it Proof of (2).} We prove the claim by induction on
the height of the derivation of the type synthesis judgment 
$\lfsynthtype{\Gamma}{R}{A'}$. We 
consider the cases for the last rule used in the derivation.
If this rule is \atomtermvar\ or \atomtermconst, the argument is
straightforward. 
The only case to be considered further, then, is that when the rule is
\atomtermapp.

In this case, we know that $R$ must be of the form $(R'\app M')$
where there is a shorter derivation for $\lfsynthtype{\Gamma}{R'}{B'}$
for some type $B'$.
From the induction hypothesis, it follows that $R'$ has the form
$(c\app M_1\app \ldots\app M_n)$ or $(x\app M_1\app \ldots\app M_n)$ 
for some $c : \typedpi{y_1}{A_1}{\ldots \typedpi{y_n}{A_n}{B}} \in
  \Sigma$ or $x : \typedpi{y_1}{A_1}{\ldots \typedpi{y_n}{A_n}{B}} \in
  \Gamma$ and that there must be a sequence of types
  $A'_1,\ldots,A'_n$ that, together with the terms $M_1,\ldots,M_n$
  satisfy the requirements stated in clause 2(b).
Moreover, $B'$ must be such that $\hsub{\theta_{n+1}}{B}{B'}$ and
$\lftype{\Gamma}{B'}$ have derivations.
Since the rule is an \atomtermapp, $B'$ must have the structure of an
abstracted type.
From this it follows that $B$ must be of the form
$\typedpi{y_{n+1}}{A_{n+1}}{A}$ and, correspondingly, $B'$ must be of
the form $\typedpi{y_{n+1}}{A'_{n+1}}{A''}$ where
$\hsub{\theta_{n+1}}{A_{n+1}}{A'_{n+1}}$ and $\hsub{\theta_{n+1}}{A}{A''}$ have
derivations.
Noting that the type of $c$ or $x$ is really of the form
$\typedpi{y_1}{A_1}{\ldots \typedpi{y_{n+1}}{A_{n+1}}{A}}$ it follows
  from Lemma~\ref{lem:arityrespecting} that $A$
respects the arity context $\aritysum{\{y_1 : \erase{A_1}, \ldots,
  y_{n+1} : \erase{A_{n+1}}\}}{\STLCGamma}$.
Also, since $\lftype{\Gamma}{B'}$ has a derivation, it must be the
case that $\lftype{\Gamma, y_{n+1} : A'_{n+1}}{A''}$ has one.
Since the derivation concludes with a \atomtermapp\ rule, it must be
the case that $\lfchecktype{\Gamma}{M'}{A'_{n+1}}$ and $\hsub{\{
  \langle y_{n+1}, M', \erase{A'_{n+1}} \rangle \}}{A''}{A'}$ have
shorter derivations than the one for $\lfsynthtype{\Gamma}{R}{A'}$.
Since $\theta_{n+1}$ is type preserving with respect to $\STLCGamma$, we
may now use Lemma~\ref{lem:lfcomp} and Theorem~\ref{th:erasure} to
conclude that $\hsub{\theta_{n+1} \cup \{\langle y_{n+1}, M_2, \erase{A_{n+1}} 
  \rangle \}}{A}{A'}$ and $\lftype{\Gamma}{A'}$ have derivations. 
Renaming $M'$ to $M_{n+1}$ we see that all the requirements of clause
2 are satisfied.
\end{proof}

Theorem~\ref{th:atomictype} gives us an alternative means for deriving
judgements of the analysis form $\lfchecktype{\Gamma}{R}{P}$, in the process
dispensing with judgements of the synthesis form $\lfsynthtype{\Gamma}{R}{A}$.
Note also that in \emph{analyzing} judgements of the form
$\lfchecktype{\Gamma}{R}{P}$, it is necessary to consider only
\emph{shorter} derivations for subterms of $R$.
This observation will be used in developing a means for arguing
inductively on the heights of LF derivations. 

A property similar to that in Theorem~\ref{th:atomictype} can be
observed for wellformedness judgements for atomic types.
Theorem~\ref{th:atomickind} presents a version that suffices for this
thesis. 
A proof of this theorem can be constructed based essentially on the
one for Theorem~\ref{th:atomictype}.
  
\begin{theorem}\label{th:atomickind}
Let $\Gamma$ be a context such that $\lfctx{\Gamma}$ is derivable.
Then $\lfsynthkind{\Gamma}{P}{K'}$ has a derivation if and only if
there is an
$a : \typedpi{y_1}{A_1}{\ldots \typedpi{y_n}{A_n}{K}} \in \Sigma$
such that 
\begin{enumerate}
\item $P$ is of the form $(a \app M_1 \app \ldots\app M_n)$;
  
\item there is a sequence of types $A'_1,\ldots,A'_n$ such that, for $1
  \leq i \leq n$, there are derivations for
  $\hsub{\{\langle y_1,  M_1,\erase{A_1}\rangle, \ldots,
           \langle y_{i-1}, M_{i-1},  \erase{A_{i-1}} \rangle\}}
        {A_i}
        {A'_i}$ and
  $\lfchecktype{\Gamma}{M_i}{A'_i}$; and 

\item $\hsub{\{\langle y_1, M_1,\erase{A_1}\rangle, \ldots, \langle
  y_n, M_n, \erase{A_n} \rangle\}}{K}{K'}$ and $\lfkind{\Gamma}{K'}$
  have derivations. 
\end{enumerate}
\end{theorem}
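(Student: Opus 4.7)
The plan is to mirror the strategy used for Theorem~\ref{th:atomictype}, treating the two directions of the biconditional separately. Because kinds play the role that atomic types did there, and because the only rules that can conclude a judgement of the form $\lfsynthkind{\Gamma}{P}{K'}$ are \atomfamconst\ and \atomfamapp, the case analysis is if anything simpler: there are no ``variable'' heads to consider, since $P$ must ultimately be headed by a type-level constant $a$ drawn from $\Sigma$.

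Before splitting into directions, I would verify the coherence of clause (2), namely that each $A'_i$ is well-formed in $\Gamma$. Since $\Sigma$ is well-formed by assumption, we have a derivation of $\lftype{\Gamma, y_1:A_1, \ldots, y_{i-1}:A_{i-1}}{A_i}$; repeated application of Lemma~\ref{lem:lfcomp} together with Theorem~\ref{th:erasure} then yields $\lftype{\Gamma}{A'_i}$. With this in hand, I would introduce the abbreviation $\theta_i = \{\langle y_1, M_1, \erase{A_1}\rangle, \ldots, \langle y_{i-1}, M_{i-1}, \erase{A_{i-1}}\rangle\}$ and record that each $\theta_i$ is arity type preserving with respect to the arity context $\STLCGamma$ induced by $\Sigma$ and $\Gamma$, by Theorem~\ref{th:arityapprox}.

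For the backward direction, I would induct on $i$ to show that, for $1 \leq i \leq n+1$, there is a kind $K''_i$ such that $\hsub{\theta_i}{\typedpi{y_i}{A_i}{\ldots\typedpi{y_n}{A_n}{K}}}{K''_i}$ and $\lfsynthkind{\Gamma}{a\app M_1\app\ldots\app M_{i-1}}{K''_i}$ both have derivations, and that $\lfkind{\Gamma}{K''_i}$ is derivable. The base case uses \atomfamconst. The step case builds $K''_{i+1}$ by applying Lemma~\ref{lem:arityrespecting} to show that $\typedpi{y_{i+1}}{A_{i+1}}{\ldots\typedpi{y_n}{A_n}{K}}$ respects the appropriate arity context, invoking Lemma~\ref{lem:lfcomp} to construct the required kind and its well-formedness derivation, and then using \atomfamapp\ together with the hypothesized derivation of $\lfchecktype{\Gamma}{M_i}{A'_i}$ to extend the synthesis derivation. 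Taking $i = n+1$ gives $K''_{n+1} = K'$ by uniqueness of substitution application (Theorem~\ref{th:uniqueness}).

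For the forward direction, I would induct on the height of the derivation of $\lfsynthkind{\Gamma}{P}{K'}$, with cases on the last rule. If it is \atomfamconst, then $P$ is a bare $a$ and the conclusion is immediate with $n = 0$. If it is \atomfamapp, we have $P = P_1\app M$ with a shorter derivation of $\lfsynthkind{\Gamma}{P_1}{\typedpi{y}{A}{K_1}}$ and a derivation of $\lfchecktype{\Gamma}{M}{A}$; applying the induction hypothesis to the subderivation gives the structure $P_1 = a\app M_1\app\ldots\app M_n$ along with types $A'_1,\ldots,A'_n$ satisfying the requirements, together with $\hsub{\theta_{n+1}}{\typedpi{y}{A}{K_1}}{\typedpi{y}{A'}{K'_1}}$ for suitable $A'$ and $K'_1$ with $\lfkind{\Gamma, y:A'}{K'_1}$ derivable. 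I would then rename $M$ to $M_{n+1}$ and $A$ to $A_{n+1}$, appeal to Lemma~\ref{lem:arityrespecting} to know the tail of the kind respects the right arity context, and use Lemma~\ref{lem:lfcomp} together with Theorem~\ref{th:erasure} to combine the two substitutions and produce the required $\hsub{\theta_{n+2}}{K}{K'}$ with $\lfkind{\Gamma}{K'}$.

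The main obstacle, as in the proof of Theorem~\ref{th:atomictype}, is the bookkeeping around substitution composition in the step case: one must repeatedly relate $\theta_{i+1}$ to $\theta_i$ extended by $\{\langle y_i, M_i, \erase{A_i}\rangle\}$, verify arity-preservation conditions so that Lemma~\ref{lem:lfcomp} applies, and rely on Theorem~\ref{th:erasure} to keep the index arity types aligned as the $A_i$ are substituted into. Once that machinery is marshalled exactly as in the earlier proof, the kind-level variant follows with only notational changes.
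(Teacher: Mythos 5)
Your proposal is correct and follows exactly the route the paper intends: the paper gives no separate argument for this theorem, stating only that the proof "can be constructed based essentially on the one for Theorem~\ref{th:atomictype}," and your adaptation (coherence check via Lemma~\ref{lem:lfcomp} and Theorem~\ref{th:erasure}, induction on $i$ for the backward direction using \atomfamconst\ and \atomfamapp, induction on derivation height for the forward direction, with the simplification that only constant heads from $\Sigma$ can occur) is precisely that construction.
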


\chapter{A Logic for Expressing Properties of LF Specifications}
\label{ch:logic}

Our objective in this chapter is to describe a logic in which we can express
properties of an object system that has been specified in LF.
The discussions in Section~\ref{sec:lf-ex} suggest a 
possible structure for such a logic.
The logic would be parameterized by an LF signature that has been
determined to be well-formed at the outset.
The basic building blocks for the properties that are to be described
would be typing judgements.
More specifically, the logic would use such judgements as its atomic
formulas and would interpret them using LF derivability.
More complex formulas would then be constructed using 
logical connectives and quantifiers over LF terms.
As the example in Section~\ref{sec:lf-ex} illustrates,
it would be necessary to also permit a quantification over LF contexts.

To develop an actual logic based on these ideas, we need to describe a
more precise correspondence between LF typing judgements and atomic
formulas. 
The judgement forms that need to be considered in this context are
those for typing canonical and atomic terms, \ie, the
$\lfchecktype{\Gamma}{M}{A}$ and $\lfsynthtype{\Gamma}{R}{A}$ forms. 
The main judgement form is in fact the first one: the second form
serves mainly to explicate judgements of the first kind when the type
is atomic and, as we have noted already, Theorem~\ref{th:atomictype}
provides the basis for circumventing such an explicit treatment
through a special ``focused'' typing rule. 
In light of this, it suffices to describe an encoding of only the
first judgement form.
The judgement in the LF setting assumes the wellformedness of the
context $\Gamma$ and the type $A$.
In the logic, the context and, therefore, also the type can be
dynamically determined by instantiations for context variables.
To deal with this situation, we will build the wellformedness of
$\Gamma$ and $A$ into the interpretation of the encoding of the
judgement. 
There is, however, an aspect of the wellformedness checking that we
would like to extract into a static pre-processing phase.
The LF typing rules combine the checking of canonicity of terms with
the determination of inhabitation that relies on the semantically more
meaningful aspect of dependencies in types.
To allow the focus in the logic to be on the latter aspect, we will
build the former into a wellformedness criterion for formulas using
arity types.

Another aspect that needs further consideration is the treatment of
contexts in atomic formulas.
To support typing derivations that use the
\canontermlam\ rule, such contexts must allow for the explicit
association of types with variables.
These variables may appear free in the terms and types in the atomic
formula.
However, their interpretation in this context must be different from
the variables that are bound by quantifiers: in particular, these
variables cannot be instantiated and each of them must be treated as
being distinct within the atomic formula. 
The necessary treatment of these variables can be realized by
representing them by \emph{nominal constants} in the style of
\cite{gacek11ic,tiu06lfmtp}.
Contexts must, in addition, allow for an unspecified part whose exact
extent is to be determined by instantiation of an external context
quantifier.
To support this ability, we will allow context variables to appear in
contexts.
However, as observed in Section~\ref{sec:lf-ex}, we
would like to be able to restrict the instantiation of such variables
to blocks of declarations adhering to specified forms.
To impose such constraints, the logic will permit context variables to
be typed by \emph{context schemas} that are motivated by regular world
descriptions used in the Twelf
system~\cite{Pfenning02guide,schurmann00phd}. 

In the rest of this chapter, we describe the logic in detail, thereby
substantiating the ideas outlined above.
The first two sections present the well-formed formulas and
identify their intended meaning. 
The end result of this discussion is a means for describing properties
of a specification comprised of an LF signature and for assessing the
validity of such properties.
The third section illuminates this capability through a
collection of examples.
The last section observes a property that will
be useful in later chapters, namely, the irrelevance of the
particular names that are chosen for the variables bound by the
context in an LF judgement.
Noting that these variables are represented by nominal constants in
the logic, the statement of this property takes the form of invariance
of validity of atomic formulas under permutations of nominal
constants. 

\section{The Formulas of the Logic}
\label{sec:formulas}

\begin{figure}[tbhp]
\[
\begin{array}{r r c l}
  \mbox{\bf Terms} & M,N & ::= & R\ |\ \lflam{x}{M}\\
  \mbox{\bf Atomic Terms} & R & ::= & c\ |\ x\ |\ n\ |\ R\app M\\[5pt]
  \mbox{\bf Types} & A & ::= &
           P\ |\ \typedpi{x}{A_1}{A_2}\\
  \mbox{\bf Atomic Types} & P & ::= & a\ |\ P\app M\\[5pt]
\end{array}
\]
\caption{Terms and Types in the Logic}
\label{fig:logic-terms-and-types}
\end{figure}

We begin by considering the representation of LF terms and types in
the logic. 
Figure~\ref{fig:logic-terms-and-types} presents the syntax of the
corresponding expressions.
As with LF syntax, we use $c$ and $d$ to represent term level
constants, $a$ and $b$ to represent type level constants and $x$ and
$y$ to represent term-level variables.
We also use $n$ to represent a special category of symbols called the
nominal constants.
LF terms and types are obviously a subset of the expressions presented
here.
Going the other way, there are two main additions to the LF
counterparts in the collection of expressions described here.
First, nominal constants may be used in constructing terms.
Second, as we shall soon see, variables may be bound not only by
term and type level abstractions but also by formula level
quantifiers.

We assume as given a set $\noms$ of nominal constants, each specified
with an arity type.
Elements of $\noms$ are written in the form $n : \alpha$.
We assume that there is a countably infinite supply of nominal
constants in $\noms$ for each arity type $\alpha$.
The logic is parameterized by an LF style signature $\Sigma$ that assigns
kinds to type-level constants and types to term-level ones.
This signature is assumed to be well-formed in the sense described in
Section~\ref{sec:canon-lf}.

\begin{figure}[tbhp]

\begin{center}
\begin{tabular}{c}

\infer{\akindingp{\STLCGamma}{a}{K}}
      {a:K \in \Sigma}

\qquad\qquad

\infer{\akindingp{\STLCGamma}{P\app M}{K}}
      {\akindingp{\STLCGamma}{P}{\typedpi{x}{A}{K}} \qquad
       \stlctyjudg{\STLCGamma}{M}{\erase{A}}}

\\[10pt]

\infer{\wftype{\STLCGamma}{P}}
      {\akindingp{\STLCGamma}{P}{\type}}

\qquad\qquad

\infer{\wftype{\STLCGamma}{\typedpi{x}{A_1}{A_2}}} 
      {\wftype{\STLCGamma}{A_1} \qquad \wftype{\aritysum{\{x :
            \erase{A_1} \}}{\STLCGamma}}{A_2}}
\end{tabular}
\end{center}

\caption{Arity Kinding for Canonical Types}
\label{fig:arity-kinding}
\end{figure}

As explained earlier, expressions in the logic will be expected to
satisfy typing constraints that check for canonicity.
At the term level, these constraints will be realized through arity
typing relative to a suitable arity context.
At the type level, we must additionally ensure that (type) constants
have been supplied with an adequate number of arguments.
We make these notions precise below; we assume the obvious extension
of erasure to types in the logic here and elsewhere.
\begin{definition}
The typing relation between an arity context, a term and an arity type
that is described in Definition~\ref{def:aritytyping} is extended to
the present context by permitting terms to contain nominal constants
and by allowing arity contexts to contain assignments to such
constants.
The rules in Figure~\ref{fig:arity-kinding} define an arity kinding
property denoted by $\wftype{\STLCGamma}{A}$ for a type $A$ relative to
an arity context $\STLCGamma$.
In these rules, $\Sigma$ is the signature parameterizing the logic.
We will often need to refer to the arity context induced by $\Sigma$.
We call this the \emph{initial constant context} and we reserve the symbol
$\STLCGamma_0$ to denote it.
\end{definition}

Hereditary substitution extends naturally to the terms and types in the
logic by treating nominal constants like other constants.
The following theorem relating to such substitutions has an obvious
proof.  

\begin{theorem}\label{th:aritysubs-ty}
If $\theta$ is type preserving with respect to $\STLCGamma$ and
$\wftype{\aritysum{\context{\theta}}{\STLCGamma}}{A}$ and
$\hsub{\theta}{A}{A'}$ have derivations, then
$\wftype{\STLCGamma}{A'}$ has a derivation.
\end{theorem}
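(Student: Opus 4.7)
The plan is to prove this by induction on the derivation of $\wftype{\aritysum{\context{\theta}}{\STLCGamma}}{A}$, which corresponds essentially to induction on the structure of the type $A$. There are two cases to consider based on the final rule.

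For the atomic type case, $A$ is some $P$ and the premise is $\akindingp{\aritysum{\context{\theta}}{\STLCGamma}}{P}{\type}$. Here I would prove a strengthened auxiliary claim by induction on the structure of the atomic type: if $\akindingp{\aritysum{\context{\theta}}{\STLCGamma}}{P}{K}$ and $\hsub{\theta}{P}{P'}$ both have derivations, then $\akindingp{\STLCGamma}{P'}{K}$ has one as well. A type-level constant $a$ is unaffected by $\theta$ (since substitutions act only on term-level variables), so the base case is immediate. For the application case $P = P_1\app M$, inversion gives $\akindingp{\aritysum{\context{\theta}}{\STLCGamma}}{P_1}{\typedpi{x}{A_1}{K}}$ and $\stlctyjudg{\aritysum{\context{\theta}}{\STLCGamma}}{M}{\erase{A_1}}$, and $P' = P_1'\app M'$ with $\hsubr{\theta}{P_1}{P_1'}$ and $\hsub{\theta}{M}{M'}$. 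Applying the inductive hypothesis to $P_1$ and invoking clause 2 of Theorem~\ref{th:aritysubs} for $M$ (which also guarantees existence of $M'$ and uniqueness by Theorem~\ref{th:uniqueness}) yields the required premises for reassembling the \akindingp\ judgement.

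For the dependent product case, $A = \typedpi{x}{A_1}{A_2}$ with premises $\wftype{\aritysum{\context{\theta}}{\STLCGamma}}{A_1}$ and $\wftype{\aritysum{\{x:\erase{A_1}\}}{\aritysum{\context{\theta}}{\STLCGamma}}}{A_2}$. By inverting $\hsub{\theta}{A}{A'}$, we get $A' = \typedpi{x}{A_1'}{A_2'}$ with $\hsub{\theta}{A_1}{A_1'}$ and $\hsub{\theta}{A_2}{A_2'}$, where $x$ is not free in $\domain{\theta} \cup \range{\theta}$. The inductive hypothesis applied to $A_1$ gives $\wftype{\STLCGamma}{A_1'}$. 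For $A_2$, I would apply the inductive hypothesis with the extended arity context $\aritysum{\{x:\erase{A_1}\}}{\STLCGamma}$; note that $\theta$ remains type preserving with respect to this extension since $x\notin\domain{\theta}$, and the two ways of composing $\context{\theta}$, $\{x:\erase{A_1}\}$, and $\STLCGamma$ give the same arity context. This yields $\wftype{\aritysum{\{x:\erase{A_1}\}}{\STLCGamma}}{A_2'}$, and by Theorem~\ref{th:erasure} we have $\erase{A_1} = \erase{A_1'}$, so the context is exactly $\aritysum{\{x:\erase{A_1'}\}}{\STLCGamma}$. A final application of the $\typedpi{}{}{}$ formation rule gives $\wftype{\STLCGamma}{\typedpi{x}{A_1'}{A_2'}}$.

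The only delicate point is the atomic type case, where the simple statement of the theorem is not directly strong enough for induction since the intermediate kinds of partial applications are not of the form $\type$; strengthening to arbitrary $K$ as described above removes this obstacle. The rest of the argument is essentially a bookkeeping exercise combining Theorems~\ref{th:aritysubs}, \ref{th:uniqueness}, and \ref{th:erasure}.
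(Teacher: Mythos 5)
Your proof is correct and is exactly the argument the paper has in mind: the paper simply declares this theorem to have ``an obvious proof,'' and your structural induction, combining Theorems~\ref{th:aritysubs} and~\ref{th:erasure}, supplies the missing details. Your one substantive observation --- that the atomic case must be strengthened to arbitrary kinds $K$ because the intermediate partial applications do not have kind $\type$ --- is the right generalization and the only point where the ``obvious'' proof requires any care.
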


\begin{figure}[tbhp]
  \[\begin{array}{rrcl}
\mbox{\bf Block Declarations} & \Delta & ::= & \emptybb\ \vert\ \Delta, y : A \\
\mbox{\bf Block Schema}   & \mathcal{B} & ::= & \{x_1:\alpha_1,\ldots, x_n:\alpha_n\}\Delta\\
\mbox{\bf Context Schema} & \mathcal{C} & ::= & \emptycs\ \vert\ \mathcal{C}, \mathcal{B}
\end{array}\]
\caption{Block Schemas and Context Schemas}
\label{fig:context-schemas}
\end{figure}

The logic allows for quantifiers over contexts.
In the intended interpretation, such quantifiers are meant to be
instantiated with actual contexts that will correspond to assignments
of LF types to nominal constants.
However, it will be necessary to be able to constrain the possible
instantiations in real applications.
This ability is supported by typing
context quantifiers using \emph{context schemas} whose structure is
presented in Figure~\ref{fig:context-schemas}.
In essence, a context schema comprises a collection of \emph{block
  schemas}.
A block schema consists of a header of variables annotated with
arity types and a body of declarations associating types with
variables. 
Each variable in the header and that is assigned a type in the body of
a block schema is required to be distinct. 
A block is intended to serve as a template for generating a sequence
of bindings for nominal constants through an instantiation process
that will be made clear in the next section.
An actual context corresponding to a context schema is to be obtained
by some number of instantiations of its block schemas.
Block and context schemas are required to satisfy typing constraints
towards ensuring that the contexts generated from them will be
well-formed in the manner required by the logic.
These constraints are represented by the typing judgements
$\abstyping{\mathcal{B}}$ and $\acstyping{\mathcal{C}}$, respectively,
that are defined by the rules in Figure~\ref{fig:schematyping}.

\begin{figure}[tbhp]

\begin{center}
\begin{tabular}{c}

\infer{\wfdecls{\STLCGamma}{\emptybb}{\STLCGamma}}{}

\qquad\qquad

\infer{\wfdecls{\STLCGamma}{\Delta, y:A}{\STLCGamma' \cup \{y:\erase{A}\}}} 
      {\wfdecls{\STLCGamma}{\Delta}{\STLCGamma'} \qquad 
       y\ \mbox{\rm is not assigned by}\ \STLCGamma' \qquad
       \wftype{\STLCGamma'}{A}}

\\[10pt]

\infer{\abstyping{\{x_1:\alpha_1,\ldots, x_n:\alpha_n\}\Delta}}
      {x_1,\ldots,x_n\ \mbox{\rm are distinct variables}
       \qquad
       \wfdecls{\STLCGamma_0 \cup \{x_1 : \alpha_1, \ldots,
                                    x_n : \alpha_n\}}
               {\Delta}
               {\STLCGamma'}}

\\[10pt]

\infer{\acstyping{\emptycs}}{}

\qquad

\infer{\acstyping{\mathcal{C},\mathcal{B}}}
      {\acstyping{\mathcal{C}} \qquad \abstyping{\mathcal{B}}}

\end{tabular}
\end{center}
\caption{Wellformedness Judgements for Block and Context Schemas}
\label{fig:schematyping}
\end{figure}

\begin{figure}[tbhp]
\[\begin{array}{lrcl}
\mbox{\bf Context Expressions} & G & ::= &
    \emptyce\ |\ \Gamma\ |\ G,n:A\\
\mbox{\bf Formulas} & F & ::= & 
    \fatm{G}{\of{M}{A}}\ |\ \ftrue\ |\ \ffalse\ |\ \fimp{F_1}{F_2}\ |\ \fand{F_1}{F_2}\ |\\
& & &\for{F_1}{F_2}\ |\ \fctx{\Gamma}{\mathcal{C}}{F}\ |\ \fall{x:\alpha}{F}\ |\ \fexists{x:\alpha}{F}
\end{array}\]
\caption{The Formulas of the Logic}
\label{fig:formula-syntax}
\end{figure}

We are finally in a position to describe the formulas in the logic.
The syntax of these formulas is presented in
Figure~\ref{fig:formula-syntax}.
The symbol $\Gamma$ is used in these formulas to represent context
variables.
Atomic formulas, which represent LF typing judgements, have the form
$\fatm{G}{M:A}$.
The context in these formulas is constituted by a sequence of type
associations with nominal constants, possibly preceded by a context
variable.
Included in the collection are the logical constants $\ftrue$ and
$\ffalse$ and the familiar connectives for constructing more complex
formulas.
Universal and existential quantification over term variables is also
permitted and these are written as $\fall{x:\alpha}{F}$ and
$\fexists{x:\alpha}{F}$, respectively.
Such quantification is indexed, as might be expected, by arity types.
The collection also includes universal quantification over context
variables that is typed by context schemas, written as
$\fctx{\Gamma}{\mathcal{C}}{F}$.
We assume the usual principle of equivalence under renaming with
respect to the term and context quantifiers and apply them as needed. 

\begin{figure}[tbhp]

\begin{center}
\begin{tabular}{c}

\infer{\wfctx{\STLCGamma}{\Xi}{\emptyce}}
      {} 
\qquad

\infer{\wfctx{\STLCGamma}{\Xi}{\Gamma}}
      {\Gamma \in\Xi}

\\[10pt]

\infer{\wfctx{\STLCGamma}{\Xi}{G,n:A}}
      {\wfctx{\STLCGamma}{\Xi}{G} \qquad
       n:\erase{A}\in \STLCGamma \qquad
       \wftype{\STLCGamma}{A}}
\\[10pt]

\infer{\wfform{\STLCGamma}{\Xi}{\fatm{G}{M:A}}}
      {\wfctx{\STLCGamma}{\Xi}{G} \qquad
       \wftype{\STLCGamma}{A} \qquad
       \stlctyjudg{\STLCGamma}{M}{\erase{A}}}

\\[10pt]

\infer{\wfform{\STLCGamma}{\Xi}{\ftrue}}{} 

\qquad
      
\infer{\wfform{\STLCGamma}{\Xi}{\ffalse}}{}

\qquad
      
\infer[\bullet \in \{\supset,\land,\lor\}]
      {\wfform{\STLCGamma}{\Xi}{F_1 \bullet F_2}}
      {\wfform{\STLCGamma}{\Xi}{F_1} \qquad 
       \wfform{\STLCGamma}{\Xi}{F_2}}

\\[10pt]

\infer{\wfform{\STLCGamma}{\Xi}{\fctx{\Gamma}{\mathcal{C}}{F}}}
      {\acstyping{\mathcal{C}} \qquad
       \wfform{\STLCGamma}{\Xi \cup \{ \Gamma \}}{F}}

\qquad\qquad
      
\infer[\genericq \in \{\forall, \exists \}]
      {\wfform{\STLCGamma}{\Xi}{\fgeneric{x:\alpha}{F}}}
      {\wfform{\aritysum{\{x:\alpha\}}{\STLCGamma}}{\Xi}{F}}
 
\end{tabular}
\end{center}
 
\caption{The Wellformedness Judgement for Formulas}
\label{fig:wfform}
\end{figure}

A formula $F$ is determined to be well-formed or not relative to an arity
context $\STLCGamma$ and a collection of context variables $\Xi$.
This judgement is written concretely as $\wfform{\STLCGamma}{\Xi}{F}$
and the rules defining it are presented in Figure~\ref{fig:wfform}.
At the top-level, formulas are expected to be closed, \ie, to not have
any free term or context variables.
More specifically, we expect $\wfform{\noms \cup \STLCGamma_0}{\emptyset}{F}$ to be
derivable for such formulas. 
The analysis within the scope of term and context
quantifiers augments these sets in the expected way.
For context quantifiers, this analysis must also check that the
annotating context schema is well-formed.
An atomic formula $\fatm{G}{M:A}$ is deemed well-formed if its
components $G$, $M$ and $A$ are well-formed and if $M$ can be assigned
the erased form of $A$ as its arity type.
The context expression $G$ is well-formed if any context variable used
in it is bound in the overall formula and if the types assigned to
nominal constants in the explicit part of $G$ are well-formed and
such that their erased forms match the arity types of the nominal
constants they are assigned to.
Note that these types may use nominal constants arbitrarily; assessing
whether they are used in a manner that respects dependencies is a part
of the meaning of the atomic formula.

An obvious result about well-formed formulas is that the derivability of
the judgement is not changed by the addition of unused bindings in either
parametrizing context.
This is the content of the following theorem, which is used in later proofs 
to ensure that we can match the context under which well-formedness has been
determined with a possibly larger arity (resp. context variable) context.

\begin{theorem}\label{th:wf-form-wk}
For any $F$, $\Theta\subseteq\Theta'$, and $\Xi\subseteq\Xi'$, if
$\wfform{\Theta}{\Xi}{F}$ is derivable then $\wfform{\Theta'}{\Xi'}{F}$
is derivable.
\end{theorem}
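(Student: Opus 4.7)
The plan is to prove this by structural induction on the derivation of $\wfform{\Theta}{\Xi}{F}$ (equivalently, on the structure of $F$), following the rule schema in Figure~\ref{fig:wfform}. Before tackling the main induction, I would record three auxiliary weakening properties that follow by straightforward inductions on their own derivations: (i) if $\stlctyjudg{\Theta}{M}{\alpha}$ and $\Theta \subseteq \Theta'$, then $\stlctyjudg{\Theta'}{M}{\alpha}$; (ii) if $\wftype{\Theta}{A}$ and $\Theta \subseteq \Theta'$, then $\wftype{\Theta'}{A}$; and (iii) if $\wfctx{\Theta}{\Xi}{G}$ with $\Theta \subseteq \Theta'$ and $\Xi \subseteq \Xi'$, then $\wfctx{\Theta'}{\Xi'}{G}$. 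Property (iii) in turn depends on (ii). Each of these is an immediate induction on the derivation, using only the fact that set membership is preserved by superset.

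With these lemmas in hand, the main induction proceeds by cases on the last rule applied. For the atomic case $\fatm{G}{M:A}$, the premises $\wfctx{\Theta}{\Xi}{G}$, $\wftype{\Theta}{A}$, and $\stlctyjudg{\Theta}{M}{\erase{A}}$ lift directly to premises over $\Theta'$ and $\Xi'$ via (i)--(iii), so the rule can be reapplied. The cases for $\ftrue$ and $\ffalse$ are immediate since they have no premises. The binary connective cases (for $\supset$, $\wedge$, $\vee$) apply the induction hypothesis to each subformula and reapply the rule.

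The two binding cases require a little care. For $\fctx{\Gamma}{\mathcal{C}}{F}$, I can, by the convention of equivalence under renaming, assume the bound name $\Gamma$ is chosen fresh with respect to $\Xi'$. Then $\Xi \cup \{\Gamma\} \subseteq \Xi' \cup \{\Gamma\}$, and the induction hypothesis applied to the premise $\wfform{\Theta}{\Xi \cup \{\Gamma\}}{F}$ yields $\wfform{\Theta'}{\Xi' \cup \{\Gamma\}}{F}$; the schema well-formedness premise $\acstyping{\mathcal{C}}$ is independent of $\Theta$ and $\Xi$ and carries over unchanged. The quantifier case $\fgeneric{x:\alpha}{F}$ is analogous: rename $x$ to be fresh for $\Theta'$, observe that $\aritysum{\{x:\alpha\}}{\Theta} \subseteq \aritysum{\{x:\alpha\}}{\Theta'}$, apply the induction hypothesis, and reapply the rule.

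The only subtle point, and what I would call the main obstacle, is the bookkeeping for bound variables in the quantifier and context-quantifier cases: one must ensure that the fresh name chosen for the bound variable does not collide with the enlarged contexts $\Theta'$ or $\Xi'$. Because we have assumed the standard convention of equivalence under renaming, and because both $\Theta'$ and $\Xi'$ are just sets of names, a suitable renaming always exists, making the step routine but worth noting explicitly.
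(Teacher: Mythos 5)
Your proposal is correct and matches the paper's argument, which simply states that the result follows by a straightforward induction on the well-formedness derivation; you have filled in the details of that induction, including the auxiliary weakening lemmas for the arity typing, arity kinding, and context well-formedness judgements and the renaming bookkeeping in the binding cases.
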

\begin{proof}
This proof is completed by a straightforward induction on the well-formedness
derivation, and the resulting derivation has the same structure as the given
derivation.
\end{proof}
\section{The Interpretation of Formulas}
\label{sec:semantics}

A key component to understanding the meanings of formulas is
understanding the interpretation of the quantifiers over term and
context variables.
These quantifiers are intended to range over closed expressions of the
relevant categories.
For a quantifier over a term variable, this translates concretely into
closed terms of the relevant arity type.
For a quantifier over a context variable, we must first explain when
an LF context in which variables are represented by nominal constants
satisfies a context schema. 

\begin{figure}[tbhp]

\begin{center}
\begin{tabular}{c}

\infer{\declinst{\mathbb{N}}{\emptybb}{\emptyce}{\emptyset}}{}

\qquad

\infer{\declinst{\mathbb{N}}{\Delta,y:A}{G, n : A'}
                 {\theta \cup \{\langle y,n,\erase{A}\rangle\}}}
      {\declinst{\mathbb{N}}{\Delta}{G}{\theta} \qquad
        n : \erase{A} \in {\mathbb{N}} \qquad
        \hsub{\theta}{A}{A'}}

\\[10pt]

\infer{\bsinst{\mathbb{N}}{\Psi}{\{x_1 : \alpha_1,\ldots, x_n : \alpha_n\}\Delta}{G}}
      {\begin{array}{c}
          \declinst{\mathbb{N}}{\Delta}{G'}{\theta}
          \\
          \hsub{\{\langle x_i,t_i,\alpha_i\rangle \ \vert\ 1 \leq i \leq n \}} 
                 {G'}
                 {G}
        \end{array}
       \qquad
       \{ \stlctyjudg{{\mathbb{N}} \cup \Psi \cup \STLCGamma_0}{t_i}{\alpha_i}\ \vert\ 1 \leq i \leq n \}
       }

\\[10pt]

\infer{\csinstone{\mathbb{N}}{\Psi}{\mathcal{C},\mathcal{B}}{G}}
      {\bsinst{\mathbb{N}}{\Psi}{\mathcal{B}}{G}}

\qquad

\infer{\csinstone{\mathbb{N}}{\Psi}{\mathcal{C},\mathcal{B}}{G}}
      {\csinstone{\mathbb{N}}{\Psi}{\mathcal{C}}{G}}

\\[10pt]

\infer{\csinst{\mathbb{N}}{\Psi}{\mathcal{C}}{\emptyce}}
      {}

\qquad 

\infer{\csinst{\mathbb{N}}{\Psi}{\mathcal{C}}{G, G'}}
      {\csinst{\mathbb{N}}{\Psi}{\mathcal{C}}{G} \qquad
       \csinstone{\mathbb{N}}{\Psi}{\mathcal{C}}{G'}}

\end{tabular}
\end{center}

\caption{Instantiating a Context Schema}
\label{fig:ctx-schema}
\end{figure}

We do this by describing the relation of ``being an instance of''
between a closed context expression $G$ and a context schema $\mathcal{C}$.
This relation is indexed by a nominal constant context $\mathbb{N}$
that is a subset of $\noms$ and a variable context $\Psi$ that
identifies variables together with their arity types: in combination with 
the constants in $\STLCGamma_0$, these collections, circumscribe the
symbols that can be used in the declarations in the context
expressions.\footnote{In 
  determining closed instances of context schemas, $\mathbb{N}$ will
  be $\noms$ and $\Psi$ will be the empty set. The more general form
  for this relation, which includes a parameterization by these sets,
  will be useful in later sections.} 
The relation is written as $\csinst{\mathbb{N}}{\Psi}{\mathcal{C}}{G}$
and it is defined by the rules in Figure~\ref{fig:ctx-schema}.
This relation is defined via the repeated use of a ``one-step''
instantiation relation written as
$\csinstone{\mathbb{N}}{\Psi}{\mathcal{C}}{G}$; note that by $G, G'$
we mean a context expression that is obtained by adding the bindings
corresponding to $G'$ in front of those in $G$.
The definition of the one-step instantiation relation for context
schemas uses an auxiliary judgement
$\bsinst{\mathbb{N}}{\Psi}{\mathcal{B}}{G}$ that denotes the relation
of ``being an instance of'' between a block 
schema and a context expression fragment.
This relation holds when the context expression is obtained by
generating a sequence of bindings for nominal constants from
$\mathbb{N}$ using the body of the block schema and then instantiating
the variables in the header of the block schema with terms of the
right arity types.
The former task is realized through the relation
$\declinst{\mathbb{N}}{\Delta}{G}{\theta}$ that holds between a block of
declarations $\Delta$, a context expression $G$ that is obtained
by replacing the variables assigned in $\Delta$ with
suitable nominal constants, and a substitution $\theta$ that
corresponds to this replacement.
We assume here and elsewhere that the application of a hereditary
substitution to a sequence of declarations  corresponds to its
application to the type in each assignment. 

\begin{theorem}\label{th:schemainst}
Let $\mathcal{C}$ and $G$ be a context schema and a context expression
such that both $\acstyping{\mathcal{C}}$ and
$\csinst{\mathbb{N}}{\Psi}{\mathcal{C}}{G}$ are derivable. Then for
any arity context $\STLCGamma$ such that
$\mathbb{N} \cup \Psi \cup \STLCGamma_0 \subseteq \STLCGamma$, it is the case that 
$\wfctx{\STLCGamma}{\emptyset}{G}$ has a derivation. 
\end{theorem}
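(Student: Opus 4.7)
The plan is to proceed by induction on the derivation of $\csinst{\mathbb{N}}{\Psi}{\mathcal{C}}{G}$. The base case where $G$ is $\emptyce$ is immediate from the first rule defining $\wfctx{\STLCGamma}{\Xi}{G}$. In the inductive case, the derivation concludes with the rule that decomposes $G$ as $G_1, G_2$ with subderivations for $\csinst{\mathbb{N}}{\Psi}{\mathcal{C}}{G_1}$ and $\csinstone{\mathbb{N}}{\Psi}{\mathcal{C}}{G_2}$. The induction hypothesis yields $\wfctx{\STLCGamma}{\emptyset}{G_1}$, so it remains to show that the bindings appearing in $G_2$ can be appended one by one using the third rule for $\wfctx$; that is, for each binding $n : A$ in $G_2$ we need $n : \erase{A} \in \STLCGamma$ and $\wftype{\STLCGamma}{A}$.

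To discharge this obligation I would establish an auxiliary lemma: if $\acstyping{\mathcal{C}}$ and $\csinstone{\mathbb{N}}{\Psi}{\mathcal{C}}{G_2}$ are derivable and $\mathbb{N} \cup \Psi \cup \STLCGamma_0 \subseteq \STLCGamma$, then every binding $n : A$ occurring in $G_2$ satisfies $n : \erase{A} \in \mathbb{N}$ and $\wftype{\STLCGamma}{A}$. A derivation of $\csinstone{\mathbb{N}}{\Psi}{\mathcal{C}}{G_2}$ must pick out some block schema $\mathcal{B} = \{x_1{:}\alpha_1,\ldots,x_n{:}\alpha_n\}\Delta$ in $\mathcal{C}$ with $\bsinst{\mathbb{N}}{\Psi}{\mathcal{B}}{G_2}$; from $\acstyping{\mathcal{C}}$ we obtain $\abstyping{\mathcal{B}}$, and unfolding that rule gives $\wfdecls{\STLCGamma_0 \cup \{x_1{:}\alpha_1,\ldots,x_n{:}\alpha_n\}}{\Delta}{\STLCGamma'}$ for some $\STLCGamma'$.

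The heart of the proof is then a sub-lemma about $\declinst$, proved by induction on its derivation: if $\wfdecls{\STLCGamma_0 \cup \{x_i{:}\alpha_i\}_i}{\Delta}{\STLCGamma'}$ and $\declinst{\mathbb{N}}{\Delta}{G'}{\theta}$, then $\theta$ is arity type preserving with respect to $\STLCGamma_0 \cup \mathbb{N}$, and every binding $n : A'$ in $G'$ satisfies $n : \erase{A'} \in \mathbb{N}$ and $\wftype{\STLCGamma_0 \cup \{x_i{:}\alpha_i\}_i \cup \mathbb{N}}{A'}$. The first clause uses Theorem~\ref{th:erasure} (erasure is invariant under substitution) together with the side condition $n : \erase{A} \in \mathbb{N}$ in the defining rule of $\declinst$; the second clause combines the inductive hypothesis on the preceding declarations, which shows $\theta$ is type preserving, with Theorem~\ref{th:aritysubs-ty} applied to the well-formedness of the type $A$ recorded by $\wfdecls$.

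Finally, the rule for $\bsinst$ tells us that $G_2$ is obtained from $G'$ by applying the header substitution $\sigma = \{\langle x_i, t_i, \alpha_i\rangle\}_{i}$ with each $\stlctyjudg{\mathbb{N} \cup \Psi \cup \STLCGamma_0}{t_i}{\alpha_i}$ derivable, so $\sigma$ is arity type preserving with respect to $\mathbb{N} \cup \Psi \cup \STLCGamma_0$. Another appeal to Theorem~\ref{th:aritysubs-ty} converts the well-formedness of each type in $G'$ over $\STLCGamma_0 \cup \{x_i{:}\alpha_i\}_i \cup \mathbb{N}$ into well-formedness of the corresponding type in $G_2$ over $\mathbb{N} \cup \Psi \cup \STLCGamma_0$, which is a subset of $\STLCGamma$ by hypothesis; Theorem~\ref{th:wf-form-wk}'s analogue (monotonicity of arity kinding under context extension, which is immediate by induction on the kinding derivation) then lifts this to $\STLCGamma$ itself. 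The invariance of erasure under both substitutions keeps the membership condition $n : \erase{A} \in \mathbb{N} \subseteq \STLCGamma$ intact. The main obstacle is purely bookkeeping: keeping the three arity contexts (the header extension, the nominal-constant context $\mathbb{N}$, and the ambient $\STLCGamma$) aligned as the two substitutions are composed, but no new ideas beyond the composition and substitution results of Chapter~\ref{ch:lf} are required.
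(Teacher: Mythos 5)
Your proposal is correct and follows essentially the same route as the paper: an inner lemma about $\declinst$ proved by induction along the block declarations (establishing that the replacement substitution is arity type preserving and that each generated binding $n:A$ has $n:\erase{A}$ in the ambient context with $A$ arity-kinded, via Theorem~\ref{th:aritysubs-ty}), followed by handling the header substitution with another appeal to Theorem~\ref{th:aritysubs-ty}, and then the outer induction over the schema instantiation. The only difference is cosmetic bookkeeping — the paper weakens the $\wfdecls$ judgement to the ambient arity context up front, whereas you prove the inner lemma over the minimal context and weaken at the end — which changes nothing substantive.
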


\begin{proof}
We first show that for any block declaration
$\Delta$ and any arity context $\STLCGamma$ such that $\mathbb{N}
\subseteq \STLCGamma$, if $\wfdecls{\STLCGamma}{\Delta}{\STLCGamma'}$ and
$\declinst{\mathbb{N}}{\Delta}{G'}{\theta'}$ are derivable for some $\STLCGamma'$
and $\theta$, then (a)~$\theta'$ is type preserving with respect to
$\Theta$, (b)~$\Theta'$ is $\aritysum{\context{\theta}}{\Theta}$, and
(c)~each binding in $G'$ is of the form $n : A$ where $n:\erase{A}\in
\Theta$ and $\wftype{\Theta}{A}$ has a derivation.
This claim is proved by induction on the derivation of
$\wfdecls{\STLCGamma}{\Delta}{\STLCGamma'}$; properties (a) and (b)
are included in the claim because they are useful together with
Theorem~\ref{th:aritysubs-ty} in showing property (c) in the
induction step. 
Next we show, through an easy inductive argument, that if
$\wfdecls{\STLCGamma_0 \cup \{x_1 : \alpha_1, \ldots,
                              x_n : \alpha_n\}}
         {\Delta}
         {\STLCGamma'}$
has a derivation and $\STLCGamma$ is such that
$\mathbb{N} \cup \Psi \cup \STLCGamma_0 \subseteq \STLCGamma$, then, for
some $\STLCGamma''$, it is the case that 
$\wfdecls{\aritysum{\{x_1 : \alpha_1, \ldots,x_n : \alpha_n\}}
                   {\STLCGamma}}           
         {\Delta}
         {\STLCGamma''}$
has a derivation.
Using Theorem~\ref{th:aritysubs-ty} with these two
observations, we can show easily that if
$\abstyping{\{x_1 : \alpha_1,\ldots, x_n : \alpha_n\}\Delta}$ and 
$\bsinst{\mathbb{N}}{\Psi} 
        {\{x_1 : \alpha_1,\ldots, x_n : \alpha_n\}\Delta}
        {G}$ 
have derivations then for each binding of the form $n:A$ in $G$ it is
the case that $n:\erase{A} \in \Theta$ and $\wftype{\Theta}{A}$.
The theorem follows easily from this observation.
\end{proof}

In defining validity for formulas, we need to consider
substitutions for context and term variables.
For context variables this will correspond to the naive replacement
of the free occurrences of the variables by given context expressions.
We write $\subst{G_1/\Gamma_1,\ldots,G_1/\Gamma_n}{F}$ to denote
the result of the replacement, for $1 \leq i \leq n$, of $\Gamma_i$ by
$G_i$.
For term variables, the replacement must also ensure the
transformation of the resulting expression to normal form.
Towards this end, we adapt hereditary substitution to formulas.
The application of this substitution simply distributes over
quantifiers and logical symbols, respecting the scopes of quantifiers
through the necessary renaming.
The application to the atomic formula $\fatm{G}{M:A}$ also distributes
to the component parts.
We have already discussed the application to terms and types.
The application to context expressions leaves context variables
unaffected and simply distributes to the types in the explicit
bindings.
In particular, no check is made of the possibility of inadvertent
capture.
In this respect, this application is unlike that to LF contexts
that is defined in Figure~\ref{fig:hsubctx}.

\begin{theorem}\label{th:subst-formula}
Let $\STLCGamma$ be an arity context and let $\Xi$ be a collection of
context variables.
\begin{enumerate}
\item If $\theta$ is a term variable substitution that is arity type preserving with
respect to $\STLCGamma$ and $F$ is a formula such that there is a
derivation for $\wfform{\aritysum{\context{\theta}}{\STLCGamma}}{\Xi}{F}$,
then there is a unique formula $F'$ such that
$\hsub{\theta}{F}{F'}$ has a derivation.
Moreover, for this $F'$ it is the case that
$\wfform{\STLCGamma}{\Xi}{F'}$ is derivable.  

\item If $\sigma=\{G_1/\Gamma_1,\ldots,G_n/\Gamma_n\}$ is a context variable
substitution which is such that all judgements in the collection
$\left\{\wfctx{\STLCGamma}{\Xi\setminus\{\Gamma_1,\ldots,\Gamma_n\}}{G_i}\ |\ 
        1\leq i\leq n\right\}$
are derivable and $F$ is a formula such that there is a derivation for
$\wfform{\STLCGamma}{\Xi}{F}$, then there is a derivation for
$\wfform{\STLCGamma}{\Xi\setminus\{\Gamma_1,\ldots,\Gamma_n\}}{\subst{\sigma}{F}}$.
\end{enumerate}
\end{theorem}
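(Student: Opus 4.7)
The plan for both parts is structural induction on the formula $F$, reducing well-formedness and substitutability to the component-wise properties established earlier in the chapter. For Part~1, I will simultaneously construct the witness $F'$ and verify its well-formedness; for Part~2 the substitution $\subst{\sigma}{F}$ is immediately defined by syntactic replacement, so only the well-formedness claim needs to be tracked. Uniqueness in Part~1 is inherited from Theorem~\ref{th:uniqueness} applied componentwise through the formula structure, since hereditary substitution on formulas merely distributes over connectives and quantifiers (with renaming) and delegates to the term, type, and context-expression substitutions.

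For Part~1 the interesting case is the atomic formula $\fatm{G}{M{:}A}$. From $\wfform{\aritysum{\context{\theta}}{\STLCGamma}}{\Xi}{\fatm{G}{M{:}A}}$ I get $\wfctx{\aritysum{\context{\theta}}{\STLCGamma}}{\Xi}{G}$, $\wftype{\aritysum{\context{\theta}}{\STLCGamma}}{A}$, and $\stlctyjudg{\aritysum{\context{\theta}}{\STLCGamma}}{M}{\erase{A}}$. Theorems~\ref{th:aritysubs} and \ref{th:aritysubs-ty} then deliver unique $M'$ and $A'$ with $\hsub{\theta}{M}{M'}$, $\hsub{\theta}{A}{A'}$, $\stlctyjudg{\STLCGamma}{M'}{\erase{A}}$, and $\wftype{\STLCGamma}{A'}$; Theorem~\ref{th:erasure} gives $\erase{A'}=\erase{A}$, which matches the arity type of $M'$ as required. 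For the context expression $G$, I apply $\theta$ pointwise to the types in explicit bindings while leaving nominal constants and any leading context variable untouched; the resulting bindings remain well-formed by Theorem~\ref{th:aritysubs-ty}, and the nominal-constant arity annotations are unchanged by Theorem~\ref{th:erasure}. The logical-connective cases are immediate by the induction hypothesis. For $\fgeneric{x{:}\alpha}{F_0}$, I $\alpha$-rename $x$ away from $\domain{\theta}$, $\range{\theta}$, and $\STLCGamma$ and apply the induction hypothesis on a strictly smaller formula using the substitution $\theta$ extended trivially over the new bound variable. The context-quantifier case $\fctx{\Gamma}{\mathcal{C}}{F_0}$ is handled identically with $\Xi$ replaced by $\Xi \cup \{\Gamma\}$, with $\acstyping{\mathcal{C}}$ carried along unchanged.

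For Part~2 the argument parallels Part~1. The atomic case is again the crux: given $\wfform{\STLCGamma}{\Xi}{\fatm{G}{M{:}A}}$, replacing a leading context variable $\Gamma_i$ in $G$ by $G_i$ yields a new context expression whose explicit bindings come partly from $G_i$ and partly from the original tail of $G$. Well-formedness of $G_i$ relative to $\Xi\setminus\{\Gamma_1,\ldots,\Gamma_n\}$ is given by hypothesis, and I can match it up with the possibly larger arity context $\STLCGamma$ using Theorem~\ref{th:wf-form-wk}. The bindings contributed by the original tail of $G$ are already well-formed in $\STLCGamma$ and use nominal constants that remain distinct from those appearing in $G_i$ by the implicit disjointness convention on nominal constants in contexts (the context-expression well-formedness rule checks each binding independently). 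The term $M$ and type $A$ are untouched, so their conditions still hold. The remaining cases apply the induction hypothesis directly, renaming the bound variable in $\fctx{\Gamma}{\mathcal{C}}{F_0}$ to avoid clashing with $\Gamma_1,\ldots,\Gamma_n$ or with free context variables of the $G_i$.

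The main obstacle is bookkeeping in the atomic case: for Part~1, ensuring that the three component substitutions (on $G$, $M$, and $A$) together yield a well-formed atomic formula with matching arity annotations, relying on Theorems~\ref{th:aritysubs}, \ref{th:aritysubs-ty}, and \ref{th:erasure} to synchronize; and for Part~2, verifying that splicing a replacement context expression $G_i$ into the position of $\Gamma_i$ in $G$ preserves the nominal-constant side conditions built into $\wfctx{\STLCGamma}{\Xi'}{G}$, which is exactly where Theorem~\ref{th:wf-form-wk} is needed to align the context-variable environments before and after substitution.
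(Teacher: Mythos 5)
Your proposal is correct and follows essentially the same route as the paper's (very terse) proof: induction on the well-formedness derivation, with the atomic case discharged via Theorems~\ref{th:uniqueness}, \ref{th:aritysubs}, \ref{th:aritysubs-ty}, and \ref{th:erasure} for part~1 and the hypothesized $\wfctx{\STLCGamma}{\Xi\setminus\{\Gamma_1,\ldots,\Gamma_n\}}{G_i}$ derivations for part~2. The extra detail you supply (erasure invariance to match arity annotations, renaming bound variables away from the substitution) is exactly the bookkeeping the paper leaves implicit.
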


\begin{proof}
The first clause is easily provable by induction on 
$\wfform{\aritysum{\context{\theta}}{\STLCGamma}}{\Xi}{F}$, using 
Theorems~\ref{th:uniqueness} and \ref{th:aritysubs} in the atomic case
to ensure the appropriate arity typing judgements will be derivable under 
the substitution $\theta$.

The second clause is by induction on the structure of 
$\wfform{\STLCGamma}{\Xi}{F}$, using the assumption derivations that
$\wfctx{\STLCGamma}{\Xi\setminus\{\Gamma_1,\ldots,\Gamma_n\}}{G_i}$
is derivable to ensure the result of the substitution on a context variable
is well-formed in the atomic case.
\end{proof}

\noindent Following the notation introduced after Theorem~\ref{th:aritysubs}, 
if $F$ and $\theta$ are a formula and a substitution that together satisfy
the requirements of the first part of the theorem, we will write
$\hsubst{\theta}{F}$ to denote the $F'$ for which 
$\hsub{\theta}{F}{F'}$ is derivable.
Note that a term variable substitution may introduce new nominal
constants. 
We will write $\supportof{\theta}$ to denote the collection of such
constants that appear in $\range{\theta}$.
A similar observation holds for context variable substitutions: if
$\sigma$ is the substitution $\{G_1/\Gamma_1,\ldots,G_n/\Gamma_n\}$,
we will write $\supportof{\sigma}$ to denote the collection of nominal
constants that appear in $G_1,\ldots,G_n$.

As discussed previously, a closed atomic formula of the form
$\fatm{G}{M:A}$ is intended to encode an LF judgement of the
form $\lfchecktype{\Gamma}{M}{A}$.
In this encoding, nominal constants that appear in terms represent
free variables for which bindings appear in the context in LF
judgements.
To substantiate this interpretation, the rules \canonkindpi, \canonfampi\ and
\canontermlam\ must introduce fresh nominal constants into contexts in
typing derivations and they must replace bound variables appearing in
terms and types with these constants. 
We use this interpretation to define validity for closed atomic
formulas with one further qualification: unlike in the LF judgement,
for the atomic formula we must also ascertain the wellformedness of
the context and the type.
This notion of validity is then extended to all closed formulas by
recursion on formula structure as we describe below.

\begin{definition}\label{def:semantics}
Let $F$ be a formula such that $\wfform{\noms \cup
  \STLCGamma_0}{\emptyset}{F}$ is derivable. 
\begin{itemize}
\item If $F$ is $\ftrue$ it is valid and if it is $\ffalse$ it is not valid.

\item If $F$ is $\fatm{G}{M:A}$, it is valid exactly when all of $\lfctx{G}$,
  $\lftype{G}{A}$, and $\lfchecktype{G}{M}{A}$ are derivable in LF,
  under the interpretation of nominal constants as variables bound in
  a context and with the modification of the rules \canonkindpi, \canonfampi\ and
  \canontermlam\ to introduce fresh nominal constants into contexts and to
  instantiate the relevant bound variables in kinds, types and terms with
  these constants.

\item If $F$ is $\fimp{F_1}{F_2}$, it is valid if $F_2$ is valid in
  the case that $F_1$ is valid.

\item If $F$ is $\fand{F_1}{F_2}$, it is valid if both $F_1$ and
  $F_2$ are valid.

\item If $F$ is $\for{F_1}{F_2}$, it is valid if either $F_1$ or $F_2$ is valid.

\item If $F$ is $\fctx{\Gamma}{\mathcal{C}}{F}$, it is valid if
  $\subst{G/\Gamma}{F}$ is valid for every $G$ such that
  $\csinst{\noms}{\emptyset}{\mathcal{C}}{G}$ is derivable.

\item If $F$ is $\fall{x:\alpha}{F}$, it is valid if
  $\hsubst{\{\langle x, M,\alpha\rangle \}}{F}$ is valid for every term $M$ such that
  $\stlctyjudg{\noms \cup \STLCGamma_0}{M}{\alpha}$ is derivable.

\item If $F$ is $\fexists{x:\alpha}{F}$, it is valid if
  $\hsubst{\{\langle x, M,\alpha\rangle \}}{F}$ is valid for some term $M$ such that 
  $\stlctyjudg{\noms \cup \STLCGamma_0}{M}{\alpha}$ is derivable.
\end{itemize}
Theorem~\ref{th:subst-formula} guarantees the coherence of this definition.
\end{definition}

\section{Understanding the Notion of Validity}
\label{sec:logic-ex}

In the examples we consider below, we assume an instantiation of the logic
based on the signature presented in
Section~\ref{sec:lf-ex}.
Obviously, any LF typing judgement based on that signature is expressable
in the logic.
Moreover, the corresponding formula will be valid exactly when the
typing judgement is derivable in LF.
Thus, the formulas
$\fatm{\emptyce}{\emptytm : \tmty}$,
$\fatm{\emptyce}{(\lamtm\app \unittm\app (\lflam{x}{x})) : \tmty}$ and
$\fatm{n : \tmty}{n:\tmty}$ are all valid. 
Similarly, the formulas
$\fexists{d:\oty}{\fatm{\emptyce}{d : (\ofty \app \emptytm\app
    \unittm)}}$ and
\begin{tabbing}
\qquad\=\kill
\>$\fexists{d:\oty}{\fatm{\emptyce}{d :\ofty \app (\lamtm \app \unittm\app
    (\lflam{x}{x}))\app (\arrtm\app \unittm\app \unittm)}}$
\end{tabbing}
are valid but the formula
$\fexists{d:\oty}{\fatm{\emptyce}{d:\ofty \app
    (\lamtm \app \unittm\app (\lflam{x}{x}))\app \unittm}}$
is not.
Note that the arity type associated with the quantified variable in
each of these formulas provides only a rough constraint on the
instantiation needed to verify the validity of the formula; to do
this, the instance must also satisfy LF typeability requirements
represented by formula that appears within the scope of the
quantifier.

Wellformedness conditions for formulas ensure only that the terms
appearing within formulas satisfy canonicity requirements, i.e. that
these terms are in $\beta$-normal form and that variables and
constants are applied to as many arguments as they can take.
Arity typing does not distinguish between terms in different
expression categories.
For example, the formula
\begin{tabbing}
\qquad\=\kill
\> $\fexists{d:\oty}{\fatm{\emptyce}{d:\ofty \app (\lamtm \app \emptytm \app
    (\lflam{x}{x}))\app (\arrtm\app \unittm\app \unittm)}}$
\end{tabbing}
is well-formed but not valid.
An alternative design choice, with equivalent consequences from the
perspective of the valid properties that can be expressed in the
logic, might have been to let the fact that $\lamtm$ is ill-applied to
$\emptytm$ impact on the wellformedness of the formula.
The wellformedness conditions do not also enforce a distinctness
requirement for bindings in a context.
Thus, the formula $\fatm{n : \tmty, n : \tpty}{\emptytm : \tmty}$ is
well-formed.
However, it is not valid because $\lfctx{n : \tmty, n : \tpty}$ is not
derivable in LF under the described interpretation for nominal
constants.
An implication of these observations is that a naive form of weakening 
does not hold with respect to the encoding of LF derivability in the
logic; additional conditions similar to this described in
Theorem~\ref{th:weakening} must be verified for this principle to
apply.

To provide a more substantive example of the kinds of properties that
can be expressed in the logic, let us consider the formal statement of
the uniqueness of type assignment for simply typed $\lambda$-calculus
terms.
As noted in Section~\ref{sec:lf-ex}, this property is
best described in a form that considers typing expressions in
contexts that have a particular kind of structure. 
That structure can be formalized in the logic by a context
schema comprising the single block
\begin{tabbing}
  \qquad\=\kill
  \> $\{t : o\}x:tm,y:\ofty\app x\app t$.
\end{tabbing}
Let us denote this context schema by $ctx$.
Observe that any context that instantiates this schema will not
provide a variable that can be used to construct an atomic term of
type $\tpty$.
Thus, the strengthening property for expressions representing types
that is expressed by the formula 
\begin{tabbing}
  \qquad\=\kill
  \> $\fctx{\Gamma}{ctx}
           {\fall{t :\oty}
                 {\fimp{\fatm{\Gamma}{t : \tpty}}
                   {\fatm{\emptyce}{t:\tpty}}}}$.
\end{tabbing}
should hold.
We can in fact easily show this formula to be valid by
using Theorem~\ref{th:atomictype} and an induction on the height of
the derivation for $\fatm{G}{t : \tpty}$ for a closed term $t$ and a
closed instance $G$ of $ctx$.
Using the validity of this formula, we can also easily argue that the
following formula that expresses a strengthening property pertaining to
the equality of types is also valid:
\begin{tabbing}
  \qquad\=\kill
  \> $\fctx{\Gamma}{ctx}
           {\fall{d :\oty}
           {\fall{t_1 :\oty}
           {\fall{t_2 : \oty}
                 {\fimp{\fatm{\Gamma}{d : \eqty\app t_1\app t_2}}
                 {\fatm{\emptyce}{d:\eqty\app t_1\app t_2}}}}}}$.
\end{tabbing}

Then the uniqueness of typing property can be expressed through the
following formula:
\begin{tabbing}
\qquad\=\qquad\=\kill
\>$\fctx{\Gamma}{ctx}{\fall{e:\oty}{\fall{t_1:\oty}{\fall{t_2:\oty}{\fall{d_1:\oty}{\fall{d_2:\oty}{}}}}}}$\\
\>\>$\fimp{\fatm{\Gamma}{d_1:\ofty\app e\app t_1}}
          {\fimp{\fatm{\Gamma}{d_2:\ofty\app e\app t_2}} 
                {\fexists{d_3:\oty}{\fatm{\emptyctx}{d_3:\eqty\app t_1\app t_2}}}}$
\end{tabbing}
This formula can be seen to be valid using the strengthening property
just described if we can establish the validity of the formula
\begin{tabbing}
\qquad\=\qquad\=\kill
\>$\fctx{\Gamma}{ctx}{\fall{e:\oty}{\fall{t_1:\oty}{\fall{t_2:\oty}{\fall{d_1:\oty}{\fall{d_2:\oty}{}}}}}}$\\
\>\>$\fimp{\fatm{\Gamma}{d_1:\ofty\app e\app t_1}}
          {\fimp{\fatm{\Gamma}{d_2:\ofty\app e\app t_2}} 
                {\fexists{d_3:\oty}{\fatm{\Gamma}{d_3:\eqty\app t_1\app t_2}}}}$.
\end{tabbing}
To show this, it suffices to argue that, for a closed context
expression $G$ that instantiates the schema $ctx$ and for closed
expressions $d_1$, $d_2$, $e$, $t_1$, and $t_2$, if
$\fatm{G}{d_1 : \ofty\app e\app t_1}$ and
$\fatm{G}{d_2 : \ofty\app e\app t_2}$ are valid, then there must be a
closed expression $d_3$ such that
$\fatm{G}{d_3 : \eqty\app t_1\app t_2}$ is also valid.
Such an argument can be constructed by induction on the height
of the LF derivation of $\lfchecktype{G}{d_1}{\ofty\app e\app t_1}$,
which we analyze using Theorem~\ref{th:atomictype} in the manner
discussed earlier. 
There are essentially four cases to consider, corresponding to
whether the head symbol of $d_1$ is \ofemptytm, \ofapptm, 
\oflamtm, or a nominal constant that is assigned the type
$(\ofty\app n\app t_1)$ in $G$ where $n$ is also a nominal constant that
is bound in $G$.
In the last case, we use the fact that the validity of
$\fatm{G}{d_1 : \ofty\app e\app t_1}$ implies that $\lfctx{G}$ is
derivable to conclude the uniqueness of $n$ and, hence, of the typing.
The argument when $d_1$ is \ofemptytm\ has an obvious form.
The argument when $d_1$ has \ofapptm\ or \oflamtm\ as its head symbol
will invoke the induction hypothesis. 
In the case where the head symbol is \oflamtm, we will need
to consider a shorter derivation of a typing judgement in which the
context has been enhanced.
However, we will be able to use the induction hypothesis by observing
that the enhancements to the context conform to the constraints
imposed by the context schema. 
Note that the form of $d_1$ also constrains the form of $e$ in all the
cases, a fact that is used implicitly in the analysis outlined.

\section{Nominal Constants and Invariance Under Permutations}
\label{sec:permutations}

As noted in Chapter~\ref{ch:lf}, the particular choices for bound
variable names in the kinds, types and terms that comprise LF
expressions are considered irrelevant.
This understanding is built in concretely through the notion of
$\alpha$-conversion that renders equivalent expressions that differ
only in the names used for such variables.
Typing derivations transform expressions with bound variables into
ones where variables are ostensibly free but in fact bound implicitly
in the associated contexts.
The lack of importance of name choices is reflected in this case in an
invariance in the validity of typing judgements under a suitable
renaming of variables appearing in the judgements.
In a situation where context variables are represented by nominal
constants, this property has a simple expression in the form of an
invariance of formula validity under permutations of nominal constants.
We will need this property in later chapters and so we present
it formally below.

We begin with a definition of the notions of permutations of
nominal constants and their applications to expressions. 
\begin{definition}[Permutation]

A permutation of the nominal constants is an arity type preserving
bijection from $\noms$ to $\noms$ that differs from the identity
map at only a finite number of constants. 
The permutation that maps $n_1,\ldots, n_m$ to $n_1',\ldots,n_m'$,
respectively, and is the identity everywhere else is written as 
$\{n_1'/n_1,\ldots,n_m'/n_m\}$.
The support of a permutation $\pi=\{n_1'/n_1,\ldots,n_m'/n_m\}$, 
denoted by $\supp{\pi}$, is the collection of nominal constants
$\{n_1,\ldots, n_m\}\cup\{n_1',\ldots,n_m'\}$.
Every permutation $\pi$ has an obvious inverse that is written as
$\inv{\pi}$. 
\end{definition}

\begin{definition}[Permutation Application]
The application of a permutation $\pi$ to an expression $E$ of a
variety of kinds is described below and is denoted in all cases by
$\permute{\pi}{E}$.
If $E$ is a term, type, or kind then the
application consists of replacing each nominal constant $n$ that
appears in $E$ with $\pi(n)$. 
If $E$ is a context then the application of $\pi$ to $E$
replaces each explicit binding $n:A$ in $E$ with 
$\pi(n):\permute{\pi}{A}$.
If $E$ is an LF judgement $\mathcal{J}$ then the permutation is applied
to each component of the judgement in the way described above.
If $E$ is a formula then the permutation is applied to its component parts.
The application of $\pi$ to a term variable substitution
$\{\langle x_1,M_1,\alpha_1\rangle,\ldots,
   \langle x_n,M_n,\alpha_n\rangle\}$
yields the substitution 
$\{\langle x_1,\permute{\pi}{M_1},\alpha_1\rangle,\ldots,
   \langle x_n,\permute{\pi}{M_n},\alpha_n\rangle\}$.
The application of $\pi$ to a context variable substitution
$\{G_1/\Gamma_1,\ldots, G_n/\Gamma_n\}$ yields
$\{\permute{\pi}{G_1}/\Gamma_1,\ldots, \permute{\pi}{G_n}/\Gamma_n\}$.
\end{definition}

The following theorem expresses the property of interest
concerning LF judgements cast in the form relevant to the
logic. 
\begin{theorem}\label{th:perm-lf}
Let LF judgements and derivations be recast in the form discussed
earlier in this section: variables that are bound in a context are
represented by nominal constants and the rules \canonkindpi,
\canonfampi\ and \canontermlam\ introduce fresh nominal constants into
contexts and replace variables in kinds, types and terms with these
constants.
In this context, let $\mathcal{J}$ be an LF judgement which has a
derivation. 
Then for any permutation $\pi$, $\permute{\pi}{\mathcal{J}}$ is derivable.
Moreover, the structure of this derivation is the same as that for
$\mathcal{J}$.
\end{theorem}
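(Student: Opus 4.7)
The plan is to proceed by straightforward structural induction on the derivation of $\mathcal{J}$, showing that applying $\pi$ to every judgement in the derivation (and renaming freshly-introduced nominal constants where necessary) yields a derivation of the same shape. Before launching into the induction, I would record two equivariance lemmas that make the cases go through mechanically. First, arity typing is invariant under $\pi$: if $\stlctyjudg{\STLCGamma}{M}{\alpha}$ is derivable then $\stlctyjudg{\permute{\pi}{\STLCGamma}}{\permute{\pi}{M}}{\alpha}$ is derivable with the same structure, because $\pi$ preserves arity type assignments for nominal constants and leaves constants in $\STLCGamma_0$ and variables untouched. Second, hereditary substitution commutes with $\pi$: whenever $\hsub{\theta}{E}{E'}$ is derivable then $\hsub{\permute{\pi}{\theta}}{\permute{\pi}{E}}{\permute{\pi}{E'}}$ is derivable with the same structure (and similarly for the $\hsubr{}{}{}$ judgements). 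Both lemmas follow by direct induction on the respective derivations, since every rule either leaves nominal constants alone or is purely congruence.

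With these in hand, I would proceed by induction on the height of the derivation of $\mathcal{J}$, case-splitting on the last rule. The rules whose conclusions involve no bound variable (\sigempty, \sigterm, \sigfam, \ctxempty, \canonkindtype, \canonfamatom, \atomfamconst, \atomfamapp, \canontermatom, \atomtermvar, \atomtermconst, \atomtermapp) are routine: after permuting every premise by the induction hypothesis, the conclusion of the permuted rule matches $\permute{\pi}{\mathcal{J}}$, using the hereditary substitution equivariance lemma for the application rules. For \ctxterm, since $\pi$ is a bijection on $\noms$ that preserves arity types, the freshness side-condition on the introduced nominal constant is preserved automatically by $\pi$.

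The only delicate cases are \canonkindpi, \canonfampi, and \canontermlam, which, under the modified presentation used here, introduce a fresh nominal constant $n$ and instantiate the bound variable of the $\Pi$- or $\lambda$-abstraction with it. The main obstacle here is that after applying $\pi$, the constant chosen by the original derivation need no longer be fresh with respect to $\permute{\pi}{\mathcal{J}}$, because $\pi$ may map some other nominal constant already in $\permute{\pi}{\Gamma}$ onto $\pi(n)$, or $n$ itself may lie in $\supp{\pi}$. I would resolve this by pre-composing $\pi$ with a swap of $n$ with a fresh nominal constant $n'$ of the same arity type chosen outside $\supp{\pi}$ and outside the nominal constants already appearing in $\mathcal{J}$; because there are infinitely many nominal constants of each arity type, such an $n'$ always exists. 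By the induction hypothesis applied to this augmented permutation on the premise derivation, we obtain a derivation of the premise of the permuted rule using $n'$ as the fresh constant, and the structure of the derivation is preserved.

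Finally I would collect these cases into a uniform statement covering all seven judgement forms and observe that the induction in fact yields a derivation whose tree shape is identical to the original, as the only modifications are replacements of nominal constants at the leaves and in the instantiations produced by the abstraction rules. This establishes both parts of the conclusion, namely derivability of $\permute{\pi}{\mathcal{J}}$ and invariance of derivation structure.
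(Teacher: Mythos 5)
Your proof is correct and follows the same route as the paper: a structural induction on the derivation of $\mathcal{J}$, using equivariance of arity typing and of hereditary substitution to push the permutation through the application cases. The one place you over-engineer is the abstraction cases (\canonkindpi, \canonfampi, \canontermlam): since $\pi$ is an arity-type-preserving bijection on $\noms$, $\pi(n)$ is automatically fresh for $\permute{\pi}{\Gamma}$ (and for the permuted kinds, types and terms) whenever $n$ is fresh for $\Gamma$, so the permuted rule instance can simply introduce $\pi(n)$ and no auxiliary swap with a new constant $n'$ is needed --- this preservation of freshness under permutation is precisely the only observation the paper singles out as noteworthy.
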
 
\begin{proof}
This proof is by induction on the derivation for $\mathcal{J}$.
Perhaps the only observation worthy of note is that the freshness of
nominal constants used in \canonkindpi, \canonfampi, and
\canontermlam\ rules is preserved under permutations of nominal
constants. 
\end{proof}
The above observation underlies the main theorem of this section.
\begin{theorem}\label{th:perm-form}
Let $F$ be a closed formula and let $\pi$ be a permutation.
Then $F$ is valid if and only if $\permute{\pi}{F}$ is valid.
\end{theorem}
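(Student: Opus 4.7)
My plan is to induct on the structure of $F$, with Theorem~\ref{th:perm-lf} carrying the atomic case and a commutation lemma between $\permute{\pi}{\cdot}$ and (hereditary or context) substitution carrying the quantifier cases. I would prove only the forward direction explicitly, since the converse follows by applying the forward direction with $\inv{\pi}$ to $\permute{\pi}{F}$ and using that permutations compose to the identity on expressions. Along the way I would verify that $\permute{\pi}{F}$ is closed and well-formed whenever $F$ is, which is a routine induction on the derivation of $\wfform{\noms\cup\STLCGamma_0}{\emptyset}{F}$ using the fact that permutations preserve arity types.

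The base cases $\ftrue$ and $\ffalse$ are immediate since $\permute{\pi}{\cdot}$ fixes them. For $F=\fatm{G}{M:A}$, validity amounts to derivability of $\lfctx{G}$, $\lftype{G}{A}$, and $\lfchecktype{G}{M}{A}$ under the nominal-constant convention, and Theorem~\ref{th:perm-lf} yields derivations of their $\pi$-images, which are precisely the judgements witnessing validity of $\permute{\pi}{\fatm{G}{M:A}}$. The connectives $\supset$, $\land$, and $\lor$ follow directly from the induction hypothesis once one observes that $\permute{\pi}{\cdot}$ distributes over them.

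The quantifier cases rest on an auxiliary commutation identity that I would establish by a parallel induction through the substitution derivations: for every $\theta$ and $F_1$ for which $\hsubst{\theta}{F_1}$ is defined,
\[\permute{\pi}{\hsubst{\theta}{F_1}}=\hsubst{\permute{\pi}{\theta}}{\permute{\pi}{F_1}},\]
where $\permute{\pi}{\theta}$ applies $\pi$ to the range terms of $\theta$ while leaving variables and arity indices untouched; an analogous identity $\permute{\pi}{\subst{G/\Gamma}{F_1}}=\subst{\permute{\pi}{G}/\Gamma}{\permute{\pi}{F_1}}$ holds for the simpler context substitution. Given these, for $\fall{x:\alpha}{F_1}$: any closed $M$ of arity $\alpha$ has $\permute{\inv{\pi}}{M}$ also of arity $\alpha$ since permutations preserve arity typing, so $\hsubst{\{\langle x,\permute{\inv{\pi}}{M},\alpha\rangle\}}{F_1}$ is valid by hypothesis; the induction hypothesis and the commutation identity then yield validity of $\hsubst{\{\langle x,M,\alpha\rangle\}}{\permute{\pi}{F_1}}$, making $\permute{\pi}{\fall{x:\alpha}{F_1}}$ valid. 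The existential case is dual, and the context-quantifier case follows the same pattern, additionally needing that $\csinst{\noms}{\emptyset}{\mathcal{C}}{G}$ holds iff $\csinst{\noms}{\emptyset}{\mathcal{C}}{\permute{\pi}{G}}$ holds, which I would show by applying $\pi$ pointwise to a derivation in Figure~\ref{fig:ctx-schema}, using that $\pi$ bijects $\noms$ and that schemas themselves contain no nominal constants.

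The main obstacle is the commutation identity for hereditary substitution, which requires parallel inductions through the substitution judgement forms in Figures~\ref{fig:hsub} and \ref{fig:hsubtypes}, followed by a straightforward distribution over formula structure. Each individual case is routine once the right equation is written down, but threading the arity-typing bookkeeping through the permutation in every clause is the fiddly part of the argument.
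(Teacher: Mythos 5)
Your proposal is correct and follows essentially the same route as the paper's proof: reduce to one direction via $\inv{\pi}$, induct on the structure of $F$, use Theorem~\ref{th:perm-lf} for the atomic case, and handle the quantifiers by composing with $\inv{\pi}$ on the witness/instance together with the commutation of $\permute{\pi}{\cdot}$ with (hereditary and context) substitution. The paper states that commutation identity in-line for each quantifier case rather than isolating it as a separate lemma, but the content is identical.
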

\begin{proof}
Noting that $\inv{\pi}$ is also a permutation and that
$\permute{\inv{\pi}}{\permute{\pi}{F}}$ is $F$, it suffices to prove
the claim in only one direction.
We do this by induction on the structure of $F$.

The desired result follows easily from
Theorem~\ref{th:perm-lf} and the relationship of validity to LF
derivability when $F$ is atomic.
The cases where $F$ is $\ftrue$ or $\ffalse$ are trivial and the
ones in which $F$ is $\fimp{F_1}{F_2}$, $\fand{F_1}{F_2}$ or
$\for{F_1}{F_2}$ are easily argued with recourse to the induction
hypothesis and by noting that the permutation distributes to the
component formulas.

In the case where $F$ is $\fctx{\Gamma}{\mathcal{C}}{F'}$, we first
note that if $\csinst{\noms}{\emptyset}{\mathcal{C}}{G}$ has a
derivation then
$\csinst{\noms}{\emptyset}{\mathcal{C}}{\permute{\inv{\pi}}{G}}$ must also
have one.
From this and the validity of $F$ it follows that
$\subst{\permute{\inv{\pi}}{G}/\Gamma}{F'}$ must be valid.
Moreover, $\subst{\permute{\inv{\pi}}{G}/\Gamma}{F'}$ has the same
structural complexity as $F'$.
Hence, by the induction hypothesis,
$\permute{\pi}{(\subst{\permute{\inv{\pi}}{G}/\Gamma}{F'})}$ is valid.
Noting that this formula is the same as
$\subst{G/\Gamma}{(\permute{\pi}{F'})}$ and that
$\fctx{\Gamma}{\mathcal{C}}{\permute{\pi}{F'}}$ is identical to 
$\permute{\pi}{(\fctx{\Gamma}{\mathcal{C}}{F'})}$, the validity of
$\permute{\pi}{F}$ easily follows.

Suppose that $F$ has the form $\fall{x:\alpha}{F'}$.
We observe here that if
$\stlctyjudg{\noms \cup \STLCGamma_0}{M}{\alpha}$ has a derivation
then
$\stlctyjudg{\noms \cup \STLCGamma_0}{\permute{\inv{\pi}}{M}}{\alpha}$
has one too and that
$\permute{\pi}{(\hsubst{\{\langle x,\permute{\inv{\pi}}{M},\alpha\rangle\}}{F'})}$
is the same formula as
$\hsubst{\{\langle x,M,\alpha\rangle\}}{(\permute{\pi}{F'})}$.
Using the definition of validity, the induction hypothesis and the
fact that permutation distributes to the component formula together
with the above observations, we may easily conclude that
$\permute{\pi}{F}$ is valid.

Finally, suppose that $F$ is of the form $\fexists{x:\alpha}{F'}$.
Here we note that if
$\stlctyjudg{\noms \cup \STLCGamma_0}{M}{\alpha}$ has a derivation
then
$\stlctyjudg{\noms \cup \STLCGamma_0}{\permute{\pi}{M}}{\alpha}$
has one too and that
$\permute{\pi}{(\hsubst{\{\langle x, M,\alpha\rangle\}}{F'})}$
is the same formula as
$\hsubst{\{\langle
  x,\permute{\pi}{M},\alpha\rangle\}}{(\permute{\pi}{F'})}$.
Using the definition of validity and the induction hypothesis, it is
now easy to conclude that $\permute{\pi}{F}$ must be valid.
\end{proof}

\chapter{A Proof System for Constructing Arguments of Validity}
\label{ch:proof-system}

We have presented a logic in Chapter~\ref{ch:logic} that can be used
to describe properties of LF specifications.
We have also shown there how we can argue informally about
the validity of formulas that encapsulate such properties.
Our goal now is to develop a formal mechanism for constructing
arguments of validity.
Towards this end, we describe in this chapter a proof system that
complements our logic.
This proof system is oriented around sequents that represent
assumption and conclusion formulas augmented with devices that capture
additional aspects of states that arise in the process of reasoning. 
The syntax for sequents is more liberal than is meaningful at the
outset, and this is rectified by imposing wellformedness requirements
on them.
We associate a semantics with sequents that is consistent with their
intended use in enabling reasoning about the validity of formulas.
We then present a collection of proof rules that can be used to derive
sequents.
These rules belong to two broad categories.
The first category comprises rules that embody logical aspects such as
the meanings of sequents and of the logical symbols that appear in formulas.
A key aspect of our logic is that its atomic formulas represent the
notion of derivability in LF that is also open to analysis.
The second category of proof rules builds in capabilities for such
analysis.
To be coherent, our proof rules must preserve the wellformedness
property of sequents and our first endeavor concerning their
presentation is to show that they indeed satisfy this property.
At a more substantive level, the proof rules must support
a reasoning process that is both sound and effective.
The focus in this chapter is on ensuring soundness.
We do this by
demonstrating for each proposed rule that the conclusion sequent must
be valid if its premise sequents are.
The demonstration of effectiveness for the proof system will be the
subject of later chapters. 

The first section below presents the sequents underlying the proof
system and identifies a semantics that enables their use in 
validity arguments and that also undergirds the demonstrations of
soundness of proof rules. 
The remaining sections in the chapter develop the collection of proof
rules. 
In Section~\ref{sec:core-logic}, we present the ``core'' rules, \ie,
the rules that encapsulate the meanings of the logical symbols and
also certain aspects of sequents.
We then turn to the rules that internalize aspects of LF derivability
that permeate the logic through the interpretation of the atomic
formulas.
Section~\ref{sec:atomic-rules} develops rules for analyzing atomic formulas.
An important component of these rules is the interpretation of typing
judgements involving atomic types via the particular LF specification
that parameterizes the logic: this interpretation leads, in
particular, to a case analysis rule for such atomic formulas that
appear as assumptions in a sequent.
Another important component of informal reasoning that needs to be
supported by the proof system is that which is based on an induction
on the height of an LF derivation.
To support this ability, we introduce an induction rule in
Section~\ref{sec:ind} that is inspired by the annotation based
scheme used in Abella~\cite{baelde14jfr,gacek09phd}.
In the final section of the chapter, we introduce proof rules that
encode meta-theorems concerning LF derivability that often find use in
reasoning about LF specifications. 
   
\section{The Structure of Sequents}
\label{sec:sequents}

A sequent in our proof system is characterized by a collection of
assumption formulas and a conclusion or goal formula.
The formulas may contain free term and context variables that are to
be interpreted as being implicitly universally quantified over the  
sequent and, therefore, its proof.
We find it useful also to identify with the sequent a collection of
nominal constants that circumscribes the ones that appear in its
formulas. 

The nominal constants and term variables that appear in the
sequent have arity types associated with them.
Context variables are also typed and their types are, in spirit, based
on context schemas.
However, subproofs may require a partial elaboration of a context
variable and the types associated with such variables accommodates
this possibility.
More specifically, these types have the form
$\ctxty{\mathcal{C}}{G_1;\ldots; G_n}$ where $\mathcal{C}$ is a context 
schema and $G_1,\ldots, G_n$ are context expressions.
Such a type is intended to represent the collection of context
expressions obtained by interspersing $G_1,\ldots,G_n$ with
instantiations of the context schema $\mathcal{C}$ and possibly
prefixed by a context variable of suitable type that represents a yet
to be elaborated sequence of declarations.
Additionally, context variables are annotated with a collection of
nominal constants that express the constraint that the elaborations of
these variables must not use names in these collections; the ability
to express such constraints is an essential part of the mechanism for
analysing typing judgements involving abstractions as we will see
later in this section.

The ideas pertaining to context variable typing are made precise
through the following definition. 

\begin{figure}
\begin{center}
\begin{tabular}{c}
\infer{\wfctxvarty{\mathbb{N}}{\Psi}{\ctxty{\mathcal{C}}{\cdot}}}
      {}

\qquad\qquad      
\infer{\wfctxvarty{\mathbb{N}}{\Psi}{\ctxty{\mathcal{C}}{\mathcal{G};G}}}
      {\csinstone{\mathbb{N}}{\Psi}{\mathcal{C}}{G}
       \qquad
       \wfctxvarty{\mathbb{N}}{\Psi}{\ctxty{\mathcal{C}}{\mathcal{G}}}} 

\\[15pt]

\infer{\ctxtyinst{\mathbb{N}}{\Psi}{\Xi}{\ctxty{\mathcal{C}}{\emptycb}}{\emptyce}}
      {}

\qquad
\infer{\ctxtyinst{\mathbb{N}}{\Psi}{\Xi}{\ctxty{\mathcal{C}}{\cdot}}{\Gamma}}
      {\ctxvarty{\Gamma}
                {\mathbb{N}_{\Gamma}}
                {\ctxty{\mathcal{C}}{\mathcal{G}}} \in\Xi
         &
       (\noms\setminus\mathbb{N}_{\Gamma})\subseteq\mathbb{N}}

\\[15pt]
      
\infer{\ctxtyinst{\mathbb{N}}{\Psi}{\Xi}{\ctxty{\mathcal{C}}{\mathcal{G}}}{G,G'}}
      {\ctxtyinst{\mathbb{N}}{\Psi}{\Xi}{\ctxty{\mathcal{C}}{\mathcal{G}}}{G}
      \qquad
      \csinstone{\mathbb{N}}{\Psi}{\mathcal{C}}{G'}}

\qquad\qquad

\infer{\ctxtyinst{\mathbb{N}}{\Psi}{\Xi}{\ctxty{\mathcal{C}}{\mathcal{G};G'}}{G,G'}}
      {\ctxtyinst{\mathbb{N}}{\Psi}{\Xi}{\ctxty{\mathcal{C}}{\mathcal{G}}}{G}}

\end{tabular}
\end{center}
\caption{Well-Formed Context Variable Types and their Instantiations}
\label{fig:wfctxvar}
\end{figure}

\begin{definition}[Context Variable Types and their Instances]\label{def:cvar-types}
A \emph{context variable type} is a expression of the form
$\ctxty{\mathcal{C}}{\mathcal{G}}$ where $\mathcal{C}$ is a context
schema such that $\acstyping{\mathcal{C}}$ is derivable and
$\mathcal{G}$ represents a sequence of \emph{context blocks} given as
follows: 
\[ \mathcal{G} ::= \emptycb \ | \ \mathcal{G}; n_1 : A_1, \ldots, n_k : A_k. \]
Such a type is said to be well-formed with respect to a nominal
constant set $\mathbb{N} \subseteq \noms$ and a term variable context
$\Psi$ that associates arity types with term variables if it is the
case that the relation
$\wfctxvarty{\mathbb{N}}{\Psi}{\ctxty{\mathcal{C}}{\mathcal{G}}}$
that is defined by the rules in Figure~\ref{fig:wfctxvar} holds.
A \emph{context variable context} is a collection of associations of
sets of nominal constants and context variable types with context
variables, each written in the
form $\ctxvarty{\Gamma}{\mathbb{N}_\Gamma}{\ctxty{\mathcal{C}}{\mathcal{G}}}$.
Given a context variable context $\Xi$, we write $\ctxsanstype{\Xi}$ for
the set
$\{ \Gamma\ |\ \ctxvarty{\Gamma}
                        {\mathbb{N}_\Gamma}
                        {\ctxty{\mathcal{C}}{\mathcal{G}}}
               \in \Xi\}$, \ie, the collection of context variables
assigned types by $\Xi$.
A context expression $G$ is said to be an instance of a context type
$\ctxty{\mathcal{C}}{\mathcal{G}}$ relative to $\mathbb{N}$, $\Psi$
and the context variable context $\Xi$ if the relation
$\ctxtyinst{\mathbb{N}}{\Psi}{\Xi}{\ctxty{\mathcal{C}}{\mathcal{G}}}{G}$,
that is also defined in Figure~\ref{fig:wfctxvar}, holds.
\end{definition}

The following theorem, whose proof is based on an obvious induction,
shows that an instance of a well-formed context variable type is a
well-formed context relative to the relevant arity context and context
variable collection. 

\begin{theorem}\label{th:ctx-var-type-instance}
If $\wfctxvarty{\mathbb{N}}{\Psi}{\ctxty{\mathcal{C}}{\mathcal{G}}}$
and
$\ctxtyinst{\mathbb{N}}{\Psi}{\Xi}{\ctxty{\mathcal{C}}{\mathcal{G}}}{G}$
have derivations then so does
$\wfctx{\mathbb{N}\cup \Psi}{\Xi}{G}$.
\end{theorem}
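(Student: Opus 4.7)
The plan is to proceed by induction on the derivation of $\ctxtyinst{\mathbb{N}}{\Psi}{\Xi}{\ctxty{\mathcal{C}}{\mathcal{G}}}{G}$, carrying along the well-formedness hypothesis on the context variable type as an invariant at each inductive step. Four cases arise, one for each rule in Figure~\ref{fig:wfctxvar} that defines the instantiation judgement.

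The first two cases are immediate. When $G$ is $\emptyce$, the empty-context rule for $\wfctx$ applies directly. When $G$ is a context variable $\Gamma$, the premise of the applicable rule supplies $\ctxvarty{\Gamma}{\mathbb{N}_{\Gamma}}{\ctxty{\mathcal{C}}{\mathcal{G}}}\in\Xi$, so $\Gamma\in\ctxsanstype{\Xi}$, and the corresponding rule from Figure~\ref{fig:wfform} yields the conclusion.

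For the two extension cases, $G$ has the form $G_0,G'$ and the induction hypothesis applied to the left premise gives $\wfctx{\mathbb{N}\cup\Psi}{\Xi}{G_0}$. What remains is to extend this derivation by the bindings of $G'$, one at a time, via the third rule of Figure~\ref{fig:wfform}. For this, each binding $n:A$ appearing in $G'$ must satisfy $n:\erase{A}\in\mathbb{N}\cup\Psi$ together with $\wftype{\mathbb{N}\cup\Psi}{A}$. In both subcases we will have access to a derivation of $\csinstone{\mathbb{N}}{\Psi}{\mathcal{C}}{G'}$: directly, as a premise of the third rule, and, in the fourth-rule case, extracted from the hypothesis $\wfctxvarty{\mathbb{N}}{\Psi}{\ctxty{\mathcal{C}}{\mathcal{G};G'}}$ whose only applicable rule records exactly that premise alongside a smaller well-formedness judgement $\wfctxvarty{\mathbb{N}}{\Psi}{\ctxty{\mathcal{C}}{\mathcal{G}}}$ needed to feed the induction hypothesis. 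Combined with $\acstyping{\mathcal{C}}$, which comes packaged with the well-formedness of the type, this situates us exactly in the setting analyzed in the proof of Theorem~\ref{th:schemainst}; the intermediate claim established there about the bindings produced by a block schema instantiation delivers both required conditions for each $n:A$ in $G'$.

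The main obstacle is the bookkeeping that ensures the fourth-rule case has the one-step instantiation derivation for the specific block being appended and simultaneously preserves a well-formedness assumption small enough to drive the induction. This is exactly what the design of $\wfctxvarty$ buys us: its only nontrivial rule records both the $\csinstone$ judgement for the final block and the well-formedness of the prefix type, so the two needs can be satisfied simultaneously. Once this is in place, the remainder is a straightforward unfolding of the definitions in Figures~\ref{fig:wfctxvar} and \ref{fig:wfform}.
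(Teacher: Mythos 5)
Your proof is correct and is exactly the "obvious induction" the paper gestures at without writing out: induct on the instantiation derivation, read off the $\csinstone$ premise (directly or by inverting the $\wfctxvarty$ hypothesis), and discharge the per-binding conditions $n:\erase{A}\in\mathbb{N}\cup\Psi$ and $\wftype{\mathbb{N}\cup\Psi}{A}$ via the intermediate claim in the proof of Theorem~\ref{th:schemainst}. No substantive differences to report.
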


In using the theorem above and in other similar situations, we will
often need to adjust the contexts that parameterize the relevant
wellformedness judgements.
The theorem below, whose proof is also based on a straightforward
induction, provides the basis for such adjustments.

\begin{theorem}\label{th:ctx-ty-wk}
Let $\mathbb{N} \subseteq \mathbb{N}'$, $\Psi \subseteq \Psi'$, and
$\Xi \subseteq \Xi'$.  
\begin{enumerate}
\item If $\wfctxvarty{\mathbb{N}}{\Psi}{\ctxty{\mathcal{C}}{\mathcal{G}}}$
is derivable, then 
$\wfctxvarty{\mathbb{N}'}{\Psi'}{\ctxty{\mathcal{C}}{\mathcal{G}}}$
is derivable.
\item If 
$\ctxtyinst{\mathbb{N}}{\Psi}{\Xi}{\ctxty{\mathcal{C}}{\mathcal{G}}}{G}$
is derivable, then
$\ctxtyinst{\mathbb{N}'}{\Psi'}{\Xi'}{\ctxty{\mathcal{C}}{\mathcal{G}}}{G}$
is derivable.
\end{enumerate}
\end{theorem}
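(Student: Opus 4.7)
The plan is to prove the two clauses by straightforward inductions on the given derivations, but these inductions depend on analogous weakening properties for the auxiliary relations $\csinstone{\cdot}{\cdot}{\cdot}{\cdot}$, $\bsinst{\cdot}{\cdot}{\cdot}{\cdot}$, and $\declinst{\cdot}{\cdot}{\cdot}{\cdot}$. So I would first establish the following subsidiary weakening lemmas, each by an easy induction on the relevant derivation: if $\mathbb{N}\subseteq\mathbb{N}'$ and $\Psi\subseteq\Psi'$, then (a) $\declinst{\mathbb{N}}{\Delta}{G}{\theta}$ implies $\declinst{\mathbb{N}'}{\Delta}{G}{\theta}$, using the fact that the side condition $n:\erase{A}\in\mathbb{N}$ in the inductive rule still holds when $\mathbb{N}$ is enlarged; (b) $\bsinst{\mathbb{N}}{\Psi}{\mathcal{B}}{G}$ implies $\bsinst{\mathbb{N}'}{\Psi'}{\mathcal{B}}{G}$, using (a) together with the standard observation that arity typing $\stlctyjudg{\mathbb{N}\cup\Psi\cup\STLCGamma_0}{t_i}{\alpha_i}$ is preserved under enlarging its arity context; and (c) $\csinstone{\mathbb{N}}{\Psi}{\mathcal{C}}{G}$ implies $\csinstone{\mathbb{N}'}{\Psi'}{\mathcal{C}}{G}$, which is immediate from (b) by induction on $\mathcal{C}$.

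For clause 1, I would then induct on the derivation of $\wfctxvarty{\mathbb{N}}{\Psi}{\ctxty{\mathcal{C}}{\mathcal{G}}}$. The base case where $\mathcal{G}=\emptycb$ is immediate, since the rule is applicable for any $\mathbb{N}',\Psi'$. In the inductive case where $\mathcal{G}=\mathcal{G}';G$, the premises $\csinstone{\mathbb{N}}{\Psi}{\mathcal{C}}{G}$ and $\wfctxvarty{\mathbb{N}}{\Psi}{\ctxty{\mathcal{C}}{\mathcal{G}'}}$ can be lifted to $\mathbb{N}',\Psi'$ using subsidiary lemma (c) and the induction hypothesis, respectively, and then the rule is reapplied.

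For clause 2, I would induct on the derivation of $\ctxtyinst{\mathbb{N}}{\Psi}{\Xi}{\ctxty{\mathcal{C}}{\mathcal{G}}}{G}$. The empty case is trivial. The two extension rules are handled by combining the induction hypothesis with subsidiary lemma (c) on the $\csinstone$ premise. The only case requiring slightly more care is the context-variable rule, where we must re-establish the side condition $(\noms\setminus\mathbb{N}_\Gamma)\subseteq \mathbb{N}'$ and membership $\ctxvarty{\Gamma}{\mathbb{N}_\Gamma}{\ctxty{\mathcal{C}}{\mathcal{G}}}\in\Xi'$; the former follows because $\mathbb{N}\subseteq\mathbb{N}'$ and the latter because $\Xi\subseteq\Xi'$.

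I do not anticipate any real obstacle here. The essence of the argument is just that every side condition in the rules defining these judgements is monotone in $\mathbb{N}$, $\Psi$, and $\Xi$, and neither the rules nor their premises ever require an upper bound on these parameters. The only minor subtlety is remembering to cite weakening of arity typing for the $\stlctyjudg{\cdot}{\cdot}{\cdot}$ premises in the $\bsinst$ rule, but this is an entirely standard property of the arity typing system of Figure~\ref{fig:aritytyping}.
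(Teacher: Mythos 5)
Your proof is correct and follows exactly the route the paper intends: the paper states only that the result follows by "a straightforward induction," and your elaboration --- subsidiary weakening lemmas for $\declinst{\cdot}{\cdot}{\cdot}{\cdot}$, $\bsinst{\cdot}{\cdot}{\cdot}{\cdot}$, and $\csinstone{\cdot}{\cdot}{\cdot}{\cdot}$, followed by inductions on the two judgements, with the observation that every side condition (including $(\noms\setminus\mathbb{N}_\Gamma)\subseteq\mathbb{N}$ and membership in $\Xi$) is monotone in the parameters --- is precisely that induction spelled out.
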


The wellformedness judgements in Figure~\ref{fig:wfctxvar} are
preserved under meaningful substitutions as the theorem below
explicates.

\begin{theorem}\label{th:ctxtyinst-hsubst}
Let $\theta$ be an arity type preserving substitution with respect to
$\noms \cup \STLCGamma_0 \cup \Psi$ and let
$\wfctxvarty{\mathbb{N}}
            {\aritysum{\context{\theta}}{\Psi}}
            {\ctxty{\mathcal{C}}{\mathcal{G}}}$
have a derivation.
Then
\begin{enumerate}
\item there must be a derivation for
$\wfctxvarty{\mathbb{N} \cup \supportof{\theta}}{\Psi}{\ctxty{\mathcal{C}}{\hsubst{\theta}{\mathcal{G}}}}$,
and 
\item if
$\ctxtyinst{\mathbb{N}}{\aritysum{\context{\theta}}{\Psi}}{\Xi}{\ctxty{\mathcal{C}}{\mathcal{G}}}{G}$
also has a derivation, there must be a derivation for
$\ctxtyinst{\mathbb{N}\cup \supportof{\theta}}{\Psi}{\Xi}{\ctxty{\mathcal{C}}{\hsubst{\theta}{\mathcal{G}}}}{\hsubst{\theta}{G}}$.
\end{enumerate}
\end{theorem}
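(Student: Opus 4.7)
The plan is to prove the two clauses essentially simultaneously, with part~1 proved by induction on the derivation of $\wfctxvarty{\mathbb{N}}{\aritysum{\context{\theta}}{\Psi}}{\ctxty{\mathcal{C}}{\mathcal{G}}}$ and part~2 by induction on the derivation of $\ctxtyinst{\mathbb{N}}{\aritysum{\context{\theta}}{\Psi}}{\Xi}{\ctxty{\mathcal{C}}{\mathcal{G}}}{G}$. Both inductions reduce, in their essential steps, to a single auxiliary claim: if $\csinstone{\mathbb{N}}{\aritysum{\context{\theta}}{\Psi}}{\mathcal{C}}{G'}$ has a derivation, then so does $\csinstone{\mathbb{N}\cup\supportof{\theta}}{\Psi}{\mathcal{C}}{\hsubst{\theta}{G'}}$. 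Once this lemma is in place, every other step in both inductions is a routine application of the induction hypothesis together with the observation that $\hsubst{\theta}{\cdot}=\cdot$, that $\hsubst{\theta}{\Gamma}=\Gamma$ (so the context-variable base case of part~2 goes through using the inclusion $\noms\setminus\mathbb{N}_\Gamma\subseteq\mathbb{N}\subseteq\mathbb{N}\cup\supportof{\theta}$), and that $\hsubst{\theta}{G,G'}=\hsubst{\theta}{G},\hsubst{\theta}{G'}$.

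The key lemma is the substantive content of the theorem, and it is what I expect to be the main obstacle. Unpacking the derivation, $\csinstone{\mathbb{N}}{\aritysum{\context{\theta}}{\Psi}}{\mathcal{C}}{G'}$ eventually rests on a $\bsinst$ judgement for some block schema $\{x_1{:}\alpha_1,\ldots,x_n{:}\alpha_n\}\Delta \in \mathcal{C}$. That judgement in turn provides a $\declinst{\mathbb{N}}{\Delta}{G''}{\theta_\Delta}$, a substitution $\sigma=\{\langle x_i,t_i,\alpha_i\rangle\}$ for the header variables with $\hsub{\sigma}{G''}{G'}$, and arity typing derivations $\stlctyjudg{\mathbb{N}\cup\aritysum{\context{\theta}}{\Psi}\cup\STLCGamma_0}{t_i}{\alpha_i}$. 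To reassemble the conclusion, I would reuse the same $\declinst$ derivation (which only mentions nominals from $\mathbb{N}\subseteq\mathbb{N}\cup\supportof{\theta}$), invoke Theorem~\ref{th:aritysubs} on each $t_i$ to obtain terms $t_i'$ with $\hsub{\theta}{t_i}{t_i'}$ and $\stlctyjudg{\mathbb{N}\cup\supportof{\theta}\cup\Psi\cup\STLCGamma_0}{t_i'}{\alpha_i}$, and then take the new header substitution to be $\sigma'=\{\langle x_i,t_i',\alpha_i\rangle\}$.

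What remains is to show that $\hsub{\sigma'}{G''}{\hsubst{\theta}{G'}}$. Since the $x_i$ are schema-bound, they may be chosen distinct from $\domain{\theta}$ and not free in $\range{\theta}$ by $\alpha$-renaming; in particular $G''$ has no free variable from $\domain{\theta}$ (the declaration variables $y_j$ have been replaced by nominals and only the $x_i$ remain free). Theorem~\ref{th:vacuoussubs} then yields $\hsub{\theta}{G''}{G''}$, so Theorem~\ref{th:subspermute} applied with $\theta_1=\theta$, $\theta_2=\sigma$, and $\theta_3=\sigma'$ delivers exactly $\hsub{\sigma'}{G''}{\hsubst{\theta}{G'}}$, closing the auxiliary claim. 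Plugging this lemma into the two outer inductions then gives parts~1 and~2, with Theorem~\ref{th:ctx-ty-wk} used quietly to align the nominal and term-variable contexts whenever needed.
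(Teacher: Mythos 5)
Your proposal is correct and follows essentially the same route as the paper: the paper also reduces both clauses to the single auxiliary claim that $\csinstone{\mathbb{N}}{\aritysum{\context{\theta}}{\Psi}}{\mathcal{C}}{G'}$ implies $\csinstone{\mathbb{N}\cup\supportof{\theta}}{\Psi}{\mathcal{C}}{\hsubst{\theta}{G'}}$, with part~1 by induction on the $\wfctxvarty$ derivation and part~2 by an ``obvious induction'' on the $\ctxtyinst$ derivation, using Theorem~\ref{th:aritysubs} for the header instantiation terms. The paper leaves the proof of the auxiliary claim at the level of ``an induction using the facts observed earlier,'' whereas you correctly supply the missing substance --- the $\alpha$-renaming of the header variables away from $\domain{\theta}$, Theorem~\ref{th:vacuoussubs} on the block body, and Theorem~\ref{th:subspermute} to commute $\theta$ past the header substitution --- which is exactly what is needed to close that step.
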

\begin{proof}
Since $\theta$ is arity type preserving with respect to 
$\noms \cup \STLCGamma_0 \cup \Psi$, using Theorem~\ref{th:aritysubs} we see that 
if there is a derivation for 
$\stlctyjudg{\mathbb{N} \cup (\aritysum{\context{\theta}}{\Psi}) \cup \STLCGamma_0}
            {t}
            {\alpha}$ 
then $\hsubst{\theta}{t}$ is well-defined and
$\stlctyjudg{(\mathbb{N} \cup \supportof{\theta}
               \cup \Psi
               \cup \STLCGamma_0}
            {\hsubst{\theta}{t}}
            {\alpha}$
has a derivation.
An induction on the derivation of 
$\wfctxvarty{\mathbb{N}}
            {\aritysum{\context{\theta}}{\Psi}}
            {\ctxty{\mathcal{C}}{\mathcal{G}}}$
using these observations allows us to confirm the first part of the
theorem.
Further, let there be a derivation for
$\csinstone{\mathbb{N}}{\aritysum{\context{\theta}}{\Psi}}{\mathcal{C}}{G'}$.
By an induction on this derivation using the facts observed earlier,
it can be concluded that there must be a derivation for
$\csinstone{\mathbb{N} \cup \supportof{\theta}}
           {\Psi}
           {\mathcal{C}}
           {\hsubst{\theta}{G'}}$.
The second part of the theorem follows by another obvious induction from this.
\end{proof}

\begin{definition}[Sequents]\label{def:sequent}
A sequent, written as $\seq[\mathbb{N}]{\Psi}{\Xi}{\Omega}{F}$, is a
judgement that relates a finite subset $\mathbb{N}$ of $\noms$, a
finite set $\Psi$ of arity type assignments to term variables, a
context variable context $\Xi$, a finite set $\Omega$ of
\emph{assumption formulas} and a conclusion or goal formula $F$.
The sequent is well-formed if (a)~for each type association
$\ctxvarty{\Gamma}{\mathbb{N}_\Gamma}{\ctxty{\mathcal{C}}{\mathcal{G}}}$ in
$\Xi$ it is the case that 
$\wfctxvarty{\mathbb{N} \setminus \mathbb{N}_\Gamma}{\Psi}{\ctxty{\mathcal{C}}{\mathcal{G}}}$
is derivable and (b)~for each formula $F'$ in $\{F\} \cup \Omega$ the judgement $\wfform{\mathbb{N} \cup \Psi \cup \Theta_0}{\ctxsanstype{\Xi}}{F'}$
is derivable. 
Given a well-formed sequent $\seq[\mathbb{N}]{\Psi}{\Xi}{\Omega}{F}$,
we refer to $\mathbb{N}$ as its support set, to $\Psi$ as its
eigenvariable context and $\Xi$ as its context variable context.
We use a comma to denote set union in representing sequents, writing
$\setand{\Omega}{F_1,\ldots,F_n}$ to denote the set $\Omega\cup\{F_1,\ldots,F_n\}$.
\end{definition}

We will need to consider substitutions for term variables in
sequents. 
We will require legitimate substitutions to not use the nominal
constants in the support set of the sequent; this restriction will be
part of a mechanism for controlling dependencies in context
declarations.  
We will further require substitutions to satisfy arity typing
constraints for their applications to be well-defined.
These considerations are formalized below in a notion of
compatibility between substitutions and sequents. 

\begin{definition}[Term Substitutions Compatible with Sequents]\label{def:seq-term-subst}
A pair $\langle \theta, \Psi' \rangle$ consisting of a term variable 
substitution and an arity context assigning types to variables is said
to be substitution compatible with a well-formed sequent
$\mathcal{S} = \seq[\mathbb{N}]{\Psi}{\Xi}{\Omega}{F}$ if
\begin{enumerate}
\item $\theta$ is arity type preserving with respect to the context
  $\noms \cup \Theta_0 \cup \Psi'$, 
  
\item $\supportof{\theta} \cap \mathbb{N} = \emptyset$, and 

\item for any variable $x$, if $x : \alpha \in \Psi$ and
  $x : \alpha' \in \aritysum{\context{\theta}}{\Psi'}$, then $\alpha =
  \alpha'$. 
\end{enumerate}
\end{definition}

The application of a substitution may introduce new nominal constants
into a sequent.
When this happens, substitutions for the eigenvariables in the
resulting sequent must be permitted to contain these constants. 
We use the technique of raising to realize this
requirement~\cite{miller92jsc}. 
The following definition is useful in formalizing this idea.
\begin{definition}[Raising a Context over Nominal Constants]
Let $\Psi$ be a set of the form $\{x_1:\alpha_1,\ldots,x_m:\alpha_m\}$ that
associates arity types with a finite collection of variables, let
$n_1,\ldots,n_k$ be a listing of the elements of a finite 
collection of the nominal constants $\mathbb{N}$,
and let $\beta_1,\ldots,\beta_k$ be the arity types
associated by $\noms$ with these constants.  
Then a version of $\Psi$ raised over $\mathbb{N}$ is a set 
$\{y_1 : \gamma_1,\ldots, y_m : \gamma_m\}$ where, for $1 \leq i \leq
m$, $y_i$ is a distinct variable that is also different from the
variables in $\{x_1,\ldots,x_m\}$ and $\gamma_i$ is
$\beta_1 \atyarr \cdots \atyarr \beta_k \atyarr \alpha_i$.
Further, the raising substitution associated with this version is the set
$\{\langle x_i, (y_i \app n_1 \app \ldots \app n_k),
\alpha_i\rangle\ |\ 1 \leq i \leq m\}$.
\end{definition}

The basis for using raising in the manner described is the content of
the following theorem.
We say here and elsewhere that an arity context $\STLCGamma$ is
compatible with $\noms$ if the types that $\STLCGamma$ assigns to
nominal constants are identical to their assignments in $\noms$.

\begin{theorem}\label{th:raised-subs}
Let $\theta$ be a substitution that is arity type preserving with respect to
an arity context $\STLCGamma$ that is compatible with $\noms$.
Further, let $\Psi$ be a version of $\context{\theta}$ raised over
some listing of a collection
$\mathbb{N}$ of nominal constants and let $\theta_r$ be the associated 
raising substitution.
Then there is a substitution $\theta'$ with
$\supportof{\theta'} = \supportof{\theta}\setminus \mathbb{N}$
and $\context{\theta'} = \Psi$ that is
arity type preserving with respect to $\STLCGamma$
and such that for any $E$ for which
$\wftype{\aritysum{\context{\theta}}{\STLCGamma}}{E}$ or, for some 
arity type $\alpha$,
$\stlctyjudg{\aritysum{\context{\theta}}{\STLCGamma}}{E}{\alpha}$ has a  
derivation, it is the case that 
 $\hsubst{\theta'}{\hsubst{\theta_r}{E}} =
\hsubst{\theta}{E}$.  
\end{theorem}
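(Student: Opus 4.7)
The plan is to construct $\theta'$ explicitly and then reduce the desired equation to an identity about the composed substitution $\theta' \circ \theta_r$ via Theorem~\ref{th:composition}. Write $\theta = \{\langle x_i, M_i, \alpha_i \rangle \mid 1 \leq i \leq m\}$ and let $n_1, \ldots, n_k$ enumerate $\mathbb{N}$ with $\noms$-assigned arity types $\beta_1, \ldots, \beta_k$, so that $\Psi = \{y_i : \gamma_i \mid 1 \leq i \leq m\}$ with $\gamma_i = \beta_1 \atyarr \cdots \atyarr \beta_k \atyarr \alpha_i$ and $\theta_r = \{\langle x_i, (y_i\app n_1\app \ldots\app n_k), \alpha_i \rangle\}$. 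For each $i$, I would pick fresh variables $z_1, \ldots, z_k$ and let $M_i^\star$ denote the result of the meta-level replacement in $M_i$ of each $n_j$ by $z_j$; this is well-defined since the $n_j$ are atomic and do not bind. Then I would define $\theta' = \{\langle y_i, \lflam{z_1}{\ldots \lflam{z_k}{M_i^\star}}, \gamma_i\rangle \mid 1 \leq i \leq m\}$.

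The three stated side conditions on $\theta'$ should follow directly from the construction. By inspection, $\context{\theta'} = \Psi$; the only nominal constants that can appear in the range of $\theta'$ are those in the range of $\theta$ that were not replaced by some $z_j$, so $\supportof{\theta'} = \supportof{\theta} \setminus \mathbb{N}$; and since each $M_i$ has arity type $\alpha_i$ under $\STLCGamma$ with the $n_j$ contributing type $\beta_j$, the abstraction $\lflam{z_1}{\ldots \lflam{z_k}{M_i^\star}}$ has arity type $\gamma_i$ under $\STLCGamma$, giving arity type preservation. These observations also show that $\theta'$ and $\theta_r$ are arity type compatible relative to $\STLCGamma$ in the sense of Definition~\ref{def:composition}.

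To verify the substitution identity, I would apply Theorem~\ref{th:composition} to conclude that $\hsubst{\theta'}{\hsubst{\theta_r}{E}} = \hsubst{\theta' \circ \theta_r}{E}$, and then show that the composition agrees with $\theta$ on all variables that can appear free in $E$. Unfolding the definition of composition, $\theta' \circ \theta_r$ equals $\theta'$ extended with, for each $i$, the entry $\langle x_i, \hsubst{\theta'}{(y_i\app n_1\app \ldots\app n_k)}, \alpha_i\rangle$. The technical heart of the argument is verifying that this last term is exactly $M_i$: the hereditary substitution of $y_i$ yields $\lflam{z_1}{\ldots \lflam{z_k}{M_i^\star}}$ by the variable case, and each of the $k$ successive applications triggers the $\beta$-reducing clause of Figure~\ref{fig:hsub}, performing the unitary substitution of $n_j$ for $z_j$. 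A short auxiliary induction, or repeated invocation of Theorem~\ref{th:subspermute} to coalesce the unitary substitutions, then shows that the cumulative effect is the replacement $[n_1/z_1, \ldots, n_k/z_k]$ which inverts the meta-level replacement used to build $M_i^\star$ and so returns $M_i$. Since $E$ is typed under $\aritysum{\context{\theta}}{\STLCGamma}$ and the $y_i$ were chosen fresh, $E$ contains no free $y_i$, so the $y_i$-entries of $\theta' \circ \theta_r$ act vacuously on $E$ and we obtain $\hsubst{\theta' \circ \theta_r}{E} = \hsubst{\theta}{E}$. The main obstacle is the careful computation showing that the chain of $\beta$-reductions realised by hereditary substitution matches the intended simultaneous replacement; once this is in hand, the remainder is routine bookkeeping.
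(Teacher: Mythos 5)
Your proposal is correct and follows essentially the same route as the paper: it builds $\theta'$ by replacing the constants of $\mathbb{N}$ in each substitution term with fresh bound variables and abstracting over them, and then reduces the identity to $\theta = \theta' \circ \theta_r$ (modulo the vacuous $y_i$-entries) via Theorem~\ref{th:composition}. Your treatment is, if anything, slightly more explicit than the paper's about the $\beta$-reduction chain and the vacuity of the $y_i$-entries on $E$.
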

\begin{proof}
Each of the substitutions involved in the
expression
$\hsubst{\theta'}{\hsubst{\theta_r}{E}} = \hsubst{\theta}{E}$
will have a result under the conditions described, thereby justifying
the use of the notation introduced after Theorem~\ref{th:aritysubs}.
Now, let $n_1,\ldots,n_k$ be the listing of the constants in
$\mathbb{N}$ in the raising substitution, let
$\alpha_1,\ldots,\alpha_k$ be the respective types of these constants,
let $\langle x, t, \alpha \rangle$ be a tuple in $\theta$ and let
$\langle x, (y \app n_1 \app \cdots \app n_k), \alpha \rangle$ be the
tuple corresponding to $x$ in $\theta_r$. 
Let $x_1,\ldots,x_k$ be a listing of distinct variables that do not
appear in $t$ and let $t'$ be the result of replacing $n_i$ by $x_i$
in $t$, for $1 \leq i \leq k$.
We construct $\theta'$ by including in it the substitution
$\langle y, \lflam{x_1}{\ldots\lflam{x_k}{t'}},
         \alpha_1 \atyarr \cdots \atyarr \alpha_k \atyarr \alpha 
\rangle$
for each case of the kind considered.
It is easy to see that $\supportof{\theta'} = \supportof{\theta}\setminus
\mathbb{N}$ and that $\theta_r$ and $\theta'$ are arity type
compatible with respect to $\STLCGamma \cup \noms$.
The remaining part of the theorem follows from noting that
$\theta = \theta' \circ \theta_r$ and using
Theorem~\ref{th:composition}. 
\end{proof} 

The following definition formalizes the application of a term
substitution to a well-formed sequent when the conditions of
substitution compatibility are met.
We assume here and elsewhere that the application of a 
substitution  to a set of formulas distributes to each member of the
set, 
its application to a context variable context
distributes to each context variable type in the context and its
application to a context variable type
$\ctxty{\mathcal{C}}{\mathcal{G}}$ distributes to each context block
in $\mathcal{G}$. 
\begin{definition}[Applying a Term Substitution to a
    Sequent]\label{def:seq-term-subst-app} 
Let $\mathcal{S}$ be the well-formed sequent
$\seq[\mathbb{N}]{\Psi}{\Xi}{\Omega}{F}$
and $\langle \theta, \Psi'\rangle$
be substitution compatible with $\mathcal{S}$.
Further, let $\Psi''$ be a version of $(\Psi \setminus
\context{\theta}) \cup \Psi'$  raised over $\supportof{\theta}$ and
let $\theta_r$ be the corresponding raising substitution. 
Then the application of $\theta$ to $\mathcal{S}$ relative to $\Psi'$,
denoted by $\hsubstseq{\Psi'}{\theta}{\mathcal{S}}$, is the sequent
$\seq[\mathbb{N} \cup \supportof{\theta}]
     {\Psi''} 
     {\hsubst{\theta_r}{\hsubst{\theta}{\Xi}}}
     {\hsubst{\theta_r}{\hsubst{\theta}{\Omega}}}
     {\hsubst{\theta_r}{\hsubst{\theta}{F}}}$.
\end{definition}

The definition and notation above are obviously ambiguous since they 
depend on the particular choices of $\Psi''$ and $\theta_r$.
We shall mean $\hsubstseq{\Psi'}{\theta}{\mathcal{S}}$ to denote any
one of the sequents so determined, referring to $\Psi''$ and
$\theta_r$ as the raised context and the raising substitution
associated with the application of the substitution where
disambiguation is needed.
Note also that the definition assumes that the application of the
substitutions $\theta$ and $\theta_r$ to the relevant context variable
types and formulas is well-defined.
We show this to be the case in the theorem below. 

\begin{theorem}\label{th:seq-term-subs-ok}
Let $\seqsub{\theta}{\Psi'}$ be substitution compatible with
a well-formed sequent $\mathcal{S}$. 
Then
$\hsubstseq{\Psi'}{\theta}{\mathcal{S}}$ is well-defined and is a well-formed sequent.
\end{theorem}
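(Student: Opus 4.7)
The plan is to verify well-definedness and well-formedness of $\hsubstseq{\Psi'}{\theta}{\mathcal{S}}$ by tracking the two substitutions $\theta$ and $\theta_r$ through each component of $\mathcal{S}=\seq[\mathbb{N}]{\Psi}{\Xi}{\Omega}{F}$ in turn, and then checking the resulting objects against the two wellformedness conditions of Definition~\ref{def:sequent}.

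For the first pass, I would reconcile the arity contexts so that Theorems~\ref{th:subst-formula}(1) and \ref{th:ctxtyinst-hsubst}(1) can be invoked with $\theta$. Condition (3) of substitution compatibility (Definition~\ref{def:seq-term-subst}) says that $\Psi$ and $\aritysum{\context{\theta}}{\Psi'}$ agree on any variable they share, so the set $\mathbb{N}\cup\Psi\cup\STLCGamma_0$ embeds into $\aritysum{\context{\theta}}{((\Psi\setminus\context{\theta})\cup\Psi'\cup\mathbb{N}\cup\STLCGamma_0)}$; Theorems~\ref{th:wf-form-wk} and \ref{th:ctx-ty-wk}(1) let me reframe each formula wellformedness hypothesis and each context-variable type wellformedness hypothesis over that enlarged context. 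Since $\theta$ is arity type preserving over $\noms\cup\STLCGamma_0\cup\Psi'$, and therefore over the relevant subcontext here, Theorems~\ref{th:subst-formula}(1) and \ref{th:ctxtyinst-hsubst}(1) now give both the well-definedness of $\hsubst{\theta}$ on each formula and each context-variable type in $\mathcal{S}$ and a wellformedness judgement for each result over $(\Psi\setminus\context{\theta})\cup\Psi'\cup\mathbb{N}\cup\STLCGamma_0$.

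A second pass handles $\theta_r$: by the construction underlying Theorem~\ref{th:raised-subs}, $\theta_r$ has $\context{\theta_r}=(\Psi\setminus\context{\theta})\cup\Psi'$, is arity type preserving over $\Psi''\cup\supportof{\theta}\cup\noms\cup\STLCGamma_0$, and introduces no nominal constants beyond $\supportof{\theta}$. Invoking Theorems~\ref{th:subst-formula}(1) and \ref{th:ctxtyinst-hsubst}(1) a second time shows that $\hsubst{\theta_r}{\hsubst{\theta}{\cdot}}$ is well-defined on each component and yields wellformedness judgements over $\Psi''\cup\mathbb{N}\cup\supportof{\theta}\cup\STLCGamma_0$, which is exactly the arity context required by condition (b) of Definition~\ref{def:sequent} for the resulting sequent. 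For condition (a), I would repeat the same two-substitution argument starting from the hypothesis $\wfctxvarty{\mathbb{N}\setminus\mathbb{N}_\Gamma}{\Psi}{\ctxty{\mathcal{C}}{\mathcal{G}}}$ attached to each $\ctxvarty{\Gamma}{\mathbb{N}_\Gamma}{\ctxty{\mathcal{C}}{\mathcal{G}}}\in\Xi$; this yields $\wfctxvarty{(\mathbb{N}\setminus\mathbb{N}_\Gamma)\cup\supportof{\theta}}{\Psi''}{\ctxty{\mathcal{C}}{\hsubst{\theta_r}{\hsubst{\theta}{\mathcal{G}}}}}$, which coincides with the required $\wfctxvarty{(\mathbb{N}\cup\supportof{\theta})\setminus\mathbb{N}_\Gamma}{\Psi''}{\ctxty{\mathcal{C}}{\hsubst{\theta_r}{\hsubst{\theta}{\mathcal{G}}}}}$ once one observes, using $\supportof{\theta}\cap\mathbb{N}=\emptyset$, that any overlap between $\mathbb{N}_\Gamma$ and $\supportof{\theta}$ is immaterial to the restricted set.

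The main obstacle I anticipate is the arity-context bookkeeping required to massage each hypothesis wellformedness judgement into the precise $\aritysum{\context{\cdot}}{\cdot}$ shape needed by Theorems~\ref{th:subst-formula}(1) and \ref{th:ctxtyinst-hsubst}(1); this must be done twice for every component of the sequent and demands careful but routine use of the two weakening lemmas. Once that bookkeeping is set up, the chain of substitution applications and the final checks against Definition~\ref{def:sequent} should follow mechanically.
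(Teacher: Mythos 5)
Your proposal is correct and follows essentially the same route as the paper's proof: both verify the two clauses of Definition~\ref{def:sequent} by making two passes (one for $\theta$, one for $\theta_r$), each time using Theorems~\ref{th:wf-form-wk} and \ref{th:ctx-ty-wk} to massage the arity and nominal-constant contexts into the shape required by Theorems~\ref{th:subst-formula} and \ref{th:ctxtyinst-hsubst}, and both reconcile $(\mathbb{N}\cup\supportof{\theta})\setminus\mathbb{N}_\Gamma$ with $(\mathbb{N}\setminus\mathbb{N}_\Gamma)\cup\supportof{\theta}$ via the disjointness $\supportof{\theta}\cap\mathbb{N}=\emptyset$. The bookkeeping you flag as the main obstacle is exactly what the paper's proof spells out, so no gap remains.
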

\begin{proof}
Let the sequent $\mathcal{S}$ be $\seq[\mathbb{N}]{\Psi}{\Xi}{\Omega}{F}$.
Following Definition~\ref{def:seq-term-subst-app}, let $\Psi''$ be the context
$(\Psi \setminus \context{\theta}) \cup \Psi'$  raised over $\supportof{\theta}$
and $\theta_r$ the corresponding raising substitution.
We must then show the following:
\begin{enumerate}
\item for each 
  $\ctxvarty{\Gamma}
            {\mathbb{N}_\Gamma}
            {\ctxty{\mathcal{C}}{\hsubst{\theta_r}{\hsubst{\theta}{\mathcal{G}}}}}
   \in\hsubst{\theta_r}{\hsubst{\theta}{\Xi}}$, 
  it is the case that
  $\mathbb{N}_{\Gamma}\subseteq\left(\mathbb{N}\cup\supportof{\theta}\right)$ and
  $\wfctxvarty{\left(\mathbb{N}\cup\supportof{\theta}\right)\setminus\mathbb{N}_{\Gamma}}
              {\Psi''}
              {\ctxty{\mathcal{C}}{\hsubst{\theta_r}{\hsubst{\theta}{\mathcal{G}}}}}$
  has a derivation, and
\item for each formula
      $\hsubst{\theta_r}{\hsubst{\theta}{F'}}\in\{\hsubst{\theta_r}{\hsubst{\theta}{F}}\}\cup\hsubst{\theta_r}{\hsubst{\theta}{\Omega}}$, 
      it is the case that the judgement
      $\wfform{\mathbb{N}\cup\supportof{\theta}\cup\STLCGamma_0\cup\Psi''}
              {\ctxsanstype{\hsubst{\theta_r}{\hsubst{\theta}{\Xi}}}}
              {\hsubst{\theta_r}{\hsubst{\theta}{F'}}}$
      has a derivation.
\end{enumerate}
In showing these two requirements we will also ensure that the relevant 
substitutions are well-defined.

We first show that requirement (1) holds.
For each
$\ctxvarty{\Gamma}
          {\mathbb{N}_{\Gamma}}
          {\ctxty{\mathcal{C}}{\hsubst{\theta_r}{\hsubst{\theta}{\mathcal{G}}}}}
   \in\hsubst{\theta_r}{\hsubst{\theta}{\Xi}}$
there must be 
$\ctxvarty{\Gamma}
          {\mathbb{N}_{\Gamma}}
          {\ctxty{\mathcal{C}}{\mathcal{G}}}\in\Xi$
which by the well-formedness of $\mathcal{S}$ is such that
$\mathbb{N}_{\Gamma}\subseteq\mathbb{N}$ and
$\wfctxvarty{\mathbb{N}\setminus\mathbb{N}_{\Gamma}}
            {\Psi}
            {\ctxty{\mathcal{C}}{\mathcal{G}}}$
has a derivation.
From the former it is obvious that the requirement 
$\mathbb{N}_{\Gamma}\subseteq\left(\mathbb{N}\cup\supportof{\theta}\right)$
is satisfied.
The substitution compatibility of $\seqsub{\theta}{\Psi'}$ with $\mathcal{S}$
entails that $\theta$ is type 
preserving with respect to $\supportof{\theta}\cup\STLCGamma_0\cup\Psi'$,
and that $\context{\theta}$ and $\Psi'\setminus\context{\theta}$ agree 
with $\Psi$ on the type assignments to the variables that are common to them.
Through an application of Theorem~\ref{th:ctx-ty-wk} there must be a derivation 
then for 
$\wfctxvarty{\mathbb{N}\setminus\mathbb{N}_{\Gamma}}
            {\aritysum{\context{\theta}}{((\Psi\setminus\context{\theta})\cup\Psi')}}
            {\ctxty{\mathcal{C}}{\mathcal{G}}}$.
Using Theorem~\ref{th:ctxtyinst-hsubst} we determine that
$\wfctxvarty{(\mathbb{N}\setminus\mathbb{N}_{\Gamma})\cup\supportof{\theta}}
            {(\Psi\setminus\context{\theta})\cup\Psi')}
            {\ctxty{\mathcal{C}}{\hsubst{\theta}{\mathcal{G}}}}$
has a derivation.
The substitution compatibility of $\seqsub{\theta}{\Psi'}$ with $\mathcal{S}$
further entails that $\supportof{\theta}$ is disjoint from $\mathbb{N}_{\Gamma}$, 
and therefore
$(\mathbb{N}\cup\supportof{\theta})\setminus\mathbb{N}_{\Gamma}=
    (\mathbb{N}\setminus\mathbb{N}_{\Gamma})\cup\supportof{\theta}$.
Since $\theta_r$ is clearly type preserving with respect to $\supportof{\theta}\cup\Psi''$,
by definition, $\context{\theta_r}$ and $\Psi''$ are disjoint, and
$\context{\theta_r}=(\Psi\setminus\context{\theta})\cup\Psi''$,
there is a derivation for
$\wfctxvarty{(\mathbb{N}\cup\supportof{\theta})\setminus\mathbb{N}_{\Gamma}}
            {\aritysum{\context{\theta_r}}{\Psi''}}
            {\ctxty{\mathcal{C}}{\hsubst{\theta}{\mathcal{G}}}}$
by Theorem~\ref{th:ctx-ty-wk}.
Noting that $\supportof{\theta_r}\subseteq{(\mathbb{N}\cup\supportof{\theta})\setminus\mathbb{N}_{\Gamma}}$,
an application of Theorem~\ref{th:ctxtyinst-hsubst} will determine that
$\wfctxvarty{(\mathbb{N}\cup\supportof{\theta})\setminus\mathbb{N}_{\Gamma}}
            {\Psi''}
            {\ctxty{\mathcal{C}}{\hsubst{\theta_r}{\hsubst{\theta}{\mathcal{G}}}}}$
must have a derivation.

We now show that requirement (2) holds.
It should be obvious that $\ctxsanstype{\Xi}$  is the same as 
$\ctxsanstype{\hsubst{\theta_r}{\hsubst{\theta}{\Xi}}}$ as these substitutions
do not change the context variables of the sequent.
For any formula 
$\hsubst{\theta_r}{\hsubst{\theta}{F'}}\in
    \{\hsubst{\theta_r}{\hsubst{\theta}{F}}\}
      \cup
    \hsubst{\theta_r}{\hsubst{\theta}{\Omega}}$,
the well-formedness of $\mathcal{S}$ ensure there must
exist a derivation for $\wfform{\mathbb{N}\cup\STLCGamma_0\cup\Psi}{\ctxsanstype{\Xi}}{F'}$.
It follows from the substitution compatibility of $\seqsub{\theta}{\Psi'}$ for
$\mathcal{S}$ that $\theta$ is arity type preserving with respect to the arity typing context
$\mathbb{N}\cup\supportof{\theta}\cup\STLCGamma_0\cup(\Psi\setminus\context{\theta})\cup\Psi'$.
Since $\context{\theta}$ and $\Psi$ agree on the type assignments
to common variables, we can conclude, using Theorem~\ref{th:wf-form-wk}, 
that there must be a derivation for the judgement
$\wfform{\aritysum{\context{\theta}}
          {\left(\mathbb{N}\cup\supportof{\theta}\cup\STLCGamma_0\cup(\Psi\setminus\context{\theta})\cup\Psi'\right)}}
         {\ctxsanstype{\Xi}}
         {F'}$.
It follows then, from Theorem~\ref{th:subst-formula}, that
$\hsubst{\theta}{F'}$ is well-defined and
there is a derivation for
$\wfform{\mathbb{N}\cup\supportof{\theta}\cup\STLCGamma_0\cup(\Psi\setminus\context{\theta})\cup\Psi'}
        {\Xi}
        {\hsubst{\theta}{F'}}$.
The raising substitution $\theta_r$ is type preserving with respect to 
$\supportof{\theta}\cup\Psi''$ by its construction and so is also type preserving 
with respect to $\mathbb{N}\cup\supportof{\theta}\cup\STLCGamma_0\cup\Psi''$.
Again using Theorem~\ref{th:wf-form-wk} we adjust the arity typing context 
from the formation judgement for $\hsubst{\theta}{F'}$ to 
the form $\aritysum{(\Psi\setminus\context{\theta}\cup\Psi')}
                   {(\mathbb{N}\cup\supportof{\theta}\cup\STLCGamma_0\cup\Psi'')}$
and noting that $\context{\theta_r}$ will be equal to
$\Psi\setminus\context{\theta}\cup\Psi'$, 
a second application of Theorem~\ref{th:subst-formula} will let us conclude 
$\hsubst{\theta_r}{\hsubst{\theta}{F'}}$ is well-defined and
$\wfform{\mathbb{N}\cup\supportof{\theta}\cup\STLCGamma_0\cup\Psi''}
              {\ctxsanstype{\Xi}}
              {\hsubst{\theta_r}{\hsubst{\theta}{F'}}}$
must have a derivation.
\end{proof}

We will also need to consider the application of substitutions for
context variables to sequents.
To be meaningfully applied, the context expressions being substituted
for the variables must be well-formed with respect to the types
associated with the variables.
In contrast to term variable substitutions, context variable
substitutions are not permitted to introduce new term variables into
the sequent and they may use nominal constants that are already present.
These notions are formalized below in the notion of appropriate substitutions.
\begin{definition}[Appropriate Context Variable Substitutions]\label{def:seq-ctxt-subst}
Let $\sigma$ be the context variable substitution
$\{G_1/\Gamma_1,\ldots,G_n/\Gamma_n\}$.
We write $\ctxvarminus{\Xi}{\sigma}$ to denote the context $\Xi$ trimmed so
as not to include context variables which are in the domain of $\sigma$.
We say that $\sigma$ is appropriate for a context variable context $\Xi$
with respect to an arity context $\Psi$ if, for $1 \leq i \leq n$, it is the case that
$\ctxvarty{\Gamma_i}{\mathbb{N}_i}{\ctxty{\mathcal{C}_i}{\mathcal{G}_i}} \in \Xi$ 
and 
$\ctxtyinst{\supportof{\sigma} \setminus \mathbb{N}_i}{\Psi}{\ctxvarminus{\Xi}{\sigma}}{\ctxty{\mathcal{C}_i}{\mathcal{G}_i}}{G_i}$
has a derivation.
The substitutions $\sigma$ is additionally said to cover $\Xi$ if $\ctxvarminus{\Xi}{\sigma}= \emptyset$.

Lifting this definition to sequents we say that $\sigma$ is appropriate for a well-formed
sequent $\mathcal{S}=\seq[\mathbb{N}]{\Psi}{\Xi}{\Omega}{F}$ if
it is appropriate for $\Xi$ with respect to $\Psi$.
\end{definition}

Context types are evidently unaffected by context variable substitutions.
Context expressions are impacted by such substitutions but, for the
right kind of substitution, they continue to be instances of relevant
context types.
This is made precise in the theorem below.

\begin{theorem}\label{th:ctxtyinst-subst}
Let $\sigma$ be a context variable substitution which is appropriate for 
$\Xi$ with respect to $\Psi$ and such that $\supportof{\sigma}\subseteq\mathbb{N}$.
If 
$\ctxtyinst{\mathbb{N}}{\Psi}{\Xi}{\ctxty{\mathcal{C}}{\mathcal{G}}}{G}$
has a derivation, then 
$\ctxtyinst{\mathbb{N}}{\Psi}{\ctxvarminus{\Xi}{\sigma}}{\ctxty{\mathcal{C}}{\mathcal{G}}}{\subst{\sigma}{G}}$
also has a derivation.
\end{theorem}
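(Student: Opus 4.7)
The plan is to proceed by induction on the derivation of $\ctxtyinst{\mathbb{N}}{\Psi}{\Xi}{\ctxty{\mathcal{C}}{\mathcal{G}}}{G}$, doing a case analysis on its final rule from among the four in Figure~\ref{fig:wfctxvar}. For the base case where $G = \emptyce$ and $\mathcal{G} = \emptycb$, there is nothing to do, as $\subst{\sigma}{\emptyce} = \emptyce$ and the same base rule yields the conclusion relative to the trimmed variable context $\ctxvarminus{\Xi}{\sigma}$.

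The interesting case is the context-variable rule, where $G = \Gamma$ for some $\ctxvarty{\Gamma}{\mathbb{N}_{\Gamma}}{\ctxty{\mathcal{C}}{\mathcal{G}'}} \in \Xi$ with $(\noms \setminus \mathbb{N}_{\Gamma}) \subseteq \mathbb{N}$. I would split on whether $\Gamma$ lies in the domain of $\sigma$. If it does not, then $\subst{\sigma}{\Gamma} = \Gamma$ and the binding for $\Gamma$ persists in $\ctxvarminus{\Xi}{\sigma}$ with its side condition unchanged, so the very same rule re-derives the conclusion. If $\Gamma$ does lie in the domain, say $\sigma$ sends $\Gamma$ to $G_{\Gamma}$, then appropriateness of $\sigma$ for $\Xi$ with respect to $\Psi$ supplies a derivation of $\ctxtyinst{\supportof{\sigma}\setminus\mathbb{N}_{\Gamma}}{\Psi}{\ctxvarminus{\Xi}{\sigma}}{\ctxty{\mathcal{C}}{\mathcal{G}'}}{G_{\Gamma}}$. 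The hypothesis $\supportof{\sigma}\subseteq\mathbb{N}$ gives $\supportof{\sigma}\setminus\mathbb{N}_{\Gamma}\subseteq\mathbb{N}$, and Theorem~\ref{th:ctx-ty-wk} then lets us enlarge the nominal-constant set to $\mathbb{N}$, yielding exactly the required instantiation judgement.

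For the two extension rules, where $G$ has the form $G_0, G'$ with $G'$ arising from a one-step schema instantiation premise $\csinstone{\mathbb{N}}{\Psi}{\mathcal{C}}{G'}$, I apply the induction hypothesis to the sub-derivation for $G_0$ to obtain the corresponding judgement for $\subst{\sigma}{G_0}$ relative to $\ctxvarminus{\Xi}{\sigma}$. The key observation is that $G'$ is generated entirely out of nominal constants and term-level material by the schema-instantiation rules in Figure~\ref{fig:ctx-schema}, so it contains no context variables; hence $\subst{\sigma}{G'} = G'$ and the premise $\csinstone{\mathbb{N}}{\Psi}{\mathcal{C}}{G'}$ carries over verbatim. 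Reassembling with the same extension rule gives a derivation for $\subst{\sigma}{G_0}, G' = \subst{\sigma}{(G_0, G')}$, as required.

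The main obstacle is the context-variable case when $\Gamma$ is in the domain of $\sigma$: there one must carefully reconcile the nominal-constant set coming from appropriateness ($\supportof{\sigma}\setminus\mathbb{N}_{\Gamma}$) with the set $\mathbb{N}$ appearing in the conclusion, and this is exactly what the side hypothesis $\supportof{\sigma}\subseteq\mathbb{N}$ together with the weakening property of Theorem~\ref{th:ctx-ty-wk} is designed to handle. Beyond this, all other cases are routine manipulations of the instantiation rules.
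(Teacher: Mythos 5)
Your proof is correct and follows essentially the same route as the paper's: induction on the instantiation derivation, with the only substantive case being the context-variable rule, split on whether $\Gamma$ is in the domain of $\sigma$ and resolved via appropriateness together with the weakening property of Theorem~\ref{th:ctx-ty-wk}. Your treatment of the extension cases (observing that the one-step block $G'$ contains no context variables) just makes explicit what the paper leaves implicit.
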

\begin{proof}
This proof is by induction on the derivation of 
$\ctxtyinst{\mathbb{N}}{\Psi}{\Xi}{\ctxty{\mathcal{C}}{\mathcal{G}}}{G}$.
When $G$ is empty the result is obvious, and when it is of the form $(G_1,G_2)$ it
follows from application of the inductive hypothesis.
When $G$ is a context variable $\Gamma$ then there must exist some
$\ctxvarty{\Gamma}{\mathbb{N}_\Gamma}{\ctxty{\mathcal{C}_\Gamma}{\mathcal{G}_\Gamma}}\in\Xi$.
Further, it must be that either
$\subst{\sigma}{\Gamma}=\Gamma$ and
$\ctxvarty{\Gamma}{\mathbb{N}_\Gamma}{\ctxty{\mathcal{C}_\Gamma}{\mathcal{G}_\Gamma}}\in\ctxvarminus{\Xi}{\sigma}$, or
$\subst{\sigma}{\Gamma}=G_i$ and,
by the appropriateness of $\sigma$,
$\ctxtyinst{\supportof{\sigma}\setminus\mathbb{N}_\Gamma}
           {\Psi}
           {\ctxvarminus{\Xi}{\sigma}}
           {\ctxty{\mathcal{C}_\Gamma}{\mathcal{G}_\Gamma}}
           {G_i}$
has a derivation.
Since $\supportof{\sigma}\subseteq\mathbb{N}$, we can infer that
$\supportof{\sigma}\setminus\mathbb{N}_\Gamma\subseteq\mathbb{N}\setminus\mathbb{N}_\Gamma$
and so by Theorem~\ref{th:ctx-ty-wk} there must be a derivation of
$\ctxtyinst{\mathbb{N}}{\Psi}{\ctxvarminus{\Xi}{\sigma}}{\ctxty{\mathcal{C}}{\mathcal{G}}}{\subst{\sigma}{\Gamma}}$
for any such $\Gamma$.
\end{proof}

As with the term substitutions the application of context substitutions 
may introduce new nominal constants, and we use the technique of raising to
permit these constants in substitutions for eigenvariables in the resulting sequent.
We formalize the application of a context variable substitution to a well-formed 
sequent when the conditions of appropriateness are met in the following definition.
We assume here and elsewhere that the application of a substitution to a set of 
formulas distributes to each member of the set.

\begin{definition}[Applying a Context Substitution to a Sequent]\label{def:seq-ctxt-subst-app}
Let $\mathcal{S}$ be a well-formed sequent $\seq[\mathbb{N}]{\Psi}{\Xi}{\Omega}{F}$
and $\sigma$ be appropriate for $\mathcal{S}$.
Further let $\Psi'$ be a version of $\Psi$ raised over $\supportof{\sigma}$
and let $\theta_r$ be the corresponding raising substitution.
Then the application of $\sigma$ to $\mathcal{S}$, 
denoted by $\subst{\sigma}{\mathcal{S}}$, is
the sequent
$\seq[\mathbb{N}\cup\supportof{\sigma}]
     {\Psi'}
     {\hsubst{\theta_r}{\ctxvarminus{\Xi}{\sigma}}}
     {\hsubst{\theta_r}{\subst{\sigma}{\Omega}}}
     {\hsubst{\theta_r}{\subst{\sigma}{F}}}$.
\end{definition}

The following theorem is the counterpart of
Theorem~\ref{th:seq-term-subs-ok} for context variable substitutions. 

\begin{theorem}\label{th:seq-ctx-var-subst-okay}
Let $\mathcal{S}$ be a well-formed sequent and let $\sigma$ be a
context variable substitution that is appropriate for
$\mathcal{S}$. Then $\subst{\sigma}{\mathcal{S}}$ is well-defined and
is a well-formed sequent.
\end{theorem}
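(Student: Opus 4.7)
The plan is to mirror the structure of the proof of Theorem~\ref{th:seq-term-subs-ok}, with the context variable substitution $\sigma$ playing the role that the term substitution played there. Writing $\mathcal{S}=\seq[\mathbb{N}]{\Psi}{\Xi}{\Omega}{F}$ and letting $\Psi'$ and $\theta_r$ denote the raised eigenvariable context and its associated raising substitution that appear in Definition~\ref{def:seq-ctxt-subst-app}, I have two things to verify about $\subst{\sigma}{\mathcal{S}}$: first, that for each remaining assignment $\ctxvarty{\Gamma}{\mathbb{N}_\Gamma}{\ctxty{\mathcal{C}}{\hsubst{\theta_r}{\mathcal{G}}}}$ in $\hsubst{\theta_r}{\ctxvarminus{\Xi}{\sigma}}$ the associated type is well-formed with respect to $(\mathbb{N}\cup\supportof{\sigma})\setminus\mathbb{N}_\Gamma$ and $\Psi'$; and second, that every formula in $\{\hsubst{\theta_r}{\subst{\sigma}{F}}\}\cup\hsubst{\theta_r}{\subst{\sigma}{\Omega}}$ is well-formed under $\mathbb{N}\cup\supportof{\sigma}\cup\STLCGamma_0\cup\Psi'$ and $\ctxsanstype{\ctxvarminus{\Xi}{\sigma}}$. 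Establishing these properties will also incidentally confirm that the substitution applications are well-defined.

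For the context variable types, I would start from $\wfctxvarty{\mathbb{N}\setminus\mathbb{N}_\Gamma}{\Psi}{\ctxty{\mathcal{C}}{\mathcal{G}}}$, which is supplied by the well-formedness of $\mathcal{S}$. Noting that $\context{\theta_r}=\Psi$ by construction, I would first apply Theorem~\ref{th:ctx-ty-wk} to weaken the eigenvariable context to $\aritysum{\context{\theta_r}}{\Psi'}$, then invoke Theorem~\ref{th:ctxtyinst-hsubst} to push $\theta_r$ into the context blocks and produce $\wfctxvarty{(\mathbb{N}\setminus\mathbb{N}_\Gamma)\cup\supportof{\theta_r}}{\Psi'}{\ctxty{\mathcal{C}}{\hsubst{\theta_r}{\mathcal{G}}}}$, and finally weaken the support set one more time via Theorem~\ref{th:ctx-ty-wk} to arrive at the required judgement.

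For the formulas, the central tool is Theorem~\ref{th:subst-formula}. For each $F'\in\{F\}\cup\Omega$, well-formedness of $\mathcal{S}$ gives $\wfform{\mathbb{N}\cup\STLCGamma_0\cup\Psi}{\ctxsanstype{\Xi}}{F'}$. Appropriateness of $\sigma$ supplies, for each replacement pair $G_i/\Gamma_i$, a derivation of $\ctxtyinst{\supportof{\sigma}\setminus\mathbb{N}_i}{\Psi}{\ctxvarminus{\Xi}{\sigma}}{\ctxty{\mathcal{C}_i}{\mathcal{G}_i}}{G_i}$; combined with the $\wfctxvarty$ judgements already contained in the well-formedness of $\mathcal{S}$ (after uniformly weakening their support sets via Theorem~\ref{th:ctx-ty-wk}), Theorem~\ref{th:ctx-var-type-instance} then yields a derivation of $\wfctx{\STLCGamma}{\ctxvarminus{\Xi}{\sigma}}{G_i}$ over a suitable arity context. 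Clause~2 of Theorem~\ref{th:subst-formula} thereby delivers a derivation of $\wfform{\mathbb{N}\cup\supportof{\sigma}\cup\STLCGamma_0\cup\Psi}{\ctxsanstype{\ctxvarminus{\Xi}{\sigma}}}{\subst{\sigma}{F'}}$, after which I can apply $\theta_r$ via clause~1 of Theorem~\ref{th:subst-formula} to obtain the desired judgement for $\hsubst{\theta_r}{\subst{\sigma}{F'}}$.

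The principal obstacle, as in the analogous term-substitution theorem, is careful bookkeeping of nominal constant sets, arity contexts, and context variable contexts across the several weakening and substitution steps. In particular, I will need to verify that the appropriateness condition, when combined with the information already encoded in the well-formedness of $\mathcal{S}$, suffices to justify the applications of Theorem~\ref{th:ctx-var-type-instance} and of clause~2 of Theorem~\ref{th:subst-formula} for every replacement $G_i$, and to check that $\supportof{\theta_r}\subseteq\supportof{\sigma}$ meshes cleanly with the $\mathbb{N}_\Gamma$ restrictions left behind for each surviving context variable.
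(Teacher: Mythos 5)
Your proposal is correct and follows essentially the same route as the paper's own proof: the paper likewise handles the context variable types by an argument "similar to that in the proof of Theorem~\ref{th:seq-term-subs-ok}" (the Theorem~\ref{th:ctx-ty-wk}/Theorem~\ref{th:ctxtyinst-hsubst} weakening-and-substitution chain you describe), and handles the formulas by exactly the sequence you give — Theorem~\ref{th:wf-form-wk}, appropriateness plus Theorem~\ref{th:ctx-var-type-instance} to justify clause~2 of Theorem~\ref{th:subst-formula} for $\subst{\sigma}{F'}$, and then clause~1 for $\theta_r$. The bookkeeping concerns you flag at the end are precisely the ones the paper's proof discharges.
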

\begin{proof}
Suppose that $\mathcal{S}=\seq[\mathbb{N}]{\Psi}{\Xi}{\Omega}{F}$ is a 
well-formed sequent and $\sigma$ an appropriate context variable substitution
for $\mathcal{S}$.
Following Definition~\ref{def:seq-ctxt-subst-app}, let $\Xi'$ be the
context variable context $\Xi$ without the context variables substituted
for by $\sigma$, $\Psi'$ be a version of 
$\Psi$ raised over $\supportof{\sigma}\setminus\mathbb{N}$,
and $\theta_r$ the corresponding raising substitution.
We must then show the following:
\begin{enumerate}
\item for each 
  $\ctxvarty{\Gamma}
            {\mathbb{N}_\Gamma}
            {\ctxty{\mathcal{C}}{\hsubst{\theta_r}{\mathcal{G}}}}
   \in\hsubst{\theta_r}{\Xi'}$, 
  it is the case that
  $\mathbb{N}_{\Gamma}\subseteq\left(\mathbb{N}\cup\supportof{\sigma}\right)$ and
  $\wfctxvarty{\left(\mathbb{N}\cup\supportof{\sigma}\right)\setminus\mathbb{N}_{\Gamma}}
              {\Psi'}
              {\ctxty{\mathcal{C}}{\hsubst{\theta_r}{\mathcal{G}}}}$
  has a derivation, and
\item for each formula
      $\hsubst{\theta_r}{\subst{\sigma}{F'}}\in\{\hsubst{\theta_r}{\subst{\sigma}{F}}\}\cup\hsubst{\theta_r}{\subst{\sigma}{\Omega}}$, 
      it is the case that the judgement
      $\wfform{\mathbb{N}\cup\supportof{\sigma}\cup\STLCGamma_0\cup\Psi'}
              {\ctxsanstype{\hsubst{\theta_r}{\Xi'}}}
              {\hsubst{\theta_r}{\subst{\sigma}{F'}}}$
      has a derivation.
\end{enumerate}
In showing these two requirements we will also ensure that the relevant 
substitutions are well-defined.

Using an argument similar to that in the proof of Theorem~\ref{th:seq-term-subs-ok},
we can show that 
$\wfctxvarty{\left(\mathbb{N}\cup\supportof{\sigma}\right)\setminus\mathbb{N}_{\Gamma}}
              {\Psi'}
              {\ctxty{\mathcal{C}}{\hsubst{\theta_r}{\mathcal{G}}}}$
has a derivation for each
$\ctxvarty{\Gamma}
            {\mathbb{N}_\Gamma}
            {\ctxty{\mathcal{C}}{\hsubst{\theta_r}{\mathcal{G}}}}
   \in\hsubst{\theta_r}{\Xi'}$.

We now show that requirement (2) holds.
Note that $\ctxsanstype{\hsubst{\theta_r}{\Xi'}}$ is the same
collection as $\ctxsanstype{\Xi'}$.
For any formula 
$\hsubst{\theta_r}{\subst{\sigma}{F'}}\in\{\hsubst{\theta_r}{\subst{\sigma}{F}}\}\cup\hsubst{\theta_r}{\subst{\sigma}{\Omega}}$
it must be that $\wfform{\mathbb{N}\cup\STLCGamma_0\cup\Psi}{\ctxsanstype{\Xi}}{F'}$
has a derivation by the well-formedness of $\mathcal{S}$.
Thus by Theorem~\ref{th:wf-form-wk} we can conclude that 
$\wfform{\mathbb{N}\cup\supportof{\sigma}\cup\STLCGamma_0\cup\Psi}{\ctxsanstype{\Xi}}{F'}$
must be derivable.
The appropriateness of $\sigma$ for $\mathcal{S}$ and an application of
Theorem~\ref{th:ctx-var-type-instance} ensure that for all $G/\Gamma\in\sigma$,
there is a derivation of
$\wfctx{\mathbb{N}\cup\supportof{\sigma}\cup\STLCGamma_0\cup\Psi}{\ctxsanstype{\Xi'}}{G}$.
We can then conclude from an application of Theorem~\ref{th:subst-formula} that
there must be a derivation of
$\wfform{\mathbb{N}\cup\supportof{\sigma}\cup\STLCGamma_0\cup\Psi}{\Xi'^-}{\subst{\sigma}{F'}}$.
Again extending the arity typing context, using Theorem~\ref{th:wf-form-wk} 
there must be a derivation for 
$\wfform{\aritysum{\Psi}{(\mathbb{N}\cup\supportof{\sigma}\cup\STLCGamma_0\cup\Psi')}}{\Xi'^-}{\subst{\sigma}{F'}}$.
Recalling that $\theta_r$ is arity type preserving with respect to 
$(\supportof{\sigma}\setminus\mathbb{N})\cup\Psi'$, and thus with respect to 
$\mathbb{N}\cup\supportof{\sigma}\cup\STLCGamma_0\cup\Psi'$, and that
$\context{\theta_r}=\Psi$,
Theorem~\ref{th:subst-formula} allows us conclude that 
$\hsubst{\theta_r}{\subst{\sigma}{F'}}$ is well-defined and that
$\wfform{\mathbb{N}\cup\supportof{\sigma}\cup\STLCGamma_0\cup\Psi'}
        {\Xi'^-}
        {\hsubst{\theta_r}{\subst{\sigma}{F'}}}$
has a derivation
\end{proof}

We are now in a position to define validity for sequents.
For a sequent containing term and context variables, this is done by
considering all their relevant substitution instances.
For a sequent devoid of variables, we base the definition on the
validity of closed formulas.

\begin{definition}\label{def:seq-validity}
A well-formed sequent of the form
$\seq[\mathbb{N}]{\emptyset}{\emptyset}{\Omega}{F}$
is valid if whenever all the formulas in $\Omega$ are valid, $F$ is a
valid formula; well-formed sequents of this form are referred to as
\emph{closed} sequents.
A well-formed sequent $\mathcal{S}$ of the form
$\seq[\mathbb{N}]{\Psi}{\Xi}{\Omega}{F}$ is valid if
for every term substitution $\theta$ that is type preserving with
respect to $\noms \cup \Theta_0$ and such that
$\Psi = \context{\theta}$ and
$\langle \theta, \emptyset \rangle$ is 
substitution compatible with $\mathcal{S}$, and for every context
substitution $\sigma$ that is appropriate and closed for 
$\hsubstseq{\emptyset}{\theta}{\mathcal{S}}$, it is the case that
$\subst{\sigma}{\hsubstseq{\emptyset}{\theta}{\mathcal{S}}}$ is valid.
Note that each such
$\subst{\sigma}{\hsubstseq{\emptyset}{\theta}{\mathcal{S}}}$
will be a well-formed and closed sequent in these circumstances and we 
shall refer to it as the closed instance of $\mathcal{S}$ identified
by $\theta$ and $\sigma$. 
\end{definition}

The following theorem, whose proof is obvious, provides the basis for
using our proof system for determining the validity of formulas.

\begin{theorem}\label{seq-fmla-validity}
Let $F$ be a formula such that
$\wfform{\noms \cup \Theta_0}{\emptyset}{F}$ is derivable and let
$\mathbb{N}$ be the set of nominal constants that appear in $F$.
Then the sequent
$\seq[\mathbb{N}]{\emptyset}{\emptyset}{\emptyset}{F}$ is
well-formed.
Moreover, $F$ is valid if and only if
$\seq[\mathbb{N}]{\emptyset}{\emptyset}{\emptyset}{F}$ is.
\end{theorem}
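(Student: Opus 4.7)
My plan is to verify the two assertions in turn, both of which unfold cleanly from the relevant definitions.

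For the well-formedness of $\seq[\mathbb{N}]{\emptyset}{\emptyset}{\emptyset}{F}$, Definition~\ref{def:sequent} specializes, with $\Xi$, $\Psi$, and $\Omega$ all empty, to the single requirement that $\wfform{\mathbb{N}\cup\Theta_0}{\emptyset}{F}$ be derivable. This is not an immediate consequence of Theorem~\ref{th:wf-form-wk}, which supplies only weakening: here we are restricting the arity context from $\noms\cup\Theta_0$ down to $\mathbb{N}\cup\Theta_0$. I would establish the required derivation by an induction on the structure of the derivation of $\wfform{\noms\cup\Theta_0}{\emptyset}{F}$. The key observation is that each rule in Figure~\ref{fig:wfform} inspects its arity context only through the atomic arity typing judgements for terms (Figure~\ref{fig:aritytyping}) and kinding judgements for types (Figure~\ref{fig:arity-kinding}) attached to atomic formulas and their embedded contexts, and the leaves of those subderivations are membership checks of the form $n : \alpha \in \Theta$, $x : \alpha \in \Theta$, or $c : \alpha \in \Theta$. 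Since $\mathbb{N}$ contains every nominal constant appearing in $F$, and the quantifier cases of Figure~\ref{fig:wfform} only extend the arity context with new eigenvariables, every nominal-constant lookup encountered remains satisfied after restriction, while checks for ordinary constants target $\Theta_0$ and are likewise unaffected. A small auxiliary strengthening lemma for arity typing and arity kinding, proved by the obvious induction on those derivations, packages this observation cleanly.

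For the equivalence of validity, the argument is a direct unfolding of Definition~\ref{def:seq-validity}. The sequent $\seq[\mathbb{N}]{\emptyset}{\emptyset}{\emptyset}{F}$ is a closed sequent under that definition, since both its eigenvariable and its context variable contexts are empty. Its validity then reduces to the implication ``whenever every formula in $\emptyset$ is valid, $F$ is valid,'' whose hypothesis is vacuous; hence the sequent is valid precisely when $F$ is valid.

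The only mild obstacle is the arity-context strengthening needed in the first part. Because this strengthening is not stated as a separate result in the prior development—Theorem~\ref{th:wf-form-wk} goes in the opposite direction—I would either record it as an auxiliary lemma or fold it inline into the induction. Either way the argument is routine thanks to the strictly local character of the well-formedness, arity typing, and arity kinding rules.
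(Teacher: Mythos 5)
Your proposal is correct and follows the same route the paper intends: the paper simply declares this proof ``obvious,'' and the intended argument is exactly the unfolding of Definition~\ref{def:sequent} and Definition~\ref{def:seq-validity} that you give. Your one substantive addition---observing that passing from $\wfform{\noms\cup\Theta_0}{\emptyset}{F}$ to $\wfform{\mathbb{N}\cup\Theta_0}{\emptyset}{F}$ is a \emph{strengthening} of the arity context not covered by Theorem~\ref{th:wf-form-wk}, and justifying it by the fact that the formation rules consult the arity context only through membership checks for symbols occurring in $F$---is a point the paper elides, and your treatment of it is sound.
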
 

In Section~\ref{sec:permutations}, we had noted an invariance of
validity for formulas under permutations of nominal constants.
We observe an analogous property concerning sequents.
We first explain what it means to apply a permutation to a sequent.

\begin{definition}[Applying Permutations of Nominal Constants to Sequents]\label{def:perm-seq}
The application of a permutation $\pi$ to a context variable type distributes to the
constituent block instances and the application to a collection of
formulas or a context variable context distributes to the members of
the collection. 
The application to a sequent
$\seq[\mathbb{N}]{\Psi}{\Xi}{\Omega}{F}$ 
yields the sequent
$\seq[\mathbb{N}']{\Psi}{\permute{\pi}{\Xi'}}{\permute{\pi}{\Omega'}}{\permute{\pi}{F'}}$
where $\mathbb{N}'=\{n' |\ n\in\mathbb{N}\ \mbox{\rm and}\ \pi(n)=n'\}$.
\end{definition}

Some useful results about the interaction of permutations with sequents are
given below.
Theorems~\ref{th:perm-hsubst} and \ref{th:perm-subst}
consider the composition of permutations and substitutions.
Their proofs are straightforward, relying on the invariance of the 
derivability of judgements under permutation which can be verified
easily by induction on such derivations.
Theorem~\ref{th:perm-valid} is the analogous property of Theorem~\ref{th:perm-form}, 
and it's proof is straightforward given that result and the definition of validity.

\begin{theorem}\label{th:perm-hsubst}
Let $\pi$ be a permutation of the nominal constants, let $\mathcal{S}$ be
a well-formed sequent and let $\langle \theta, \Psi \rangle$ be
substitution compatible with $\permute{\pi}{\mathcal{S}}$.
Then $\langle \permute{\inv{\pi}}{\theta}, \Psi \rangle$ is substitution compatible with
$\mathcal{S}$ and
$\hsubstseq{\Psi}{\theta}{(\permute{\pi}{\mathcal{S}})} =
\permute{\pi}{(\hsubstseq{\Psi}{\permute{\inv{\pi}}{\theta}}{\mathcal{S}})}$. 
\end{theorem}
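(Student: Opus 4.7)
The plan is to establish the two claims separately. For substitution compatibility, unpacking Definition~\ref{def:seq-term-subst} reduces the task to three routine checks. Arity-type preservation of $\permute{\inv{\pi}}{\theta}$ follows because $\pi$ is arity-type preserving on $\noms$ and $\Psi$ contains no nominal constants, so the relevant arity typing judgements are invariant under permutation (an easy induction on their derivations, of the kind already used in the proof of Theorem~\ref{th:perm-lf}). The disjointness condition $\supportof{\permute{\inv{\pi}}{\theta}} \cap \mathbb{N} = \emptyset$ follows by applying $\pi$ to the hypothesized $\supportof{\theta} \cap \permute{\pi}{\mathbb{N}} = \emptyset$, since $\supportof{\permute{\inv{\pi}}{\theta}} = \permute{\inv{\pi}}{\supportof{\theta}}$. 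The type-agreement condition is immediate because $\context{\permute{\inv{\pi}}{\theta}} = \context{\theta}$: permutation does not touch eigenvariable names or their arity types.

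For the equality, the central tool will be a commutation lemma: for any expression $E$ (a term, type, context expression, context variable type, or formula) and any substitution $\theta'$ for which $\hsubst{\theta'}{\permute{\pi}{E}}$ is well-defined, one has $\hsubst{\theta'}{\permute{\pi}{E}} = \permute{\pi}{\hsubst{\permute{\inv{\pi}}{\theta'}}{E}}$. This will be proved by induction on the derivation of the hereditary substitution, appealing to Theorem~\ref{th:uniqueness} to see that the relevant substitutions on both sides are defined together. The key observations driving the induction are that permutations act only on nominal constants and so commute with the replacement of term variables; that $\pi$ is arity-type preserving, so the normalization triggered when a head variable is replaced by an abstraction lines up on the two sides; and that $\pi \circ \inv{\pi}$ is the identity, which handles the base cases for variables and nominal constants.

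It remains to apply the commutation lemma componentwise and to align the raising substitutions in the two instances of Definition~\ref{def:seq-term-subst-app}. Fix a listing $n_1,\ldots,n_k$ of $\supportof{\theta}$ and let $\theta_r$ be the associated raising substitution used in forming $\hsubstseq{\Psi}{\theta}{(\permute{\pi}{\mathcal{S}})}$; choose the raising substitution $\theta_r'$ for $\hsubstseq{\Psi}{\permute{\inv{\pi}}{\theta}}{\mathcal{S}}$ to raise over $\inv{\pi}(n_1),\ldots,\inv{\pi}(n_k)$ (which lists $\supportof{\permute{\inv{\pi}}{\theta}}$) using the same raised eigenvariables $y_i$. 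Arity-type preservation of $\pi$ guarantees that the raised eigenvariable contexts coincide, and a direct computation gives $\theta_r = \permute{\pi}{\theta_r'}$. Two uses of the commutation lemma then yield $\hsubst{\theta_r}{\hsubst{\theta}{\permute{\pi}{E}}} = \permute{\pi}{\hsubst{\theta_r'}{\hsubst{\permute{\inv{\pi}}{\theta}}{E}}}$ on each component of the sequent, while the support sets agree since $\permute{\pi}{(\mathbb{N} \cup \permute{\inv{\pi}}{\supportof{\theta}})} = \permute{\pi}{\mathbb{N}} \cup \supportof{\theta}$. The main obstacle will be the atomic-to-canonical case of the commutation lemma, where substituting a $\lambda$-abstraction for a head variable triggers a nested recursive substitution; bijectivity and arity-type preservation of $\pi$ make everything line up, but the induction must be arranged (following the pattern of Theorem~\ref{th:uniqueness}) on substitution size first and then on expression structure.
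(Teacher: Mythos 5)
Your proposal is correct and follows the same route the paper indicates for Theorem~\ref{th:perm-hsubst}: the paper dispatches it in one sentence as ``straightforward, relying on the invariance of the derivability of judgements under permutation,'' and your commutation lemma $\hsubst{\theta'}{\permute{\pi}{E}} = \permute{\pi}{\hsubst{\permute{\inv{\pi}}{\theta'}}{E}}$, proved by induction on the hereditary substitution derivation, is precisely that invariance made explicit. The additional bookkeeping you supply---checking the three compatibility conditions and choosing matching raising substitutions so that $\theta_r = \permute{\pi}{\theta_r'}$---is exactly what is needed to make the (admittedly ambiguous) equality of Definition~\ref{def:seq-term-subst-app} literal, so no further changes are required.
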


\begin{theorem}\label{th:perm-subst}
Let $\pi$ be a permutation of the nominal constants, let $\mathcal{S}$
be a well-formed sequent and let $\sigma$ be a context variable
substitution that is appropriate for $\permute{\pi}{\mathcal{S}}$.
Then $\inv{\pi}{\sigma}$ is appropriate for $\mathcal{S}$ and
$\subst{\sigma}{(\permute{\pi}{\mathcal{S}})} =
\permute{\pi}{(\subst{\permute{\inv{\pi}}{\sigma}}{\mathcal{S}})}$.
\end{theorem}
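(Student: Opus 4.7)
The plan is to prove both parts by carefully unfolding the definitions and reducing everything to the invariance of the underlying wellformedness and instantiation judgements under permutations of nominal constants.

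First, I would establish the auxiliary fact that the judgement $\ctxtyinst{\mathbb{N}}{\Psi}{\Xi}{\ctxty{\mathcal{C}}{\mathcal{G}}}{G}$ is preserved under permutation application: if it has a derivation, then so does the judgement obtained by applying $\pi$ to $\mathbb{N}$, $\Xi$, $\mathcal{G}$, and $G$. This follows by an obvious induction on the derivation, using a similar observation for $\csinstone{\mathbb{N}}{\Psi}{\mathcal{C}}{G}$ (and, in turn, $\declinst{\mathbb{N}}{\Delta}{G}{\theta}$ and the underlying arity typing and hereditary substitution judgements), all of which hold because a permutation is an arity-preserving bijection on $\noms$ that leaves term variables untouched.

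For the first part of the theorem I would unpack appropriateness of $\sigma$ for $\permute{\pi}{\mathcal{S}}$. For each $G_i/\Gamma_i \in \sigma$, the entry $\ctxvarty{\Gamma_i}{\mathbb{N}_i}{\ctxty{\mathcal{C}_i}{\mathcal{G}_i}} \in \Xi$ appears in $\permute{\pi}{\Xi}$ as $\ctxvarty{\Gamma_i}{\permute{\pi}{\mathbb{N}_i}}{\ctxty{\mathcal{C}_i}{\permute{\pi}{\mathcal{G}_i}}}$, and the required instance judgement over $\permute{\pi}{\Xi}$ is derivable. Applying $\inv{\pi}$ via the auxiliary lemma, and observing that $\permute{\inv{\pi}}{(\supportof{\sigma}\setminus\permute{\pi}{\mathbb{N}_i})}=\supportof{\permute{\inv{\pi}}{\sigma}}\setminus\mathbb{N}_i$ and $\permute{\inv{\pi}}{\ctxvarminus{\permute{\pi}{\Xi}}{\sigma}}=\ctxvarminus{\Xi}{\permute{\inv{\pi}}{\sigma}}$ (since $\supportof{\cdot}$ and the trimming operation commute with permutation), yields the judgements required for $\permute{\inv{\pi}}{\sigma}$ to be appropriate for $\mathcal{S}$.

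For the second part I would unfold the definitions of substitution application on each side. Applying $\sigma$ to $\permute{\pi}{\mathcal{S}}$ requires choosing a version $\Psi_1$ of $\Psi$ raised over $\supportof{\sigma}$ with associated raising substitution $\theta_{r,1}$, and applying $\permute{\inv{\pi}}{\sigma}$ to $\mathcal{S}$ requires choosing a version $\Psi_2$ of $\Psi$ raised over $\supportof{\permute{\inv{\pi}}{\sigma}}$ with raising substitution $\theta_{r,2}$. Exploiting the latitude left by Definition~\ref{def:seq-ctxt-subst-app}, I would choose $\Psi_2=\Psi_1$ (using the same fresh eigenvariables) and $\theta_{r,2}=\permute{\inv{\pi}}{\theta_{r,1}}$, which is legitimate because $\supportof{\permute{\inv{\pi}}{\sigma}}=\permute{\inv{\pi}}{\supportof{\sigma}}$ and the type assignments of the freshly introduced eigenvariables depend only on the listing of these constants up to arity, which permutations preserve. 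With these compatible choices, the equality reduces to the three commutation facts $\permute{\pi}{\subst{\permute{\inv{\pi}}{\sigma}}{E}}=\subst{\sigma}{\permute{\pi}{E}}$, $\permute{\pi}{\hsubst{\theta_{r,2}}{E'}}=\hsubst{\theta_{r,1}}{\permute{\pi}{E'}}$, and $\ctxvarminus{\permute{\pi}{\Xi}}{\sigma}=\permute{\pi}{\ctxvarminus{\Xi}{\permute{\inv{\pi}}{\sigma}}}$, each of which follows by structural induction on the expression involved, ultimately because permutation application and substitution application are both defined componentwise and commute at the atomic level.

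The main obstacle is the bookkeeping around the raising step: making the compatible choice of raising substitutions explicit, verifying that $\theta_{r,2}=\permute{\inv{\pi}}{\theta_{r,1}}$ is itself a valid raising substitution for $\Psi$ over the appropriate nominal constants, and confirming that the support sets and arity typing contexts on the two sides coincide. Once this is handled, the remaining commutation arguments are routine.
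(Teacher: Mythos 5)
Your proposal is correct and follows essentially the route the paper takes: the paper only sketches this proof, saying it is "straightforward, relying on the invariance of the derivability of judgements under permutation which can be verified easily by induction on such derivations," which is exactly your auxiliary lemma about $\ctxtyinst$ (and the underlying instantiation judgements) plus the definitional unfolding you describe. Your careful treatment of the raising step, exploiting the acknowledged latitude in Definition~\ref{def:seq-ctxt-subst-app} to choose $\Psi_2=\Psi_1$ and $\theta_{r,2}=\permute{\inv{\pi}}{\theta_{r,1}}$, is a legitimate filling-in of details the paper elides rather than a departure from its approach.
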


\begin{theorem}\label{th:perm-valid}
If $\pi$ is a permutation of the nominal constants and $\mathcal{S}$
is a well-formed closed sequent that is valid, then
$\permute{\pi}{\mathcal{S}}$ is also a closed, valid sequent.
\end{theorem}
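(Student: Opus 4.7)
The plan is to proceed directly from the definition of validity for a closed sequent together with the invariance of formula validity under permutations already established in Theorem~\ref{th:perm-form}. First I would observe that a closed sequent has the form $\seq[\mathbb{N}]{\emptyset}{\emptyset}{\Omega}{F}$, and so by Definition~\ref{def:perm-seq} its permuted image is $\seq[\mathbb{N}']{\emptyset}{\emptyset}{\permute{\pi}{\Omega}}{\permute{\pi}{F}}$ where $\mathbb{N}' = \{\pi(n) \mid n \in \mathbb{N}\}$. This is again a closed sequent, since the eigenvariable and context variable contexts remain empty under the action of $\pi$.

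Next I would dispose of the well-formedness requirement. Since $\mathcal{S}$ is well-formed, each formula in $\{F\} \cup \Omega$ satisfies a judgement of the form $\wfform{\mathbb{N} \cup \STLCGamma_0}{\emptyset}{F'}$. Permuting nominal constants is an arity-type preserving bijection on $\noms$, so the structure of any derivation of wellformedness for $F'$ can be transcribed verbatim into one for $\permute{\pi}{F'}$ relative to $\mathbb{N}' \cup \STLCGamma_0$; this is an easy induction on the wellformedness derivation analogous to the argument behind Theorem~\ref{th:perm-lf}.

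For the validity claim, I would unfold Definition~\ref{def:seq-validity}: suppose every formula in $\permute{\pi}{\Omega}$ is valid, and set out to prove $\permute{\pi}{F}$ is valid. Each element of $\permute{\pi}{\Omega}$ has the form $\permute{\pi}{F'}$ for some $F' \in \Omega$, so by Theorem~\ref{th:perm-form} every $F' \in \Omega$ is itself valid. The hypothesized validity of $\mathcal{S}$ then yields that $F$ is valid, and one final application of Theorem~\ref{th:perm-form} in the forward direction gives validity of $\permute{\pi}{F}$, as required.

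There is no substantive obstacle here: Theorem~\ref{th:perm-form} already carries all the semantic content, and the remaining work is purely bookkeeping about the shape of closed sequents. The only point that warrants mild care is ensuring that one uses Theorem~\ref{th:perm-form} in both directions (via the inverse permutation $\inv{\pi}$ for the assumptions and via $\pi$ itself for the conclusion), which is immediate since permutations of nominal constants form a group.
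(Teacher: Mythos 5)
Your argument is correct and is exactly the route the paper intends: the text states that this theorem "is the analogous property of Theorem~\ref{th:perm-form}, and it's proof is straightforward given that result and the definition of validity," which is precisely your unfolding of Definition~\ref{def:seq-validity} for closed sequents combined with Theorem~\ref{th:perm-form} applied in both directions. The bookkeeping about the permuted support set and preservation of well-formedness is also in order.
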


\section{The Core Proof Rules}
\label{sec:core-logic}
We consider in this section a collection of proof rules that
internalize the interpretation of the logical symbols that may appear
in formulas and also some properties that flow from the meanings of
sequents.
More specifically, the first subsection below presents some structural
rules, the second subsections identifies axioms and the cut rule, and
the third subsection introduces rules for the logical symbols.
Beyond presenting the rules, we are interested in showing that the
rules are sound and that in their context we may limit our attention
to only well-formed sequents in trying to construct a derivation for a
well-formed sequent. 
The latter property is verified by showing that the premise sequents
of each rule are well-formed if the conclusion sequent is.

\subsection{Structural Rules}

This subcollection of rules is presented in
Figure~\ref{fig:rules-structural}.
These rules can be subcategorized into those
that allow for weakening and contracting the assumption set in a
sequent and those that permit the weakening and strengthening of the
support set, the eigenvariable context, and the context variable context.
Rules of the second subcategory encode the fact that vacuous
(well-formed) extensions to the bindings manifest in a sequent will
not impact its validity.
The strengthening and weakening rules for contexts include premises
that force modifications to context variable types and the
satisfaction of typing judgements that are necessary to ensure the 
well-formedness of the sequents in any 
application of the rule. 

\begin{figure}
\[
\begin{array}{cc}
\infer[\weakening]
      {\seq[\mathbb{N}]{\Psi}{\Xi}{\setand{\Omega}{F_2}}{F_1}}
      {\seq[\mathbb{N}]{\Psi}{\Xi}{\Omega}{F_1}}
&
\infer[\contraction]
      {\seq[\mathbb{N}]{\Psi}{\Xi}{\setand{\Omega}{F_2}}{F_1}}
      {\seq[\mathbb{N}]{\Psi}{\Xi}{\setand{\Omega}{F_2, F_2}}{F_1}}
\end{array}
\]
\smallskip
\[
\infer[\sstr]
      {\seq[\mathbb{N}]{\Psi}{\Xi}{\Omega}{F}}
      {\begin{array}{c}
         \Xi=\{
         \ctxvarty{\Gamma_i}
                  {(\mathbb{N}_i\setminus\mathbb{N}')}
                  {\ctxty{\mathcal{C}_i}{\mathcal{G}_i}}
               \ |\ 
               \ctxvarty{\Gamma_i}
                        {\mathbb{N}_i}
                        {\ctxty{\mathcal{C}_i}{\mathcal{G}_i}}\in\overline{\Xi}
             \}
         \\
         \left\{\wfctxvarty{(\mathbb{N},\mathbb{N}')\setminus\mathbb{N}_i}
                           {(\Psi,\Psi')}
                           {\ctxty{\mathcal{C}_i}{\mathcal{G}_i}}\ \middle|\ 
                 \ctxvarty{\Gamma_i}
                          {\mathbb{N}_i} 
                          {\ctxty{\mathcal{C}_i}{\mathcal{G}_i}}\in\Xi'\right\}
         \\
         \seq[\mathbb{N},\mathbb{N}']{\Psi,\Psi'}{\overline{\Xi},\Xi'}{\Omega}{F}
       \end{array}} 
\]
\smallskip
\[
\infer[\sweak]
      {\seq[\mathbb{N},\mathbb{N}']{\Psi,\Psi'}{\overline{\Xi},\Xi'}{\Omega}{F}}
      {\begin{array}{c}
         \Xi=\{
               \ctxvarty{\Gamma_i}{(\mathbb{N}_i\setminus\mathbb{N}')}{\ctxty{\mathcal{C}_i}{\mathcal{G}_i}}
               \ |\ 
               \ctxvarty{\Gamma_i}
                        {\mathbb{N}_i}
                        {\ctxty{\mathcal{C}_i}{\mathcal{G}_i}}\in\overline{\Xi}
             \}
         \\
         \left\{ \wfctxvarty{(\mathbb{N}\setminus\mathbb{N}_i)}
                            {\Psi}
                            {\ctxty{\mathcal{C}_i}{\mathcal{G}_i}}
                      \ \middle|\ 
                     \ctxvarty{\Gamma_i}
                              {\mathbb{N}_i}
                              {\ctxty{\mathcal{C}_i}{\mathcal{G}_i}}\in\Xi \right\}                            
         \\
         \left\{\wfform{\mathbb{N}\cup\STLCGamma_0\cup\Psi}
                       {\{\Gamma\ |\ \ctxvarty{\Gamma}
                                       {\mathbb{N}}
                                       {\ctxty{\mathcal{C}}
                                              {\mathcal{G}}}\in\Xi\}}
                       {F'}
                  \ \middle|\ 
                  F'\in\Omega\cup\{F\} \right\}
         \\
         \seq[\mathbb{N}]{\Psi}{\Xi}{\Omega}{F}
       \end{array}}
\]
\caption{The Structural Rules}\label{fig:rules-structural}
\end{figure}

The following theorem shows that these rules require the proof of only
well-formed sequents in constructing a proof of a well-formed sequent.

\begin{theorem}\label{th:structural-wf}
The following property holds for each rule in
Figure~\ref{fig:rules-structural}: if the conclusion sequent is
well-formed, the premises expressing typing conditions have
derivations and the conditions expressed by the other, non-sequent
premises are satisfied, then all the sequent premises must
be well-formed.
\end{theorem}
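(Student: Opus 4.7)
The plan is to prove this theorem by case analysis on the four rules in Figure~\ref{fig:rules-structural}. For each rule I must verify two things about any premise sequent of the form $\seq[\mathbb{N}^*]{\Psi^*}{\Xi^*}{\Omega^*}{F^*}$: (a)~that for every $\ctxvarty{\Gamma}{\mathbb{N}_\Gamma}{\ctxty{\mathcal{C}}{\mathcal{G}}}\in\Xi^*$ we have $\mathbb{N}_\Gamma\subseteq\mathbb{N}^*$ and $\wfctxvarty{\mathbb{N}^*\setminus\mathbb{N}_\Gamma}{\Psi^*}{\ctxty{\mathcal{C}}{\mathcal{G}}}$ is derivable, and (b)~that for every $F'\in\Omega^*\cup\{F^*\}$ the judgement $\wfform{\mathbb{N}^*\cup\STLCGamma_0\cup\Psi^*}{\ctxsanstype{\Xi^*}}{F'}$ is derivable.

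The cases of \weakening\ and \contraction\ will be essentially immediate. In both rules the support set, eigenvariable context, and context variable context in the premise coincide exactly with those of the conclusion, and the premise's formula set is a subset of the conclusion's (treating assumptions as a set collapses the duplicated $F_2$ of the contraction premise). Hence conditions (a) and (b) for the premise follow directly from the corresponding facts for the conclusion.

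The two interesting cases are \sstr\ and \sweak, which I would handle symmetrically. For \sstr, the premise sequent is $\seq[\mathbb{N},\mathbb{N}']{\Psi,\Psi'}{\overline{\Xi},\Xi'}{\Omega}{F}$. For any $\ctxvarty{\Gamma}{\mathbb{N}_\Gamma}{\ctxty{\mathcal{C}}{\mathcal{G}}}\in\overline{\Xi}$, the relation between $\overline\Xi$ and $\Xi$ given in the rule premises tells me the corresponding entry in $\Xi$ is $\ctxvarty{\Gamma}{(\mathbb{N}_\Gamma\setminus\mathbb{N}')}{\ctxty{\mathcal{C}}{\mathcal{G}}}$, and well-formedness of the conclusion supplies $\wfctxvarty{\mathbb{N}\setminus(\mathbb{N}_\Gamma\setminus\mathbb{N}')}{\Psi}{\ctxty{\mathcal{C}}{\mathcal{G}}}$; noting that $\mathbb{N}\setminus(\mathbb{N}_\Gamma\setminus\mathbb{N}')\subseteq(\mathbb{N}\cup\mathbb{N}')\setminus\mathbb{N}_\Gamma$, an application of Theorem~\ref{th:ctx-ty-wk} gives what is needed. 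For entries in $\Xi'$, the required typing judgement is an explicit side condition of the rule. For condition (b), each $F'\in\Omega\cup\{F\}$ is well-formed with respect to $\mathbb{N}\cup\STLCGamma_0\cup\Psi$ and $\ctxsanstype{\Xi}$ by hypothesis; since $\ctxsanstype{\Xi}=\ctxsanstype{\overline\Xi}\subseteq\ctxsanstype{\overline\Xi\cup\Xi'}$, Theorem~\ref{th:wf-form-wk} lifts this to the required form.

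The \sweak\ case runs in the opposite direction. Here the conclusion has the richer context and the premise is $\seq[\mathbb{N}]{\Psi}{\Xi}{\Omega}{F}$. For (a), the context-variable typing judgements needed for each entry of $\Xi$ are given as explicit side conditions of the rule; for (b), the formula well-formedness judgements for each $F'\in\Omega\cup\{F\}$ are likewise given as side conditions. So once the side conditions are in hand, well-formedness of the premise is immediate. I expect the primary obstacle to be keeping the bookkeeping clear in the \sstr\ case, specifically tracking the interaction between $\mathbb{N}_\Gamma$, $\mathbb{N}'$, and the trimming operation on context variable contexts; but all the set-theoretic manipulations are routine and the only nontrivial tool required is Theorem~\ref{th:ctx-ty-wk} together with Theorem~\ref{th:wf-form-wk}.
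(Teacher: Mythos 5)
Your proposal is correct and follows essentially the same route as the paper's proof: the \weakening\ and \contraction\ cases are immediate, the \sstr\ case combines the rule's explicit side conditions for $\Xi'$ with Theorem~\ref{th:ctx-ty-wk} for the entries of $\overline{\Xi}$ and Theorem~\ref{th:wf-form-wk} for the formulas, and the \sweak\ case reads the required judgements directly off the non-sequent premises. The set-containment bookkeeping you flag (relating $\mathbb{N}\setminus(\mathbb{N}_\Gamma\setminus\mathbb{N}')$ to $(\mathbb{N}\cup\mathbb{N}')\setminus\mathbb{N}_\Gamma$, which relies on $\mathbb{N}'$ being fresh relative to $\mathbb{N}$) is handled at the same level of detail in the paper.
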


\begin{proof}
We consider each rule described in Figure~\ref{fig:rules-structural}.

\case{\weakening}
For a well-formed conclusion sequent 
$\seq[\mathbb{N}]{\Psi}{\Xi}{\setand{\Omega}{F_2}}{F_1}$
we must show that the premise sequent $\seq[\mathbb{N}]{\Psi}{\Xi}{\Omega}{F_1}$ 
is also well-formed.
Since the context variable contexts are the same in both sequents, the 
goal formulas are the same, and 
$\Omega\subseteq\Omega\cup\{F_2\}$ we can easy determine the well-formedness of
the premise by applying the definition of well-formedness to the conclusion sequent.

\case{\contraction}
Suppose that the sequent
$\seq[\mathbb{N}]{\Psi}{\Xi}{\setand{\Omega}{F_2}}{F_1}$
is well-formed.
Noting that the only addition to the premise sequent is a copy of the
well-formed formula $F_2$ to the assumption set, and otherwise the 
two sequents are identical, it is clear that the 
premise sequent must also be valid.

\case{\sstr}
Suppose that a sequent $\seq[\mathbb{N}]{\Psi}{\Xi}{\Omega}{F}$ is well-formed,
the context variable context $\Xi$ is of the form described in the rule,
and for each
$\ctxvarty{\Gamma_i}{\mathbb{N}_i}{\ctxty{\mathcal{C}_i}{\mathcal{G}_i}}\in\Xi'$ 
the judgement
$\wfctxvarty{\left((\mathbb{N},\mathbb{N}')\setminus\mathbb{N}_i\right)}
            {(\Psi,\Psi')}
            {\ctxty{\mathcal{C}_i}{\mathcal{G}_i}}$ 
has a derivation.
By Theorem~\ref{th:wf-form-wk} it is clear that the well-formedness of
the formulas in $\Omega\cup\{F\}$ is preserved by the extensions to the
support set, arity typing context, and context variable context.
The premises of this rule ensure that the context variable types in $\Xi'$
are well-formed, and so it only remains to conclude those in
$\overline{\Xi}$ are as well.
For each 
$\ctxvarty{\Gamma_i}
          {\mathbb{N}_i}
          {\ctxty{\mathcal{C}_i}{\mathcal{G}_i}}\in\overline{\Xi}$
there is a 
$\ctxvarty{\Gamma_i}
          {(\mathbb{N}_i\setminus\mathbb{N}')}
          {\ctxty{\mathcal{C}_i}{\mathcal{G}_i}}\in\Xi$
for which there is a derivation of
$\wfctxvarty{\left(\mathbb{N}\setminus(\mathbb{N}_i\setminus\mathbb{N}')\right)}
            {\Psi}
            {\ctxty{\mathcal{C}_i}{\mathcal{G}_i}}$
by the well-formedness of the conclusion sequent.
So by Theorem~\ref{th:ctx-ty-wk}, 
$\wfctxvarty{\left(\mathbb{N}\setminus\mathbb{N}_i\right)}
            {\Psi,\Psi'}
            {\ctxty{\mathcal{C}_i}{\mathcal{G}_i}}$
is derivable for each entry in $\overline{\Xi}$.
Therefore the premise sequent must be well-formed.

\case{\sweak}
The well-formedness of $\seq[\mathbb{N}]{\Psi}{\Xi}{\Omega}{F}$ 
is obvious given the sets of well-formedness derivations in the premises.
\end{proof}

The argument for soundness has an intuitively obvious structure in all
the cases other than when the support set is affected.
In the case when the support set is expanded, the reasoning is still
straightforward and is based on observing that the instances of the
weakened form of the sequent will be a subset of the instances of the 
premise sequent.
When the support set is smaller in the conclusion sequent, the
argument is a little more subtle: we must use the fact that
permutations of nominal constants that do not appear in the context
types or the formulas in a sequent do not impact on validity.
This observation is embedded in the following lemma.

\begin{lemma}\label{lem:seq-equiv}
Let $\seq[\mathbb{N}]{\Psi}{\Xi}{\Omega}{F}$ and 
$\seq[\mathbb{N}']{\Psi'}{\Xi'}{\Omega}{F}$ be well-formed sequents such that
$\mathbb{N}'\subseteq\mathbb{N}$, $\Psi'\subseteq\Psi$, and there is some
subset $\overline{\Xi}$ for the context variable context $\Xi$ where
$\Xi'=
  \left\{
    \ctxvarty{\Gamma_i}
             {\left(\mathbb{N}_i\setminus(\mathbb{N}\setminus\mathbb{N}')\right)}
             {\ctxty{\mathcal{C}_i}{\mathcal{G}_i}}
      \ \middle|\ 
    \ctxvarty{\Gamma_i}{\mathbb{N}_i}{\ctxty{\mathcal{C}_i}{\mathcal{G}_i}}\in\overline{\Xi}
  \right\}$.
Then the sequent $\seq[\mathbb{N}]{\Psi}{\Xi}{\Omega}{F}$ will be valid if and only if
$\seq[\mathbb{N}']{\Psi'}{\Xi'}{\Omega}{F}$ is valid.
\end{lemma}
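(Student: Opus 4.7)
The plan is to prove the biconditional by establishing a two-way correspondence between the closed instances of the two sequents. The argument rests on two observations. First, by Definition~\ref{def:seq-validity}, the validity of a closed sequent $\seq[\mathbb{N}'']{\emptyset}{\emptyset}{\Omega'}{F'}$ depends only on the formulas $\Omega'$ and $F'$ and not on the support set $\mathbb{N}''$. Second, by well-formedness of $\mathcal{S}_2 = \seq[\mathbb{N}']{\Psi'}{\Xi'}{\Omega}{F}$, every formula in $\Omega\cup\{F\}$ refers only to nominal constants in $\mathbb{N}'$, term variables in $\Psi'$, and context variables in $\ctxsanstype{\Xi'}=\ctxsanstype{\overline{\Xi}}$. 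Hence substitutions acting only on the ``extra'' nominal constants, term variables, or context variables leave the instantiated formulas unchanged.

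For the direction $\mathcal{S}_1\ \mbox{valid} \Rightarrow \mathcal{S}_2\ \mbox{valid}$, I would take arbitrary $\langle\theta_2,\emptyset\rangle$ substitution compatible with $\mathcal{S}_2$ and $\sigma_2$ appropriate and closed for $\hsubstseq{\emptyset}{\theta_2}{\mathcal{S}_2}$. First, if $\supportof{\theta_2}\cup\supportof{\sigma_2}$ meets $\mathbb{N}\setminus\mathbb{N}'$, apply a permutation that fixes $\mathbb{N}'$ and relocates the offending constants to fresh ones outside $\mathbb{N}$; by Theorems~\ref{th:perm-hsubst}, \ref{th:perm-subst}, and \ref{th:perm-valid} the resulting closed instance is valid iff the original is, so we may assume that the two supports are disjoint from $\mathbb{N}\setminus\mathbb{N}'$. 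Next, extend $\theta_2$ to a $\theta_1$ with $\context{\theta_1}=\Psi$ by assigning, for each $x:\alpha\in\Psi\setminus\Psi'$, an arbitrary closed term of arity type $\alpha$—something of the shape $\lflam{x_1}{\ldots\lflam{x_k}{n}}$ for a suitable $n\in\noms$ always suffices. Extend $\sigma_2$ to a closed $\sigma_1$ covering $\Xi$ by choosing, for each $\Gamma\in\ctxsanstype{\Xi}\setminus\ctxsanstype{\overline{\Xi}}$ with context type $\ctxty{\mathcal{C}}{\mathcal{G}}$, the instance obtained by concatenating the pre-committed blocks in $\mathcal{G}$; an easy induction on $\mathcal{G}$ using the rules of Figure~\ref{fig:wfctxvar} yields the required $\ctxtyinst$ derivation. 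The resulting closed instance of $\mathcal{S}_1$ has the same assumption and goal formulas as the closed instance of $\mathcal{S}_2$, so validity of $\mathcal{S}_1$ delivers validity of that $\mathcal{S}_1$ instance, whence validity of the $\mathcal{S}_2$ instance.

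For the converse, I would take arbitrary $\theta_1,\sigma_1$ identifying a closed instance of $\mathcal{S}_1$ and restrict them: let $\theta_2$ consist of the triples of $\theta_1$ whose variables lie in $\Psi'$, and let $\sigma_2$ be the restriction of $\sigma_1$ to $\ctxsanstype{\overline{\Xi}}$. Substitution compatibility of $\langle\theta_2,\emptyset\rangle$ with $\mathcal{S}_2$ follows from that of $\langle\theta_1,\emptyset\rangle$ with $\mathcal{S}_1$ using $\mathbb{N}'\subseteq\mathbb{N}$. For appropriateness of $\sigma_2$ for $\hsubstseq{\emptyset}{\theta_2}{\mathcal{S}_2}$, arrange via permutation that $\supportof{\sigma_1}\cap(\mathbb{N}\setminus\mathbb{N}')=\emptyset$; then for each context variable with adjusted annotation $\mathbb{N}_i\setminus(\mathbb{N}\setminus\mathbb{N}')$ in $\Xi'$, the set $\supportof{\sigma_2}\setminus(\mathbb{N}_i\setminus(\mathbb{N}\setminus\mathbb{N}'))$ coincides with $\supportof{\sigma_2}\setminus\mathbb{N}_i$, and the required $\ctxtyinst$ derivations transfer from $\sigma_1$'s appropriateness via Theorem~\ref{th:ctx-ty-wk}. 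Since the instantiated formulas again agree, validity of the $\mathcal{S}_2$ closed instance implies validity of the corresponding $\mathcal{S}_1$ closed instance. The main obstacle I foresee is precisely this bookkeeping of nominal support conditions when transferring $\ctxtyinst$ derivations between the two sequents in the presence of the altered annotations on $\Xi'$; the recurring remedy is to invoke permutations of nominal constants, which preserve validity by Theorem~\ref{th:perm-valid}, to sidestep clashes between substitution supports and the nominal constants in $\mathbb{N}\setminus\mathbb{N}'$, after which the matching between substitutions on either side becomes routine.
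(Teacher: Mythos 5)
Your proposal is correct and follows essentially the same strategy as the paper's proof: both directions proceed by relating closed instances of the two sequents, exploiting the fact that well-formedness of the smaller sequent forces the formulas to mention only symbols from $\mathbb{N}'$, $\Psi'$, and $\ctxsanstype{\overline{\Xi}}$, and both invoke permutations of nominal constants together with Theorem~\ref{th:perm-valid} to sidestep clashes with $\mathbb{N}\setminus\mathbb{N}'$. The only cosmetic difference is that in the forward direction the paper applies the (permuted) substitutions partially and then quantifies over arbitrary closed completions of the leftover variables, whereas you construct one explicit closing extension -- both work for the same reason, namely that the extra variables do not occur in the formulas.
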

\begin{proof}
Let $\mathcal{S}$ denote the sequent $\seq[\mathbb{N}]{\Psi}{\Xi}{\Omega}{F}$ and
$\mathcal{S}'$ the sequent $\seq[\mathbb{N}']{\Psi'}{\Xi'}{\Omega}{F}$.
Given that both $\mathcal{S}$ and $\mathcal{S}'$ are well-formed sequents,
no context variables in $(\Xi\setminus\Xi')$ appear in the formula $F$ or the
formulas in $\Omega$, no variables in $(\Psi\setminus\Psi')$ appear in the
context variable context $\Xi'$, the formula $F$, or the formulas in $\Omega$, 
and no nominal constants in $(\mathbb{N}\setminus\mathbb{N}')$ appear in the
context variable context $\Xi'$, the formula $F$, or the formulas in $\Omega$.

We consider each direction of the implication.

\case{($\Rightarrow$)}
Let $\theta$ and $\sigma$ identify an arbitrary closed instance of the sequent
$\mathcal{S}'$.
Then $\seqsub{\theta}{\emptyset}$ is substitution compatible for $\mathcal{S}'$ and
$\sigma$ is appropriate for $\hsubstseq{\emptyset}{\theta}{\mathcal{S}'}$.
So $\supp{\theta}\cap\mathbb{N}'=\emptyset$, but it is possible that 
$\supp{\theta}\cap\mathbb{N}$ is not empty.
Let $\pi$ be a permutation which maps nominal constants in 
$(\mathbb{N}\setminus\mathbb{N}')$ to some new names not appearing in 
$\mathbb{N}$, $\theta$, or $\sigma$.
Then $\seqsub{\permute{\pi}{\theta}}{(\Psi\setminus\Psi')}$ will be
substitution compatible for the sequent $\mathcal{S}$.
And since the restricted names for context variables in $\overline{\Xi}$ 
only differ from the restricted sets of names from $\Xi'$ by nominal constants
appearing in $\mathbb{N}\setminus\mathbb{N}'$ the substitution 
$\permute{\pi}{\sigma}$ will be appropriate for 
$\hsubstseq{(\Psi\setminus\Psi')}{\permute{\pi}{\theta}}{\mathcal{S}}$.

So consider the sequent 
$\subst{\permute{\pi}{\sigma}}{\hsubstseq{(\Psi\setminus\Psi')}{\permute{\pi}{\theta}}{\mathcal{S}}}$.
Let $\theta'$ and $\sigma'$ identify an arbitrary closed instance of this sequent.
This closed sequent must be the same as the closed instance of $\mathcal{S}$ identified by 
$(\theta'\circ\permute{\pi}{\theta})$ and $(\sigma'\circ\permute{\pi}{\sigma})$
given that $\theta'$, $\theta$, $\sigma'$, and $\sigma$ are all closed substitutions.
Since $\mathcal{S}$ is valid by assumption, this particular closed instance identified by 
$(\theta'\circ\permute{\pi}{\theta})$ and $(\sigma'\circ\permute{\pi}{\sigma})$
will thus be valid.
But we know that no context variables in $\Xi\setminus\Xi'$ or variables in
$\Psi\setminus\Psi'$ appear in $F$ or any formula in $\Omega$.
Further, both $\theta$ and $\sigma$ are closed substitutions and so these variables
also cannot appear in $\subst{\permute{\pi}{\sigma}}{\hsubst{\permute{\pi}{\theta}}{F}}$
or any formula in $\subst{\permute{\pi}{\sigma}}{\hsubst{\permute{\pi}{\theta}}{\Omega}}$.
Thus the closed sequent
$\subst{\sigma'}{\hsubstseq{\emptyset}{\theta'}{(\subst{\permute{\pi}{\sigma}}{\hsubstseq{(\Psi\setminus\Psi')}{\permute{\pi}{\theta}}{\mathcal{S}}})}}$
is in fact equivalent to the closed sequent
$\subst{\permute{\pi}{\sigma}}{\hsubstseq{(\Psi\setminus\Psi')}{\permute{\pi}{\theta}}{\mathcal{S}'}}$
which must therefore be a valid closed sequent.
Since $\pi$ was constructed such that it does not permute any of the nominal constants
in $\mathbb{N}'$ this is the same as the closed sequent
$\permute{\pi}{(\subst{\sigma}{\hsubst{\theta}{\mathcal{S}'}})}$.
By Theorem~\ref{th:perm-valid}, validity of closed sequents is preserved by 
permutations and thus we can conclude that $\subst{\sigma}{\hsubst{\theta}{\mathcal{S}'}}$ 
is valid.

Since all closed instances of $\mathcal{S}'$ must therefore be valid, this sequent
is valid.

\case{($\Leftarrow$)}
Let $\theta$ and $\sigma$ identify an arbitrary closed instance of the sequent
$\mathcal{S}$.
Then $\seqsub{\theta}{\emptyset}$ is substitution compatible with $\mathcal{S}$ and
$\sigma$ is appropriate for $\hsubstseq{\emptyset}{\theta}{\mathcal{S}}$.
If some formula in $\subst{\sigma}{\hsubst{\theta}{\Omega}}$ is not valid
then $\subst{\sigma}{\hsubstseq{\emptyset}{\theta}{\mathcal{S}}}$ is vacuously valid.
So suppose instead that all the formulas in 
$\subst{\sigma}{\hsubst{\theta}{\Omega}}$ are valid.

Given that $\mathbb{N}'\subseteq\mathbb{N}$ and $\Psi'\subseteq\Psi$ this
$\seqsub{\theta}{\emptyset}$ is also substitution compatible with $\mathcal{S}'$.
Further, because $\dom{\Xi'}\subseteq\dom{\Xi}$ and the annotation sets on
variables in $\Xi'$ will all be subsets of the annotation set for that 
context variable in $\Xi$ the substitution $\sigma$ will be appropriate
for $\hsubstseq{\emptyset}{\theta}{\mathcal{S}'}$.
Given the validity of $\mathcal{S}'$, the closed instance
$\subst{\sigma}{\hsubstseq{\emptyset}{\theta}{\mathcal{S}'}}$ must be valid.
Given the assumption that all formulas in 
$\subst{\sigma}{\hsubst{\theta}{\Omega}}$
are valid, the validity of this closed sequent means
$\subst{\sigma}{\hsubst{\theta}{F}}$ must be valid.
Thus $\subst{\sigma}{\hsubstseq{\emptyset}{\theta}{\mathcal{S}}}$ is valid in this case as well.

Since all closed instances of the sequent $\mathcal{S}$ are valid, it is a valid sequent.
\end{proof}

We may now establish the soundness of the structural rules.

\begin{theorem}\label{th:structural-sound}
The following property holds for every instance of each of the rules
in Figure~\ref{fig:rules-structural}: if the premises expressing
typing judgements are derivable, the conditions described in the other
non-sequent premises are satisfied and the premise sequent is valid,
then the conclusion sequent must also be valid. 
\end{theorem}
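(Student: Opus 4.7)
The plan is to proceed by case analysis on which of the four rules is being applied, treating the first two (structural assumption manipulation) as direct consequences of the validity definition and reducing the last two (context manipulation) to Lemma~\ref{lem:seq-equiv}.

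For \weakening, I would fix an arbitrary closed instance of the conclusion sequent $\seq[\mathbb{N}]{\Psi}{\Xi}{\setand{\Omega}{F_2}}{F_1}$, identified by some $\theta$ and $\sigma$. The same pair $\theta,\sigma$ is substitution compatible with the premise sequent and identifies a closed instance of it (note that wellformedness of the premise is already handled by Theorem~\ref{th:structural-wf}). If every formula in $\subst{\sigma}{\hsubst{\theta}{(\Omega\cup\{F_2\})}}$ is valid, then in particular every formula in $\subst{\sigma}{\hsubst{\theta}{\Omega}}$ is valid; by validity of the premise's closed instance we get the validity of $\subst{\sigma}{\hsubst{\theta}{F_1}}$, which is exactly what is needed for the conclusion. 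For \contraction, the argument is even simpler: the assumption sets $\Omega\cup\{F_2\}$ and $\Omega\cup\{F_2,F_2\}$ are equal as sets, so the premise and conclusion have literally the same closed instances, and validity transfers directly.

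The interesting cases are \sstr and \sweak, which I would both reduce to Lemma~\ref{lem:seq-equiv}. Inspecting the side conditions of the two rules together with Theorem~\ref{th:structural-wf} ensures that both premise and conclusion are well-formed. It then remains to match the data of the two sequents against the hypotheses of the lemma: the larger support $\mathbb{N},\mathbb{N}'$ contains $\mathbb{N}$, the larger eigenvariable context $\Psi,\Psi'$ contains $\Psi$, and the smaller context variable context $\Xi$ is, by the very definition given in the rule, obtained from the subset $\overline{\Xi}\subseteq\overline{\Xi},\Xi'$ of the larger context variable context by stripping the nominal constants in $\mathbb{N}'=(\mathbb{N},\mathbb{N}')\setminus\mathbb{N}$ from each annotation. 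This is exactly the shape required by Lemma~\ref{lem:seq-equiv}, so the two sequents are equivalent in validity. The \sstr direction uses the ``$\Rightarrow$'' half of the iff, and \sweak uses the ``$\Leftarrow$'' half.

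The main obstacle in this theorem is not technical but organizational: one has to carefully rename variables between the local notation of the rules and the notation of Lemma~\ref{lem:seq-equiv} (both of which overload $\mathbb{N},\mathbb{N}',\Xi,\overline{\Xi}$) in order to see that the side conditions align. Once that bookkeeping is done, no further argument about closed instances, permutations, or substitution application is needed here, since all of the delicate work has already been absorbed into the lemma.
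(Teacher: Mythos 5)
Your proposal is correct and follows essentially the same route as the paper: the \weakening\ and \contraction\ cases are argued directly from the definition of validity via closed instances, and the \sstr\ and \sweak\ cases are both discharged by the two directions of Lemma~\ref{lem:seq-equiv}, exactly as in the paper's proof.
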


\begin{proof}
We consider each rule described in Figure~\ref{fig:rules-structural}.

\case{\weakening}
Given the structure of the sequents, it is clear that any substitutions
identifying a closed instance of the conclusion sequent will also identify
a closed instance of the premise sequent.
Since $\Omega\subseteq\Omega\cup\{F_2\}$, it is also clear that
for any substitutions $\theta$ and $\sigma$ identifying a closed 
instance of the conclusion sequent, the formula 
$\subst{\sigma}{\hsubst{\theta}{F_1}}$ must be valid whenever
every formula in 
$\subst{\sigma}{\hsubst{\theta}{\Omega}}\cup\{\subst{\sigma}{\hsubst{\theta}{F_2}}\}$
is valid.
But then every closed instance of the sequent
$\seq[\mathbb{N}]{\Psi}{\Xi}{\setand{\Omega}{F_2}}{F_1}$
is valid, and thus the sequent itself is valid.

\case{\contraction}
It is clear from the structure of the sequents that any $\theta$ and $\sigma$ identifying
a closed instance of the conclusion sequent will also identify a closed 
instance of the premise sequent.
Furthermore, the collection of formulas 
$\setand{\subst{\sigma}{\hsubst{\theta}{\Omega}}}
        {\subst{\sigma}{\hsubst{\theta}{F_2}},
           \subst{\sigma}{\hsubst{\theta}{F_2}}}$
are clearly all valid whenever the collection of formulas
$\setand{\subst{\sigma}{\hsubst{\theta}{\Omega}}}{\subst{\sigma}{\hsubst{\theta}{F_2}}}$
are all valid.
Therefore from the validity of the premise sequent we can conclude that the
conclusion sequent is also valid as every closed instance of this sequent
must be valid.

\case{\sstr\ or \sweak}
Both cases are resolved through an application of Lemma~\ref{lem:seq-equiv}
given the well-formedness of the conclusion and premise sequents.
\end{proof}

\subsection{The Axiom and the Cut Rule}

The two rules of interest here are also related to the interpretation
of sequents but focus more specifically on the logical relationship
between the formula collections.
The \cut\ rule codifies the notion of lemmas: if we can show the
validity of a formula relative to a given assumption set, then this
formula can be included in the assumptions to simplify the
reasoning process.
The \id\ rule recognizes the validity of a sequent in which the
conclusion formula appears in the assumption set.

In its simplest form, the \id\ rule would require the conclusion
formula to be included as is in the assumption set, possibly with a
renaming of the bound variables that appear in it.
It is possible, and also pragmatically useful, to generalize this form
to allow also for a permutation of nominal constants in the formulas
in the process of matching.
However, this has to be done with care to ensure that identity under
the considered permutations continues to hold even after later
instantiations of term and context variables appearing in the
formulas.
The specific form of equivalence for formulas under permutations that
we will use is the content of the following definition.

\begin{definition}[Formula Equivalence]\label{def:form-eq}
The equivalence of two context expressions $G_1$ and $G_2$ with respect to a context
variable context $\Xi$ and a permutation $\pi$, written $\ctxeq{\Xi}{\pi}{G_2}{G_1}$,
is a relation defined by the following three clauses:
\begin{enumerate}
\item $\ctxeq{\Xi}{\pi}{\emptyctx}{\emptyctx}$ holds for any $\Xi$ and $\pi$.

\item $\ctxeq{\Xi}{\pi}{\Gamma}{\Gamma}$ holds if 
$\ctxvarty{\Gamma_i}{\mathbb{N}_i}{\ctxty{\mathcal{C}}{\mathcal{G}}}\in\Xi$ 
and $\supp{\pi}\subseteq\mathbb{N}_i$.

\item If $G_1=(G_1',n_1:A_1)$ and $G_2=(G_2',n_2:A_2)$ then 
$\ctxeq{\Xi}{\pi}{G_2}{G_1}$ holds if
$\permute{\pi}{n_2}$ is identical to $n_1$, $\permute{\pi}{A_2}$ is
  identical to $A_1$ (up to a renaming of bound variables), and
  $\ctxeq{\Xi}{\pi}{G_2}{G_1}$ holds.
\end{enumerate}
Two atomic formulas $\fatm{G'}{M':A'}$ and $\fatm{G}{M:A}$ are
considered equivalent with respect to $\Xi$ and $\pi$ if  $G'$ and $G$
are equivalent with respect to this $\Xi$  and $\pi$ and
$\permute{\pi}{M'}$ and $M$ and $\permute{\pi}{A'}$ and $A$ are
respectively identical up to a renaming of bound variables. 
Two arbitrary formulas are considered equivalent with respect to $\Xi$
and $\pi$ if their component parts are so equivalent, allowing, of
course, for a renaming of variables bound by quantifiers.
Equivalence of formulas is represented by the judgement
$\formeq{\Xi}{\pi}{F'}{F}$.
\end{definition}

\begin{figure}
\[
\infer[\id]
      {\seq[\mathbb{N}]
           {\Psi}
           {\Xi}
           {\Omega}
           {F}}
      {F'\in\Omega 
         & 
       \supp{\pi}\subseteq\mathbb{N}
         &
       \formeq{\Xi}
              {\pi}
              {F'}
              {F}}
\]
\smallskip
\[
\infer[\cut]{\seq[\mathbb{N}]{\Psi}{\Xi}{\Omega}{F_1}}
            {\seq[\mathbb{N}]{\Psi}{\Xi}{\Omega}{F_2} &
             \seq[\mathbb{N}]{\Psi}{\Xi}{\setand{\Omega}{F_2}}{F_1} &
             \wfform{\mathbb{N}\cup\STLCGamma_0\cup\Psi}{\dom{\Xi}}{F_2}}
\]
\caption{The Axiom and the Cut Rule}
\label{fig:rules-other}
\end{figure}

The \id\ and the \cut\ rules are presented in
Figure~\ref{fig:rules-other}.
The \id\ rule limits the permutations that can be considered to be
ones that rename only nominal constants appearing in the support set
of the sequent.
The \cut\ rule includes a premise that ensures the wellformedness of
the cut formula.
The following theorem shows that these require the proofs of only
well-formed sequents in constructing a proof of a well-formed
sequent. 
\begin{theorem}\label{th:other-wf}
The following property holds of the \id\ and the \cut\ rule: if the
conclusion sequent is well-formed, the premises expressing typing
conditions have derivations and the conditions expressed by the other,
non-sequent premises are satisfied, then the premise sequents must be
well-formed.
\end{theorem}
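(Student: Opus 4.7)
The plan is to handle the two rules separately, since each has a quite different structure. For the \id\ rule there are no sequent premises at all, so the property to be established is vacuously true: there are no premise sequents whose well-formedness needs to be checked. I would note this explicitly and then move on to the more substantive case.

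For the \cut\ rule, I would show that each of its two sequent premises is well-formed by appealing directly to Definition~\ref{def:sequent}. The two premise sequents share their support set $\mathbb{N}$, eigenvariable context $\Psi$, and context variable context $\Xi$ with the conclusion sequent, so the wellformedness conditions on context variable types in $\Xi$ (namely, that
$\wfctxvarty{\mathbb{N}\setminus\mathbb{N}_{\Gamma}}{\Psi}{\ctxty{\mathcal{C}}{\mathcal{G}}}$
is derivable for each assignment in $\Xi$) carry over unchanged from the well-formedness of the conclusion sequent. It remains only to check that each formula in the assumption set and goal position of each premise is well-formed with respect to $\mathbb{N}\cup\STLCGamma_0\cup\Psi$ and $\ctxsanstype{\Xi}$.

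For the first premise $\seq[\mathbb{N}]{\Psi}{\Xi}{\Omega}{F_2}$, every formula in $\Omega$ inherits its wellformedness judgement from the corresponding judgement for the conclusion sequent, and the wellformedness of $F_2$ is given directly by the non-sequent premise $\wfform{\mathbb{N}\cup\STLCGamma_0\cup\Psi}{\dom{\Xi}}{F_2}$. For the second premise $\seq[\mathbb{N}]{\Psi}{\Xi}{\setand{\Omega}{F_2}}{F_1}$, the same observation dispenses with $\Omega$ and $F_2$, and the wellformedness of $F_1$ comes from the wellformedness of the conclusion sequent (where $F_1$ sits in the goal position).

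I do not expect any real obstacle here: the argument is essentially a bookkeeping check against the definition of a well-formed sequent, with the extra premise of \cut\ tailored precisely to supply the one piece of information, the wellformedness of the cut formula $F_2$, that is not already available from the wellformedness of the conclusion. The only subtlety worth flagging is the need to verify that the parametrizing contexts $\mathbb{N}\cup\STLCGamma_0\cup\Psi$ and $\ctxsanstype{\Xi}$ used in the wellformedness judgement for $F_2$ exactly match those required by the premise sequents, which they do by construction.
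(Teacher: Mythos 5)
Your proposal is correct and matches the paper's own treatment, which simply observes that the claim is vacuous for \id\ (no sequent premises) and "obvious" for \cut; your write-up just spells out the bookkeeping against Definition~\ref{def:sequent} that the paper leaves implicit. No gaps.
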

\begin{proof}
The requirement is vacuously true for the \id\ rule and it has an
obvious proof for the \cut\ rule.
\end{proof}

In showing the soundness of the \id\ rule, we will need the
observation that the equivalence of formulas modulo permutations is
preserved under the kinds of substitutions that have to be considered
in determining the validity of sequents.
This observation is the content of the two lemmas below. 

\begin{lemma}\label{lem:equiv-hsub}
Suppose the for some formulas $F_1$ and $F_2$,
$\formeq{\Xi}{\pi}{F_2}{F_1}$
is holds.
If $\theta$ is a hereditary substitution such that
$\supp{\theta}\cap\supp{\pi}=\emptyset$ and
both $\hsub{\theta}{F_2}{F_2'}$ and $\hsub{\theta}{F_1}{F_1'}$ 
have derivations for some $F_1'$ and $F_2'$, then
$\formeq{\hsubst{\theta}{\Xi}}{\pi}{\hsubst{\theta}{F_2}}{\hsubst{\theta}{F_1}}$
holds.
\end{lemma}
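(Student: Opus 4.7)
The plan is to proceed by induction on the structure of $F_1$ (equivalently, $F_2$), noting that the equivalence judgement $\formeq{\Xi}{\pi}{F_2}{F_1}$ forces $F_1$ and $F_2$ to have matching top-level structure. The propositional cases ($\ftrue$, $\ffalse$, $\supset$, $\wedge$, $\vee$) are routine: by the definition of formula equivalence the component subformulas are pairwise equivalent under $\pi$, hereditary substitution distributes to those components, and the induction hypothesis applies. The quantifier cases ($\forall$, $\exists$, $\Pi$) are similar after a renaming of the bound variable so that it is fresh to $\domain{\theta}\cup\range{\theta}$, using the fact that equivalence and hereditary substitution both respect such $\alpha$-renaming and that the context variable quantifier simply augments $\Xi$ with a binding whose context type is unchanged by a term substitution.

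The substantive case is the atomic one, where $F_1 = \fatm{G_1}{M_1:A_1}$ and $F_2 = \fatm{G_2}{M_2:A_2}$. Here I must show that $\hsubst{\theta}{G_2}$ and $\hsubst{\theta}{G_1}$ are equivalent under $\pi$ relative to $\hsubst{\theta}{\Xi}$, and that $\permute{\pi}{\hsubst{\theta}{M_2}} = \hsubst{\theta}{M_1}$ and $\permute{\pi}{\hsubst{\theta}{A_2}} = \hsubst{\theta}{A_1}$, both up to renaming of bound variables. The context equivalence is handled by a secondary induction on the derivation of $\ctxeq{\Xi}{\pi}{G_2}{G_1}$ from Definition~\ref{def:form-eq}. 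The empty-context case is immediate. For the context variable case, we observe that the hereditary substitution on $\Xi$ only rewrites the types inside block instances in $\mathcal{G}$; it leaves the nominal constant restriction $\mathbb{N}_i$ on each context variable unchanged, so the condition $\supp{\pi}\subseteq\mathbb{N}_i$ transfers verbatim. For the extension case $G_1 = (G_1',n_1:A_1')$, $G_2 = (G_2',n_2:A_2')$, the inductive hypothesis handles $G_1'$ and $G_2'$, and the remaining obligations on the new bindings reduce to the commutation property discussed below.

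The key technical ingredient supporting both the atomic obligation and the extension case of context equivalence is that permutation and hereditary substitution commute when their supports are disjoint: if $\supp{\pi}\cap\supp{\theta}=\emptyset$, then $\pi$ fixes every nominal constant occurring in $\range{\theta}$, so $\permute{\pi}{\theta}=\theta$, and a straightforward induction on the derivation of $\hsub{\theta}{E}{E'}$ (handled simultaneously for canonical terms, atomic terms, and types, following the rules of Figures~\ref{fig:hsub} and \ref{fig:hsubtypes}) yields $\permute{\pi}{\hsubst{\theta}{E}} = \hsubst{\theta}{\permute{\pi}{E}}$. Combining this with the hypothesis $\permute{\pi}{M_2} = M_1$ (up to bound variable renaming) gives $\permute{\pi}{\hsubst{\theta}{M_2}} = \hsubst{\theta}{\permute{\pi}{M_2}} = \hsubst{\theta}{M_1}$, and similarly for types.

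The main obstacle is book-keeping around $\alpha$-conversion: formula equivalence is defined modulo renaming of bound variables, but hereditary substitution is only defined on concrete representatives, and its rules impose freshness side conditions on bound variables with respect to $\domain{\theta}\cup\range{\theta}$. I would address this by invoking the usual convention that representatives are chosen with bound variables fresh for everything in sight (in particular for $\theta$, $\pi$, and $\Xi$), so that the commutation identity is literal rather than up-to-renaming at each inductive step. With this convention in place, the argument reduces in each case to a direct application of the commutation fact and the inductive hypothesis, as sketched above.
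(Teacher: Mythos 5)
Your proposal is correct and follows essentially the same route as the paper's proof: induction on the structure of the equivalence (trivial propositional cases, distribution through quantifiers), with the atomic case resolved by the commutation of permutation and hereditary substitution under disjoint supports (using $\permute{\pi}{\theta}=\theta$) and a secondary induction establishing $\ctxeq{\hsubst{\theta}{\Xi}}{\pi}{\hsubst{\theta}{G_2}}{\hsubst{\theta}{G_1}}$. The only cosmetic difference is that you state the commutation lemma directly as $\permute{\pi}{\hsubst{\theta}{E}}=\hsubst{\theta}{\permute{\pi}{E}}$, whereas the paper phrases it via $\hsubst{\permute{\pi}{\theta}}{(\permute{\pi}{E})}$ before specializing; these are equivalent.
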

\begin{proof}
This is proved by induction on the structure of $\formeq{\Xi}{\pi}{F_2}{F_1}$.
Consider the possible cases for the structure of $F_1$ (or equivalently, $F_2$).

The case where the formula is $\ftrue$ or $\ffalse$, this 
observation is obvious.
The cases when $F_1$ is $\fand{F}{F'}$, $\for{F}{F'}$, $\fimp{F}{F'}$,
$\fall{x:\alpha}{F}$, $\fexists{x:\alpha}{F}$, or $\fctx{\Gamma}{\mathcal{C}}{F}$
are easily argued with recourse
to the induction hypothesis and by noting that the definition of 
substitution distributes to the component parts.

When $F_1$ and $F_2$ are atomic formulas 
$\fatm{G_1}{M_1:A_1}$ and $\fatm{G_2}{M_2:A_2}$ respectively,
$\permute{\pi}{M_2}=_{\alpha}M_1$, $\permute{\pi}{A_2}=_{\alpha}A_1$, 
and $\ctxeq{\Xi}{\pi}{G_2}{G_1}$ will be derivable by definition.
We observe that for any LF term $M$ (resp. type $A$), if $\hsubst{\theta}{M}$ 
(resp. $\hsubst{\theta}{A}$) is defined then $\hsubst{\permute{\pi}{\theta}}{M}$
(resp. $\hsubst{\permute{\pi}{\theta}}{A}$) is defined and
$\permute{\pi}{(\hsubst{\theta}{M})}=\hsubst{\permute{\pi}{\theta}}{(\permute{\pi}{M})}$
(resp. $\permute{\pi}{(\hsubst{\theta}{A})}=\hsubst{\permute{\pi}{\theta}}{(\permute{\pi}{A})}$).
This observation can be proved easily by induction on the structure of $M$, and using
this result the observation for $A$ proved by induction on $A$.
The support sets of $\theta$ and $\pi$ must be disjoint, thus by this observation
$\hsubst{\theta}{(\permute{\pi}{A_2})}=_{\alpha}\permute{\pi}{(\hsubst{\theta}{A_2})}$
and
$\hsubst{\theta}{(\permute{\pi}{M_2})}=_{\alpha}\permute{\pi}{(\hsubst{\theta}{M_2})}$,
and therefore that
$\permute{\pi}{(\hsubst{\theta}{A_2})}=_{\alpha}\hsubst{\theta}{A_1}$ and 
$\permute{\pi}{(\hsubst{\theta}{M_2})}=_{\alpha}\hsubst{\theta}{M_1}$.
What remains is to show that 
$\ctxeq{\hsubst{\theta}{\Xi}}{\pi}{\hsubst{\theta}{G_2}}{\hsubst{\theta}{G_1}}$, which
we argue by an induction on the context expression $G_1$.

If $G_1$ is $\emptyctx$ or some $\Gamma_i$, then clearly
$G_2=G_1$, and further, $\hsubst{\theta}{G_1}=G_1$.
Thus the equivalence
$\ctxeq{\hsubst{\theta}{\Xi}}{\pi}{\hsubst{\theta}{G_2}}{\hsubst{\theta}{G_1}}$
has an obvious derivation.
If the context expression $G_1$ is of the form $(G_1',n_1':A_1')$, then 
$G_2$ is of the form $(G_2',n_2':A_2')$ and
$\permute{\pi}{n_2'}$ is the same nominal constant as $n_1'$, 
$\permute{\pi}{A_2'}$ is equal to $A_1'$ up to renaming of bound variables, and
$\ctxeq{\Xi}{\pi}{G_2'}{G_1'}$ is derivable.
By induction then,
$\ctxeq{\hsubst{\theta}{\Xi}}{\pi}{\hsubst{\theta}{G_2'}}{\hsubst{\theta}{G_1'}}$
will be derivable.
From the assumption that $\supp{\theta}\cap\supp{\pi}=\emptyset$ we can infer that
$\permute{\pi}{\theta}=\theta$.
Thus
$\permute{\pi}{(\hsubst{\theta}{A_2'})}=_{\alpha}\hsubst{\theta}{(\permute{\pi}{A_2'})}$
by our earlier observation about permutations on LF types, and so
$\permute{\pi}{(\hsubst{\theta}{A_2'})}=_{\alpha}\hsubst{\theta}{A_1'}$.
From this we can construct a derivation for
$\ctxeq{\hsubst{\theta}{\Xi}}{\pi}{\hsubst{\theta}{G_2}}{\hsubst{\theta}{G_1}}$.

Thus we have shown that 
$\ctxeq{\hsubst{\theta}{\Xi}}{\pi}{\hsubst{\theta}{G_2}}{\hsubst{\theta}{G_1}}$
holds and therefore can conclude that
$\formeq{\hsubst{\theta}{\Xi}}{\pi}{\hsubst{\theta}{F_2}}{\hsubst{\theta}{F_1}}$
must have a derivation.
\end{proof}

\begin{lemma}\label{lem:equiv-sub}
Suppose $\sigma$ is an appropriate substitution for $\Xi$ with respect to $\Psi$
and that $\formeq{\Xi}{\pi}{F'}{F}$ has a derivation.
Then 
$\formeq{\ctxvarminus{\Xi}{\sigma}}{\pi}{\subst{\sigma}{F'}}{\subst{\sigma}{F}}$
will have a derivation.
\end{lemma}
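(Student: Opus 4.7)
The plan is to proceed by induction on the derivation of $\formeq{\Xi}{\pi}{F'}{F}$, examining the last rule used, which is essentially an induction on the common formula structure of $F$ and $F'$. The propositional constant cases ($\ftrue$, $\ffalse$) are immediate since $\sigma$ leaves these formulas unchanged. The cases of $\fand{F_1}{F_2}$, $\for{F_1}{F_2}$, $\fimp{F_1}{F_2}$, $\fall{x:\alpha}{F_1}$, $\fexists{x:\alpha}{F_1}$, and $\fctx{\Gamma}{\mathcal{C}}{F_1}$ follow from the induction hypothesis together with the observation that $\subst{\sigma}{\cdot}$ distributes over logical connectives and quantifiers, respecting bound-variable renaming. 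For the context quantifier case one also uses that $\sigma$ remains appropriate for the extended context variable context $\Xi \cup \{\Gamma\}$ after a renaming to avoid clashes.

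The substantive case is when $F = \fatm{G}{M:A}$ and $F' = \fatm{G'}{M':A'}$. Since term variable substitutions are not involved and $\sigma$ acts only on context variables, the required conditions $\permute{\pi}{\hsubst{\sigma}{M'}} =_{\alpha} \hsubst{\sigma}{M}$ and $\permute{\pi}{\hsubst{\sigma}{A'}} =_{\alpha} \hsubst{\sigma}{A}$ reduce immediately to the corresponding conditions on $M', A', M, A$ which are given by the hypothesis. What is left is to establish $\ctxeq{\ctxvarminus{\Xi}{\sigma}}{\pi}{\subst{\sigma}{G'}}{\subst{\sigma}{G}}$, and for this we proceed by a secondary induction on the derivation of $\ctxeq{\Xi}{\pi}{G'}{G}$. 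The empty-context clause is trivial; the clause where $G = (G_1, n_1:A_1)$ and $G' = (G_1', n_1':A_1')$ follows from the inductive hypothesis for $G_1, G_1'$ together with the fact that $\subst{\sigma}{\cdot}$ fixes the trailing nominal binding (which contains no context variables, so $\subst{\sigma}{A_1} = A_1$).

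The main obstacle is the remaining clause, where $G = G' = \Gamma$ for some $\ctxvarty{\Gamma}{\mathbb{N}_\Gamma}{\ctxty{\mathcal{C}}{\mathcal{G}}} \in \Xi$ with $\supp{\pi} \subseteq \mathbb{N}_\Gamma$. If $\Gamma \notin \dom{\sigma}$ the conclusion is immediate, since $\subst{\sigma}{\Gamma} = \Gamma$ and the annotation $\mathbb{N}_\Gamma$ is preserved in $\ctxvarminus{\Xi}{\sigma}$. The real work is when $\Gamma \in \dom{\sigma}$, so that $\subst{\sigma}{\Gamma} = G_i$ for some concrete context expression; one must prove $\ctxeq{\ctxvarminus{\Xi}{\sigma}}{\pi}{G_i}{G_i}$, i.e., that $\pi$ acts as the identity on $G_i$ in the sense permitted by the three clauses of Definition~\ref{def:form-eq}. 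By appropriateness of $\sigma$, we have a derivation of $\ctxtyinst{\supportof{\sigma} \setminus \mathbb{N}_\Gamma}{\Psi}{\ctxvarminus{\Xi}{\sigma}}{\ctxty{\mathcal{C}}{\mathcal{G}}}{G_i}$. I plan to proceed by an auxiliary induction on this derivation that simultaneously establishes the desired $\pi$-equivalence of $G_i$ with itself.

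The two structural consequences of that derivation that make the argument work are the following. First, every nominal constant $n$ appearing in an explicit binding of $G_i$ lies in $\supportof{\sigma} \setminus \mathbb{N}_\Gamma$, so $n \notin \supp{\pi}$, whence $\permute{\pi}{n} = n$ and $\permute{\pi}{A} =_{\alpha} A$ for the associated type (since any nominal constants in $A$ must also come from this permitted set); thus such bindings are matched by the third clause of context equivalence. Second, if a context variable $\Gamma'$ occurs in $G_i$ (via the second rule defining $\ctxtyinst$), then its annotation $\mathbb{N}_{\Gamma'}$ in $\ctxvarminus{\Xi}{\sigma}$ must satisfy $(\noms \setminus \mathbb{N}_{\Gamma'}) \subseteq \supportof{\sigma}\setminus\mathbb{N}_\Gamma$, which forces $\mathbb{N}_\Gamma \subseteq \mathbb{N}_{\Gamma'}$, and hence $\supp{\pi} \subseteq \mathbb{N}_{\Gamma'}$, so the second clause of context equivalence applies to $\Gamma'$. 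Combining these two observations through the inner induction on the instantiation derivation yields $\ctxeq{\ctxvarminus{\Xi}{\sigma}}{\pi}{G_i}{G_i}$, completing the case and the lemma.
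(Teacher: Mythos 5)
Your proof is correct and follows essentially the same route as the paper's: an outer induction on the formula-equivalence derivation, reduction of the atomic case to context equivalence, and an inner induction on the context-equivalence derivation, with the $\Gamma\in\dom{\sigma}$ case resolved by inducting on the $\ctxtyinst$ derivation to show $\pi$ acts as the identity on the substituted context expression. Your two structural observations (nominal constants of $G_i$ avoid $\supp{\pi}$, and any embedded context variable's annotation set contains $\mathbb{N}_\Gamma$ and hence $\supp{\pi}$) simply make explicit the ``obvious inductive argument'' the paper invokes at that point.
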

\begin{proof}
This is proved by induction on the structure of $\formeq{\Xi}{\pi}{F_2}{F_1}$.
Consider the possible cases for the structure of $F_1$ (or equivalently, $F_2$).

The case where the formula is $\ftrue$ or $\ffalse$, this 
observation is obvious.
The cases when $F_1$ is $\fand{F}{F'}$, $\for{F}{F'}$, $\fimp{F}{F'}$,
$\fall{x:\alpha}{F}$, $\fexists{x:\alpha}{F}$, or $\fctx{\Gamma}{\mathcal{C}}{F}$
are easily argued with recourse
to the induction hypothesis and by noting that the definition of 
substitution distributes to the component parts.

In the case that $F_1$ is atomic, then $F_1$ and $F_2$ are of the form
$\fatm{G_1}{M_1:A_1}$ and $\fatm{G_2}{M_2:A_2}$ respectively, and
$\permute{\pi}{M_2}$ is the same as $M_1$ up to renaming, 
$\permute{\pi}{A_2}$ and $A_1$ are equal up to renaming, 
and $\ctxeq{\Xi}{\pi}{G_2}{G_1}$ is derivable.
Given that $\subst{\sigma}{\fatm{G}{M:A}}=\fatm{\subst{\sigma}{G}}{M:A}$,
it only remains to show that
$\ctxeq{\ctxvarminus{\Xi}{\sigma}}{\pi}{\subst{\sigma}{G_2}}{\subst{\sigma}{G_1}}$
has a derivation to conclude
$\formeq{\ctxvarminus{\Xi}{\sigma}}{\pi}{\subst{\sigma}{F_2}}{\subst{\sigma}{F_1}}$ 
is derivable.
We prove that 
$\ctxeq{\ctxvarminus{\Xi}{\sigma}}{\pi}{\subst{\sigma}{G_2}}{\subst{\sigma}{G_1}}$
is derivable by induction on the structure of $\ctxeq{\Xi}{\pi}{G_2}{G_1}$.

When $G_1=G_2=\emptyctx$ this result is obvious.
When $G_1=(G_1',n_1':A_1')$ this is easily argued with recourse to the
inductive hypothesis.
When $G_1$ is some context variable $\Gamma_i$, then $G_2=\Gamma_i$,
$\ctxvarty{\Gamma_i}{\mathbb{N}_i}{\ctxty{\mathcal{C}_i}{\mathcal{G}_i}}\in\Xi$,
and $\supp{\pi}\subseteq\mathbb{N}_i$.
If $\Gamma_i$ is not in the domain of $\sigma$, then clearly 
$\ctxvarty{\Gamma_i}{\mathbb{N}_i}{\ctxty{\mathcal{C}_i}{\mathcal{G}_i}}\in\ctxvarminus{\Xi}{\sigma}$
and 
$\ctxeq{\ctxvarminus{\Xi}{\sigma}}{\pi}{\subst{\sigma}{G_2}}{\subst{\sigma}{G_1}}$ 
has an obvious derivation.
If instead $\Gamma_i$ is in the domain of $\sigma$ then there is
some $G_i$ such that 
$\subst{\sigma}{G_2}=\subst{\sigma}{G_1}=G_i$ and by the appropriateness of $\sigma$,
$\ctxtyinst{\supportof{\sigma} \setminus \mathbb{N}_i}{\Psi}{\ctxvarminus{\Xi}{\sigma}}{\ctxty{\mathcal{C}_i}{\mathcal{G}_i}}{G_i}$ 
must have a derivation.
By an obvious inductive argument we observe that
because $\supp{\pi}\subseteq\mathbb{N}_i$, $\permute{\pi}{G_i}=G_i$
and there is a derivation of
$\ctxeq{\ctxvarminus{\Xi}{\sigma}}{\pi}{\subst{\sigma}{G_2}}{\subst{\sigma}{G_1}}$,
as needed.

From this we can conclude that there will be a derivation for
$\formeq{\ctxvarminus{\Xi}{\sigma}}{\pi}{\subst{\sigma}{F'}}{\subst{\sigma}{F}}$.
\end{proof}

We can now show the soundness of the \id\ and \cut\ rules.

\begin{theorem}\label{th:other-sound}
The following property holds for every instance of the \id\ and \cut\ rules:
if the premises expressing typing judgements are derivable, the
conditions described in the other non-sequent premises are satisfied
and all the premise sequents are valid, 
then the conclusion sequent must also be valid. 
\end{theorem}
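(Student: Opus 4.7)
The plan is to handle the two rules separately, with the \cut\ rule being essentially a direct chaining argument and the \id\ rule requiring more care in tracking the equivalence of formulas under permutation through substitution.

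For the \cut\ rule, I would proceed as follows. Let $\theta$ and $\sigma$ identify an arbitrary closed instance of the conclusion sequent $\seq[\mathbb{N}]{\Psi}{\Xi}{\Omega}{F_1}$. Since the two premise sequents share the same support set, eigenvariable context, and context variable context with the conclusion, the same $\theta$ and $\sigma$ identify closed instances of both premises. Suppose every formula in $\subst{\sigma}{\hsubst{\theta}{\Omega}}$ is valid. Then by the validity of the first premise sequent, $\subst{\sigma}{\hsubst{\theta}{F_2}}$ is valid. Adding this to the assumptions and invoking the validity of the second premise sequent yields the validity of $\subst{\sigma}{\hsubst{\theta}{F_1}}$, as required.

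For the \id\ rule, fix an arbitrary closed instance of the conclusion identified by some $\theta$ and $\sigma$, and suppose every formula in $\subst{\sigma}{\hsubst{\theta}{\Omega}}$ is valid. Since $F' \in \Omega$, the closed formula $\subst{\sigma}{\hsubst{\theta}{F'}}$ is in particular valid. I would then argue that $\subst{\sigma}{\hsubst{\theta}{F}}$ equals $\permute{\pi}{\subst{\sigma}{\hsubst{\theta}{F'}}}$, up to $\alpha$-equivalence. Starting from the given $\formeq{\Xi}{\pi}{F'}{F}$, Lemma~\ref{lem:equiv-hsub} applies because the eigenvariable substitution $\theta$ identifying a closed instance must have support disjoint from $\mathbb{N}$ and therefore from $\supp{\pi}$; this yields $\formeq{\hsubst{\theta}{\Xi}}{\pi}{\hsubst{\theta}{F'}}{\hsubst{\theta}{F}}$. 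Applying Lemma~\ref{lem:equiv-sub} then gives $\formeq{\emptyset}{\pi}{\subst{\sigma}{\hsubst{\theta}{F'}}}{\subst{\sigma}{\hsubst{\theta}{F}}}$. Since the resulting context variable context is empty, no clauses of Definition~\ref{def:form-eq} involving context variables apply, so this equivalence reduces to $\alpha$-equality between $\permute{\pi}{\subst{\sigma}{\hsubst{\theta}{F'}}}$ and $\subst{\sigma}{\hsubst{\theta}{F}}$. Finally, Theorem~\ref{th:perm-form} (invariance of validity under permutations of nominal constants) transfers validity of $\subst{\sigma}{\hsubst{\theta}{F'}}$ to its $\pi$-permuted form, hence to $\subst{\sigma}{\hsubst{\theta}{F}}$.

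The main obstacle will be verifying that Lemma~\ref{lem:equiv-sub} is correctly applied, in particular that $\sigma$ is indeed appropriate for the relevant context variable context when instantiating the context-variable clause of $\formeq{\cdot}{\cdot}{\cdot}{\cdot}$. The worry is the surviving context variables: if $\Gamma_i$ appears in $F$ and $F'$ with an annotation $\mathbb{N}_i$ such that $\supp{\pi} \subseteq \mathbb{N}_i$, then appropriateness of $\sigma$ together with the rules for $\ctxtyinst$ force the context $G_i$ substituted for $\Gamma_i$ to avoid the nominal constants in $\mathbb{N}_i$ when introduced by $\sigma$; hence $\permute{\pi}{G_i} = G_i$, which is precisely what is needed for the equivalence to collapse to $\alpha$-equality on closed terms. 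Verifying this carefully is the one subtle point; the rest reduces to bookkeeping against the definitions and the lemmas already established in this section.
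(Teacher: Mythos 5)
Your proposal is correct and follows essentially the same route as the paper's proof: the \cut\ case is the same direct chaining through the two premise sequents, and the \id\ case likewise pushes the equivalence $\formeq{\Xi}{\pi}{F'}{F}$ through the term and context substitutions via Lemmas~\ref{lem:equiv-hsub} and~\ref{lem:equiv-sub} and then concludes with Theorem~\ref{th:perm-form}. The extra care you take over why the context-variable clause collapses under an appropriate $\sigma$ is sound and is exactly the content already discharged inside Lemma~\ref{lem:equiv-sub}.
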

\begin{proof}
Consider the case for each rule.

\case{\id}
By assumption, for goal formula $F$, assumption formula $F'\in\Omega$, 
and permutation $\pi$, $\pi$ is a permutation 
of some subset of nominal constants in $\mathbb{N}$ and
$\formeq{\Xi}{\pi}{F'}{F}$ has a derivation.
Consider an arbitrary closed instance of the conclusion sequent identified by
$\theta$ and $\sigma$.
If any formula in $\subst{\sigma}{\hsubst{\theta}{\Omega}}$ were not valid, this
closed instance would be vacuously valid, so assume that all such formulae are valid.
In particular then, $\subst{\sigma}{\hsubst{\theta}{F'}}$ will be valid.
The substitution compatibility of $\seqsub{\theta}{\emptyset}$ and the appropriateness
of $\sigma$ are sufficient to satisfy the requirements of Lemmas~\ref{lem:equiv-sub}
and~\ref{lem:equiv-hsub}.
Thus there must be a derivation of
$\formeq{\emptyset}{\pi}{\subst{\sigma}{\hsubst{\theta}{F'}}}{\subst{\sigma}{\hsubst{\theta}{F}}}$.
But then by Theorem~\ref{th:perm-form},
$\subst{\sigma}{\hsubst{\theta}{F}}$ is also valid, and therefore the conclusion
sequent will be valid.

\case{\cut}
By assumption, the formula $F'$ is such that the premise sequents
$\mathcal{S}_1=\seq[\mathbb{N}]{\Psi}{\Xi}{\Omega}{F'}$ and
$\mathcal{S}_2=\seq[\mathbb{N}]{\Psi}{\Xi}{\setand{\Omega}{F'}}{F}$
are both valid.
Consider an arbitrary closed instance of the conclusion sequent identified by
$\theta$ and $\sigma$; these substitutions clearly also identify closed 
instances of the premise sequents $\mathcal{S}_1$ and $\mathcal{S}_2$.
If any formula in $\subst{\sigma}{\hsubst{\theta}{\Omega}}$ were not valid, then
this closed instance would be vacuously valid.
If all formulas in $\subst{\sigma}{\hsubst{\theta}{\Omega}}$ are valid, then
by the validity of $\mathcal{S}_1$ the formula $\subst{\sigma}{\hsubst{\theta}{F'}}$
must be valid.
But then all formulas in $\subst{\sigma}{\hsubst{\theta}{(\setand{\Omega}{F'})}}$
are valid, and by the validity of $\mathcal{S}_2$ the goal formula
$\subst{\sigma}{\hsubst{\theta}{F}}$ must be valid.
Therefore the conclusion sequent of the $\cut$ rule must be valid.
\end{proof}

\subsection{Rules for the Logical Symbols}

\begin{figure}
\[\begin{array}{cc}
\infer[\topR]{\seq[\mathbb{N}]{\Psi}{\Xi}{\Omega}{\ftrue}}{} \qquad&
\infer[\botL]{\seq[\mathbb{N}]{\Psi}{\Xi}{\setand{\Omega}{\ffalse}}{F}}{}
\end{array}\]
\[\begin{array}{c}
\infer[\ctxR]{\seq[\mathbb{N}]{\Psi}{\Xi}{\Omega}{\fctx{\Gamma}{\mathcal{C}}{F}}}
             {\seq[\mathbb{N}]
                  {\Psi}
                  {\Xi,\ctxvarty{\Gamma'}
                                  {\emptyset}
                                  {\ctxty{\mathcal{C}}{\cdot}}}
                  {\Omega}
                  {\subst{\Gamma'/\Gamma}{F}} 
                &
              \Gamma'\not\in\dom{\Xi}}
\medskip\\
\infer[\ctxL]{\seq[\mathbb{N}]
                  {\Psi}
                  {\Xi}
                  {\setand{\Omega}{\fctx{\Gamma}{\mathcal{C}}{F_1}}}
                  {F_2}}
             {\seq[\mathbb{N}]
                  {\Psi}
                  {\Xi}
                  {\setand{\Omega}{\subst{G/\Gamma}{F_1}}}
                  {F_2}
              &   
               \ctxtyinst{\mathbb{N}}{\Psi}{\Xi}{\ctxty{\mathcal{C}}{\emptycb}}{G}}
\end{array}\]
\[\begin{array}{c}
\infer[\allR]
      {\seq[\mathbb{N}]{\Psi}{\Xi}{\Omega}{\fall{x:\alpha}{F}}}
      {\begin{array}{cc}
           \mathbb{N}=
                  \{n_1:\alpha_1,\ldots,n_m:\alpha_m\}
              &
           y\not\in\dom{\Psi}
           \\
           \seq[\mathbb{N}]
               {\Psi\cup\{y:(\arr{\arr{\alpha_1}{\arr{\ldots}{\alpha_m}}}{\alpha})\}}
               {\Xi}
               {\Omega}
               {F'}
             &
           \hsub{\{\langle x, y\app n_1\ldots n_m,\alpha\rangle\}}{F}{F'}
       \end{array}} \medskip\\
\infer[\allL]{\seq[\mathbb{N}]{\Psi}{\Xi}{\setand{\Omega}{\fall{x:\alpha}{F_1}}}{F_2}}
             {\seq[\mathbb{N}]{\Psi}{\Xi}{\setand{\Omega}{F_1'}}{F_2} &
              \hsub{\{\langle x, t, \alpha\rangle\}}{F_1}{F_1'} &
              \stlctyjudg{\mathbb{N}\cup\STLCGamma_0\cup\Psi}{t}{\alpha}} \medskip\\
\infer[\existsR]{\seq[\mathbb{N}]{\Psi}{\Xi}{\Omega}{\fexists{x:\alpha}{F}}}
                {\seq[\mathbb{N}]{\Psi}{\Xi}{\Omega}{F'} &
                 \hsub{\{\langle x, t, \alpha\rangle\}}{F}{F'} &
                 \stlctyjudg{\mathbb{N}\cup\STLCGamma_0\cup\Psi}{t}{\alpha}} \medskip\\
\infer[\existsL]
      {\seq[\mathbb{N}]{\Psi}{\Xi}{\setand{\Omega}{\fexists{x:\alpha}{F_1}}}{F_2}}
      {\begin{array}{c}
         \begin{array}{ccc}
            \mathbb{N}=
                     \{n_1:\alpha_1,\ldots,n_m:\alpha_m\}
           & 
           y\not\in\dom{\Psi}
           &
           \hsub{\{\langle x, y\app n_1\ldots n_m, \alpha\rangle\}}{F_1}{F_1'}
         \end{array}
         \\
         \seq[\mathbb{N}]
             {\Psi\cup\{y:(\arr{\arr{\alpha_1}{\arr{\ldots}{\alpha_m}}}{\alpha})\}}
             {\Xi}
             {\setand{\Omega}{F_1'}}
             {F_2} 
      \end{array}}
\end{array}\]
\[\begin{array}{cc}
\infer[\andR]{\seq[\mathbb{N}]{\Psi}{\Xi}{\Omega}{\fand{F_1}{F_2}}}
             {\seq[\mathbb{N}]{\Psi}{\Xi}{\Omega}{F_1} &
              \seq[\mathbb{N}]{\Psi}{\Xi}{\Omega}{F_2}} &
\infer[\andL]{\seq[\mathbb{N}]{\Psi}{\Xi}{\setand{\Omega}{\fand{F_1}{F_2}}}{F}}
             {\seq[\mathbb{N}]{\Psi}{\Xi}{\setand{\Omega}{F_i}}{F} &
              i\in\{1,2\}}
\end{array}\]
\[\begin{array}{cc}
\infer[\orR]{\seq[\mathbb{N}]{\Psi}{\Xi}{\Omega}{\for{F_1}{F_2}}}
            {\seq[\mathbb{N}]{\Psi}{\Xi}{\Omega}{F_i} &
             i\in\{1,2\}} &
\infer[\orL]{\seq[\mathbb{N}]{\Psi}{\Xi}{\setand{\Omega}{\for{F_1}{F_2}}}{F}}
            {\seq[\mathbb{N}]{\Psi}{\Xi}{\setand{\Omega}{F_1}}{F} &
             \seq[\mathbb{N}]{\Psi}{\Xi}{\setand{\Omega}{F_2}}{F}} \medskip\\
\infer[\impR]{\seq[\mathbb{N}]{\Psi}{\Xi}{\Omega}{\fimp{F_1}{F_2}}}
             {\seq[\mathbb{N}]{\Psi}{\Xi}{\setand{\Omega}{F_1}}{F_2}} &
\infer[\impL]{\seq[\mathbb{N}]{\Psi}{\Xi}{\setand{\Omega}{\fimp{F_1}{F_2}}}{F}}
             {\seq[\mathbb{N}]{\Psi}{\Xi}{\Omega}{F_1} &
              \seq[\mathbb{N}]{\Psi}{\Xi}{\setand{\Omega}{F_2}}{F}}
\end{array}\]
\caption{The Logical Rules}
\label{fig:rules-base}
\end{figure}

Figure~\ref{fig:rules-base} presents derivation rules that are based
on the meanings of the logical symbols that can appear in formulas.
In the application of these rules, we assume the equivalence of
formulas under the renaming of quantified variables.

As in the previous cases, we show that these rules also allow for a
limitation of attention to well-formed sequents.

\begin{theorem}\label{th:core-wf}
The following property holds of the rules in Figure~\ref{fig:rules-base}: if the
conclusion sequent is well-formed, the premises expressing typing
conditions have derivations and the conditions expressed by the other,
non-sequent premises are satisfied, then the premise sequents must be
well-formed.
\end{theorem}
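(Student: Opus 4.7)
The plan is to argue by case analysis on the logical rules of Figure~\ref{fig:rules-base}. The rules $\topR$ and $\botL$ have no sequent premises, so there is nothing to verify. The propositional rules $\andR$, $\andL$, $\orR$, $\orL$, $\impR$, and $\impL$ are essentially routine: their premise sequents differ from the conclusion only by decomposing a conjunction, disjunction, or implication in either $\Omega$ or the goal position, so the well-formedness of the component formulas follows immediately by inverting the derivation of $\wfform{\mathbb{N}\cup\STLCGamma_0\cup\Psi}{\ctxsanstype{\Xi}}{\cdot}$ on the composite formula, while $\mathbb{N}$, $\Psi$, and $\Xi$ are unchanged.

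The more substantive cases are the quantifier rules, and here I expect to lean heavily on Theorem~\ref{th:subst-formula}. For $\ctxR$, the premise introduces a new context variable $\Gamma'\not\in\dom{\Xi}$ with annotation $\ctxty{\mathcal{C}}{\cdot}$; this annotation is trivially well-formed via the base rule for $\wfctxvarty{\cdot}{\cdot}{\ctxty{\mathcal{C}}{\cdot}}$ since $\acstyping{\mathcal{C}}$ holds by inversion on the well-formedness of $\fctx{\Gamma}{\mathcal{C}}{F}$. The remaining piece is to verify that the variable renaming $\subst{\Gamma'/\Gamma}{F}$ is well-formed in the extended context variable context, which is immediate by $\alpha$-equivalence. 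For $\ctxL$, I need well-formedness of $\subst{G/\Gamma}{F_1}$; here the appropriate derivation of $\ctxtyinst{\mathbb{N}}{\Psi}{\Xi}{\ctxty{\mathcal{C}}{\emptycb}}{G}$ given as a premise, together with Theorem~\ref{th:ctx-var-type-instance}, yields $\wfctx{\mathbb{N}\cup\Psi}{\ctxsanstype{\Xi}}{G}$, and then clause~(2) of Theorem~\ref{th:subst-formula} gives the well-formedness of the substituted formula (noting that $\Gamma$ is not among the context variables of $\Xi$ in the inversion on $\fctx{\Gamma}{\mathcal{C}}{F_1}$).

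For the term quantifier rules $\allR$ and $\existsL$, the premise replaces the bound variable $x$ by a raised atomic term $(y\app n_1\ldots n_m)$ of arity type $\alpha$, where $y$ is freshly assigned the arity type $\alpha_1\atyarr\cdots\atyarr\alpha_m\atyarr\alpha$ in the extended eigenvariable context. The corresponding singleton substitution is arity type preserving with respect to $\mathbb{N}\cup\STLCGamma_0\cup(\Psi\cup\{y:\ldots\})$ by construction, so clause~(1) of Theorem~\ref{th:subst-formula} supplies the well-formedness of $F'$ in the appropriate arity context. For $\allL$ and $\existsR$, the witness term $t$ comes with a premise deriving $\stlctyjudg{\mathbb{N}\cup\STLCGamma_0\cup\Psi}{t}{\alpha}$, making the singleton substitution $\{\langle x,t,\alpha\rangle\}$ arity type preserving, and again clause~(1) of Theorem~\ref{th:subst-formula} yields the well-formedness of $F_1'$.

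The main obstacle I anticipate is the accounting in the $\allR$ and $\existsL$ cases: the raised variable $y$ must be genuinely fresh with respect to $\Psi$, and one must carefully observe that the raising over $n_1,\ldots,n_m$ makes the resulting atomic term well-typed in the arity context $\mathbb{N}\cup\STLCGamma_0\cup(\Psi\cup\{y:\ldots\})$. Once this bookkeeping is in place, applying Theorem~\ref{th:subst-formula} and inverting the well-formedness derivation of the quantified formula in the conclusion sequent closes the case.
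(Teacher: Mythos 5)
Your proposal is correct and follows essentially the same route as the paper's proof: trivial for $\topR$/$\botL$, inversion of the well-formedness judgement for the propositional rules, Theorem~\ref{th:ctx-var-type-instance} plus clause~(2) of Theorem~\ref{th:subst-formula} for $\ctxL$, and arity-type preservation of the (raised or witness) singleton substitution plus clause~(1) of Theorem~\ref{th:subst-formula} for the term-quantifier rules. The only detail you leave implicit is that the unchanged formulas in $\Omega$ and the context variable types must also be re-checked under the extended eigenvariable context in the $\allR$/$\existsL$ cases, which the paper discharges with Theorems~\ref{th:wf-form-wk} and~\ref{th:ctx-ty-wk}; this is exactly the bookkeeping you flag.
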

\begin{proof}
Consider as cases each rule in Figure~\ref{fig:rules-base}.

\case{\topR\ and \botL.}
There are no sequents in the premises of these rules so the property will
clearly hold.

\case{\ctxR}
By the well-formedness of the conclusion sequent
\begin{enumerate}
\item for each $\ctxvarty{\Gamma_i}{\mathbb{N}_i}{\ctxty{\mathcal{C}_i}{\mathcal{G}_i}}\in\Xi$,
$\wfctxvarty{\mathbb{N}\setminus\mathbb{N}_i}{\Psi}{\ctxty{\mathcal{C}_i}{\mathcal{G}_i}}$
has a derivation,
\item $\wfform{\mathbb{N}\cup\STLCGamma_0\cup\Psi}{\Xi^-}{\fctx{\Gamma}{\mathcal{C}}{F}}$ 
has a derivation, and
\item for each $F'\in\Omega$,
$\wfform{\mathbb{N}\cup\STLCGamma_0\cup\Psi}{\Xi^-}{F'}$ 
has a derivation.
\end{enumerate}
From (1) and the obvious derivation for 
$\wfctxvarty{\mathbb{N}}{\Psi}{\ctxty{\mathcal{C}}{\cdot}}$, we can conclude 
that the context variable context
$\Xi,\ctxvarty{\Gamma'}{\emptyset}{\ctxty{\mathcal{C}}{\cdot}}$ is well-formed.
From (2) we can infer that there is a derivation for
$\wfform{\mathbb{N}\cup\STLCGamma_0\cup\Psi}{\Xi^-\cup\{\Gamma'\}}{\subst{\Gamma'/\Gamma}{F}}$, 
and from (3) it is obvious $\Gamma'$ cannot appear in any formula from $\Omega$.
Therefore
$\wfform{\mathbb{N}\cup\STLCGamma_0\cup\Psi}{\Xi^-\cup\{\Gamma'\}}{F'}$
has a derivation of the same structure as that for
$\wfform{\mathbb{N}\cup\STLCGamma_0\cup\Psi}{\Xi^-}{F'}$ 
for all $F'\in\Omega$.
But then the premise sequent
$\seq[\mathbb{N}]{\Psi}{\Xi,\ctxvarty{\Gamma'}{\emptyset}{\ctxty{\mathcal{C}}{\cdot}}}{\Omega}{\subst{\Gamma'/\Gamma}{F}}$
must be well-formed, as needed.

\case{\ctxL}
In this case there must be a derivation of
$\ctxtyinst{\mathbb{N}}{\Psi}{\Xi}{\ctxty{\mathcal{C}}{\cdot}}{G}$.
Given the structure of the conclusion and premise sequents, we need only show the
well-formedness of the formula $\subst{G/\Gamma}{F_1}$ in the premise to conclude
it is a well-formed sequent as the well-formedness of the context variable context
and the set of assumption formulas is assured by the well-formedness of
the conclusion sequent.
By the well-formedness of the conclusion sequent, 
$\wfform{\mathbb{N}\cup\STLCGamma\cup\Psi}{\Xi^-}{\fctx{\Gamma}{\mathcal{C}}{F_1}}$
will have a derivation and thus 
$\wfform{\mathbb{N}\cup\STLCGamma\cup\Psi}{\Xi^-\cup\{\Gamma\}}{F_1}$
is also derivable.
The judgement $\wfctxvarty{\mathbb{N}}{\Psi}{\ctxty{\mathcal{C}}{\cdot}}$ has 
an obvious derivation, and so by Theorem~\ref{th:ctx-var-type-instance} we 
conclude that $\wfctx{\mathbb{N}}{\Psi}{G}$ must be derivable.
Therefore by an application of Theorem~\ref{th:subst-formula} there will be a derivation of
$\wfform{\mathbb{N}\cup\STLCGamma\cup\Psi}{\Xi^-}{\subst{G/\Gamma}{F_1}}$.
Thus we can conclude that the premise sequent
$\seq[\mathbb{N}]{\Psi}{\Xi}{\setand{\Omega'}{\subst{G/\Gamma}{F_1}}}{F_2}$
is well-formed.

\case{\allR}
In this case the support set $\mathbb{N}$ is $\{n_1:\alpha_1,\ldots,n_m:\alpha_m\}$ and
for a new variable $y\not\in\dom{\Psi}$ of type 
$(\arr{\alpha_1}{\arr{\ldots}{\arr{\alpha_m}{\alpha}}})$, the judgement
$\hsub{\{\langle x, y\app n_1\ldots n_m, \alpha\rangle\}}{F}{F'}$
has a derivation.
By the well-formedness of the conclusion sequent we know that
\begin{enumerate}
\item for each $\ctxvarty{\Gamma_i}{\mathbb{N}_i}{\ctxty{\mathcal{C}_i}{\mathcal{G}_i}}\in\Xi$,
$\wfctxvarty{\mathbb{N}\setminus\mathbb{N}_i}{\Psi}{\ctxty{\mathcal{C}_i}{\mathcal{G}_i}}$
has a derivation,
\item 
$\wfform{\mathbb{N}\cup\STLCGamma_0\cup\Psi}{\Xi^-}{\fall{x:\alpha}{F}}$
has a derivation, and
\item for each $F''\in\Omega$,
$\wfform{\mathbb{N}\cup\STLCGamma_0\cup\Psi}{\Xi^-}{F''}$
has a derivation.
\end{enumerate}
The well-formedness of the context variable context of the premise sequent is ensured 
by (1) and the use of Theorem~\ref{th:ctx-ty-wk}, and the well-formedness of 
the assumption set is ensured by (3) and the use of Theorem~\ref{th:wf-form-wk}.
What remains to be shown is that the goal formula $F'$ is well-formed.
From (2) we can infer that
$\wfform{\mathbb{N}\cup\STLCGamma_0\cup\Psi,x:\alpha}{\Xi^-}{F}$ 
also has a derivation.
It is clear that the substitution $\{\langle x,y\app n_1\ldots n_m,\alpha\rangle\}$
will be type preserving with respect to the arity typing context
$\mathbb{N}\cup\STLCGamma_0\cup(\Psi,y:\arr{\alpha_1}{\arr{\ldots}{\arr{\alpha_m}{\alpha}}})$, and so the judgement
$\wfform{\mathbb{N}\cup\STLCGamma_0\cup\Psi,y:\arr{\alpha_1}{\arr{\ldots}{\arr{\alpha_m}{\alpha}}}}
        {\Xi^-}
        {F'}$
has a derivation by Theorem~\ref{th:subst-formula}.
Thus the sequent
$\seq[\mathbb{N}]{\Psi,y:\arr{\alpha_1}{\arr{\ldots}{\arr{\alpha_m}{\alpha}}}}{\Xi}{\Omega}{F'}$
must be well-formed.

\case{\allL}
In this case there is a term $t$
such that both $\stlctyjudg{\mathbb{N}\cup\STLCGamma_0\cup\Psi}{t}{\alpha}$
and $\hsub{\{\langle x, t, \alpha\rangle\}}{F_1}{F_1'}$ have derivations.
We need only show that $F_1'$ is well-formed in the premise sequent to conclude it is 
a well-formed sequent, as the other formulas and the context variable types 
will be well-formed by the assumption that the conclusion sequent is well-formed.
By the well-formedness of the conclusion sequent, there is a derivation for 
$\wfform{\mathbb{N}\cup\STLCGamma_0\cup\Psi}{\Xi^-}{\fall{x:\alpha}{F_1}}$
and thus for
$\wfform{\mathbb{N}\cup\STLCGamma_0\cup\Psi,x:\alpha}{\Xi^-}{F_1}$ 
also.
Clearly $\{\langle x, t, \alpha\rangle\}$ is arity type preserving
with respect to $\mathbb{N}\cup\STLCGamma_0\cup\Psi$ because 
$\stlctyjudg{\mathbb{N}\cup\STLCGamma_0\cup\Psi}{t}{\alpha}$
has a derivation, and so by Theorem~\ref{th:subst-formula}
$\wfform{\mathbb{N}\cup\STLCGamma_0\cup\Psi}{\Xi^-}{F_1'}$ 
has a derivation.
Therefore the sequent 
$\seq[\mathbb{N}]{\Psi}{\Xi}{\setand{\Omega}{F_1'}}{F_2}$
must be well-formed.

\case{\existsR}
In this case there is some term $t$
such that both $\stlctyjudg{\mathbb{N}\cup\STLCGamma_0\cup\Psi}{t}{\alpha}$
and $\hsub{\{\langle x, t, \alpha\rangle\}}{F}{F'}$ have derivations.
We need only show that the goal formula of the premise sequent is well-formed 
to conclude it is a well-formed sequent, as the other formulas and the context 
variable types will be well-formed by the assumption that the conclusion sequent 
is well-formed.
By the well-formedness of the conclusion sequent, there exists a derivation of
$\wfform{\mathbb{N}\cup\STLCGamma_0\cup\Psi}{\Xi^-}{\fexists{x:\alpha}{F}}$
and thus 
$\wfform{\mathbb{N}\cup\STLCGamma_0\cup\Psi,x:\alpha}{\Xi^-}{F}$ 
must be derivable.
Clearly $\{\langle x, t, \alpha\rangle\}$ is arity type preserving
with respect to $\mathbb{N}\cup\STLCGamma_0\cup\Psi$ because 
$\stlctyjudg{\mathbb{N}\cup\STLCGamma_0\cup\Psi}{t}{\alpha}$
has a derivation, and so by Theorem~\ref{th:subst-formula}
$\wfform{\mathbb{N}\cup\STLCGamma_0\cup\Psi}{\Xi^-}{F'}$ 
has a derivation.
Therefore the sequent 
$\seq[\mathbb{N}]{\Psi}{\Xi}{\Omega'}{F'}$
will be well-formed, as needed.

\case{\existsL}
In this case the support set $\mathbb{N}$ is
$\{n_1:\alpha_1,\ldots,n_m:\alpha_m\}$ and for a new variable $y\not\in\dom{\Psi}$ 
of type $\arr{\alpha_1}{\arr{\ldots}{\arr{\alpha_m}{\alpha}}}$, the judgement
$\hsub{\{\langle x, y\app n_1\ldots n_m, \alpha\rangle\}}{F_1}{F'_1}$ 
has a derivation.
By the well-formedness of the conclusion sequent we know that
\begin{enumerate}
\item for each $\ctxvarty{\Gamma_i}{\mathbb{N}_i}{\ctxty{\mathcal{C}_i}{\mathcal{G}_i}}\in\Xi$,
$\wfctxvarty{\mathbb{N}\setminus\mathbb{N}_i}{\Psi}{\ctxty{\mathcal{C}_i}{\mathcal{G}_i}}$
has a derivation,
\item 
$\wfform{\mathbb{N}\cup\STLCGamma_0\cup\Psi}{\Xi^-}{\fexists{x:\alpha}{F_1}}$
has a derivation, and
\item for each $F''\in\Omega$,
$\wfform{\mathbb{N}\cup\STLCGamma_0\cup\Psi}{\Xi^-}{F''}$
has a derivation.
\end{enumerate}
From (1) and the use of Theorem~\ref{th:ctx-ty-wk} we can conclude that the 
context variable types in $\Xi$ will be well-formed under a context extended with $y$.
From (2) and Theorem~\ref{th:wf-form-wk} we conclude the assumption formulas 
from $\Omega$ are well-formed under a context extended with $y$.
Similarly, from (3) we conclude the goal formula well-formed under the extended 
context.
It only remains to show that the formula $F_1'$ is well-formed.
From (2) we can conclude that
$\wfform{\mathbb{N}\cup\STLCGamma_0\cup\Psi,x:\alpha}{\Xi^-}{F_1}$ 
also has a derivation, and 
since the substitution $\{\langle x,y\app n_1\ldots n_m,\alpha\rangle\}$
is type preserving with respect to 
$(\Psi,y:\arr{\alpha_1}{\arr{\ldots}{\arr{\alpha_m}{\alpha}}})$,
$\wfform{\mathbb{N}\cup\STLCGamma_0\cup\Psi,y:\arr{\alpha_1}{\arr{\ldots}{\arr{\alpha_m}{\alpha}}}}
        {\Xi^-}
        {F'_1}$
has a derivation by Theorem~\ref{th:subst-formula}.
Thus it is clear that
$\seq[\mathbb{N}]{\Psi,y:\arr{\alpha_1}{\arr{\ldots}{\arr{\alpha_m}{\alpha}}}}{\Xi}{\setand{\Omega}{F_1'}}{F_2}$
is a well-formed sequent.

\case{\andR, \orR, \impR, \andL, \orL, and \impL}
In all of these cases the well-formedness of the context variable types in 
$\Xi$ and the formulas in $\Omega$ are ensured by the well-formedness of the
conclusion sequent.
In the case of the left rules we further can infer the well-formedness of the goal
formulas from the well-formedness of the conclusion sequent.
What remains to be shown is that the the two sub-formulas $F_1$ and $F_2$
are well-formed.
By the well-formedness of the conclusion sequent there must be a derivation of
$\wfform{\mathbb{N}\cup\STLCGamma_0\cup\Psi}{\Xi^-}{F_1\bullet F_2}$.
In all cases, there must then be derivations for both
$\wfform{\mathbb{N}\cup\STLCGamma_0\cup\Psi}{\Xi^-}{F_1}$
and
$\wfform{\mathbb{N}\cup\STLCGamma_0\cup\Psi}{\Xi^-}{F_2}$ by definition.
Thus it is clear that all sequents appearing in the premises of these rules
must be well-formed.
\end{proof}

We recall that the validity of sequents is based on the validity of
their substitution instances.
In this context, the soundness of the rules $\ctxL$, $\allL$ and
$\existsR$ depends on the ability to invert the order of application
of substitutions.
The three lemmas below show that this is in fact possible.
Lemma~\ref{lem:form-hsubperm} follows from an easy induction and a use
of Theorem~\ref{th:subspermute}.
The two following lemmas have even simpler inductive proofs.

\begin{lemma}\label{lem:form-hsubperm}
Let $\theta_1$ and $\theta_2$ be arity type preserving substitutions with 
respect to $\Theta$ and $\aritysum{\context{\theta_1}}{\Theta}$ respectively and such that the variables
in the domain of $\theta_2$ are (1) distinct from the domain of
$\theta_1$ and (2) do not appear free in the range of $\theta_1$.
Then, letting
\[\theta_2'=\{\langle x,M',\alpha\rangle\ |\ 
                    \langle x,M,\alpha\rangle\in\theta_2\mbox{ and }
                           \hsub{\theta_1}{M}{M'}\},\]
for any formula $F$ well-formed with respect to 
$\aritysum{\context{\theta_2}}{\aritysum{\context{\theta_1}}{\Theta}}$ and some context
variable context $\Xi$,
$\hsubst{\theta_1}{\hsubst{\theta_2}{F}}=\hsubst{\theta_2'}{\hsubst{\theta_1}{F}}$
has a derivation.
\end{lemma}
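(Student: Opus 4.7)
The plan is to proceed by induction on the structure of the formula $F$, discharging the atomic case by a direct appeal to Theorem~\ref{th:subspermute} and handling the remaining cases by descent under the connectives and quantifiers. The hypotheses on $\theta_1$ and $\theta_2$---that their domains are disjoint and that variables in $\dom{\theta_2}$ do not occur free in the range of $\theta_1$---are exactly the conditions required of the two substitutions in Theorem~\ref{th:subspermute}, and the substitution $\theta_2'$ introduced in the statement is precisely the substitution $\theta_3$ constructed in that theorem. Before unfolding any case, I would use Theorems~\ref{th:aritysubs} and~\ref{th:subst-formula} to confirm that each of $\hsubst{\theta_2}{F}$, $\hsubst{\theta_1}{F}$, $\hsubst{\theta_1}{\hsubst{\theta_2}{F}}$, and $\hsubst{\theta_2'}{\hsubst{\theta_1}{F}}$ is well-defined, so that the equality asserted is between expressions that actually exist; the only nontrivial observation here is that $\theta_2'$ is arity type preserving with respect to $\Theta$, which follows by applying Theorem~\ref{th:aritysubs} to each triple in $\theta_2$ together with the given arity type preservation of $\theta_1$ with respect to $\Theta$.

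For the atomic case $F = \fatm{G}{M:A}$, hereditary substitution on formulas distributes into the components: it leaves context variables untouched, traverses the explicit bindings of $G$ to act on each type, and applies to $M$ and $A$. Theorem~\ref{th:subspermute} yields the commutation $\hsubst{\theta_1}{\hsubst{\theta_2}{E}} = \hsubst{\theta_2'}{\hsubst{\theta_1}{E}}$ for every LF kind, type, or canonical term $E$ that admits both substitution applications; invoking this for $M$, for $A$, and for each $A_i$ in an explicit binding $n_i : A_i$ of $G$, and reassembling, gives the equation for $F$.

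The propositional cases ($\ftrue$, $\ffalse$, and the binary connectives $\wedge$, $\vee$, $\supset$) follow immediately from the induction hypothesis applied to subformulas, since substitution distributes over these connectives. For the quantifier cases $\fall{x:\alpha}{F'}$, $\fexists{x:\alpha}{F'}$, and $\fctx{\Gamma}{\mathcal{C}}{F'}$, I would rely on the usual renaming convention and choose the bound name to be fresh for $\dom{\theta_1} \cup \dom{\theta_2}$ as well as for every variable and nominal constant appearing in any term of $\range{\theta_1}$ or $\range{\theta_2}$. Under this choice both substitutions pass under the binder without triggering any renaming or capture check, the induction hypothesis applies directly to $F'$, and the desired equation on $F$ follows by distributing both sides back through the quantifier.

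The main work, to the extent there is any, is the careful tracking of arity type contexts so that the well-definedness premises of Theorem~\ref{th:subspermute} can be discharged at each invocation, and ensuring that the freshness choices made at the quantifier steps preserve the hypotheses of the lemma as we pass to the subformula; neither presents a genuine obstacle, but both must be spelled out explicitly if the argument is not to be merely gestural.
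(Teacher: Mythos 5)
Your proposal is correct and follows essentially the same route as the paper, which proves this lemma by "an easy induction and a use of Theorem~\ref{th:subspermute}": you induct on the structure of $F$, identify $\theta_2'$ with the substitution $\theta_3$ of that theorem to discharge the atomic case, and push through the connectives and quantifiers. The additional bookkeeping you flag (well-definedness via Theorems~\ref{th:aritysubs} and~\ref{th:subst-formula}, and freshness at binders) is exactly the detail the paper elides.
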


\begin{lemma}\label{lem:form-subperm}
Let $\sigma_1$ and $\sigma_2$ be context variable substitutions
where the context variables in the domain of $\sigma_2$ are (1)
distinct from the context variables in the domain of $\sigma_1$ and
(2) do not appear free in the range of $\sigma_1$.
Then, letting 
$\sigma_2'=\{G'/\Gamma\ |\ G/\Gamma\in\sigma_2\mbox{ and }\subst{\sigma_1}{G}=G'\}$,
for any formula $F$,
$\subst{\sigma_1}{\subst{\sigma_2}{F}}=\subst{\sigma_2'}{\subst{\sigma_1}{F}}$.
\end{lemma}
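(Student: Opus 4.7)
The plan is to proceed by a straightforward induction on the structure of the formula $F$. Because context substitution distributes through all the logical connectives and quantifiers (modulo the usual $\alpha$-renaming of bound variables), the only substantive case is when $F$ is atomic; all the other cases follow immediately from the induction hypothesis. For the case of a context quantifier $\fctx{\Gamma}{\mathcal{C}}{F'}$, I would first rename $\Gamma$ to a fresh variable that does not occur in $\dom{\sigma_1}\cup\dom{\sigma_2}$ and does not appear free in the range of either substitution; this is legitimate under the equivalence of formulas up to renaming of bound variables, and it ensures that pushing the substitutions inside the quantifier does not interact with $\Gamma$.

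When $F$ is $\fatm{G}{M:A}$, the context substitutions affect only the context component $G$ (since $M$ and $A$ contain no context variables), so the desired equality reduces to showing that $\subst{\sigma_1}{\subst{\sigma_2}{G}}=\subst{\sigma_2'}{\subst{\sigma_1}{G}}$. I would prove this by a secondary induction on the structure of $G$. The cases $G=\emptyce$ and $G=(G',n:A')$ are immediate: the first is vacuous, and in the second the substitutions do not touch the explicit binding $n:A'$, so the claim follows from the inductive hypothesis applied to $G'$.

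The heart of the argument is the case in which $G$ is a context variable $\Gamma$, which I would handle by considering three subcases. If $\Gamma \in \dom{\sigma_2}$, say $\sigma_2(\Gamma)=G_i$, then by hypothesis (1) $\Gamma\notin\dom{\sigma_1}$, so $\subst{\sigma_1}{\subst{\sigma_2}{\Gamma}}=\subst{\sigma_1}{G_i}$, and by definition of $\sigma_2'$ the right hand side $\subst{\sigma_2'}{\subst{\sigma_1}{\Gamma}}=\subst{\sigma_2'}{\Gamma}$ also equals $\subst{\sigma_1}{G_i}$. If $\Gamma\in\dom{\sigma_1}$ with $\sigma_1(\Gamma)=G'$, then again by (1) $\Gamma\notin\dom{\sigma_2}$, so the left hand side simplifies to $G'$; for the right hand side, hypothesis (2) guarantees that no variable in $\dom{\sigma_2}=\dom{\sigma_2'}$ occurs free in $G'$, so $\subst{\sigma_2'}{G'}=G'$ as well. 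If $\Gamma$ is in neither domain, every substitution leaves it untouched and both sides equal $\Gamma$.

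The only step I expect to require any care is the context-quantifier case, since that is where the hypotheses concerning freshness of the substituted variables interact with capture avoidance; here the freshness condition (2) on $\sigma_1$ is used implicitly in justifying that the renamed bound context variable can be chosen to avoid the ranges of both substitutions. Everything else is bookkeeping that follows mechanically from the definition of context-variable substitution.
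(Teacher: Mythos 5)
Your proposal is correct and matches the paper's approach: the paper simply remarks that this lemma has an "even simpler inductive proof" than Lemma~\ref{lem:form-hsubperm}, and your argument is precisely that induction, fleshed out with the right case analysis on the atomic/context-variable case where hypotheses (1) and (2) do the work. No gaps.
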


\begin{lemma}\label{lem:form-perm}
Let $\theta$ be an arity type preserving substitution with respect to 
$\mathbb{N}\cup\STLCGamma_0\cup\Psi$, and let $\sigma$ be a context variable substitution 
appropriate for $\Xi$ with respect to $\context{\theta}\cup\Psi$.
Then, letting 
$\sigma'=\{G'/\Gamma\ |\ G/\Gamma\in\sigma\mbox{ and }\hsub{\theta}{G}{G'}\}$,
for any formula $F$ well-formed with respect to $\aritysum{\context{\theta}}{\Psi}$
and $\ctxsanstype{\Xi}$, 
$\hsubst{\theta}{\subst{\sigma}{F}}=\subst{\sigma'}{\hsubst{\theta}{F}}$
is drivable.
\end{lemma}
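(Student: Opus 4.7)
The plan is to proceed by induction on the structure of $F$, using the well-formedness judgement derivation for $F$ to justify the inductive calls. Before beginning the induction, I would first verify that $\sigma'$ is well-defined: for each $G/\Gamma \in \sigma$, the context $G$ is well-formed with respect to $\context{\theta}\cup\Psi$ (by the appropriateness of $\sigma$ together with Theorem~\ref{th:ctx-var-type-instance}), so applying $\theta$ to each type in $G$ yields a defined hereditary substitution by Theorem~\ref{th:aritysubs-ty}, making $\hsub{\theta}{G}{G'}$ derivable. I would also note the useful fact that $\subst{\sigma}{F}$ and $\hsubst{\theta}{F}$ are both well-formed by Theorem~\ref{th:subst-formula}, so the outer applications on each side of the equation are also well-defined.

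The base cases $\ftrue$ and $\ffalse$ are immediate since neither substitution affects them. The cases for $\fand{F_1}{F_2}$, $\for{F_1}{F_2}$, and $\fimp{F_1}{F_2}$ follow directly from the induction hypothesis applied to the component formulas, since both term and context substitution distribute over the propositional connectives. For the quantifier cases $\fall{x:\alpha}{F_1}$ and $\fexists{x:\alpha}{F_1}$, I would rename $x$ if necessary so that it is fresh for $\theta$ and $\sigma'$; the substitutions then distribute past the quantifier and the induction hypothesis on $F_1$ closes the case. For $\fctx{\Gamma}{\mathcal{C}}{F_1}$, I would similarly choose $\Gamma$ fresh so that it is not mentioned in $\sigma$ or $\sigma'$, reducing to the induction hypothesis on $F_1$.

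The substantive case is the atomic formula $\fatm{G}{M:A}$. Here the term substitution $\theta$ acts on $M$, $A$, and the explicit bindings in $G$ (distributing through nominal constant declarations without renaming, per the definition of hereditary substitution on contexts in the logic), while the context substitution $\sigma$ acts only by replacing context variables in $G$. So both sides reduce to showing $\hsubst{\theta}{\subst{\sigma}{G}} = \subst{\sigma'}{\hsubst{\theta}{G}}$. I would prove this by induction on the structure of $G$: when $G$ is $\emptyce$ the claim is trivial; when $G$ is of the form $G_1, n:A_1$ the induction hypothesis on $G_1$ plus distribution of $\hsubst{\theta}{\cdot}$ through the declaration yields the result (the context substitutions leave the explicit binding $n : A_1$ untouched, so only $\theta$ acts on it); and when $G$ is a context variable $\Gamma$, if $\Gamma \notin \domain{\sigma}$ both sides equal $\Gamma$ with $\sigma'$ acting trivially, while if $G/\Gamma \in \sigma$ then by the definition of $\sigma'$ there is a unique $G'$ with $\hsub{\theta}{G}{G'}$, and both sides equal $G'$.

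The main obstacle will be the atomic case: specifically, ensuring that the simple distribution of $\theta$ over a context $G$ carrying free term variables from $\context{\theta}$ is matched exactly by the detour through $\sigma'$, which depends on the explicit characterization of $\sigma'$ via the derivability of $\hsub{\theta}{G}{G'}$. Uniqueness of the result of hereditary substitution (Theorem~\ref{th:uniqueness}) is essential here to identify the two results. The apparent asymmetry between the two sides is only superficial because context-variable substitution does not commute with hereditary substitution in the presence of capture; however, the definitions we have used deliberately avoid checking for capture in context expressions within formulas, and this is precisely what makes the commutation hold without side-conditions on the nominal constants.
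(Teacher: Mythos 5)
Your proposal is correct and matches the paper's intent: the paper dispenses with this lemma by noting it has a ``simple inductive proof,'' and your structural induction on $F$ with the atomic case reduced to commuting the two substitutions on the context expression (using uniqueness of hereditary substitution to identify $\hsubst{\theta}{G_\Gamma}$ with the binding recorded in $\sigma'$) is exactly that argument, spelled out. The preliminary check that $\sigma'$ is well-defined via Theorems~\ref{th:ctx-var-type-instance} and~\ref{th:aritysubs-ty} is a reasonable addition the paper leaves implicit.
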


The soundness of the logical rules is the content of the following
theorem.

\begin{theorem}[Soundness]\label{th:core-sound}
The following property holds for every instance of each of the rules
in Figure~\ref{fig:rules-base}:
if the premises expressing typing judgements are derivable, the
conditions described in the other non-sequent premises are satisfied
and all the premise sequents are valid, 
then the conclusion sequent must also be valid. 
\end{theorem}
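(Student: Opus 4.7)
The proof proceeds by cases on the rule applied. For each rule, I fix an arbitrary closed instance of the conclusion sequent identified by a term substitution $\theta$ and a context substitution $\sigma$, assume all formulas in the image of the assumption set are valid (otherwise the closed instance is vacuously valid), and argue that the image of the goal formula must be valid. The cases $\topR$ and $\botL$ are immediate from Definition~\ref{def:semantics}, and the propositional cases $\andR$, $\andL$, $\orR$, $\orL$, $\impR$, $\impL$ follow by direct appeal to the semantic clauses for the connectives, noting that both $\theta$ and $\sigma$ distribute through the logical symbols so the image of $F_1 \bullet F_2$ is $(\hsubst{\theta}{\subst{\sigma}{F_1}}) \bullet (\hsubst{\theta}{\subst{\sigma}{F_2}})$.

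The quantifier left rules and the existential right rule have a common shape: a witness term (or context expression) is provided in the premise. For $\allL$, the hypothesized term $t$ satisfies $\stlctyjudg{\mathbb{N}\cup\STLCGamma_0\cup\Psi}{t}{\alpha}$; after applying the closing substitutions, $\hsubst{\theta}{t}$ is a closed term of type $\alpha$ witnessing the universal assumption in the closed instance. Using Lemma~\ref{lem:form-hsubperm} to commute the substitution $\{\langle x,t,\alpha\rangle\}$ past $\theta$, I identify $\hsubst{\theta}{\subst{\sigma}{F_1'}}$ with the instance produced by the semantic clause for $\forall$, which is valid by assumption. The $\existsR$ case is dual, and $\ctxL$ is analogous using Lemma~\ref{lem:form-perm} instead of Lemma~\ref{lem:form-hsubperm} and Theorem~\ref{th:ctxtyinst-subst} to confirm that $\subst{\sigma}{G}$ remains a valid instance of $\ctxty{\mathcal{C}}{\cdot}$.

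The delicate cases are the right rules $\allR$, $\existsL$ (dually), and $\ctxR$, where the premise introduces a fresh eigenvariable or context variable. Consider $\allR$: to verify that $\fall{x:\alpha}{\hsubst{\theta}{\subst{\sigma}{F}}}$ is valid, I must show $\hsubst{\{\langle x,M,\alpha\rangle\}}{\hsubst{\theta}{\subst{\sigma}{F}}}$ is valid for every closed term $M$ of type $\alpha$ over $\noms \cup \STLCGamma_0$. Given such an $M$, I construct a closed instance of the premise by extending $\theta$ with an assignment to the raised variable $y$: specifically, I choose the term $\lflam{x_1}{\ldots\lflam{x_m}{M'}}$ of type $\alpha_1 \atyarr \cdots \atyarr \alpha_m \atyarr \alpha$, where $M'$ is obtained from $M$ by abstracting over the nominal constants $n_1,\ldots,n_m$. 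Theorem~\ref{th:raised-subs} and the hereditary substitution machinery ensure that applying this extended substitution to $(y\app n_1\app \ldots \app n_m)$ recovers $M$, so applying it to $F'$ (which is $\hsubst{\{\langle x, y\app n_1\ldots n_m,\alpha\rangle\}}{F}$) gives exactly $\hsubst{\{\langle x,M,\alpha\rangle\}}{\hsubst{\theta}{\subst{\sigma}{F}}}$ modulo a use of Lemma~\ref{lem:form-hsubperm} to permute the substitutions. The $\existsL$ case is strictly analogous, and $\ctxR$ works the same way by choosing a context expression $G$ that is an instance of $\ctxty{\mathcal{C}}{\cdot}$ to extend $\sigma$.

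The main obstacle throughout is bookkeeping: each right rule requires reconciling the semantic quantification over closed witnesses (which may mention arbitrary nominal constants in $\noms$) with the raised eigenvariables and their typing. The raising machinery of Theorem~\ref{th:raised-subs}, together with Lemmas~\ref{lem:form-hsubperm}, \ref{lem:form-subperm}, and \ref{lem:form-perm}, supplies the needed algebraic identities for commuting substitutions, and Theorem~\ref{th:core-wf} guarantees that all the intermediate sequents constructed during the argument remain well-formed so that the substitution applications are defined and unique.
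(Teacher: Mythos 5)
Your proposal is correct and follows essentially the same route as the paper's proof: case analysis over arbitrary closed instances, vacuous validity when an assumption fails, direct semantic arguments for the propositional rules, commuting the witness substitution past $\theta$ and $\sigma$ via Lemmas~\ref{lem:form-hsubperm}, \ref{lem:form-subperm}, and \ref{lem:form-perm} for $\allL$, $\existsR$, and $\ctxL$, and constructing an extended (raised) substitution from an arbitrary closed witness for $\allR$, $\existsL$, and $\ctxR$. The construction you give for the raised instantiation of $y$ is exactly the one the paper uses (via Theorem~\ref{th:raised-subs}), so no substantive difference remains.
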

\begin{proof}
Consider each of the possible rules from Figure~\ref{fig:rules-base}.

\case{\topR}
Such a sequent is always valid since $\ftrue$ is always valid.

\case{\botL}
Since $\ffalse$ is never valid, a sequent in which this formula
appears as an assumption must obviously be valid. 

\case{\ctxR}
Consider an arbitrary closed instance of the conclusion sequent identified by $\theta$ 
and $\sigma$.
If any formula in $\subst{\sigma}{\hsubst{\theta}{\Omega}}$ were not valid this
instance would be vacuously valid, so assume they are all valid.
By the definition of substitution application, 
$\subst{\sigma}{\hsubst{\theta}{(\fctx{\Gamma}{\mathcal{C}}{F})}}=
    \fctx{\Gamma}{\mathcal{C}}{(\subst{\sigma}{\hsubst{\theta}{F}})}$,
and such a formula is valid if for every context expression $G$ such that 
$\csinst{\noms}{\emptyset}{\mathcal{C}}{G}$
has a derivation the formula
$\subst{G/\Gamma}{\subst{\sigma}{\hsubst{\theta}{F}}}$
is valid.
So consider an arbitrary context expression $G$ such that
$\csinst{\noms}{\emptyset}{\mathcal{C}}{G}$ has a derivation.
By assumption the premise sequent is
$\seq[\mathbb{N}]{\Psi}{\Xi,\ctxvarty{\Gamma'}{\emptyset}{\ctxty{\mathcal{C}}{\cdot}}}{\Omega}{\subst{\Gamma'/\Gamma}{F}}$
for some $\Gamma'\not\in\dom{\Xi}$, and is valid.
Clearly $\seqsub{\theta}{\emptyset}$ and $\{G/\Gamma'\}\circ\sigma$ 
identify a closed instance of this sequent, and since
$\Gamma'$ cannot appear in any formulas in $\Omega$ all the assumption
formulas of this sequent must be valid.
Therefore 
$\subst{\{G/\Gamma'\}\circ\sigma}{\hsubst{\theta}{\subst{\Gamma'/\Gamma}{F'}}}$
must be a valid formula.
By Lemmas~\ref{lem:form-subperm} and~\ref{lem:form-perm} this formula is 
equivalent to
$\subst{G/\Gamma}{\subst{\{G/\Gamma'\}\circ\sigma}{\hsubst{\theta}{F}}}$
which is further equivalent to
$\subst{G/\Gamma}{\subst{\sigma}{\hsubst{\theta}{F}}}$
since $\Gamma'$ cannot appear in $F$ or $\hsubst{\theta}{F}$.
From this we infer that the formula 
$\subst{\sigma}{\hsubst{\theta}{(\fctx{\Gamma}{\mathcal{C}}{F})}}$
is valid, and thus conclude that the conclusion sequent
is valid.

\case{\ctxL}
In this case there is a derivation for
$\ctxtyinst{\mathbb{N}}{\Psi}{\Xi}{\ctxty{\mathcal{C}}{\emptycb}}{G}$
for a context expression $G$.
Consider an arbitrary closed instance of the conclusion sequent identified by $\theta$
and $\sigma$.
If any formula in 
$\subst{\sigma}{\hsubst{\theta}{(\setand{\Omega}{\fctx{\Gamma}{\mathcal{C}}{F_1}}}}$
were not valid then this instance would be vacuously valid, so suppose they are all valid.
Then in particular,
$\subst{\sigma}{\hsubst{\theta}{(\fctx{\Gamma}{\mathcal{C}}{F_1})}}$
is valid.
So for any context expression $G$ such that 
$\csinst{\noms}{\emptyset}{\mathcal{C}}{G}$
is derivable, the formula
$\subst{G/\Gamma}{\subst{\sigma}{\hsubst{\theta}{F_1}}}$
will be valid.
But $\subst{\sigma}{\hsubst{\theta}{G}}$ will be such a context expression.
Using Theorem~\ref{th:ctx-ty-wk} it must be that 
$\ctxtyinst{\noms}{\Psi}{\Xi}{\ctxty{\mathcal{C}}{\emptycb}}{G}$
and so by Theorems~\ref{th:ctxtyinst-hsubst} and~\ref{th:ctxtyinst-subst}
$\ctxtyinst{\noms}{\emptyset}{\emptyset}{\ctxty{\mathcal{C}}{\emptycb}}{\subst{\sigma}{\hsubst{\theta}{G}}}$
will be derivable.
By Theorem~\ref{th:ctx-var-type-instance} we can conclude that
$\csinst{\noms}{\emptyset}{\mathcal{C}}{\subst{\sigma}{\hsubst{\theta}{G}}}$,
and therefore
$\subst{\subst{\sigma}{\hsubst{\theta}{G}}/\Gamma}{\subst{\sigma}{\hsubst{\theta}{F_1}}}$
is valid.
By Lemmas~\ref{lem:form-perm} and~\ref{lem:form-subperm}
$\subst{\sigma}{\hsubst{\theta}{\subst{G/\Gamma}{F_1}}}=
    \subst{\subst{\sigma}{\hsubst{\theta}{G}}/\Gamma}{\subst{\sigma}{\hsubst{\theta}{F_1}}}$
so $\subst{\sigma}{\hsubst{\theta}{\subst{G/\Gamma}{F_1}}}$ must be valid as well.

Thus $\theta$ and $\sigma$ identify a closed instance of the premise sequent
under which all the assumption formulas, 
$\subst{\sigma}{\hsubst{\theta}{(\setand{\Omega}{\subst{G/\Gamma}{F_1}})}}$
are valid.
But this sequent is valid by assumption and so $\subst{\sigma}{\hsubst{\theta}{F_2}}$
must be valid, and from this we can conclude that the conclusion sequent
must also be valid.

\case{\allR}
In this case the support set
$\mathbb{N}$ is $\{n_1:\alpha_1,\ldots,n_m:\alpha_m\}$ and for some
new variable $y\not\in\dom{\Psi}$ of type $\alpha'=\arr{\alpha_1}{\arr{\ldots}{\arr{\alpha_m}{\alpha}}}$,
$\hsub{\{\langle x,y\app n_1\ldots n_m,\alpha\rangle\}}{F}{F'}$ has a derivation.
Consider an arbitrary closed instance of the conclusion sequent identified by 
$\theta$ and $\sigma$.
If any formulas in $\subst{\sigma}{\hsubst{\theta}{\Omega}}$ were not valid
then this closed instance would be vacuously valid, so assume that they are all valid.
For any arbitrary term $t$ such that 
$\stlctyjudg{\noms\cup\STLCGamma_0}{t}{\alpha}$
has a derivation there exists $t'=\lflam{n_1:\alpha_1}{\ldots\lflam{n_m:\alpha_m}{t}}$
such that 
$\stlctyjudg{\noms\cup\STLCGamma_0}{t'}{\alpha'}$ ahs a derivation.
Further, it is clear that
$\theta'=\{\langle y, t',\alpha'\rangle\}\circ\theta$ and $\sigma$ will identify a 
closed instance of the premise sequent in this rule.
Clearly all the formulas in $\Omega$ will still be valid under these substitutions
as $y$ cannot appear in any formula in $\Omega$, and therefore
$\subst{\sigma}{\hsubst{\theta'}{F'}}$ will be valid by the 
validity of the premise sequent.
This formula is equivalent to 
$\subst{\sigma}{\hsubst{\theta'}{\hsubst{\{\langle x, y\app n_1\ldots n_m,\alpha\}}{F'}}}$
which by Lemmas~\ref{lem:form-perm} and~\ref{lem:form-hsubperm} is equivalent to 
$\hsubst{\{\langle x,t,\alpha\rangle\}}{\subst{\sigma}{\hsubst{\theta'}{F}}}$.
Clearly $y$ cannot appear in $F$ and thus $\hsubst{\theta'}{F}$ will be the 
same term as $\hsubst{\theta}{F}$.
Since
$\hsubst{\{\langle x, t, \alpha\rangle\}}
        {\subst{\sigma}{\hsubst{\theta}{F}}}$
is valid for this arbitrary choice of term $t$, then
$\subst{\sigma}{\hsubst{\theta}{F}}$ will be a valid formula by definition.
We can therefore conclude that the conclusion sequent is valid.

\case{\allL}
In this case there is a term $t$ and derivations for
$\stlctyjudg{\mathbb{N}\cup\STLCGamma_0\cup\Psi}{t}{\alpha}$ and
$\hsub{\{\langle x,t,\alpha\rangle\}}{F'}{F''}$.
Consider an arbitrary closed instance of the conclusion sequent identified by 
$\theta$ and $\sigma$.
If any formulas in 
$\subst{\sigma}{\hsubst{\theta}{(\setand{\Omega}{\fall{x:\alpha}{F_1}})}}$ 
were not valid then this closed instance would be vacuously valid, so assume 
that they are all valid.
In particular then,
$\subst{\sigma}{\hsubst{\theta}{(\fall{x:\alpha}{F_1})}}$ is valid.
So for any term $t'$ such that $\stlctyjudg{\noms\cup\STLCGamma_0}{t'}{\alpha}$
has a derivation, 
$\hsubst{\{\langle x,t',\alpha\rangle\}}{\subst{\sigma}{\hsubst{\theta}{F_1}}}$
must be valid.
Clearly $t'=\hsubst{\theta}{t}$ is such a term by Theorem~\ref{th:aritysubs}, and thus
$\hsubst{\{\langle x,t',\alpha\rangle\}}{\subst{\sigma}{\hsubst{\theta}{F_1}}}$
is a valid formula.
But by Lemmas~\ref{lem:form-hsubperm} and~\ref{lem:form-perm} this must be
equivalent to the formula $\subst{\sigma}{\hsubst{\theta}{F'_1}}$, and
so $\theta$ and $\sigma$ also identify a closed instance of the premise sequent
where all of the assumption formulas are valid.
Therefore $\subst{\sigma}{\hsubst{\theta}{F_2}}$ will be valid by the validity of the
premise sequent, and thus the conclusion sequent must be valid.

\case{\existsR}
In this case there are derivations for both
$\stlctyjudg{\mathbb{N}\cup\STLCGamma_0\cup\Psi}{t}{\alpha}$ and
$\hsub{\{\langle x,t,\alpha\rangle\}}{F}{F'}$.
Consider an arbitrary closed instance of the conclusion sequent identified by 
$\theta$ and $\sigma$.
If any formulas in 
$\subst{\sigma}{\hsubst{\theta}{\Omega}}$ 
were not valid then this closed instance would be vacuously valid, so assume 
that they are all valid.
Then clearly this same $\theta$ and $\sigma$ identify a closed instance of the 
premise sequent of this rule, and
since all the formulas in $\Omega$ are valid under $\theta$ and $\sigma$
we can infer from the validity of the premise sequent that
$\subst{\sigma}{\hsubst{\theta}{F'}}$ must be valid.
But by Lemmas~\ref{lem:form-perm} and~\ref{lem:form-hsubperm} this formula
is equal to 
$\hsubst{\{\langle x,\hsubst{\theta}{t},\alpha\rangle\}}{\subst{\sigma}{\hsubst{\theta}{F}}}$
and there is a derivation of
$\stlctyjudg{\noms\cup\STLCGamma_0}{\hsubst{\theta}{t}}{\alpha}$ 
by Theorem~\ref{th:aritysubs}.
Thus $\subst{\sigma}{\hsubst{\theta}{\fexists{x:\alpha}{F}}}$ must be valid.
Therefore the conclusion sequent will be valid.

\case{\existsL}
In this case the support set
$\mathbb{N}$ is $\{n_1:\alpha_1,\ldots,n_m:\alpha_m\}$ and for a
new variable $y\not\in\dom{\Psi}$ of type $\alpha'=(\arr{\alpha_1}{\arr{\ldots}{\arr{\alpha_m}{\alpha}}})$,
the judgement
$\hsub{\{\langle x,y\app n_1\ldots n_m,\alpha\rangle\}}{F_1}{F'_1}$ 
has a derivation.
Consider an arbitrary closed instance of the conclusion sequent identified by 
$\theta$ and $\sigma$.
If any formula in 
$\subst{\sigma}{\hsubst{\theta}{(\setand{\Omega}{\fexists{x:\alpha}{F_1}})}}$
were not valid this closed instance would be vacuously valid, so assume all these formulas are valid.
Then in particular,
$\subst{\sigma}{\hsubst{\theta}{(\fexists{x:\alpha}{F_1})}}$ will be valid.
So there must exist some term $t$ where $\stlctyjudg{\noms\cup\STLCGamma_0}{t}{\alpha}$
has a derivation and
$\hsubst{\{\langle x,t,\alpha\rangle\}}{\subst{\sigma}{\hsubst{\theta}{F_1}}}$
is valid.
From this we construct a term 
$t'=\lflam{n_1:\alpha_1}{\ldots\lflam{n_m:\alpha_m}{t}}$
which is such that $\{\langle y, t',\alpha'\rangle\}\circ\theta$
and $\sigma$ identify a closed instance of the premise sequent.
By Lemmas~\ref{lem:form-hsubperm} and~\ref{lem:form-perm} 
the equality
$\subst{\sigma}{\hsubst{\{\langle y, t',\alpha'\rangle\}\circ\theta}{F'_1}}=
  \hsubst{\{\langle x,t,\alpha\rangle\}}
         {\subst{\sigma}
                {\hsubst{\{\langle y, t',\alpha'\rangle\}\circ\theta}
                {F_1}}}$
holds and as $y$ cannot have appeared in $F_1$ this is also equivalent to 
$\hsubst{\{\langle x,t,\alpha\rangle\}}{\subst{\sigma}{\hsubst{\theta}{F_1}}}$
which we know to be valid.
Similarly all formulas in 
$\subst{\sigma}{\hsubst{\{\langle y, t',\alpha'\rangle\}\circ\theta}{\Omega}}$ 
will be valid as $y$ cannot appear in any formula in $\Omega$ and
all the formulas in $\subst{\sigma}{\hsubst{\theta}{\Omega}}$ are valid.

Thus by the validity of the premise sequent,
$\subst{\sigma}{\hsubst{\{\langle y, t',\alpha'\rangle\}\circ\theta}{F_1}}$
must be valid, and again because $y$ cannot have appeared in $F_1$ the formula
$\subst{\sigma}{\hsubst{\theta}{(\fexists{x:\alpha}{F_1})}}$
is valid.
Therefore we can conclude that the conclusion sequent will be valid.

\case{\andR}
Consider an arbitrary closed instance of the conclusion sequent identified by 
$\theta$ and $\sigma$.
If any formulas in 
$\subst{\sigma}{\hsubst{\theta}{\Omega}}$ 
were not valid then this closed instance would be vacuously valid, so assume 
that they are all valid.
There are two premise sequents in this rule:
$\mathcal{S}_1=\seq[\mathbb{N}]{\Psi}{\Xi}{\Omega}{F_1}$ and
$\mathcal{S}_2=\seq[\mathbb{N}]{\Psi}{\Xi}{\Omega}{F_2}$.
Clearly $\theta$ and $\sigma$ also identify a closed instance for both $\mathcal{S}_1$ and
$\mathcal{S}_2$, and since all formulas in $\subst{\sigma}{\hsubst{\theta}{\Omega}}$
are valid both 
$\subst{\sigma}{\hsubst{\theta}{F_1}}$ and
$\subst{\sigma}{\hsubst{\theta}{F_2}}$ must be valid by the
validity of these premise sequents.
But then clearly $\subst{\sigma}{\hsubst{\theta}{(\fand{F_1}{F_2})}}$ 
is valid by Definition~\ref{def:semantics}.

\case{\andL}
Consider an arbitrary closed instance of the conclusion sequent identified by 
$\theta$ and $\sigma$.
If any formula in 
$\subst{\sigma}{\hsubst{\theta}{(\setand{\Omega}{\fand{F_1}{F_2}})}}$
were not valid this closed instance would be vacuously valid, so assume all these formulas are valid.
Then in particular,
$\subst{\sigma}{\hsubst{\theta}{(\fand{F_1}{F_2})}}$ will be valid and so by definition
$\subst{\sigma}{\hsubst{\theta}{F_1}}$ and $\subst{\sigma}{\hsubst{\theta}{F_2}}$
are both valid.

The premise sequent is of the form 
$\seq[\mathbb{N}]{\Psi}{\Xi}{\setand{\Omega'}{F_i}}{F}$
for some $i\in\{1,2\}$.
Clearly $\theta$ and $\sigma$ identify a closed instance of this sequent, and
as all formulas in 
$\subst{\sigma}{\hsubst{\theta}{(\setand{\Omega}{F_i})}}$ are valid,
the formula $\subst{\sigma}{\hsubst{\theta}{F_i}}$ must be valid.
So by the validity of the premise sequent, 
$\subst{\sigma}{\hsubst{\theta}{F}}$
is valid, and thus we can conclude that the conclusion sequent is valid.

\case{\orR}
Consider an arbitrary closed instance of the conclusion sequent identified by 
$\theta$ and $\sigma$.
If any formulas in 
$\subst{\sigma}{\hsubst{\theta}{\Omega}}$ 
were not valid then this closed instance would be vacuously valid, so assume 
that they are all valid.
There is a single premise sequent in this rule and it is of the form 
$\seq[\mathbb{N}]{\Psi}{\Xi}{\Omega}{F_i}$ for some $i\in\{1,2\}$.
As $\theta$ and $\sigma$ clearly identify a closed instance of this premise sequent,
by the validity of this sequent we can conclude that
$\subst{\sigma}{\hsubst{\theta}{F_i}}$ is valid.
So by Definition~\ref{def:semantics} the formula
$\subst{\sigma}{\hsubst{\theta}{(\for{F_1}{F_2})}}$ must be valid, and therefore
the conclusion sequent is valid.

\case{\orL}
Consider an arbitrary closed instance of the conclusion sequent identified by 
$\theta$ and $\sigma$.
If any formula in 
$\subst{\sigma}{\hsubst{\theta}{(\setand{\Omega}{\for{F_1}{F_2}})}}$
were not valid this closed instance would be vacuously valid, so assume all these formulas are valid.
Then in particular,
$\subst{\sigma}{\hsubst{\theta}{(\for{F_1}{F_2})}}$ will be valid.

There are two premise sequents of this rule: 
$\mathcal{S}_1=\seq[\mathbb{N}]{\Psi}{\Xi}{\setand{\Omega}{F_1}}{F}$ and
$\mathcal{S}_2=\seq[\mathbb{N}]{\Psi}{\Xi}{\setand{\Omega}{F_2}}{F}$.
Note that $\theta$ and $\sigma$ identify closed instances of both these sequents.
Since 
$\subst{\sigma}{\hsubst{\theta}{(\for{F_1}{F_2})}}$
is valid, then either 
$\subst{\sigma}{\hsubst{\theta}{F_1}}$ is valid or
$\subst{\sigma}{\hsubst{\theta}{F_2}}$ is valid.
In the former case, we can infer from the validity of $\mathcal{S}_1$ that
$\subst{\sigma}{\hsubst{\theta}{F}}$ must be valid, and in the later case
we infer the same but through the validity of $\mathcal{S}_2$.
Therefore $\subst{\sigma}{\hsubst{\theta}{F}}$ will always be valid, and we can
conclude that the conclusion sequent is valid.

\case{\impR}
Consider an arbitrary closed instance of the conclusion sequent identified by 
$\theta$ and $\sigma$.
If any formulas in 
$\subst{\sigma}{\hsubst{\theta}{\Omega}}$ 
were not valid then this closed instance would be vacuously valid, so assume 
that they are all valid.
There is a single premise sequent in this rule, 
$\seq[\mathbb{N}]{\Psi}{\Xi}{\setand{\Omega}{F_1}}{F_2}$, and $\theta$ and $\sigma$
clearly  identify a closed instance of this sequent.
Whenever the formula $\subst{\sigma}{\hsubst{\theta}{F_1}}$ is valid
we also know that all the formulas in 
$\subst{\sigma}{\hsubst{\theta}{(\setand{\Omega}{F_1})}}$ are valid
and so by the validity of this premise sequent we can infer that
$\subst{\sigma}{\hsubst{\theta}{F_2}}$ is valid.
But then 
$\fimp{\subst{\sigma}{\hsubst{\theta}{F_1}}}{\subst{\sigma}{\hsubst{\theta}{F_1}}}$ 
must be valid by the semantics of formulas.
This is the same as $\subst{\sigma}{\hsubst{\theta}{(\fimp{F_1}{F_2})}}$
by the definition of substitution applications,
and therefore we conclude that the conclusion sequent is valid.

\case{\impL}
Consider an arbitrary closed instance of the conclusion sequent identified by 
$\theta$ and $\sigma$.
If any formula in 
$\subst{\sigma}{\hsubst{\theta}{(\setand{\Omega}{\fimp{F_1}{F_2}})}}$
were not valid this closed instance would be vacuously valid, so assume all these formulas are valid.
Then in particular,
$\subst{\sigma}{\hsubst{\theta}{(\fimp{F_1}{F_2})}}$ is valid.
There are two premise sequents of this rule which are valid by assumption:
$\mathcal{S}_1=\seq[\mathbb{N}]{\Psi}{\Xi}{\Omega}{F_1}$ and
$\mathcal{S}_2=\seq[\mathbb{N}]{\Psi}{\Xi}{\setand{\Omega}{F_2}}{F}$.
Both $\mathcal{S}_1$ and $\mathcal{S}_2$ have closed instances identified by $\theta$ and
$\sigma$ as they share arity and context variables contexts with the conclusion sequent.
By assumption all formulas in $\subst{\sigma}{\hsubst{\theta}{\Omega}}$
are valid, and so by the validity of $\mathcal{S}_1$ the formula
$\subst{\sigma}{\hsubst{\theta}{F_1}}$ is valid.
Thus, by the validity of $\subst{\sigma}{\hsubst{\theta}{(\fimp{F_1}{F_2})}}$,
the formula $\subst{\sigma}{\hsubst{\theta}{F_2}}$ must be valid.
But then all the formulas in 
$\subst{\sigma}{\hsubst{\theta}{(\setand{\Omega}{F_2})}}$
are valid, and so by the validity of $\mathcal{S}_2$ we can conclude the formula
$\subst{\sigma}{\hsubst{\theta}{F}}$ is valid.
We can therefore conclude that the conclusion sequent is valid.
\end{proof}

\section{Proof Rules that Interpret Atomic Formulas}
\label{sec:atomic-rules}

Atomic formulas in our logic and sequents represent LF typing judgements.
The validity of such formulas is therefore determined by the LF
derivations rules and this fact can be used in their analysis.
When an atomic formula appears as the conclusion of a sequent, the
analysis takes an obvious form: the derivability of the sequent
can be based on that of a sequent in which the conclusion judgement
has been unfolded using an LF rule.
The treatment of an atomic assumption formula requires more thought.
Such formulas may contain context and eigenvariables in them and we
must consider all the possible instantiations for these variables that
could make the judgement true in determining the validity of the
sequent.
The translation of this general requirement into an analysis that is
local to the atomic formula may be driven by the structure of the type
in the formula.
When the type is of the form $\typedpi{x}{A}{B}$, the term must be an
abstraction and there is exactly one way in which a purported typing
derivation could have concluded.
When the type is atomic, Theorem~\ref{th:atomictype}
provides us information about the different cases that need to be
considered.

We transform the ideas outlined above into a concrete collection of
proof rules in this section.
The development of a ``case analysis rule'' for atomic assumption
formulas is somewhat intricate and is the subject of the first
subsection below.
With this done, the second subsection presents the proof rules and
shows that they obey the important properties of soundness and
preservation of wellformedness for sequents. 

\subsection{Analyzing an Atomic Assumption Formula with an Atomic Type}
\label{ssec:cases}

Our goal here is to develop the basis for a proof rule around the
analysis of an assumption of the form $\fatm{G}{R:P}$.
Obviously $P$ in this case is of the form $(a\app M_1\app \ldots\app
M_n)$.
Let us suppose initially that this formula is closed.
In this case, for the formula to be valid, $R$ would need to have as a
head a constant declared in $\Sigma$ or a nominal constant assigned a
type in $G$. 
If the arguments of $R$ do not satisfy the constraints imposed by the
type associated with the head, then the typing judgement will not be
derivable and hence we can conclude that the sequent is in fact
valid.
On the other hand, if the arguments of $R$ do satisfy the required
constraints, then Theorem~\ref{th:atomictype} gives us a means for
decomposing the given typing judgement into ones pertaining to
$M_1,\ldots, M_n$.
The validity of the given sequent can therefore be reduced to the
validity of a sequent that results from replacing the atomic formula
under consideration by ones that represent the mentioned typing
judgements.
In this discussion, we have implicitly assumed that $G$ is
well-formed.
However, the reduction described is easily seen to be sound even when
$G$ is not well-formed. 

In the general case, the formula $\fatm{G}{R:P}$ may not be closed.
There are, in fact, two conceptually different possibilities that need
to be considered from this perspective.
First, the context expression may have a part that is yet to be
determined, i.e., $G$ may be of the form
$\Gamma, n_1 : A_1,\ldots, n_m : A_m$ where $\Gamma$ has a set of
names $\mathbb{N}$ and a context
variable type of the form $\ctxty{\mathcal{C}}{G_1;\ldots;G_\ell}$
associated with it in the sequent.
Second, the expressions in the atomic formula and the context variable
type may contain variables in them that are bound in the eigenvariable
context.
To articulate a proof rule around the atomic formula in this
situation, it is necessary to develop a means for analyzing the
formula in a way that pays attention to the validity of the sequent
under all acceptable instantiations of the context and term variables.

The analysis that we describe proceeds in two steps.
We first describe a finite way to consider elaborations of the context
variable that make explicit all the heads that need to be considered
for the term in an analysis of the closed instances of the atomic
formula. 
Concretely, this process yields a finite collection of pairs
comprising a sequent in which the context variable may have been
partially instantiated, and a specifically identified possibility for
the head that is either drawn from the signature or that appears
explicitly in the context; the intent here, which is verified in
Lemma~\ref{lem:heads-cover}, is that considering just
the second components of these pairs as the heads of the term in the
typing judgement will suffice for a complete analysis based on
Theorem~\ref{th:atomictype}. 
The second step actually carries out the analysis in each of these
cases, using the idea of unification in the application of
Theorem~\ref{th:atomictype} to accommodate all possible closed
instantiations of the term variables in the sequent.
We consider these steps in the two subsections below.

\subsubsection{Elaborating Context Variables and Identifying Head Possibilities}

We first note that context expressions have may have implicit and
explicit parts, the former being subject to elaboration via context
substitutions. 

\begin{definition}[Implicit and Explicit Parts of a Context]
Let $\mathcal{S} = \seq[\mathbb{N}]{\Psi}{\Xi}{\Omega}{F}$ be a
well-formed sequent.
If $G$ is a context expression appearing in $\mathcal{S}$ then
it must be of either the form $n_1 : A_1,\ldots, n_m:A_m$ or of the form
$\Gamma, n_1 : A_1,\ldots, n_m:A_m$ where $\Gamma$ is a context 
variable with an associated declaration 
$\ctxvarty{\Gamma}{\mathbb{N}_{\Gamma}}{\ctxty{\mathcal{C}}{G_1; \ldots; G_n}}$ 
in $\Xi$.
In the latter case, we say that $G$ has an implicit part relative to
$\mathcal{S}$ that is given by
$\ctxvarty{\Gamma}{\mathbb{N}_{\Gamma}}{\ctxty{\mathcal{C}}{G_1; \ldots; G_n}}$. 
Further, we refer to $n_1 : A_1,\ldots, n_m:A_m$ in the former case
and to the sequence formed by listing the bindings in $G_1,\ldots,G_n$
followed by $n_1 : A_1,\ldots, n_m:A_m$ in the latter case as the
explicit bindings in $G$ relative to $\mathcal{S}$.
\end{definition}

Let $\Gamma$ be a context variable that has the type
$\ctxty{\mathcal{C}}{G_1;\ldots;G_\ell}$.
Closed instances of $\Gamma$ are then generated by interspersing
$G_1,\ldots,G_\ell$ with blocks of declarations generated from the
block schema comprising $\mathcal{C}$.
In determining possibilities for the head of $R$ from the implicit
part of $G$ in an atomic formula of the form $\fatm{G}{R:P}$, we need
to consider an elaboration of $G$ with only one such block; of course,
for a complete analysis, we will need to consider all the
possibilities for such an elaboration.
The function $\addblksans$ defined below formalizes such an
elaboration, returning a modified sequent and a potential head for the
term in the typing judgement.
Note that in an elaboration based on a block schema 
of the form
$\{x_1:\alpha_1,\ldots,x_n:\alpha_n\}y_1 : A_1, \ldots, y_k : A_k$,
it would be necessary to consider a choice of nominal constants for
the schematic variables $y_1,\ldots,y_k$.
The function is parameterized by such a choice.
We must also accommodate all possible instantiations for the variables
$x_1,\ldots,x_n$, subject to the proviso that these instantiations do
not use nominal constants that appear in a later part of the context
expression.
This is done by using (implicitly universally quantified) term
variables for $x_1,\ldots,x_n$ and by raising such variables over the
nominal constants that are not prohibited from appearing in the
instantiations; to support the latter requirement, the function is
parameterized by a collection of nominal constants.  
Finally, we observe that the elaboration process may introduce new
nominal constants into the sequent, necessitating a raising of the
eigenvariables over the new constants. 

\begin{definition}[Adding a Block and Picking a Binding From It]
Let $\mathcal{S}$ be the the well-formed sequent 
$\seq[\mathbb{N}]{\Psi}{\Xi}{\Omega}{F}$,
let there be some $\ctxvarty{\Gamma}{\mathbb{N}_{\Gamma}}{\ctxty{\mathcal{C}}{G_1; \ldots; G_n}}\in\Xi$, and let $\mathcal{B} = \{x_1:\alpha_1,\ldots,
x_n:\alpha_n\}y_1 : A_1, \ldots, y_k : A_k$ be one of the block
schemas comprising $\mathcal{C}$.
Further, let $\mathbb{N'} \subseteq (\mathbb{N}\setminus\mathbb{N}_{\Gamma})$
be a collection of nominal
constants, let $ns$ be a list $n_1,\ldots,n_k$ of distinct nominal
constants that are also different from the constants in
$\mathbb{N'}$ and that are such that, for $1 \leq i \leq k$,
$n_i : \erase{A_i} \in (\noms\setminus\mathbb{N}_{\Gamma})$. 
Finally, for $0 \leq j \leq n$, let $\mathbb{N}_j$ be the collection
of nominal constants assigned types in $G_1,\ldots,G_j$.
Then, letting
\begin{enumerate}
  \item $\Psi'_j$ be a version of $\Psi$ raised over
    $\{n_1,\ldots,n_k\} \setminus \mathbb{N}$, $\theta'_j$ be the
    associated raising substitution, 
  \item $A_1',\ldots,A_k'$ be the types $A_1,\ldots,A_k$ with the schematic variables
  $y_1,\ldots,y_k$ replaced with the names $n_1,\ldots,n_k$,
  \item  $\Psi''_j$ be a version of $\{x_1:\alpha_1,\ldots,x_n:\alpha_n\}$
    raised over $\mathbb{N'} \cup \mathbb{N}_j \cup
    (\{n_1,\ldots,n_k\} \setminus \mathbb{N})$ with the new variables 
    chosen to be distinct from those in $\Psi'_j$, $\theta''_j$ be the
    associated raising substitution, $G$ be the context expression
    $n_1 : \hsubst{\theta''_j}{A'_1},\ldots, n_k : \hsubst{\theta''_j}{A'_k}$, and
  \item $\Xi'_j$ be the context variable context
\begin{gather*}
  \hsubst{\theta'_j}
         {\left(\Xi \setminus 
                \left\{\ctxvarty{\Gamma}
                                {\mathbb{N}_{\Gamma}}
                                {\ctxty{\mathcal{C}}{G_1; \ldots; G_n}}
                \right\}
          \right)}
  \\\bigcup\\
  \left\{\ctxvarty{\Gamma}
                  {\mathbb{N}_{\Gamma}}
                  {\ctxty{\mathcal{C}}
                         {\hsubst{\theta'_j}{G_1};\ldots;
                          \hsubst{\theta'_j}{G_j};
                          G;
                          \hsubst{\theta'_j}{G_{j+1}};\ldots;
                          \hsubst{\theta'_j}{G_n}}}
  \right\},
\end{gather*}
\end{enumerate}
for $0 \leq j \leq n$ and $1 \leq i \leq k$,
$\addblk{\mathcal{S}}{\ctxvarty{\Gamma}{\mathbb{N}_{\Gamma}}{\ctxty{\mathcal{C}}{G_1; \ldots; G_n}}}{\mathcal{B}}{ns}{\mathbb{N'}}{j}{i}$ is defined to be the tuple
\[\langle \seq[\mathbb{N} \cup ns]
             {\Psi'_j \cup \Psi''_j}
             {\Xi'_j}
             {\hsubst{\theta'_j}{\Omega}}
             {\hsubst{\theta'_j}{F}},
         n_i : \hsubst{\theta''_j}{A'_i}
\rangle.\]
Note that the conditions in the definition ensure that all the
substitutions involved in it will have a result, thereby permitting us
to use the notation introduced after Theorem~\ref{th:aritysubs}.           
\end{definition}

The elaboration just described is parameterized by the choice of
nominal constants for the variables assigned types in the block
schema.
In identifying the choices that have to be considered, it is useful to
partition the members of $(\noms\setminus\mathbb{N}_{\Gamma})$ 
into two sets: those that appear in
the support set of the sequent whose elaboration is being considered
and those that do not.
It is necessary to consider all possible assignments that satisfy
arity typing constraints from the first category.
From the second category, as we shall soon see, it suffices to
consider exactly one representative assignment.
Note also that we may insist that the nominal constant in each
assignment of the block be disinct; if this is not the case, the
sequent is easily seen to be valid.
The function $\namessans$ defined below embodies these ideas.
The function is parameterized be a sequence of arity types
corresponding to the declarations in the block schema, a collection of
``known'' nominal constants that are available for use in an
elaboration of the block schema and a collection of nominal constants
that are already bound in the context expressions and hence must not
be used again. 

\begin{definition}[Identifying a Choice of Nominal Constants]
Let $tys$ be a sequence of arity types and let $\mathbb{N}_o$ and
$\mathbb{N}_b$ be finite sets of nominal constants. 
Further, let $\emptyseq$ denote an empty sequence and 
$\consseq{x}{xs}$ denote a sequence that starts with $x$ and
continues with the sequence $xs$.
Then the collection of name choices for $tys$ relative to $\mathbb{N}_o$
and away from $\mathbb{N}_b$ is denoted by
$\names{tys}{\mathbb{N}_o}{\mathbb{N}_b}$ and defined by recursion
on $tys$ as follows:
\begin{center}
\begin{minipage}{5in}
\begin{tabbing}
\=\qquad\qquad\=\kill
\>
$\names{tys}{\mathbb{N}_o}{\mathbb{N}_b)} =$\\
\>\>
     $\begin{cases}
        \{\emptyseq\} & \mbox{\rm if}\ tys = \emptyseq\\[10pt]
        \{\consseq{n}{nl}\ |\ 
                  n : \alpha \in \noms,
                  n \in \mathbb{N}_o\setminus \mathbb{N}_b,
                  \ \mbox{\rm and}\\
          \qquad\qquad
               nl \in \names{tys'}{\mathbb{N}_o}{\mathbb{N}_b \cup
                 \{n\}}\}\ \cup\\                     
        \{\consseq{n}{nl}\ |\ 
               n\ \mbox{\rm is the first nominal
                            constant}& \mbox{\rm if}\ tys = \consseq{\alpha}{tys'} \\ 
          \qquad\qquad  \mbox{\rm such that}\ n : \alpha \in \noms\
                    \mbox{\rm and}\
                  n \not\in \mathbb{N}_o \cup \mathbb{N}_b,\\
          \qquad\qquad  \mbox{\rm and}\ nl \in
                    \names{tys'}{\mathbb{N}_o}{\mathbb{N}_b \cup \{n\}} \}
      \end{cases}$
\end{tabbing}
\end{minipage}
\end{center}
We assume in this definition the existence of an ordering on the
nominal constants that allows us to select the first of these
constants that satisfies a criterion of interest.
\end{definition}  

We can now identify a finite collection of elaborations of the
implicit part of a context expression that must be considered in the
analysis of an assumption formula of the form $\fatm{G}{R:P}$ that
appears in a sequent $\mathcal{S}$.
We do this below through the definition of the function
$\implheadssans$. 

\begin{definition}[Head Choices from the Implicit Part of a Context]
Let $\mathcal{S}$ be a well-formed sequent 
$\seq[\mathbb{N}]{\Psi}{\Xi}{\Omega}{F}$, let $G$ be a context expression appearing in a
formula in $\mathcal{S}$ that has an implicit part relative to
$\mathcal{S}$ that is given by 
$\ctxvarty{\Gamma}{\mathbb{N}_{\Gamma}}{\ctxty{\mathcal{C}}{G_1; \ldots; G_n}}$, 
and let
$\mathcal{B} = \{x_1:\alpha_1,\ldots, x_n:\alpha_n\}
                      y_1 : A_1, \ldots, y_k : \alpha_k$
be one of the block schemas comprising $\mathcal{C}$.
Further, let $\mathbb{N}_b$ be the collection of nominal constants
assigned types by the explicit bindings of $G$ relative to
$\mathcal{S}$ and let 
$\mathbb{N}_o = \mathbb{N}\setminus\mathbb{N}_{\Gamma}\setminus\mathbb{N}_b$.
Finally, let $\allblks{\mathcal{S}}
              {\ctxvarty{\Gamma}{\mathbb{N}_{\Gamma}}{\ctxty{\mathcal{C}}{G_1; \ldots; G_n}}}
              {\mathcal{B}}$ denote the set
\begin{tabbing}
\qquad\=\qquad\qquad\=\kill
\> $\{ \addblk{\mathcal{S}}
             {\ctxvarty{\Gamma}{\mathbb{N}_{\Gamma}}{\ctxty{\mathcal{C}}{G_1; \ldots; G_n}}}
             {\mathcal{B}}
             {ns}
             {\mathbb{N}_o}
             {j}
             {i} \ \vert$ \\
\>\>  $0 \leq j \leq n,
       1 \leq i \leq k,
       ns \in \names{(\erase{A_1},\ldots,\erase{A_k})}
                    {\mathbb{N}_o}
                    {\mathbb{N}_{\Gamma}\cup\mathbb{N}_b}\}$.
\end{tabbing}
If $\{\mathcal{B}_1,\ldots,\mathcal{B}_m \}$ is the collection of 
of block schemas comprising $\mathcal{C}$, then the implicit 
heads in $G$ relative to $\mathcal{S}$ is defined to be the set
\[ \bigcup \{ \allblks{\mathcal{S}}
                      {\ctxvarty{\Gamma}{\mathbb{N}_{\Gamma}}{\ctxty{\mathcal{C}}{G_1; \ldots; G_n}}}
                      {\mathcal{B}} \ \vert\
                           \mathcal{B} \in
                           \{\mathcal{B}_1,\ldots,\mathcal{B}_m \}\}. \]
This set is denoted by $\implheads{\mathcal{S}}{G}$.
\end{definition}

The complete set of heads and corresponding (elaborated) sequents that
must be considered in the analysis of an atomic formula of the form
$\fatm{G}{R:P}$ is identified through the function \headssans\ that is
defined below.

\begin{definition}[The Complete Set of Head Choices]
Let $\mathcal{S}$ be a well-formed sequent and let $G$ be a context
expression appearing in a formula in $\mathcal{S}$.
Let $\mbox{\sl NewHds}$ be the set $\implheads{\mathcal{S}}{G}$ if $G$ has an
implicit part relative to $\mathcal{S}$ and the empty set otherwise.
Then the heads in $G$ relative to $\mathcal{S}$ is defined to be the set
\begin{gather*}
\{ \langle \mathcal{S}, c : A\rangle\ |\ c : A \in \Sigma\} \cup
\{ \langle \mathcal{S}, n : A\rangle\ |\ n : A\ 
               \mbox{\rm is an explicit binding in G relative to}\ \mathcal{S} \} 
   \\\bigcup\\ 
   \mbox{\sl NewHds}.
\end{gather*}
This set is denoted by $\hds{\mathcal{S}}{G}$.
\end{definition}

The first property that we observe of the elaboration process
described is that it requires us to consider only well-formed
sequents. 

\begin{lemma}
\label{lem:heads-wf}
Let $\mathcal{S}=\seq[\mathbb{N}]{\Psi}{\Xi}{\Omega}{F}$ be a well-formed sequent
and let $\fatm{G}{R:P}$ be an atomic formula in $\Omega$.
Then for each $(\mathcal{S}',h:A)\in \hds{\mathcal{S}}{G}$ it must be the
case that $\mathcal{S}'$ is a well-formed sequent. 
Further, if $\mathcal{S}'$ is
$\seq[\mathbb{N'}]{\Psi'}{\Xi'}{\Omega'}{F'}$, it must be the
case that $\wftype{(\mathbb{N}'\cup\STLCGamma_0\cup\Psi')}{A}$ is derivable.
\end{lemma}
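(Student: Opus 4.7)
The plan is to prove the lemma by case analysis on the structure of $\hds{\mathcal{S}}{G}$, which by definition is the union of three disjoint collections: pairs involving constants drawn from $\Sigma$, pairs involving explicit bindings from $G$, and pairs produced by $\implheads{\mathcal{S}}{G}$. The first two cases are essentially immediate: for them $\mathcal{S}' = \mathcal{S}$, which is well-formed by assumption, so the sequent well-formedness clause is trivially satisfied. The type well-formedness for the constant case follows because $\Sigma$ is well-formed at the outset, so $A$ respects $\STLCGamma_0$ and hence, by Theorem~\ref{th:wf-form-wk}-style weakening for arity kinding, it respects $\mathbb{N} \cup \STLCGamma_0 \cup \Psi$. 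For explicit bindings, the well-formedness of $\fatm{G}{R:P}$ as a formula in $\mathcal{S}$ ensures through the rules defining $\wfform{\cdot}{\cdot}{\cdot}$ that every type assigned to a nominal constant in the explicit bindings of $G$ respects $\mathbb{N} \cup \STLCGamma_0 \cup \Psi$; a similar observation applies to the types in the context blocks of the implicit part of $G$ via the well-formedness judgement for the associated context variable type.

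The substantive case is the one where $\langle \mathcal{S}', h:A\rangle$ comes from $\implheads{\mathcal{S}}{G}$, meaning that it is produced by an application of $\addblksans$ for some choice of block schema $\mathcal{B} = \{x_1{:}\alpha_1,\ldots,x_n{:}\alpha_n\}y_1{:}A_1,\ldots,y_k{:}A_k$ from $\mathcal{C}$, some position $j$ in $G_1;\ldots;G_n$, some list $ns = n_1,\ldots,n_k$ of distinct nominal constants compatible with the arity types, and some index $i$. I would first verify that the raising substitutions $\theta'_j$ and $\theta''_j$ introduced by the construction are arity type preserving with respect to the enlarged arity context $\mathbb{N} \cup ns \cup \STLCGamma_0 \cup (\Psi'_j \cup \Psi''_j)$; this is essentially Theorem~\ref{th:raised-subs} applied to $\Psi$ and to the schematic header of $\mathcal{B}$. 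With that in hand, the well-formedness of the context variable types in $\Xi'_j$ other than the one for $\Gamma$ follows from applying Theorems~\ref{th:ctx-ty-wk} and~\ref{th:ctxtyinst-hsubst} to the corresponding types in $\Xi$. For the modified type for $\Gamma$, the added block $G = n_1{:}\hsubst{\theta''_j}{A'_1},\ldots,n_k{:}\hsubst{\theta''_j}{A'_k}$ must be shown to be a one-step instance of $\mathcal{C}$ via $\bsinst{\cdot}{\cdot}{\mathcal{B}}{G}$; this is by construction, since the $n_i$ have the correct arity types and the raised term variables in $\Psi''_j$ are of the types demanded by the header of $\mathcal{B}$. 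The well-formedness of the surrounding blocks $\hsubst{\theta'_j}{G_i}$ follows again from Theorem~\ref{th:ctxtyinst-hsubst}. The well-formedness of the formulas in $\hsubst{\theta'_j}{\Omega} \cup \{\hsubst{\theta'_j}{F}\}$ is obtained by combining the weakening principle of Theorem~\ref{th:wf-form-wk} with the substitution principle of Theorem~\ref{th:subst-formula}, applied to the original well-formedness derivations guaranteed by the well-formedness of $\mathcal{S}$.

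For the type-well-formedness clause in this third case, the type in question is $\hsubst{\theta''_j}{A'_i}$ where $A'_i$ is $A_i$ with $y_1,\ldots,y_k$ replaced by $n_1,\ldots,n_k$. From $\abstyping{\mathcal{B}}$, the typing judgement $\wfdecls{\STLCGamma_0 \cup \{x_1{:}\alpha_1,\ldots,x_n{:}\alpha_n\}}{y_1{:}A_1,\ldots,y_k{:}A_k}{\STLCGamma'}$ is derivable, which forces $\wftype{\STLCGamma_0 \cup \{x_1{:}\alpha_1,\ldots,x_n{:}\alpha_n,y_1{:}\erase{A_1},\ldots,y_{i-1}{:}\erase{A_{i-1}}\}}{A_i}$ to be derivable. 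Renaming the $y_\ell$ to the nominal constants $n_\ell$ (which have matching arity types by the requirement on $ns$) yields the analogous judgement for $A'_i$. Applying Theorem~\ref{th:aritysubs-ty} with the arity type preserving substitution $\theta''_j$ and then weakening via the counterpart of Theorem~\ref{th:wf-form-wk} for arity kinding yields $\wftype{\mathbb{N} \cup ns \cup \STLCGamma_0 \cup (\Psi'_j \cup \Psi''_j)}{\hsubst{\theta''_j}{A'_i}}$, which is exactly the required conclusion.

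The main obstacle will be the bookkeeping around the two raising substitutions $\theta'_j$ and $\theta''_j$ and the fact that the support sets associated with context variables in $\Xi'_j$ remain the same $\mathbb{N}_\Gamma$ as in $\Xi$ while the ambient support set has grown by $ns$. Care is needed to verify that $\mathbb{N}_\Gamma$ remains disjoint from the newly introduced nominal constants (guaranteed by the condition $n_i \in \noms\setminus\mathbb{N}_\Gamma$ in the definition of $\addblksans$) so that the required well-formedness judgements $\wfctxvarty{(\mathbb{N} \cup ns)\setminus\mathbb{N}_\Gamma}{\Psi'_j \cup \Psi''_j}{\cdot}$ are of the correct shape to apply Theorem~\ref{th:ctxtyinst-hsubst} coherently.
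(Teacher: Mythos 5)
Your proposal is correct and follows essentially the same route as the paper's (much terser) proof: the paper also dismisses the $\Sigma$ and explicit-binding cases as immediate and reduces the $\implheadssans$ case to the single observation that applying a raising substitution to a type well-formed over $\STLCGamma \cup \Psi_1$ yields one well-formed over $\STLCGamma \cup \Psi_2 \cup \mathbb{N}_2$, which is exactly the fact you elaborate for $\theta'_j$ and $\theta''_j$. Your additional bookkeeping about $\mathbb{N}_\Gamma$ remaining disjoint from $ns$ and the use of Theorems~\ref{th:ctx-ty-wk}, \ref{th:ctxtyinst-hsubst}, \ref{th:wf-form-wk}, \ref{th:subst-formula} and \ref{th:aritysubs-ty} is a faithful expansion of what the paper leaves implicit.
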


\begin{proof}
The claim is not immediately obvious only 
when $(\mathcal{S}',h:A)\in \implheads{\mathcal{S}}{G}$.
For these cases, it suffices to show that every pair generated by
\addblksans\ satisfies the requirements of the lemma.
However, this is easily argued.
The main observation--that gets used twice---is that if $\Psi_2$ is a
version of $\Psi_1$ raised over some collection of nominal constants
$\mathbb{N}_2$ with $\theta$ being the associated raising
substitution, and 
$\wftype{\STLCGamma \cup \Psi_1}{A'}$ holds for some arity context
$\STLCGamma$ that is disjoint from $\Psi_1$ and $\Psi_2$, then 
$\wftype{\STLCGamma \cup \Psi_2 \cup
  \mathbb{N}_2}{\hsubst{\theta}{A'}}$ also holds.
\end{proof}

We want next to show the adequacy of the elaboration process, i.e.,
that the collection of pairs of sequents and heads it identifies are
sufficient for the analysis of validity for a sequent with an
assumption formula of the form $\fatm{G}{R:P}$.
One aspect that we must account for in our argument is that we
consider all possible choices for a ``new name'' for a binding in a
block instance through a single representative.
The key property that enables this reduction is that the validity of
closed sequents is invariant under permutations of the nominal
constants as we discussed in Theorem~\ref{th:perm-valid}.

The following lemma in combination with Theorem~\ref{th:perm-valid}
yields the desired result concerning the elaboration process. 

\begin{lemma}
\label{lem:heads-cover}
Let $\mathcal{S}=\seq[\mathbb{N}]{\Psi}{\Xi}{\Omega}{F}$ be a
well-formed sequent and let $\fatm{G}{R:P}$ be a formula in $\Omega$.
Further, let $\theta$ and $\sigma$ be term and context
variable substitutions that identify a closed instance of $\mathcal{S}$ and
that are such that $\subst{\sigma}{\hsubst{\theta}{\fatm{G}{R:P}}}$ is
valid.
If the term $\hsubst{\theta}{R}=(h\app M_1\ldots M_n)$, then 
there is a pair $\langle \mathcal{S}', h':A' \rangle$ in $\hds{\mathcal{S}}{G}$
such that
\begin{enumerate}
\item there is a formula $\fatm{G'}{R':P'}$ amongst the assumption
formulas of $\mathcal{S}'$ with $h':A'$ appearing in either $\Sigma$
or in the explicit bindings in $G'$ relative to $\mathcal{S'}$, and 

\item there is a closed instance of $\mathcal{S}'$ identified by 
  closed term and context variable substitutions $\theta'$ and $\sigma'$ and a
  permutation $\pi$ such that $\permute{\pi}{h'} = h$,
$\permute{\pi}{\subst{\sigma'}{\hsubstseq{\emptyset}{\theta'}{\mathcal{S}'}}}= 
    \subst{\sigma}{\hsubstseq{\emptyset}{\theta}{\mathcal{S}}}$, and
$\permute{\pi}{\subst{\sigma'}{\hsubst{\theta'}{\fatm{G'}{R':P'}}}}=
    \subst{\sigma}{\hsubst{\theta}{\fatm{G}{R:P}}}$.
\end{enumerate}
\end{lemma}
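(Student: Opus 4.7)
The plan is to proceed by case analysis on where the head symbol $h$ of $\hsubst{\theta}{R}$ originates in the bindings of the closed context $\subst{\sigma}{\hsubst{\theta}{G}}$. Since $\subst{\sigma}{\hsubst{\theta}{\fatm{G}{R:P}}}$ is valid, the underlying LF typing derivation (via Theorem~\ref{th:atomictype}) forces $h$ to be either a term-level constant in $\Sigma$ or a nominal constant that is bound in $\subst{\sigma}{\hsubst{\theta}{G}}$. This gives three subcases: (a)~$h$ is a constant $c \in \Sigma$; (b)~$h$ is a nominal constant already assigned a type by an explicit binding in $G$ relative to $\mathcal{S}$; or (c)~$h$ is a nominal constant introduced by the context substitution $\sigma$ applied to the implicit part of $G$, so $G$ has an implicit part $\ctxvarty{\Gamma}{\mathbb{N}_{\Gamma}}{\ctxty{\mathcal{C}}{G_1;\ldots;G_n}}$.

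In cases~(a) and~(b) the witness is immediate: take $\mathcal{S}' = \mathcal{S}$, $h' = h$ with its ambient type $A$, $\theta' = \theta$, $\sigma' = \sigma$, and $\pi$ the identity permutation. The pair $\langle \mathcal{S}, h:A\rangle$ lies in $\hds{\mathcal{S}}{G}$ by construction, the formula $\fatm{G}{R:P}$ itself serves as the witnessing assumption, and the three equalities in clause~(2) hold on the nose.

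Case~(c) is the substantive part. From the definition of validity together with the instantiation rules for context schemas in Figure~\ref{fig:ctx-schema}, $\subst{\sigma}{\hsubst{\theta}{G}}$ factors as the concatenation of the hereditarily substituted closed instances of $G_1,\ldots,G_n$ interleaved with one-step instances of $\mathcal{C}$, and $h$ must be declared in one of these interleaved blocks. I will identify the block schema $\mathcal{B}\in\mathcal{C}$ that generated this block, the position $j \in \{0,\ldots,n\}$ between $G_j$ and $G_{j+1}$ at which it was inserted, and the index $i \in \{1,\ldots,k\}$ of the binding for $h$ within it. Some of the $k$ nominal constants declared in that block lie in $\mathbb{N}\setminus\mathbb{N}_{\Gamma}$ and must appear verbatim in the name list $ns$ passed to $\addblksans$; the remaining ``fresh'' constants are picked canonically by $\namessans$ as representatives. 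I take $\pi$ to be the permutation that transposes these canonical representatives with the actual fresh constants used by $\sigma$, fixing everything else. With $\mathcal{B}$, $ns$, $j$, $i$ in hand, I set $\langle \mathcal{S}', h':A'\rangle$ to the tuple produced by $\addblk{\mathcal{S}}{\ldots}{\mathcal{B}}{ns}{\mathbb{N}\setminus(\mathbb{N}_\Gamma \cup \mathbb{N}_b)}{j}{i}$; the raising substitutions $\theta'_j$, $\theta''_j$ appearing in its definition introduce fresh eigenvariables whose closing instantiations I recover from $\theta$ and from the instantiations of $x_1,\ldots,x_n$ implicit in $\sigma$ using Theorem~\ref{th:raised-subs}, yielding $\theta'$. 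The context substitution $\sigma'$ is obtained from $\sigma$ by replacing its value at $\Gamma$ with that context expression with the now-explicit block deleted.

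The main obstacle is the bookkeeping around the interaction of the two raising substitutions from $\addblksans$, the permutation $\pi$, and the context substitution $\sigma$. Verifying the three equalities in clause~(2) reduces to pushing $\pi$ through compositions of these substitutions, for which Theorems~\ref{th:perm-hsubst} and~\ref{th:perm-subst} together with Lemmas~\ref{lem:form-hsubperm}, \ref{lem:form-subperm}, and~\ref{lem:form-perm} provide the required commutation identities. The delicate point is that $\pi$ must be chosen so as to fix the support of the original sequent while rearranging only those fresh constants that $\namessans$ chose canonically: this is exactly what the freshness discipline built into the definition of $\namessans$ guarantees, and it ensures that the permutation collapses to the identity when applied to any subexpression of the original sequent, which is what is needed to conclude the two equalities about $\mathcal{S}$ and $\fatm{G}{R:P}$.
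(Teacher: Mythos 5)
Your overall strategy coincides with the paper's: the same three-way case split on the origin of $h$, the same identification of the block schema $\mathcal{B}$, insertion position $j$, and binding index $i$, the same use of $\addblksans$ with the actual names where they lie in $\mathbb{N}$ and a permutation to reconcile the canonical representatives chosen by $\namessans$ with the fresh names actually used, and the same appeal to Theorem~\ref{th:raised-subs} to close off the two raising substitutions. Cases (a) and (b) are handled exactly as in the paper.

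There is, however, one step in case (c) that would fail: your construction of $\sigma'$ as ``$\sigma$ with the now-explicit block deleted from its value at $\Gamma$.'' The function $\addblksans$ does not move the block into the explicit context expression of the atomic formula; the formulas of $\mathcal{S}'$ still contain the context variable $\Gamma$, and what changes is only the context variable \emph{type} of $\Gamma$, which becomes $\ctxty{\mathcal{C}}{\ldots;G_j;\,n_1{:}\cdots,n_k{:}\cdots;G_{j+1};\ldots}$. By the rules of Figure~\ref{fig:wfctxvar}, any instance of this elaborated type must still \emph{contain} that block as a sub-sequence, so the context expression substituted for $\Gamma$ must be unchanged. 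With the block deleted, your $\sigma'$ is not appropriate for $\mathcal{S}'$ at all; and even ignoring appropriateness, the resulting closed instance would have a context missing the binding of $h$, so the equalities in clause~(2) would fail and the closed atomic formula $\subst{\sigma'}{\hsubst{\theta'}{\fatm{G'}{R':P'}}}$ would not even be valid. The correct choice, and the one the paper makes, is to keep the context substitution intact and only transport it along the permutation, i.e.\ take $\sigma' = \permute{\inv{\pi}}{\sigma}$; the point of the elaboration is precisely that the \emph{same} instantiation of $\Gamma$ can still be generated from the refined type, now with the binding of $h$ recorded among the ``explicit bindings in $G'$ relative to $\mathcal{S}'$'' in the technical sense of that definition. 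With that correction the rest of your commutation argument goes through as in the paper.
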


\begin{proof}
Since $\subst{\sigma}{\hsubst{\theta}{\fatm{G}{R:P}}}$ is valid, it
must be the case that there are LF derivations for 
$\lfctx{\subst{\sigma}{\hsubst{\theta}{G}}}$,
$\lftype{\subst{\sigma}{\hsubst{\theta}{G}}}
        {\hsubst{\theta}{P}}$, and
$\lfchecktype{\subst{\sigma}{\hsubst{\theta}{G}}}
             {\hsubst{\theta}{R}}
             {\hsubst{\theta}{P}}$.
Using Theorem~\ref{th:atomictype} together with the fact that
$\hsubst{\theta}{R}=(h\app M_1\ldots M_n)$, we see that, for an
appropriate $A$, $h: A$ must be a member of $\Sigma$ or it must appear
in $\subst{\sigma}{\hsubst{\theta}{G}}$.
Our argument distinguishes two ways that this could happen: it could
be because $h:A$ is a member of $\Sigma$ or it is an instance of a
declaration in the explicit part of $G$ or because it is introduced
into the context $\subst{\sigma}{\hsubst{\theta}{G}}$ by the
substitution $\sigma$.

The first collection of cases is easily dealt with: essentially, we
pick $h'$, $\mathcal{S}'$, $\theta'$ and $\sigma'$ to be identical to
$h$, $\mathcal{S}$, $\theta$ and $\sigma$, respectively, and we let
$\pi$ be the identity permutation.
The requirements of the lemma then follow easily from the definition
of the \headssans\ function.

In the cases that remain, $G$ must have the form $\Gamma, n^G_1
: A^G_1,\ldots, n^G_p : A^G_p$ for some context variable $\Gamma$ that
has the set of names $\mathbb{N}_{\Gamma}$ and the type 
$\ctxty{\mathcal{C}}{G_1;\ldots; G_\ell}$ assigned to it
in $\Xi$ and $h$ must be introduced by the substitution that $\sigma$
makes for $\Gamma$ as the $i^{th}$ binding, for
some $i$, in a block of declarations resulting from instantiating
one of the block schemas constituting $\mathcal{C}$. 
Let us suppose  the relevant block schema is $\mathcal{B}$ and
it has the form $\{x_1:\alpha_1,\ldots,x_n:\alpha_n\}
(y_1:B_1,\ldots, y_k:B_k)$.
Moreover, let us suppose that this block of declarations appears in
$\subst{\sigma}{\hsubst{\theta}{G}}$ somewhere 
between the instances of $G_j$ and $G_{j+1}$, for some $j$ between
$0$ and $\ell$.
We may, without loss of generality, assume $x_1,\ldots,x_n$ to be
distinct from the variables assigned types by $\Psi$.
We can then visualize the block introducing $h$ as
$(n_1:\hsubst{\theta^h}{B_1'},\ldots,n_k:\hsubst{\theta^h}{B_k'})$ for some
$n_1,\ldots,n_k$ of the requisite types, for some types $B_1',\ldots,B_k'$ 
which are the types $B_1,\ldots,B_k$ with the schematic variables of the 
schema replaced by these names, and for a closed substitution
$\theta^h$ whose domain is $x_1,\ldots,x_n$ and, since
$\lfctx{\subst{\sigma}{\hsubst{\theta}{G}}}$ is derivable, whose
support does not contain the nominal constants in $\mathbb{N}_{\Gamma}$, 
$n^G_1,\ldots,n^G_p$, or those that are assigned a type in
$G_{j+1},\ldots,G_\ell$.
It follows from this that if we can associate the type
$\ctxty{\mathcal{C}}
       {\hsubst{\theta}{G_1};\ldots;\hsubst{\theta}{G_j};
        n_1:\hsubst{\theta^h}{B_1'},\ldots,n_k:\hsubst{\theta^h}{B_k'};
        \hsubst{\theta}{G_{j+1}};\ldots; \hsubst{\theta}{G_\ell}}$
with $\Gamma$, then the context expression that $\sigma$ substitutes
for $\Gamma$ can still be generated from the changed type.
The key to our showing that the requirements of the lemma are met in
these cases will be to establish that $\hds{\mathcal{S}}{G}$
contains a sequent and head pair such that the type of $\Gamma$ is
elaborated to a form from which the above type can be obtained, up to
a permutation of nominal constants, by a well-behaved substitution and
the head is identified as the $i^{th}$ item in the introduced block of
declarations.  

Towards this end, let us consider the tuple $\langle \mathcal{S''},
h'': A'' \rangle$ 
that is generated by
\[\addblk{\mathcal{S}}{\ctxvarty{\Gamma}{\mathbb{N}_{\Gamma}}{\ctxty{\mathcal{C}}{G_1; \ldots; G_\ell}}}
  {\mathcal{B}}{(n_1,\ldots,n_k)}{\mathbb{N}_o}{j}{i},\] 
where $\mathbb{N}_o$ is the collection of nominal constants obtained
by leaving out of $\mathbb{N}$ the constants in $\mathbb{N}_{\Gamma}$ and 
the constants that appear amongst the
explicit bindings of $G$ relative to $\mathcal{S}$.
In this case, $\mathcal{S''}$ will have the form
$\seq[\mathbb{N}'']{\Psi''}{\Xi''}{{\Omega''}}{F''}$ with the following
properties.
First, $\mathbb{N}''$ will be identical to
$\mathbb{N} \cup \{n_1,\ldots,n_k \}$. 
Second, $\Psi''$ will comprise two disjoint parts $\Psi^\mathcal{S}_r$
and $\Psi^\mathcal{B}_r$, where $\Psi^\mathcal{S}_r$ is a 
version of $\Psi$ raised over the nominal constants in
$\{n_1,\ldots,n_k\}$ that are not members of $\mathbb{N}$ with a
corresponding raising substitution $\theta^\Psi_r$, and
$\Psi^\mathcal{B}_r$ is a version of $\{x_1,\ldots,x_n\}$ raised over
all the nominal constants in $\mathbb{N} \cup \{n_1,\ldots,n_k \}$
except the ones that are assigned a type in $G_{j+1},\ldots,G_\ell$ or
that appear in $n^G_1,\ldots,n^G_p$ with the corresponding raising
substitution $\theta^{\mathcal{B}}_r$.
Third, $\Xi''$ will be
\[  \hat{\Xi} \cup 
   \{ \ctxvarty{\Gamma}
               {\mathbb{N}_{\Gamma}}
               {\ctxty{\mathcal{C}}
                     {\hsubst{\theta^\Psi_r}{G_1};\ldots,\hsubst{\theta^\Psi_r}{G_j};
                      n_1 : \hsubst{\theta^\mathcal{B}_r}{B_1'},
                      \ldots, n_k :\hsubst{\theta^\mathcal{B}_r}{B_k'};
                     \hsubst{\theta^\Psi_r}{G_{j+1}},\ldots\hsubst{\theta^\Psi_r}{G_\ell}}}
   \}
\]
where $\hat{\Xi} =
       \hsubst{\theta^\Psi_r}{(\Xi \setminus \{\ctxvarty{\Gamma}{\mathbb{N}_{\Gamma}}{\ctxty{\mathcal{C}}{G_1; \ldots; G_\ell}}\})}$.
Finally, each formula in $\Omega'' \cup \{F''\}$ is obtained by
applying the raising substitution $\theta^\Psi_r$ to a corresponding
one in $\Omega \cup \{F\}$. 
Using Theorem~\ref{th:raised-subs} we observe that, because
$\supportof{\theta}$ is disjoint from the set $\mathbb{N}$, there is a 
(closed) raising substitution $\theta_r$  with $\context{\theta_r} =
\Psi^\mathcal{S}_r$ whose support is
disjoint from the set $\mathbb{N} \cup \{n_1,\ldots,n_k\}$ and which is such
that $\hsubst{\theta_r}{\Omega''} = \hsubst{\theta}{\Omega}$,
$\hsubst{\theta_r}{F''} = \hsubst{\theta}{F}$,
$\hsubst{\theta_r}{\hat{\Xi}}$ is equal to
$\hsubst{\theta}{(\Xi \setminus \{\ctxvarty{\Gamma}{\mathbb{N}_{\Gamma}}{\ctxty{\mathcal{C}}{G_1; \ldots; G_\ell}}\})}$,
and, for each $q$, $1 \leq q \leq \ell$,
$\hsubst{\theta_r}{\hsubst{\theta^\Psi_r}{G_q}}=\hsubst{\theta}{G_q}$.
Using Theorem~\ref{th:raised-subs} again, we see that there is a
(closed) substitution $\theta^h_r$ with
$\context{\theta^h_r} = \Psi^\mathcal{B}_r$ whose support is
disjoint from $\mathbb{N} \cup \{n_1,\ldots,n_k\}$ and that is such
that, for $1 \leq q \leq k$, it is the case that
$\hsubst{\theta^h_r}{\hsubst{\theta^\mathcal{B}_q}{B_q'}} =
\hsubst{\theta^h}{B_q'}$.  
Based on all these observations, it is easy to see that if we let
$\theta''= \theta_r \cup \theta^h_r$, then $\langle
\theta'',\emptyset\rangle$ is substitution compatible with
$\mathcal{S}''$ and $\hsubstseq{\emptyset}{\theta''}{\mathcal{S}''}$ is
identical to $\hsubstseq{\emptyset}{\theta}{\mathcal{S}}$ except for
the fact that the type associated with $\Gamma$ in its context
variable context is
$\ctxty{\mathcal{C}}
       {\hsubst{\theta}{G_1};\ldots;\hsubst{\theta}{G_j};
        n_1:\hsubst{\theta^h}{B_1'},\ldots,n_k:\hsubst{\theta^h}{B_k'};
        \hsubst{\theta}{G_{j+1}};\ldots; \hsubst{\theta}{G_\ell}}$.
By the earlier observation, $\sigma$ is appropriate for        
$\hsubstseq{\emptyset}{\theta''}{\mathcal{S}''}$ and, in fact
$\subst{\sigma}{\hsubstseq{\emptyset}{\theta''}{\mathcal{S}''}} =
 \subst{\sigma}{\hsubstseq{\emptyset}{\theta}{\mathcal{S}}}$.
 Noting also that $h'':A''$ must, by the definition of \addblksans, be
 $n_i:\hsubst{\theta^\mathcal{B}_r}{B_i'}$, if 
 $\hds{\mathcal{S}}{G}$ includes in it a pair obtained by this
 particular call to \addblksans, then we can pick $\mathcal{S}'$ to be
 $\mathcal{S}''$, $h'$ to be $h''$, $A'$ to be $A''$, $\theta'$ to be
 $\theta''$, $\sigma'$ to be $\sigma$ and $\pi$ to be the identity
 permutation to satisfy the requirements of the lemma.

We are, of course, not assured that there will be a pair in
$\hds{\mathcal{S}}{G}$ corresponding to the use of \addblksans\ with 
exactly the arguments considered above.
Specifically, the sequences of nominal constants that are considered
for the block instance may not include $n_1,\ldots, n_k$.
However, we know that some sequence $n'_1,\ldots, n'_k$
will be considered that is identical to $n_1,\ldots,n_k$ except for
constants in identical locations in the two sequences that are not
drawn from $\mathbb{N}$.
Since the constants in any sequence must be distinct, it follows
easily that we can describe a permutation $\pi'$ on the nominal
constants that is the identity map on $\mathbb{N}$ and that maps
$n'_1,\ldots,n'_k$ to $n_1,\ldots,n_k$. 
It can also be seen then that $\hds{\mathcal{S}}{G}$ will include a
tuple $\langle \mathcal{S}''', h''' : A''' \rangle$ such that
$\permute{\pi'}{\mathcal{S}'''} = \mathcal{S}''$, $\permute{\pi'}{h'''} = h''$, and
$\permute{\pi'}{A'''} = A''$.
Picking $\mathcal{S}'$ to be $\mathcal{S}'''$, $h'$ to be $h'''$, $A'$ to be $A'''$,
$\theta'$ to be $\permute{\inv{\pi'}}{\theta''}$, $\sigma'$ to be
$\permute{\inv{\pi'}}{\sigma''}$, $\pi$ to be $\pi'$ and using
Theorems~\ref{th:perm-hsubst} and \ref{th:perm-subst}, we can once
again see that the requirements of the lemma are met.
\end{proof}

\subsubsection{Generating a Covering Set of Premise Sequents}

Given a sequent $\mathcal{S}$ and a particular atomic assumption
formula $F$ with context expression $G$ in $\mathcal{S}$, Lemma~\ref{lem:heads-cover} assures us
that $\hds{\mathcal{S}}{G}$ correctly identifies all the context
elaborations and corresponding heads that need to be considered in the
analysis of $F$.
However, we are still left with the task of identifying a systematic
way of considering all the term and context substitutions that yield
closed instances of $\mathcal{S}$ in which the term component of $F$
has the relevant head.
We now turn to this task.
Rather than identifying the closed instances immediately, we will
think of taking a step in this direction that also allows us to reduce
the typing judgement represented by $F$ based on the typing rule for
the LF judgement it represents; this analysis will then be reflected
in a proof rule in our system.
The first step in this direction will be to determine a substitution
that makes the head of $F$ identical to the one it needs to be in its
closed form.
We use the idea of unification, refined to fit our context, towards
this end. 
We describe next the idea of reducing a sequent that encodes the
analysis of an LF typing judgement based on the observations in
Theorem~\ref{th:atomictype}.
The eventual proof rule will then combine the identification of
relevant heads using $\hds{\mathcal{S}}{G}$, the solving of a
unification problem based on each such head, and the 
reduction of the sequent.

We begin this development by first elaborating the notion of
unification problems and their solutions.

\begin{definition}[Unification Problems \& their Solutions]\label{def:unification}
A \emph{unification problem} $\mathcal{U}$ is a tuple
$\unif{\mathbb{N}}{\Psi}{\mathcal{E}}$
in which $\mathbb{N}$ is a collection of nominal constants, $\Psi$ is
a collection of arity type assignments to term variables, and
$\mathcal{E}$ is a set of the form
$\{\eqn{E_1}{E_1'},\ldots,\eqn{E_n}{E_n'}\}$ where, for each
$i$, $1 \leq i \leq n$, either $\wftype{\mathbb{N}\cup\Psi\cup\STLCGamma_0}{E_i}$ and 
$\wftype{\mathbb{N}\cup\Psi\cup\STLCGamma_0}{E'_i}$ have derivations or
there is an arity type $\alpha$ such that 
$\stlctyjudg{\mathbb{N}\cup\Psi\cup\STLCGamma_0}{E_i}{\alpha}$ and 
$\stlctyjudg{\mathbb{N}\cup\Psi\cup\STLCGamma_0}{E_i'}{\alpha}$ 
have derivations.
A \emph{solution} to the unification problem $\mathcal{U}$ is a pair
$\solun{\theta}{\Psi'}$ of a substitution and a collection of type
assignments to term variables such that
\begin{enumerate}
\item $\theta$ is type preserving with respect to $\noms\cup\STLCGamma_0\cup\Psi'$,

\item $\supportof{\theta} \cap \mathbb{N} = \emptyset$,

\item for any $x$ if
$x : \alpha \in \Psi$ and $x : \alpha' \in \aritysum{\context{\theta}}{\Psi'}$ then
  $\alpha = \alpha'$, and

\item for each $i$, $1\leq i\leq n$, expressions $E_i$ and $E_i'$ from 
$\eqn{E_i}{E_i'}$ are such that $\hsubst{\theta}{E_i}=\hsubst{\theta}{E_i'}$.
\end{enumerate}
Note that the typing contraints validate the use of the notation
$\hsubst{\theta}{E_i}$ and $\hsubst{\theta}{E'_i}$.
\end{definition}

Given an atomic term $R$, we can determine its instances that have a
particular head $h$ through the unification of $R$ with $h$ applied to
a sequence of fresh variables.
We will use this idea to narrow down the set of instances of a sequent
that must be considered once we have determined what the head of the
term in an atomic goal of the form $\fatm{G}{R:P}$ must be.
However, we must first build into our notion of a fresh variable the
ability to instantiate it with nominal constants appearing in the
sequent.
We do this below by using the mechanism of raising.

\begin{definition}[Generalized Variables]
Let $\Psi$ be a finite set of arity typing assignments to term
variables and let $\mathbb{N}$ be a finite subset of $\noms$.
Further, let $n_1,\ldots,n_k$ be a listing of the nominal constants in
$\mathbb{N}$ and let $\alpha_1,\ldots,\alpha_k$ be the respective
types of these constants.
Then, for any variable $z$ that does not appear in $\Psi$, $z :
\alpha_1 \atyarr \cdots \atyarr \alpha_k \atyarr \beta$ is 
said to be a variable of arity type $\beta$ away from $\Psi$ and
raised over $\mathbb{N}$.
Moreover, $(z \app n_1\app \ldots \app n_k)$ is said to be the
generalized variable term corresponding to $z$.
\end{definition}

The following lemma now formalizes the described refinement of the sequent.

\begin{lemma}\label{lem:inst-solun}
Let $\mathcal{S} = \seq[\mathbb{N}]{\Psi}{\Xi}{\setand{\Omega}{F}}{F'}$ be a
well-formed sequent with F being the formula $\fatm{G}{R : P}$.
Further, let $\theta$ be a term substitution that together with a
context substitution $\sigma$ identifies a closed instance of
$\mathcal{S}$ and is such that, for the head $h$ that is assigned the
type $\typedpi{x_1}{A_1}{\ldots\typedpi{x_n}{A_n}{P'}}$ by $\Sigma$ or
$\subst{\sigma}{\hsubst{\theta}{G}}$, 
$\hsubst{\theta}{R} = (h\app M_1\app \ldots\app M_n)$ and
\[\hsubst{\theta}{P} = \hsubst{\{\langle x_1,M_1,\erase{A_1}\rangle,
  \ldots, \langle x_n,M_n,\erase{A_n}\rangle \}}{P'}.\]
Finally, for each $i$, $1 \leq i \leq n$, let $z_i:\alpha_i'$ be a distinct
variable of type $\erase{A_i}$ away from $\Psi$ and raised over
$\mathbb{N}$, and let $t_i$ be the generalized variable term
corresponding to $z_i$. 
Then $\solun{\theta}{\emptyset}$ is a
solution to the unification problem
\begin{tabbing}
\qquad\quad\=\quad\qquad\qquad\=\kill
\>
$\langle \mathbb{N}; 
         \Psi \cup \{z_1:\alpha_1',\ldots, z_n:\alpha_n'\};$\\
\>\>
$\left\{P = \hsubst{\{\langle x_1,t_1,\erase{A_1}\rangle,
            \ldots, \langle x_n,t_n,\erase{A_n}\rangle \}}{P'}, 
         R = (h \app t_1 \app \ldots\app t_n)\right\}\rangle$.
\end{tabbing}
\end{lemma}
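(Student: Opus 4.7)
The plan is to verify the four clauses of Definition~\ref{def:unification} that would qualify $\solun{\theta}{\emptyset}$ as a solution, interpreting the substitution component as the given $\theta$ extended with mappings for the fresh raised variables $z_1,\ldots,z_n$. With $n_1,\ldots,n_k$ listing $\mathbb{N}$ and $\beta_1,\ldots,\beta_k$ their arity types, the natural extension sends each $z_i$ to $\lflam{y_1}{\ldots\lflam{y_k}{M'_i}}$ at arity type $\alpha_i' = \beta_1 \atyarr \cdots \atyarr \beta_k \atyarr \erase{A_i}$, where $M'_i$ is obtained from $M_i$ by renaming each occurrence of $n_j$ to the bound variable $y_j$. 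This construction, essentially the raising device of Theorem~\ref{th:raised-subs}, is engineered precisely so that $\hsubst{\theta}{t_i}$ reduces under the hereditary substitution rules to $M_i$.

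The three structural clauses follow readily. The original $\theta$ is arity type preserving with respect to $\noms \cup \STLCGamma_0$, has $\supportof{\theta}$ disjoint from $\mathbb{N}$, and has $\context{\theta} = \Psi$ by virtue of identifying a closed instance of the well-formed sequent $\mathcal{S}$. The LF derivation of $\lfchecktype{\subst{\sigma}{\hsubst{\theta}{G}}}{\hsubst{\theta}{R}}{\hsubst{\theta}{P}}$, once decomposed via Theorem~\ref{th:atomictype} and processed through the arity-typing approximation of Theorem~\ref{th:arityapprox}, supplies an arity typing of each $M_i$ at $\erase{A_i}$ relative to $\noms \cup \STLCGamma_0$, from which the abstraction $\lflam{y_1}{\ldots\lflam{y_k}{M'_i}}$ inherits the required arity type $\alpha_i'$. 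The abstraction moves every nominal constant of $\mathbb{N}$ that could appear in $M_i$ into a bound position, so the extended support remains disjoint from $\mathbb{N}$; and because each $z_i$ is chosen away from $\Psi$, the agreement of arity types between the extended $\context{\theta}$ and $\Psi \cup \{z_1:\alpha_1',\ldots,z_n:\alpha_n'\}$ is automatic.

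For the two unification equations, I would first dispatch the one relating $R$ to $h\app t_1 \app \cdots \app t_n$: distributing the extended $\theta$ over the atomic term on the right leaves $h$ fixed, and each $\hsubst{\theta}{t_i}$ reduces to $M_i$ by the design of the extension, so the result equals $h\app M_1 \app \cdots \app M_n$, which is $\hsubst{\theta}{R}$ by hypothesis. For the equation between $P$ and $\hsubst{\{\langle x_1,t_1,\erase{A_1}\rangle, \ldots, \langle x_n,t_n,\erase{A_n}\rangle\}}{P'}$, I would invoke Theorem~\ref{th:composition} to rewrite $\hsubst{\theta}{\hsubst{\{\langle x_1,t_1,\erase{A_1}\rangle, \ldots, \langle x_n,t_n,\erase{A_n}\rangle\}}{P'}}$ as a single application of the composed substitution to $P'$; since the bound variables $x_i$ do not lie in $\dom{\theta}$ and each $\hsubst{\theta}{t_i}$ equals $M_i$, that composition acts on $P'$ identically to the direct substitution $\{\langle x_1,M_1,\erase{A_1}\rangle, \ldots, \langle x_n,M_n,\erase{A_n}\rangle\}$, and the hypothesis then identifies the outcome with $\hsubst{\theta}{P}$. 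The main obstacle I anticipate is the bookkeeping required to legitimize these invocations of Theorem~\ref{th:composition}: its arity-type compatibility preconditions must be discharged via Theorem~\ref{th:arityapprox}, and the identification of the composed substitution with the direct one rests on a careful accounting of the free and bound variable distinctions just described.
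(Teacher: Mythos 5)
Your proof follows the same route as the paper's: it verifies the four clauses of Definition~\ref{def:unification} directly, discharging the two equations by appeal to the hypothesized identities $\hsubst{\theta}{R} = (h\app M_1\app \ldots\app M_n)$ and $\hsubst{\theta}{P} = \hsubst{\{\langle x_1,M_1,\erase{A_1}\rangle, \ldots, \langle x_n,M_n,\erase{A_n}\rangle \}}{P'}$. The paper's own argument is far terser---it notes that substitution compatibility of $\solun{\theta}{\emptyset}$ with $\mathcal{S}$ yields the first three clauses and simply asserts that the final clause "is satisfied by the assumptions"---and in one respect you are more careful than the paper: for $\hsubst{\theta}{(h\app t_1\app\ldots\app t_n)}$ to literally equal $(h\app M_1\app\ldots\app M_n)$ the substitution must act on the raised variables $z_i$, and your explicit extension $z_i\mapsto\lflam{y_1}{\ldots\lflam{y_k}{M_i'}}$ (the inverse-raising device of Theorem~\ref{th:raised-subs}) is exactly what makes that equality come out; the paper leaves this step implicit. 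Your handling of the $P$-equation via composition/permutation of substitutions (Theorems~\ref{th:composition} and~\ref{th:subspermute}) likewise matches the bookkeeping the paper relies on elsewhere.

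One step as written is not licensed by the lemma's hypotheses: you obtain the arity typing of each $M_i$ at $\erase{A_i}$ by decomposing "the LF derivation of $\lfchecktype{\subst{\sigma}{\hsubst{\theta}{G}}}{\hsubst{\theta}{R}}{\hsubst{\theta}{P}}$", but the lemma nowhere assumes that $\subst{\sigma}{\hsubst{\theta}{\fatm{G}{R:P}}}$ is valid---only that $\hsubst{\theta}{R}$ has the indicated head and that the two displayed identities hold, so no such LF derivation is available. The needed typing facts should instead be read off from the well-formedness of $\mathcal{S}$: the judgement $\wfform{\mathbb{N}\cup\Psi\cup\STLCGamma_0}{\ctxsanstype{\Xi}}{\fatm{G}{R:P}}$ gives $\stlctyjudg{\mathbb{N}\cup\Psi\cup\STLCGamma_0}{R}{\oty}$, Theorem~\ref{th:aritysubs} transports this along $\theta$, and since the head $h$ carries arity type $\erase{A_1}\atyarr\cdots\atyarr\erase{A_n}\atyarr\oty$ in the induced arity context, inversion of the arity typing rules forces $\stlctyjudg{\noms\cup\STLCGamma_0}{M_i}{\erase{A_i}}$ for each $i$. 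This is a local repair rather than a change of strategy, but it is needed for the argument to stand on the stated hypotheses.
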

\begin{proof}
As $\theta$ and $\sigma$ identify a closed instance of $\mathcal{S}$
is must be that $\seqsub{\theta}{\emptyset}$ is substitution compatible with
$\mathcal{S}$ and $\sigma$ is appropriate for $\hsubstseq{\emptyset}{\Psi}{\mathcal{S}}$.
Therefore, the first three clauses of the definition for a solution will be
satisfied by $\seqsub{\theta}{\emptyset}$.
The final clause of the definition is satisfied by the assumptions
$\hsubst{\theta}{P} = \hsubst{\{\langle x_1,M_1,\erase{A_1}\rangle,
  \ldots, \langle x_n,M_n,\erase{A_n}\rangle \}}{P'}$
and
$\hsubst{\theta}{R} = (h\app M_1\app \ldots\app M_n)$.
Therefore $\solun{\theta}{\emptyset}$ is a solution to the given unification problem.
\end{proof}

The reduction of a sequent is based on lifting the observations of 
Theorem~\ref{th:atomictype} to the analysis of atomic formulas.
Since such an analysis must be driven by the structure of the LF type,
reduction is only sensible when the atomic formula in question is
one in which the term is atomic, and the head of the application
is either a constant or a nominal constant which is bound in the 
explicit bindings of the context expression.
It is with this type that we identify typing judgements for each
argument term, and replace the original assumption formula with
a set of formulas determined by this analysis.

\begin{definition}\label{def:decompseq}[Reducing a Sequent]
Let $F = \fatm{G}{h\app M_1\ldots M_n:P}$ be a formula appearing in 
the well-formed sequent
$\mathcal{S}=\seq[\mathbb{N}]{\Psi}{\Xi}{\setand{\Omega}{F}}{F'}$,
where $h$ is assigned LF type 
$A=\typedpi{x_1}{A_1}{\ldots\typedpi{x_n}{A_n}{P'}}$ in $\Sigma$ or
the explicit bindings in $G$ relative to $\mathcal{S}$.
Letting 
$A_i'=\hsubst{\{\langle x_1,M_1,\erase{A_1}\rangle,\ldots,
                \langle x_{n-1}, M_{n-1},\erase{A_{n-1}}\rangle\}}
             {A_i}$ 
for each $i$, $1\leq i\leq n$,
the sequent obtained by decomposing the assumption formula $F$ based on the
type $A$ is defined to be
\[
\seq[\mathbb{N}]{\Psi}{\Xi}{\setand{\Omega}{\fatm{G}{M_1:A_1'},\ldots,\fatm{G}{M_n:A_n'}}}{F'}.
\]
This sequent is denoted by $\decompseq{F}{\mathcal{S}}$.
Note that the well-formedness of $\mathcal{S}$ justifies the use of
the notation
$\hsubst{\{\langle x_1,M_1,\erase{A_1}\rangle,\ldots,
           \langle x_{n-1}, M_{n-1},\erase{A_{n-1}}\rangle\}}
        {A_i}$.
\end{definition}

The following lemma expresses the soundness of the idea of reducing
a sequent. Additionally, it identifies a measure with atomic
formulas that diminishes with the replacements effected by
a reduction step; this property will be useful in formulating an
induction rule in Section~\ref{sec:ind}.

\begin{lemma}\label{lem:decomp_decr}
Let $\mathcal{S} = \seq[\mathbb{N}]{\Psi}{\Xi}{\setand{\Omega}{F}}{F'}$
be a well-formed sequent with $F$ an atomic formula of the form
$\fatm{G}{R:P}$.
Further, let $h$ be assigned type
$\typedpi{x_1}{A_1}{\ldots\typedpi{x_n}{A_n}{P'}}$ in either $\Sigma$
or the explicit bindings in $G$ relative to $\mathcal{S}$ and 
for each $i$, $1 \leq i \leq n$, let $z_i:\alpha_i'$ be a distinct
variable of type $\erase{A_i}$ away from $\Psi$ and raised over
$\mathbb{N}$, and let $t_i$ be the generalized variable term
corresponding to $z_i$. 
Finally, let $\mathcal{U}$ be the unification problem 
\begin{tabbing}
\qquad\quad\=\quad\qquad\qquad\=\kill
\>
$\langle \mathbb{N}; 
         \Psi \cup \{z_1:\alpha_1',\ldots, z_n:\alpha_n'\};$\\
\>\>
$\left\{P = \hsubst{\{\langle x_1,t_1,\erase{A_1}\rangle,
            \ldots, \langle x_n,t_n,\erase{A_n}\rangle \}}{P'}, 
         R = (h \app t_1 \app \ldots\app t_n)\right\}\rangle$.
\end{tabbing}
Then any solution to $\mathcal{U}$ is substitution compatible with
$\mathcal{S}$.
Further, for any $\solun{\theta}{\Psi_\theta}$ that is a solution to
$\mathcal{U}$ and $\theta_r$ that is a raising substitution associated
with the application of $\theta$ to $\mathcal{S}$ relative to
$\Psi_\theta$, there must be terms $M_1, \ldots, M_n$ such that the
following hold:
\begin{enumerate}
\item $\hsubst{\theta_r}{\hsubst{\theta}{R}}$ is $(h\app M_1 \app \ldots \app M_n)$.
\item For any $\theta'$ and $\sigma'$ identifying a closed instance
  of $\hsubstseq{\Psi_\theta}{\theta}{\mathcal{S}}$, if 
$\subst{\sigma'}{\hsubst{\theta'}{\hsubst{\theta_r}{\hsubst{\theta}{\fatm{G}{R:P}}}}}$
is valid and there is a derivation for 
$\subst{\sigma'}{\hsubst{\theta'}{\hsubst{\theta_r}{\hsubst{\theta}{(\lfchecktype{G}{R}{P})}}}}$
of height $k$, then for each $i$, $1 \leq i \leq n$, letting
$A'_i = \hsubst{\{\langle x_1,M_1,\erase{A_1}\rangle,\ldots,
                                       \langle x_{i-1}, M_{i-1},\erase{A_{i-1}}\rangle\}}
               {A_i}$, it must be
the case that 
$\subst{\sigma'}{\hsubst{\theta'}
                       {(\fatm{\hsubst{\theta_r}{\hsubst{\theta}{G}}}
                              {M_i: A'_i})}}$  
is valid and that there is a derivation of height less than $k$ for 
$\subst{\sigma'}{\hsubst{\theta'}{(\lfchecktype{\hsubst{\theta_r}{\hsubst{\theta}{G}}}
                                {M_i}{A'_i})}}.$
\item There is an $\mathcal{S'}$ such that 
$\mathcal{S}' = 
    \decompseq{\hsubst{\theta_r}{\hsubst{\theta}{F}}}
              {\hsubstseq{\Psi_\theta}{\theta}{\mathcal{S}}}$
and $\mathcal{S}'$ is valid only if
$\hsubstseq{\Psi_\theta}{\theta}{\mathcal{S}}$  is. 
\end{enumerate}
\end{lemma}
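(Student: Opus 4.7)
The plan is to verify the three clauses in order, with the second clause being the technical heart of the argument and drawing essentially on Theorem~\ref{th:atomictype}. Throughout, the key is to align the unification-based substitution with the decomposition that the LF typing rules (as summarised in Theorem~\ref{th:atomictype}) impose on any derivation of $\lfchecktype{G}{R}{P}$.

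First I would verify substitution compatibility by matching Definition~\ref{def:unification} against Definition~\ref{def:seq-term-subst}: a solution $\solun{\theta}{\Psi_\theta}$ is type preserving with respect to $\noms \cup \Theta_0 \cup \Psi_\theta$, satisfies $\supportof{\theta} \cap \mathbb{N} = \emptyset$, and agrees with $\Psi \cup \{z_1:\alpha'_1,\ldots,z_n:\alpha'_n\}$ on common variables, hence in particular with $\Psi$. Next, I would define $M_i$ to be $\hsubst{\theta_r}{\hsubst{\theta}{t_i}}$. Clause~1 then follows from the unification equation $R = (h \app t_1 \app \ldots \app t_n)$: the head $h$ is either in $\Sigma$ or a nominal constant in $\mathbb{N}$, so it is untouched by $\theta$ and by $\theta_r$, and Theorem~\ref{th:composition} lets us push the two substitutions into each argument position to obtain $\hsubst{\theta_r}{\hsubst{\theta}{R}} = (h\app M_1 \app \ldots \app M_n)$.

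The main work is clause~2. Given $\theta'$ and $\sigma'$ identifying a closed instance of $\hsubstseq{\Psi_\theta}{\theta}{\mathcal{S}}$ under which the substituted form of $\fatm{G}{R:P}$ is valid, its validity unfolds (via Definition~\ref{def:semantics}) to a derivation of height $k$ of the LF typing judgement. Applying Theorem~\ref{th:atomictype} to this derivation, there must exist types $B'_1,\ldots,B'_n$ and LF derivations, each of height strictly less than $k$, witnessing the typing of each argument. What remains is to identify these $B'_i$ with the $A'_i$ defined in the lemma statement. This amounts to showing that $\hsubst{\{\langle x_1,\bar M_1,\erase{A_1}\rangle,\ldots\}}{A_i}$ as arising from Theorem~\ref{th:atomictype} on the closed term $\hsubst{\theta'}{\hsubst{\theta_r}{\hsubst{\theta}{R}}}$ agrees with $\hsubst{\theta'}{\hsubst{\theta_r}{\hsubst{\theta}{A'_i}}}$ as defined in Definition~\ref{def:decompseq}, where the $A'_i$ use the $M_i$ constructed above. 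This is precisely where the unification equation on $P$ is used, together with Theorems~\ref{th:subspermute} and~\ref{th:composition} to permute and compose the substitutions $\theta$, $\theta_r$, $\theta'$, and the inner substitution of $M_j$ for $x_j$. The wellformedness conditions needed to turn the LF derivations back into valid atomic formulas are supplied by Theorem~\ref{th:atomictype} itself.

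Clause~3 then follows directly by unfolding $\decompseq{\hsubst{\theta_r}{\hsubst{\theta}{F}}}{\hsubstseq{\Psi_\theta}{\theta}{\mathcal{S}}}$ according to Definition~\ref{def:decompseq}: the reduced sequent $\mathcal{S}'$ differs from $\hsubstseq{\Psi_\theta}{\theta}{\mathcal{S}}$ only in replacing the assumption $\hsubst{\theta_r}{\hsubst{\theta}{F}}$ by the collection of $\fatm{\hsubst{\theta_r}{\hsubst{\theta}{G}}}{M_i : A'_i}$. To show validity transfer, I would fix a closed instance of $\hsubstseq{\Psi_\theta}{\theta}{\mathcal{S}}$ and either observe that some other assumption is invalid (making the original sequent trivially valid) or apply clause~2 to conclude that each new assumption in the corresponding closed instance of $\mathcal{S}'$ is valid, whence the validity of $\mathcal{S}'$ yields the goal. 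The hard part will be the bookkeeping in clause~2: carefully tracking how the raising substitution $\theta_r$, the unifier $\theta$, and the closing substitutions $\theta'$ and $\sigma'$ interact through the hereditary substitutions on types, and ensuring that the coherence conditions between $A'_i$ and the types from Theorem~\ref{th:atomictype} can be read off systematically from the composition laws rather than verified by ad hoc computation.
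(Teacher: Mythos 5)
Your proposal is correct and follows essentially the same route as the paper: substitution compatibility by direct comparison of the definitions, taking $M_i = \hsubst{\theta_r}{\hsubst{\theta}{t_i}}$ so that clause~1 falls out of the equation $R = (h\app t_1\app\ldots\app t_n)$, establishing clause~2 by unfolding validity of the closed instance and applying Theorem~\ref{th:atomictype} together with Theorem~\ref{th:subspermute} (with the $\lftype{\cdot}{A'_i}$ obligation discharged as in the proof of that theorem), and the standard validity-transfer argument for clause~3. The only cosmetic difference is that you invoke Theorem~\ref{th:composition} where the paper treats the corresponding substitution manipulations as immediate.
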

\begin{proof}
A straightforward examination of Definitions~\ref{def:seq-term-subst}
and~\ref{def:unification} suffices to verify that solutions to
$\mathcal{U}$ must be substitution compatible with $\mathcal{S}$.
Any solution $\solun{\theta}{\Psi_{\theta}}$ to the unification
problem $\mathcal{U}$ must be such that
$\hsubst{\theta}{R}=\hsubst{\theta}{(h\app t_1\ldots t_n)}$.
From this it follows that
$\hsubst{\theta_r}{\hsubst{\theta}{R}}=
\hsubst{\theta_r}{\hsubst{\theta}{(h\app t_1\ldots t_n)}}$.
Since $h$ is unaffected by substitutions, it is easy to see that
$\hsubst{\theta_r}{\hsubst{\theta}{(h\app t_1\ldots t_n)}} = (h \app
(\hsubst{\theta_r}{\hsubst{\theta}{t_1}})\app \ldots \app
  (\hsubst{\theta_r}{\hsubst{\theta}{t_n}}))$.
Picking  $M_i$ to be the term $\hsubst{\theta_r}{\hsubst{\theta}{t_i}}$
for each $i$, $1 \leq i \leq n$, we see that clause (1) in the lemma
is satisfied.

For the second clause we note first that the typing judgements in question must 
be closed and hence the consideration is meaningful.
Consider an arbitrary closed instance of $\hsubstseq{\Psi_{\theta}}{\theta}{\mathcal{S}}$
identified by $\theta'$ and $\sigma$.
If $\subst{\sigma}{\hsubst{\theta'}{\hsubst{\theta_r}{\hsubst{\theta}{F}}}}$
is a valid formula then using the definition of validity as well as
clause (1) in the lemma we can extract a derivation for
$\lfctx{\subst{\sigma}{\hsubst{\theta'}{\hsubst{\theta_r}{\hsubst{\theta}{G}}}}}$
and a derivation of height $k$ for
$\lfchecktype{\subst{\sigma}{\hsubst{\theta'}{\hsubst{\theta_r}{\hsubst{\theta}{G}}}}}
             {\hsubst{\theta'}{(h\app M_1\ldots M_n)}}
             {\hsubst{\theta'}{\hsubst{\theta_r}{\hsubst{\theta}{P}}}}$.
Application of Theorems~\ref{th:atomictype} and~\ref{th:subspermute} 
are then sufficient to conclude
that there is a derivation of height less than $k$ for
$\subst{\sigma}{\hsubst{\theta'}
       {(\lfchecktype{\hsubst{\theta_r}{\hsubst{\theta}{G}}}
                     {M_i}{A'_i})}}.$
For us to be able to conclude that, for each $i$, $1 \leq i \leq n$,
the formula
$\subst{\sigma}{\hsubst{\theta'}
                       {(\fatm{\hsubst{\theta_r}{\hsubst{\theta}{G}}}
                              {M_i: A'_i})}}$  
is valid, it only remains to show that there is a derivation for
$\lftype{\subst{\sigma}{\hsubst{\theta'}{\hsubst{\theta_r}{\hsubst{\theta}{G}}}}}
        {\hsubst{\theta'}{A_i'}}$.
However, this has been done in the proof of Theorem~\ref{th:atomictype}.

We finish by proving the third clause.
From clause (1) we know that 
$\hsubst{\theta_r}{\hsubst{\theta}{R}}$ will be of the form $(h\app M_1\ldots M_n)$,
thus by the definition of \decompseqsans\ there must exist an 
$\mathcal{S'}$ such that 
$\mathcal{S}' = 
    \decompseq{\hsubst{\theta_r}{\hsubst{\theta}{F}}}
              {\hsubstseq{\Psi_\theta}{\theta}{\mathcal{S}}}$.
Suppose $\mathcal{S}'$ is valid.
Consider an arbitrary closed instance of $\hsubstseq{\Psi_{\theta}}{\theta}{\mathcal{S}}$ 
identified by $\theta'$ and $\sigma$.
These same $\theta'$ and $\sigma$ also identify a closed instance of $\mathcal{S}'$
as these sequents only differ in that the assumption formula
$\hsubst{\theta_r}{\hsubst{\theta}{F}}$
has been replaced with the collection of reduced formulas
$\left\{
  \fatm{\hsubst{\theta_r}{\hsubst{\theta}{G}}}{M_i: A'_i}
\ \middle|\ 
  1\leq i\leq n\right\}$.
If any formula in the set of assumption formulas of 
$\subst{\sigma}{\hsubstseq{\emptyset}{\theta'}{\hsubstseq{\Psi_{\theta}}{\theta}{\mathcal{S}}}}$
were not valid then this instance would be vacuously valid, so suppose all
such formulas are valid.
Then in particular, 
$\subst{\sigma}{\hsubst{\theta'}{\hsubst{\theta_r}{\hsubst{\theta}{F}}}}$
must be valid and thus by clause (2), for each $i$, $1\leq i\leq n$, the formula
$\subst{\sigma}{\hsubst{\theta'}{\fatm{\hsubst{\theta_r}{\hsubst{\theta}{G}}}
      {M_i: A'_i}}}$
will be valid.
But then all of the assumption formulas of $\subst{\sigma}{\hsubst{\theta'}{\mathcal{S}'}}$
must be valid and since this is a closed instance of a valid sequent
we can conclude that the goal formula
$\subst{\sigma}{\hsubst{\theta'}{\hsubst{\theta_r}{\hsubst{\theta}{F'}}}}$ 
is valid.
Therefore any closed instance of 
$\hsubstseq{\Psi_{\theta}}{\theta}{\mathcal{S}}$ will be valid, 
and clause (3) in the lemma is satisfied.
\end{proof}

Lemmas~\ref{lem:inst-solun} and \ref{lem:decomp_decr} yield the
following possibility for analyzing the derivability of a sequent with
$\fatm{G}{R:P}$ as an atomic formula:
we use the unification problem identified in
Lemma~\ref{lem:inst-solun} to limit the collection of closed term
substitutions for the sequent and we analyze the derivability of the
reduced sequent under these substitutions.
Unfortunately, this approach would not be very effective in practice.
What we would like to do instead is to use the unification problem
directly to generate the collection of substitutions to be
considered.
Moreover, we would like to be able to limit the substitutions even
from this set that actually need to be considered. 
Towards the latter end, we introduce the idea of a covering set of
solutions to a unification problem.
The next three definitions culminate in a formulation of this notion. 

\begin{definition}[Restricted Substitutions]
The restriction of a substitution $\theta$ to the arity typing context $\Psi$ is
the substitution
$
\left\{
  \langle x,M,\alpha\rangle\ \middle|\ 
  \langle x,M,\alpha\rangle\in\theta\mbox{ and } x:\alpha'\in\Psi
\right\}.
$
We denote this substitution by $\restrict{\theta}{\Psi}$.
\end{definition}

\begin{definition}[Covering Substitutions]\label{def:covering-subst}
Let $\Psi$, $\Psi_1$ and $\Psi_2$ be collections of type assignments to term
variables, and let $\theta_1$ and $\theta_2$ be substitutions that are
arity type preserving with respect to
$\noms \cup \STLCGamma_0 \cup \Psi_1$ and
$\noms \cup \STLCGamma_0 \cup \Psi_2$, respectively.
Then $\solun{\theta_2}{\Psi_2}$ is said to cover
$\solun{\theta_1}{\Psi_1}$ relative to $\Psi$ if there exists a pair 
$\solun{\theta_3}{\Psi_3}$ of a substitution and a collection of arity
type assignments to term variables such that
\begin{enumerate}
\item $\theta_3$ is type preserving with respect to 
$\noms\cup\STLCGamma_0\cup\Psi_3$,
\item for any
$x : \alpha \in \Psi_2$, if $x : \alpha' \in \aritysum{\context{\theta_3}}{\Psi_3}$ then
  $\alpha = \alpha'$, and 
\item The substitutions $\restrict{\theta_1}{\Psi}$ and 
$\restrict{(\comp{\theta_3}{\theta_2})}{\Psi}$ are identical.
\end{enumerate} 
Note that the second condition ensures that $\noms \cup \STLCGamma_0
\cup ((\Psi_2\setminus\context{\theta_3}))\cup \Psi_3)$ determines a
valid arity context and that $\theta_2$ and $\theta_3$ are arity type 
compatible with respect to this context.
Thus, the composition of $\theta_2$ and
$\theta_3$ in the third condition is well-defined.
\end{definition}

\begin{definition}[Covering Set of Solutions]\label{def:covering-solns}
A collection $S$ of solutions to a unification problem
$\mathcal{U} = \unif{\mathbb{N}}{\Psi}{\mathcal{E}}$ 
is said to be \emph{covering set of solutions for $\mathcal{U}$} if every
solution to $\mathcal{U}$ is covered by some solution in $S$ relative
to $\Psi$.
\end{definition}

We now show the soundness of using the reduced forms of a sequent
generated by just a covering set of solutions for the relevant
unification problem in carrying out an analysis of the derivability of
the sequent.

\begin{lemma}\label{lem:covers-seq}
Suppose that $\theta_1$ and $\sigma$ identify a closed instance
$\mathcal{S}'$ of a well-formed sequent $\mathcal{S} =
\seq[\mathbb{N}]{\Psi}{\Xi}{\Omega}{F'}$.  
Suppose further that $\solun{\theta_2}{\Psi_2}$ is substitution
compatible with $\mathcal{S}$ and such that it covers 
$\solun{\theta_1}{\emptyset}$ relative to $\Psi$.
Then there is a term substitution $\theta$ that together with $\sigma$
identifies a closed instance of
$\hsubstseq{\Psi_2}{\theta_2}{\mathcal{S}}$ that is valid if and only
if $\mathcal{S}'$ is.
\end{lemma}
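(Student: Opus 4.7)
The plan is to construct a closed term substitution $\theta$ over the eigenvariable context of $\hsubstseq{\Psi_2}{\theta_2}{\mathcal{S}}$ that, together with $\sigma$, identifies a closed instance literally equal to $\mathcal{S}'$. Once this is done, the claimed validity equivalence is immediate: both sides of the ``if and only if'' refer to the same closed sequent.

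First I would unpack the coverage hypothesis. By Definition~\ref{def:covering-subst}, there is a pair $\solun{\theta_3}{\Psi_3}$ with $\restrict{\theta_1}{\Psi} = \restrict{(\comp{\theta_3}{\theta_2})}{\Psi}$. Because $\theta_1$ is closed on $\Psi$, every term $(\comp{\theta_3}{\theta_2})(x)$ is closed for $x \in \Psi$. Unfolding the definition of composition, this forces $\theta_3$ to supply a closed value for every term variable used in its range that is free in $\theta_2(x)$ for some $x \in \Psi \cap \context{\theta_2}$, and also for every $x \in \Psi \setminus \context{\theta_2}$, with $\theta_3(x) = \theta_1(x)$ on the latter.

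Next I would define an auxiliary closed substitution $\theta_d$ whose arity context is exactly $(\Psi \setminus \context{\theta_2}) \cup \Psi_2$, as follows: for $x \in \Psi \setminus \context{\theta_2}$ set $\theta_d(x) = \theta_1(x)$; for $y \in \Psi_2$ that appears free in $\theta_2(x)$ for some $x \in \Psi \cap \context{\theta_2}$ set $\theta_d(y) = \theta_3(y)$, which is closed by the preceding observation; and for every other $y \in \Psi_2$ pick any closed term of the corresponding arity type drawn from $\noms \cup \STLCGamma_0$. Recall that $\hsubstseq{\Psi_2}{\theta_2}{\mathcal{S}}$ comes equipped with an eigenvariable context $\Psi''$ obtained from $(\Psi \setminus \context{\theta_2}) \cup \Psi_2$ by raising over $\supportof{\theta_2}$ and an associated raising substitution $\theta_r$. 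Applying Theorem~\ref{th:raised-subs} to $\theta_d$ yields a closed substitution $\theta$ with $\context{\theta} = \Psi''$ such that $\hsubst{\theta}{\hsubst{\theta_r}{E}} = \hsubst{\theta_d}{E}$ for every appropriately typed expression $E$.

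The remaining work is verification. One checks that $\seqsub{\theta}{\emptyset}$ is substitution compatible with $\hsubstseq{\Psi_2}{\theta_2}{\mathcal{S}}$ and that $\sigma$, being a context variable substitution unaffected by $\theta$, remains appropriate. Then, via Theorem~\ref{th:composition}, for every formula, context variable type, or context expression $E$ occurring in $\mathcal{S}$,
\[
   \hsubst{\theta}{\hsubst{\theta_r}{\hsubst{\theta_2}{E}}}
   \;=\; \hsubst{\theta_d}{\hsubst{\theta_2}{E}}
   \;=\; \hsubst{\comp{\theta_d}{\theta_2}}{E}
   \;=\; \hsubst{\theta_1}{E},
\]
the last equality holding because $\comp{\theta_d}{\theta_2}$ and $\theta_1$ agree on $\Psi$, which contains all free term variables of $\mathcal{S}$. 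Applying $\sigma$ on top yields precisely $\mathcal{S}'$. The main obstacle will be the bookkeeping required to discharge the arity-typing side conditions demanded by Theorems~\ref{th:raised-subs} and~\ref{th:composition} at each stage, and to confirm that $\theta_3$ really does deliver closed values on exactly the variables needed to make $\theta_d$ closed.
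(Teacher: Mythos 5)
Your construction follows the paper's proof in all essentials: extract the witness $\solun{\theta_3}{\Psi_3}$ from the covering relation, normalize its domain to $(\Psi\setminus\context{\theta_2})\cup\Psi_2$, raise it over $\supportof{\theta_2}$ via Theorem~\ref{th:raised-subs}, and show the composed applications agree with $\theta_1$ on $\Psi$ so that the resulting closed instance is literally $\mathcal{S}'$. However, there is one step the paper needs that your ``one checks that $\seqsub{\theta}{\emptyset}$ is substitution compatible'' glosses over, and it would fail as written. Definition~\ref{def:covering-subst} places no constraint on $\supportof{\theta_3}$, so the terms $\theta_3(y)$ you copy into $\theta_d$ may contain nominal constants from $\mathbb{N}$. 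Raising over $\supportof{\theta_2}$ removes only the constants in $\supportof{\theta_2}$ from the support, so the final $\theta$ can still have $\supportof{\theta}\cap\mathbb{N}\neq\emptyset$, violating clause (2) of Definition~\ref{def:seq-term-subst} for the sequent $\hsubstseq{\Psi_2}{\theta_2}{\mathcal{S}}$, whose support set is $\mathbb{N}\cup\supportof{\theta_2}$. The paper repairs this with an explicit permutation $\pi$ that renames the constants in $\supportof{\theta_4}\cap\mathbb{N}$ to fresh names, and then verifies that $\restrict{(\comp{\permute{\pi}{\theta_4}}{\theta_2})}{\Psi}$ still equals $\restrict{\theta_1}{\Psi}$ — which works only because $\theta_1$ and $\theta_2$, being substitution compatible with $\mathcal{S}$, mention no constants of $\mathbb{N}$ in their ranges. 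This renaming is a genuine missing idea in your argument, not arity-typing bookkeeping.

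A secondary, smaller point: you assert that closedness of $\restrict{(\comp{\theta_3}{\theta_2})}{\Psi}$ \emph{forces} $\theta_3(y)$ to be closed for every $y$ occurring free in some $\theta_2(x)$ with $x\in\Psi$. That is not quite right, because hereditary substitution can discard occurrences (e.g.\ when a variable at the head of an application is replaced by a constant function), so a $y$ with only discarded occurrences may receive a non-closed value in $\theta_3$ while the composition remains closed. Your $\theta_d$ would then fail to be closed. This is easily patched — such a $y$ can be given any closed term of the right arity type without disturbing the composition — but the case split in your definition of $\theta_d$ should be on whether the value survives, not on whether $y$ occurs.
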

\begin{proof}
We argue below that, under the assumptions of the lemma, there is a
substitution $\theta_3$ and a term 
variable context $\Psi_3$ such that $\solun{\theta_3}{\Psi_3}$ is
substitution compatible with
$\hsubstseq{\Psi_2}{\theta_2}{\mathcal{S}}$ and for any term $M$ 
such that $\stlctyjudg{\mathbb{N}\cup\STLCGamma_0\cup\Psi}{M}{\alpha}$
is derivable it is the case that 
$\hsubst{\theta_3}{\hsubst{\theta_{2r}}{\hsubst{\theta_2}{M}}} =
\hsubst{\theta_1}{M}$, where $\theta_{2r}$ is the raising substitution
associated with the application of $\theta_2$ to $\mathcal{S}$ relative to $\Psi_2$.
It follows from this that the formulas and context types appearing in
$\hsubstseq{\Psi_3}{\theta_3}{\hsubstseq{\Psi_2}{\theta_2}{S}}$ must be
identical to the ones in $\hsubstseq{\emptyset}{\theta_1}{S}$.
It is then easily seen that $\theta_3$ can be extended into a
substitution $\theta$ that together with $\sigma$ identifies a closed
instance of $\hsubstseq{\Psi_2}{\theta_2}{\mathcal{S}}$ whose
formulas are identical to those of $\mathcal{S}'$.
The lemma is an immediate consequence.

Since $\solun{\theta_2}{\Psi_2}$ covers $\solun{\theta_1}{\emptyset}$ relative to $\Psi$
there exists some $\solun{\theta_4}{\Psi_4}$ satisfying the conditions of
Definition~\ref{def:covering-subst}.
We claim that we can further assume of this $\theta_4$ that
(1)~$\context{\theta_4}=(\Psi\setminus\context{\theta_2})\cup\Psi_2$
and
(2)~$\supportof{\theta_4}\cap\mathbb{N}=\emptyset$.
The first of these may be violated because $\theta_4$ may not instantiation some
variables from $\Psi_2$ and it may also include instantiations for variables
which are not contained in $(\Psi\setminus\context{\theta_2})\cup\Psi_2$.
For the former we extend $\theta_4$ with $\langle x,x,\alpha\rangle$ and
the $\Psi_4$ with $x:\alpha$. 
For the latter we simply drop the instantiation from $\theta_4$.
It is straightforward to conclude that such changes to $\theta_4$ will not violate any of the
conditions of Definition~\ref{def:covering-subst}.
To address the second condition, 
consider a permutation $\pi$ which renames nominal constants in 
$\supportof{\theta_4}\cap\mathbb{N}$ to new names chosen away from 
$\mathbb{N}\cup\supportof{\theta_1}\cup\supportof{\theta_2}$.
The first two conditions of Definition~\ref{def:covering-subst} are obviously
satisfied by $\permute{\pi}{\theta_4}$ and $\Psi_4$, so consider an arbitrary 
$\langle x,M,\alpha\rangle$ in $\restrict{\theta_1}{\Psi}$.
By the definition of a composition of substitutions,
since this same $\langle x,M,\alpha\rangle$ must appear in 
$\restrict{(\comp{\theta_4}{\theta_2})}{\Psi}$
either (1) $\langle x, M, \alpha\rangle\in\theta_4$ or 
(2) $\hsub{\theta_4}{M'}{M}$ for $\langle x, M',\alpha\rangle\in\theta_2$.
Observe that since $\solun{\theta_1}{\emptyset}$ and $\solun{\theta_2}{\Psi_2}$ are
both substitution compatible with $\mathcal{S}$, neither substitution will contain
any instances of nominal constants appearing in $\mathbb{N}$.
Thus for the former case, $\permute{\pi}{M}=M$ and
$\langle x,M,\alpha\rangle\in\permute{\pi}{\theta_4}$.
For the latter case, a simple inductive argument on the structure of terms permits 
us to conclude that
$\permute{\pi}{(\hsubst{\theta_4}{M'})}=
    \hsubst{\permute{\pi}{\theta_4}}{(\permute{\pi}{M'})}$.
But neither $M$ nor $M'$ contain any nominal constants from $\mathbb{N}$
 and thus
$\hsub{\permute{\pi}{\theta_4}}{M'}{M}$.
Therefore $\restrict{\theta_1}{\Psi}$
is also identical to $\restrict{(\comp{\permute{\pi}{\theta_4}}{\theta_2})}{\Psi}$.
Let $\theta_3'$ and $\Psi_3$ denote the $\permute{\pi}{\theta_4}$ and $\Psi_4$
which satisfy both (1) and (2).

We now use Theorem~\ref{th:raised-subs} to obtain a ``raised'' version
of $\theta_3'$ that together with $\Psi_3$ will constitute the
pair $\langle \theta_3,\Psi_3 \rangle$ that we desired at the outset.
Specifically, the theorem allows us to conclude that there is a 
substitution $\theta_3$ satisfying the following properties:
\begin{enumerate}
\item $\supportof{\theta_3}$ is disjoint from $\mathbb{N} \cup
  \supportof{\theta_2}$,
\item $\context{\theta_3}$ is identical to the
raised version of $(\Psi\setminus\context{\theta_2})\cup\Psi_2$
corresponding to the raising substitution $\theta_{2r}$,
\item $\theta_3$ is arity type preserving with respect to
  $\noms\cup\STLCGamma\cup\Psi_3$, and
\item for every term $M$ such that
$\stlctyjudg{\mathbb{N}\cup\STLCGamma_0\cup\Psi}{M}{\alpha}$,
$\hsubst{\theta'_3}{\hsubst{\theta_2}{M}} =
\hsubst{\theta_3}{\hsubst{\theta_{2r}}{\hsubst{\theta_2}{M}}}$.
\end{enumerate}
The argument for the first three of these properties is obvious.
For the last property, we observe, using Theorem~\ref{th:aritysubs},
that
$\stlctyjudg{(\mathbb{N}\cup\supportof{\theta_2})
              \cup\STLCGamma_0\cup((\Psi\setminus\context{\theta_2})\cup
\Psi_2)}{\hsubst{\theta_2}{M}}{\alpha}$ has a derivation under the
condition described; 
Theorem~\ref{th:raised-subs} can then be invoked in an obvious way.
It follows immediately from the first three properties that $\langle
\theta_3,\Psi_3 \rangle$ is substitution compatible with
$\hsubstseq{\Psi_2}{\theta_2}{\mathcal{S}}$.
It therefore only remains to show that for every $M$ such that
$\stlctyjudg{\mathbb{N}\cup\STLCGamma_0\cup\Psi}{M}{\alpha}$, it is
the case that $\hsubst{\theta_1}{M} =
\hsubst{\theta_3}{\hsubst{\theta_{2r}}{\hsubst{\theta_2}{M}}}$.
An easy inductive argument shows that for any $M$ of the kind
described and any $\theta$, if $\hsubst{\theta}{M}=M'$ is derivable
exactly when $\hsubst{\restrict{\theta}{\Psi}}{M}=M'$ is derivable.
It follows from this that
$\hsubst{\theta_1}{M} = \hsubst{\theta'_3}{\hsubst{\theta_2}{M}}$.
Property (4) then yields the desired result.
\end{proof}

We now use the observations in Lemmas~\ref{lem:heads-cover},
\ref{lem:inst-solun}, \ref{lem:decomp_decr} and \ref{lem:covers-seq}
to describe a complete analysis of the derivability of a sequent
around an atomic assumption formula.

\begin{definition}[Cases Elaboration]\label{def:def-cases}
Let $\mathcal{S}=\seq[\mathbb{N}]{\Psi}{\Xi}{\Omega}{F'}$ be a
well-formed sequent, let $F = \fatm{G}{R:P}$ be a formula in $\Omega$
and let $h:\typedpi{x_1}{A_1}{\ldots\typedpi{x_n}{A_n}}{P'}$ be a type
assignment in $\Sigma$ or in the explicit bindings in $G$.
Further, for each $i$, $1 \leq i \leq n$, let $z_i :\alpha_i$ be a
distinct variable of type $\erase{A_i}$ away from $\Psi$ and raised
over $\mathbb{N}$, and let $t_i$ be the generalized variable term
corresponding to $z_i$. 
Finally, let
$\mathcal{U}$ be the unification problem 
\begin{tabbing}
\qquad\quad\=\quad\qquad\qquad\=\kill
\>
$\langle \mathbb{N}; 
         \Psi \cup \{z_1:\alpha_1,\ldots, z_n:\alpha_n\};$\\
\>\>
$\left\{P = \hsubst{\{\langle x_1,t_1,\erase{A_1}\rangle,
            \ldots, \langle x_n,t_n,\erase{A_n}\rangle \}}{P'}, 
         R = (h \app t_1 \app \ldots\app h_n)\right\}\rangle$
\end{tabbing}
and let $C$ be a covering set of solutions for $\mathcal{U}$.
Then the \emph{analysis of $\mathcal{S}$ based on $F$ and $h$} is
denoted by 
$\makecases[F]{\mathcal{S}}{h:A}$ and is given by the set of sequents
\[
\left\{
\decompseq{F'}
          {\hsubstseq{\Psi_{\theta}}{\theta}{\mathcal{S}}}
\ \middle| \ 
\begin{array}{p{3in}}
$\solun{\theta}{\Psi_{\theta}}\in C$
 and $F'$ is the formula in
 $\hsubstseq{\Psi_{\theta}}{\theta}{\mathcal{S}}$ resulting from $F$
\end{array}
\right\}
.\]
If $\mathcal{S}$ is a well-formed sequent and $F=\fatm{G}{R:P}$ is an
assumption formula in $\mathcal{S}$, then the \emph{complete analysis
  of $\mathcal{S}$ based on $F$} is the set of sequents
\begin{tabbing}
\qquad\qquad\=$\bigcup \{ \makecases[F]{\mathcal{S}'}{h:A}\ \vert\ $\=\kill
\>$\bigcup \{
\makecases[F']{\mathcal{S}'}{h:A}\ \vert\ (\mathcal{S}';\ h:A)\in\hds{\mathcal{S}}{G}$\\
\>\>$\mbox{and}\ F'\ \mbox{is the formula in}\ \mathcal{S'}\ \mbox{resulting
  from}\ F \}.$
\end{tabbing} 
This collection is denoted by $\casesfn[F]{\mathcal{S}}$.
Note that the notations $\makecases[F]{\mathcal{S}}{h:A}$ and
$\casesfn[F]{\mathcal{S}}$ are both ambiguous---for instance, the
first notation leaves out mention of the covering set of solutions
that plays a role in generating the set it denotes.
We will assume them to denote any of the set of sequents that can be
generated in the respective ways described in this definition.
\end{definition}

We show first that all the sequents in the set yielded by
\casessans\ will be well-formed. 

\begin{theorem}\label{th:cases-seq-ok}
If $\mathcal{S}$ is a well-formed sequent and $F$ is an atomic
assumption formula appearing in $\mathcal{S}$, then every sequent in
$\casesfn[F]{\mathcal{S}}$ is well-formed.
\end{theorem}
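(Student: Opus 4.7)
My plan is to decompose the construction of $\casesfn[F]{\mathcal{S}}$ into three successive transformations and verify that well-formedness of sequents is preserved by each. Concretely, an element of $\casesfn[F]{\mathcal{S}}$ is obtained by (i)~picking a pair $(\mathcal{S}',h:A) \in \hds{\mathcal{S}}{G}$, (ii)~applying a substitution $\theta$ from a covering set of solutions to the associated unification problem to form $\hsubstseq{\Psi_\theta}{\theta}{\mathcal{S}'}$, and (iii)~invoking $\decompseqsans$ on the resulting sequent relative to the residual of $F$ and the head typing $h:A$. I will show each transformation outputs a well-formed sequent.

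For stage (i), Lemma~\ref{lem:heads-wf} directly supplies the required conclusion: for every $(\mathcal{S}',h:A) \in \hds{\mathcal{S}}{G}$, $\mathcal{S}'$ is well-formed and, writing $\mathcal{S}'=\seq[\mathbb{N}']{\Psi'}{\Xi'}{\Omega'}{F''}$, the judgement $\wftype{\mathbb{N}'\cup\STLCGamma_0\cup\Psi'}{A}$ is derivable. For stage (ii), I will verify that any solution $\solun{\theta}{\Psi_\theta}$ of the unification problem described in Definition~\ref{def:def-cases} is substitution compatible with $\mathcal{S}'$ in the sense of Definition~\ref{def:seq-term-subst}: conditions (1)--(3) of Definition~\ref{def:unification} for a solution match verbatim the three clauses of substitution compatibility once we observe that $\Psi'$ contains the new raised variables $z_1,\ldots,z_n$. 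Theorem~\ref{th:seq-term-subs-ok} then yields that $\hsubstseq{\Psi_\theta}{\theta}{\mathcal{S}'}$ is a well-formed sequent.

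The main work, and the most delicate step, is stage (iii). Here I need to show that the sequent produced by $\decompseqsans$ is well-formed, which reduces to showing each new atomic formula $\fatm{\hat G}{M_i : A_i'}$ is well-formed, where $\hat G$ is the (possibly raised and substituted) context expression, the $M_i$ are the argument terms in the head-normal form of the residual term, and $A_i' = \hsubst{\{\langle x_j,M_j,\erase{A_j}\rangle\}_{j<i}}{A_i}$. The well-formedness of $\hat G$ and the assumption set is inherited from $\hsubstseq{\Psi_\theta}{\theta}{\mathcal{S}'}$. For each $A_i'$, I invoke Lemma~\ref{lem:arityrespecting} (to obtain that $A_i$ respects the arity context extended with $x_1:\erase{A_1},\ldots,x_{i-1}:\erase{A_{i-1}}$), then Theorem~\ref{th:arityapprox} applied to $\wftype{\cdot}{A}$ together with Theorem~\ref{th:aritysubs-ty} to conclude $\wftype{\cdot}{A_i'}$; arity typing of each $M_i$ at the erased form of $A_i'$ follows by decomposing the arity typing judgement for the head-normal residual term supplied by Lemma~\ref{lem:decomp_decr}(1) combined with Theorem~\ref{th:erasure} (erasure is preserved by substitution).

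The principal obstacle I anticipate is plumbing the arity contexts correctly across the compositions of raising and solving substitutions used in stages (ii) and (iii): I must carefully track which variables are in the eigenvariable context at each point, that the $t_i$ introduced in Definition~\ref{def:def-cases} are arity-typed in the extended context, and that Theorems~\ref{th:aritysubs} and~\ref{th:aritysubs-ty} are applied with the correct underlying contexts so that the hereditary substitutions producing $A_i'$ and the $M_i$ are in fact well-defined. Once these bookkeeping details are discharged, the well-formedness claim falls out from the previously established preservation lemmas.
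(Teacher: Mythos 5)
Your proposal is correct and follows essentially the same route as the paper's proof: Lemma~\ref{lem:heads-wf} for the head-elaboration stage, Theorem~\ref{th:seq-term-subs-ok} (after noting that unification solutions are substitution compatible) for the substitution stage, and a direct check that the formulas $\fatm{G}{M_i:A_i'}$ produced by $\decompseqsans$ are well-formed. Your stage~(iii) merely spells out in more detail (via decomposing the arity kinding of the head's type, Theorem~\ref{th:aritysubs-ty}, and Theorem~\ref{th:erasure}) what the paper states as a one-line ``observation,'' so the two arguments coincide in substance.
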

\begin{proof}
Let $F=\fatm{G}{R:P}$.
%
By Lemma~\ref{lem:heads-wf} we know that every 
$\langle\mathcal{S}',h:A\rangle\in\hds{\mathcal{S}}{G}$
is such that $\mathcal{S}'$ is well-formed and letting $\mathbb{N}'$ and $\Psi'$
be the support set and arity typing context of $\mathcal{S}'$ respectively,
$\wftype{\mathbb{N}'\cup\STLCGamma_0\cup\Psi'}{A}$ has a derivation.
Letting $F'$ be the formula from $\mathcal{S}'$ corresponding to $F$,
we will denote the unification problem from $\makecases[F']{\mathcal{S}'}{h:A}$
by $\mathcal{U}_{(\mathcal{S'}, h:A)}$.
We can conclude by  Theorem~\ref{th:seq-term-subs-ok} that the application of any solution
for $\mathcal{U}_{(\mathcal{S'},h:A)}$ to $\mathcal{S}'$ must be well-formed.
We observe that for any formula $\fatm{G}{h\app M_1\ldots M_n:P}$
which is well-formed with respect to $\mathbb{N}\cup\STLCGamma_0\cup\Psi$ and 
$\ctxsanstype{\Xi}$, if 
$h:\typedpi{x_1}{A_1}{\ldots\typedpi{x_n}{A_n}{P'}}$ appears in $\Sigma$
or the explicit bindings of $G$ then, for $1\leq i\leq n$, we can extract derivations that
$\fatm{G}
      {M_i:\hsubst{\{\langle x_1,M_1,\erase{A_1}\rangle,\ldots,
                     \langle x_{i-1},M_{i-1},\erase{A_{i-1}}\rangle\}}
                  {A_i}}$
are well-formed formulas with respect to the same 
$\mathbb{N}\cup\STLCGamma_0\cup\Psi$ and $\ctxsanstype{\Xi}$.
From this observation we conclude that the reduced form of
each sequent obtained by the application of a solution to 
$\mathcal{U}_{(\mathcal{S'},h:A)}$ will be well-formed, and therefore that
all sequents in $\casesfn[F]{\mathcal{S}}$ must be well-formed.
\end{proof}

We show next that the validity of every sequent in the result of
\casessans\ ensures the validity of the original sequent.

\begin{theorem}\label{th:cases-cover}
Let $\mathcal{S}$ be a well-formed sequent and let $F$ be an atomic
assumption formula in $\mathcal{S}$.
If all the sequents in $\casesfn[F]{\mathcal{S}}$ are valid then
$\mathcal{S}$ must be valid.
\end{theorem}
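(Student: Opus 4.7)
The plan is to show the validity of $\mathcal{S}$ by taking an arbitrary closed instance of it (identified by substitutions $\theta$ and $\sigma$) in which every assumption formula is valid, and to demonstrate that the goal formula must then be valid as well. Since $F = \fatm{G}{R:P}$ is among the assumption formulas and is therefore valid at this instance, the closed term $\subst{\sigma}{\hsubst{\theta}{R}}$ must have a head, say $h$, and must be of the form $(h \app M_1 \app \ldots \app M_n)$. The first step is to invoke Lemma~\ref{lem:heads-cover} to extract a pair $\langle \mathcal{S}', h':A' \rangle \in \hds{\mathcal{S}}{G}$, along with a closed instance of $\mathcal{S}'$ identified by $\theta'$ and $\sigma'$ and a permutation $\pi$ such that $\permute{\pi}{h'} = h$ and $\permute{\pi}{\subst{\sigma'}{\hsubstseq{\emptyset}{\theta'}{\mathcal{S}'}}}$ coincides with the original closed instance of $\mathcal{S}$, with the formula $F$ corresponding to some $F' = \fatm{G'}{R':P'}$ in $\mathcal{S}'$.

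Next I would localize the analysis to this particular head. By assumption, every sequent in $\makecases[F']{\mathcal{S}'}{h':A'}$ is valid, since this set is a subset of $\casesfn[F]{\mathcal{S}}$. Let $\mathcal{U}$ be the associated unification problem and $C$ be the covering set of solutions used to generate $\makecases[F']{\mathcal{S}'}{h':A'}$. By Lemma~\ref{lem:inst-solun}, the pair $\solun{\theta'}{\emptyset}$ is itself a solution to $\mathcal{U}$ (using $\theta'$'s action on $R'$ together with $\permute{\pi}{h'} = h$). By the definition of a covering set, there is some $\solun{\theta_2}{\Psi_2} \in C$ which covers $\solun{\theta'}{\emptyset}$ relative to the eigenvariable context of $\mathcal{S}'$. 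Applying Lemma~\ref{lem:covers-seq} then yields a substitution $\theta''$ which together with $\sigma'$ identifies a closed instance of $\hsubstseq{\Psi_2}{\theta_2}{\mathcal{S}'}$ that is valid precisely when the closed instance of $\mathcal{S}'$ via $\theta', \sigma'$ is.

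The third step is to connect this back to the cases sets. By clause~(3) of Lemma~\ref{lem:decomp_decr}, the reduced sequent $\decompseq{F''}{\hsubstseq{\Psi_2}{\theta_2}{\mathcal{S}'}}$, where $F''$ is the image of $F'$, lies in $\makecases[F']{\mathcal{S}'}{h':A'}$ and its validity implies the validity of $\hsubstseq{\Psi_2}{\theta_2}{\mathcal{S}'}$. Since every sequent in $\casesfn[F]{\mathcal{S}}$ is valid by hypothesis, the particular reduced sequent is valid, whence the closed instance of $\hsubstseq{\Psi_2}{\theta_2}{\mathcal{S}'}$ identified by $\theta''$ and $\sigma'$ is valid. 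By the biconditional in Lemma~\ref{lem:covers-seq}, the closed instance $\subst{\sigma'}{\hsubstseq{\emptyset}{\theta'}{\mathcal{S}'}}$ is valid. Finally I would apply Theorem~\ref{th:perm-valid} to the permutation $\pi$ from Lemma~\ref{lem:heads-cover} to transport this validity to $\subst{\sigma}{\hsubstseq{\emptyset}{\theta}{\mathcal{S}}}$, which is the closed instance of $\mathcal{S}$ we started with.

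The main obstacle is purely bookkeeping: the statement juggles a permutation of nominal constants (from the head analysis), a covering substitution (from the unification step), and a raising substitution (from the application of $\theta_2$), and one must verify at each stage that the assumption formulas continue to be valid so that the assumed validity of the reduced sequent actually fires. The crucial technical observation is that the three lemmas \ref{lem:heads-cover}, \ref{lem:covers-seq}, and \ref{lem:decomp_decr} are designed so that they compose cleanly: Lemma~\ref{lem:heads-cover} extracts the head and sequent, Lemma~\ref{lem:covers-seq} replaces an arbitrary compatible instantiation by one derived from the covering set while preserving validity in both directions, and Lemma~\ref{lem:decomp_decr}(3) bridges the gap between the decomposed sequent and its parent. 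No novel argument is needed beyond threading these three results through the permutation given by Theorem~\ref{th:perm-valid}.
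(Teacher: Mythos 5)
Your proposal is correct and follows essentially the same route as the paper's proof: both fix an arbitrary closed instance, use Lemma~\ref{lem:heads-cover} to identify the head pair and permutation, Lemma~\ref{lem:inst-solun} to recognize the instantiating substitution as a solution covered by some element of the covering set, Lemma~\ref{lem:decomp_decr}(3) to transfer validity from the reduced sequent to its parent, Lemma~\ref{lem:covers-seq} to descend back to the original closed instance, and Theorem~\ref{th:perm-valid} to undo the permutation. The only difference is a harmless reordering of when Lemma~\ref{lem:covers-seq} is invoked relative to Lemma~\ref{lem:decomp_decr}.
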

\begin{proof}
Let $F=\fatm{G}{R:P}$.
Consider an arbitrary closed instance of the sequent $\mathcal{S}$ identified
by $\theta$ and $\sigma$.
If any of the assumption formulas of 
$\subst{\sigma}{\hsubstseq{\emptyset}{\theta}{\mathcal{S}}}$ 
were not valid then this closed instance is vacuously valid, so it remains only
to consider those instances for which they are all valid.
If all these assumption formulas are valid, then
$\subst{\sigma}{\hsubst{\theta}{F}}$ is a valid formula.
Therefore by Lemma~\ref{lem:heads-cover} there will exist a pair
$\langle\mathcal{S}',h:A\rangle\in\hds{\mathcal{S}}{G}$ such that,
letting $F'$ be the formula in $\mathcal{S}'$ corresponding to $F$,
$h:A$ appears in $\Sigma$ or among the explicit bindings of 
the context expression of $F'$ with respect to $\mathcal{S}'$, 
and there is a closed instance of $\mathcal{S}'$ identified by
$\theta'$ and $\sigma'$ which is equivalent to
$\subst{\sigma}{\hsubstseq{\emptyset}{\theta}{\mathcal{S}}}$ 
under some permutation $\pi$.
Clearly then, if the closed instance 
$\subst{\sigma'}{\hsubstseq{\emptyset}{\theta'}{\mathcal{S}'}}$
is valid, $\subst{\sigma}{\hsubstseq{\emptyset}{\theta}{\mathcal{S}}}$ 
will be valid by Theorem~\ref{th:perm-valid}.
By Lemma~\ref{lem:inst-solun} $\solun{\theta'}{\emptyset}$ must be a solution to the
unification problem corresponding to the head pair $\langle\mathcal{S}',h:A\rangle$,
and so there must be some solution $\solun{\theta_1}{\Psi_1}$ in the
covering set of solutions $C$ used by \makecasessans\ which covers it.
By Lemma~\ref{lem:decomp_decr} then, there is some sequent 
$\mathcal{S}''$ such that, letting $F''$ be the formula in 
$\hsubstseq{\Psi_1}{\theta_1}{\mathcal{S}'}$ corresponding to $F'$,
$\mathcal{S}''=\decompseq{\hsubstseq{\Psi_1}{\theta_1}{\mathcal{S}'}}{F''}$ and
$\hsubstseq{\Psi_1}{\theta_1}{\mathcal{S}'}$ is valid if $\mathcal{S}''$ is.
By assumption every sequent in $\casesfn[F]{\mathcal{S}}$ is valid, and so
$\mathcal{S}''$ will be valid and thus $\hsubstseq{\Psi_1}{\theta_1}{\mathcal{S}'}$
as well.
But by Lemma~\ref{lem:covers-seq}, there exists some $\solun{\theta_2}{\emptyset}$
which together with $\sigma'$ identifies a closed instance of
$\hsubstseq{\Psi_1}{\theta_1}{\mathcal{S}'}$
which is valid if and only if
$\subst{\sigma'}{\hsubstseq{\emptyset}{\theta'}{\mathcal{S}'}}$ is.
Thus by the validity of $\hsubstseq{\Psi_1}{\theta_1}{\mathcal{S}'}$ the closed instance
$\subst{\sigma'}{\hsubstseq{\emptyset}{\theta'}{\mathcal{S}'}}$
will be valid, and we have already seen that this will ensure that
$\subst{\sigma}{\hsubstseq{\emptyset}{\theta}{\mathcal{S}}}$ is valid.
Therefore $\mathcal{S}$ will be valid, since all of its closed instances are valid.
\end{proof}

\subsection{Proof Rules that Introduce Atomic Formulas}

\begin{figure}
\centering
$\begin{array}{cc}
\infer[\appL]
      {\seq[\mathbb{N}]{\Psi}
           {\Xi}
           {\setand{\Omega}{\fatm{G}{R:P}}}
           {F}}
      {\casesfn{\seq[\mathbb{N}]{\Psi}
                    {\Xi}
                    {\setand{\Omega}{\fatm{G}{R:P}}}
                    {F}}}
\\\ &\ 
\end{array}$
$\begin{array}{c}
\infer[\appR]
      {\seq[\mathbb{N}]{\Psi}
           {\Xi}
           {\Omega}
           {\fatm{G}{h\app M_1\ldots M_n:P'}}}
      {\begin{array}{c}
           h:\typedpi{x_1}{A_1}{\ldots
                      \typedpi{x_n}{A_n}{P}}\in\Sigma\mbox{ or the explicit bindings in }G
             \\
           \fatm{G}{N:B}\in\Omega \qquad
           \hsub{\{\langle x_1, M_1, \erase{A_1}\rangle,\ldots,\langle x_n, M_n, \erase{A_n}\rangle\}}
                {P}
                {P'}
         \\
         \left\{
         \begin{array}{l}
           \seq[\mathbb{N}]
               {\Psi}
               {\Xi}
               {\Omega}
               {}
         \\\quad
               \fatm{G}
                    {M_i:
                       \hsubst{\{\langle x_1, M_1, \erase{A_1}\rangle,\ldots,
                                 \langle x_{i-1}, M_{i-1}, \erase{A_{i-1}}\rangle\}}
                              {A_i}}
          \\\hspace{9cm}
         \ \mid\ 1\leq i\leq n
         \end{array}
         \right\}
       \end{array}}
\\\ 
\end{array}$
\[
\infer[\absL]
      {\seq[\mathbb{N}]{\Psi}{\Xi}{\setand{\Omega}{\fatm{G}{\lflam{x}{M}:\typedpi{x}{A_1}{A_2}}}}{F}}
      {\begin{array}{c}
           n\not\in\dom{\mathbb{N}}
         \\
         \Xi'=
           \left\{\begin{array}{cl}
             \left(\Xi
                 \setminus
                 \left\{\ctxvarty{\Gamma}{\mathbb{N}_{\Gamma}}{\ctxty{\mathcal{C}}{\mathcal{G}}}\right\}\right)
             \cup 
             \left\{\ctxvarty{\Gamma}{(\mathbb{N}_{\Gamma},n:\erase{A_1})}{\ctxty{\mathcal{C}}{\mathcal{G}}}\right\}
               & \mbox{if }\Gamma\mbox{ in }G 
             \\
             \Xi & \mbox{otherwise}
           \end{array}\right.
         \\
         \begin{array}{l}
             \mathbb{N},n:\erase{A_1};\Psi;\Xi';
           \\\qquad
             \setand{\Omega}
                    {\fatm{G,n:A_1}
                          {\hsubst{\{\langle x,n,\erase{A_1}\rangle\}}{M}:
                              \hsubst{\{\langle x,n,\erase{A_1}\rangle\}}{A_2}}}
             \longrightarrow F
         \end{array}
       \end{array}}
\]
\[
\infer[\absR]
      {\seq[\mathbb{N}]{\Psi}{\Xi}{\Omega}{\fatm{G}{\lflam{x}{M}:\typedpi{x}{A_1}{A_2}}}}
      {\begin{array}{c}
           n\not\in\dom{\mathbb{N}}
         \\
         \Xi'=
           \left\{\begin{array}{cl}
             \left(\Xi
                 \setminus
                 \left\{\ctxvarty{\Gamma}{\mathbb{N}_{\Gamma}}{\ctxty{\mathcal{C}}{\mathcal{G}}}\right\}\right)
             \cup 
             \left\{\ctxvarty{\Gamma}{(\mathbb{N}_{\Gamma},n:\erase{A_1})}{\ctxty{\mathcal{C}}{\mathcal{G}}}\right\}
               & \mbox{if }\Gamma\mbox{ in }G 
             \\
             \Xi & \mbox{otherwise}
           \end{array}\right.
         \\
         \begin{array}{l}
             \seq[\mathbb{N},n:\erase{A_1}]{\Psi}{\Xi'}{\Omega}{}
           \\\qquad
             \fatm{G,n:A_1}{\hsubst{\{\langle x,n,\erase{A_1}\rangle\}}{M}:\hsubst{\{\langle x,n,\erase{A_1}\rangle\}}{A_2}}
         \end{array}
       \end{array}}
\]
\caption{Proof Rules Interpreting Atomic Formulas}
\label{fig:rules-atom}
\end{figure}

We are finally in a position to describe rules in our logic that
internalize the analysis of typing derivations in LF.
We do this in Figure~\ref{fig:rules-atom}.
Reasoning about atomic formulas over atomic terms, we base
our reasoning step on the result of Theorem~\ref{th:atomictype}.
When analysing a goal formula we require that the head of the atomic term
is known and thus a single structure is possible; when an assumption formula
we rely on the analysis of $\casessans$ to identify all the ways such 
judgements might have been derived.
When reasoning about an atomic formula for an abstraction term, 
we introduce a new nominal constant
to represent the fresh binding of the LF typing rule for abstractions and
extend the context of the formula with this new binding.
In LF this name is distinct and we capture this requirement in the logic
by ensuring that no instantiations for existing eigenvariables or for
context variables in the formula may use this name directly.

The following theorem shows that these rules require the proof of only
well-formed sequents in constructing a proof of a well-formed sequent.

\begin{theorem}\label{th:atom-wf}
The following property holds for each rule in
Figure~\ref{fig:rules-atom}: if the conclusion sequent is
well-formed, the premises expressing typing conditions have
derivations and the conditions expressed by the other, non-sequent
premises are satisfied, then all the sequent premises must
be well-formed.
\end{theorem}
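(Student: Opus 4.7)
The plan is to handle each of the four rules in Figure~\ref{fig:rules-atom} in turn. The \appL\ case is immediate: its single premise collection is precisely $\casesfn[F]{\mathcal{S}}$ for the conclusion sequent $\mathcal{S}$ and active formula $F = \fatm{G}{R:P}$, and Theorem~\ref{th:cases-seq-ok} already establishes that every sequent in this collection is well-formed. No further argument is required for this rule.

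For \appR, the context variable context $\Xi$ and the assumption set $\Omega$ are preserved from conclusion to premise, so their well-formedness transfers directly from the conclusion sequent's well-formedness. The only new obligation is to show that each new goal formula $\fatm{G}{M_i:A_i'}$ is well-formed. Well-formedness of the conclusion's goal $\fatm{G}{h\app M_1\ldots M_n:P'}$ gives well-formedness of $G$ and an arity typing derivation for the term at $\erase{P'}$; combining this with the type the rule's side condition assigns to $h$ forces each $M_i$ to be arity-typed at $\erase{A_i}$ (using Theorem~\ref{th:erasure} to track how erasure passes through the hereditary substitution defining $P'$). Each $A_i'$ is then produced by an arity-preserving hereditary substitution whose existence and arity-kindedness follow by iterated use of Theorem~\ref{th:aritysubs-ty}, and $M_i$ retains the matching arity type $\erase{A_i'}$ by Theorem~\ref{th:aritysubs} together with erasure invariance.

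The \absL\ and \absR\ cases share essentially the same structure, so I would treat them in parallel. The premise introduces a fresh nominal constant $n : \erase{A_1}$ into the support set, possibly adjusts the annotation $\mathbb{N}_\Gamma$ of a single context variable $\Gamma$ (when $\Gamma$ occurs in $G$) by adjoining $n$, extends the active context by $n:A_1$, and hereditarily substitutes $n$ for $x$ in $M$ and $A_2$. For the modified entry $\ctxvarty{\Gamma}{(\mathbb{N}_\Gamma, n:\erase{A_1})}{\ctxty{\mathcal{C}}{\mathcal{G}}}$, the required judgement $\wfctxvarty{(\mathbb{N},n:\erase{A_1})\setminus(\mathbb{N}_\Gamma,n:\erase{A_1})}{\Psi}{\ctxty{\mathcal{C}}{\mathcal{G}}}$ reduces by freshness of $n$ to exactly the judgement guaranteed by well-formedness of the conclusion; for every other entry in $\Xi$, Theorem~\ref{th:ctx-ty-wk} provides the needed weakening of the nominal set by $n$. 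The formulas in $\Omega$ (and, for \absL, the goal $F$) remain well-formed under the enlarged support set and the adjusted context variable context, which has the same $\ctxsanstype{\cdot}$, by Theorem~\ref{th:wf-form-wk}. Finally, the active formula's new context $G, n:A_1$ is well-formed via the formation rule (well-formedness of $A_1$ is extracted from well-formedness of $\typedpi{x}{A_1}{A_2}$ in the conclusion), and both hereditary substitutions $\{\langle x,n,\erase{A_1}\rangle\}$ applied to $A_2$ and $M$ produce well-formed, arity-typed expressions by Theorem~\ref{th:aritysubs-ty} and Theorem~\ref{th:aritysubs}, with arity types matching across type and term by Theorem~\ref{th:erasure}.

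The main obstacle is the bookkeeping in \absL\ and \absR, particularly the observation that adjoining the fresh $n$ to both $\mathbb{N}$ and $\mathbb{N}_\Gamma$ leaves $\mathbb{N}\setminus\mathbb{N}_\Gamma$ invariant, so the new annotation faithfully captures the freshness restriction without requiring any new wellformedness derivation for the context variable type. Once this is recognized, each remaining verification reduces to a routine invocation of the substitution, erasure, and weakening theorems already established in Chapter~\ref{ch:lf} and Chapter~\ref{ch:logic}.
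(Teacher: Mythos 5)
Your proposal is correct and follows essentially the same route as the paper's proof: \appL\ is discharged by Theorem~\ref{th:cases-seq-ok}, \appR\ by tracking arity types of the $M_i$ through the iterated hereditary substitutions via Theorems~\ref{th:erasure}, \ref{th:aritysubs}, and \ref{th:aritysubs-ty}, and \absL/\absR\ in parallel using the observation that $\mathbb{N}\setminus\mathbb{N}_\Gamma$ is unchanged when the fresh $n$ is adjoined to both sets, together with Theorems~\ref{th:ctx-ty-wk} and \ref{th:wf-form-wk}. No gaps.
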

\begin{proof}
Consider each of the rules in Figure~\ref{fig:rules-atom}.

\case{\appL}
For this rule, the well-formedness of the premise sequents is assured by
Theorem~\ref{th:cases-seq-ok} given that the conclusion sequent is well-formed 
by assumption.

\case{\appR}
Let $\mathcal{S}_1$ $\ldots$, $\mathcal{S}_n$
denote the premise sequents for $A_1$, $\ldots$, $A_n$
respectively, and assume that the conclusion sequent is well-formed.
Since the context variable context and assumption formulas remain the same in
these premise sequents, we can infer they are well-formed from the well-formedness
of the conclusion sequent; what remains to be shown is that the goal formulas
of these premise sequents are well-defined and will be well-formed.
For each $i$, $1\leq i\leq n$, let 
$\theta_i$ denote 
$\{\langle x_1,M_1,\erase{A_1}\rangle,\ldots,\langle x_{i-1},M_{i-1},\erase{A_{i-1}}\rangle\}$.
By the well-formedness of the conclusion sequent, 
$\fatm{G}{h\app M_1\ldots M_n:P'}$ is a 
well-formed formula with respect to $\mathbb{N}\cup\STLCGamma_0\cup\Psi$ and
$\ctxsanstype{\Xi}$.
Thus there must exist a derivation of the judgement
$\stlctyjudg{\mathbb{N}\cup\STLCGamma_0\cup\Psi}{M_i}{\erase{A_i}}$
for each $i$, $1\leq i\leq n$.
Therefore the substitutions $\theta_i$ will all be arity type preserving
with respect to $\mathbb{N}\cup\STLCGamma_0\cup\Psi$ and so the premise sequents
will be well-defined.
For each $i$, $1\leq i\leq n$, let $A_i'$ denote the type
$\hsubst{\theta_i}{A_i}$.
Since $h$ is bound in $\Sigma$ or the explicit bindings in $G$ relative to the
conclusion sequent, we further extract derivations of 
$\wftype{\mathbb{N}\cup\STLCGamma_0\cup\Psi_i}{A_i}$
for each $i$.
By Theorem~\ref{th:erasure} $\erase{A_i}=\erase{A_i'}$, and therefore
$\stlctyjudg{\mathbb{N}\cup\STLCGamma_0\cup\Psi}{M_i}{\erase{A_i'}}$
has a derivation for each $i$.
A straightforward inner induction on $i$ showing $\theta_i$ is
arity type preserving with respect to
$\mathbb{N}\cup\STLCGamma_0\cup\Psi$ permits us to conclude 
through Theorem~\ref{th:aritysubs} that for each $i$,
$\wftype{\mathbb{N}\cup\STLCGamma_0\cup\Psi}{A_i'}$
is derivable.
Thus we can construct a derivation for
$\wfform{\mathbb{N}\cup\STLCGamma_0\cup\Psi}{\ctxsanstype{\Xi}}{\fatm{G}{M_i:A_i'}}$
for each $i$, $1\leq i\leq n$, and therefore each premise sequent
$\mathcal{S}_i$ will clearly be well-formed.

\case{\absL\ and \absR}
We first observe that the substitution $\{\langle x,n,\erase{A_1}\rangle\}$
is obviously arity type preserving by construction and so the premise sequents
will be well-defined.
Let $A_2'$ denote the type $\hsubst{\{\langle x,n,\erase{A_1}\rangle\}}{A_2}$ and
$M'$ denote the term $\hsubst{\{\langle x,n,\erase{A_1}\rangle\}}{M}$.
By the well-formedness of the conclusion sequent we know that
\begin{enumerate}
\item for each $\ctxvarty{\Gamma_i}{\mathbb{N}_i}{\ctxty{\mathcal{C}_i}{\mathcal{G}_i}}\in\Xi$,
$\wfctxvarty{\mathbb{N}\setminus\mathbb{N}_i}{\Psi}{\ctxty{\mathcal{C}_i}{\mathcal{G}_i}}$
has a derivation and
\item for each $F'\in\Omega\cup\{\fatm{G}{\lflam{x}{M}:\typedpi{x}{A_1}{A_2}},F\}$ 
(resp. $\Omega\cup\fatm{G}{\lflam{x}{M}:\typedpi{x}{A_1}{A_2}}$ for \absR),
$\wfform{\mathbb{N}\cup\STLCGamma_0\cup\Psi}{\ctxsanstype{\Xi}}{F'}$
has a derivation.
\end{enumerate}

We first consider the well-formedness of the context variable types.
For every binding
$\ctxvarty{\Gamma_i}{\mathbb{N}_i}{\ctxty{\mathcal{C}_i}{\mathcal{G}_i}}\in\Xi$
which does not appear in $G$, it is obvious by Theorem~\ref{th:ctx-ty-wk}
that this context variable type will remain well-formed under the extended 
support set $\mathbb{N},n:\erase{A_1}$.
For any $\Gamma_i$ which appears in $G$, we observe that 
$\mathbb{N}\setminus\mathbb{N}_i$
is the same set as
$(\mathbb{N},n:\erase{A_1})\setminus(\mathbb{N}_i,n:\erase{A_1})$
and thus the declaration
$\ctxvarty{\Gamma_i}{(\mathbb{N}_i,n:\erase{A_1})}{\ctxty{\mathcal{C}_i}{\mathcal{G}_i}}$
must also be well-formed.

We now consider the well-formedness of formulas in the sequent.
We first observe that $\ctxsanstype{\Xi}$ is the same as
$\ctxsanstype{\Xi'}$, thus these sets can be used interchangeable.
By Theorem~\ref{th:wf-form-wk} we easily determine that the formulas in 
$\Omega\cup\{F\}$ (resp. $\Omega$ for \absR) will all be well-formed relative 
to the extended support set $\mathbb{N},n:\erase{A_1}$ and $\ctxsanstype{\Xi'}$.
It remains only to show the well-formedness of the formula
$\fatm{G,n:A_1}{M':A_2'}$.
By assumption $\fatm{G}{\lflam{x}{M}:\typedpi{x}{A_1}{A_2}}$ is well-formed
with respect to 
$\mathbb{N}\cup\initctx\cup\Psi$ and $\ctxsanstype{\Xi}$, and 
therefore that $G$ is a well-formed context, 
$\typedpi{x}{A_1}{A_2}$ a well-formed type, and 
$\lflam{x}{M}$ a well-formed term of type $\erase{\typedpi{x}{A_1}{A_2}}$
with respect to this same $\mathbb{N}\cup\initctx\cup\Psi$ and $\ctxsanstype{\Xi}$.
For a new $n:\erase{A_1}$ then, we can extract derivations showing
$(G,n:A_1)$ is a well-formed context, $A_2'$ a 
well-formed type, and $M'$ a well-formed term of type $\erase{A_2}$
with respect to $(\mathbb{N},n:\erase{A_1})\cup\initctx\cup\Psi$ and 
$\ctxsanstype{\Xi}$.
But then obviously $\fatm{G,n:A_1}{M':A_2'}$ is well-formed
with respect to $(\mathbb{N},n:\erase{A_1})\cup\initctx\cup\Psi$ and 
$\ctxsanstype{\Xi'}$.
Therefore the premise sequent must be well-formed.
\end{proof}

We now establish the soundness of these rules.

\begin{theorem}\label{th:atom-sound}
The following property holds for every instance of each of the rules
in Figure~\ref{fig:rules-atom}: if the premises expressing
typing judgements are derivable, the conditions described in the other
non-sequent premises are satisfied and the premise sequents are valid,
then the conclusion sequent must also be valid. 
\end{theorem}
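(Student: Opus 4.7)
The plan is to proceed by cases on the rule, with \appL\ being an immediate corollary of prior work and the remaining three rules requiring a direct argument from the validity of premise instances to the LF-derivability requirements of the atomic conclusion (or premise) formula.

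For \appL, the conclusion sequent and the premise collection coincide with the inputs of Theorem~\ref{th:cases-cover}: given that every sequent in $\casesfn[\fatm{G}{R:P}]{\mathcal{S}}$ is valid by assumption, that theorem delivers the validity of $\mathcal{S}$ directly. No further work is needed.

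For \appR, I would fix an arbitrary closed instance of the conclusion sequent, identified by $\theta$ and $\sigma$, and assume all instantiated assumption formulas are valid; otherwise the instance is vacuously valid. The premise $\fatm{G}{N:B}\in\Omega$ guarantees via Definition~\ref{def:semantics} that $\lfctx{\subst{\sigma}{\hsubst{\theta}{G}}}$ is derivable, which is where the wellformedness of the context in the conclusion comes from. Using the validity of each of the premise sequents $\seq[\mathbb{N}]{\Psi}{\Xi}{\Omega}{\fatm{G}{M_i:A_i'}}$ against this same closed instance, we extract LF derivations for $\lfchecktype{\subst{\sigma}{\hsubst{\theta}{G}}}{\hsubst{\theta}{M_i}}{\hsubst{\theta}{A_i'}}$, together with $\lftype{\subst{\sigma}{\hsubst{\theta}{G}}}{\hsubst{\theta}{A_i'}}$. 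Then I would assemble these via the forward direction (clause 1) of Theorem~\ref{th:atomictype} applied to $h$ with declared type $\typedpi{x_1}{A_1}{\ldots\typedpi{x_n}{A_n}{P}}$: the hereditary substitution side-conditions match exactly the hypotheses on $P'$ and on the $A_i'$ in the rule, so the theorem yields a derivation of $\lfsynthtype{\subst{\sigma}{\hsubst{\theta}{G}}}{h\app\hsubst{\theta}{M_1}\app\cdots\app\hsubst{\theta}{M_n}}{\hsubst{\theta}{P'}}$, and hence of the required checking judgement. This establishes the validity of the instantiated conclusion formula.

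For \absR, I would again fix a closed instance of the conclusion via $\theta$ and $\sigma$, assume the instantiated assumptions are valid, and aim to show $\subst{\sigma}{\hsubst{\theta}{(\fatm{G}{\lflam{x}{M}:\typedpi{x}{A_1}{A_2}})}}$ is valid. Pick the fresh nominal constant $n$ from the rule; since $n\notin\mathbb{N}$ and $\sigma$ and $\theta$ are closed substitutions compatible with $\mathcal{S}$, neither touches $n$. The modified context variable context $\Xi'$ forbids any instance of $\Gamma$ from using $n$, so $\sigma$ remains appropriate for the premise, and together with $\theta$ it identifies a closed instance of the premise sequent. Validity of that instance gives $\lfchecktype{\subst{\sigma}{\hsubst{\theta}{G}},n:\hsubst{\theta}{A_1}}{\hsubst{\theta}{M}[n/x]}{\hsubst{\theta}{A_2}[n/x]}$, together with wellformedness of the extended context and of $A_2[n/x]$. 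Folding these with the \canontermlam, \canonfampi, and \ctxterm\ rules (recast in the nominal-constant style of Definition~\ref{def:semantics}) produces exactly the three derivations needed for validity. Case \absL\ is the dual: from the validity of the assumption $\subst{\sigma}{\hsubst{\theta}{(\fatm{G}{\lflam{x}{M}:\typedpi{x}{A_1}{A_2}})}}$ one inverts the \canontermlam\ rule---noting that freshness of $n$ in the derivation can be arranged by Theorem~\ref{th:perm-lf}---to obtain the validity of the extended assumption, after which the premise sequent's validity transfers to the conclusion.

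The bookkeeping around the fresh nominal constant in \absL\ and \absR\ will be the delicate step: one must check that the premise sequent really is a closed instance under the same $\theta$ and $\sigma$ (using that the annotation update $\mathbb{N}_\Gamma,n:\erase{A_1}$ precisely captures that no existing instance contains $n$) and that the permutation of nominal constants used to align a generic freshness choice in the LF derivation with the specific $n$ of the rule is legitimate by Theorems~\ref{th:perm-lf} and \ref{th:perm-valid}. The remaining content is primarily a careful unpacking of Definition~\ref{def:semantics} and invocation of Theorem~\ref{th:atomictype}.
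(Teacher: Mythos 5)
Your proposal matches the paper's proof essentially step for step: \appL\ is discharged by Theorem~\ref{th:cases-cover}, \appR\ by fixing a closed instance, extracting LF derivations from the premise sequents and the formula $\fatm{G}{N:B}$, and assembling them with clause (1) of Theorem~\ref{th:atomictype}, and \absL/\absR\ by the freshness-of-$n$ argument justified through permutation invariance (Theorem~\ref{th:perm-valid}) together with folding/unfolding of the LF rules. The only detail glossed over is explicitly establishing $\lftype{\subst{\sigma}{\hsubst{\theta}{G}}}{\hsubst{\theta}{P'}}$ before invoking Theorem~\ref{th:atomictype} in the \appR\ case, which the paper obtains from the wellformedness of the type of $h$ and substitution permutation; this is a minor elaboration, not a gap.
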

\begin{proof}
Consider each rule in Figure~\ref{fig:rules-atom}.

\case{\appL}
The soundness of this rule is an immediate consequence of Theorem~\ref{th:cases-cover}.

\case{\appR}
Let $\mathcal{S}_1,\ldots,\mathcal{S}_n$ denote 
the premise sequents corresponding to $A_1,\ldots,A_n$ respectively
and for each $i$, $1\leq i\leq n$ let $A_i'$ denote 
$\hsubst{\{\langle x_1,\hsubst{\theta}{M_1},\erase{A_1}\rangle,\ldots,
           \langle x_{i-1},\hsubst{\theta}{M_{i-1}},\erase{A_{i-1}}\rangle\}}
        {A_i}$.
By assumption all of the premise sequents are valid.
Consider an arbitrary closed instance of the conclusion sequent
identified by $\theta$ and $\sigma$.
If any formula in $\subst{\sigma}{\hsubst{\theta}{\Omega}}$ were not
valid then this instance would be vacuously valid, so assume they
are all valid.
This $\theta$ and $\sigma$ clearly also identify closed instances
for each $\mathcal{S}_i$.
Since all formulas in $\subst{\sigma}{\hsubst{\theta}{\Omega}}$
are valid, $\subst{\sigma}{\hsubst{\theta}{\fatm{G}{N:B}}}$ in particular must
be valid and therefore there will exist a derivation for 
$\lfctx{\subst{\sigma}{\hsubst{\theta}{G}}}$.
Also using the validity of $\subst{\sigma}{\hsubst{\theta}{\Omega}}$, 
the validity of the premise sequents ensures that for 
each $i$, $1\leq i\leq n$,
$\subst{\sigma}{\hsubst{\theta}{\fatm{G}{M_i:A_i'}}}$ is valid.
Using Theorem~\ref{th:subspermute} to permute the substitution application to $A_i$ 
this formula can be written as
$\fatm{\subst{\sigma}{\hsubst{\theta}{G}}}
      {\hsubst{\theta}{M_i} : 
         \hsubst{\{\langle x_1,\hsubst{\theta}{M_1},\erase{A_1}\rangle,\ldots,
                   \langle x_{i-1},\hsubst{\theta}{M_{i-1}},\erase{A_{i-1}}\rangle\}}
                {\hsubst{\theta}{A_i}}}.$
By assumption, $h$ is bound in either $\Sigma$ or the explicit bindings of $G$ 
with respect to the conclusion sequent.
Thus since $\subst{\sigma}{\hsubst{\theta}{G}}$ must be a well-formed LF context,
there will exist a derivation in LF for
$\lftype{\subst{\sigma}{\hsubst{\theta}{G}}}
        {\hsubst{\theta}{(\typedpi{x_1}{A_1}{\ldots\typedpi{x_n}{A_n}{P}})}}$.
Given the validity of the formulas expressed above, the substitution
$\{\langle x_1,\hsubst{\theta}{M_1},\erase{A_1}\rangle,\ldots,
   \langle x_n,\hsubst{\theta}{M_{i-1}},\erase{A_n}\rangle\}$
is clearly arity type preserving with respect to $\noms\cup\STLCGamma_0$
and therefore by Theorem~\ref{th:aritysubs} and another use of 
Theorem~\ref{th:subspermute} there is a derivation of
$\lftype{\subst{\sigma}{\hsubst{\theta}{G}}}
        {\hsubst{\theta}{P'}}$.
We conclude by Theorem~\ref{th:atomictype} that there is a derivation for
$\lfchecktype{\subst{\sigma}{\hsubst{\theta}{G}}}
             {\hsubst{\theta}{(h\app M_1\ldots M_n)}}
             {\hsubst{\theta}{P'}}$, 
in LF and it is then easy to see that
$\subst{\sigma}{\hsubst{\theta}{\fatm{G}{h\app M_1\ldots M_n:P'}}}$
will be valid.
Thus the conclusion sequent must be valid.

\case{\absL}
Let $\mathcal{S}$ denote the conclusion sequent and $\mathcal{S}'$
the premise sequent.
Since $\mathcal{S}$ is well-formed and $n:\erase{A_1}\not\in\mathbb{N}$
we can conclude that $n$ does not appear anywhere in $\mathcal{S}$.
Suppose that $\mathcal{S}'$ is valid.
Consider an arbitrary closed instance of $\mathcal{S}$ identified by 
$\theta$ and $\sigma$.
It suffices to consider only those $\theta$ and $\sigma$ which do not contain
any uses of the nominal constant $n:\erase{A_1}$ as any substitutions
using this name can be obtained by applying a permutation $\pi$ to some 
other substitution which does not, and by Theorem~\ref{th:perm-valid} 
the validity of the closed instance identified by the substitutions without $n$ 
will then ensure the validity of the instance identified by the substitutions using $n$.
Since $n$ does not appear in either $\theta$ or $\sigma$, these substitutions
must also identify a closed instance of the sequent $\mathcal{S}'$.
If any formula in 
$\subst{\sigma}{\hsubst{\theta}{(\setand{\Omega}{\fatm{G}{\lflam{x}{M}:\typedpi{x}{A_1}{A_2}}})}}$ 
were not valid then this instance would be vacuously valid, 
so assume they are all valid.
In particular then, the formula
$\subst{\sigma}{\hsubst{\theta}{\fatm{G}{\lflam{x}{M}:\typedpi{x}{A_1}{A_2}}}}$
is valid.
By the definition of validity for formulas there are derivations in LF for
$\lfctx{\subst{\sigma}{\hsubst{\theta}{G}}}$,
$\lftype{\subst{\sigma}{\hsubst{\theta}{G}}}
        {\subst{\sigma}{\hsubst{\theta}{(\typedpi{x}{A_1}{A_2})}}}$, and
$\lfchecktype{\subst{\sigma}{\hsubst{\theta}{G}}}
             {\subst{\sigma}{\hsubst{\theta}{(\lflam{x}{M})}}}
             {\subst{\sigma}{\hsubst{\theta}{(\typedpi{x}{A_1}{A_2})}}}$.
Since the nominal constant $n$ cannot appear anywhere in these judgements
we can extract from these derivations that the judgements
$\lfctx{\subst{\sigma}{\hsubst{\theta}{(G,n:A_1)}}}$,
$\lftype{\subst{\sigma}{\hsubst{\theta}{(G,n:A_1)}}}
        {\subst{\sigma}{\hsubst{\theta}{\hsubst{\{\langle x,n,\erase{A_1}\rangle\}}{A_2}}}}$,
and
$\lfchecktype{\subst{\sigma}{\hsubst{\theta}{(G,n:A_1)}}}
             {\subst{\sigma}{\hsubst{\theta}{\hsubst{\{\langle x,n,\erase{A_1}\rangle\}}{M}}}}
             {\subst{\sigma}{\hsubst{\theta}{\hsubst{\{\langle x,n,\erase{A_1}\rangle\}}{A_2}}}}$
have derivations as well.
Thus
$\subst{\sigma}{\hsubst{\theta}
       {\fatm{G,n:A_1}
             {\hsubst{\{\langle x,n,\erase{A_1}\rangle\}}{M}:
                  \hsubst{\{\langle x,n,\erase{A_1}\rangle\}}{A_2}}}}$ 
will be a valid formula.
But then all of the assumption formulas of 
$\subst{\sigma}{\hsubstseq{\emptyset}{\theta}{\mathcal{S}'}}$ are valid
and by the validity of this sequent we can conclude
that $\subst{\sigma}{\hsubst{\theta}{F}}$ must be valid.
Therefore the closed instance $\subst{\sigma}{\hsubstseq{\emptyset}{\theta}{\mathcal{S}}}$
is valid, and thus we can conclude that $\mathcal{S}$ is a valid sequent.

\case{\absR}
Let $\mathcal{S}$ denote the conclusion sequent and $\mathcal{S}'$
the premise sequent.
Since $\mathcal{S}$ is well-formed and $n:\erase{A_1}\not\in\mathbb{N}$
we can conclude that $n$ does not appear anywhere in $\mathcal{S}$.
Suppose that $\mathcal{S}'$ is valid.
Consider an arbitrary closed instance of $\mathcal{S}$ identified by 
$\theta$ and $\sigma$.
As with $\absL$ it will suffice to consider only those closed instances which
do not use $n:\erase{A_1}$ as permutations of valid closed instances must also
be valid.
This $\theta$ and $\sigma$ will therefore also identify a closed instance of the
premise sequent $\mathcal{S}'$.
If any formula in 
$\subst{\sigma}{\hsubst{\theta}{\Omega}}$ 
were not valid then this instance of $\mathcal{S}$ would be vacuously valid, 
so assume they are all valid.
But then by the validity of $\mathcal{S}'$ the formula
$\subst{\sigma}{\hsubst{\theta}{\fatm{G,n:A_1}{\hsubst{\{\langle x, n,\erase{A_1}\rangle\}}{M}:\hsubst{\{\langle x, n,\erase{A_1}\rangle\}}{A_2}}}}$
will be valid.
Thus there must be LF derivations for
\begin{enumerate}
\item $\lfctx{\subst{\sigma}{\hsubst{\theta}{(G,n:A_1)}}}$,
\item$\lftype{\subst{\sigma}{\hsubst{\theta}{(G,n:A_1)}}}
        {\subst{\sigma}{\hsubst{\theta}{\hsubst{\{\langle x, n,\erase{A_1}\rangle\}}{A_2}}}}$, and
\item$\lfchecktype{\subst{\sigma}{\hsubst{\theta}{(G,n:A_1)}}}
             {\subst{\sigma}{\hsubst{\theta}{\hsubst{\{\langle x, n,\erase{A_1}\rangle\}}{M}}}}
             {\subst{\sigma}{\hsubst{\theta}{\hsubst{\{\langle x, n,\erase{A_1}\rangle\}}{A_2}}}}$.
\end{enumerate}
From these we are able to construct LF derivations for the judgements
$\lfctx{\subst{\sigma}{\hsubst{\theta}{G}}}$,
$\lftype{\subst{\sigma}{\hsubst{\theta}{G}}}
        {\subst{\sigma}{\hsubst{\theta}{(\typedpi{x}{A_1}{A_2})}}}$,
and
$\lfchecktype{\subst{\sigma}{\hsubst{\theta}{G}}}
             {\subst{\sigma}{\hsubst{\theta}{(\lflam{x}{M})}}}
             {\subst{\sigma}{\hsubst{\theta}{(\typedpi{x}{A_1}{A_2})}}}$.
But then
$\subst{\sigma}{\hsubst{\theta}{\fatm{G}{\lflam{x}{M}:\typedpi{x}{A_1}{A_2}}}}$
must be valid by definition.
Therefore the closed instance
$\subst{\sigma}{\hsubstseq{\emptyset}{\theta}{\mathcal{S}}}$ is valid, and
we can conclude $\mathcal{S}$ is a valid sequent.
\end{proof}

\section{An Annotation Based Scheme for Induction}
\label{sec:ind}

In this section we build into the proof system a means for reasoning
by induction on the height of LF derivations.
The idea we use is borrowed from the Abella proof system, specialized
to the context where atomic formulas encapsulate derivability in LF.
In particular, we describe an annotation scheme that allows us to
encode when an atomic formula represents an LF derivation that has a
height less than that of the LF derivation represented by 
an atomic formula that appears in a formula being proved and, hence,
when a property in which this atomic formula appears negatively can be
assumed to hold in an inductive argument.
In the first subsection we will introduce the extension of formula syntax to
include annotations and define a new notion of semantics in relation to this
syntax which is equivalent to the other semantics we have seen when no
annotations occur.
The second subsection will introduce an induction rule which uses
annotations to capture strong induction on the height of LF derivations.
To work with annotated formulas in reasoning we introduce alternative
forms for atomic proof rules as well as \id\ which are applicable
when the formulas have annotations.

\subsection{Extending Formula Syntax with Annotations}
As mentioned above, we annotate particular atomic formulas to indicate
relative heights associated with them.
The annotations that we use go in pairs: $@$ and $*$, $@@$ and $**$,
and so on.
For ease we use $@^n$ (resp. $*^n$) to denote a sequence in which the
character $@$ (resp. $*$) is repeated $n$ times.
We use $\eqannaux{i}{F}$ on an atomic formula $F$ to indicate that it
has a certain height and $\ltannaux{i}{F}$ to indicate that it has a 
strictly smaller height; we will explain what a height means shortly.
This height annotation is decreased whenever we decompose a derivation into 
sub-derivations based on its structure, as is done in the $\appL$ rule
of the previous section.

To understand the meaning of the annotations recall first that an 
atomic formula $\fatm{G}{M:A}$, is valid if then there are LF derivations for
$\lfctx{G}$, $\lftype{G}{A}$ and $\lfchecktype{G}{M}{A}$.
Each of these derivations in LF will have a particular height, and it is the
height of the typing judgement $\lfchecktype{G}{M}{A}$ which forms the basis 
for our induction.
Thus, when we talk of an atomic formula being restricted to a
particular height or heights, we mean the height of the derivation of
this typing judgement. 
In particular, the valid closed instances of the annotated atomic
formula $\eqannaux{i}{\fatm{G}{M:A}}$ are the ones for which the
corresponding instances of $\lfchecktype{G}{M}{A}$ have derivations of
height up to some particular size $m$, while the closed instances of the relatedly
annotated formula $\ltannaux{i}{\fatm{G'}{M':A'}}$ will be valid only
if the corresponding instances of $\lfchecktype{G'}{M'}{A'}$ have 
derivations of a height strictly smaller than $m$.
Having available a denumerable collection of pairs of
such annotations allows us to simultaneously relate the heights of
different pairs of atomic formulas in this manner. 
We may of course also want to consider atomic formulas without any
annotations. 
We use the notation $\fatmann{G}{M:A}{Ann}$ to denote a formula which
may be unannotated ($\fatm{G}{M:A}$) or have an annotation 
($\fatmann{G}{M:A}{@^n}$ or $\fatmann{G}{M:A}{*^n}$).
Note that only the syntax of atomic formulas is extended with these annotations.
In the remainder of this subsection we formalize the concepts of well-formedness and 
validity in the context of the formula syntax extended with annotations, and show
that the proof system satisfies the same well-formedness and soundness properties
with this extension.

The well-formedness of formulas containing annotations is determined
essentially by looking at the formula after erasing the annotations.

\begin{definition}[Well-formed Formulas with Annotations]\label{def:ann-form-wf}
A formula $F$ containing annotations is well-formed with respect to
$\Theta$ and $\Xi$ if the formula $F'$
obtained by erasing all annotations in $F$ is such that
$\wfform{\Theta}{\Xi}{F'}$ holds.
\end{definition}

Similar to the formulas, sequents containing annotations are
well-formed if, ignoring the annotations, the sequent is well-formed
as defined in Definition~\ref{def:sequent}.
\begin{definition}[Well-formed Sequents with Annotations]\label{def:ann-seq-wf}
A sequent $\mathcal{S}$ containing annotations is well-formed if the sequent
$\mathcal{S}'$ obtained from $\mathcal{S}$ by erasing all annotations is well-formed.
\end{definition}

We now define precisely the meaning of sequents which contain these annotations.
A key part of this definition is describing an association between
annotations and actual heights. 

\begin{definition}[Height Assignments]
A height assignment $\Upsilon$ will map each annotation $@^i$ to a particular height 
$m_i$, with the height restriction associated with $*^i$ inferred from the mapping
for $@^i$.
For a height assignment $\Upsilon$ the height assignment which is the same as $\Upsilon$
everywhere except that $@^i$ is mapped to $m$ is represented by
$\assn{\Upsilon}{@^i}{m}$.
\end{definition}

Annotated formulas are then interpreted relative to height
assignments.

\begin{definition}[Validity of Annotated Formulas]\label{def:ann_form_valid}
We define validity only for closed annotated formulas that are
well-formed, i.e, for formulas $F$ such that
$\wfform{\noms \cup \STLCGamma_0}{\emptyset}{F'}$ is derivable for the
$F'$ that is obtained by erasing the annotations in $F$.
For such formulas, we first define formula validity with respect to a
height assignment $\Upsilon$, written $\fvalid{\Upsilon}{F}$, as
follows: 
\begin{itemize}
\item $\fvalid{\Upsilon}{\fatmann{G}{M:A}{Ann}}$ holds if $\lfctx{G}$
  and $\lftype{G}{A}$ are derivable, and
  \begin{itemize}
  \item $\lfchecktype{G}{M}{A}$ has a derivation if $Ann$ is the empty annotation,

  \item $\lfchecktype{G}{M}{A}$ has a derivation of  height less than or equal to
  $\Upsilon(@^i)$ if $Ann$ is $@^i$, and
  
  \item $\lfchecktype{G}{M}{A}$ has a derivation of height less than
  $\Upsilon(@^i)$ if $Ann$ is $*^i$. 
  \end{itemize}
  
\item $\fvalid{\Upsilon}{\ftrue}$ holds.

\item $\fvalid{\Upsilon}{\ffalse}$ does not hold.

\item $\fvalid{\Upsilon}{\fimp{F_1}{F_2}}$ holds if 
$\fvalid{\Upsilon}{F_2}$ holds in the case that $\fvalid{\Upsilon}{F_1}$ holds.

\item $\fvalid{\Upsilon}{\fand{F_1}{F_2}}$ holds if both
$\fvalid{\Upsilon}{F_1}$ and $\fvalid{\Upsilon}{F_2}$ hold.

\item $\fvalid{\Upsilon}{\for{F_1}{F_2}}$ holds if either 
$\fvalid{\Upsilon}{F_1}$ or $\fvalid{\Upsilon}{F_2}$ holds.

\item $\fvalid{\Upsilon}{\fctx{\Gamma}{\mathcal{C}}{F}}$ holds if
$\fvalid{\Upsilon}{\subst{G/\Gamma}{F}}$ holds for every context expression
$G$ such that
$\csinst{\noms}{\emptyset}{\mathcal{C}}{G}$ is derivable.

\item $\fvalid{\Upsilon}{\fall{x:\alpha}{F}}$ holds if 
$\fvalid{\Upsilon}{\hsubst{\{\langle x,M,\alpha\rangle\}}{F}}$ holds for every 
$M$ such that \\$\stlctyjudg{\noms\cup\STLCGamma_0}{M}{\alpha}$ is derivable.

\item $\fvalid{\Upsilon}{\fexists{x:\alpha}{F}}$ holds if
$\fvalid{\Upsilon}{\hsubst{\{\langle x,M,\alpha\rangle\}}{F}}$ holds for some
$M$ such that \\$\stlctyjudg{\noms\cup\STLCGamma_0}{M}{\alpha}$ is derivable.
\end{itemize}
As in the case with Definition~\ref{def:semantics}, the coherence of
this definition is assured by Theorem~\ref{th:subst-formula}.
\end{definition}

Finally, the validity of a sequent containing annotations corresponds
to the validity of each of its closed instances relative to every
height assignment. 
In formalizing this idea, we assume the adaptation of the
notions of the compatibility of term substitutions
(Definition~\ref{def:seq-term-subst}), the appropriateness of context substitutions
(Definition~\ref{def:seq-ctxt-subst}) and the applications of these substitutions
(Definitions~\ref{def:seq-term-subst-app} and
\ref{def:seq-ctxt-subst-app}) to annotated sequents that is 
obtained by ignoring the annotations on formulas.
Further, we refer to every instance of an annotated well-formed
sequent that is determined by term and context substitutions as in
Definition~\ref{def:seq-validity} as one of its closed instances; the
wellformedness of these closed instances follows easily from the
results of Section~\ref{sec:sequents}.
\begin{definition}[Validity of Annotated Sequents]\label{def:ann_seq_valid}
A well-formed sequent of the form $\seqsans[\mathbb{N}]{\Omega}{F}$ is
valid with respect to a height 
assignment $\Upsilon$ if $\fvalid{\Upsilon}{F}$ holds whenever
$\fvalid{\Upsilon}{F'}$ holds for every $F'\in\Omega$.
A well-formed sequent $\mathcal{S}$ is valid with respect to
$\Upsilon$ if every closed 
instance of $\mathcal{S}$ is valid with respect to $\Upsilon$.
A well-formed sequent $\mathcal{S}$ is considered valid under the
extended semantics if $\mathcal{S}$ is valid with respect to every
height assignment.
\end{definition}

While height assignments associate heights with every annotation,
the assignments to only a finite subset of annotations matter in
determining the validity of a formula or sequent.
This observation is an immediate consequence of the following lemma.

\begin{lemma}\label{lem:ann_ind}
If $F$ is a formula in which the annotations $@^i$ and $*^i$ do not
occur, then $\fvalid{\Upsilon}{F}$ holds for a height assignment 
$\Upsilon$ if and only if $\fvalid{\assn{\Upsilon}{@^i}{m}}{F}$ holds
for every choice of $m$.
Similarly, if $\mathcal{S}$ is a sequent which the annotations $@^i$
and $*^i$ do not occur, then $\mathcal{S}$ is valid with respect to
$\Upsilon$ if and only if $\mathcal{S}$ is valid with respect to
$\assn{\Upsilon}{@^i}{m}$ for every choice of $m$.
\end{lemma}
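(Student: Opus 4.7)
The plan is to prove both clauses by an appeal to the observation that the height assignment $\Upsilon$ is only consulted when interpreting an annotated atomic formula, and only through the value $\Upsilon(@^j)$ corresponding to the particular annotation $@^j$ or $*^j$ appearing in that formula. If neither $@^i$ nor $*^i$ occurs anywhere in $F$, then no atomic subformula of $F$ refers to $\Upsilon(@^i)$, so altering that single entry cannot affect the validity assessment.

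I would prove the first clause by induction on the structure of $F$. In the atomic case $F = \fatmann{G}{M:A}{\mathit{Ann}}$, either $\mathit{Ann}$ is empty, in which case validity is independent of $\Upsilon$, or $\mathit{Ann}$ is $@^j$ or $*^j$ with $j \neq i$, so $\Upsilon(@^j) = \assn{\Upsilon}{@^i}{m}(@^j)$ for every $m$, and the two sides of the biconditional agree by Definition~\ref{def:ann_form_valid}. The propositional connective cases follow from the induction hypothesis applied to immediate subformulas, since the restriction on $@^i$ and $*^i$ inherits to those subformulas. For the quantifier and context-quantifier cases, the essential ancillary observation is that the substitutions involved in the semantic clauses --- $\hsubst{\{\langle x,M,\alpha\rangle\}}{F'}$ for $\fall{x:\alpha}{F'}$ and $\fexists{x:\alpha}{F'}$, and $\subst{G/\Gamma}{F'}$ for $\fctx{\Gamma}{\mathcal{C}}{F'}$ --- distribute over connectives and atomic formulas without altering annotations, so the resulting formula still contains no occurrences of $@^i$ or $*^i$ and the induction hypothesis applies.

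The second clause follows from the first. For a sequent $\mathcal{S} = \seq[\mathbb{N}]{\Psi}{\Xi}{\Omega}{F}$ in which $@^i$ and $*^i$ do not appear, every formula of $\mathcal{S}$ is free of these annotations. A closed instance $\subst{\sigma}{\hsubstseq{\emptyset}{\theta}{\mathcal{S}}}$ is obtained by applying term and context variable substitutions which, as noted above, do not introduce annotations. Thus the assumption formulas and the goal formula of every closed instance are also free of $@^i$ and $*^i$. By the first clause, each such formula has the same validity under $\Upsilon$ as under $\assn{\Upsilon}{@^i}{m}$ for every $m$, and so the validity of the closed instance relative to $\Upsilon$ coincides with its validity relative to $\assn{\Upsilon}{@^i}{m}$. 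Ranging over all closed instances yields the desired equivalence.

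There is no real obstacle here beyond bookkeeping: the only subtle point is verifying that the term-level and context-level substitutions that arise in the quantifier clauses of Definition~\ref{def:ann_form_valid} and in the formation of closed instances of sequents commute with the ``annotation-free'' property, which is immediate from how substitution is defined to distribute through formulas without touching annotations on atomic subformulas.
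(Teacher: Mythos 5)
Your proposal is correct and follows essentially the same route as the paper's proof: a structural induction on $F$ whose base case observes that no atomic subformula carries $@^i$ or $*^i$, so the value assigned to that annotation cannot influence validity, with the sequent clause derived from the formula clause via Definition~\ref{def:ann_seq_valid}. The paper states this argument only in outline; your elaboration of the quantifier cases (that the substitutions in the semantic clauses preserve annotation-freeness) fills in exactly the bookkeeping the paper leaves implicit.
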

\begin{proof}
The first clause is shown by a straightforward induction on the
formation of $F$ where in the base case we know that no atomic formula
is annotated by $@^i$ or $*^i$ and therefore the $m_i$ assigned to
that annotation cannot play a role in determining its validity.
The second clause follows from the first using the definition of validity
for sequents of Definition~\ref{def:ann_seq_valid}.
\end{proof}

Our ultimate interest is in determining that formulas devoid of
annotations are valid in the sense articulated in
Definition~\ref{def:semantics}.
This also means that we are eventually interested in the validity of
sequents devoid of annotations in the sense described in
Definition~\ref{def:seq-validity}.
However, in building in capabilities for inductive reasoning, we will
consider rules that will introduce annotated formulas into sequents.
Establishing the soundness of the resulting proof system requires us
to refine the definition of validity for sequents to the one presented
in Definition~\ref{def:ann_seq_valid}.
That this is acceptable from the perspective of our eventual goal is
the content of the following theorem.

\begin{theorem}\label{th:valid__valid}
A well-formed sequent that is devoid of annotations is valid 
in the sense of Definition~\ref{def:ann_seq_valid} if and only if it 
is valid in the sense of Definition~\ref{def:seq-validity}.
\end{theorem}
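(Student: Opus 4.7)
The plan is to reduce the equivalence to a corresponding equivalence at the level of formulas, and then lift it through the definitions of sequent validity in the two settings. Both definitions descend through quantifiers and context quantifiers by substitution in the same way and both reduce the closed case to a semantic judgement on formulas, so the work is concentrated on showing that for an unannotated closed formula the two notions of formula validity coincide.

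First I would establish: for any closed well-formed formula $F$ containing no annotations, $F$ is valid in the sense of Definition~\ref{def:semantics} if and only if $\fvalid{\Upsilon}{F}$ holds for every height assignment $\Upsilon$. The proof is by induction on the structure of $F$. In the atomic case $F = \fatm{G}{M:A}$, the unannotated clause of Definition~\ref{def:ann_form_valid} requires exactly derivations of $\lfctx{G}$, $\lftype{G}{A}$, and $\lfchecktype{G}{M}{A}$ with no height constraint, which is precisely the content of the atomic clause of Definition~\ref{def:semantics}; in particular the witness does not depend on $\Upsilon$. The cases $\ftrue$ and $\ffalse$ are immediate. The remaining cases for $\supset$, $\wedge$, $\vee$, $\Pi\Gamma{:}\mathcal{C}$, $\forall$, and $\exists$ are settled by the induction hypothesis, since the recursive clauses of Definitions~\ref{def:semantics} and \ref{def:ann_form_valid} agree verbatim once the height assignment is ignored, and since substitutions into unannotated formulas produce unannotated formulas so that the induction hypothesis applies to each subformula. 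As a sanity check, repeated application of Lemma~\ref{lem:ann_ind} confirms that for an unannotated $F$ the truth of $\fvalid{\Upsilon}{F}$ is independent of $\Upsilon$, so ``for every $\Upsilon$'' and ``for some $\Upsilon$'' coincide.

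Next I would lift this to sequents. Let $\mathcal{S}$ be a well-formed sequent devoid of annotations. The closed instances $\subst{\sigma}{\hsubstseq{\emptyset}{\theta}{\mathcal{S}}}$ used by both Definitions~\ref{def:seq-validity} and \ref{def:ann_seq_valid} are identical in the two settings (substitution is defined purely on the annotation-free skeleton, and produces no new annotations), and their formulas are themselves closed and unannotated. The validity of such a closed sequent $\seqsans[\mathbb{N}]{\Omega}{F}$ in the sense of Definition~\ref{def:seq-validity} is: if every formula in $\Omega$ is valid, so is $F$; in the sense of Definition~\ref{def:ann_seq_valid} it is: for every $\Upsilon$, if $\fvalid{\Upsilon}{F'}$ holds for each $F' \in \Omega$ then $\fvalid{\Upsilon}{F}$ holds. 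By the formula-level equivalence just proved, these two conditions agree. The quantification over substitutions $\theta$ and $\sigma$ in the general case is identical in both definitions, so the equivalence propagates.

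The argument is essentially bookkeeping; the only step that requires any care is observing that both the annotation-erasure and the substitution machinery behave transparently on unannotated sequents, so that closed instances of an unannotated sequent remain unannotated and the formula-level lemma can be applied uniformly. No genuine obstacle arises, because the two semantic definitions were designed to agree in the absence of annotations and Lemma~\ref{lem:ann_ind} already isolates the precise respect in which height assignments are irrelevant here.
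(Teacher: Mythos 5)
Your proposal is correct and follows essentially the same route as the paper's proof: both first establish by structural induction that a closed, unannotated formula is valid in the sense of Definition~\ref{def:semantics} exactly when $\fvalid{\Upsilon}{F}$ holds for any height assignment, and then lift this through the (identical) closed-instance machinery to sequents. Your version merely spells out the bookkeeping (substitutions preserving annotation-freeness, the role of Lemma~\ref{lem:ann_ind}) that the paper leaves implicit.
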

\begin{proof}
We claim that if $F$ is a closed, well-formed formula that contains no
annotations, then $F$ is valid in the sense of
Definition~\ref{def:semantics} if and only if $\fvalid{\Upsilon}{F}$
holds for any height assignment $\Upsilon$.
This claim is proved easily by induction on the formation of $F$: the
base case is obvious given the definition of
$\fvalid{\Upsilon}{\fatm{G}{M:A}}$ 
and the other cases have a simple inductive structure.
It can now be easily argued that a closed instance of a sequent devoid
of annotations is
valid in the sense of Definition~\ref{def:seq-validity} if and only if
it is valid by virtue of Definition~\ref{def:ann_form_valid} with
respect to any height assignment. 
The theorem is an obvious consequence of this. 
\end{proof}

Our attention henceforth will be on sequents that potentially contain
annotated formulas.
For this reason, absent qualifications, by
``wellformedness'' for formulas and sequents we shall mean by virtue
of Definitions~\ref{def:ann-form-wf} and \ref{def:ann-seq-wf}.
Similarly, ``validity'' shall be interpreted by virtue of 
Definitions~\ref{def:ann_form_valid} and \ref{def:ann_seq_valid}.
Now, the proof rules that we have discussed up to this point that apply
to non-atomic formulas and that are different from the \id\ rule lift
in an obvious way to the situation where formulas carry annotations.
The \id\ rule still applies as before when the formula that is the
focus of the rule is unannotated.
Similarly, no change is needed to the rules for atomic formulas when
these formulas are unannotated.
We show below that when these rules are interpreted in this manner,
they continue to preserve wellformedness of sequents and to be sound,
albeit with respect to the extended semantics.
The extension of the \id\ rule and the rules for atomic formulas to
the situation when the focus formula is annotated needs some care.
We take up this matter after the presentation of the induction rule.

\begin{theorem}\label{th:nonatom_rules_valid}
The following properties hold for the lifted forms of the proof rules
in Figure~\ref{fig:rules-structural}, Figure~\ref{fig:rules-base},
the \cut\ proof rule from Figure~\ref{fig:rules-other}, and the
\id\ rule and the rules from Figure~\ref{fig:rules-atom} when the formula
they pertain to are unannotated:
\begin{enumerate}
\item If the conclusion sequent is well-formed, the premises expressing typing
  conditions have derivations and the conditions expressed by the
  other, non-sequent premises are satisfied, then the premise sequents
  must be well-formed.

\item If the premises expressing typing judgements are derivable, the
  conditions described in the other non-sequent premises are satisfied
  and all the premise sequents are valid, then the conclusion sequent must 
  also be valid.
\end{enumerate}
\end{theorem}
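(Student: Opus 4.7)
The plan is to establish both clauses of Theorem~\ref{th:nonatom_rules_valid} by leveraging the results we have already proven for the unannotated versions of these rules, specifically Theorems~\ref{th:structural-wf}, \ref{th:structural-sound}, \ref{th:other-wf}, \ref{th:other-sound}, \ref{th:core-wf}, \ref{th:core-sound}, \ref{th:atom-wf}, and \ref{th:atom-sound}. The key observation is that the lifted rules operate on annotated sequents in exactly the same structural way as the original rules operated on unannotated sequents, so very little new work is required.

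For clause (1) concerning well-formedness, the argument is essentially immediate. By Definition~\ref{def:ann-seq-wf}, a sequent with annotations is well-formed precisely when the sequent obtained by erasing all annotations is well-formed in the original sense. Since the lifted rules do not alter annotations on formulas that are copied from conclusion to premises (and create no new annotations), erasure commutes with rule application in the obvious way. Thus if the conclusion sequent has a well-formed erasure, each premise sequent also has a well-formed erasure, and the prior well-formedness theorems give us what we need. The side conditions involving typing judgements and substitution compatibility are unchanged by the lifting, as they already ignore annotations.

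For clause (2) concerning soundness, I would parameterize each of the earlier soundness arguments by an arbitrary height assignment $\Upsilon$. By Definition~\ref{def:ann_seq_valid}, a sequent is valid in the extended sense if and only if it is valid with respect to every height assignment $\Upsilon$, so it suffices to fix an arbitrary $\Upsilon$ and show that the conclusion sequent is valid with respect to $\Upsilon$ under the assumption that each premise sequent is. For each rule, I would consider an arbitrary closed instance of the conclusion sequent identified by $\theta$ and $\sigma$, and argue---mirroring the original soundness proof---that the identified closed instances of the premise sequents have the same relationship between assumption validity and goal validity with respect to $\Upsilon$ as they did previously with respect to the unannotated notion. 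The crucial point is that the clauses of Definition~\ref{def:ann_form_valid} for the non-atomic connectives are structurally identical to those of Definition~\ref{def:semantics}, merely threading $\Upsilon$ through; and for atomic formulas handled by the unannotated versions of \appL, \appR, \absL, \absR, and the \id\ rule, the meaning given by $\fvalid{\Upsilon}{\fatm{G}{M:A}}$ coincides with the unannotated semantics of Definition~\ref{def:semantics} irrespective of $\Upsilon$. Lemmas~\ref{lem:heads-cover}, \ref{lem:inst-solun}, \ref{lem:decomp_decr}, and \ref{lem:covers-seq} all concern LF derivability directly and so carry over verbatim, as do the permutation invariance results (Theorems~\ref{th:perm-valid}, \ref{th:perm-hsubst}, \ref{th:perm-subst}) once we observe that validity of a closed sequent with respect to $\Upsilon$ is also invariant under permutations of nominal constants---this follows from the original invariance result since $\Upsilon$ plays no role in permuting.

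The main potential obstacle is ensuring that no rule inadvertently requires reasoning about annotated assumption formulas. This is ruled out by the hypothesis of the theorem: the \id\ rule and the atomic rules are only considered in their lifted form when their focus formula is unannotated, and the other rules (structural, logical, cut) act on non-atomic formulas or on the assumption/goal structure without inspecting atomic formula annotations, so any annotations present on collateral formulas are carried through uniformly from conclusion to premises, where their semantic contribution to $\fvalid{\Upsilon}{\cdot}$ is identical on both sides. Thus the original arguments in Theorems~\ref{th:structural-sound}, \ref{th:other-sound}, \ref{th:core-sound}, and \ref{th:atom-sound} can be replayed verbatim after replacing every invocation of ``valid'' with ``valid with respect to $\Upsilon$,'' completing the proof.
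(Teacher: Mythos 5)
Your proposal is correct and follows essentially the same route as the paper's own proof: clause (1) by the erasure-based definition of well-formedness for annotated sequents, and clause (2) by fixing an arbitrary height assignment $\Upsilon$ and replaying the earlier soundness arguments, observing that $\Upsilon$ plays no role when the focus formulas are unannotated. Your version simply spells out more of the details (the carrying over of the supporting lemmas and permutation-invariance results) that the paper leaves implicit.
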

\begin{proof}
The definition of wellformedness for annotated sequents is based on
the original notion via the erasure of annotations.
Hence, the first claim follows immediately from the earlier results
for unannotated sequents.
The second claim can be proved relative to an arbitrary height
assignment; this then generalizes to all possible height assignments.
For the proof rules not relating to atomic formulas, the heights which may
be assigned to annotations can play no role in the soundness argument as there are no
atomic formulas being interpreted within the proof.
For the atomic proof rules the claim is restricted to the case where the 
formulas being analysed in the rules are not annotated, and thus again
the height assignment has no impact on the previously presented argument for
soundness. 
\end{proof}
\subsection{The Induction Proof Rule}

\begin{figure}
\[
\infer[\ind]
      {\seq[\mathbb{N}]{\Psi}
           {\Xi}
           {\Omega}
           {\mathcal{Q}_1.(
             \fimp{F_1}{\fimp{\ldots}
                       {\mathcal{Q}_{k-1}.(
                         \fimp{F_{k-1}}
                              {\mathcal{Q}_k.
                                  (\fimp{\fatm{G}{M:A}}
                                        {\fimp{\ldots}{F_n}})})}})}}
      {\deduce{\mathcal{Q}_1.( 
                  \fimp{F_1}{\fimp{\ldots}
                                  {\mathcal{Q}_{k-1}}.(
                                    \fimp{F_{k-1}}
                                         {\mathcal{Q}_k.
                                            (\fimp{\fatmann{G}{M:A}{@^i}}
                                                  {\fimp{\ldots}{F_n}})})})}
              {\seq[\mathbb{N}]{\Psi}
                   {\Xi}
                   {\setand{\Omega}
                           {\mathcal{Q}_1.( 
                              \fimp{F_1}{\fimp{\ldots}
                                              {\mathcal{Q}_{k-1}}.(
                                               \fimp{F_{k-1}}
                                                    {\mathcal{Q}_k.
                                                      (\fimp{\fatmann{G}{M:A}{*^i}}
                                                            {\fimp{\ldots}{F_n}})})})}}
                   {}}}
\]
\caption{The Induction Proof Rule}
\label{fig:rule-ind}
\end{figure}

The induction rule is presented in Figure~\ref{fig:rule-ind}.
In this rule $\mathcal{Q}_i$ represents a sequence of
context quantifiers or  universal term quantifiers.
There is also a proviso on the rule: the annotations $@^i$ and $*^i$
must be fresh, i.e., they must not already appear in the sequent that is the
conclusion of the rule.
Induction in this form is based on the induction principle for natural
numbers applied to the heights of atomic formulas, which themselves
encode the heights of LF derivations.
We can view the premise of this rule as providing a proof schema for constructing
an argument of validity for any $m$, and so by an inductive argument
we conclude that it must be valid for all heights.

Given the definition of well-formedness for sequents containing annotations,
the well-formedness of the conclusion sequent will clearly be sufficient to ensure
that the premise sequent is well-formed.

\begin{theorem}\label{th:ind-wf}
If the conclusion sequent of an instance of the \ind\ rule is
well-formed, then the premise sequent must be well-formed.
\end{theorem}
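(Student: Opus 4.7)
The plan is to reduce the claim to the unannotated case by invoking Definition~\ref{def:ann-seq-wf}, which tells us that a sequent containing annotations is well-formed precisely when the sequent obtained by erasing all annotations is well-formed. So I only need to examine the sequent obtained from the premise of the $\ind$ rule by stripping the annotations $@^i$ and $*^i$.

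First I would write down explicitly what this erased premise looks like. The conclusion sequent contains no annotations, so its formulas are unaffected by erasure. When annotations are erased from the premise sequent, the annotated formula $\fatmann{G}{M:A}{@^i}$ in the goal collapses to $\fatm{G}{M:A}$, recovering the goal of the conclusion sequent verbatim. Likewise, the additional assumption formula of the premise, which carries the annotation $*^i$ on its inner atomic formula, collapses to exactly the goal formula of the conclusion. Thus the erased premise sequent is identical to the conclusion sequent except that its assumption set is $\Omega$ augmented by a copy of the conclusion's goal formula.

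Next I would appeal to the hypothesis that the conclusion sequent is well-formed. This gives me well-formedness of $\mathbb{N}$, $\Psi$, $\Xi$, each context variable type in $\Xi$, every formula in $\Omega$, and in particular the goal formula $\mathcal{Q}_1.(\fimp{F_1}{\ldots\mathcal{Q}_k.(\fimp{\fatm{G}{M:A}}{\ldots F_n})})$. Since the only change in going from the conclusion sequent to the erased premise sequent is the insertion of this already-well-formed formula into the assumption set, the resulting sequent remains well-formed by the wellformedness criterion for sequents.

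This then gives the wellformedness of the premise sequent by Definition~\ref{def:ann-seq-wf}. There is no real obstacle here: the annotation scheme has been designed specifically so that wellformedness is insensitive to annotations, and the $\ind$ rule does not alter the structure of the formula apart from annotating its atomic subformula, so the bookkeeping is entirely syntactic.
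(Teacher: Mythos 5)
Your proof is correct and follows essentially the same route as the paper's: both reduce wellformedness of the annotated premise to the unannotated case via erasure, observe that the two annotated formulas erase to the conclusion's goal formula, and conclude from the wellformedness of the conclusion sequent. (One small caveat: the conclusion sequent need not be entirely annotation-free --- only $@^i$ and $*^i$ must be fresh --- but since wellformedness is defined via erasure anyway, this does not affect your argument.)
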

\begin{proof}
Clearly the well-formedness of each declaration in $\Xi$ and each formula in 
$\Omega$ will hold given that these are the same in the premise sequent 
as in the conclusion sequent.
Since both the formula
$\mathcal{Q}_1.(
             \fimp{F_1}{\fimp{\ldots}
                       {\mathcal{Q}_{k-1}.(
                           \fimp{F_{k-1}}
                                {\mathcal{Q}_k.
                                    (\fimp{\fatmann{G}{M:A}{@^i}} 
                                          {\fimp{\ldots}{F_n}})})}})$
and the formula
$\mathcal{Q}_1.(
             \fimp{F_1}{\fimp{\ldots}
                       {\mathcal{Q}_{k-1}.(
                           \fimp{F_{k-1}}
                                {\mathcal{Q}_k.
                                    (\fimp{\fatmann{G}{M:A}{*^i}}
                                          {\fimp{\ldots}{F_n}})})}})$
are the same as the formula
$\mathcal{Q}_1.(
             \fimp{F_1}{\fimp{\ldots}
                       {\mathcal{Q}_{k-1}.(
                           \fimp{F_{k-1}}
                                {\mathcal{Q}_k.
                                    (\fimp{\fatm{G}{M:A}}
                                          {\fimp{\ldots}{F_n}})})}})$
under the erasure of annotations, they are also obviously well-formed
by the well-formedness of the conclusion sequent.
\end{proof}

A derivation of the premise sequent of the induction rule 
provides a schema for constructing a concrete, valid derivation for any
closed instance of the conclusion sequent based on the height $m$ of the
assumption derivation in LF.
Since this proof will be general with respect to the choice of $m$ it ensures
that the property holds for all natural numbers, and so the sequent without an annotated 
atomic formula will be valid.
The soundness of this rule is shown by formalizing these ideas as a
meta-level argument using induction on natural numbers.
The following two lemmas are useful towards this end.

\begin{lemma}\label{lem:all_eq__lt}
Let $@^i$ and $*^i$ be annotations which do not appear in any of the formulas
$F_1,\ldots, F_{k-1},F_{k+1},\ldots F_n$, $\Upsilon$ a height assignment,
and let $m$ be a natural number.
If 
\[
\fvalid{\assn{\Upsilon}{@}{l}}{
\mathcal{Q}_1.(
             \fimp{F_1}{\fimp{\ldots}
                       {\mathcal{Q}_{k-1}.(
                         \fimp{F_{k-1}}
                              {\mathcal{Q}_k.
                                  (\fimp{\eqannaux{i}{\fatm{G}{M:A}}}
                                        {\fimp{\ldots}{F_n}})})}})}
\]
holds for every $l<m$, then 
\[
\fvalid{\assn{\Upsilon}{@^i}{m}}{
\mathcal{Q}_1.(
             \fimp{F_1}{\fimp{\ldots}
                       {\mathcal{Q}_{k-1}.(
                         \fimp{F_{k-1}}
                              {\mathcal{Q}_k.
                                  (\fimp{\ltannaux{i}{\fatm{G}{M:A}}}
                                        {\fimp{\ldots}{F_n}})})}})}.
\]
also holds.
\end{lemma}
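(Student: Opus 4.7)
The plan is to unwind the semantics until the annotated atomic formula sits at the top of an implication, then exploit the strict inequality buried in the $\ltannaux{i}{\cdot}$ annotation to pick a smaller height at which the hypothesis of the lemma applies, and finally adjust heights back using Lemma~\ref{lem:ann_ind}. The crucial observation that makes this go through is that neither the substitutions performed when stripping quantifiers nor the antecedent assumptions $F_1,\ldots,F_{k-1}$ touch the annotations $@^i$ or $*^i$, so the value assigned to $@^i$ in the height assignment is an independent parameter we are free to adjust.

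Concretely, I would proceed by induction on the structure of the prefix leading up to the annotated $\fatm{G}{M:A}$, using the clauses of Definition~\ref{def:ann_form_valid}. When the leading symbol is a universal term quantifier $\fall{x:\alpha}{\cdot}$ or a context quantifier $\fctx{\Gamma}{\mathcal{C}}{\cdot}$, I fix an arbitrary admissible instantiation and apply the induction hypothesis to the substituted interior; since hereditary and context substitutions preserve annotations and the hypothesis of the lemma holds uniformly for every $l < m$, it transports through the instantiation unchanged. When the leading symbol is an implication $\fimp{F_j}{\cdot}$ with $j < k$, I assume validity of $F_j$ under the current height assignment $\assn{\Upsilon}{@^i}{m}$ and reason about the consequent. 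The proviso that $@^i$ and $*^i$ do not occur in $F_j$ lets me invoke Lemma~\ref{lem:ann_ind} to conclude that this assumption is equally valid under $\assn{\Upsilon}{@^i}{l}$ for any $l$, so it can be carried into the hypothesis of the lemma at whichever height I choose later.

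The base case reduces to a single implication $\fimp{\ltannaux{i}{\fatm{G'}{M':A'}}}{F'}$, where the primed components record the substitutions accumulated while peeling off the outer quantifiers. Assuming $\fvalid{\assn{\Upsilon}{@^i}{m}}{\ltannaux{i}{\fatm{G'}{M':A'}}}$, Definition~\ref{def:ann_form_valid} yields an LF derivation of $\lfchecktype{G'}{M'}{A'}$ of some height $l' < m$, and this very derivation also witnesses $\fvalid{\assn{\Upsilon}{@^i}{l'}}{\eqannaux{i}{\fatm{G'}{M':A'}}}$, since the $@^i$-annotated formula only demands height at most $l'$. Feeding this together with the transported antecedents $F_1',\ldots,F_{k-1}'$ into the hypothesis of the lemma at $l = l'$ produces $\fvalid{\assn{\Upsilon}{@^i}{l'}}{F'}$, and a final appeal to Lemma~\ref{lem:ann_ind}, justified because $@^i$ and $*^i$ do not occur in $F_{k+1},\ldots,F_n$ and hence not in $F'$, lifts this back to $\fvalid{\assn{\Upsilon}{@^i}{m}}{F'}$.

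The main obstacle, such as it is, is purely bookkeeping: keeping track of which subformulas have been instantiated by which substitutions while ensuring that annotations and the height assignment $\Upsilon$ remain unaffected. The genuine content of the argument is concentrated in the base case, where the strict inequality $l' < m$ delivered by the semantics of $\ltannaux{i}{\cdot}$ is what lets us dip into the $l < m$ hypothesis of the lemma; everything else is a mechanical unpacking of Definition~\ref{def:ann_form_valid} combined with one invocation of Lemma~\ref{lem:ann_ind}.
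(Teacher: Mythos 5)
Your proposal is correct and follows essentially the same route as the paper's own proof: an induction on the quantifier/implication prefix preceding the annotated atomic formula, with the inductive cases handled by transporting instantiations and antecedents (using Lemma~\ref{lem:ann_ind} for the $@^i$-free antecedents), and the base case exploiting the strict height bound $l'<m$ from the $*^i$ annotation to invoke the hypothesis at $l=l'$ before lifting back to $m$ via Lemma~\ref{lem:ann_ind}. The only difference is presentational bookkeeping in where the lemma's hypothesis is applied, not mathematical substance.
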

\begin{proof}
Let
$F=
\mathcal{Q}_1.(
             \fimp{F_1}{\fimp{\ldots}
                       {\mathcal{Q}_{k-1}.(
                         \fimp{F_{k-1}}
                              {\mathcal{Q}_k.
                                  (\fimp{\eqannaux{i}{\fatm{G}{M:A}}}
                                        {\fimp{\ldots}{F_n}})})}})
$.
We will prove this result by induction on the formation of $F$ measuring
size by the number of quantifiers and implications prior to $\eqannaux{i}{\fatm{G}{M:A}}$.

\case{$F=\fimp{\eqannaux{i}{\fatm{G}{M:A}}}{\fimp{\ldots}{F_n}}$}
If $\fvalid{\assn{\Upsilon}{@^i}{m}}{\ltannaux{i}{\fatm{G}{M:A}}}$
did not hold, then by the definition it is clear that
$\fvalid{\assn{\Upsilon}{@^i}{m}}{\fimp{\ltannaux{i}{\fatm{G}{M:A}}}{\fimp{\ldots}{F_n}}}$
would hold, so suppose that it does hold.
Then there is a derivation for $\lfchecktype{G}{M}{A}$ which has some height $j<m$
and thus
$\fvalid{\assn{\Upsilon}{@^i}{j}}{\fimp{\eqannaux{i}{\fatm{G}{M:A}}}{\fimp{\ldots}{F_n}}}$
holds by definition.
Since $j<m$, we can extract from the assumptions that
$\fvalid{\assn{\Upsilon}{@^i}{j}}{\fimp{F_{k+1}}{\fimp{\ldots}{F_n}}}$ holds.
But $@^i$ and $*^i$ cannot appear in $\fimp{F_{k+1}}{\fimp{\ldots}{F_n}}$,
so by Lemma~\ref{lem:ann_ind} this judgement also holds for the height assignment
$\assn{\Upsilon}{@^i}{m}$.
Therefore 
$\fvalid{\assn{\Upsilon}{@^i}{m}}{\fimp{\eqannaux{i}{\fatm{G}{M:A}}}{\fimp{\ldots}{F_n}}}$
holds, as needed.

\case{$F=
\fall{x:\alpha}
     {\mathcal{Q}'_1.(
             \fimp{F_1}{\fimp{\ldots}
                       {\mathcal{Q}_{k-1}.(
                         \fimp{F_{k-1}}
                              {\mathcal{Q}_k.
                                  (\fimp{\eqannaux{i}{\fatm{G}{M:A}}}
                                        {\fimp{\ldots}{F_n}})})}})}$
}
Let $\eqannaux{i}{F'}$ denote the body of the universal in $F$ and 
$\ltannaux{i}{F'}$ the same formula with $@^i$ replaced by $*^i$.
Consider an arbitrary $t$ such that $\stlctyjudg{\noms\cup\STLCGamma_0}{t}{\alpha}$.
By definition, for any such $t$ it must be that 
$\fvalid{\assn{\Upsilon}{@^i}{l}}{\hsubst{\{\langle x,t,\alpha\rangle\}}{\eqannaux{i}{F'}}}$
holds for any $l<m$.
Thus by induction
$\fvalid{\assn{\Upsilon}{@^i}{m}}{\hsubst{\{\langle x,t,\alpha\rangle\}}{\ltannaux{i}{F'}}}$
will hold.
Therefore 
$\fvalid{\assn{\Upsilon}{@^i}{m}}{\fall{x:\alpha}{\ltannaux{i}{F'}}}$
holds by definition.

\case{$F=
\fctx{\Gamma}{\mathcal{C}}
     {\mathcal{Q}'_1.(
             \fimp{F_1}{\fimp{\ldots}
                       {\mathcal{Q}_{k-1}.(
                         \fimp{F_{k-1}}
                              {\mathcal{Q}_k.
                                  (\fimp{\eqannaux{i}{\fatm{G}{M:A}}}
                                        {\fimp{\ldots}{F_n}})})}})}$
}
Let $\eqannaux{i}{F'}$ denote the body of the context quantification in $F$ and 
$\ltannaux{i}{F'}$ the same formula with $@^i$ replaced by $*^i$.
Consider an arbitrary context expression $G$ such that 
$\csinst{\noms}{\emptyset}{\mathcal{C}}{G}$ is derivable.
By definition, for every $l<m$,
$\fvalid{\assn{\Upsilon}{@^i}{l}}{\subst{G/\Gamma}{\eqannaux{i}{F'}}}$
will hold.
Thus by induction,
$\fvalid{\assn{\Upsilon}{@^i}{m}}{\subst{G/\Gamma}{\ltannaux{i}{F'}}}$
will hold.
Therefore by Definition~\ref{def:ann_form_valid},
$\fvalid{\assn{\Upsilon}{@^i}{m}}{\fctx{\Gamma}{\mathcal{C}}{\ltannaux{i}{F'}}}$
must hold, as needed.

\case{$F=
\fimp{F_0}
     {\mathcal{Q}'_1.(
             \fimp{F_1}{\fimp{\ldots}
                       {\mathcal{Q}_{k-1}.(
                         \fimp{F_{k-1}}
                              {\mathcal{Q}_k.
                                  (\fimp{\eqannaux{i}{\fatm{G}{M:A}}}
                                        {\fimp{\ldots}{F_n}})})}})}$
}
Let $\eqannaux{i}{F'}$ denote the conclusion of the top-level implication in $F$ and 
$\ltannaux{i}{F'}$ the same formula with $@^i$ replaced by $*^i$.
Suppose that 
$\fvalid{\assn{\Upsilon}{@^i}{m}}{F_0}$ holds.
Since $@^i$ and $*^i$ cannot appear in $F_0$, Lemma~\ref{lem:ann_ind} permits
us to conclude that 
$\fvalid{\assn{\Upsilon}{@^i}{l}}{F_0}$ holds for any $l$, in particular it
must hold for any $l<m$.
From this we can conclude using the assumptions that for any $l<m$, 
$\fvalid{\assn{\Upsilon}{@^i}{l}}{\eqannaux{i}{F'}}$ holds, and thus by induction
$\fvalid{\assn{\Upsilon}{@^i}{m}}{\ltannaux{i}{F'}}$ will hold.
Therefore 
$\fvalid{\assn{\Upsilon}{@^i}{m}}{\fimp{F_0}{\ltannaux{i}{F'}}}$ 
will hold, as needed.
\end{proof}

\begin{lemma}\label{lem:ann__unann}
Let $F$ be a formula of the form
\[
\mathcal{Q}_1.(
             \fimp{F_1}{\fimp{\ldots}
                       {\mathcal{Q}_{k-1}.(
                         \fimp{F_{k-1}}
                              {\mathcal{Q}_k.
                                  (\fimp{\fatm{G}{M:A}}
                                        {\fimp{\ldots}{F_n}})})}})
\]
in which the annotations $@^i$ or $*^i$ do not occur.
Then $\fvalid{\Upsilon}{F}$ holds if 
\[
\fvalid{\assn{\Upsilon}{@^i}{m}}{
\mathcal{Q}_1.(
             \fimp{F_1}{\fimp{\ldots}
                       {\mathcal{Q}_{k-1}.(
                         \fimp{F_{k-1}}
                              {\mathcal{Q}_k.
                                  (\fimp{\eqannaux{i}{\fatm{G}{M:A}}}
                                        {\fimp{\ldots}{F_n}})})}})}.
\]
holds for every natural number $m$.
\end{lemma}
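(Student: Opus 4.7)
The plan is to proceed by induction on the prefix of $F$ that precedes the distinguished atomic formula $\fatm{G}{M:A}$, measured by the total number of quantifiers and implications guarding it. In the base case the prefix is empty, so $F = \fimp{\fatm{G}{M:A}}{\fimp{F_{k+1}}{\fimp{\ldots}{F_n}}}$. Assuming $\fvalid{\Upsilon}{\fatm{G}{M:A}}$, the validity clause for atomic formulas supplies derivations for $\lfctx{G}$, $\lftype{G}{A}$, and $\lfchecktype{G}{M}{A}$; let $j$ be the height of the last derivation. Instantiating the hypothesis at $m = j$ gives validity of the annotated formula under the assignment $\assn{\Upsilon}{@^i}{j}$, and since $\lfchecktype{G}{M}{A}$ has a derivation of height $j \leq j$, the premise $\eqannaux{i}{\fatm{G}{M:A}}$ holds there too. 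Detaching the implication yields $\fvalid{\assn{\Upsilon}{@^i}{j}}{\fimp{F_{k+1}}{\fimp{\ldots}{F_n}}}$, and because neither $@^i$ nor $*^i$ occurs in that tail, Lemma~\ref{lem:ann_ind} lifts the judgement back to $\Upsilon$, as required.

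For the inductive step I would case-split on the outermost constructor of the prefix. When it is a universal quantifier $\fall{x:\alpha}{F'}$, I fix an arbitrary closed $t$ with $\stlctyjudg{\noms\cup\STLCGamma_0}{t}{\alpha}$; instantiating each hypothesis at this $t$ produces validity of $\hsubst{\{\langle x,t,\alpha\rangle\}}{F'_{@^i}}$ under $\assn{\Upsilon}{@^i}{m}$ for every $m$, and the induction hypothesis, applied to the now shorter prefix, yields $\fvalid{\Upsilon}{\hsubst{\{\langle x,t,\alpha\rangle\}}{F'}}$. The context-quantifier case $\fctx{\Gamma}{\mathcal{C}}{F'}$ is completely analogous, with an instance $G$ of $\mathcal{C}$ playing the role of $t$. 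When the outermost connective is an implication $\fimp{F_j}{F'}$ guarding the atomic formula, I assume $\fvalid{\Upsilon}{F_j}$; because $F_j$ contains neither $@^i$ nor $*^i$, Lemma~\ref{lem:ann_ind} promotes this to $\fvalid{\assn{\Upsilon}{@^i}{m}}{F_j}$ for every $m$, so each hypothesis discharges to give $\fvalid{\assn{\Upsilon}{@^i}{m}}{F'_{@^i}}$ uniformly in $m$, and the induction hypothesis delivers $\fvalid{\Upsilon}{F'}$.

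The only subtlety, and the one place the argument is not mere unfolding, is the base case: the hypothesis is quantified universally over $m$, but validity of the unannotated atomic formula provides a specific LF derivation of some specific height $j$, and the proof must exploit precisely this $j$ to make the annotated premise true. Everything after that rests on Lemma~\ref{lem:ann_ind}, which is the workhorse that lets us move freely between height-indexed and height-agnostic validity for subformulas in which the fresh annotations do not occur; the induction on prefix length serves only to propagate this equivalence through the surrounding quantifiers and leading implications.
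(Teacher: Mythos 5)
Your proposal is correct and follows essentially the same route as the paper's proof: the same induction on the number of quantifiers and implications preceding the distinguished atomic formula, the same base-case move of instantiating the hypothesis at the specific height $j$ of the LF derivation witnessing $\fvalid{\Upsilon}{\fatm{G}{M:A}}$, and the same reliance on Lemma~\ref{lem:ann_ind} to transfer validity of annotation-free subformulas between $\Upsilon$ and $\assn{\Upsilon}{@^i}{m}$ in both the base and implication cases.
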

\begin{proof}
We will prove this by induction on the formation of $F$, measuring
size by the number of quantifiers and implications prior to $\eqannaux{i}{\fatm{G}{M:A}}$.

\case{$F=\fimp{\fatm{G}{M:A}}{\fimp{\ldots}{F_n}}$}
Suppose that
$\fvalid{\assn{\Upsilon}{@^i}{m}}{\fimp{\eqannaux{i}{\fatm{G}{M:A}}}{\fimp{\ldots}{F_n}}}$ 
holds for any $m$.
If the judgement
$\fvalid{\Upsilon}{\fatm{G}{M:A}}$ did not hold then $\fvalid{\Upsilon}{F}$
holds vacuously so suppose it does hold.
Then there must be a derivation of $\lfchecktype{G}{M}{A}$ and this derivation will
have some height $l$.
Therefore 
$\fvalid{\assn{\Upsilon}{@^i}{l}}{\eqannaux{i}{\fatm{G}{M:A}}}$ will hold.
So by the assumption, 
$\fvalid{\assn{\Upsilon}{@^i}{l}}{\fimp{F_{k+1}}{\fimp{\ldots}{F_n}}}$ must hold as well.
Since $@^i$ does not appear in this formula, Lemma~\ref{lem:ann_ind}
permits us to conclude that 
$\fvalid{\Upsilon}{\fimp{F_{k+1}}{\fimp{\ldots}{F_n}}}$ will hold,
and therefore we can conclude $\fvalid{\Upsilon}{F}$ holds, as needed.

\case{$F=
\fall{x:\alpha}
     {\mathcal{Q}'_1.(
             \fimp{F_1}{\fimp{\ldots}
                       {\mathcal{Q}_{k-1}.(
                         \fimp{F_{k-1}}
                              {\mathcal{Q}_k.
                                  (\fimp{\fatm{G}{M:A}}
                                        {\fimp{\ldots}{F_n}})})}})}$
}
Let $F'$ denote the body of the universal in $F$ and $\eqannaux{i}{F'}$
similarly for the annotated formula.
For an arbitrary term $t$ such that
$\stlctyjudg{\noms\cup\STLCGamma_0}{t}{\alpha}$ is derivable,
we know by the assumptions that
$\fvalid{\assn{\Upsilon}{@^i}{m}}{\hsubst{\{\langle x,t,\alpha\rangle\}}{\eqannaux{i}{F'}}}$
holds for every $m$.
By induction then, we can conclude that
$\fvalid{\Upsilon}{\hsubst{\{\langle x,t,\alpha\rangle\}}{F'}}$ must also hold.
But then clearly $\fvalid{\Upsilon}{F}$ holds by Definition~\ref{def:ann_form_valid}.

\case{$F=
\fctx{\Gamma}{\mathcal{C}}
     {\mathcal{Q}'_1.(
             \fimp{F_1}{\fimp{\ldots}
                       {\mathcal{Q}_{k-1}.(
                         \fimp{F_{k-1}}
                              {\mathcal{Q}_k.
                                  (\fimp{\fatm{G}{M:A}}
                                        {\fimp{\ldots}{F_n}})})}})}$
}
Let $F'$ denote the body of the context quantification in $F$ and $\eqannaux{i}{F'}$
similarly for the annotated formula.
For any context expression $G$ such that $\csinst{\noms}{\emptyset}{\mathcal{C}}{G}$,
we know by the assumptions that
$\fvalid{\assn{\Upsilon}{@^i}{m}}{\hsubst{G/\Gamma}{\eqannaux{i}{F'}}}$
holds for every $m$.
We can conclude from this that $\fvalid{\Upsilon}{\subst{G/\Gamma}{F'}}$ will hold
by an application of the inductive hypothesis.
But then clearly $\fvalid{\Upsilon}{F}$ holds by Definition~\ref{def:ann_form_valid}.

\case{$F=
\fimp{F_0}
     {\mathcal{Q}'_1.(
             \fimp{F_1}{\fimp{\ldots}
                       {\mathcal{Q}_{k-1}.(
                         \fimp{F_{k-1}}
                              {\mathcal{Q}_k.
                                  (\fimp{\fatm{G}{M:A}}
                                        {\fimp{\ldots}{F_n}})})}})}$
}
Let $F'$ denote the conclusion of the top-level implication in $F$ and 
$\eqannaux{i}{F'}$ similarly for the annotated formula.
If $\fvalid{\Upsilon}{F_0}$ were not valid, $\fvalid{\Upsilon}{F}$ would be
vacuously valid so suppose it were valid.
Then by Lemma~\ref{lem:ann_ind}, it must be that 
$\fvalid{\assn{\Upsilon}{@^i}{m}}{F_0}$ holds for every $m$.
By the assumptions then, 
$\fvalid{\assn{\Upsilon}{@^i}{m}}{\eqannaux{i}{F'}}$ holds for every natural number $m$.
An application of the inductive hypothesis permits us to conclude from this that
$\fvalid{\Upsilon}{F'}$ holds, and therefore that $\fvalid{\Upsilon}{F}$ holds, as needed.
\end{proof}

\begin{theorem}\label{th:ind-sound}
If the premise sequent of an instance of the \ind\ rule is valid and
the requirement of non-occurrence of the annotations $@^i$ and $*^i$
is satisfied, then the conclusion sequent of the rule instance must be
valid. 
\end{theorem}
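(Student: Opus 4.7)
The plan is to fix an arbitrary closed instance of the conclusion sequent identified by substitutions $\theta$ and $\sigma$, together with an arbitrary height assignment $\Upsilon$, and show that the (substituted) conclusion formula is valid with respect to $\Upsilon$ whenever each formula of $\subst{\sigma}{\hsubst{\theta}{\Omega}}$ is. Let $F^\circ$ denote the substituted form of the unannotated conclusion formula $\mathcal{Q}_1.(\ldots\fimp{\fatm{G}{M:A}}{\ldots})$, and let $\eqann[i]{F^\circ}$ and $\ltann[i]{F^\circ}$ denote the analogous formulas in which the designated atomic subformula is annotated with $@^i$ and $*^i$ respectively. By the freshness proviso on the rule, the annotations $@^i$ and $*^i$ occur nowhere in $\Omega$, nor in $F^\circ$ outside the specified atomic occurrence.

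By Lemma~\ref{lem:ann__unann}, to show $\fvalid{\Upsilon}{F^\circ}$ it suffices to show $\fvalid{\assn{\Upsilon}{@^i}{m}}{\eqann[i]{F^\circ}}$ for every natural number $m$. The central step will be a meta-level induction on $m$, using strong induction. Assume as the inductive hypothesis that $\fvalid{\assn{\Upsilon}{@^i}{l}}{\eqann[i]{F^\circ}}$ holds for every $l < m$; the goal is to establish the same judgement at height $m$.

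By Lemma~\ref{lem:all_eq__lt}, the inductive hypothesis immediately yields $\fvalid{\assn{\Upsilon}{@^i}{m}}{\ltann[i]{F^\circ}}$. Now observe that $\theta$ and $\sigma$ also identify a closed instance of the premise sequent, whose assumption set is $\subst{\sigma}{\hsubst{\theta}{\Omega}}\cup\{\ltann[i]{F^\circ}\}$ and whose goal is $\eqann[i]{F^\circ}$. The freshness proviso, together with Lemma~\ref{lem:ann_ind}, guarantees that every formula of $\subst{\sigma}{\hsubst{\theta}{\Omega}}$ that was assumed valid with respect to $\Upsilon$ is also valid with respect to $\assn{\Upsilon}{@^i}{m}$. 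Combined with the $\ltann[i]$-assumption just established, all assumption formulas of this closed instance are valid at $\assn{\Upsilon}{@^i}{m}$, so the assumed validity of the premise sequent forces $\fvalid{\assn{\Upsilon}{@^i}{m}}{\eqann[i]{F^\circ}}$, closing the induction.

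The only genuine subtlety lies in ensuring that the appeal to Lemma~\ref{lem:all_eq__lt} is legitimate: that lemma requires the formula in question to be one in which $@^i$ and $*^i$ occur only at the designated atomic position, and this is exactly what the rule's freshness proviso gives us, preserved under the substitutions $\theta$ and $\sigma$ since such substitutions do not introduce annotations. Apart from this bookkeeping, the proof is a direct assembly of Lemmas~\ref{lem:ann_ind}, \ref{lem:all_eq__lt}, and \ref{lem:ann__unann} around a meta-level strong induction, with no unification or LF-specific reasoning required at this stage.
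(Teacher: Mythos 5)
Your proposal is correct and follows essentially the same route as the paper's proof: a meta-level strong induction on the height $m$ over an arbitrary closed instance and height assignment, using Lemma~\ref{lem:ann_ind} to transfer validity of the unannotated assumptions across height assignments, Lemma~\ref{lem:all_eq__lt} to pass from the inductive hypothesis to the $*^i$-annotated assumption, and Lemma~\ref{lem:ann__unann} to discharge the annotation at the end. The only difference is presentational (you state the sufficiency via Lemma~\ref{lem:ann__unann} up front rather than at the close), which does not change the argument.
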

\begin{proof}
In this proof we will use $F$ to denote the formula
\begin{tabbing}
\hspace{2.5cm}\=\kill
\>$\mathcal{Q}_1.(
             \fimp{F_1}{\fimp{\ldots}
                       {\mathcal{Q}_{k-1}.(
                         \fimp{F_{k-1}}
                              {\mathcal{Q}_k.
                                  (\fimp{\fatm{G}{M:A}}
                                        {\fimp{\ldots}{F_n}})})}})$,
\end{tabbing}
$\eqannaux{i}{F}$ to denote
$\mathcal{Q}_1.(
             \fimp{F_1}{\fimp{\ldots}
                       {\mathcal{Q}_{k-1}.(
                         \fimp{F_{k-1}}
                              {\mathcal{Q}_k.
                                  (\fimp{\eqannaux{i}{\fatm{G}{M:A}}}
                                        {\fimp{\ldots}{F_n}})})}})$
and $\ltannaux{i}{F}$ to denote
$\mathcal{Q}_1.(
             \fimp{F_1}{\fimp{\ldots}
                       {\mathcal{Q}_{k-1}.(
                         \fimp{F_{k-1}}
                              {\mathcal{Q}_k.
                                  (\fimp{\ltannaux{i}{\fatm{G}{M:A}}}
                                        {\fimp{\ldots}{F_n}})})}})$.
Let $\mathcal{S}$ denote the conclusion sequent, $\Omega$ the assumption formulas
of $\mathcal{S}$, and $\mathcal{S}'$ the premise sequent.
Consider an arbitrary height assignment $\Upsilon$ and closed instance of $\mathcal{S}$
identified by $\theta$ and $\sigma$.
If any formula in $F'\in\subst{\sigma}{\hsubst{\theta}{\Omega}}$ were such that
$\fvalid{\Upsilon}{F'}$ did not hold then this closed instance of $\mathcal{S}$
would be vacuously valid with respect to $\Upsilon$ so suppose they all hold.
Note that this $\theta$ and $\sigma$ must also identify a closed instance of
$\mathcal{S}'$ given that these sequents share support sets, term 
variable contexts, and context variable contexts.
Since $@^i$ and $*^i$ do not occur in $\mathcal{S}$ by assumption, 
Lemma~\ref{lem:ann_ind} permits us to conclude that for every $m$,
$\fvalid{\assn{\Upsilon}{@^i}{m}}{F'}$ must hold.
We will use this observation and the validity of the premise sequent to argue
by strong induction on $m$ that for all $m$,
$\fvalid{\assn{\Upsilon}{@^i}{m}}{\subst{\sigma}{\hsubst{\theta}{\eqannaux{i}{F}}}}$ 
holds.
Once we have concluded that this holds for every $m$, Lemma~\ref{lem:ann__unann}
permits us to conclude that 
$\fvalid{\Upsilon}{\subst{\sigma}{\hsubst{\theta}{F}}}$ holds 
and from this we conclude
$\subst{\sigma}{\hsubstseq{\emptyset}{\theta}{\mathcal{S}}}$
is valid with respect to $\Upsilon$.

Suppose that for all $l<m$, 
$\fvalid{\assn{\Upsilon}{@^i}{l}}{\subst{\sigma}{\hsubst{\theta}{\eqannaux{i}{F}}}}$
holds.
Then by Lemma~\ref{lem:all_eq__lt} we easily conclude 
$\fvalid{\assn{\Upsilon}{@^i}{m}}{\subst{\sigma}{\hsubst{\theta}{\ltannaux{i}{F}}}}$
holds.
But then every assumption formula $F''$ of the closed sequent 
$\subst{\sigma}{\hsubstseq{\emptyset}{\theta}{\mathcal{S}'}}$
is such that $\fvalid{\assn{\Upsilon}{@^i}{m}}{F''}$ holds and therefore
$\fvalid{\assn{\Upsilon}{@^i}{m}}{\subst{\sigma}{\hsubst{\theta}{\eqannaux{i}{F}}}}$
will hold by the validity of $\mathcal{S}'$.
Thus we can conclude that $\mathcal{S}$ is valid using Lemma~\ref{lem:ann__unann} 
as described above.
\end{proof}
\subsection{Additional Proof Rules that Interpret Annotations}
We now take up the task of describing additional proof rules that take
the meanings of annotations in formulas into consideration.
These rules are an essential part of our proof system: without them,
it would be impossible to construct proofs for the premises of
instances of the induction rule.

One of the rules that we consider in this context is an enhanced
version of the \id\ rule.
The rule that is included in the proof system currently requires the
conclusion formula to be equi-valid to one of the assumption
formulas.
This requirement can be weakened with the refinement of the semantics
that accommodates annotations in formulas.
For example, if we have the formula $\ltann{\fatm{G}{M:A}}$ as an
assumption, this suffices to ensure the validity of a sequent in which 
$\eqann{\fatm{G}{M:A}}$ is the conclusion formula. 
Similarly, the validity of either annotated formula would imply the 
validity of the unannotated atomic formula $\fatm{G}{M:A}$.
This observation can be expanded to include non-atomic formulas with
the proviso that the polarity of the occurrence of the formula must be
paid attention to.
For example, it is the validity of $\fimp{\eqann{\fatm{G}{M:A}}}{F}$
that implies that $\fimp{\ltann{\fatm{G}{M:A}}}{F}$ is valid, rather
than the other way around, and, further, the validity of both of these
forms is implied by $\fimp{\fatm{G}{M:A}}{F}$. 

We formalize the idea discussed above through a notion of comparative
strengths of formulas.

\begin{definition}[Comparative Strengths of Annotated Formulas]
An atomic formula $\fatmann{G}{M:A}{Ann}$ is stronger than 
$\fatmann{G'}{M':A'}{Ann'}$ with respect to $\Xi$ and $\pi$ if it is
the case that
$\formeq{\Xi}{\pi}{\fatm{G}{M:A}}{\fatm{G'}{M':A'}}$ and either $Ann=*^i$ and 
$Ann'=@^i$ or $Ann'$ is no annotation and $Ann$ is $@^i$, $*^i$, or no annotation.
An implication formula $\fimp{F_2}{F_2'}$ is stronger than $\fimp{F_1}{F_1'}$
with respect to $\Xi$ and $\pi$ if $F_1$ is stronger than $F_2$ with respect to
$\Xi$ and $\inv{\pi}$ and $F_2'$ is stronger then $F_1'$ with respect to
$\Xi$ and $\pi$.
For any other arbitrary formula, $F_2$ is stronger than $F_1$ with respect to $\Xi$ and 
$\pi$ if their components satisfy the same relation, under a possibly extended $\Xi$ in 
the case of context quantification, allowing for renaming of variables bound by quantifiers.
The stronger than relation for formulas is represented by the judgement 
$\streq{\Xi}{\pi}{F_2}{F_1}$.
\end{definition}

The key result we show for this definition is that for any well-formed closed formula,
we can indeed conclude that the validity of the stronger formula will ensure
the validity of the other.
In showing this we will rely on two substitution properties which are
analogous to the Lemmas~\ref{lem:equiv-hsub} and~\ref{lem:equiv-sub}
about formula equivalence.
We first prove the substitution properties and then use this lemma in 
proving the desired result about the stronger than relation.
\begin{lemma}\label{lem:streq-subs}
Suppose that for some formulas $F_1$ and $F_2$, $\streq{\Xi}{\pi}{F_2}{F_1}$ holds.
Then both of the following properties hold.

\begin{itemize}
\item If $\theta$ is a hereditary substitution such that
$\supp{\theta}\cap\supp{\pi}=\emptyset$ and
both $\hsub{\theta}{F_2}{F_2'}$ and $\hsub{\theta}{F_1}{F_1'}$ 
have derivations for some $F_1'$ and $F_2'$, then
$\streq{\hsubst{\theta}{\Xi}}{\pi}{\hsubst{\theta}{F_2}}{\hsubst{\theta}{F_1}}$
holds.

\item If $\sigma$ is an appropriate substitution for $\Xi$ with respect to some 
$\Psi$ then the judgement
$\streq{\ctxvarminus{\Xi}{\sigma}}{\pi}{\subst{\sigma}{F_2}}{\subst{\sigma}{F_1}}$
holds.
\end{itemize}
\end{lemma}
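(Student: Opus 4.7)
The plan is to prove each of the two clauses by induction on the derivation of $\streq{\Xi}{\pi}{F_2}{F_1}$, with the structure of the argument closely paralleling that of Lemmas~\ref{lem:equiv-hsub} and~\ref{lem:equiv-sub}. The key observation making the reduction to those results possible is that the stronger than relation is built on top of formula equivalence in its atomic case, and the annotations $@^i$, $*^i$, or ``no annotation'' are unaffected by applying a hereditary substitution or a context variable substitution. Thus in the atomic case, where $F_1 = \fatmann{G'}{M':A'}{Ann'}$ and $F_2 = \fatmann{G}{M:A}{Ann}$ with $\formeq{\Xi}{\pi}{\fatm{G}{M:A}}{\fatm{G'}{M':A'}}$, I invoke Lemma~\ref{lem:equiv-hsub} (respectively Lemma~\ref{lem:equiv-sub}) on the underlying equivalence judgement, and then reattach the same annotations $Ann$ and $Ann'$ to the substituted atomic formulas to obtain the required stronger than judgement in the new context.

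For the inductive step on non-atomic formulas, I proceed by cases on the outermost connective of $F_1$ (equivalently $F_2$, since the stronger than relation forces them to share outermost structure up to renaming). The cases for $\ftrue$, $\ffalse$, $\fand{\cdot}{\cdot}$, $\for{\cdot}{\cdot}$, $\fall{x:\alpha}{\cdot}$, and $\fexists{x:\alpha}{\cdot}$ are routine: the substitution distributes to component parts and the induction hypothesis applies directly, with the usual renaming of bound variables to avoid capture of variables in $\domain{\theta}\cup\range{\theta}$ (or in $\supportof{\sigma}$ for clause two). The case for context quantification $\fctx{\Gamma}{\mathcal{C}}{F'}$ is handled by extending $\Xi$ with a fresh declaration for $\Gamma$ before appealing to the induction hypothesis, exactly as in Lemmas~\ref{lem:equiv-hsub} and~\ref{lem:equiv-sub}.

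The case that deserves the most care is implication, because the stronger than relation inverts polarity in the antecedent. If $\streq{\Xi}{\pi}{\fimp{F_2}{F_2'}}{\fimp{F_1}{F_1'}}$ holds because $\streq{\Xi}{\inv{\pi}}{F_1}{F_2}$ and $\streq{\Xi}{\pi}{F_2'}{F_1'}$ both hold, then I apply the induction hypothesis to each component. For the antecedent I need the hypothesis with permutation $\inv{\pi}$; the requirement for clause one is $\supp{\theta}\cap\supp{\inv{\pi}}=\emptyset$, which is automatic since $\supp{\pi}=\supp{\inv{\pi}}$, so the same $\theta$ meets the hypothesis. For clause two the appropriateness of $\sigma$ for $\Xi$ does not depend on $\pi$, so the hypothesis applies uniformly. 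Recombining the two results yields $\streq{\hsubst{\theta}{\Xi}}{\pi}{\hsubst{\theta}{(\fimp{F_2}{F_2'})}}{\hsubst{\theta}{(\fimp{F_1}{F_1'})}}$ (and similarly for $\sigma$), as required.

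The main obstacle is purely bookkeeping: confirming that the polarity inversion in the implication case interacts correctly with the symmetric condition $\supp{\theta}\cap\supp{\pi}=\emptyset$ for clause one, and ensuring that the bound-variable freshness conditions used when pushing substitutions under $\fctx{\Gamma}{\mathcal{C}}{\cdot}$ and under term-level quantifiers can always be arranged by $\alpha$-renaming. No new ideas beyond those already developed for the equivalence case are needed, since the annotation components are inert under both forms of substitution.
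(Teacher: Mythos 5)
Your proposal is correct and follows essentially the same route as the paper, whose proof is just a two-sentence observation that term and context substitutions leave formula structure and annotations untouched, so an induction on the formation of $F_1$ and $F_2$ goes through. Your elaboration of the atomic case via Lemmas~\ref{lem:equiv-hsub} and~\ref{lem:equiv-sub} and your handling of the polarity flip in the implication case (using $\supp{\pi}=\supp{\inv{\pi}}$) are exactly the details the paper leaves implicit.
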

\begin{proof}
We observe that the application of both term and context variable substitutions to
formulas does not impact on either their structure or their annotations.
Therefore we can conclude that both of these clauses hold through an inductive argument
on the formation of both $F_2$ and $F_1$.
\end{proof}
\begin{theorem}\label{th:streq-valid}
For a height assignment $\Upsilon$ and 
well-formed closed formulas $F_1$ and $F_2$, if 
$\streq{\Xi}{\pi}{F_2}{F_1}$ and $F_2$ is valid with respect to $\Upsilon$
then $F_1$ is valid with respect to $\Upsilon$.
\end{theorem}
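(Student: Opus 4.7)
The plan is to proceed by induction on the structure of $F_2$ (equivalently $F_1$, since the stronger-than relation forces them to share top-level shape), proving the desired implication for validity with respect to an arbitrary but fixed height assignment $\Upsilon$. First I would dispose of the trivial cases where the formulas are $\ftrue$ or $\ffalse$, and handle $\fand{\cdot}{\cdot}$ and $\for{\cdot}{\cdot}$ by an immediate appeal to the definition of validity together with the inductive hypothesis applied to each component, noting that the stronger-than relation descends covariantly through these connectives.

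For the atomic case, $F_2 = \fatmann{G_2}{M_2:A_2}{Ann_2}$ and $F_1 = \fatmann{G_1}{M_1:A_1}{Ann_1}$ with $\formeq{\Xi}{\pi}{\fatm{G_2}{M_2:A_2}}{\fatm{G_1}{M_1:A_1}}$ and an annotation strength condition. Since the formulas are closed, $\Xi$ is empty and $\pi$ permutes only names which do not impact closed instances in an essential way; I can lift permutation invariance of LF derivability (Theorem~\ref{th:perm-lf}, as invoked in the proof of Theorem~\ref{th:perm-form}) to transport the underlying derivations for $\lfctx{G_2}$, $\lftype{G_2}{A_2}$, and $\lfchecktype{G_2}{M_2}{A_2}$ over to the corresponding judgements for $G_1$, $A_1$, $M_1$. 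The height of the transported derivation of $\lfchecktype{G_1}{M_1}{A_1}$ is the same as that of the original; it then suffices to check that the annotation cases permitted by the definition of stronger-than ($*^i$ over $@^i$, or $@^i$/$*^i$/nothing over nothing) each give a height restriction on $F_1$ that is implied by the one on $F_2$, which is immediate.

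The quantifier and context-quantifier cases are handled by combining Lemma~\ref{lem:streq-subs} with the inductive hypothesis: given $\streq{\emptyset}{\pi}{\fgeneric{x{:}\alpha}{F_2'}}{\fgeneric{x{:}\alpha}{F_1'}}$, for every suitably typed term $M$ the substitution instances $\hsubst{\{\langle x,M,\alpha\rangle\}}{F_2'}$ and $\hsubst{\{\langle x,M,\alpha\rangle\}}{F_1'}$ remain in the stronger-than relation (first clause of Lemma~\ref{lem:streq-subs}), so the definition of validity under a universal (or the witness selected for an existential) carries through; the $\fctx{\Gamma}{\mathcal{C}}{\cdot}$ case uses the second clause of that lemma analogously with a closing context substitution.

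The main obstacle, and the case worth doing carefully, is $\fimp{\cdot}{\cdot}$, because the stronger-than relation is contravariant in its antecedent. Suppose $\streq{\emptyset}{\pi}{\fimp{F_2'}{F_2''}}{\fimp{F_1'}{F_1''}}$ and that $\fimp{F_2'}{F_2''}$ is valid w.r.t.\ $\Upsilon$. By definition of stronger-than for implications we have $\streq{\emptyset}{\inv{\pi}}{F_1'}{F_2'}$ and $\streq{\emptyset}{\pi}{F_2''}{F_1''}$. To verify validity of $\fimp{F_1'}{F_1''}$, assume $\fvalid{\Upsilon}{F_1'}$; the inductive hypothesis applied to $\streq{\emptyset}{\inv{\pi}}{F_1'}{F_2'}$ (note the flipped direction is exactly what we need: $F_1'$ is the stronger one here) yields $\fvalid{\Upsilon}{F_2'}$, whence $\fvalid{\Upsilon}{F_2''}$ by the assumed validity of $\fimp{F_2'}{F_2''}$, and a second appeal to the inductive hypothesis with $\streq{\emptyset}{\pi}{F_2''}{F_1''}$ gives $\fvalid{\Upsilon}{F_1''}$, as required. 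Once this case is verified, the theorem follows by assembling the cases.
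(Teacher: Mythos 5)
Your proposal is correct and follows essentially the same route as the paper's proof: structural induction on the formulas, an atomic case that transports the LF derivations across the permutation while checking the annotation strength conditions, quantifier cases discharged via Lemma~\ref{lem:streq-subs}, and the same contravariant treatment of the implication antecedent using the inverse permutation. No gaps.
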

\begin{proof}
We prove this by induction on the formation of $F_1$ and $F_2$.
Suppose that $F_2$ is valid with respect to $\Upsilon$, and consider the
possible structures for $F_1$ and $F_2$.

\case{$F_1=\fatmann{G_1}{M_1:A_1}{Ann_1}$ and $F_2=\fatmann{G_2}{M_2:A_2}{Ann_2}$}
We first observe that since $\formeq{\Xi}{\pi}{\fatm{G_2}{M_2:A_2}}{\fatm{G_1}{M_1:A_1}}$,
we can extract from this that $\permute{\pi}{(\fatm{G_2}{M_2:A_2})}=\fatm{G_1}{M_1:A_1}$
and thus the unannotated forms of these formulas will be equi-valid with respect to 
$\Upsilon$.
So by the assumption of validity for $F_2$, we can determine from this that 
$\lfctx{G_1}$ and $\lftype{G_1}{A_1}$ have derivations, and also that
$\lfchecktype{G_1}{M_1}{A_1}$ has a derivation which satisfies the annotation $Ann_2$
with respect to $\Upsilon$.
The only remaining piece is then to show that this derivation for 
$\lfchecktype{G_1}{M_1}{A_1}$ will also satisfy the annotation $Ann_1$ with respect 
to $\Upsilon$.
There are two possibilities to consider, either $Ann_2=*^i$ and 
$Ann_1=@^i$ or $Ann_1$ is no annotation and $Ann_2$ is $@^i$, $*^i$, or no annotation.
If $Ann_2=*^i$  it is clear that if a derivation of height strictly less than
$\Upsilon(@^i)$ exists then this same derivation will also be of a height 
less than or equal to $\Upsilon(@^i)$.
In the other case, regardless of the height of the derivation which satisfies $Ann_2$
this same derivation will be sufficient to determine that the unannotated form
of the formula will be valid.
Therefore $F_1$ must be valid with respect to $\Upsilon$.

\case{$F_1=\fimp{F_1'}{F_1''}$ and $F_2=\fimp{F_2'}{F_2''}$}
If $F_1'$ were not valid with respect to $\Upsilon$, then $F_1$ must clearly be
valid with respect to $\Upsilon$.
If $F_1'$ were valid with respect to $\Upsilon$, then by an application of the
inductive hypothesis $F_2'$ must be valid with respect to $\Upsilon$.
Thus by the validity of $F_2$, $F_2''$ must be valid with respect to this same
height assignment, and so by a second application of the inductive hypothesis
$F_1''$ is valid with respect to $\Upsilon$, as needed.

\case{$F_1$ and $F_2$ are of some other structure}
For the remaining cases the desired result is a direct result of 
the definition of validity with respect to a height assignment $\Upsilon$ and
an application of the inductive hypothesis on the components of the formulas,
using Lemma~\ref{lem:streq-subs} to address the instantiation of quantifiers in the
relevant cases.
\end{proof}

\begin{figure}
\[
\infer[\id]
      {\seq[\mathbb{N}]{\Psi}{\Xi}{\Omega}{F_1}}
      {F_2\in\Omega
         &
       \supportof{\pi}\subseteq\mathbb{N}
         &
       \streq{\Xi}{\pi}{F_2}{F_1}}
\]
\caption{Id Rule allowing Annotated Formulas}\label{fig:rules-anna}
\end{figure}

\begin{figure}
\[
\begin{array}{c}
\infer[\appL]
      {\seq[\mathbb{N}]{\Psi}
           {\Xi}
           {\setand{\Omega}{\fatmann{G}{R:P}{Ann}}}
           {F}}
      {\begin{array}{c}
         Ann\in\{*^i,@^i\}
           \\
         \mathcal{CS} = 
               \casesfn[\fatm{G}{R:P}]
                       {\seq[\mathbb{N}]
                            {\Psi}
                            {\Xi}
                            {\setand{\Omega}{\fatm{G}{R:P}}}
                            {F}}
           \\
         \left\{
           \begin{array}{l}
               \seq[\mathbb{N}']
                   {\Psi'}
                   {\Xi'}
                   {\setand{\Omega'}
                           {\fatmann{G_1}{M_1:A_1}{*^i},\ldots,\fatmann{G_k}{M_k:A_k}{*^i}}}
                   {F'}
                \\
                   \ \mid\ 
               \seq[\mathbb{N}']{\Psi'}{\Xi'}{\setand{\Omega'}{\fatm{G_1}{M_1:A_1},\ldots,\fatm{G_k}{M_k:A_k}}}{F'}\in
                   \mathcal{CS}
           \end{array}
         \right\}
       \end{array}}
\medskip \\
\infer[\appR]
      {\seq[\mathbb{N}]{\Psi}
           {\Xi}
           {\Omega}
           {\fatmann{G}{h\app M_1\ldots M_n:P'}{@^i}}}
      {\begin{array}{c}
           h:\typedpi{x_1}{A_1}{\ldots
                      \typedpi{x_n}{A_n}{P}}\in\Sigma\mbox{ or the explicit bindings in }G
             \\
           \fatm{G}{N:B}\in\Omega \qquad
           \hsub{\{\langle x_1, M_1, \erase{A_1}\rangle,\ldots,\langle x_n, M_n, \erase{A_n}\rangle\}}
                {P}
                {P'}
         \\
         \left\{
         \begin{array}{l}
           \seq[\mathbb{N}]
               {\Psi}
               {\Xi}
               {\Omega}
               {}
         \\\quad
               \fatmann{G}
                    {M_i:
                       \hsubst{\{\langle x_1, M_1, \erase{A_1}\rangle,\ldots,
                                 \langle x_{i-1}, M_{i-1}, \erase{A_{i-1}}\rangle\}}
                              {A_i}}
                       {*^i}
          \\\hspace{9cm}
         \ \mid\ 1\leq i\leq n
         \end{array}
         \right\}
       \end{array}}
\medskip\\
\infer[\absL]
      {\seq[\mathbb{N}]{\Psi}{\Xi}{\setand{\Omega}{\fatmann{G}{\lflam{x}{M}:\typedpi{x}{A_1}{A_2}}{Ann}}}{F}}
      {\begin{array}{c}
         \begin{array}{cc}
           n\not\in\dom{\mathbb{N}}
           &
           Ann\in\{*^i,@^i\}
         \end{array}
         \\
         \Xi'=
           \left\{\begin{array}{cl}
             \left(\Xi
                 \setminus
                 \left\{\ctxvarty{\Gamma}{\mathbb{N}_{\Gamma}}{\ctxty{\mathcal{C}}{\mathcal{G}}}\right\}\right)
             \cup 
             \left\{\ctxvarty{\Gamma}{(\mathbb{N}_{\Gamma},n:\erase{A_1})}{\ctxty{\mathcal{C}}{\mathcal{G}}}\right\}
               & \mbox{if }\Gamma\mbox{ in }G 
             \\
             \Xi & \mbox{otherwise}
           \end{array}\right.
         \\
         \begin{array}{l}
             \mathbb{N},n:\erase{A_1};\Psi;\Xi';
           \\\qquad
             \setand{\Omega}
                    {\fatmann{G,n:A_1}
                          {\hsubst{\{\langle x,n,\erase{A_1}\rangle\}}{M}:
                              \hsubst{\{\langle x,n,\erase{A_1}\rangle\}}{A_2}}
                          {*^i}}
             \longrightarrow F
         \end{array}
       \end{array}}
\medskip\\
\infer[\absR]
      {\seq[\mathbb{N}]{\Psi}{\Xi}{\Omega}{\fatmann{G}{\lflam{x}{M}:\typedpi{x}{A_1}{A_2}}{@^i}}}
      {\begin{array}{c}
           n\not\in\dom{\mathbb{N}}
         \\
         \Xi'=
           \left\{\begin{array}{cl}
             \left(\Xi
                 \setminus
                 \left\{\ctxvarty{\Gamma}{\mathbb{N}_{\Gamma}}{\ctxty{\mathcal{C}}{\mathcal{G}}}\right\}\right)
             \cup 
             \left\{\ctxvarty{\Gamma}{(\mathbb{N}_{\Gamma},n:\erase{A_1})}{\ctxty{\mathcal{C}}{\mathcal{G}}}\right\}
               & \mbox{if }\Gamma\mbox{ in }G 
             \\
             \Xi & \mbox{otherwise}
           \end{array}\right.
         \\
         \begin{array}{l}
             \seq[\mathbb{N},n:\erase{A_1}]{\Psi}{\Xi'}{\Omega}{}
           \\\qquad
             \fatmann{G,n:A_1}{\hsubst{\{\langle x,n,\erase{A_1}\rangle\}}{M}:\hsubst{\{\langle x,n,\erase{A_1}\rangle\}}{A_2}}{*^i}
         \end{array}
       \end{array}}
\end{array}\]
\caption{Atomic Proof Rules allowing Annotated Formulas}\label{fig:rules-annb}
\end{figure}

The new version of the identity rule is presented in
Figure~\ref{fig:rules-anna}.
Figure~\ref{fig:rules-annb} presents enhancements to the atomic proof
rules that also take into account the semantics of annotations.
As usual, we show that these rules preserve the well-formedness of
sequents and are also sound. 

\begin{theorem}\label{th:ann-sound}
The following properties holds for every instance of each of the rules
in Figures~\ref{fig:rules-anna} and~\ref{fig:rules-annb}: 
\begin{enumerate}
\item If the conclusion sequent is
well-formed, the premises expressing typing conditions have
derivations and the conditions expressed by the other, non-sequent
premises are satisfied, then all the sequent premises must
be well-formed.
\item If the premises expressing
typing judgements are derivable, the conditions described in the other
non-sequent premises are satisfied and the premise sequent is valid,
then the conclusion sequent must also be valid. 
\end{enumerate}
\end{theorem}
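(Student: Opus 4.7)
The proof will follow the template established for the earlier rule collections, treating well-formedness and soundness separately. For clause (1), I observe that the wellformedness criterion for annotated sequents (Definitions~\ref{def:ann-form-wf} and~\ref{def:ann-seq-wf}) is simply the original criterion applied after erasing annotations. The new $\id$ rule and the enhanced atomic rules differ from their unannotated counterparts in Sections~\ref{sec:core-logic} and~\ref{sec:atomic-rules} only in the presence of annotations on atomic formulas, and these annotations are invariant under both hereditary and context substitutions. Hence clause (1) for each rule reduces, after erasure, to the already-established results Theorems~\ref{th:other-wf}, \ref{th:cases-seq-ok}, and \ref{th:atom-wf}, together with the observation that adding an annotation to a newly-introduced atomic premise preserves the wellformedness derivation for the underlying formula.

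For clause (2) applied to the $\id$ rule, I would generalize the soundness argument of Theorem~\ref{th:other-sound}. Consider an arbitrary height assignment $\Upsilon$ and a closed instance of the conclusion identified by $\theta$ and $\sigma$ under which every assumption formula is valid with respect to $\Upsilon$. In particular the instance of $F_2$ is valid. Using Lemma~\ref{lem:streq-subs} twice --- once for the term substitution $\theta$ and once for the context substitution $\sigma$, noting that the support of $\pi$ lies in $\mathbb{N}$ which is disjoint from $\supportof{\theta}$ --- I obtain $\streq{\emptyset}{\pi}{\subst{\sigma}{\hsubst{\theta}{F_2}}}{\subst{\sigma}{\hsubst{\theta}{F_1}}}$. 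An appeal to Theorem~\ref{th:streq-valid} then yields validity of $\subst{\sigma}{\hsubst{\theta}{F_1}}$ with respect to $\Upsilon$.

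For the left rules $\appL$ and $\absL$ with an annotation $Ann \in \{@^i, *^i\}$, the key observation is that the LF decomposition reflected by these rules strictly decreases derivation height. For $\appL$, I would reproduce the proof of Theorem~\ref{th:cases-cover}, but using clause (2) of Lemma~\ref{lem:decomp_decr}: if the closed instance of the annotated assumption is valid with respect to $\Upsilon$ because there is an LF derivation of height $k \leq \Upsilon(@^i)$ (when $Ann = @^i$) or $k < \Upsilon(@^i)$ (when $Ann = *^i$), then each decomposed sub-judgement has an LF derivation of height strictly less than $k$, hence strictly less than $\Upsilon(@^i)$, which is precisely the condition for satisfying the $*^i$ annotation. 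For $\absL$, the argument is similar but simpler: the premise of the $\canontermlam$ rule in LF has strictly smaller height than the conclusion, so the body assumption annotated $*^i$ is satisfied whenever the abstraction assumption satisfied $Ann$.

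For the right rules $\appR$ and $\absR$ with annotation $@^i$ on the goal, I need to exhibit an LF derivation of the indicated height bound. For $\absR$, the premise provides, for every closed instance, a derivation of the body judgement of some height strictly less than $\Upsilon(@^i)$; the $\canontermlam$ rule then assembles an LF derivation of the abstraction judgement one taller, which is at most $\Upsilon(@^i)$, giving validity under the $@^i$ annotation. For $\appR$, the $n$ premise sequents yield, at each closed instance, derivations of the argument judgements each of height strictly less than $\Upsilon(@^i)$; the $n$-fold use of $\atomtermapp$ justified by Theorem~\ref{th:atomictype} produces a derivation of the full atomic judgement whose height is one greater than the maximum of these, again at most $\Upsilon(@^i)$. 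The contextual wellformedness and type-formation side obligations required by the definition of validity are discharged exactly as in the proof of Theorem~\ref{th:atom-sound} for the unannotated case, using the validity of the assumption formula $\fatm{G}{N:B}$ to secure $\lfctx{G}$. I expect the main subtlety to be bookkeeping the height arithmetic in $\appR$: a one-for-one transcription of the height bounds through the nested $\atomtermapp$ applications must match the freshness of $@^i$ and $*^i$ used in the induction rule, and here I would rely on Lemma~\ref{lem:ann_ind} to freely adjust the height assignment on annotations not appearing in the surrounding formulas.
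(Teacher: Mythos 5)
Your proposal is correct and follows essentially the same route as the paper: clause (1) by erasure of annotations reducing to the earlier wellformedness theorems, and clause (2) by fixing an arbitrary height assignment and rerunning the unannotated soundness arguments (via Lemma~\ref{lem:streq-subs} and Theorem~\ref{th:streq-valid} for \id, Lemma~\ref{lem:decomp_decr} and Theorem~\ref{th:cases-cover} for \appL, and the height-decrease of the LF typing rules for the remaining atomic rules). Your explicit bookkeeping of the height arithmetic in \appR is slightly more detailed than the paper's, but the substance is the same.
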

\begin{proof}
Given that the well-formedness of these sequents is defined by
the well-formedness of the sequent with annotations erased, clause (1)
holds trivially by Theorems~\ref{th:other-sound} and~\ref{th:atom-sound}.
For the second clause, consider each of the rules in 
Figures~\ref{fig:rules-anna} and~\ref{fig:rules-annb}.

\case{\id}
Consider an arbitrary height assignment $\Upsilon$ and a
closed instance of the conclusion sequent identified
by $\theta$ and $\sigma$ such that $\supportof{\theta}\cap\supportof{\pi}=\emptyset$.
Note that a proof for such a restricted case will be sufficient to conclude the
validity for every closed instance via Theorem~\ref{th:perm-valid}.
The requirements on $\theta$ and $\sigma$ will clearly be sufficient
to ensure that they satisfy the conditions of
Lemma~\ref{lem:streq-subs} and therefore from the the assumptions we 
can conclude that judgement
$\streq{\emptyset}{\pi}{\subst{\sigma}{\hsubst{\theta}{F_2}}}{\subst{\sigma}{\hsubst{\theta}{F_1}}}$
holds.
If any formula in $\subst{\sigma}{\hsubst{\theta}{\Omega}}$ were
not valid with respect to $\Upsilon$ then this closed instance would 
be vacuously valid so suppose they are all valid.
Then in particular, $\subst{\sigma}{\hsubst{\theta}{F_2}}$
is valid with respect to $\Upsilon$.
Therefore by Theorem~\ref{th:streq-valid} it must be that 
$\subst{\sigma}{\hsubst{\theta}{F_1}}$ is also valid, as needed.

\case{\appL}
This argument follows that given for Theorem~\ref{th:cases-cover} generalized
over an arbitrary height assignment $\Upsilon$ for any annotations appearing
in the conclusion sequent.
The only significant change is to observe in the application of
Lemma~\ref{lem:decomp_decr} that 
whenever there is a derivation for
$\lfchecktype{\subst{\sigma}{\hsubst{\theta}{G}}}
             {\hsubst{\theta}{R}}
             {\hsubst{\theta}{P}}$
of some height $k$, it will be the case that for each $i$, $1\leq i\leq n$, 
\[\lfchecktype{\subst{\sigma}{\hsubst{\theta}{G}}}
             {M_i}
             {\hsubst{\langle x_1,M_1,\erase{A_1}\rangle,\ldots,
                         \langle x_{i-1},M_{i-1},\erase{A_{i-1}}\rangle}
                      {A_i}}
\]
has a derivation of height smaller than $k$.
Thus it is sound to annotate the formulas
$\left\{
\fatm{\subst{\sigma}{\hsubst{\theta}{G}}}{M_i:\left(\hsubst{\langle x_1,M_1,\erase{A_1}\rangle,\ldots,\langle x_{i-1},M_{i-1},\erase{A_{i-1}}\rangle}{A_i}\right)}
\middle|
1\leq i\leq n
\right\}$
with $*^i$ as it is known that they must be derivable with a smaller height
and thus will satisfy the requirements of this annotation for the height assignment.

\case{\appR}
Let $A_i'$ denote the type 
$\hsubst{\{\langle x_1,M_1,\erase{A_1}\rangle,\ldots,\langle x_{i-1},M_{i-1},\erase{A_{i-1}}\rangle\}}{A_i}$.
Consider an arbitrary height annotation $\Upsilon$ and closed instance identified
by $\theta$ and $\sigma$.
Suppose all formulas in $\subst{\sigma}{\hsubst{\theta}{\Omega}}$ are valid 
with respect to $\Upsilon$, since if they were not this instance would be
vacuously valid.
Then since clearly $\Upsilon$ will be a valid height assignment for all of 
the premise sequents, and also $\theta$ and $\sigma$ will identify closed
instances of these sequents, by the validity of the premise sequents all of the formulas
$\subst{\sigma}{\hsubst{\theta}{\fatmann{G}{M_i:A_i'}{*^i}}}$ 
are valid with respect to $\Upsilon$.
But then clearly we can ensure that the formula
$\subst{\sigma}{\hsubst{\theta}{\fatmann{G}{h\app M_1\ldots M_n:P}{@^i}}}$
is valid by applying Theorem~\ref{th:atomictype} with the LF derivations for
$\lfchecktype{\subst{\sigma}{\hsubst{\theta}{G}}}
             {\hsubst{\theta}{M_i}}
             {\hsubst{\theta}{A_i'}}$
which must have heights smaller than the height assigned to $@^i$ by $\Upsilon$.

\case{\absL}
Consider for the conclusion sequent an arbitrary height assignment 
$\Upsilon$ and closed instance identified by $\theta$ and $\sigma$.
Assume that $n$ is a name which does not appear in $\theta$ or $\sigma$;
as Theorem~\ref{th:perm-valid} would permit permuting this name to one
which does not.
Suppose all formulas in 
$\subst{\sigma}{\hsubst{\theta}{(\setand{\Omega}{\fatmann{G}{\lflam{x}{M}:\typedpi{x}{A_1}{A_2}}{Ann}})}}$
are valid with respect to $\Upsilon$, as otherwise this instance would be vacuously valid.
In particular then, the closed formula
$\subst{\sigma}{\hsubst{\theta}{\fatmann{G}{\lflam{x}{M}:\typedpi{x}{A_1}{A_2}}{Ann}}}$
is valid with respect to $\Upsilon$.
Thus letting $k$ be the height assigned to $@^i$ by $\Upsilon$ there are derivations of
$\lfctx{\subst{\sigma}{\hsubst{\theta}{G}}}$,
$\lftype{\subst{\sigma}{\hsubst{\theta}{G}}}
        {\hsubst{\theta}{\typedpi{x}{A_1}{A_2}}}$, and
$\lfchecktype{\subst{\sigma}{\hsubst{\theta}{G}}}
             {\hsubst{\theta}{\lflam{x}{M}}}
             {\hsubst{\theta}{\typedpi{x}{A_1}{A_2}}}$
of height $k$.
But then clearly from subderivations of these and an application of \ctxterm\ 
there would be derivations shorter than $k$ for the LF judgements
$\lfctx{\subst{\sigma}{\hsubst{\theta}{(G,n:A_1)}}}$,
$\lftype{\subst{\sigma}{\hsubst{\theta}{(G,n:A_1)}}}
        {\hsubst{\theta}{\hsubst{\{\langle x,n,\erase{A_1}\rangle\}}{A_2}}}$, as well as for
$\lfchecktype{\subst{\sigma}{\hsubst{\theta}{(G,n:A_1)}}}
             {\hsubst{\theta}{\hsubst{\{\langle x,n,\erase{A_1}\rangle\}}{M}}}
             {\hsubst{\theta}{\hsubst{\{\langle x,n,\erase{A_1}\rangle\}}{A_2}}}$.
Thus all the formulas in 
$\subst{\sigma}{\hsubst{\theta}{(\setand{\Omega}{\fatmann{G,n:A_1}{\hsubst{\{\langle x,n,\erase{A_1}\rangle\}}{M}:\hsubst{\{\langle x,n,\erase{A_1}\rangle\}}{A_2}}{*^i}})}}$
are valid and so by the validity of the premise sequents
$\subst{\sigma}{\hsubst{\theta}{F}}$ will be valid, as needed.

\case{\absR}
Consider for the conclusion sequent an arbitrary height assignment 
$\Upsilon$ and closed instance identified by $\theta$ and $\sigma$.
Assume that $n$ is a name which does not appear in $\theta$ or $\sigma$;
as Theorem~\ref{th:perm-valid} would permit permuting this name to one
which does not.
Suppose all formulas in $\subst{\sigma}{\hsubst{\theta}{\Omega}}$
are valid with respect to $\Upsilon$, as otherwise this instance would be vacuously valid.
Since $\theta$ and $\sigma$ also identify a closed instance of the premise sequent
and all formulas in $\subst{\sigma}{\hsubst{\theta}{\Omega}}$ are valid,
we can infer from the validity of the premise sequent that
$\subst{\sigma}{\hsubst{\theta}{\fatmann{G,n:A_1}{\hsubst{\{\langle x,n,\erase{A_1}\rangle\}}{M}:\hsubst{\{\langle x,n,\erase{A_1}\rangle\}}{A_2}}{*^i}}}$ is valid
with respect to $\Upsilon$.
So by definition there exist LF derivations for
\begin{enumerate}
\item$\lfctx{\subst{\sigma}{\hsubst{\theta}{(G,n:A_1)}}}$,
\item$\lftype{\subst{\sigma}{\hsubst{\theta}{(G,n:A_1)}}}
        {\hsubst{\theta}{\hsubst{\{\langle x,n,\erase{A_1}\rangle\}}{A_2}}}$, and
\item$\lfchecktype{\subst{\sigma}{\hsubst{\theta}{(G,n:A_1)}}}
             {\hsubst{\theta}{\hsubst{\{\langle x,n,\erase{A_1}\rangle\}}{M}}}
             {\hsubst{\theta}{\hsubst{\{\langle x,n,\erase{A_1}\rangle\}}{A_2}}}$
\end{enumerate}
of height less than that assigned to $@^i$ by $\Upsilon$.
But then clearly we can construct derivations also for the LF judgements
$\lfctx{\subst{\sigma}{\hsubst{\theta}{G}}}$,
$\lftype{\subst{\sigma}{\hsubst{\theta}{G}}}
        {\hsubst{\theta}{\typedpi{x}{A_1}{A_2}}}$, and
$\lfchecktype{\subst{\sigma}{\hsubst{\theta}{G}}}
             {\hsubst{\theta}{\lflam{x}{M}}}
             {\hsubst{\theta}{\typedpi{x}{A_1}{A_2}}}$
of a height satisfying the annotation assigned to $@^i$ by $\Upsilon$.
Therefore the formula 
$\subst{\sigma}{\hsubst{\theta}{\fatmann{G}{\lflam{x}{M}:\typedpi{x}{A_1}{A_2}}{@^i}}}$
will be valid with respect to $\Upsilon$, as needed.
\end{proof}

\section{Proof Rules Encoding LF Meta-Theorems}
\label{sec:metathm-rules}

The meta-theorems concerning LF derivability that were discussed in
Section~\ref{sec:lf-metathm} are often useful in informal arguments
about the properties of LF specifications.
In this section, we describe proof rules that provide a means for
using these meta-theorems in formal reasoning based on our logic. 

The weakening meta-theorem has a proviso that the type for the new
binding introduced into the context must be well-formed.
This must be reflected in the proof rule that captures the content of
this meta-theorem by a collection of premises that check that this
property of the type will hold.
We refer to the process that generates the typing judgements that must
be checked towards this end as \emph{type decomposition}.
In addition to the type, this process is parameterized by a collection
of nominal constants, a context variable context and a context
expression.
The result of type decomposition is a collection of triples that
comprise an extended collection of nominal constants, a modified
context variable context and an atomic formula expressing a typing
judgement that must be checked.
This idea is made precise below.

\begin{definition}[Decomposition of Types]
The decomposition of a canonical type $A$ with respect to a
collection of nominal constants $\mathbb{N}$, a context variable
context $\Xi$ and a context expression $G$, notated as
$\typdecomp{\mathbb{N}}{\Xi}{G}{A}$, is defined as follows: 
\begin{enumerate}
\item If $A$ is a type of the form $(a\app M_1\ldots M_n)$
  where $a:\typedpi{x_1}{A_1}{\ldots\typedpi{x_n}{A_n}{\type}}
  \in\Sigma$, then $\typdecomp{\mathbb{N}}{\Xi}{G}{A}$ is the
  collection
\[\bigcup_{i \in 1..n}\left\{
    \left(
      \mathbb{N},
      \Xi,
      \fatm{G}
           {M_i:
             (\hsubst{\{\langle x_1,M_1,\erase{A_1}\rangle,\ldots,
                        \langle x_{i-1},M_{i-1},\erase{A_{i-1}}\rangle\}}
                     {A_i})}
    \right)\right\}.\]

\item If $A=\typedpi{x}{A_1}{A_2}$, then letting 
$G'$ be $G,n:A_1$, $\mathbb{N}'$ be
  $\mathbb{N} \cup \{ n \}$, and $\Xi'$ be the set 
\begin{tabbing}
\qquad\=$\left\{\ctxvarty{\Gamma}{\mathbb{N}'_{\Gamma}}{\ctxty{\mathcal{C}}{\mathcal{G}}}\ \middle|\ \right.$\=\kill
\>$\left\{\ \ctxvarty{\Gamma}{\mathbb{N}'_{\Gamma}}{\ctxty{\mathcal{C}}{\mathcal{G}}}\ \middle|\  
              \ctxvarty{\Gamma}{\mathbb{N}_{\Gamma}}{\ctxty{\mathcal{C}}{\mathcal{G}}}\in\Xi
                          \mbox{ and }\right.$\\
\>\>          $\left. \mathbb{N}'_{\Gamma}\ \mbox{is}\ \mathbb{N}_{\Gamma} \cup \{n\}\
                          \mbox{ if }\Gamma\ \mbox{occurs in}\ G\ \mbox{and}\
                          \mathbb{N}_{\Gamma}\ \mbox{ otherwise }
         \right\}$ 
\end{tabbing}
for some nominal constant $n:\erase{A_1} \in \noms \setminus \mathbb{N}$, 
$\typdecomp{\mathbb{N}}{\Xi}{G}{A}$ is the collection
$\typdecomp{\mathbb{N}}{\Xi}{G}{A_1}\cup
    \typdecomp{\mathbb{N}'}{\Xi'}{G'}{\hsubst{\{\langle x,
        n,\erase{A_1}\rangle\}}{A_2}}.$
\end{enumerate} 
\end{definition}
\noindent Note that the decomposition will not always be defined, but as we will show, 
it will be defined in all the cases we need to use it.
Further, we will wish to be careful in how the name $n$ in the second clause 
of this definition is selected in practice for this decomposition to be 
useful in reasoning.
We will show that the way this definition is used will be sound regardless of
the choice of name, however choices for $n$ which are not fresh will lead
to generating formulas which are never provable and thus we will wish to
avoid these name in an implementation of the proof system.

\begin{figure}
\[\begin{array}{c}
\infer[\lfwk]
      {\seq[\mathbb{N}]{\Psi}{\Xi}{\Omega}{\fimp{\fatmann{G}{M:A}{Ann}}{\fatmann{G,n:B}{M:A}{Ann}}}}
      {\begin{array}{c}
         n\mbox{ does not appear in }\ M,\ A, \mbox{ or the explicit bindings in } G
         \\
         \mbox{ if }\ctxvarty{\Gamma_i}{\mathbb{N}_i}{\ctxty{\mathcal{C}_i}{\mathcal{G}_i}}\in\Xi\mbox{ and }\Gamma_i\mbox{ appears in }G,\mbox{ then }n\in\mathbb{N}_i
         \\
         \left\{
           \seq[\mathbb{N}']{\Psi}{\Xi'}{\Omega}{F'}\ |\ (\mathbb{N}',\Xi',F')\in\typdecomp{\mathbb{N}}{\Xi}{G}{B}
         \right\}
       \end{array}}
\\\ \\
\infer[\lfstr]
      {\seq[\mathbb{N}]{\Psi}{\Xi}{\Omega}{\fimp{\fatmann{G,n:B}{M:A}{Ann}}{\fatmann{G}{M:A}{Ann}}}}
      {\begin{array}{c}
         n\mbox{ does not appear in }M\mbox{, } A\mbox{, or the explicit bindings in }G
       \end{array}}
\\\ \\
\infer[\lfperm]
      {\seq[\mathbb{N}]{\Psi}
           {\Xi}
           {\Omega}
           {\fimp{\fatmann{G}{M:A}{Ann}}
                 {\fatmann{G'}{M:A}{Ann}}}}
      {\begin{array}{c}
          G = G'',n_1:A_1,n_2:A_2,n_3:A_3,\ldots,n_m:A_m \\
          G' = G'',n_2:A_2,n_1:A_1,n_3:A_3,\ldots,n_m:A_m \\
          \begin{array}{cc}
            n_1\mbox{ does not appear in }A_2
          \end{array}
       \end{array}}
\\\ \\
\infer[\lfinst]
      {\seq[\mathbb{N}]{\Psi}
           {\Xi}
           {\Omega}
           {\fimp{\fatm{G,G'}{M:A}}{\fimp{
                  \fatm{G}{M_1:A_1}}{
                  \fatm{G''}{M':A'}}}}}
      {\begin{array}{c}
         \begin{array}{cc}
           G' = n_1:A_1,\ldots,n_m:A_m
             &
           G''=G,n_2:A_2',\ldots,n_m:A_m'
           \\
           \hsub{\{\langle n_1, M_1, \erase{A_1}\rangle\}}{M}{M'}
             &
           \hsub{\{\langle n_1, M_1, \erase{A_1}\rangle\}}{A}{A'}
         \end{array}
         \\
         \left\{\hsub{\{\langle n_1, M_1, \erase{A_1}\rangle\}}{A_i}{A_i'}
             \ \mid\ 2\leq i\leq m\right\}
       \end{array}}
\end{array}\]
\caption{Rules Encoding Meta-Theoretic Properties of LF}\label{fig:rules-meta}
\end{figure}
  
Figure~\ref{fig:rules-meta} presents the proof rules which encode the
content of the LF meta-theorems.
Note in particular that the goal formula of the conclusion sequent in
each of these rules expresses the meta-theorem in terms of the atomic
formulas in the logic.
The symbol $Ann$ in the first three rules, which encode weakening,
strengthening, and context permutation, stands for no annotation,
$@^i$ or $*^i$ for some $i$, used in the same manner throughout the
rule instance.
Permitting annotations in these rules is justified by the fact that
the corresponding meta-theorems guarantee the preservation of the
structure, and thus height, of LF derivations.
Clearly instantiation does not share this property and so we do not consider
annotated formulas for this proof rule.

As before we show that extending the proof system with these rules
will maintain both the well-formedness and soundness properties.
The following lemma, which ensures that whenever the decomposition is
performed on a well-formed type relative to well-formed contexts, the
result is defined and further that the formulas will themselves be
well-formed, will be useful in showing the well-formedness property for 
the weakening proof rule.

\begin{lemma}
\label{lem:tydecomp-wf}
Assume $\Xi$ is a context variable context such that for each
$\ctxvarty{\Gamma_i}{\mathbb{N}_i}{\ctxty{\mathcal{C}_i}{\mathcal{G}_i}}\in\Xi$
there is a derivation of
$\wfctxvarty{\mathbb{N} \setminus \mathbb{N}_i}{\Psi}{\ctxty{\mathcal{C}_i}{\mathcal{G}_i}}$.
Also assume
$\wfctx{\mathbb{N}\cup\STLCGamma_0\cup\Psi}{\Xi}{G}$ and
$\wftype{\mathbb{N}\cup\STLCGamma_0\cup\Psi}{B}$ have derivations. 
Then $\typdecomp{\mathbb{N}}{\Xi}{G}{B}$ is defined and for each 
$(\mathbb{N}',\Xi',F)\in\typdecomp{\mathbb{N}}{\Xi}{G}{B}$ it is the case that
\begin{enumerate}
\item for each
$\ctxvarty{\Gamma_i'}{\mathbb{N}_i'}{\ctxty{\mathcal{C}_i'}{\mathcal{G}_i'}}\in\Xi'$
there is a derivation of
$\wfctxvarty{\mathbb{N}' \setminus \mathbb{N}_i'}{\Psi}{\ctxty{\mathcal{C}_i'}{\mathcal{G}_i'}}$ and
\item $F$ is a well-formed formula with respect to $(\mathbb{N}'\cup\STLCGamma_0\cup\Psi)$ and $\ctxsanstype{\Xi'}$.
\end{enumerate}
\end{lemma}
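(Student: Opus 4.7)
The plan is to proceed by induction on the structure of the type $B$, treating the two clauses in the definition of $\typdecompsans$ as the base case and the inductive case respectively. The statement has two parts to establish simultaneously at each step: first, that the decomposition is defined (which amounts to showing the relevant substitutions yield results and that the recursive call applies to a well-formed input), and second, that each triple in the output satisfies the well-formedness conditions.

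For the base case, $B$ is atomic of the form $(a\app M_1 \ldots M_n)$. From the derivation of $\wftype{\mathbb{N}\cup\STLCGamma_0\cup\Psi}{B}$ I would extract, via inversion of the arity kinding rules in Figure~\ref{fig:arity-kinding}, that $a:\typedpi{x_1}{A_1}{\ldots\typedpi{x_n}{A_n}{\type}} \in \Sigma$ and that $\stlctyjudg{\mathbb{N}\cup\STLCGamma_0\cup\Psi}{M_i}{\erase{A_i}}$ is derivable for each $i$. To see that each generated formula is well-formed, I would show by a subsidiary induction on $i$ that the substitution $\{\langle x_1,M_1,\erase{A_1}\rangle,\ldots,\langle x_{i-1},M_{i-1},\erase{A_{i-1}}\rangle\}$ is arity type preserving with respect to $\mathbb{N}\cup\STLCGamma_0\cup\Psi$ and that $A_i$ respects the appropriate extended arity context (using Lemma~\ref{lem:arityrespecting} applied to the typing of $a$ in $\Sigma$). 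An application of Theorem~\ref{th:aritysubs-ty} then yields both that the substitution is defined on $A_i$ and that the resulting type is well-formed with respect to $\mathbb{N}\cup\STLCGamma_0\cup\Psi$. Since $G$, $\Xi$, and $\mathbb{N}$ are unchanged in this case, clause~(1) of the conclusion follows from the assumption, and clause~(2) reduces to wellformedness of the generated atomic formulas, which follows from Theorem~\ref{th:erasure} to match erased types.

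For the inductive case $B = \typedpi{x}{A_1}{A_2}$, I would first invert the derivation of $\wftype{\mathbb{N}\cup\STLCGamma_0\cup\Psi}{B}$ to obtain derivations for $\wftype{\mathbb{N}\cup\STLCGamma_0\cup\Psi}{A_1}$ and $\wftype{\aritysum{\{x:\erase{A_1}\}}{(\mathbb{N}\cup\STLCGamma_0\cup\Psi)}}{A_2}$. The inductive hypothesis applied to $A_1$ with the unchanged parameters handles the first part of the union. For the second part, I pick $n : \erase{A_1} \in \noms\setminus\mathbb{N}$ and form $G' = G,n:A_1$, $\mathbb{N}' = \mathbb{N}\cup\{n\}$, and $\Xi'$ as in the definition. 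I must verify three facts before applying the inductive hypothesis to the second recursive call: (a)~the substitution $\{\langle x,n,\erase{A_1}\rangle\}$ is arity type preserving with respect to $\mathbb{N}'\cup\STLCGamma_0\cup\Psi$ so that, by Theorem~\ref{th:aritysubs-ty}, $\hsubst{\{\langle x,n,\erase{A_1}\rangle\}}{A_2}$ exists and is a well-formed type with respect to $\mathbb{N}'\cup\STLCGamma_0\cup\Psi$; (b)~$\wfctx{\mathbb{N}'\cup\STLCGamma_0\cup\Psi}{\Xi'}{G,n:A_1}$, which follows by applying the rules in Figure~\ref{fig:wfform} using the well-formedness of $G$ (extended via Theorem~\ref{th:wf-form-wk}) and of $A_1$; and (c)~each $\ctxvarty{\Gamma_i}{\mathbb{N}_i'}{\ctxty{\mathcal{C}_i}{\mathcal{G}_i}} \in \Xi'$ satisfies $\wfctxvarty{\mathbb{N}'\setminus\mathbb{N}_i'}{\Psi}{\ctxty{\mathcal{C}_i}{\mathcal{G}_i}}$. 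For (c), observe that if $\Gamma$ occurs in $G$ then $\mathbb{N}_i' = \mathbb{N}_i \cup\{n\}$ and so $\mathbb{N}'\setminus\mathbb{N}_i' = \mathbb{N}\setminus\mathbb{N}_i$, giving the desired judgement directly from the hypothesis; if $\Gamma$ does not occur in $G$, then $\mathbb{N}'\setminus\mathbb{N}_i' = (\mathbb{N}\setminus\mathbb{N}_i)\cup\{n\}$ and we use Theorem~\ref{th:ctx-ty-wk} to weaken the assumption.

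The main obstacle I expect is the careful bookkeeping in case~(c) above: we must ensure that adding $n$ only to the exclusion sets of those context variables that actually occur in $G$ is exactly what is needed for the invariants on $\Xi'$, and that this choice remains coherent through the recursive call, where further nominal constants may be added by deeper descents into $A_2$. All other steps reduce to straightforward applications of the arity-typing, substitution, and context-weakening theorems already developed; the conclusion of the lemma follows by taking the union of the outputs from the two recursive calls and noting that well-formedness is preserved under each.
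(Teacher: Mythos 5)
Your proposal is correct and follows essentially the same route as the paper's proof: a structural induction on $B$, using Theorem~\ref{th:aritysubs-ty} to show the substituted types exist and remain well-formed, and rebuilding the well-formedness of the extended context and context variable context before the recursive calls. If anything, your handling of the context variable types in $\Xi'$ is slightly more careful than the paper's, which only notes the identity $\mathbb{N}\setminus\mathbb{N}_i=(\mathbb{N},n{:}\erase{A_1})\setminus(\mathbb{N}_i,n{:}\erase{A_1})$ and leaves implicit the weakening (via Theorem~\ref{th:ctx-ty-wk}) needed for variables not occurring in $G$, which you make explicit.
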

\begin{proof}
We prove this by induction on the formation of the type $B$.
Consider the cases for the structure of $B$.

\case{$B$ is an atomic type $P$.}
Given that
$\wftype{\mathbb{N}\cup\STLCGamma_0\cup\Psi}{B}$ is derivable,
$B$ is of the form $(a\app M_1\ldots M_n)$ and
there must exist $a:K$ in the LF signature for a kind of the form
$\typedpi{x_1}{A_1}{\ldots\typedpi{x_n}{A_n}{\type}}$ and subderivations of
$\stlctyjudg{\mathbb{N}\cup\STLCGamma_0\cup\Psi}{M_i}{\erase{A_i}}$ for each $M_i$.
From these observations $\typdecomp{\mathbb{N}}{\Xi}{G}{B}$ is clearly
defined. 
Noting that the arity kinding for $B$ will ensure the substitution application
is defined, let
$\hsubst{\{\langle x_1,M_1,\erase{A_1}\rangle,\ldots,
                \langle x_{i-1},M_{i-1},\erase{A_{i-1}}\rangle\}}
             {A_i}$
be denoted by $A_i'$ for each $i$.
Then we can express the collection $\typdecomp{\mathbb{N}}{\Xi}{G}{B}$
as the set of tuples $\bigcup_{i\in1..n}(\mathbb{N},\Xi,\fatm{G}{M_i:A_i'})$.

Clearly, for any tuple in $\typdecomp{\mathbb{N}}{\Xi}{G}{B}$
the context variable context satisfies condition (1).
Since $a:K$ is from the LF signature it is obvious that $\lftype{\emptyctx}{K}$
is derivable in LF, and thus for each $i$, $1\leq i\leq n$,
$\wftype{\aritysum{\{x_1:A_1,\ldots,x_{i-1}:A_{i-1}\}}{(\mathbb{N}\cup\STLCGamma_0\cup\Psi)}}{A_i}$ 
is derivable.
By Theorem~\ref{th:aritysubs-ty} there must exist derivations of
$\wftype{\mathbb{N}\cup\STLCGamma_0\cup\Psi}{A_i'}$ for each $i$.
We can then conclude that there are derivations of 
$\wfctx{\mathbb{N}\cup\STLCGamma_0\cup\Psi}{\Xi}{G}$,
$\wftype{\mathbb{N}\cup\STLCGamma_0\cup\Psi}{A_i'}$
and 
$\stlctyjudg{\mathbb{N}\cup\STLCGamma_0\cup\Psi}{M_i}{\erase{A_i}}$, 
and therefore for each $i$, $1\leq i\leq n$,
$\wfform{\mathbb{N}\cup\STLCGamma_0\cup\Psi}{\ctxsanstype{\Xi}}{\fatm{G}{M_i:A_i'}}$
is derivable, satisfying condition (2).

\case{$B$ is a canonical type $\typedpi{x}{A_1}{A_2}$.}
Then letting
$G'=G,n:\erase{A_1}$, $\mathbb{N}'=\mathbb{N}\cup\{n\}$, and
\begin{tabbing}
\qquad\=$\left\{\ctxvarty{\Gamma}{\mathbb{N}'_{\Gamma}}{\ctxty{\mathcal{C}}{\mathcal{G}}}\ \middle|\ \right.$\=\kill
\>$\Xi'=\left\{\ \ctxvarty{\Gamma}{\mathbb{N}'_{\Gamma}}{\ctxty{\mathcal{C}}{\mathcal{G}}}\ \middle|\  
              \ctxvarty{\Gamma}{\mathbb{N}_{\Gamma}}{\ctxty{\mathcal{C}}{\mathcal{G}}}\in\Xi
                          \mbox{ and }\right.$\\
\>\>          $\left. \mathbb{N}'_{\Gamma}\ \mbox{is}\ \mathbb{N}_{\Gamma} \cup \{n\}\
                          \mbox{ if }\Gamma\ \mbox{occurs in}\ G\ \mbox{and}\
                          \mathbb{N}_{\Gamma}\ \mbox{ otherwise }
         \right\}$ 
\end{tabbing}
for some new nominal constant $n$,
the decomposition $\typdecomp{\mathbb{N}}{\Xi}{G}{B}$
is defined if both $\typdecomp{\mathbb{N}}{\Xi}{G}{A_1}$ and 
$\typdecomp{\mathbb{N}'}{\Xi'}{G'}{\hsubst{\{\langle x, n,\erase{A_1}\rangle\}}{A_2}}$
are defined, and it will be the union of these two sets.
Given that 
$\mathbb{N}\setminus\mathbb{N}_i=(\mathbb{N},n:\erase{A_1})\setminus(\mathbb{N}_i,n:\erase{A_1})$
for any $\mathbb{N}_i$, the context variable context $\Xi'$ will clearly satisfy the
requirements of this lemma.
From the derivation of $\wftype{\mathbb{N}\cup\STLCGamma_0\cup\Psi}{B}$
we can obtain derivations of both
$\wftype{\mathbb{N}\cup\STLCGamma_0\cup\Psi}{A_1}$
and
$\wftype{\aritysum{\{x:\erase{A_1}\}}{\mathbb{N}\cup\STLCGamma_0\cup\Psi}}{A_2}$.
By Theorem~\ref{th:aritysubs-ty} there must then exist a derivation for
$\wftype{{\mathbb{N}'\cup\STLCGamma_0\cup\Psi}}{\hsubst{\{\langle x,n,\erase{A_1}\rangle\}}{A_2}}$.
From the derivations of $\wfctx{\mathbb{N}\cup\STLCGamma_0\cup\Psi}{\Xi}{G}$
and $\wftype{\mathbb{N}\cup\STLCGamma_0\cup\Psi}{A_1}$ we can construct
a derivation for the judgement
$\wfctx{\mathbb{N}'\cup\STLCGamma_0\cup\Psi}{\Xi}{G,n:A_1}$.
The type $A_1$ is clearly smaller than $B$, and it is straightforward
to conclude that
$\hsubst{\{\langle x,n,\erase{A_1}\rangle\}}{A_2}$ must be as well.
Thus by invoking the inductive hypothesis twice we determine that both
$\typdecomp{\mathbb{N}}{\Xi}{G}{A_1}$ and 
$\typdecomp{\mathbb{N}'}{\Xi'}{G'}{\hsubst{\{\langle x, n,\erase{A_1}\rangle\}}{A_2}}$
are defined, and that each tuple in these sets satisfy the conditions (1) \& (2).
Therefore $\typdecomp{\mathbb{N}}{\Xi}{G}{B}$ will be defined and
each tuple in this set will satisfy the necessary conditions.
\end{proof}

We now show the well-formedness property for the rules in Figure~\ref{fig:rules-meta}.
The only interesting case to consider is for \lfwk.
\begin{theorem}\label{th:meta-wf}
The following property holds of the rules in Figure~\ref{fig:rules-meta}: if the
conclusion sequent is well-formed, the premises expressing typing
conditions have derivations and the conditions expressed by the other,
non-sequent premises are satisfied, then the premise sequents must be
well-formed.
\end{theorem}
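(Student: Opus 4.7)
My plan is to consider each of the four rules in Figure~\ref{fig:rules-meta} as a case, noting immediately that \lfstr, \lfperm, and \lfinst have no sequent premises—their premises are entirely side conditions about name occurrence, context shapes, or hereditary substitution judgements. For these three rules the claim therefore holds vacuously, so the real content of the theorem is the case of \lfwk.

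For the \lfwk\ case, I would first unpack the well-formedness of the conclusion sequent $\seq[\mathbb{N}]{\Psi}{\Xi}{\Omega}{\fimp{\fatmann{G}{M:A}{Ann}}{\fatmann{G,n:B}{M:A}{Ann}}}$. Using Definition~\ref{def:ann-seq-wf} (which erases annotations before checking well-formedness), the goal formula is well-formed with respect to $\mathbb{N}\cup\STLCGamma_0\cup\Psi$ and $\ctxsanstype{\Xi}$. Inspecting the rules of Figure~\ref{fig:wfform}, from the well-formedness of the right-hand subformula $\fatm{G,n:B}{M:A}$ I can extract derivations of $\wfctx{\mathbb{N}\cup\STLCGamma_0\cup\Psi}{\ctxsanstype{\Xi}}{G}$ and $\wftype{\mathbb{N}\cup\STLCGamma_0\cup\Psi}{B}$, since the formation rule for $G,n:B$ has $G$ well-formed and $B$ well-formed as its premises. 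From the well-formedness of the conclusion sequent I also get that each $\ctxvarty{\Gamma_i}{\mathbb{N}_i}{\ctxty{\mathcal{C}_i}{\mathcal{G}_i}} \in \Xi$ satisfies $\wfctxvarty{\mathbb{N}\setminus\mathbb{N}_i}{\Psi}{\ctxty{\mathcal{C}_i}{\mathcal{G}_i}}$.

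With these ingredients in hand, I can invoke Lemma~\ref{lem:tydecomp-wf} on $\typdecomp{\mathbb{N}}{\Xi}{G}{B}$. The lemma guarantees that this set is defined and that each tuple $(\mathbb{N}',\Xi',F') \in \typdecomp{\mathbb{N}}{\Xi}{G}{B}$ satisfies (i)~the context variable well-formedness condition $\wfctxvarty{\mathbb{N}'\setminus\mathbb{N}_i'}{\Psi}{\ctxty{\mathcal{C}_i'}{\mathcal{G}_i'}}$ for every declaration in $\Xi'$, and (ii)~$\wfform{\mathbb{N}'\cup\STLCGamma_0\cup\Psi}{\ctxsanstype{\Xi'}}{F'}$. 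What remains is to verify that the assumptions $\Omega$ continue to be well-formed in the larger context of the premise sequent. For this I note by direct inspection of the clauses defining \typdecomp\ that the underlying set of context variables is unchanged—only the nominal constant annotations on them grow—so $\ctxsanstype{\Xi'} = \ctxsanstype{\Xi}$. Since $\Omega$ is well-formed with respect to $\mathbb{N}\cup\STLCGamma_0\cup\Psi$ and $\ctxsanstype{\Xi}$, and $\mathbb{N} \subseteq \mathbb{N}'$, a single appeal to Theorem~\ref{th:wf-form-wk} yields well-formedness of each formula in $\Omega$ over the larger context. Combining these observations gives well-formedness of each premise sequent.

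The main obstacle, such as it is, lies in the careful bookkeeping between the support set $\mathbb{N}$ and the name-annotations $\mathbb{N}_i$ on context variables as \typdecomp\ recurses through a $\Pi$-type and introduces fresh nominal constants: the newly introduced $n$ is added to $\mathbb{N}_i$ exactly for those $\Gamma_i$ occurring in $G$, and I need to observe that this preserves the invariant $\mathbb{N} \setminus \mathbb{N}_i = \mathbb{N}' \setminus \mathbb{N}_i'$ used by condition~(1). This is essentially the bookkeeping already handled inside Lemma~\ref{lem:tydecomp-wf}, so the argument here reduces to invoking that lemma cleanly rather than reproving its induction.
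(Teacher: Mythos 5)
Your proposal is correct and follows essentially the same route as the paper's proof: the three rules without sequent premises are dispatched vacuously, and for \lfwk\ you extract well-formedness of $G$ and $B$ from the conclusion sequent, invoke Lemma~\ref{lem:tydecomp-wf} for the decomposition's tuples, and use Theorem~\ref{th:wf-form-wk} to carry the well-formedness of $\Omega$ into the enlarged support set. The paper's own argument is exactly this, stated slightly more tersely.
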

\begin{proof}
Consider each of the rules defined in Figure~\ref{fig:rules-meta}.

\case{\lfstr, \lfperm, and \lfinst.}
There are no premise sequents in these rules so the property holds vacuously.

\case{\lfwk}
Given the well-formedness of the conclusion sequent, 
$\wfctxvarty{\mathbb{N} \setminus \mathbb{N}_i}{\Psi}{\ctxty{\mathcal{C}_i}{\mathcal{G}_i}}$
is derivable for each 
$\ctxvarty{\Gamma_i}{\mathbb{N}_i}{\ctxty{\mathcal{C}_i}{\mathcal{G}_i}}\in\Xi$
and 
$\wfform{\mathbb{N}\cup\Psi\cup\STLCGamma_0}{\ctxsanstype{\Xi}}{F}$ 
is derivable for each formula
$F\in\Omega\cup\{\fimp{\fatmann{G}{M:A}{Ann}}{\fatmann{G,n:B}{M:A}{Ann}}\}$.
From this it is obvious that $G$ is a well-formed context expression with
respect to $(\mathbb{N}\cup\Psi\cup\STLCGamma_0)$ and $\Xi$, and that
$B$ is a good type with respect to
$(\mathbb{N}\cup\Psi\cup\STLCGamma_0)$.
Thus by Lemma~\ref{lem:tydecomp-wf} the decomposition is defined and 
for each $(\mathbb{N}',\Xi',F)\in\typdecomp{\mathbb{N}}{\Xi}{G}{B}$
the following hold
\begin{enumerate}
\item for each
$\ctxvarty{\Gamma_i'}{\mathbb{N}_i'}{\ctxty{\mathcal{C}_i'}{\mathcal{G}_i'}}\in\Xi'$,
$\wfctxvarty{\mathbb{N}' \setminus \mathbb{N}_i'}{\Psi}{\ctxty{\mathcal{C}_i'}{\mathcal{G}_i'}}$
has a derivation and
\item 
$\wfform{\mathbb{N}'\cup\STLCGamma_0\cup\Psi}{\ctxsanstype{\Xi'}}{F}$ has a derivation.
\end{enumerate}
By Theorem~\ref{th:wf-form-wk} we can determine that every formula in $\Omega$
is well-formed under the extended contexts $(\mathbb{N}'\cup\Psi\cup\STLCGamma_0)$
and $\ctxsanstype{\Xi'}$, and thus all of the premise sequents
will be well-formed.
\end{proof}

The following lemma captures the intended meaning of the type decomposition;
that the typing judgements identified by
type decomposition are sufficient in determining that any instance
of the type is well-formed in LF.
Thus result will be key to proving soundness 
of the weakening proof rule.

\begin{lemma}
\label{lem:tydecomp-defn}
Assume $\Xi$ is a context variable context such that for each
$\ctxvarty{\Gamma_i}{\mathbb{N}_i}{\ctxty{\mathcal{C}_i}{\mathcal{G}_i}}\in\Xi$
there is a derivation of
$\wfctxvarty{\mathbb{N} \setminus \mathbb{N}_i}{\Psi}{\ctxty{\mathcal{C}_i}{\mathcal{G}_i}}$.
Also assume
$\wfctx{\mathbb{N}\cup\STLCGamma_0\cup\Psi}{\Xi}{G}$ and
$\wftype{\mathbb{N}\cup\STLCGamma_0\cup\Psi}{B}$ have derivations. 
Let $\theta$ and $\sigma$ be some substitutions such that
$\dom{\theta}\subseteq\Psi$, $\dom{\sigma}\subseteq\Xi$, and
for every $\mathbb{N}',\Xi',F'\in\typdecomp{\mathbb{N}}{\Xi}{G}{B}$
the formula $\subst{\sigma}{\hsubst{\theta}{F'}}$ is defined and valid,
then $\lftype{\subst{\sigma}{\hsubst{\theta}{G}}}{\hsubst{\theta}{B}}$
is derivable in LF.
\end{lemma}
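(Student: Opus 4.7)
The plan is to prove the lemma by induction on the structure of the type $B$, with the two cases mirroring the two clauses in the definition of $\typdecomp$. Throughout, I will freely use Theorem~\ref{th:subspermute} to commute $\theta$ with the hereditary substitutions that appear inside the definition, and Lemma~\ref{lem:tydecomp-wf} to know that the intermediate decomposition sets are defined and that their elements are well-formed formulas (so that $\subst{\sigma}{\hsubst{\theta}{F'}}$ is in fact defined).

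For the atomic base case $B = (a \app M_1 \ldots M_n)$ with $a : \typedpi{y_1}{A_1}{\ldots\typedpi{y_n}{A_n}{\type}} \in \Sigma$, the decomposition produces exactly the formulas $\fatm{G}{M_i : A_i'}$, where $A_i'$ is obtained by hereditary substitution of $M_1,\ldots,M_{i-1}$ for $y_1,\ldots,y_{i-1}$ in $A_i$. By hypothesis each $\subst{\sigma}{\hsubst{\theta}{\fatm{G}{M_i:A_i'}}}$ is defined and valid, so Definition~\ref{def:semantics} yields LF derivations of $\lfctx{\subst{\sigma}{\hsubst{\theta}{G}}}$ and $\lfchecktype{\subst{\sigma}{\hsubst{\theta}{G}}}{\hsubst{\theta}{M_i}}{\hsubst{\theta}{A_i'}}$ for every $i$. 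Using Theorem~\ref{th:subspermute} to push $\theta$ past the substitution defining $A_i'$ puts these typing premises into the form required by the converse direction of Theorem~\ref{th:atomickind} (with $K = \type$). That theorem then yields $\lfsynthkind{\subst{\sigma}{\hsubst{\theta}{G}}}{\hsubst{\theta}{B}}{\type}$, and the rule $\canonfamatom$ delivers $\lftype{\subst{\sigma}{\hsubst{\theta}{G}}}{\hsubst{\theta}{B}}$.

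For the $\Pi$-case $B = \typedpi{x}{A_1}{A_2}$, let $n$ be the nominal constant chosen by the second clause of $\typdecomp$, so that the decomposition set is $\typdecomp{\mathbb{N}}{\Xi}{G}{A_1} \cup \typdecomp{\mathbb{N}'}{\Xi'}{G,n:A_1}{\hsubst{\{\langle x,n,\erase{A_1}\rangle\}}{A_2}}$. Applying the induction hypothesis to the first component gives $\lftype{\subst{\sigma}{\hsubst{\theta}{G}}}{\hsubst{\theta}{A_1}}$ immediately. For the second component I first check that the hypotheses of the IH are met: each $\mathbb{N}_i$ associated with a context variable occurring in $G$ has been enlarged by $\{n\}$ exactly so that $\mathbb{N}' \setminus \mathbb{N}_i' = \mathbb{N} \setminus \mathbb{N}_i$ keeps $\wfctxvarty{\mathbb{N}' \setminus \mathbb{N}_i'}{\Psi}{\ctxty{\mathcal{C}_i}{\mathcal{G}_i}}$ derivable, and $\hsubst{\{\langle x,n,\erase{A_1}\rangle\}}{A_2}$ is arity well-formed by Theorem~\ref{th:aritysubs-ty}. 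The IH then gives $\lftype{\subst{\sigma}{\hsubst{\theta}{(G,n:A_1)}}}{\hsubst{\theta}{\hsubst{\{\langle x,n,\erase{A_1}\rangle\}}{A_2}}}$, and using Theorem~\ref{th:subspermute} to commute $\theta$ past $\{\langle x,n,\erase{A_1}\rangle\}$ rewrites this as $\lftype{\subst{\sigma}{\hsubst{\theta}{G}}, n:\hsubst{\theta}{A_1}}{\hsubst{\{\langle x,n,\erase{A_1}\rangle\}}{\hsubst{\theta}{A_2}}}$. The $\canonfampi$ rule, read in its nominal-constant-based form from Definition~\ref{def:semantics}, now concludes $\lftype{\subst{\sigma}{\hsubst{\theta}{G}}}{\typedpi{x}{\hsubst{\theta}{A_1}}{\hsubst{\theta}{A_2}}}$, which is $\lftype{\subst{\sigma}{\hsubst{\theta}{G}}}{\hsubst{\theta}{B}}$.

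The main subtlety I expect to navigate is the freshness of the nominal constant $n$ introduced by type decomposition. The definition only requires $n \in \noms \setminus \mathbb{N}$, not that $n$ avoid $\supportof{\theta}$ or $\supportof{\sigma}$; if it does not, both the commutation in Theorem~\ref{th:subspermute} and the $\canonfampi$ application collapse. I would handle this by appealing to Theorem~\ref{th:perm-lf}: before applying the second IH, I rename $n$ via a permutation to a constant $n'$ fresh for $\theta$, $\sigma$, and the existing support set, then transport the resulting LF derivation back along that permutation. Everything else in the argument is routine bookkeeping with the substitution and arity-typing machinery from Section~\ref{ssec:hsub}.
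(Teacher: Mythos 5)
Your proposal is correct and follows essentially the same route as the paper's proof: induction on the structure of $B$, with the atomic case discharged via Theorem~\ref{th:atomickind} after commuting $\theta$ with the inner hereditary substitution using Theorem~\ref{th:subspermute}, and the $\Pi$-case discharged by two applications of the induction hypothesis followed by \canonfampi. The freshness worry you raise at the end is already absorbed by the lemma's hypothesis --- if the chosen $n$ collided with $\supportof{\theta}$ or $\supportof{\sigma}$, the instance of the decomposed formula over the extended context could not be valid, so the permutation detour is unnecessary (though harmless).
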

\begin{proof}
We begin by observing that Lemma~\ref{lem:tydecomp-wf} ensures that the 
decomposition will be defined.
The proof then proceeds by induction on the formation of $B$.
Consider the possible structures for $B$.

\case{$B$ is an atomic type $P$.}
Given that
$\wftype{\mathbb{N}\cup\STLCGamma_0\cup\Psi}{B}$ is derivable,
$B$ is of the form $(a\app M_1\ldots M_n)$ and
there must exist $a:K$ in the LF signature for a kind of the form
$\typedpi{x_1}{A_1}{\ldots\typedpi{x_n}{A_n}{\type}}$.
Since the arity kinding for $B$ will ensure the substitution application
is defined, let $A_i'$ denote the type
$\hsubst{\{\langle x_1,M_1,\erase{A_1}\rangle,\ldots,
           \langle x_{i-1},M_{i-1},\erase{A_{i-1}}\rangle\}}
        {A_i}$
for each $i$, $1\leq i\leq n$.
We can then express the collection $\typdecomp{\mathbb{N}}{\Xi}{G}{B}$
as $\bigcup_{i\in 1..n}(\mathbb{N},\Xi,\fatm{G}{M_i:A_i'})$
given that it must be defined.
If the formula
$\subst{\sigma}{\hsubst{\theta}{\fatm{G}{M_i:A_i'}}}$ is defined and valid
for each 
$(\mathbb{N},\Xi,\fatm{G}{M_i:A_i'})\in\typdecomp{\mathbb{N}}{\Xi}{G}{B}$, then
there must be derivations of
$\lfctx{\subst{\sigma}{\hsubst{\theta}{G}}}$,
$\lftype{\subst{\sigma}{\hsubst{\theta}{G}}}
        {\hsubst{\theta}{A_i'}}$, and
$\lfchecktype{\subst{\sigma}{\hsubst{\theta}{G}}}
             {\hsubst{\theta}{M_i}}
             {\hsubst{\theta}{A_i'}}$
for each $i$.
By Theorem~\ref{th:subspermute} the type 
$\hsubst{\{\langle x_1, \hsubst{\theta}{M_1},\erase{A_1}\rangle,
             \ldots,
           \langle x_{i-1}, \hsubst{\theta}{M_{i-1}},\erase{A_{i-1}}\rangle\}}
        {\hsubst{\theta}
                {A_i}}$ 
is the same as $\hsubst{\theta}{A_i'}$
since $\theta$ does not make substitution for any $x_j$.
From these derivations and the fact that $a:K$ is in the LF signature,
Theorem~\ref{th:atomickind} ensures that there exists a derivation
$\lfsynthkind{\subst{\sigma}{\hsubst{\theta}{G}}}{\hsubst{\theta}{B}}{\type}$.
From this we can easily infer the derivability of
$\lftype{\subst{\sigma}{\hsubst{\theta}{G}}}{\hsubst{\theta}{B}}$, 
as is needed.

\case{$B$ is a canonical type $\typedpi{x}{A_1}{A_2}$.}
Letting
$G'=G,n:\erase{A_1}$, $\mathbb{N}'=\mathbb{N}\cup\{n\}$, and
\begin{tabbing}
\qquad\=$\left\{\ctxvarty{\Gamma}{\mathbb{N}'_{\Gamma}}{\ctxty{\mathcal{C}}{\mathcal{G}}}\ \middle|\ \right.$\=\kill
\>$\Xi'=\left\{\ \ctxvarty{\Gamma}{\mathbb{N}'_{\Gamma}}{\ctxty{\mathcal{C}}{\mathcal{G}}}\ \middle|\  
              \ctxvarty{\Gamma}{\mathbb{N}_{\Gamma}}{\ctxty{\mathcal{C}}{\mathcal{G}}}\in\Xi
                          \mbox{ and }\right.$\\
\>\>          $\left. \mathbb{N}'_{\Gamma}\ \mbox{is}\ \mathbb{N}_{\Gamma} \cup \{n\}\
                          \mbox{ if }\Gamma\ \mbox{occurs in}\ G\ \mbox{and}\
                          \mathbb{N}_{\Gamma}\ \mbox{ otherwise }
         \right\}$ 
\end{tabbing}
for a new nominal constant $n$,
the decomposition $\typdecomp{\mathbb{N}}{\Xi}{G}{B}$
is defined to be the set 
$\typdecomp{\mathbb{N}}{\Xi}{G}{A_1} \cup
    \typdecomp{\mathbb{N}'}{\Xi'}{G'}{\hsubst{\{\langle x, n,\erase{A_1}\rangle\}}{A_2}}$.
Given that 
$\mathbb{N}\setminus\mathbb{N}_i=(\mathbb{N},n:\erase{A_1})\setminus(\mathbb{N}_i,n:\erase{A_1})$
for any $\mathbb{N}_i$, the context variable context $\Xi'$ will then satisfy the
requirements of this lemma.
From the derivation of $\wftype{\mathbb{N}\cup\STLCGamma_0\cup\Psi}{B}$
we can extract derivations for
$\wftype{\mathbb{N}\cup\STLCGamma_0\cup\Psi}{A_1}$
and 
$\wftype{\aritysum{\{x:\erase{A_1}\}}{(\mathbb{N}\cup\STLCGamma_0\cup\Psi)}}{A_2}$.
By Theorem~\ref{th:aritysubs-ty}
$\wftype{{\mathbb{N}'\cup\STLCGamma_0\cup\Psi}}{\hsubst{\{\langle x,n,\erase{A_1}\rangle\}}{A_2}}$
must then be derivable.
We have as assumptions derivations for both
$\wfctx{\mathbb{N}\cup\STLCGamma_0\cup\Psi}{\Xi}{G}$
and $\wftype{\mathbb{N}\cup\STLCGamma_0\cup\Psi}{A_1}$, thus
we are able to construct a derivation for
$\wfctx{\mathbb{N}'\cup\STLCGamma_0\cup\Psi}{\Xi}{G,n:A_1}$.
The type $A_1$ is clearly smaller than $B$, and it is easy to argue that
$\hsubst{\{\langle x,n,\erase{A_1}\rangle\}}{A_2}$ must be as well.
Thus by invoking the inductive hypothesis twice we determine that both
$\lftype{\subst{\sigma}{\hsubst{\theta}{G}}}{\hsubst{\theta}{A_1}}$
and
$\lftype{\subst{\sigma}{\hsubst{\theta}{(G,n:A_1)}}}{\hsubst{\theta}{A_2'}}$
are derivable.
From these we can construct a derivation for
$\lftype{\subst{\sigma}{\hsubst{\theta}{G}}}{\hsubst{\theta}{\typedpi{x}{A_1}{A_2}}}$
using an application of \canonfampi.
\end{proof}

We conclude this section by proving the soundness of these proof
rules with respect to the semantics.

\begin{theorem}\label{th:meta-sound}
The following property holds for every instance of each of the rules
in Figure~\ref{fig:rules-meta}:
if the premises expressing typing judgements are derivable, the
conditions described in the other non-sequent premises are satisfied
and all the premise sequents are valid, 
then the conclusion sequent must also be valid.
\end{theorem}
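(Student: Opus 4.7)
The plan is to argue each of the four rules separately by fixing an arbitrary height assignment $\Upsilon$ and an arbitrary closed instance of the conclusion sequent identified by $\theta$ and $\sigma$. If any formula in $\subst{\sigma}{\hsubst{\theta}{\Omega}}$ fails to hold under $\Upsilon$ the instance is vacuously valid, so we may assume each assumption formula is valid with respect to $\Upsilon$. Since each conclusion goal formula is an implication (or chain of implications) between atomic formulas, the remaining work is to show: whenever the antecedent atomic formula(s) are valid with respect to $\Upsilon$, the consequent atomic formula is also valid with respect to $\Upsilon$. This in turn reduces to extracting LF derivations from the antecedents, invoking the appropriate LF meta-theorem from Section~\ref{sec:lf-metathm}, and verifying that the resulting LF derivation satisfies the annotation $Ann$ attached to the consequent.

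For \lfstr and \lfperm, the soundness argument is a direct application of Theorems~\ref{th:strengthening} and \ref{th:exchange}. Validity of $\subst{\sigma}{\hsubst{\theta}{\fatmann{G,n:B}{M:A}{Ann}}}$ (respectively the permuted version) yields the LF derivations required for $\lfctx{\cdot}$, $\lftype{\cdot}{\cdot}$, and $\lfchecktype{\cdot}{\cdot}{\cdot}$; the side conditions about $n$ not appearing in $M$, $A$ or the explicit bindings of $G$ ensure the hypotheses of Theorem~\ref{th:strengthening} are met, and similarly for \lfperm. Because both meta-theorems preserve the \emph{structure} of derivations, the height constraint imposed by $Ann \in \{*^i, @^i\}$ is automatically preserved. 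In both cases we may appeal to Theorem~\ref{th:perm-valid} to restrict attention to closed instances in which $\theta$ and $\sigma$ avoid $n$, so that the syntactic non-occurrence hypotheses of the meta-theorems transfer from the rule side conditions to the LF judgements.

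For \lfwk, the additional obstacle is producing a derivation of $\lftype{\subst{\sigma}{\hsubst{\theta}{G}}}{\hsubst{\theta}{B}}$, which Theorem~\ref{th:weakening} requires. This is precisely the role of the premise sequents indexed by $\typdecomp{\mathbb{N}}{\Xi}{G}{B}$: assuming their validity, every formula in $\subst{\sigma}{\hsubst{\theta}{\typdecomp{\mathbb{N}}{\Xi}{G}{B}}}$ is valid, and Lemma~\ref{lem:tydecomp-defn} then delivers the needed LF type-formation derivation. Extending the LF context with $n:\hsubst{\theta}{B}$, applying Theorem~\ref{th:weakening}, and observing that weakening preserves derivation structure (hence height) completes the case. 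A minor subtlety is that the specific fresh nominal constant chosen inside $\typdecomp{\mathbb{N}}{\Xi}{G}{B}$ may need to be aligned with the $n$ appearing in the goal formula, which is again handled by Theorem~\ref{th:perm-valid}.

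The most delicate case, and the expected main obstacle, is \lfinst. Here we combine the two antecedent typing derivations using Theorem~\ref{th:transitivity} with $n_1$ playing the role of $x_0$ and $M_1$ the role of $M_0$. The LF theorem guarantees existence of a substituted-context derivation, and the premises $\hsub{\{\langle n_1, M_1, \erase{A_1}\rangle\}}{M}{M'}$, $\hsub{\{\langle n_1, M_1, \erase{A_1}\rangle\}}{A}{A'}$, and $\hsub{\{\langle n_1, M_1, \erase{A_1}\rangle\}}{A_i}{A_i'}$ combined with the uniqueness of hereditary substitution (Theorem~\ref{th:uniqueness}) ensure that the substituted forms produced by Theorem~\ref{th:transitivity} agree syntactically with $\hsubst{\theta}{M'}$, $\hsubst{\theta}{A'}$, and the bindings of $\hsubst{\theta}{G''}$. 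The hard part is aligning the order of substitutions: $\theta$ (the outer closed substitution for eigenvariables) and the inner instantiation substitution $\{\langle n_1, M_1, \erase{A_1}\rangle\}$ must be shown to commute on the relevant expressions, which follows from Theorem~\ref{th:subspermute} together with the observation that $n_1$ is a nominal constant and hence not in the domain of $\theta$. Since \lfinst\ does not allow annotations on its formulas, no height-tracking is needed, so the argument terminates once the LF derivation for the substituted judgement is produced.
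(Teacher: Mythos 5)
Your proposal matches the paper's proof in both structure and all essential ingredients: the reduction to an arbitrary height assignment and closed instance, the use of Lemma~\ref{lem:tydecomp-defn} with the premise sequents to discharge the type-formation obligation in \lfwk, the appeal to Theorems~\ref{th:weakening}, \ref{th:strengthening}, and \ref{th:exchange} together with the observation that structure preservation implies height preservation for the annotations, and the combination of Theorems~\ref{th:transitivity} and \ref{th:subspermute} for \lfinst. This is essentially the same argument the paper gives.
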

\begin{proof}
Consider each rule in Figure~\ref{fig:rules-meta}.

\case{\lfwk}
Let $\Upsilon$ be an arbitrary height assignment and
$\theta$ and $\sigma$ identify an arbitrary closed instance of the conclusion
sequent.
Suppose that all the formulas in 
$\subst{\sigma}{\hsubst{\theta}{\Omega}}$
are valid with respect to $\Upsilon$
as if any were not this closed instance would be vacuously valid.
The instance would be similarly valid if 
$\subst{\sigma}{\hsubst{\theta}{\fatmann{G}{M:A}{Ann}}}$ were
not valid, so suppose it is.
Further, assume that any nominal constants in $\mathbb{N'}\setminus\mathbb{N}$ 
do not appear in $\theta$ or $\sigma$ as by Theorem~\ref{th:perm-valid}
the validity of any instance which does use such nominal constants is ensured by
the validity of every instance which does not.
Such $\theta$ and $\sigma$ will also identify closed instances for each of
the premise sequents.
From the validity of these premise sequents,
we can conclude by Lemma~\ref{lem:tydecomp-defn} that
$\lftype{\subst{\sigma}{\hsubst{\theta}{G}}}{\hsubst{\theta}{B}}$ is derivable 
in LF.
By the validity of 
$\subst{\sigma}{\hsubst{\theta}{\fatmann{G}{M:A}{Ann}}}$
there must exist LF derivations for
\begin{enumerate}
\item $\lfctx{\subst{\sigma}{\hsubst{\theta}{G}}}$,
\item $\lftype{\subst{\sigma}{\hsubst{\theta}{G}}}{\hsubst{\theta}{A}}$, and
\item $\lfchecktype{\subst{\sigma}{\hsubst{\theta}{G}}}{\hsubst{\theta}{M}}{\hsubst{\theta}{A}}$
\end{enumerate}
which satisfy the restriction on the derivation height identified by $Ann$
with respect to $\Upsilon$.
Since $n$ cannot appear in $G$, $M$, or $A$ by assumption or in $\theta$ by virtue 
of its being substitution compatible with the conclusion sequent, we can 
construct a derivation for
$\lfctx{\subst{\sigma}{\hsubst{\theta}{(G,n:B)}}}$ through an application of
$\ctxterm$
and also conclude that
$\lftype{\subst{\sigma}{\hsubst{\theta}{(G,n:B)}}}{\hsubst{\theta}{A}}$, and
$\lfchecktype{\subst{\sigma}{\hsubst{\theta}{(G,n:B)}}}{\hsubst{\theta}{M}}{\hsubst{\theta}{A}}$
have derivations of the same height as (2) and (3) respectively through the application 
of Theorem~\ref{th:weakening}.
Thus the restriction identified for the annotation $Ann$ with respect to $\Upsilon$
will clearly also be satisfied by this derivation for the later judgement,
and so
$\subst{\sigma}{\hsubst{\theta}{\fatmann{G,n:B}{M:A}{Ann}}}$ will be valid with respect to $\Upsilon$.
Thus this closed instance of the sequent must be valid.

\case{\lfstr}
Let $\Upsilon$ be an arbitrary height assignment and
$\theta$ and $\sigma$ identify an arbitrary closed instance of the conclusion
sequent.
Suppose that all the formulas in 
$\subst{\sigma}{\hsubst{\theta}{\Omega}}$
are valid with respect to $\Upsilon$ 
as if any were not this closed instance would be vacuously valid.
Consider the goal formula
$\fimp{\fatmann{G,n:B}{M:A}{Ann}}{\fatmann{G}{M:A}{Ann}}$.
If $\subst{\sigma}{\hsubst{\theta}{\fatmann{G,n:B}{M:A}{Ann}}}$ 
were not valid with respect to $\Upsilon$ then this instance of the formula would be vacuously valid, 
so assume it is valid with respect to $\Upsilon$.
Then there must be derivations of
\begin{enumerate}
\item $\lfctx{\subst{\sigma}{\hsubst{\theta}{(G,n:B)}}}$,
\item $\lftype{\subst{\sigma}{\hsubst{\theta}{(G,n:B)}}}
              {\hsubst{\theta}{A}}$, and
\item $\lfchecktype{\subst{\sigma}{\hsubst{\theta}{(G,n:B)}}}
                   {\hsubst{\theta}{M}}
                   {\hsubst{\theta}{A}}$
\end{enumerate}
of some height $m$ which satisfies the height restriction identified by 
$Ann$ with respect to $\Upsilon$.
The derivation of (1) must conclude by \ctxterm\ using a derivation of
$\lfctx{\subst{\sigma}{\hsubst{\theta}{G}}}$ and $n$ cannot appear in
the context $\subst{\sigma}{\hsubst{\theta}{G}}$.
Since $n$ does not appear in $A$ or $M$, and since it cannot be
in the support of the substitution $\theta$, it cannot appear in
$\hsubst{\theta}{M}$ or $\hsubst{\theta}{A}$.
So by Theorem~\ref{th:strengthening} the judgements
$\lftype{\subst{\sigma}{\hsubst{\theta}{G}}}
        {\hsubst{\theta}{A}}$, and
$\lfchecktype{\subst{\sigma}{\hsubst{\theta}{G}}}
             {\hsubst{\theta}{M}}
             {\hsubst{\theta}{A}}$
have derivations and the later judgement has a derivation of height $m$.
Thus it clearly also satisfies the height restriction identified by $Ann$
with respect to $\Upsilon$.
Therefore the formula $\subst{\sigma}{\hsubst{\theta}{\fatmann{G}{M:A}{Ann}}}$
is valid with respect to $\Upsilon$, as needed.

\case{\lfperm}
Let $\Upsilon$ be an arbitrary height assignment and
$\theta$ and $\sigma$ identify an arbitrary closed instance of the conclusion
sequent.
Suppose that all the formulas in 
$\subst{\sigma}{\hsubst{\theta}{\Omega}}$
are valid with respect to $\Upsilon$, as if any were not this closed instance 
would be vacuously valid.
Consider the goal formula under this instance,
$\subst{\sigma}{\hsubst{\theta}{(\fimp{\fatmann{G}{M:A}{Ann}}{\fatmann{G'}{M:A}{Ann}})}}$.
If $\subst{\sigma}{\hsubst{\theta}{\fatmann{G}{M:A}{Ann}}}$ 
were not valid with respect to $\Upsilon$ the implication would be vacuously valid,
so suppose it is valid with respect to $\Upsilon$.
Then there must be derivations of
\begin{enumerate}
\item $\lfctx{\subst{\sigma}{\hsubst{\theta}{G}}}$, 
\item $\lftype{\subst{\sigma}{\hsubst{\theta}{G}}}
              {\hsubst{\theta}{A}}$, and 
\item $\lfchecktype{\subst{\sigma}{\hsubst{\theta}{G}}}
                   {\hsubst{\theta}{M}}
                   {\hsubst{\theta}{A}}$ 
\end{enumerate}
in LF which satisfy the derivation height restriction identified by $Ann$ 
with respect to $\Upsilon$.
Given that $n_1$ cannot appear in $A_2$, it also cannot appear in
$\hsubst{\theta}{A_2}$ as the support of $\theta$ cannot contain $n_1$.
Thus by Theorem~\ref{th:exchange} there are also derivations for
$\lfctx{\subst{\sigma}{\hsubst{\theta}{G'}}}$, 
$\lftype{\subst{\sigma}{\hsubst{\theta}{G'}}}
        {\hsubst{\theta}{A}}$, and 
$\lfchecktype{\subst{\sigma}{\hsubst{\theta}{G'}}}
             {\hsubst{\theta}{M}}
             {\hsubst{\theta}{A}}$.
Furthermore, these derivations are of the same height as (1), (2), and (3) respectively,
and thus must also respect the annotation $Ann$ with respect to $\Upsilon$.
Therefore the formula $\subst{\sigma}{\hsubst{\theta}{\fatmann{G'}{M:A}{Ann}}}$
will be valid with respect to $\Upsilon$, as needed.

\case{\lfinst}
Let $\Upsilon$ be an arbitrary height assignment and
$\theta$ and $\sigma$ identify an arbitrary closed instance of the conclusion
sequent.
Suppose that all the formulas in 
$\subst{\sigma}{\hsubst{\theta}{\Omega}}$
are valid with respect to $\Upsilon$ since if any were not this closed instance would be vacuously valid.
Consider the goal formula
$\subst{\sigma}{\hsubst{\theta}{(
 \fimp{\fatm{G,G'}{M:A}}{\fimp{
       \fatm{G}{M_1:A_1}}{
       \fatm{G''}{M':A'}}})}}$.
If either $\subst{\sigma}{\hsubst{\theta}{\fatm{G}{M_1:A_1}}}$
or $\subst{\sigma}{\hsubst{\theta}{\fatm{G,G'}{M:A}}}$ 
are not valid with respect to $\Upsilon$, then this implication would be 
vacuously valid, so suppose that both of these formulas are valid.
Then there must be derivations in LF for
\begin{enumerate}
\item
$\lfctx{\subst{\sigma}{\hsubst{\theta}{(G,G')}}}$,
\item
$\lfctx{\subst{\sigma}{\hsubst{\theta}{G}}}$,
\item
$\lftype{\subst{\sigma}{\hsubst{\theta}{(G,G')}}}
        {\hsubst{\theta}{A}}$,
\item
$\lftype{\subst{\sigma}{\hsubst{\theta}{G}}}
        {\hsubst{\theta}{A_1}}$,
\item
$\lfchecktype{\subst{\sigma}{\hsubst{\theta}{(G,G')}}}
             {\hsubst{\theta}{M}}
             {\hsubst{\theta}{A}}$, and
\item
$\lfchecktype{\subst{\sigma}{\hsubst{\theta}{G}}}
             {\hsubst{\theta}{M_1}}
             {\hsubst{\theta}{A_1}}$.
\end{enumerate}
By Theorems~\ref{th:transitivity} and~\ref{th:subspermute} there are
derivations of
$\lfctx{\subst{\sigma}{\hsubst{\theta}{G''}}}$,
$\lftype{\subst{\sigma}{\hsubst{\theta}{G''}}}
        {\hsubst{\theta}{A'}}$, and
$\lfchecktype{\subst{\sigma}{\hsubst{\theta}{G''}}}
             {\hsubst{\theta}{M'}}
             {\hsubst{\theta}{A'}}$.
Thus the formula $\subst{\sigma}{\hsubst{\theta}{\fatm{G''}{M':A'}}}$
will be valid with respect to $\Upsilon$, as needed.
\end{proof}

\chapter{Adelfa: An Implementation of the Proof System}
\label{ch:adelfa}
Now that we have described a proof system for constructing validity arguments,
we would like to provide a tool for mechanically constructing these proofs.
We have implemented a proof assistant based on the logic which allows for the
construction of proofs via a collection of tactics which correspond to 
the different reasoning steps that are available in the proof system.
The sequents of the proof system correspond to states in the prover.
The system was built in OCaml and has been used in a collection
of reasoning examples, which are covered in the next chapter.

In the first section we will introduce the Adelfa reasoning
system, it's structure and how it is used to mechanize the construction
of proofs in our logic.
There are a few special considerations in implementing this system which make up
the remaining sections in this chapter.
The $\appL$ rule depends on a function $\casessans$ which generates sequents using
a covering set of arity type preserving substitutions, and we must
provide some realization of this process in Adelfa.
In the second section we show that an implementation of
higher-order pattern unification~\cite{miller91jlc, nadathur05iclp}
can be used to identify these covering sets of solutions such that they
satisfy the necessary restrictions.
The final section will look at the form of formulas used in the proof
assistant.
There are certain proof rules which may be permuted out to the 
end of a derivation, allowing us to work with a focused formula syntax
in reasoning.

\section{An Overview of the System}
The architecture of the Adelfa system is influenced significantly by
that of the Abella proof assistant~\cite{gacek09phd,baelde14jfr}.
Adelfa has two levels of execution: the top level
interaction and the proof level interaction.
The former allows for the definition of an LF signature, context
schemas, and theorems.
Proved theorems are stored at this level as formulas which can be
used as lemmas in later proofs.
LF expressions in Adelfa follow Twelf syntax except that there are no
types in abstraction terms as our logic is based on Canonical LF.
We use \verb'{G |- M : A}' to represent atomic formulas, $\ftrue$ and
$\ffalse$ become \verb'true' and \verb'false', the connectives
$\fimp{}{}$, $\fand{}{}$, and $\for{}{}$ become \verb'=>', \verb'/\',
and \verb'\/' respectively, and the quantifiers $\forall$, $\exists$,
and $\Pi$ are denoted by \verb'forall', \verb'exists', and \verb'ctx'.

The proof level is entered when a theorem is proposed.
In essence, the proof states in Adelfa correspond to
sequents of the proof system and represent that there is an obligation
to provide a proof for that sequent.
Proofs are constructed bottom-up using a defined set of tactics which
correspond to a sequence of valid applications of proof rules.
Some proof rules contain multiple sequents in the premises, thus a
proof state in Adelfa will also keep track of all the remaining
obligations as a stack.

Proof states in Adelfa are displayed in the following form.
\begin{verbatim}
Vars: T:o, E:o
Nominals: n1:o, n:o
Contexts: Gamma:c[(n:tm, n1:of n T)]
H1:{Gamma |- n : tm}
H2:{T : ty}

==================================
exists D3, {Gamma |- D3 : of n T}
\end{verbatim}
Above the line are the components which are to the left of the sequent
arrow in a 
sequent, and below the line is the goal formula which would appear on
the right.
Any further obligations in the stack are included below this state and are
identified by the goal formula.
We can see each of the sequent components identified here with both
nominal constants and variables identified along with their arity
types, and context variables identified along with their context
variable type.
The formulas in the assumption set appear with identifiers which are
used to reference them in the system.

\begin{figure}
\begin{tabular}{|p{4cm}|p{10cm}|}
{\bf Tactic} 
  & 
  {\bf Effect}\\
\hline
\tapply\ {\tt H} to {\tt H1 ... Hn}
with {\tt (G1 = g1), ...,}
{\tt n1 = t1, ...}
  &
  Extends the hypotheses with the result of applying lemma or hypothesis {\tt H}
  with the given arguments.\\
\hline
\tassert\ {\tt f}
  &
  Creates two new subgoals, one with {\tt f} as the goal and 
  another proving the current goal with {\tt f} as an assumption.\\
\hline
\tcase\ {\tt H}
  &
  Applies an appropriate left to the hypothesis {\tt H}.\\
\hline
\texists\ {\tt t}
  &
  Instantiates the outermost existential quantifier with 
  {\tt t}.\\
\hline\tind\ {\tt on i}
  &
  Adds height annotations to the {\tt i}th implication antecedent and introduces
  the inductive hypothesis to assumptions.\\
\hline
\tinst\ {\tt H} with {\tt n = t}
  &
  Instantiates the name {\tt n} with {\tt t} in the atomic hypothesis {\tt H}.\\
\hline
\tintros 
  &
  Introduces variables for outer universal and context quantifiers and 
  hypotheses for outer implications. \\
\hline
\tleft,\ \tright 
  &
  Replaces goal {\tt F1 \verb'\'/ F2} with {\tt F1} (\tleft) or
  {\tt F2} (\tright).\\
\hline
\tpermute\ {\tt H} to {\tt G}
  &
  Permutes the context expression of atomic {\tt H} to {\tt G}.\\
\hline
\tsearch 
  &
  Attempts to conclude the current subgoal for the user automatically.\\
\hline
\tsplit 
  &
  Splits a goal {\tt F1 /\verb'\' F2} into two subgoals for {\tt F1} and 
  {\tt F2}.\\
\hline
\tstr\ {\tt H}
  &
  Strengthens the context of atomic {\tt H}.\\
\hline
\tweak\ {\tt H} with {\tt t}
  &
  Weakens the context of atomic {\tt H} with the type {\tt t}.\\
\hline
\end{tabular}
\caption{Tactics of Adelfa}
\label{fig:tactics}
\end{figure}

Central to the system are the tactics which are used to construct
proofs of the theorems.
Tactics are designed to capture only valid reasoning steps in the logic, and 
are also meant to capture the natural reasoning steps of a development.
Figure~\ref{fig:tactics} lists the tactics and provides a brief explanation
of the result of applying the tactic in an Adelfa development.

The tactics \tassert, \texists, \tleft, \tright, and \tsplit\ encode the
use of a single particular proof rule in reasoning, namely 
$\cut$, $\existsR$, $\orR$, and $\andR$ respectively.
The \tcase\ tactic uses the structure of a hypothesis to determine
an applicable left rule from \appL, \andL, or \orL.
Induction is realized through the \tind\ tactic which captures an application
of $\ind$, which introduce the inductive hypothesis for reasoning inductively
on the height of the identified typing judgement into the assumption formulas.

The tactics \tapply, \tintros, and \tsearch\ all involve following
a procedure for applying multiple proofs rules in the given state.
The application of hypotheses or lemmas is encoded by \tapply.
Based on the structure of the formula being applied, this tactic 
encodes using proof rules \allL, \ctxL, and \impL\ to instantiate
the formula with the given arguments.
Previously proved theorems are given by their identifier and would
encode first using \cut\ to introduce the formula and then following the
above process.
The \tintros\ tactic captures the introduction of variables and hypotheses 
for the outer quantifiers and implications.
This encodes the application of some number of instances of the
\allR, \ctxR, and \impR\ proof rules based on the structure of the
current goal formula.
The \tsearch\ tactic attempts to automatically construct a proof for the 
current subgoal using $\id$, $\appR$, and $\absR$.
The process is one which will always eventually find such a derivation if one
exists given that the application of $\appR$ and $\absR$ will decrease the
size of the term in an atomic formula and exactly one of the two applies 
must apply to any closed term.

The remaining tactics, \tweak, \tstr, \tpermute, and \tinst, 
capture the application of an instance of an LF meta-theorems on the left.
The general structure is one which uses \cut\ to introduce the formula capturing
the meta-theorem to be applied using the corresponding proof rule to determine
any conditions that remain to be shown.
Then the meta-theorem formula is applied using \impL\ to the appropriate
hypotheses.
The application of weakening also must first introduce a new name
for the binding using \sstr.
In Adelfa these tactics are only applied successfully when the premises of the
proof rule encoding the meta-theorem can be verified automatically.

We will demonstrate the use of these tactics in practice through the use of 
example Adelfa developments in Chapter~\ref{ch:examples}.

\section{Finding Covering Sets of Solutions for Case Analysis}

The $\appL$ rule in the proof system provides a means for reasoning
from atomic assumption formulas by analysing the possible reasons for
their validity. 
The function $\casessans$ is the main process defining the way in which
cases are determined for a given formula.
A key part of its definition is the notion of covering set of solutions
to unification problems.
The substitutions comprising this set must satisfy particular requirements
that are identified in
Definitions~\ref{def:unification} and~\ref{def:covering-subst} respectively.
For the discussion in this section, we assume that the set of equations $\mathcal{E}$
of a well-formed unification problem 
$\mathcal{U}=\langle \mathbb{N}; \Psi; \mathcal{E}\rangle$
comprise only equations between canonical terms and not canonical types.
The reduction operation on a sequent also requires us to consider
equations between atomic types.
However these equations are of the form
$a\app M_1\ldots M_n = a\app M_1'\ldots M_n'$ for some type level
constant $a$ in the signature or the two types in the equation have
two different type constants as their heads. 
In the former case, the equation can be replaced by the set of 
equations $\{M_1 = M'_1, \ldots, M_n = M'_n\}$ and in the latter case
the unification problem is seen immediately to have no solutions. 

In light of the above observation, to implement the analysis embedded
in the $\casessans$ rule, we have to describe a procedure for finding
a covering set of solutions to a set of equations between
(arity-typed) $\lambda$-terms. 
This task is identified with solving a higher-order pattern
unification problem~\cite{miller91jlc} and is addressed in Adelfa by
using the higher-order pattern unification algorithm described by
Nadathur and Linnell~\cite{nadathur05iclp}.
At a high level this procedure determines unifiers for terms by descending through
their structure, ensuring that the fixed, non-variable parts are
identical, eventually simplifying the problem to be solved to
equations of the form $(x\app t_1\app \cdots \app t_n) = s$, where
$x$ is a term variable for which a substitution is to be considered. 
A characteristic of the higher-order pattern unification problem is
that the terms $t_1, \ldots, t_n$ must all be nominal constants or
variables bound by abstractions within whose scope the term $(x\app
t_1\app \cdots \app t_n)$ and, correspondingly, $s$ occurs.
If this equation is solvable, then the substitution that is generated
has the form of the term $s$ enclosed within a sequence of
abstractions binding the occurrences of $t_1,\ldots,t_n$ in $s$.
The circumstances under which the equation is deemed solvable ensure
that the generated substitution term will not contain any nominal
constants.
As a consequence, the support set for substitutions found by the
higher-order pattern unification procedure will be disjoint from any
set $\mathbb{N}$ of nominal constants.
Moreover, the substitution terms will be typeable with respect to the
arity type assignments under which the terms in the unification
problem are well-typed augmented with suitable type assignments for
any new term variables that are introduced in them, and the overall
substitution will itself be arity type preserving with respect to
these assignments.
In short, the requirements 1-3 in Definition~\ref{def:unification}
will be satisfied by the solutions found by the higher-order pattern
unification procedure and, being most-general unifiers, these
solutions will also constitute (singleton) sets of covering solutions.

There is, however, one wrinkle to the use of higher-order pattern
unification in Adelfa.
This form of unification is defined for a more general form of
$\lambda$-terms that includes non-canonical terms and in a setting in
which substitution application does not include the concurrent
reduction of terms to canonical form.
We must therefore verify that the solutions it finds to a unification
problem in which the terms are in canonical form also constitutes a
solution in the sense described in this thesis. 
One requirement, that the substitution terms be in canonical form, is
easily seen to be met: if the input terms are in canonical form, then
the procedure we have outlined above will generate substitutions in
which all the terms are in canonical form under the relevant arity
typing.
Thus it only remains to be seen that if two terms are determined to be
equal under a substitution applied in the sense of the higher-order
pattern unification procedure, then they are also equal when the
substitution is applied in the sense relevant to our logic. 

The remainder of this section will focus on arguing that this last
requirement is also met. 
To do this we essentially argue that the reduction steps which are built
into hereditary substitution can be separated out from the application of
the substitution while ensuring that the resulting terms will be identical.
To provide a framework for such an argument, we will introduce an
extended term syntax and a notion of substitution application which
does not reduce the result to a canonical form. 
We will then identify equivalence classes for terms in this extended
syntax based on the $\lambda$-conversion rules under which terms are
considered to be equal in the context of the higher-order pattern
unification algorithm.
The key result we will show is that the application of a hereditary substitution 
to a term in canonical form will produce a term that is in the same
equivalence class as the term that is obtained by applying the
substitution in the sense of the higher-order pattern unification
algorithm to the term. 

\begin{figure}
\[\begin{array}{lrcl}
\mbox{\bf Terms} & T & ::= & c\ |\ x\ |\ \lflam{x}{T}\ |\ T_1\app T_2
\end{array}\]
\caption{Extended Term Syntax}
\label{fig:ext-terms}
\end{figure}

The extended term syntax is given in Figure~\ref{fig:ext-terms} and it follows
the standard syntax for $\lambda$-terms which allows for terms that
contain $\beta$-redexes.
Two terms are considered to be equal if one can be converted to the
other using the usual rules of $\lambda$-conversion.
We write $t_1\equiv t_2$ to denote the fact that $t_1$ and $t_2$ are
equal in the sense described.
The terms that we consider here are ones that that are well-typed in
an arity typing sense.
In this context, every term will have a canonical form, i.e. a form
in which there are no $\beta$-redexes and the top-level applications
carry the arity type $\oty$.
These canonical forms in fact constitute a subset of the the full set
of terms in the extended syntax that is identical to the collection of
well-formed terms in canonical LF.
We will also view the canonical forms as distinguished representatives
of the equivalence classes they belong to under the operative equality
notion. 

A substitution in this setting can be captured conveniently by
creating a sequence of $\beta$-redexes; the resulting term will be in
the same equivalence class as the term that results from an actual
replacement, paying attention to renaming of bound variables to avoid
accidental capture.
\begin{definition}\label{def:extsubst}
A substitution $\theta$ is a mapping from variables to terms in the
extended term syntax written $\{t_1/x_1,\ldots,t_n/x_n\}$.
The application of substitution 
$\theta=\{t_1/x_1,\ldots,t_n/x_n\}$ to a term $T$ is written as
$\extsubst{\theta}{T}$ and is defined to be  
the term $\lflam{x_1}{\ldots\lflam{x_n}{(T\app t_1\ldots t_n)}}$.
\end{definition}

We now show that the result of hereditary substitution will be
a term which, under the extended syntax for terms, is a member of the same
equivalence class as a term obtained by applying the substitution
following the simpler notion of application given in 
Definition~\ref{def:extsubst}.
Clearly then, normalizing the term obtained using the simpler notion of 
substitution application will result in the same term, up to renaming,
as that obtained using hereditary substitution.

\begin{theorem}
Suppose that $\theta$ is an arity type preserving substitution with
respect to $\Theta$ and
$\stlctyjudg{\aritysum{\context{\theta}}{\Theta}}{t}{\alpha}$ has a derivation.
Letting $\theta'=\left\{t/x\middle| \langle x, t, \alpha\rangle\in\theta\right\}$,
$\extsubst{\theta'}{t}\equiv\hsubst{\theta}{t}$.
\end{theorem}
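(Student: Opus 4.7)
The plan is to show that both sides of the claimed equivalence are $\lambda$-convertible to the same witness, namely the term obtained by the ordinary, non-normalizing parallel substitution of $t_1,\ldots,t_n$ for $x_1,\ldots,x_n$ in $t$ (call this term $t^{\circ}$). Since $\equiv$ is transitive, it suffices to prove the two halves separately and combine them.

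The side involving $\extsubst{\theta'}{t}$ is immediate from Definition~\ref{def:extsubst}: the term defining $\extsubst{\theta'}{t}$ is built directly from the atoms of the substitution and $t$, and a sequence of $\beta$-reductions (corresponding to the $n$ abstractions and their applied arguments) carries it to $t^{\circ}$. Since $\equiv$ includes $\beta$-equivalence, $\extsubst{\theta'}{t} \equiv t^{\circ}$.

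The harder direction is that $\hsubst{\theta}{t} \equiv t^{\circ}$, i.e.\ that hereditary substitution computes a term equivalent to ordinary substitution (in fact, a $\beta$-normal representative of the same equivalence class). I would prove this by a primary induction on the size of $\theta$ in the sense of Definition~\ref{def:typesubsize} and a secondary induction on the structure of $t$, paralleling the rule structure of Figures~\ref{fig:hsub} and~\ref{fig:hsubtypes}, together with the companion statement for the atomic-term judgements $\hsubr{\theta}{R}{R'}$ and $\hsubr{\theta}{R}{M':\alpha'}$. Constants, nominal constants, and variables outside $\dom{\theta}$ are immediate, and the abstraction case follows by the inner induction hypothesis after pushing under the binder (using standard renaming to avoid capture). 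For an atomic term $R\app M$ whose head is not an instantiated variable, the substitution distributes to $R$ and $M$, and again the inner induction hypothesis applies.

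The critical case, and the one that motivates the whole design, is an atomic subterm of the form $x\app M_1\app\cdots\app M_k$ where $\langle x, \lflam{y_1}{\cdots\lflam{y_k}{s}}, \alpha\rangle \in \theta$. Here ordinary substitution yields the $\beta$-redex $(\lflam{y_1}{\cdots\lflam{y_k}{s^{\circ}}})\app M_1^{\circ}\app\cdots\app M_k^{\circ}$, while hereditary substitution continues by invoking a nested hereditary substitution of $\hsubst{\theta}{M_i}$ for $y_i$ into $s$. The inner induction hypothesis gives $\hsubst{\theta}{M_i} \equiv M_i^{\circ}$ for each $i$, and the primary induction hypothesis applies to the nested substitution because each of its triples carries an arity type that is a strict component of $\alpha$, hence of strictly smaller size. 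Putting these together, the hereditarily substituted result is $\beta$-convertible to the ordinary substitution, completing the case. The principal obstacle is exactly verifying that this primary induction measure decreases in the nested call; this is the same measure argument that supports Theorem~\ref{th:uniqueness}, so the groundwork is already in place. Once both halves are established, $\hsubst{\theta}{t} \equiv t^{\circ} \equiv \extsubst{\theta'}{t}$ yields the theorem.
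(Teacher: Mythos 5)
Your proposal is correct and follows essentially the same route as the paper: a primary induction on the size of the substitution with a secondary induction on term structure, where the crux is that the nested hereditary substitution created at a redex is indexed by a strictly smaller arity type (the same measure that underlies Theorem~\ref{th:uniqueness}). Your detour through the intermediate term $t^{\circ}$ and your treatment of the critical case in spine form $x\app M_1\app\cdots\app M_k$ (rather than one application at a time, checking whether the substituted head has become an abstraction, as the paper does) are only cosmetic reorganizations of the same argument.
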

\begin{proof}
This proof is by induction using a lexicographic ordering of the size of $\theta$, identified by the sum of the size of each type indexing $\theta$, and
the formation of the term $t$.
We will consider each case based on the structure of $t$.

\case{$\mathbf{t=x}$ for some $\mathbf{x:\alpha\in\aritysum{\context{\theta}}{\Theta}}$}
If $x\in\dom{\theta}$ then 
$\extsubst{\theta'}{x}= t_x$ for $\langle x, t_x,\alpha\rangle\in\theta$ and therefore
$\extsubst{\theta'}{x}\equiv\hsubst{\theta}{x}$ will clearly hold.
If $x\not\in\dom{\theta}$ then $\extsubst{\theta'}{x}= x$,
and $\extsubst{\theta'}{x}\equiv\hsubst{\theta}{x}$ in this
case as well.

\case{$\mathbf{t=(t_1\app t_2)}$}
Then there will be derivations for the arity typing judgements
$\stlctyjudg{\aritysum{\context{\theta}}{\Theta}}{t_1}{\arr{\alpha'}{\alpha}}$ and 
$\stlctyjudg{\aritysum{\context{\theta}}{\Theta}}{t_2}{\alpha'}$.
By induction we can determine that
$\extsubst{\theta'}{t_1}\equiv\hsubst{\theta}{t_1}$ and
$\extsubst{\theta'}{t_2}\equiv\hsubst{\theta}{t_2}$.
It is not difficult to see that
$\extsubst{\theta'}{t}\equiv(\extsubst{\theta'}{t_1})\app(\extsubst{\theta'}{t_2})$, and 
thus that
$\extsubst{\theta'}{t}\equiv(\hsubst{\theta}{t_1})\app(\hsubst{\theta}{t_2})$.
If $\hsubst{\theta}{t_1}$ is not an abstraction term then
$\hsubst{\theta}{t}=
    (\hsubst{\theta}{t_1})\app(\hsubst{\theta}{t_2})$, and therefore
$\extsubst{\theta'}{t}\equiv\hsubst{\theta}{t}$ will clearly hold.
If on the other hand
$\hsub{\theta}{t_1}{\lflam{y}{s}:\arr{\alpha'}{\alpha}}$, then
$\hsubst{\theta}{t}=
    \hsubst{\{\langle y,(\hsubst{\theta}{t_2}),\alpha'\rangle\}}{s}$.
In this case we can see that the equivalence
$\extsubst{\theta'}{t}\equiv(\lflam{y}{s})\app(\hsubst{\theta}{t_2})$
will hold.
Using an application of $\beta$-reduction for the
top-level $\beta$-redex in $(\lflam{y}{s})\app(\hsubst{\theta}{t_2})$,
we see that $\extsubst{\theta'}{t}$ is then equivalent to
$\extsubst{(\hsubst{\theta}{t_2})/y}{s}$.
From the derivation of 
$\stlctyjudg{\aritysum{\context{\theta}}{\Theta}}{t_1}{\arr{\alpha'}{\alpha}}$
and that $\theta$ is arity type preserving with respect to $\Theta$ we can
determine that
$\stlctyjudg{\aritysum{y:\alpha'}{\Theta}}{s}{\alpha}$ has a derivation.
The size of the type $\alpha'$ must be strictly smaller than the size of the
types indexing $\theta$ by the definition of hereditary substitution, and
thus we can apply the induction hypothesis to conclude that
$\extsubst{(\hsubst{\theta}{t_2})/y}{s}\equiv
    \hsubst{\{\langle y,(\hsubst{\theta}{t_2}),\alpha'\rangle\}}{s}$.
Therefore, $\extsubst{\theta'}{t}\equiv\hsubst{\theta}{t}$.

\case{$\mathbf{t=\lflam{x}{t'}}$}
Then $\alpha=\arr{\alpha_1}{\alpha_2}$ and there is a derivation of
$\stlctyjudg{\aritysum{\context{\theta}}{\Theta,x:\alpha_1}}{t'}{\alpha_2}$.
By induction then,
$\extsubst{\theta'}{t'}\equiv\hsubst{\theta}{t'}$.
Since
$\extsubst{\theta'}{t}\equiv\lflam{x}{(\extsubst{\theta'}{t'})}$
and
$\hsubst{\theta}{t}=\lflam{x}{(\hsubst{\theta}{t'})}$,
we can thus conclude that
$\extsubst{\theta'}{t}\equiv\hsubst{\theta}{t}$.
\end{proof}

For $\theta$ and
$\theta'=\left\{t/x\middle| \langle x, t, \alpha\rangle\in\theta\right\}$
satisfying the arity typing requirements of the above theorem, we see that
for any canonical terms $M$ and $M'$ if 
$\extsubst{\theta'}{M}=\extsubst{\theta'}{M'}$
we can apply the above lemma to conclude that $\hsubst{\theta}{M}=\hsubst{\theta}{M'}$
also holds.
Therefore since higher-order pattern unification ensures that 
$\extsubst{\theta'}{M}=\extsubst{\theta'}{M'}$ the substitutions found
following this procedure will also be solutions in the sense of 
Definition~\ref{def:unification} and therefore identify a covering set
of solutions, as described at the beginning of this section.
\section{Focusing Formulas}
In this section we consider the construction of proofs in the logic
and note that there are some rules which are permutable such that 
they can be moved to the end of a derivation and thus be applied automatically
in the reasoning system.
This leads to a focusing of formulas in the reasoning system to ones of a
restricted structure.
The rules that we will consider in this discussion are $\absL$ and $\existsL$.

Given the LF typing rules we note that 
$\lfchecktype{\Gamma}{\lflam{x}{M}}{\typedpi{x}{A_1}{A_2}}$
is derivable if and only if
$\lfchecktype{\Gamma,x:A_1}{M}{A_2}$ is derivable.
The following theorem lifts this idea to the derivability of sequents
containing atomic assumption formulas involving abstraction terms.
We argue that for any such sequent which is derivable there must also be a
derivation for the sequent which concludes by an application of $\absL$.
In this way, we see that it is sound to reduce all atomic assumption formulas
of a sequent to be typing judgements over atomic terms.

\begin{theorem}
Suppose the sequent 
$\seq[\mathbb{N}]{\Psi}{\Xi}{\setand{\Omega}{\fatm{G}{\lflam{x}{M}:\typedpi{x}{A_1}{A_2}}}}{F}$ 
is derivable.
Let $n$ be a new nominal constant not appearing in $\mathbb{N}$,
$M'$ be $\hsubst{\{\langle x, n, \erase{A_1}\rangle\}}{M}$,
$A_2'$ be $\hsubst{\{\langle x, n, \erase{A_1}\rangle\}}{A_2}$,
and $\Xi'$ be 
$(\Xi\setminus
   \{\ctxvarty{\Gamma_i}
              {\mathbb{N}_i}
              {\ctxty{\mathcal{C}_i}{\mathcal{G}_i}}\})
           \cup
   \{\ctxvarty{\Gamma_i}
              {(\mathbb{N}_i,n:\erase{A_1})}
              {\ctxty{\mathcal{C}_i}{\mathcal{G}_i}}\}$
if $G$ contains a context variable $\Gamma_i$ or $\Xi$ if $G$ contains
no context variables.
Then the sequent
$\seq[\mathbb{N},n:\erase{A_1}]
     {\Psi}
     {\Xi'}
     {\setand{\Omega}
             {\fatm{G,n:A_1}
                   {M':A_2'}}}
     {F}$
must be derivable.
\end{theorem}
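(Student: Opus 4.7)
The plan is to proceed by induction on the structure of the given derivation of the sequent $\mathcal{S} = \seq[\mathbb{N}]{\Psi}{\Xi}{\setand{\Omega}{\fatm{G}{\lflam{x}{M}:\typedpi{x}{A_1}{A_2}}}}{F}$, considering cases based on the last proof rule applied, and constructing a derivation of the ``focused'' sequent $\mathcal{S}' = \seq[\mathbb{N},n:\erase{A_1}]{\Psi}{\Xi'}{\setand{\Omega}{\fatm{G,n:A_1}{M':A_2'}}}{F}$ in each case. Before beginning the induction, I would choose $n$ once and for all to be fresh with respect to the entire derivation tree; any later conflicts with names already introduced in the derivation can be resolved using the invariance under permutation of nominal constants (Theorem~\ref{th:perm-lf} and Theorem~\ref{th:perm-valid}).

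The principal case is when the final rule applied is $\absL$ acting on the formula $\fatm{G}{\lflam{x}{M}:\typedpi{x}{A_1}{A_2}}$ itself. Here the premise of that rule application is already a sequent of the desired shape modulo a possible difference in the choice of introduced nominal constant, so the conclusion follows directly from the premise after applying an appropriate permutation. For all other rules, the strategy is to commute the rule past the $\absL$ reduction: apply the induction hypothesis to each premise (which still contains the abstraction assumption) to obtain a version with the focused assumption $\fatm{G,n:A_1}{M':A_2'}$, and then reapply the same rule to arrive at $\mathcal{S}'$. The identity rule case is handled by noting that if an assumption matches the abstraction formula under some permutation, applying $\absL$ to both sides preserves the match, so $\id$ still closes the reduced sequent.

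The main obstacles are the rules that interact nontrivially with either the context expression $G$ or with the collection of nominal constants. For $\appL$ acting on a different atomic formula, the $\casessans$ computation may partially elaborate a context variable appearing in $G$; one must verify that the $\absL$ reduction on the (untouched) abstraction assumption commutes with such elaboration, using the annotation updates to the context variable context $\Xi'$ that $\absL$ performs. For the LF meta-theorem rules $\lfwk$, $\lfstr$, and $\lfperm$, and for the structural rules $\sweak$ and $\sstr$, I would use Theorems~\ref{th:weakening}, \ref{th:strengthening}, and \ref{th:exchange} together with Theorem~\ref{th:ctx-ty-wk} to show that the additional binding $n:A_1$ can be threaded through the modified context without invalidating any premise. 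Rules that introduce fresh nominal constants ($\allR$, $\existsL$, $\absR$, and $\absL$ on other formulas) require care that the name chosen in that rule is not confused with $n$, which is again a matter of renaming via a permutation.

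The induction rule $\ind$ and the annotated variants of the atomic rules deserve separate attention, since their soundness depends on height preservation of LF derivations. The key observation there is that the LF typing rule for abstractions preserves heights in both directions: a derivation of $\lfchecktype{G}{\lflam{x}{M}}{\typedpi{x}{A_1}{A_2}}$ of height $k$ yields a derivation of $\lfchecktype{G,n:A_1}{M'}{A_2'}$ of height $k-1$ and conversely. Consequently the annotations on the focused formula can be chosen to respect whatever annotation constraints the inductive hypothesis requires. I anticipate that the case for $\appL$, specifically the bookkeeping required when a context variable appearing in $G$ is the one being elaborated by $\addblksans$, will be the most delicate part of the argument; the remaining cases reduce to routine commutation of rule applications once the freshness of $n$ is arranged.
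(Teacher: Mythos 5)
Your proposal takes a genuinely different — and much heavier — route than the paper. The paper does not induct on the given derivation at all. Its proof is a three-step admissibility argument: (1) apply $\sweak$ and $\weakening$ to the given derivation $\mathcal{D}$ to obtain a derivation of
$\seq[\mathbb{N},n:\erase{A_1}]{\Psi}{\Xi'}{\setand{\Omega}{\fatm{G}{\lflam{x}{M}:\typedpi{x}{A_1}{A_2}},\ \fatm{G,n:A_1}{M':A_2'}}}{F}$,
i.e.\ keep the abstraction assumption and simply \emph{add} the focused formula as a further assumption; (2) derive
$\seq[\mathbb{N},n:\erase{A_1}]{\Psi}{\Xi'}{\setand{\Omega}{\fatm{G,n:A_1}{M':A_2'}}}{\fatm{G}{\lflam{x}{M}:\typedpi{x}{A_1}{A_2}}}$
directly by one application of $\absR$ followed by $\id$; (3) apply $\cut$ with the abstraction formula as the cut formula to combine the two. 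This completely sidesteps the case analysis over the last rule of $\mathcal{D}$ that your plan is built around. What your approach would buy, if carried through, is a cut-free permutation result; what the paper's approach buys is a two-line proof that works in the presence of $\cut$, which is all that is needed here since the goal is only derivability.

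Beyond the difference in route, your proposal as written has a genuine gap: the case you yourself flag as the most delicate — commuting the $\absL$ reduction past $\appL$ — is not resolved, and it is not routine. The premises of $\appL$ are produced by $\casessans$, which applies term substitutions and raising to the \emph{entire} sequent and elaborates context variables relative to the support set $\mathbb{N}$ and the restriction sets in $\Xi$. In the focused sequent these parameters change ($\mathbb{N}$ gains $n$, and $n$ is added to the restriction set of the context variable occurring in $G$), so the head elaborations, the generalized variables in the unification problems, and hence the covering sets of solutions are computed over different data; establishing a bijection (up to permutation of nominal constants) between the two resulting case sets is a substantial piece of work that the proposal only gestures at. Until that correspondence is established, the inductive step for $\appL$ does not go through, so the argument is incomplete as it stands. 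The cut-based proof avoids this entirely and is the one you should prefer.
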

\begin{proof}
Let $\mathcal{D}$ be the derivation for 
$\seq[\mathbb{N}]{\Psi}{\Xi}{\setand{\Omega}{\fatm{G}{\lflam{x}{M}:\typedpi{x}{A_1}{A_2}}}}{F}$.
Applying first $\sweak$ followed by $\weakening$ to the derivation $\mathcal{D}$ 
we construct a derivation for the sequent
$\seq[\mathbb{N},n:\erase{A_1}]
     {\Psi}
     {\Xi'}
     {\setand{\Omega}
             {\fatm{G}{\lflam{x}{M}:\typedpi{x}{A_1}{A_2}},
              \fatm{G,n:A_1}{M':A_2'}}}
     {F}$.
Using an application of $\id$ followed by $\absR$ we can also construct a 
derivation for the sequent
$\seq[\mathbb{N},n:\erase{A_1}]
     {\Psi}
     {\Xi'}
     {\setand{\Omega}{\fatm{G,n:A_1}{M':A_2'}}}
     {\fatm{G}{\lflam{x}{M}:\typedpi{x}{A_1}{A_2}}}$.
Finally, an application of $\cut$ using these two derivations will construct a 
derivation for the sequent
$\seq[\mathbb{N},n:\erase{A_1}]
     {\Psi}
     {\Xi'}
     {\setand{\Omega}{\fatm{G,n:A_1}{M':A_2'}}}
     {F}$
as needed.
\end{proof}

The same basic structure is used to argue for the reduction of existential
formulas in assumptions using $\existsL$.
The essence of the argument is that we can recover the validity of the
existential formula, and thus do not lose any derivations by applying
$\existsL$ eagerly in Adelfa.

\begin{theorem}
Let $\mathbb{N}=n_1:\alpha_1,\ldots,n_m:\alpha_m$ be a collection of arity 
typed nominal constants.
Suppose that the sequent
$\seq[\mathbb{N}]{\Psi}{\Xi}{\setand{\Omega}{\fexists{x:\alpha}{F_1}}}{F_2}$
has a derivation.
Let $y$ be an arbitrary new variable of arity type
$\alpha'=(\arr{\arr{\arr{\alpha_1}{\ldots}}{\alpha_m}}{\alpha})$,
$\Psi'=\Psi,y:\alpha'$, and
$\hsub{\{\langle x,(y\app n_1\ldots n_m),\alpha\rangle\}}{F_1}{F_1'}$.
Then the sequent
$\seq[\mathbb{N}]{\Psi'}{\Xi}{\setand{\Omega}{F_1'}}{F_2}$
has a derivation.
\end{theorem}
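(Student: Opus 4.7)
The plan mirrors the structure of the proof given for the analogous $\absL$-reducing theorem that precedes it. Let $\mathcal{D}$ be the assumed derivation of $\seq[\mathbb{N}]{\Psi}{\Xi}{\setand{\Omega}{\fexists{x:\alpha}{F_1}}}{F_2}$. The strategy is to construct the desired derivation by combining a suitably weakened version of $\mathcal{D}$ with a derivation that recovers the existential formula from $F_1'$, and then invoking $\cut$.

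First I would apply $\sweak$ to $\mathcal{D}$ to extend the eigenvariable context from $\Psi$ to $\Psi' = \Psi, y:\alpha'$, noting that the premises required by $\sweak$ (well-formedness of context variable types and formulas under the extended context) follow immediately since $y$ is fresh and does not occur anywhere in the original sequent. A subsequent application of $\weakening$ then augments the assumption set with $F_1'$, yielding a derivation of
\[
  \seq[\mathbb{N}]{\Psi'}{\Xi}{\setand{\Omega}{\fexists{x:\alpha}{F_1}, F_1'}}{F_2}.
\]

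Second, I would build a derivation of
\[
  \seq[\mathbb{N}]{\Psi'}{\Xi}{\setand{\Omega}{F_1'}}{\fexists{x:\alpha}{F_1}}
\]
by applying $\existsR$ with the witness term $t = y \app n_1 \app \ldots \app n_m$. The premise of $\existsR$ requires an arity typing derivation for $\stlctyjudg{\mathbb{N}\cup\STLCGamma_0\cup\Psi'}{y\app n_1\app \ldots\app n_m}{\alpha}$, which is available because $y:\alpha'$ was chosen in $\Psi'$ precisely so that $\alpha' = \alpha_1 \atyarr \cdots \atyarr \alpha_m \atyarr \alpha$, and each $n_i:\alpha_i \in \mathbb{N}$. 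The substitution premise $\hsub{\{\langle x, y\app n_1\app\ldots\app n_m, \alpha\rangle\}}{F_1}{F_1'}$ holds by the definition of $F_1'$ in the theorem statement. The resulting subgoal $\seq[\mathbb{N}]{\Psi'}{\Xi}{\setand{\Omega}{F_1'}}{F_1'}$ is then immediately closed by $\id$ with the identity permutation.

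Finally, I would apply $\cut$ with $\fexists{x:\alpha}{F_1}$ as the cut formula, using the two derivations above as premises, to obtain the target sequent. The wellformedness premise of $\cut$ requires showing that $\fexists{x:\alpha}{F_1}$ is well-formed with respect to $\mathbb{N}\cup\STLCGamma_0\cup\Psi'$ and $\ctxsanstype{\Xi}$, which follows from the well-formedness of the original conclusion sequent together with Theorem~\ref{th:wf-form-wk}. I do not anticipate a significant obstacle here: the whole argument is essentially an invertibility observation about $\existsL$, and the raising scheme used to define $\alpha'$ is designed precisely to make the $\existsR$ witness typeable. The only detail requiring minor care is verifying the wellformedness side conditions of $\sweak$, $\weakening$, and $\cut$ at each step, all of which reduce to the assumed wellformedness of the starting sequent together with freshness of $y$.
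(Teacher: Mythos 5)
Your proposal is correct and follows essentially the same route as the paper's own proof: weaken the given derivation via \sweak\ and \weakening\ to add $y$ and $F_1'$, recover $\fexists{x:\alpha}{F_1}$ from $F_1'$ by \existsR\ (with witness $y\app n_1\app\ldots\app n_m$) followed by \id, and combine the two with \cut. The extra detail you supply about the typing of the raised witness and the wellformedness side conditions is consistent with what the paper leaves implicit.
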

\begin{proof}
Let $\mathcal{D}$ be the derivation for 
$\seq[\mathbb{N}]{\Psi}{\Xi}{\setand{\Omega}{\fexists{x:\alpha}{F_1}}}{F_2}$.
From $\mathcal{D}$ using an application of $\sweak$ followed by $\weakening$ 
we construct a derivation for
$\seq[\mathbb{N}]{\Psi'}{\Xi}{\setand{\Omega}{\fexists{x:\alpha}{F_1},F_1'}}{F}$.
We also construct a derivation for the sequent
$\seq[\mathbb{N}]{\Psi'}{\Xi}{\setand{\Omega}{F_1'}}{\fexists{x:\alpha}{F_1}}$
from an application of $\id$ followed by $\existsR$.
From these two derivations we use an application of $\cut$ to construct a 
derivation for the sequent
$\seq[\mathbb{N}]{\Psi'}{\Xi}{\setand{\Omega}{F_1'}}{F}$
as needed.
\end{proof}

Given these results, the system Adelfa will automatically reduce assumption
formulas to have no top-level existential quantifiers and ensure that any
atomic formulas in the assumptions are in the form of an atomic LF term.

\chapter{Constructing Proofs Using Adelfa}
\label{ch:examples}
In this chapter we demonstrate the construction of proofs in
Adelfa using tactics.
Our first example is a proof of the existence of an additive identity for
natural numbers.
This example is used to introduce the basic structure of reasoning in Adelfa and
demonstrates the expressiveness of the logic as the statement of
the theorem contains two quantifier alternations.
We then consider the example of type uniqueness,
which we use to introduce reasoning by induction, and the use of
case analysis, and the role of contexts in Adelfa reasoning.
Next we present an encoding for a simple sequent calculus and 
consider proving that cut is admissible in this system.
Through this example we will bring out how the meta-theorems of LF
are used in reasoning using Adelfa.
We conclude this chapter with a discussion of the transitivity and
narrowing of $F_{<:}$, Problem 1A of the POPLMark
Challenge~\cite{aydemir05tphols}.
The complete development for all of these examples can be found on the
Adelfa website.

\section{Additive Identity for Natural Numbers}

\begin{figure}
\[
\begin{array}{lcl}
\natty:\type & \qquad&
    \plusty:\arr{\natty}{\arr{\natty}{\arr{\natty}{\type}}}\\
\ztm:\natty & &
    \plusztm:\typedpi{N}{\natty}{\plusty\app\ztm\app N\app N}\\
\stm:\arr{\natty}{\natty} & &
    \plusstm:\typedpi{N_1}{\natty}{\typedpi{N_2}{\natty}{\typedpi{N_3}{\natty}{}}}\\
& & \qquad\qquad
             \typedpi{D}{\plusty\app N_1\app N_2\app N_3}
                     {\plusty\app(\stm\app N_1)\app N_2\app(\stm\app N_3)}
\end{array}
\]
\caption{An LF Specification for Natural Number Addition}
\label{fig:arith-term-spec}
\end{figure}

An encoding of natural numbers and addition over these expressions
is given in Figure~\ref{fig:arith-term-spec}.
There is a single type $\natty$ for representing natural numbers in the system
and the type family $\plusty$ represents the addition relation over these terms.
This signature will be presented to Adelfa as the relevant LF signature to be
used in reasoning.

The existence of an identity for the addition relation is captured by the following formula.
\[\fexists{i:\oty}
          {\fall{x:\oty}
                {\fimp{\fatm{\cdot}{x:\natty}}
                      {\fexists{d:\oty}
                               {\fatm{\cdot}{d:\plusty\app x\app i\app x}}}}}
\]
In particular, this states that there is a right identity for the relation.
Informally, the proof must first introduce the identity term, $\ztm$,
and argue that for this instance the formula will be valid.
Since the encoding of $\plusty$ we have given is recursive in the first argument,
the proof of this theorem will be by induction on the formation of $x$ 
which identifies two structures, $x=\ztm$ or $x=\stm\app x'$.
For the former case, $(\plusztm\app\ztm)$ clearly inhabits the required type
while for the latter case we make use of the inductive hypothesis on the smaller
term $x'$ and construct an appropriate term from this result using $\plusstm$.
The development in Adelfa, as we will see below, follows this structure.

The Adelfa development for this theorem begins by introducing the identity
term through an application of the \texists\ tactic with the expression $\ztm$.
This will result in the following state.
\begin{verbatim}
Vars: 
Nominals: 
Contexts: 
==================================
forall x:o, {x : nat} => exists d:o, {d : plus x z x}
\end{verbatim}
We now want to construct an inductive argument on the formation of $x$,
which is captured in Adelfa through the use of \tind\ with the argument $1$
to identify that the first antecedent is the LF derivation on which the induction 
is to be based.
The result of the application of this tactic is the following state.
\begin{verbatim}
Vars: 
Nominals: 
Contexts: 
IH: forall x:o, {x : nat}* => exists d:o, {d : plus x z x}
==================================
forall x:o, {x : nat}@ => exists d:o, {d : plus x z x}
\end{verbatim}
At this stage we introduce a new eigenvariable for $x$, and add the
formula \verb'{x : nat}@' to the assumptions through an application of the \tintros\ 
tactic.
Following the use of this tactic the new state of the development will be the following.
\begin{verbatim}
Vars: x:o
Nominals: 
Contexts: 
IH: forall x:o, {x : nat}* => exists d:o, {d : plus x z x}
H1: {x : nat}@
==================================
exists d:o, {d : plus x z x}
\end{verbatim}
An application of $\tcase$ on \verb'H1' will identify two structures for 
the derivation of $\fatm{}{x:\natty}$
and thus two subgoals in Adelfa, corresponding to the base and inductive case
of the informal argument.
The subgoal corresponding to the case where $x=\ztm$ is the following.
\begin{verbatim}
Vars: x:o
Nominals: 
Contexts: 
IH: forall x:o, {x : nat}* => exists d:o, {d : plus x z x}
H1: {z : nat}@
==================================
exists d:o, {d : plus z z z}
\end{verbatim}
We can conclude this subgoal through instantiating the existential with an appropriate term
through the tactic \verb$exists (plus_z z)$, and invoking \tsearch\ to
have Adelfa automatically determine that the goal is derivable using the proof rules
for deriving atomic formulas.
Once this subgoal is completed, we must then argue for the inductive case, represented
in the following state.
\begin{verbatim}
Vars: x:o, x':o
Nominals: 
Contexts: 
IH: forall x:o, {x : nat}* => exists d:o, {d : plus x z x}
H1: {s x' : nat}@
H2: {x' : nat}*
==================================
exists d:o, {d : plus (s x') z (s x')}
\end{verbatim}
We now make use of the inductive hypothesis through the application of the 
$\tapply$ tactic, \verb$apply IH to H2$, which will introduce an instance
of the conclusion derivation for some new eigenvariable $d$.
\begin{verbatim}
Vars: x:o, x':o, d:o
Nominals: 
Contexts: 
IH: forall x:o, {x : nat}* => exists d:o, {d : plus x z x}
H1: {s x' : nat}@
H2: {x' : nat}*
H3: {d : plus x' z x'}
==================================
exists d:o, {d : plus (s x') z (s x')}
\end{verbatim}
We complete this proof using this new eigenvariable to instantiate the $d$ in the
goal formula with a term $\plusstm\app x'\app z\app d$ using \texists.
At this point the \tsearch\ tactic can be used and Adelfa will automatically
determine the goal is derivable.
At this point there are no further subgoals representing further proof obligations,
and the derivation is completed.
\section{Type Uniqueness for the STLC}
\begin{figure}
\[
\begin{array}{lcl}
  \of{\tpty}{\type}
  & \quad\quad & 
  \of{\ofemptytm}{\ofty\app\emptytm\app\unittm}

  \\
  
  \of{\unittm}{\tpty}
  & &

  \\

  \of{\arrtm}{\arr{\tpty}{\tpty}}
  & &
  \of{\ofapptm}
     {\typedpi{E_1}{\tmty}{\typedpi{E_2}{\tmty}
     {\typedpi{T_1}{\tpty}{\typedpi{T_2}{\tpty}{}}}}}
  \\
  & &
    \qquad
    \typedpi{D_1}{\ofty\app E_1\app (\arrtm\app T_1\app T_2)}{
      \typedpi{D_2}{\ofty\app E_2\app T_1}{}}

  \\

  \of{\tmty}{\type}
  & &
  \qquad \ofty\app(\apptm\app E_1\app E_2)\app T_2

  \\

  \of{\emptytm}{\tmty}
  & &
  \\

  \of{\apptm}{\arr{\tmty}{\arr{\tmty}{\tmty}}} & &
    \of{\oflamtm}
       {\typedpi{R}
       {\arr{\tmty}{\tmty}}
       {\typedpi{T_1}{\tpty}{\typedpi{T_2}{\tpty}{}}}}
  \\

  \of{\lamtm}{\arr{\tpty}{\arr{(\arr{\tmty}{\tmty})}{\tmty}}}
  &  & 
  \qquad \typedpi{D}{(\typedpi{x}{\tmty}
          {\typedpi{y}{\ofty\app x\app T_1}
          {\ofty\app(R\app x)\app T_2}})}{}
  \\
  
  & & \qquad \ofty\app(\lamtm\app T_1\app(\lflam{x}{R\app x}))
                             \app(\arrtm\app T_1\app T_2)
  \\

  \of{\ofty}{\arr{\tmty}{\arr{\tpty}{\type}}} & & \\  

  \of{\eqty}{\arr{\tpty}{\arr{\tpty}{\type}}} & &
    \of{\refltm}{\typedpi{T}{\tpty}{\eqty\app T\app T}}
\end{array}
\]
\caption{An LF Specification for the Simply-Typed Lambda Calculus}
\label{fig:stlc-spec}
\end{figure}

In this section we consider the example of type uniqueness for
the STLC to see how Adelfa is used to reason about systems involving binding.
Proving this property involves reasoning inductively over
typing judgements which requires generalizing over the contexts
of these typing judgements.
Using the same signature presented in Section~\ref{sec:lf-ex}, and the
discussion of the argument structure from Section~\ref{sec:logic-ex}
we look at how this derivation is formalized in Adelfa.
To make it easy to follow the discussion of the reasoning process, 
we have reproduced the specification in Figure~\ref{fig:stlc-spec}.

As in the informal discussion, an Adelfa development of this proof will
introduce a context schema defined by a single block,
$\{t:\oty\} x:\tmty,y:\ofty\app x\app t$.
We name this schema as $ctx$ and use this name to refer to it 
in the discussion that follows.

In this proof we rely on two strengthening properties, one for terms of LF type $\tpty$, 
$\fctx{\Gamma}{ctx}
      {\fall{t :\oty}
            {\fimp{\fatm{\Gamma}{t : \tpty}}
              {\fatm{\emptyce}{t:\tpty}}}}$, 
and building on this one for LF terms of any instance of the $\eqty$ type, 
$\fctx{\Gamma}{ctx}
      {\fall{d :\oty}
      {\fall{t_1 :\oty}
      {\fall{t_2 : \oty}
            {\fimp{\fatm{\Gamma}{d : \eqty\app t_1\app t_2}}
            {\fatm{\emptyce}{d:\eqty\app t_1\app t_2}}}}}}$.
These are clearly valid formulas given the context schema $ctx$ and the strengthening
of Canonical LF judgements, Theorem~\ref{th:strengthening},
but they are also derivable formulas in the proof system by an
induction on the assumption formula.
Case analysis on the assumption will yield only terms constructed from
constants in the signature, and thus nothing from the context can
appear in these judgements.

The formula representing type uniqueness which we prove in Adelfa is the following.
\[
\begin{array}{c}
\fctx{G}{ctx}{\fall{e:\oty}{\fall{t_1:\oty}{\fall{t_2:\oty}{
    \fall{d_1:\oty}{\fall{d_2:\oty}{
    \fimp{\fatm{G}{d_1:\ofty\app e\app t_1}}{}}}}}}}\\
    \fimp{\fatm{G}{d_2:\ofty\app e\app t_2}}{
      \fexists{d_3:\oty}{
        \fatm{\emptyctx}{d_3:\eqty\app t_1\app t_2}}}
\end{array}
\]
The informal argument is based on an induction on the
derivation of $\fatm{G}{d_1:\ofty\app e\app t_1}$, and in Adelfa this is
accomplished using the $\tind$ tactic with argument $1$ to indicate that this
first formula is the one on which we wish to induct.
The resulting state, now containing annotated formulas, is of the following form.
\begin{verbatim}
Vars:
Nominals: 
Contexts:
IH: ctx G:ctx, forall e:o t1:o t2:o d1:o d2:o, {G |- d1 : of e t1}* => 
      {G |- d2 : of e t2} => exists d3:o, {d3 : eq t1 t2}
==================================
ctx G:ctx, forall e:o t1:o t2:o d1:o d2:o, {G |- d1 : of e t1}@ => 
  {G |- d2 : of e t2} => exists d3:o, {d3 : eq t1 t2}
\end{verbatim}
After introducing the assumptions using \tintros\ the proof will proceed 
by determining the possible derivations for instances of
$\eqann{\fatm{G}{d_1:\ofty\app e\app t_1}}$.
We use the \tcase\ tactic on this assumption formula, having
Adelfa analyse the given assumption formula and replace this goal with
a set of new subgoals.
Adelfa will identify the same four cases discussed in the
informal argument: when the head of $d_1$ is $\ofemptytm$, 
$\ofapptm$, $\oflamtm$, or a nominal constant assigned the type 
$(\ofty\app n\app t_1)$ in $G$.
In all of these cases, the form of $d_1$ constrains the form of $e$ which in 
turn will constrain the derivable instances for the other assumption formula
$\fatm{G}{d_2:\ofty\app e\app t_2}$ to those where the head of $d_2$ is the
same as that of $d_1$.
Thus in each of the four cases we identify this constrained structure by
using the \tcase\ tactic also on this second assumption formula to unfold
the definition of this term-level constant.

When the head of $d_1$ is $\ofemptytm$ the proof will conclude using 
an application of the $\texists$ tactic with the obvious $\refltm$ term
followed by $\tsearch$ to complete the derivation.
In the cases where the head of $d_1$ is $\ofapptm$ or $\oflamtm$ we 
invoke the induction hypothesis using the \tapply\ tactic with the appropriate
hypotheses as arguments.
To demonstrate, let us consider the case for typing abstraction terms.
\begin{verbatim}
Vars: a2:o -> o -> o, t2:o, a1:o -> o -> o, r:o -> o, t:o, t1:o
Nominals: n1:o, n:o
Contexts: G:ctx[]
IH:ctx G:ctx, forall e:o t1:o t2:o d1:o d2:o, {G |- d1 : of e t1}* => 
      {G |- d2 : of e t2} => exists d3:o, {d3 : eq t1 t2}
H3:{G, n:tm |- r n : tm}*
H4:{G |- t : ty}*
H5:{G |- t1 : ty}*
H6:{G, n:tm, n1:of n t |- a1 n n1 : of (r n) t1}*
H10:{G |- t : ty}
H11:{G |- t2 : ty}
H12:{G, n:tm, n1:of n t |- a2 n n1 : of (r n) t2}
==================================
exists d3:o, {d3 : eq (arr t t1) (arr t t2)}
\end{verbatim}
Adelfa is able to identify that the extended context expression
$(G,\ n:\tmty,\ n1:\ofty\app n\app t)$ satisfies the context schema 
$ctx$ given that $G$ is a context variable satisfying this same schema.
Thus we are able to apply the formula {\tt IH} to {\tt H6} and {\tt H12}
to conclude that there is some $d3$ such that 
$\fatm{\emptyctx}{d3:\eqty\app t1\app t2}$
is a valid formula.
We know from the definition of the type $\eqty$ that this means $t1=t2$ and $d3$ is 
the term $(\refltm\app t1)$, and using \tcase~ in Adelfa will add this information
resulting in the following state.
\begin{verbatim}
Vars: a2:o -> o -> o, a1:o -> o -> o, r:o -> o, t:o, t1:o
Nominals: n1:o, n:o
Contexts: G:ctx[]
IH:ctx G:ctx, forall e:o t1:o t2:o d1:o d2:o, {G |- d1 : of e t1}* => 
      {G |- d2 : of e t2} => exists d3:o, {d3 : eq t1 t2}
H3:{G, n:tm |- r n : tm}*
H4:{G |- t : ty}*
H5:{G |- t1 : ty}*
H6:{G, n:tm, n1:of n t |- a1 n n1 : of (r n) t1}*
H10:{G |- t : ty}
H11:{G |- t1 : ty}
H12:{G, n:tm, n1:of n t |- a2 n n1 : of (r n) t1}
H13:{refl t1 |- eq t1 t1}
==================================
exists d3:o, {d3 : eq (arr t t1) (arr t t1)}
\end{verbatim}
One may expect that after an application of \texists\ with the term 
$(\refltm\app(\arrtm\app t\app t1))$ this case is completed by an application
of \tsearch.
However, this does not succeed as we still need to use the strengthening
properties to ensure that $(\arrtm\app t\app t1)$ is a good
type expression in an empty context.
The \tapply\ tactic is used for this, to apply the previously proved theorem 
as a lemma.

In the final case, where the head of $d1$ is a nominal constant assigned the type 
$(\ofty\app n\app t1)$ in $G$ the context type for $G$ is extended to include
such a block, and $d2$ must also be this same nominal constant given that there
can be at most one block containing an assignment for the nominal constant $n$.
This case is concluded through an application of the strengthening 
lemma to ensure that the type $t1$ does not rely on anything from $G$.

Once we have a derivation for the formula
\[
\begin{array}{c}
\fctx{G}{ctx}{\fall{e:\oty}{\fall{t_1:\oty}{\fall{t_2:\oty}{
    \fall{d_1:\oty}{\fall{d_2:\oty}{
    \fimp{\fatm{G}{d_1:\ofty\app e\app t_1}}{}}}}}}}\\
    \fimp{\fatm{G}{d_2:\ofty\app e\app t_2}}{
      \fexists{d_3:\oty}{
        \fatm{\emptyctx}{d_3:\eqty\app t_1\app t_2}}}
\end{array}
\]
in our development, the special case
\[
\begin{array}{c}
\fall{e:\oty}{\fall{t_1:\oty}{\fall{t_2:\oty}{
    \fall{d_1:\oty}{\fall{d_2:\oty}{
    \fimp{\fatm{\emptyctx}{d_1:\ofty\app e\app t_1}}{}}}}}}\\
    \fimp{\fatm{\emptyctx}{d_2:\ofty\app e\app t_2}}{
      \fexists{d_3:\oty}{
        \fatm{\emptyctx}{d_3:\eqty\app t_1\app t_2}}}
\end{array}
\]
is derivable directly from an application to the empty context.

\section{Admissibility of Cut for a Simple Sequent Calculus}
\begin{figure}
\[
\begin{array}{ccc}
\infer[\mbox{init}]
      {\intseq{\Gamma,C}{C}}
      {}
  &
\infer[{\top}R]
      {\intseq{\Gamma}{\top}}
      {}
  &
\infer[{\wedge}R]
      {\intseq{\Gamma}{A\wedge B}}
      {\intseq{\Gamma}{A}
         &
       \intseq{\Gamma}{B}}
  \\\ \\
\infer[{\wedge}L]
      {\intseq{\Gamma,A\wedge B}{C}}
      {\intseq{\Gamma,A\wedge B,A,B}{C}}
  &
\quad
\infer[{\supset}R]
      {\intseq{\Gamma}{A\supset B}}
      {\intseq{\Gamma,A}{B}}
\quad
  &
\infer[{\supset}L]
      {\intseq{\Gamma,A\supset B}{C}}
      {\intseq{\Gamma,A\supset B}{A}
         &
       \intseq{\Gamma,A\supset B,B}{C}}
\end{array}
\]
\caption{A Simple Intuitionistic Sequent Calculus}\label{fig:cut-seq}
\end{figure}

This example deals with the encoding of a very simple sequent calculus
which is defined through the rules given in Figure~\ref{fig:cut-seq}.
A sequent comprises a collection of hypotheses $\Gamma$ and a conclusion 
represented by a proposition $C$.
Our goal with this example is to prove the admissibility of cut for this object system.
This property would is captured by the following: for any $A$, 
if $\intseq{\Gamma}{A}$ and $\intseq{\Gamma,A}{C}$ then $\intseq{\Gamma}{C}$.
The informal structure of this proof, upon which the Adelfa formalization 
will be based, is a standard one; we argue that
the application of cut with $A$ can be permuted up the proof of $C$ 
and uses of the cut formula $A$ are eliminated in the derivation of $C$ essentially by 
replacing uses of the hypothesis with the proof for $A$.
This argument is structured as a nested induction on the structure of the cut 
formula $A$ and the structure of the derivation for $C$ using the cut formula.
The proof then proceeds by examining the structure of the derivation of $C$ using the
cut formula.
For cases which do not make use of the cut formula $A$ in the final
rule application of the derivation we simply apply the inductive hypothesis to each 
branch of the derivation and use the same rule application to
construct a cut-free proof using the cut-free proofs for each branch.
When the cut formula is relevant to the final step of the derivation, which is
those cases where the derivation concludes by using one of the left rules on the
cut formula $A$, we make use of the induction on the cut formula to
replace the use of the hypothesis with an argument structured around the formation
of $A$ which essentially re-create the proof of $A$ in place to produce a cut-free proof.
In this proof, management of the hypotheses $\Gamma$ will play an important role.
In particular, as we descend through the structure of the proof for $C$ the
hypotheses may be extended and the proof for $A$ must correspondingly be extended
for the statement to apply inductively.
Thus this proof relies on weakening in the sequent calculus
which must be realized also in our formalization of the proof.

\begin{figure}
\[
\begin{array}{l}
\propty:\type\\
\toptm:\propty\\
\imptm:\arr{\propty}{\arr{\propty}{\propty}}\\
\andtm:\arr{\propty}{\arr{\propty}{\propty}}\\\ \\
\hypty:\arr{\propty}{\type}\\
\concty:\arr{\propty}{\type}\\\ \\
\inittm:\typedpi{A}{\propty}{\typedpi{D}{\hypty\app A}{\concty\app A}}\\
\toprtm:\concty\app\toptm\\
\andltm:\typedpi{A}{\propty}{\typedpi{B}{\propty}{\typedpi{C}{\propty}{\typedpi{D1}{(\typedpi{x}{\hypty\app A}{\typedpi{y}{\hypty\app B}{\concty\app C}})}{}}}}\\
\qquad\quad \typedpi{D2}{\hypty\app (\andtm\app A\app B)}{\concty\app C}\\
\andrtm:\typedpi{A}{\propty}{\typedpi{B}{\propty}}{\typedpi{D_1}{\concty\app A}{\typedpi{D_2}{\concty\app B}{\concty\app (\andtm\app A\app B)}}}\\
\impltm:\typedpi{A}{\propty}{\typedpi{B}{\propty}{\typedpi{C}{\propty}{\typedpi{D_1}{\concty\app A}{\typedpi{D_2}{(\typedpi{x}{\hypty\app B}{\concty\app C})}{}}}}}\\
\qquad\quad\typedpi{D_3}{\hypty\app (\imptm\app A\app B)}{\concty\app C}\\
\imprtm:\typedpi{A}{\propty}{\typedpi{B}{\propty}{\typedpi{D}{(\typedpi{x}{\hypty\app A}{\concty\app B})}{\concty\app (\imptm\app A\app B)}}}
\end{array}
\]
\caption{An LF Specification for Derivations}
\label{fig:ex-cut-spec}
\end{figure}

The LF specification for the sequent calculus is given in Figure~\ref{fig:ex-cut-spec}.
Key to this encoding is the way in which sequents are represented.
In the encoding we make use of two LF types, $\hypty$ and
$\concty$, to identify propositions which are hypotheses and conclusions 
of a sequent respectively,
thus the sequent $\intseq{A_1,\ldots A_n}{B}$ will be represented by a judgement
$\lfchecktype{x_1:\hypty\app A_1,\ldots,x_n:\hypty\app A_n}{c}{\concty\app B}$.
The contexts we are interested in for this example will thus contain 
bindings only for instances of the $\hypty$ type.
We define a context schema for such LF contexts of the form 
$\{a:\oty\}x:(\hypty\app a)$ and associate this schema with the identifier $c$.
Using this schema definition, cut admissibility can be stated as the following formula.
\begin{gather*}
\fctx{G}{c}{\fall{a:\oty}{\fall{b:\oty}{\fall{d_1:\oty}{\fall{d_2:\arr{\oty}{\oty}}
     {\\
      \fimp{\fatm{\emptyctx}{a:\propty}}
     {\fimp{\fatm{G}{d_1:\concty\app a}}
     {\fimp{\fatm{G,n:\hypty\app a}{d_2\app n:\concty\app b}} 
           {\\
            \fexists{d:\oty}
                    {\fatm{G}{d:\concty\app b}}}}}}}}}}.
\end{gather*}
Note that in the third antecedent the term variable $d_2$ is permitted to depend on
the cut formula $a$ through the explicit dependency on the nominal constant $n$ bound in
the context.

The proof for this formula 
uses a nested induction first on the structure of $a$,
through the first assumption formula $\fatm{\emptyctx}{a:\propty}$, 
and second on the height of the derivation for $b$ which may use $a$, 
through the second assumption formula
$\fatm{G,n:\hypty\app a}{d_2\app n:\concty\app b}$.
One thing to note about this proof is that given the meaning of LF judgements, the context expression
$(G, n:\hypty\app a)$ in this formula requires that the assumption of the 
cut formula $a$ is always at the end of the context for this property to apply.
A property of the object system is that any hypothesis from $\Gamma$ can be 
identified for use in the application of the rules in Figure~\ref{fig:cut-seq}.
The encoding for sequents we have defined is such that this property of
the object system is realized by the permutation meta-theorem for LF; given the
signature defined in Figure~\ref{fig:ex-cut-spec} we can determine that there can be 
no valid dependencies between types in contexts
satisfying the schema $c$.
Therefore any permutation will be valid, and in particular
the permutation which moves the assumption for $a$ to the end will be valid.
Our encoding of the object system also permits weakening in the object system to be 
realized by weakening in LF.
We demonstrate how these meta-theorems are applied in reasoning to manage
the structure of contexts by looking at the formalization of the cut elimination 
proof in Adelfa.
In the discussion below we look in detail at a single case of the argument to demonstrate
how these meta-theorems can be used to realize the informal proof structure.
The full details of this proof can be found on the Adelfa website, and with an 
understanding of the informal structure it is not difficult to see how the rest of
the argument will be formalized.

The case we consider is when the derivation making use of the cut formula
has a conclusion which is an implication, and this derivation
concludes with a use of ${\supset}R$; this of course
corresponds to the case where $(d_2\app n)$ is an instance of $\imprtm$ in
Adelfa.
This is a case where the cut formula is not used in the concluding rule of the derivation, 
so following the informal argument structure we use the inductive hypothesis to
obtain cut-free proofs for the premises and then use these to
construct a cut-free proof for the conclusion using 
an application of $\imprtm$.
The state of Adelfa at the start of this case has the following form.
\pagebreak 
\begin{verbatim}
Vars: d:o -> o -> o, b1:o, b2:o, d1:o, a:o
Nominals: n1:o, n:o
Contexts: G:c[]
IH:ctx G:c. forall a, forall b, forall d1, forall d2,
        {a : prop}* => {G |- d1 : conc a} => 
            {G, n:hyp a |- d2 n : conc b} => exists d, {G |- d : conc b}
IH1:ctx G:c. forall a, forall b, forall d1, forall d2,
        {a : prop}@ => {G |- d1 : conc a} =>
            {G, n:hyp a |- d2 n : conc b}** => exists d, {G |- d : conc b}
H1:{a : prop}@
H2:{G |- d1 : conc a}
H3:{G, n:hyp a |- impR b1 b2 ([x] d n x) : conc (imp b1 b2)}@@
H4:{G, n:hyp a |- b1 : prop}**
H5:{G, n:hyp a |- b2 : prop}**
H6:{G, n:hyp a, n1:hyp b1 |- d n n1 : conc b2}**
==================================
exists d, {G |- d : conc (imp b1 b2)}
\end{verbatim}
Intuitively, we would like at this point to apply the inductive hypothesis {\tt IH1}
using {\tt H1}, {\tt H2}, and {\tt H6} to extend the assumptions with a new
cut-free derivation $d'$ of the type $(\concty\app b_2)$.
However, to successfully apply the inductive hypothesis the 
assignment $n:(\hypty\app a)$ must be moved to the end of the context expression.
Further, the context expression in {\tt H2} will need to be extended with
an assignment to the type $(\hypty\app b_1)$ to ensure that the context 
of this judgement matches the extended context in {\tt H6}.
To permute the cut formula to the end of the context expression, we
will use the \tpermute\ tactic on {\tt H6} with the
permuted context expression $(G, n_1:\hypty\app b_1, n:\hypty\app a)$.
This application will succeed since there can be
no dependency on $n$ in the type $(\hypty\app b_1)$, which aligns with 
the interpretation of hypotheses in the object system.
Adelfa uses the structure of the hypothesis {\tt H6} and the
permuted form of the context, $(G, n_1:\hypty\app b_1, n:\hypty\app a)$, 
supplied by the user to identify a particular
instance of the formula encoding the LF context permutation theorem.
In this particular case that formula would be
\begin{tabbing}
\qquad\qquad\=\qquad\qquad\qquad\qquad\qquad\=\kill
\>$\fimp{\ltann[2]{\fatm{G, n:\hypty\app a, n_1:\hypty\app b_1}
                        {d\app n\app n_1 : \concty\app b_2}}}
        {}$
\\
\>\>    $\ltann[2]{\fatm{G, n_1:\hypty\app b_1, n:\hypty\app a}
                        {d\app n\app n_1 : \concty\app b_2}}$
\end{tabbing}
which, as we have already noted, will be valid given that this permutation
will not violate any dependencies.
The effect of introducing such a formula using \cut\ and then applying it
to the hypothesis {\tt H6} is the behaviour exhibited in Adelfa, and the
resulting state includes a new assumption formula
\begin{center}
{\tt H7:\{G, n1:hyp b1, n:hyp a |- d n n1 : conc b2\}**}.
\end{center}
The application of the \tpermute\ tactic will fail if Adelfa is not able to
determine that the order of bindings in the permuted context expression does not
violate any dependencies, and so this tactic will only succeed when the use of this
proof rule is valid.

We now want to realize the weakening step of the proof to extend the hypotheses
of the proof for $a$ with $b_1$.
Since weakening in the object system is realized by weakening in LF, we achieve
this through an application of the \tweak\ tactic to the 
assumption formula {\tt H2} and the type $(\hypty\app b_1)$.
In this case the instance of the weakening formula identified by Adelfa will be
\begin{center}
$\fimp{\fatm{G}{d_1 : conc\app a}}{\fatm{G, n_2:hyp\app b_1}{d_1 : conc\app a}}$.
\end{center}
However, the application of this tactic to {\tt H2} will fail
because Adelfa is not able to ensure that all the premise sequents
which would be generated by the use of \lfwk\ in proving this formula can
be derived from the current state.
Specifically, Adelfa is unable to automatically derive the subgoal
$\fatm{G}{b_1:\propty}$.
We do, however, have as assumption {\tt H4} the formula 
$\ltann[2]{\fatm{G, n:\hypty\app a}{b_1 : prop}}$, and the formation of
this $b_1$ cannot depend on anything of type $\hypty$.
So in particular, $b_1$ cannot depend on $n$, meaning this binding is vacuous.
An application of the strengthening meta-theorem using the \tstr\ tactic
will result in an assumption of the form necessary for the weakening step
to succeed.
Similar to the behaviour with context permutation and weakening, 
Adelfa introduces an instance of \lfstr\ based on the hypothesis being strengthened
and extends the assumption set with the result of applying this instance to the
identified formula.
In this case, the result would be to extend the state with a formula
{\tt H8:\{G |- b1 : prop\}**}.
At this point weakening can be successfully applied to {\tt H2}, producing the
following state.
\begin{verbatim}
Vars: d:o -> o -> o, b1:o, b2:o, d1:o, a:o
Nominals: n2:o, n1:o, n:o
Contexts: G:c[]
IH:ctx G:c. forall a, forall b, forall d1, forall d2,
        {a : prop}* => {G |- d1 : conc a} => 
            {G, n:hyp a |- d2 n : conc b} => exists d, {G |- d : conc b}
IH1:ctx G:c. forall a, forall b, forall d1, forall d2,
        {a : prop}@ => {G |- d1 : conc a} =>
            {G, n:hyp a |- d2 n : conc b}** => exists d, {G |- d : conc b}
H1:{a : prop}@
H2:{G |- d1 : conc a}
H3:{G, n:hyp a |- impR b1 b2 ([x] d n x) : conc (imp b1 b2)}@@
H4:{G, n:hyp a |- b1 : prop}**
H5:{G, n:hyp a |- b2 : prop}**
H6:{G, n:hyp a, n1:hyp b1 |- d n n1 : conc b2}**
H7:{G, n1:hyp b1, n:hyp a |- d n n1 : conc b2}**
H8:{G |- b1 : prop}**
H9:{G, n2:hyp b1 |- d1 : conc a}
==================================
exists d, {G |- d : conc (imp b1 b2)}
\end{verbatim}

At this stage the application of the inner inductive hypothesis {\tt IH1}
to {\tt H1}, {\tt H9}, and {\tt H7} is possible by instantiating 
{\tt G} with {\tt(G,n1:hyp b1)}, {\tt a} with {\tt a}, {\tt b} with {\tt b2},
{\tt d1} with {\tt d1}, and {\tt d2} with {\tt([x] d x n1)}.
This application results in a new eigenvariable {\tt d'} being introduced
along with a new assumption
{\tt H10:\{G, n1:hyp b1 |- d' n1 : conc b2\}}.
We now construct an inhabitant of the type $(\concty\app(\imptm\app b_1\app b_2))$
using \texists\ to instantiate the existential quantifier in the goal formula
with the term $(\imprtm\app b_1\app b_2\app (\lflam{x}{d'\app x}))$.
The resulting state is the following.
\begin{verbatim}
Vars: d': o -> o, d:o -> o -> o, b1:o, b2:o, d1:o, a:o
Nominals: n2:o, n1:o, n:o
Contexts: G:c[]
IH:ctx G:c. forall a, forall b, forall d1, forall d2,
        {a : prop}* => {G |- d1 : conc a} => 
            {G, n:hyp a |- d2 n : conc b} => exists d, {G |- d : conc b}
IH1:ctx G:c. forall a, forall b, forall d1, forall d2,
        {a : prop}@ => {G |- d1 : conc a} =>
            {G, n:hyp a |- d2 n : conc b}** => exists d, {G |- d : conc b}
H1:{a : prop}@
H2:{G |- d1 : conc a}
H3:{G, n:hyp a |- impR b1 b2 ([x] d n x) : conc (imp b1 b2)}@@
H4:{G, n:hyp a |- b1 : prop}**
H5:{G, n:hyp a |- b2 : prop}**
H6:{G, n:hyp a, n1:hyp b1 |- d n n1 : conc b2}**
H7:{G, n1:hyp b1 |- d n n1 : conc b2}**
H8:{G |- b1 : prop}**
H9:{G, n2:hyp b1 |- d1 : conc a}
H10:{G, n1:hyp b1 |- d' n1 : conc b2}
==================================
{G |- impr b1 b2 ([x] d' x) : conc (imp b1 b2)}
\end{verbatim}
There still remains one step before \tsearch\ can successfully
construct a derivation for this goal.
Clearly \appR\ would be used to derive this subgoal, and would require derivations
then of $\fatm{G}{b_1:\propty}$, $\fatm{G}{b_2:\propty}$, and 
$\fatm{G}{\lflam{x}{d'\app x}:\typedpi{x}{\hypty\app b_1}{\concty\app b_2}}$.
The first and last of these can be ensured in the current state through
applications of \id\ with {\tt H8} and {\tt H10} respectively, but a derivation
for the second cannot be constructed automatically.
However, similar to the earlier application of strengthening, we can apply
\tstr\ to {\tt H5} which will result in extending the assumptions with a formula
of the needed form, and this case of the cut admissibility proof can be completed
using \tsearch.

\section{The POPLmark Challenge}
Adelfa is also able to handle larger developments, such as 
Problem 1A of the POPLmark Challenge~\cite{aydemir05tphols}. 
Specifically, for this problem we show the transitivity of System $F_{<:}$.
There are many existing solutions to the challenge problem, which makes it
a good candidate for comparisons across systems.
These also exists a solution to this problem formalized in Twelf~\cite{cmusolution},
and the formalization in Adelfa is based on this 
solution.\footnote{
A Twelf formalization has been presented in~\cite{pientka07poplmark} and it has been claimed 
that it also solves this challenge problem. 
However, this claim is false: the formulation of the typing calculus differs from the 
one presented in~\cite{aydemir05tphols} and that formulation in fact assumes away the
essential aspects of the challenge~\cite{nadathur2019}.
}
This example demonstrates more sophisticated use of contexts in reasoning than
we have seen thus far, and the proof requires a more complex inductive structure.
\begin{figure}
\begin{tabular}{rcl}
$\tyty$ & $:$ & $\type.$\\
$\toptytm$ & $:$ & $\tyty.$\\
$\arrowtm$ & $:$ & $\arr{\tyty}{\arr{\tyty}{\tyty}}$\\
$\alltm$ & $:$ & $\arr{\tyty}{\arr{(\arr{\tyty}{\tyty})}{\tyty}}$
\medskip\\
$\boundty$ & $:$ & $\arr{\tyty}{\arr{\tyty}{\type}}$
\medskip\\
$\subty$ & $:$ & $\arr{\tyty}{\arr{\tyty}{\type}}$\\
$\stoptm$ & $:$ & $\typedpi{S}{\tyty}{\subty\app S\app \toptytm}$\\
$\srefltm$ & $:$ & $\typedpi{X}{\tyty}
             {\typedpi{U}{\tyty}
             {\typedpi{A}{\boundty\app X\app U}
                         {\subty\app X\app X}}}$\\
$\stranstm$ & $:$ & $\typedpi{X}{\tyty}
              {\typedpi{U1}{\tyty}
              {\typedpi{U2}{\tyty}
              {\typedpi{A1}{\boundty\app X\app U1}
              {\typedpi{A2}{\subty\app U1\app U2}{}}}}}$\\
& & \qquad
    $\subty\app X\app U2$\\
$\sarrowtm$ & $:$ & 
    $\typedpi{S_1}{\tyty}
    {\typedpi{S_2}{\tyty}
    {\typedpi{T_1}{\tyty}
    {\typedpi{T_2}{\tyty}
    {\typedpi{a_1}{\subty\app T_1\ app S_1}
    {\typedpi{a_2}{\subty\app S_2\app T_2}{}}}}}}$\\
& &\qquad
    $\subty\app(\arrowtm\app S_1\app S_2)\app(\arrowtm\app T_1\app T_2)$\\
$\salltm$ & $:$ &
    $\typedpi{S_1}{\tyty}
    {\typedpi{S_2}{\arr{\tyty}{\tyty}}
    {\typedpi{T_1}{\tyty}
    {\typedpi{T_2}{\arr{\tyty}{\tyty}}
    {\typedpi{a_1}{\subty\app T_1\app S_1}{}}}}}$\\
& &\qquad
    $\typedpi{a_2}{(\typedpi{w}{\tyty}{\typedpi{y}{\boundty\app w\app T_1}{\subty\app(S_2\app w)\app(T_2\app w))})}}{}$\\
& &\qquad
    $\subty\app(\alltm\app S_1\app(\lflam{x}{S_2\app x}))\app(\alltm\app T_1\app(\lflam{x}{T_2\app x}))$
\end{tabular}
\caption{LF Specification for System $F_{<:}$}
\label{fig:ex-pm-spec}
\end{figure}
Figure~\ref{fig:ex-pm-spec} contains the specification of System $F_{<:}$
and the subtyping relation.
The type $\tyty$ is used to encode the types of the system while $\subty$
encodes the subtyping relation between two types.
Since subtyping assumptions are necessary in the encoding of these types,
we further introduce the type $\boundty$ for this purpose.

The goal in this example is to show that if under a context $\Gamma$, 
$S$ is a subtype of $Q$ and $Q$ is a subtype of $T$, 
then under this same context $\Gamma$, $S$ is a subtype of $T$.
In this proof we will need to make use of narrowing, which states that
if $P$ is a subtype of $Q$ and under a context containing $\subty\app x\app Q$ the
type $N$ is a subtype of $M$, then $N$ is also a subtype of $M$ under the same context
except with assumption $\subty\app x\app P$ replacing $\subty\app x\app Q$.
These are proved simultaneously by mutual induction on the structure of $Q$,
proving first transitivity using narrowing for types smaller than $Q$ and then
proving narrowing using transitivity for the type $Q$.
We have not yet specified the particular form of the contexts in this reasoning,
which has some complexity due to the use of LF for the encoding.
Before we can state precisely the Adelfa theorem we must define a schema
for the contexts we wish to reason about.

At the outset, it is clear that in reasoning about subtyping the contexts
must contain pairs of variable and subtyping assumptions of the form 
$(x:\tmty,y:\subty\app x\app T)$.
However, for narrowing we need to be able to identify a variable from an arbitrary
location within the context and move it to the end of the context so that narrowing
can be applied.
This is a complication because in LF the ordering of bindings within a context
must respect dependencies, and
the type $T$ in a subtyping assignment 
$(\boundty\app x\app T)$ for a variable $x$ might well depend on some
variable declared earlier in the context and such variables could not be
declared within this context at arbitrary locations.
In the formalization of the theorem in the Twelf system~\cite{cmusolution},
this issue is addressed by carefully constructing the contexts
such that variables and their subtyping assumptions can be separated within a
context without becoming dissociated.
We follow this approach in Adelfa and introduce two new types for the purpose,
$\varty:\arr{\tyty}{\type}$ and
$\bvarty:\typedpi{X}{\tyty}{\typedpi{T}{\tyty}{\arr{\boundty\app X\app T}{\arr{\varty\app X}{\type}}}}$ 
to identify the variables independently of their subtyping assumption 
and to ensure that any variable introduced into the context has also 
a single associated subtyping assumption.
We then define the context schema as one which comprises three block
definitions, one which keeps the variable together with its
subtyping assumption, and one for each of the split collection of assumptions.
These block definitions are detailed below.
\begin{center}
$\begin{array}{l}
\{T:\oty\}w:\tyty, x:\varty\app w, y:\boundty\app x\app T, z:\bvarty\app w\app x\app y\\
x:\tyty, y:\varty\app w\\
\{V:\oty,T:\oty,DV:\oty\}x:\boundty\app V\app T,y:\bvarty\app V\app T\app x\app DV
\end{array}$
\end{center}
Let this context schema be given the identifier $c$.

The theorem we prove in Adelfa is the following.
\begin{tabbing}
\quad\=\qquad\=\qquad\qquad\=\qquad\=\kill
\>$\fctx{G}{c}{\fall{q:\oty}
  {\fimp{\fatm{G}{q:\tyty}}{}}}$\\
\>\> $(\fctx{L}{c}
        {\fall{s:\oty}{\fall{t:\oty}{\fall{d_1:\oty}{\fall{d_2:\oty}}}}}$\\
\>\>\>  $\fimp{\fatm{L}{d_1:\subty\app s\app q}}
              {\fimp{\fatm{L}{d_2:\subty\app q\app t}}
              {\fexists{d_3}{\fatm{L}{d_3:\subty\app s\app t}}}})$\\
\>\>$\wedge$\\
\>\> $(\fctx{L}{c}
        {\fall{x:\oty}{\fall{t_1:\oty}{\fall{t_2:\oty}
        {\fall{p:\oty}{\fall{d_1:\oty}
        {\fall{d_2:\arr{\oty}{\oty}}{\fall{dv:\oty}{}}}}}}}}$\\
\>\>\>  $\fimp{\fatm{L}{d_1:\subty\app p\app q}}{}$\\
\>\>\>       $\fimp{\fatm{L,n_1:\boundty\app x\app q,
                             n_2:\bvarty\app x\app q\app n_1\app dv}
                          {d_2\app n_1:\subty\app t_1\app t_2}}{}$\\
\>\>\>       $\fexists{d_4:\arr{\oty}{\oty}}{}$\\
\>\>\>\>              $\fatm{L,n_1:\boundty\app x\app p,
                                n_2:\bvarty\app x\app p\app n_1\app dv}
                             {d_4\app n_1:\subty\app t_1\app t_2})$.
\end{tabbing}
The outer induction of the proof is on $\fatm{G}{q:\tyty}$.
Within this proof, 
transitivity is shown by induction on $\fatm{L}{d_1:\subty\app s\app q}$
and narrowing is shown by induction on the formula
$\fatm{L,n_1:\boundty\app x\app q, n_2:\bvarty\app x\app q\app n_1\app dv}
      {d_2\app n_1:\subty\app t_1\app t_2}$.
As in the admissibility of cut, the management of the context ordering is 
realized in the Adelfa development
through the application of the tactics $\tweak$, $\tstr$, and $\tpermute$.
With this understanding of the encoding and the structure of the proof, it is 
easy to visualize how the formal proof should be constructed. 
The detailed development that makes all the steps explicit is available from the 
Adelfa web site and so we will not present the formalization in detail here.

\chapter{Comparisions with Related Work}
\label{ch:comp}

The usefulness of LF as a specification language can be seen
through the wide variety of tasks in which it has been used since its 
introduction, for example~\cite{bertot04book, lee07popl, leroy09cacm}.
Reasoning about such LF specifications has thus also been of interest and
there exist various approaches which attempt to provide a means for
effectively reasoning about systems through their encoding in LF.
In this chapter we consider existing systems and approaches to reasoning about
specifications written in LF and contrast them with the logic and proof system 
developed in this thesis.

\section{Twelf}
The Twelf system is a well established system which can be used to reason about
LF specifications~\cite{pfenning99cade,Pfenning02guide}.
It has a wide variety of existing examples, many available from the
Twelf website.
In this system, no distinction is made between the specification 
and the reasoning level.
The approach taken by Twelf is a computational view of reasoning
where properties of the object system are also encoded as LF types, 
with constructors for these types essentially
being proof scripts.
An external analysis is used to check properties of this encoding and from
this results about the specification are extracted.
In particular, typically the LF type encoding a property of interest is
shown to be total through the use of coverage and termination checking.

Due to this view of reasoning, the sort of statements which can
be checked in Twelf are all of functional structure; some collection of derivations are given
as input and some other collection of derivations are constructed as output.
Thus, only formulas of a $\forall\exists$ form are checkable in Twelf
which limits the expressiveness of reasoning in this way.
Reasoning in Twelf also does not construct proofs explicitly,
instead the validity of these formulas are extracted from the specification
along with a positive result from the totality checking procedure.

Another aspect to consider in this comparison is the treatment of contexts in
Twelf reasoning.
Twelf permits definitions similar to context schemas, though they are
called regular worlds in this setting, but the contexts are kept implicit
in the reasoning process.
Thus there is a single implicit context across an instance of the analysis, which limits the
flexibility of the system.
Properties which require typing derivation in different contexts cannot be expressed
in Twelf.
Keeping the contexts implicit also obscures the true structure of 
the analysis performed by the system.

\section{The Logic $\mathcal{M}_2^+$}
The logics $\mathcal{M}_2$~\cite{schurmann98cade} and 
$\mathcal{M}_2^+$~\cite{schurmann00phd} were developed to formalize reasoning
as performed in Twelf.
The logic provides a logical foundation to the sort of reasoning
performed by Twelf which can address some of the shortcomings of using
an external process like totality checking for reasoning.
However, because they are focused specifically on capturing reasoning as
it is viewed in Twelf, the logics $\mathcal{M}_2$ and $\mathcal{M}_2^+$ 
have a different flavor than the logic presented in this thesis.
These logics are based on an understanding of Twelf rather than directly
on understanding derivability and reasoning in LF, and does not illuminate the
structure of reasoning steps as we aim to do.
This view of reasoning leads to a more complex form of induction which
is realized through proof terms corresponding to recursive total functions.
Thus, while the logic $\mathcal{M}_2^+$ provides a formalization for
Twelf reasoning it is still rooted in the computational approach to reasoning
used by Twelf of extracting proofs through checking
totality of functions of dependent types.

Due to the approach of its design, the logic $\mathcal{M}_2^+$ also has 
the same limitations in expressivity as we have seen in Twelf.
Proofs in this logic are functions, and the logic ensures that only
total functions can be constructed.
Properties which are not expressed as function types cannot then be reasoned about
with this logic.
The logic also maintains a single context across a formula, thus again restricting
the ability to express relations over derivations in LF which use different contexts.

\section{Beluga}
Another well established system for reasoning about LF is
Beluga~\cite{pientka10ijcar}.
Beluga uses a two level approach, providing a dependently typed functional
programming language on top of LF.
The reasoning level of Beluga is based on contextual modal type
theory~\cite{nanevski08tocl} which allows an explicit treatment of
contexts in typing. 
Beluga is, like Twelf, based on a computational view of reasoning.
In Beluga one writes dependently-typed recursive functions and the
type system ensures that the admitted programs are ones which are
total and thus encode sound proofs.
These functions are not proofs in themselves, as the type checking is
critical to reasoning and only together do they constitute a proof.
A distinguishing feature of Beluga is that it is intended as a programming 
language and so provides the ability to execute functions written in this way.

The use of contextual modal type theory as the reasoning level
permits a more expressive syntax for formulas than is available in Twelf,
in particular the role of contexts is much more expressive in Beluga.
However, the properties which can be encoded as types in the system
remain restricted to the same functional structure relating some
set of input derivations to a set of output derivations.
Reasoning in Beluga also relies on some external analysis, in this case
type checking, and so proofs are not constructed directly as can be done
for the logic we have defined.

\section{Harpoon}
Harpoon~\cite{errington2020lfmtp} is an interactive tool which assists
in the construction of Beluga functions using a collection of tactics
for generating proof scripts which then generate valid Beluga functions.
Programming in Beluga in this way provides the feel of theorem proving
and many of the tactics have a similar feel to those which we
include in Adelfa from Chapter~\ref{ch:adelfa}.
Harpoon however, determines the structure and collection of these tactics from an 
understanding of Beluga programming rather than from inspection of how 
reasoning proceeds in LF
These derivations are also not proofs directly, as they remain dependent on
type checking of Beluga to constitute a proof.
The expressivity of Harpoon reasoning also
remains limited by the underlying mapping
into Beluga functions.

\section{Cocon}
There has recently been some interesting work done in developing the 
Cocon Type Theory~\cite{pientka2019,pientka2020}.
Cocon is a powerful type theory which
unifies LF methodology with dependent type theories such as Martin-Löf
Type Theory.
A system based on this type theory would be quite powerful for reasoning
about dependently-typed specifications, but to our knowledge such an 
implementation has not yet been explored.
It would be interesting to consider a system based on
Cocon because of the ability to mix representations and computations
in the type system, but this is a consideration which is outside
the scope of current work.

\section{Abella-LF}
A different view of reasoning about LF derivability 
is taken in~\cite{southern2014}, which utilizes a translation-based approach.
Reasoning is realized through translating LF judgements into 
relations in a predicate logic and reasoning is performed over the
translated form.
The result of derivations constructed over the translation are also 
lifted to LF using the translation.
This approach has been implemented as Abella-LF, a variant of the Abella
Prover~\cite{gacek08ijcar, abella.website}.

Abella uses a two-level approach to reasoning where derivability is 
encoded as a definition in the reasoning logic~\cite{gacek09phd}.
Abella-LF builds in a translation and decoding of LF terms 
using syntactic sugar on Abella terms to provide users with an illusion of working 
directly with LF.
The Abella system is very expressive, and in fact may be too expressive
for reasoning about LF specifications as it is unclear
what meaning reasoning about the translation has in LF.
The tactics available in Abella are not based on conceptual steps of
LF reasoning and so developments in Abella-LF do not always correspond with
reasoning steps of LF.
Because of this, it is possible for the translation to be exposed 
during reasoning and the illusion of working in an LF setting to be lost.
This is especially apparent in the treatment of contexts; the behaviour of
Abella-LF does not align with intuition obtained from understanding LF
derivability.

The translation underlying Abella-LF is also not correct
in the form it is used; the derivability of a translated LF judgement does
not ensure the derivability of the original LF judgement unless restricted
to the Canonical LF system.
This issue will have to be addressed for it to be possible to make
claims about the lifting of any results to LF.
Our work provides an understanding of derivability in LF which could
be used as a foundation for an implementation of Abella-LF which
addresses these issues.
Specifically, having identified proof rules capturing specific and well-understood
reasoning steps we could use this collection to constrain reasoning about
the translation in a way which will be meaningful in the LF setting.

\chapter{Conclusion}
\label{ch:future-work}

In this thesis we have considered the construction of a logic for
reasoning about LF specifications.
The logic we have developed is based on a semantical approach
using a substitution based interpretation of quantification and
relying on checking the derivability of LF judgements represented
by the atomic formulas.
This logic is also given a corresponding proof system which formalizes
arguments based on the semantics.
This proof system builds in an understanding of LF typing judgements,
incorporates a means for arguing inductively based on the heights of
derivations for such judgements and encodes meta-theoretic
observations about LF derivability  through axioms that reflect the
contents of these observations.   
We have also mechanized the construction of proofs in the proof system
in the proof assistant Adelfa.
The usefulness of this implementation, and by extension the proof system, has 
been shown through a collection of examples of reasoning about
a variety of object systems.

The logic and the proof system that we have developed for it display a
fair amount of flexibility.
For example, it is possible to express in the logic all the properties
that are expressible in a system like Twelf~\cite{pfenning99cade} or Beluga~\cite{pientka10ijcar}.
Further, it is possible to construct proofs for such properties using
the proof system.
Finally, going beyond these systems, the logic allows for the
expression of properties that have a disjunctive character or that
embody an alternation of term-level universal and existential
quantifiers and formal proofs for such properties can also be
constructed.
There are, nevertheless, particular aspects of the logic and the proof
system that we would like to enhance or improve so as to provide
greater flexibility in reasoning.
We discuss some of these aspects below as avenues for further
research. 

\section{Encoding Properties of Contexts in the Logic}
The first extension we consider is encoding properties of 
contexts in the logic.
Such properties would permit reasoning about relations between one or more contexts,
such as subsumption of context schemas.
These relations would extend the reasoning capabilities of the logic through the
ability to make use of properties of the contexts appearing in formulas.

One property of this sort might be context schema subsumption, allowing 
context expressions known to satisfy one context schema to also be identified as
satisfying any other schema which subsumes it.
Such a property could be used to lift a context expression satisfying the more
restrictive schema to the more expressive schema in order to apply, for example,
lemmas proved about contexts of the more general form.

These relations may also take other forms, for example they might identify contexts
which arise from derivations for distinct types in LF pertaining to different relations
involving a particular term.
For example, suppose that we define a reduction relation between STLC terms which
holds for terms $t_1$ and $t_2$ if $t_2$ is obtained from $t_1$ by reducing a
$\beta$-redex somewhere in the term.
This might be encoded by the following LF declarations.
\[
\begin{array}{lcl}
\redty & : & \tmty\rightarrow \tmty\rightarrow\type.\\
\redbetatm & : & \typedpi{T}{\tpty}{\typedpi{R}{\arr{\tmty}{\tmty}}{\typedpi{M}{\tmty}
                         {\typedpi{M'}{\tmty}{\arr{\redty\app (R\app M)\app M'}{}}}}}\\
& & \qquad
    \redty\app(\apptm\app(\lamtm\app T\app(\lflam{x}{R\app x}))\app M)\app M'\\
\redlamtm & : & \typedpi{T}{\tpty}{\typedpi{R}{\arr{\tmty}{\tmty}}
                        {\typedpi{R'}{\arr{\tmty}{\tmty}}
                        {\arr{\typedpi{x}{\tmty}{\redty\app (R\app x)\app(R'\app x)}}{}}}}\\
& & \qquad
    \redty\app(\lamtm\app T\app (\lflam{x}{R\app x}))\app
                           (\lamtm\app T\app (\lflam{x}{R'\app x}))\\
\redappltm & : & \typedpi{M}{\tmty}{\typedpi{N}{\tmty}{\typedpi{M'}{\tmty}
                         {\arr{\redty\app M\app M'}{}}}}\\
& & \qquad
    \redty\app(\apptm\app M\app N)\app(\apptm\app M'\app N)\\
\redapprtm & : & \typedpi{M}{\tmty}{\typedpi{N}{\tmty}{\typedpi{N'}{\tmty}
                         {\arr{\redty\app N\app N'}{}}}}\\
& & \qquad
    \redty\app(\apptm\app M\app N)\app(\apptm\app M\app N')
\end{array}
\]
Contexts relevant to derivations of $\redty$ will have the structure given by a
single block schema $(z:\tmty)$, as we can see from the type of $\redlamtm$.
We can see that for a particular term $t$, we might define a relation which holds
between a context expression satisfying the schema for typing in the STLC
we have seen previously, $\{T:\oty\}(x:\tmty,y:\ofty\app x\app T)$, used in typing $t$
in the STLC, and a context expression satisfying this reduction schema for reducing $t$, 
which includes only the instances of $x:\tmty$ from the typing context.
Thus we could express formulas in the logic which relate derivations using
different context expressions in a way which permits the structure of some contexts
to be informed by others.
For example, subject reduction might be stated as the following theorem where 
$\mbox{ctx-rel}$ represents the relation described above and $\mbox{ty-ctx}$ and
$\mbox{red-ctx}$ denote the STLC typing and reduction context schemas respectively.
\begin{tabbing}
\hspace{1cm}\=\qquad\=\qquad\qquad\=\kill
\>$\fctx{G_1}{\mbox{ty-ctx}}{\fctx{G_2}{\mbox{red-ctx}}{
 \fall{M:\oty}{\fall{N:\oty}{\fall{T:\oty}{\fall{D_1:\oty}{\fall{D_2:\oty}{}}}}}}}$\\
\>\>$\fimp{\mbox{ctx-rel}\app G_1\app G_2}{
 \fimp{\fatm{G_1}{D_1:\ofty\app M\app T}}{
 \fimp{\fatm{G_2}{D_2:\redty\app M\app N}}{}}}$\\
\>\>\>$\fexists{D_3:\oty}{\fatm{G_1}{D_3:\ofty\app N\app T}}$
\end{tabbing}

At the outset we identify two things which will be needed to realize this
extension to the logic:
\begin{enumerate}
\item A means to express the properties of contexts in the logic and
\item The ability to interpret the properties of contexts inductively.
\end{enumerate}
We might take inspiration from the context definitions of Abella for describing
arbitrary properties of contexts within the logic, and think of building into the 
proof system an ability to reason inductively about contexts based on the definitions.

\section{Schematic Polymorphism}
Another interesting direction is to introduce a form of polymorphism
called schematic polymorphism,
as has been done for Abella in~\cite{wang2018corr}.
This form of polymorphism would simplify construction of arguments which
have the same structure regardless of type instance, reducing repetition in 
reasoning about systems which require structures to be repeated at many types 
within an encoding.

For example, we might encode in LF an $\appty$ relation for lists of 
integers.
Properties such as the functionality of $\appty$ have a structure which
does not make use of the knowledge that the elements in the list are integers, and 
thus would have the same structure for lists containing elements of any other type
we might include in the specification.
Since this proof is general with respect to the particular element type, it would
be useful to construct the proof of this property once and use it for proving the
functionality of append for other sorts of lists such as lists
of pairs of integers or lists of booleans.
To realize this in Adelfa we will need to extend both the declaration syntax 
and the proof system with relations and proof rules that are generic with respect 
to a particular type.
These schematic type families and proofs would provide a means of expressing structure 
that is identical regardless of a particular instance.

For the LF declaration syntax, we would require a means for describing a collection 
of type families which follow a particular structure.
Thus we might permit schematic type variables in these declarations, and any particular
LF declaration could be generated by replacing this schematic variable with a
ground type in LF.
For example we might describe the $\appty$ relation using something like the following.
\[\begin{array}{lcl}
\appty_{A} & : & \listty_{A} \rightarrow \listty_{A} \rightarrow 
             \listty_{A} \rightarrow \type.\\
\mbox{\sl app-nil} & : & \typedpi{L}{\listty_{A}}{\appty\app \niltm\app L\app L}.\\
\mbox{\sl app-cons} & : & \typedpi{X}{A}{\typedpi{L_1}{\listty_{A}}{\typedpi{L_2}{\listty_{A}}{\typedpi{L_3}{\listty_{A}}{}}}}\\
& & \qquad \appty\app L_1\app L_2\app L_3 \rightarrow\appty\app(\constm\app X\app L_1)\app L_2\app (\constm\app X\app L_3).
\end{array}\]
In this description, $A$ is a schematic variable which can be replaced with any 
particular ground LF type to generate that particular instance of these declarations
encoding the append relation for lists of that sort.

In the logic, we would extend the formula syntax
by permitting quantification over schematic type variables.
Derivations in the proof system for such formulas would 
describe schematic proofs which are invariant under the choice of type
instance, and will require new proof rules describing sound reasoning steps
of this kind.
The observation underlying these proof rules will be that the derivations should 
not identify particular instances of the schematic type variables, 
and so must be such that the proofs which can be constructed are the same for
every type.
For example, we might state the functionality of append in a generalized form as 
the following.
\begin{tabbing}
\qquad\=\qquad\=\kill
\>$\fall{L_1:\oty}{\fall{L_2:\oty}{}}$\\
\>\>$\fimp{\fatm{\emptyce}{L_1:\listty_{A}}}{\fimp{\fatm{\emptyce}{L_2:\listty_{A}}}{\fexists{L_3:\oty}{\fexists{D:\oty}{\fatm{\emptyce}{D:\appty_{A}\app L_1\app L_2\app L_3}}}}}$
\end{tabbing}
A proof for this formula in the refined proof system would be a schematic proof 
as all instances must have the same structure regardless of the particular 
instance for the generic type $A$.
Any explicit instance of the proof could then be generated as needed by
choosing a particular ground type to instantiate the schematic variable.

\bibliographystyle{alpha}
\bibliography{references/master,references/local}

\end{document}